\newtheorem{thm}{Theorem}
\newtheorem{lemma}[thm]{Lemma}
\newtheorem{prop}[thm]{Proposition}
\newtheorem{defi}[thm]{Definition}
\newtheorem{coro}[thm]{Corollary} 
\newtheorem{remark}[thm]{Remark}
\newtheorem{assumption}[thm]{Assumption}
\newcommand{\HmhalfpD}{H^{-1/2}(\partial D_*)}
\newcommand{\HhalfpD}{H^{1/2}(\partial D_*)}
\newcommand{\HhalfG}{\cH^{1/2}(\Gamma)}
\newcommand{\HmhalfG}{\cH^{-1/2}(\Gamma)}
\newcommand{\HhalfGa}{\cH^{1/2,a}(\Gamma^a)}
\newcommand{\HmhalfGa}{\cH^{-1/2,a}(\Gamma^a)}
\newcommand{\dpt}{d\ell}
\newcommand{\im}{\textrm i} 
\renewcommand{\ker}{\text{Ker}}
\newcommand{\chomo}{\cC_z}
\newcommand{\cunp}{\cC_z\backslash D}
\newcommand{\cpeps}{\cC_z\backslash D^\eps}
\newcommand{\Gb}{\Gamma_{\text{b}}}
\newcommand{\Gt}{\Gamma_{\text{t}}}
\newcommand{\Gl}{\Gamma_{\text{l}}}
\newcommand{\Gr}{\Gamma_{\text{r}}}
\newcommand{\nuGl}{\nu_1}
\newcommand{\nuGb}{\nu_2}
\newcommand{\rflc}{F}
\newcommand{\ssubset}{\subset\joinrel\subset}
\newcommand{\hrad}{\mathfrak d|\frac{t_*}{\gamma_*}|}
\newcommand{\eps}{\varepsilon}
\newcommand{\brho}{\boldsymbol\rho}
\newcommand{\bbeta}{\boldsymbol\beta}
\newcommand{\Sepsevan}{\mathbb S^{\eps,\text{evan}}}
\newcommand{\Sevan}{\mathbb S^{0,\text{evan}}}
\newcommand{\Kone}{K}
\newcommand{\Ktwo}{K'}
\newcommand{\sgn}{\text{sgn}}
\newcommand{\kp}{k_{\parallel}}
\newcommand{\bbR}{\mathbb R}
\newcommand{\bbZ}{\mathbb Z}
\newcommand{\be}{\mathbf e}
\newcommand{\bm}{\mathbf m} \newcommand{\bn}{\mathbf n}
 \newcommand{\bp}{\mathbf p}
\newcommand{\bq}{\mathbf q}
 \newcommand{\bx}{\mathbf x} 
\newcommand{\by}{\mathbf y}
 \newcommand{\cB}{\mathcal B}
\newcommand{\cC}{\mathcal C} \newcommand{\cD}{\mathcal D} 
 \newcommand{\cH}{\mathcal H}
\newcommand{\cI}{\mathcal I} 
\newcommand{\cK}{\mathcal K} \newcommand{\cL}{\mathcal L}
\newcommand{\cM}{\mathcal M} \newcommand{\cN}{\mathcal N}
\newcommand{\cO}{\mathcal O}  
 \newcommand{\cR}{\mathcal R}
\newcommand{\cS}{\mathcal S}
\title{Interface Modes in Honeycomb Topological Photonic Structures with Broken Reflection Symmetry\thanks{J. Lin was partially supported by the NSF grant DMS-2011148,
and H. Zhang was partially supported by Hong Kong RGC grant GRF 16304621 and NSFC grant 12371425}}
\author{
    Wei Li\thanks{Department of Mathematical Sciences, DePaul University, Chicago, IL 60614.
    \tt wei.li@depaul.edu.} \,\,\,  
     Junshan Lin
  \thanks{Department of Mathematics and Statistics, Auburn University, Auburn, AL 36849.  \tt jzl0097@auburn.edu.}\,\,\,
  Jiayu Qiu\thanks{Department of Mathematics, 
 HKUST,  Clear Water Bay, Kowloon, Hong Kong S.A.R., China.
    \tt jqiuaj@connect.ust.hk.}\,\,\,
  Hai Zhang
  \thanks{Department of Mathematics, 
 HKUST,  Clear Water Bay, Kowloon, Hong Kong S.A.R., China.
    \tt haizhang@ust.hk.}}
\begin{document}

\maketitle

\begin{abstract}
In this work, we present a mathematical theory for Dirac points and interface modes in honeycomb topological photonic structures consisting of impenetrable obstacles. Starting from a honeycomb lattice of obstacles attaining $120^\circ$-rotation symmetry and horizontal reflection symmetry, we apply the boundary integral equation method to show the existence of Dirac points for the first two bands at the vertices of the Brillouin zone.
We then study interface modes in a joint honeycomb photonic structure, which consists of two periodic lattices obtained by perturbing the honeycomb one with Dirac points differently. The perturbations break the reflection symmetry of the system, as a result, they annihilate the Dirac points and generate two structures with different topological phases, which mimics the quantum valley Hall effect in topological insulators. We investigate the interface modes that decay exponentially away from the interface of the joint structure in several configurations with different interface geometries, including the zigzag interface, the armchair interface, and the rational interfaces.
Using the layer potential technique and asymptotic analysis, we first characterize the band-gap opening for the two perturbed periodic structures and derive the asymptotic expansions of the Bloch modes near the band gap surfaces. By formulating the eigenvalue problem for each joint honeycomb structure using boundary integral equations over the interface and analyzing the characteristic values of the associated boundary integral operators, we prove the existence of interface modes when the perturbation is small.

\medskip
{\textbf{Keywords}:} Interface modes, Honeycomb structure, Helmholtz equations, Dirac points,  Topological photonics

\medskip
{\textbf{MSC}:}35P15, 35Q60, 35J05, 45M05 
\end{abstract}

\section{Introduction and outline}

\subsection{Background and motivation}
Photonic and phononic materials with band gaps can be used to localize and confine waves, which have wide applications in the transportation and manipulation of wave energy  \cite{Joanno-11, Khelif-15}. In a gapped photonic or phononic crystal, a localized wave mode with frequency in the band gap can be created by introducing a local perturbation in the periodic structure, such as a point or line defect \cite{Joanno-11}. Such a wave mode is called a defect mode and it is confined near the defect. Mathematically, a defect mode and its frequency correspond to an eigenpair of a locally perturbed periodic operator for the acoustic wave equation or Maxwell's equations. The existence of point defect modes and line defect modes  
was proved in \cite{Ammari-18,Ammari-22, Brown-17, Brown-20, Figotin-98, Figotin-97-2, kamotski-18, Santosa-Ammari-04}  for several different configurations of periodic acoustic and electromagnetic media, including the periodic dielectric media,  high contrast media, and bubbly media, etc. Besides the deterministic approaches, random media also allows for wave localization. One well-known strategy is the Anderson localization, wherein a periodic medium is randomly perturbed in the whole spatial domain \cite{Figotin-94, Figotin-96, Figotin-97, Sheng-90}. 

The recent development in topological insulators (cf. \cite{Hasan-Kane-10, bernevig-13, Qi-Zhang-11}) opens up new avenues for wave localization and confinement in photonic and phononic materials. 
The concept of topological phases for classical waves was proposed in the seminal work \cite{raghu-08}, when it was realized that topological band structures are a ubiquitous property of waves for periodic media, regardless of the classical or quantum nature of the waves. Therefore, the concepts in topological insulators can be parallelly extended to periodic wave media, and remarkably, extensive research work has been sparked in pursuit of topological acoustic, electromagnetic, and mechanical insulators to manipulate the classical wave in the same way as solids modulating electrons \cite{Khanikaev-12,  Lu-14, Chan-19, Ozawa-19, Yang-15}.
Briefly speaking, there are mainly two strategies to realize topological structures for classical waves. The first strategy mimics the quantum Hall effect in topological insulators using active components to break the time-reversal symmetry of the system \cite{Khanikaev-15, Wang-09}.
The second strategy relies on an analog of the quantum spin Hall effect or quantum valley Hall effect, and it uses passive components to break the spatial symmetry of the system \cite{Ma_Shvets-15, Wu_Hu-15}.

Wave localization in topological structures is achieved by gluing together two periodic media with distinct topological invariants. The topological phase transition at the interface of two media gives rise to the so-called interface modes, which propagate parallel to the interface but localize in the direction transverse to the interface. Recently there has been intensive mathematical research investigating the interface modes in topological insulators from different perspectives. In particular, the existence of interface modes was proved in
\cite{Drouot-Wenstein-20, Fefferman-Thorp-Weinsein-14, Fefferman-Thorp-Weinsein-16, Lee-Thorp-Weinstein-Zhu-19} for the Schr\"{o}dinger operator and several other elliptic operators, wherein the interfaces are modeled by smooth domain walls. In addition, the spectra of interface modes are closely related to the topological nature of the bulk media. In general, the net number of interface modes is equal to
the difference of the bulk topological invariants across the interface, which is known as the \textit{bulk-edge correspondence} \cite{Hasan-Kane-10, Ozawa-19}. We refer to \cite{Bal-17, Bal-19, EGS-05, Graf-02, Hatsugai-93} for the studies of the bulk-edge correspondence in discrete electron models and 
 \cite{Druout-20, Druout-20-3} for the bulk-edge correspondence in several elliptic PDE models.

In this work, we study the interface modes in a joint honeycomb photonic structure, where two periodic lattices separated by an interface are obtained by perturbing a honeycomb lattice with Dirac points differently. Such perturbations break the reflection symmetry of the system, as a result, they annihilate the Dirac points and generate two structures with different topological phases. This mimics the quantum valley Hall effect in topological insulators \cite{Ma_Shvets-15, Wu_Hu-15}.
A one-dimensional joint structure with a similar setup was investigated \cite{LZ21, Thiang-Zhang-22} using the transfer matrix method and the oscillatory theory for Sturm-Liouville operators.
In contrast to the studies of interface modes in \cite{Drouot-Wenstein-20, Fefferman-Thorp-Weinsein-14, Fefferman-Thorp-Weinsein-16, Lee-Thorp-Weinstein-Zhu-19}, where two bulk media are ``connected'' adiabatically over a length scale that is much larger than the period of the structure to form a joint photonic structure and the interface is modeled by a smooth domain wall extending to the whole spatial domain, we consider more realistic models where two periodic media are connected directly such that the medium coefficient attains a jump across the interface. Therefore, we have to address the new challenges in the spectral analysis brought by the discontinuities of coefficients in the PDE model. Beyond that, we consider the model with more general shapes of the interface that separates two bulk media. The goal of this work is to develop a mathematical framework based on a combination of layer potential theory, asymptotic analysis, and the generalized Rouch\'e theorem to examine the existence of the interface modes in such settings. The mathematical framework can be extended to study localized modes in other contexts.

\subsection{Outline}
\subsubsection{The honeycomb lattice and Dirac points}
We start from a honeycomb lattice consisting of a two-dimensional array of impenetrable obstacles and examine the existence of Dirac points in the band structure of the lattice. A schematic plot of the periodic structure and its band structure is shown in Figure \ref{fig:honeycomb}. The honeycomb lattice is a natural choice for the photonic structure, as it attains the desired symmetry to create Dirac points over the vertices of the Brillouin zone \cite{ Berkolaiko-18, Fefferman-Weinsein-12}.

\begin{figure}[!htbp]
\begin{center}
\includegraphics[height=5cm]{./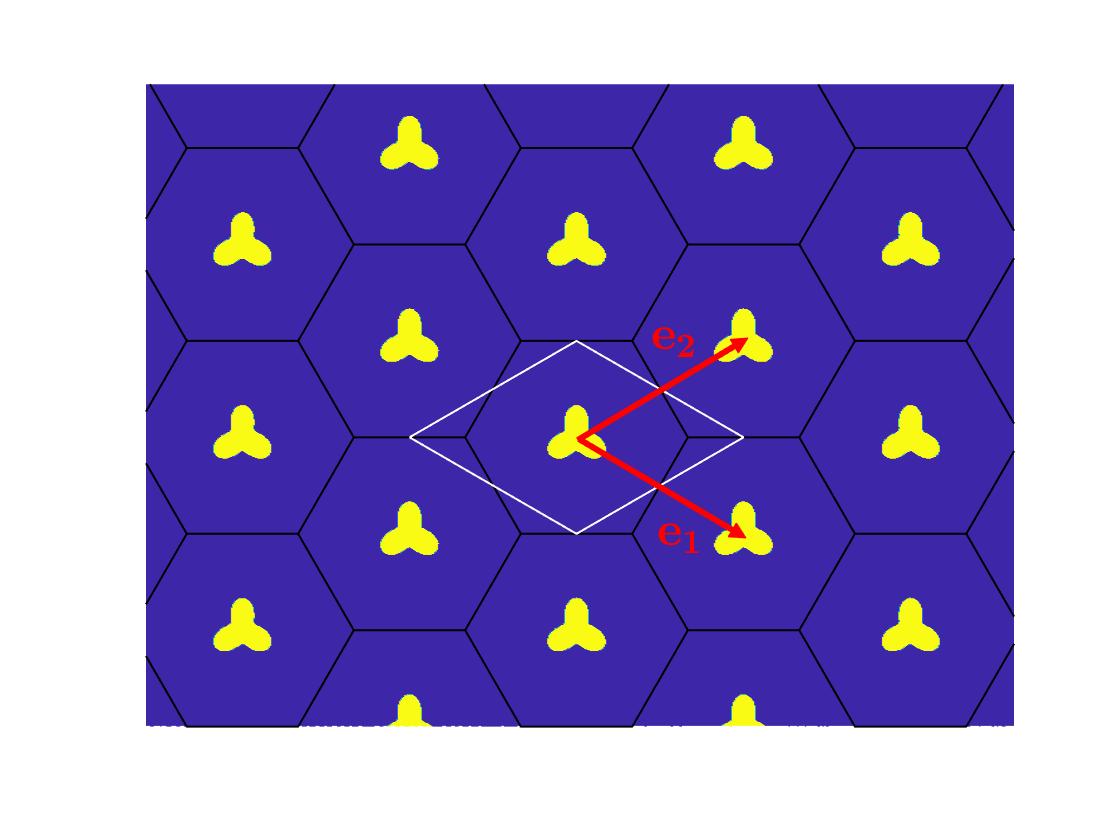} 
\includegraphics[height=5cm]{./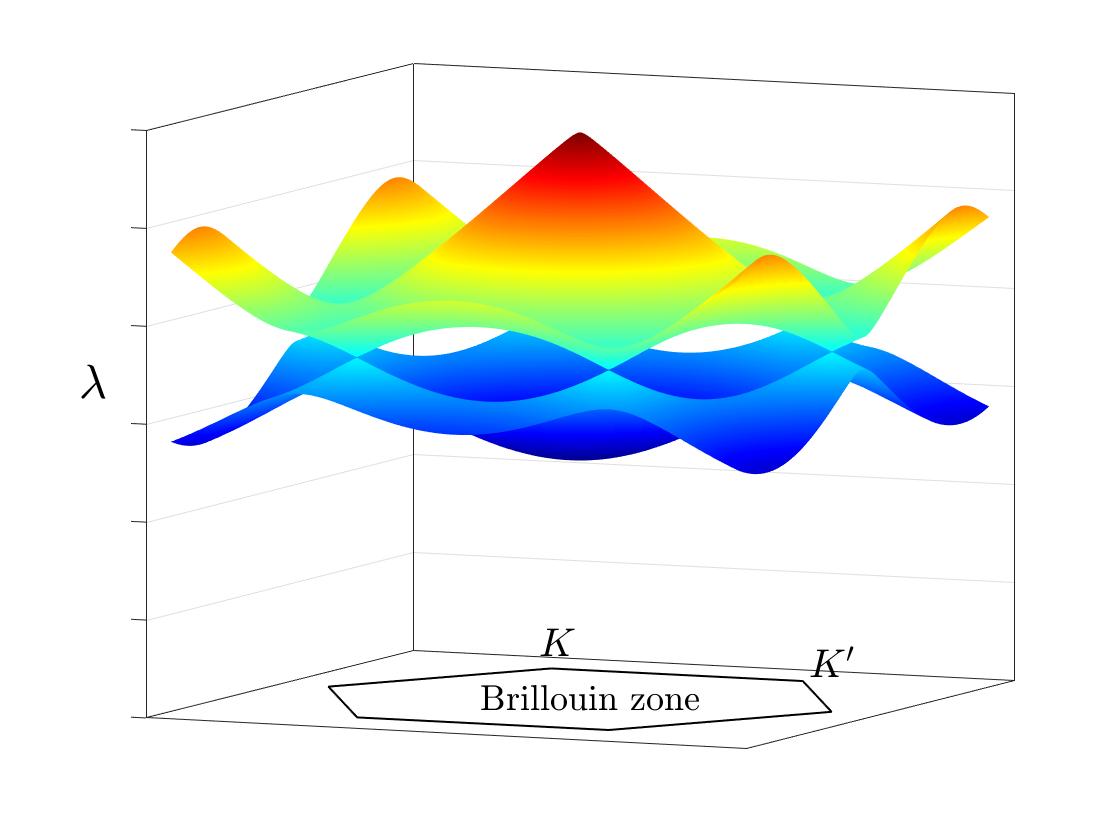} 
\caption{An infinite array of impenetrable obstacles are arranged over the honeycomb lattice (left) and its band structure (right). The obstacle in each periodic cell attains  $120^\circ$-rotation symmetry and horizontal reflection symmetry. }
\label{fig:honeycomb}
\end{center}
\end{figure}

Dirac points refer to conical intersections of two dispersion surfaces in the band structure. They are the degenerate points in the spectrum where the topological phases of the material may change. More specifically, at a Dirac point $(\bp^*, \lambda^*)$, the eigenspace of the associated partial differential operator spans a two-dimensional space. In addition,
the two dispersion surfaces forming the Dirac point attain the following expansion
 \begin{equation}\label{eq:Dirac_point}
  \lambda^\pm(\bp) = \lambda^*\pm \alpha |\bp-\bp^*| +O(|\bp-\bp^*|^2),
\end{equation}
wherein $\alpha\neq0$ denotes the slope of the linear dispersion relation near the Dirac point. Due to the surging interest in topological insulators, Dirac points were investigated for a broad class of PDE operators recently, especially for the Schr\"{o}dinger operator over the honeycomb lattice, the Helmholtz operator with high-contrast medium and resonant bubbles, etc \cite{Ammari-20-4, Cassier-Weinstein-21, Fefferman-Weinsein-12, Berkolaiko-18, Fefferman-Thorp-Weinsein-18, Lee-16, Lee-Thorp-Weinstein-Zhu-19}. 
In general, Dirac points exist at the vertices of the Brillouin zone when the medium coefficients in the honeycomb lattice attain suitable symmetry, such as inversion and $120^\circ$-rotation symmetry, horizontal reflection and $120^\circ$-rotation symmetry, etc. The degeneracy at a Dirac point
can be deduced from the representation of the relevant symmetry group, and the conical shape of the dispersion relation is obtained from its invariance under the rotation symmetry \cite{Berkolaiko-18}.

In this work, 
we apply the boundary integral equation method to show the existence of Dirac points for the first two bands at the vertices of the Brillouin zone, assuming that the shape of each obstacle in the lattice attains $120^\circ$-rotation symmetry and horizontal reflection symmetry as shown in Figure \ref{fig:honeycomb}. The existence of Dirac points for obstacles with other symmetries can be examined similarly using this method.

\subsubsection{The perturbed honeycomb lattices: spectral gap and topological phase}
We then break the spatial reflection symmetry of the honeycomb lattice by rotating the obstacles in opposite directions to obtain two photonic structures in Figure \ref{fig:honeycomb_perterb}, which will create spectral gaps at the Dirac point as shown in Figure \ref{fig:band_honeycomb_perterb} so that wave propagation is prohibited for frequencies located in the gap interval. We carry out the asymptotic analysis for the spectrum of each perturbed periodic operator using the layer potential technique and prove that a spectral gap is opened at the Dirac point when the perturbation is small. Furthermore, we prove that the eigenspaces at the band edges are swapped for the two perturbed periodic operators, which demonstrates the topological phase transition of the medium at the Dirac point. The topological phase difference between the two lattices can also be manifested through the Berry phase, which describes the phase evolution of eigenfunctions in the momentum space \cite{Asboth-16}. As demonstrated in Figure \ref{fig:Berry_curvature}, the Berry curvatures associated with the first bands of the two perturbed lattices attain opposite values.

\begin{figure}[!htbp]
\begin{center}
\includegraphics[height=5cm]{./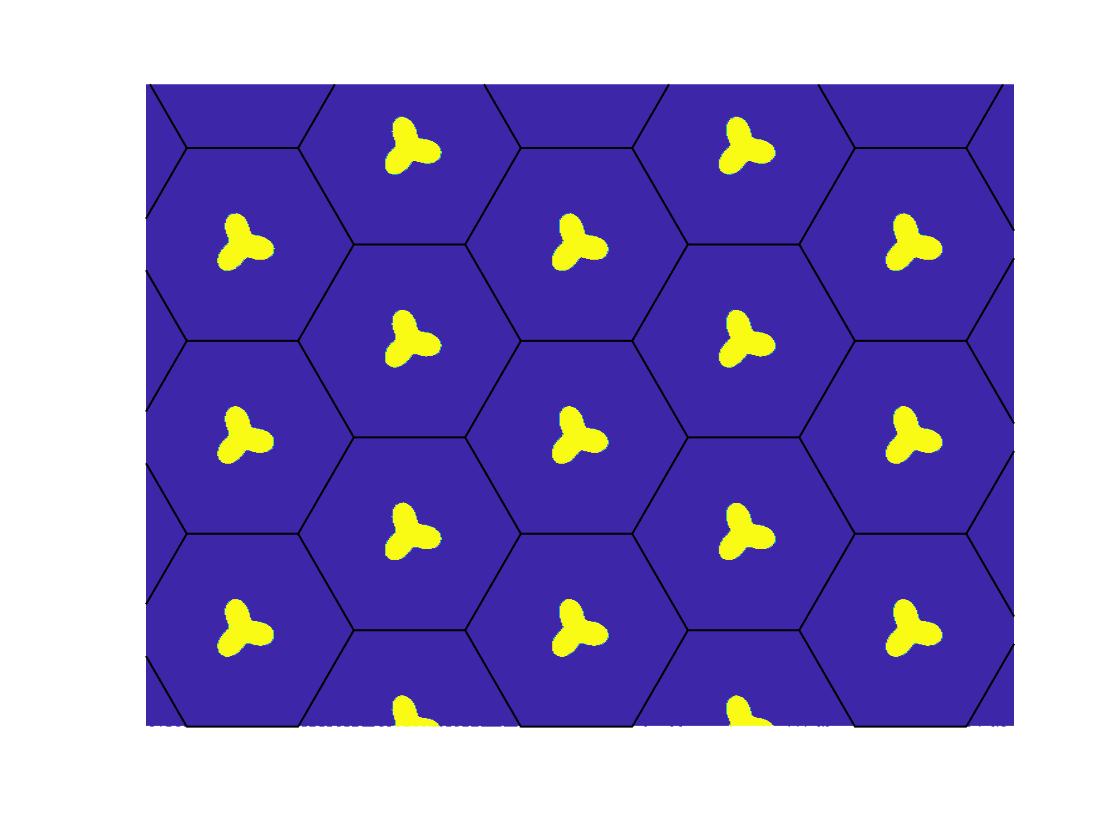}
\includegraphics[height=5cm]{./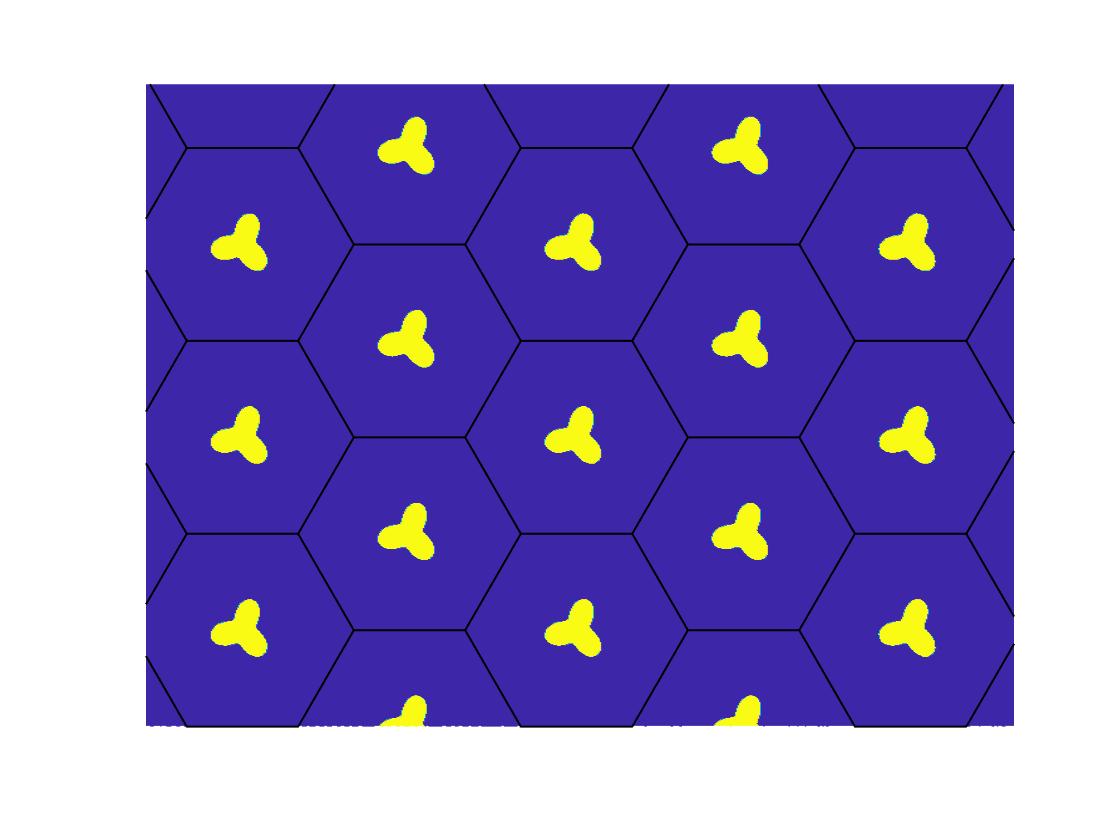} \vspace*{-0.5cm}
\caption{The two perturbed honeycomb lattices by rotating the obstacles counter-clockwisely and clockwisely, respectively. }
\label{fig:honeycomb_perterb}
\end{center}
\end{figure}

\begin{figure}[!htbp]
\begin{center}
\vspace*{-0.5cm}
\includegraphics[height=5cm]{./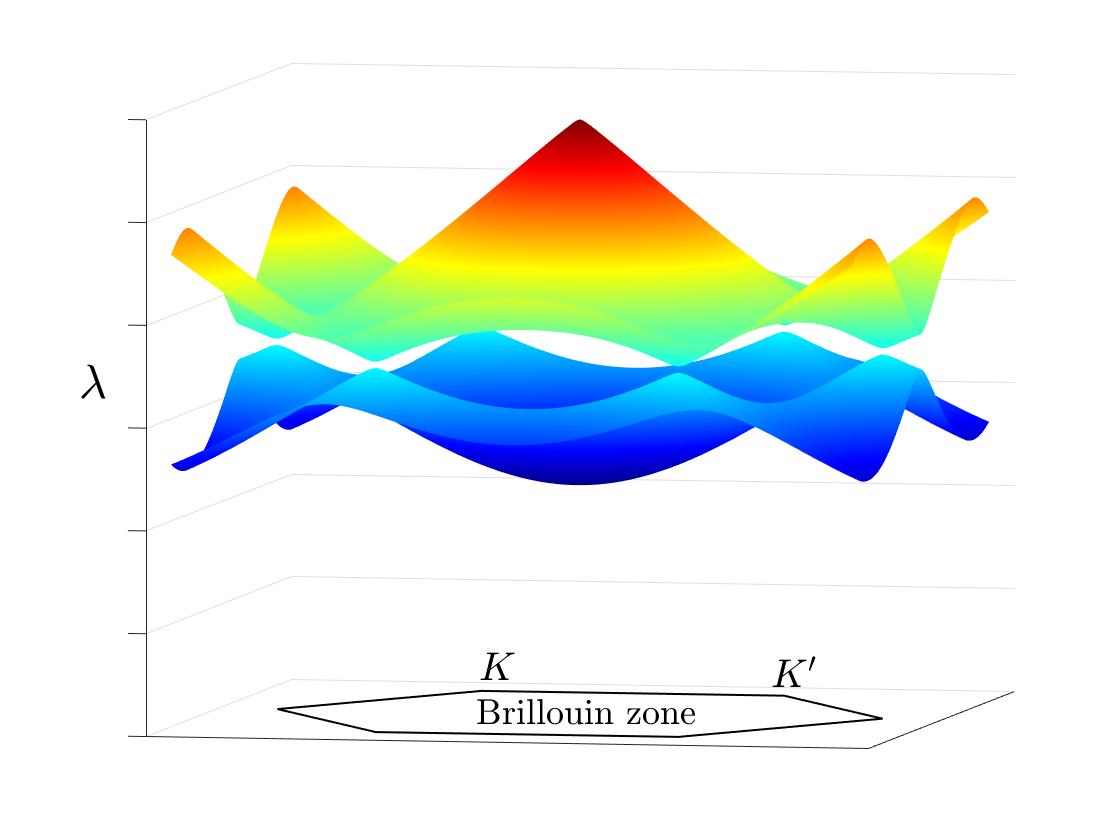}
\includegraphics[height=5cm]{./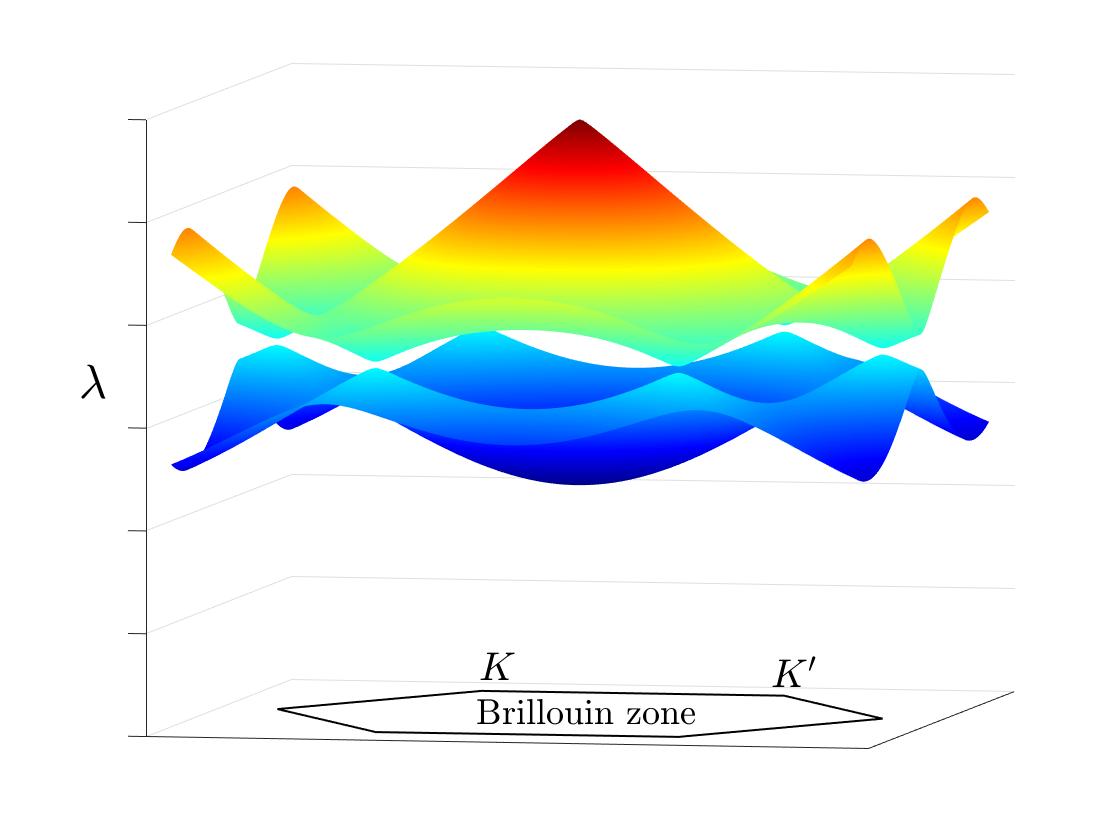}
\vspace*{-0.5cm}
\caption{The band structure of the two perturbed honeycomb lattices in Figure \ref{fig:honeycomb_perterb}. }
\label{fig:band_honeycomb_perterb}
\end{center}
\end{figure}

\begin{figure}[!htbp]
\begin{center}
\vspace*{-0.5cm}
\includegraphics[height=5cm]{./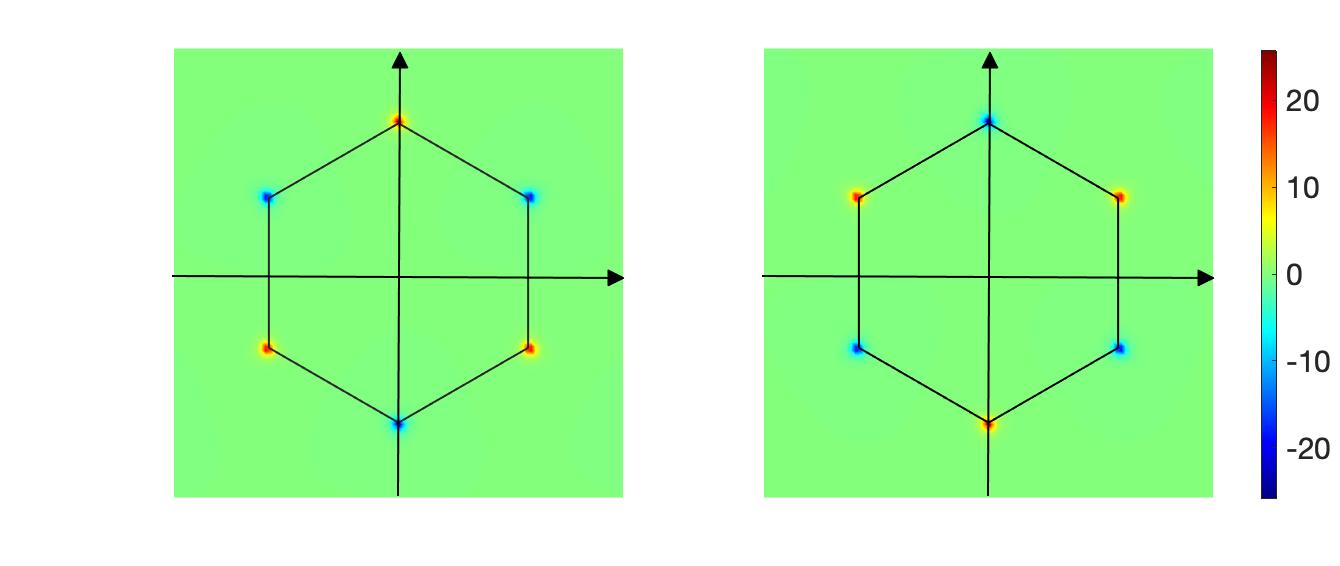}
\vspace*{-0.5cm}
\caption{The Berry curvature in the momentum space for the two perturbed honeycomb lattices in Figure \ref{fig:honeycomb_perterb}. }
\label{fig:Berry_curvature}
\end{center}
\end{figure}

\subsubsection{Interface modes in the joint honeycomb lattice}
Finally, we investigate the interface modes for the joint photonic structure formed by gluing the two perturbed honeycomb lattices together. The interface modes propagate parallel to the interface of the two media but decay along the direction perpendicular to the interface (cf. Figure \ref{fig:edge_mode}).
We consider the PDE operators for several configurations of joint photonic structures attaining different interface geometries, including the zigzag interface, the armchair interface, and the rational interfaces. The configurations of the joint structure with the zigzag and armchair interface are shown in Figure \ref{fig:honeycomb_joint}. We prove the existence of interface modes for the joint structures for each scenario, with the corresponding eigenfrequencies located in the common band gap of the two perturbed media enclosing the Dirac point. To address the sharp discontinuity of the medium coefficient across the interface, we set up a matching condition for the wave field at the interface using integral equations and investigate the characteristic values using the generalized Rouch\'e Theorem in Gohberg-Sigal theory~\cite{ts1971operator,Ammari-book}.
The method was applied to study the interface modes bifurcated from Dirac points in the topological waveguide structure recently \cite{Qiu-23} and can be employed to study interface modes in photonic structures with piecewise constant media in a general context.

\begin{figure}[!htbp]
\begin{center}
\hspace*{-30pt}
\includegraphics[height=4.5cm]{./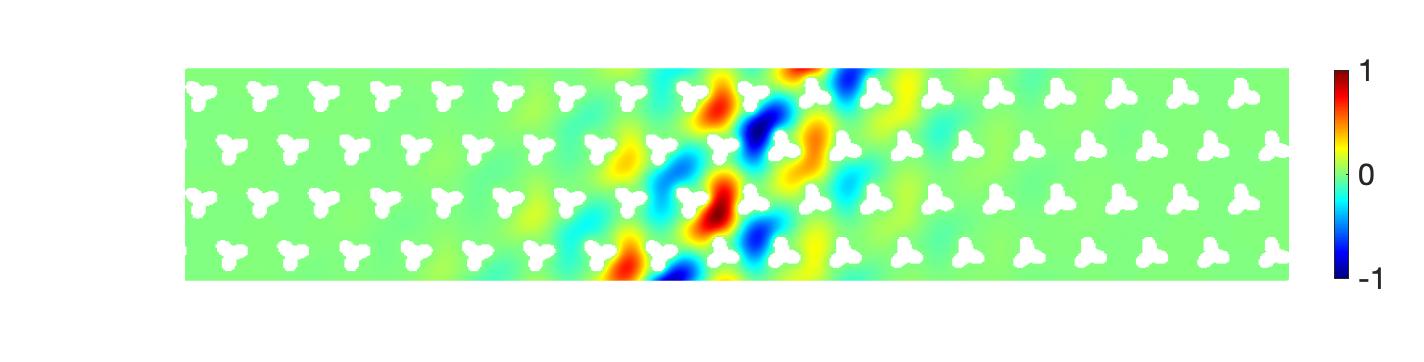}
\vspace*{-20pt}
\caption{Interface mode propagating along the interface of two perturbed honeycomb lattices. }
\label{fig:edge_mode}
\end{center}
\end{figure}

\begin{figure}[!htbp]
\begin{center}
\hspace*{-1.5cm}
\includegraphics[height=6cm]{./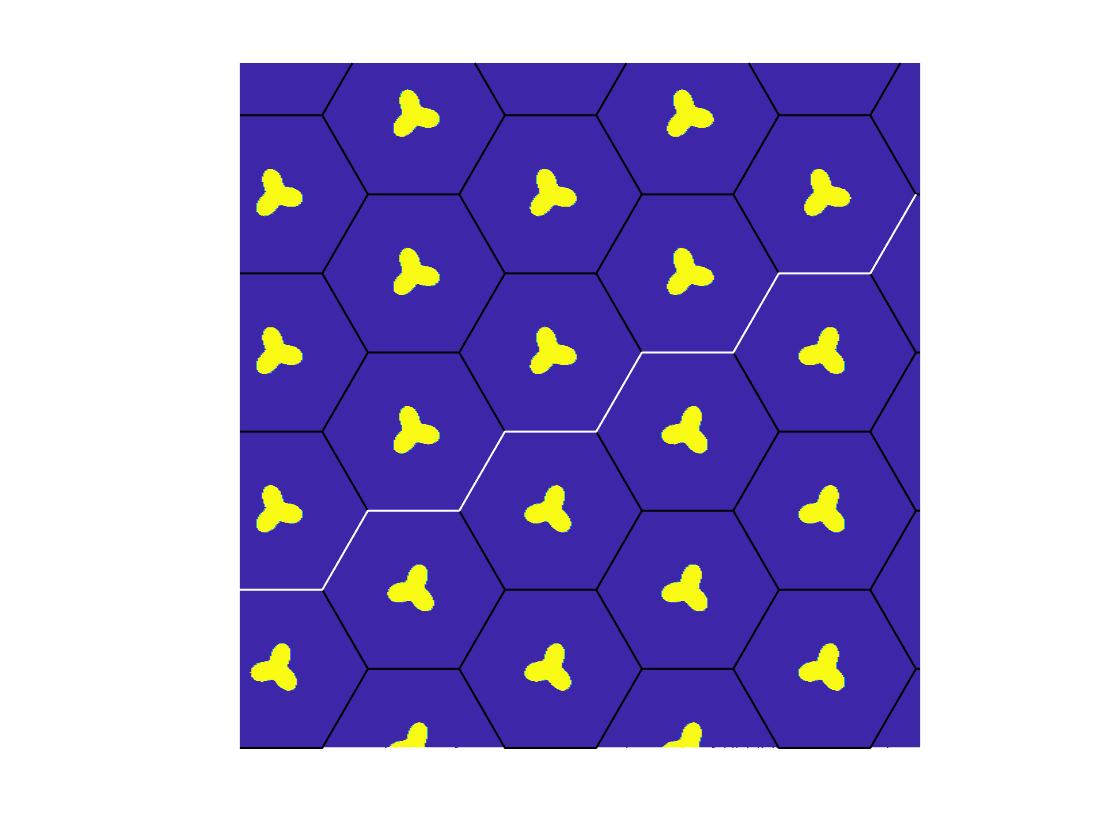}
\includegraphics[height=6cm]{./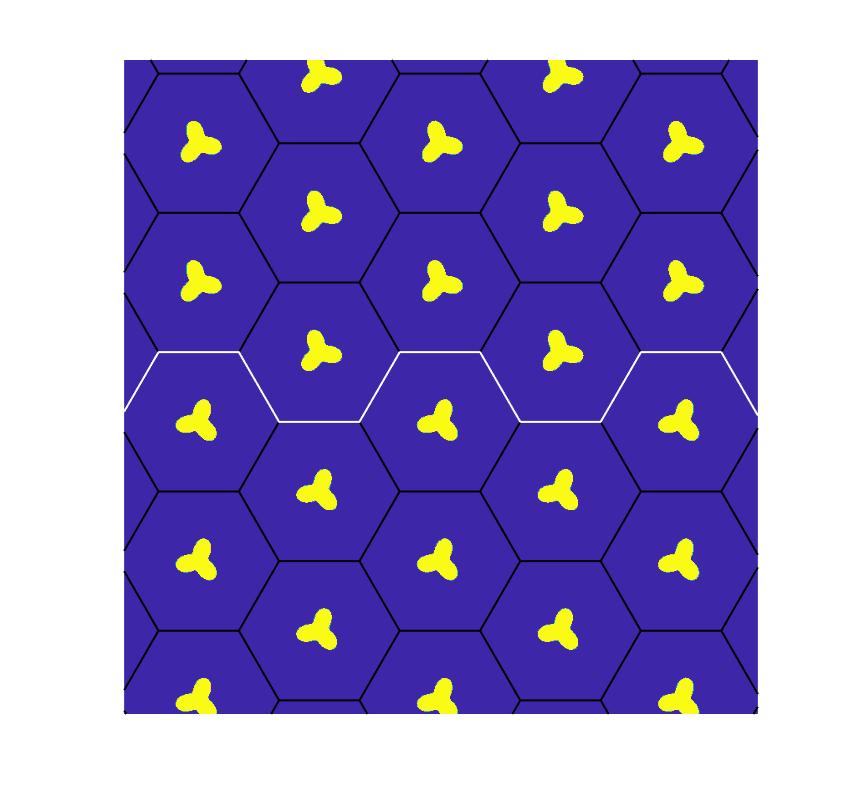}
\caption{Join photonic structures with a zigzag interface (left) and armchair interface (right). }
\label{fig:honeycomb_joint}
\end{center}
\end{figure}

\subsection{Notations}

\vskip2ex\noindent
{\bf Honeycomb lattice}\\
%
$\Lambda$, $\Lambda^*$: the honeycomb lattice and its dual lattice.\\
$\Kone$, $\Ktwo$: high symmetry points in the Brillouin zone.\\
$\tilde\Lambda^*:= \Kone + \Lambda^*$.\\
$\cC_z$: the fundamental cell of the honeycomb lattice for the zigzag interface.\\
$\be_1$, $\be_2$: generating vectors of the honeycomb lattice $\Lambda$ with the fundamental cell $\cC_z$.\\
 $\Gl$, $\Gr$, $\Gt$, $\Gb$, the left, right, top and bottom sides of $\cC_z$. See Figure \ref{fig:cellhomo}.\\
$\nu_1$, $\nu_2$: unit normal to $\Gl$ and $\Gb$. See Figure \ref{fig:cellhomo}.\\
$\bbeta_1$, $\bbeta_2$:  generating vectors of the dual lattice $\Lambda^*$ with $\bbeta_i \cdot \be_j = \delta_{ij}$.\\
$\cC_a$: the fundamental cell of the honeycomb lattice for the armchair interface.\\
$\be_1^a$, $\be_2^a$: generating vectors of the honeycomb lattice with the fundamental cell $\cC_a$.\\
$\bbeta_1^a$, $\bbeta_2^a$: generating vectors of the dual lattice with $\be_i^a \cdot \be_j^a = \delta_{ij}$.\\
$D_*$: the reference inclusion with required symmetries.\\
$D(\eta) := \eta D_*$.\\
$D := D(\eta_0)$ with a sufficiently small $\eta_0$.\\
$D^\eps$: the domain obtained by rotating $D$ by an angle of $\eps$ counterclockwise.

\vskip2ex\noindent
{\bf Layer potentials defined over the inclusion boundary}\\
$\mathcal S_0[\phi](\bx)$: single layer potential with Green function for free space Laplacian on $\partial D_*$, see \eqref{eq:SLap}. \\
$\mathcal S (\eta,\lambda,\bp)$: single layer potential with quasiperiodic Green function for Helmholtz equation on $\partial D_*$, see \eqref{eq:SHelm}.\\
$T (\eps,\lambda,\bp)$: single layer potential with quasiperiodic Green function for Helmholtz equation on $\partial D$, where $\eps$ represents the orientation, see \eqref{eq:Teps}.

\vskip2ex\noindent
{\bf Infinite strips for joint honeycomb structures}\\
$\displaystyle{\Omega^J:=\cup_{m\in\mathbb Z}\,(\cC_z +m\be_1})$: the infinite strip for the joint honeycomb structure with a zigzag interface.\\
$D^{J,\eps} := \left(\cup_{m\geq0} (D^{\eps}+m\be_1)\right) \cup \left(\cup_{m<0} (D^{-\eps}+m\be_1)\right)$: the domain of inclusions located in $\Omega^J$ for the joint honeycomb with the zigzag interface.\\
$\Omega^{J,\eps} := \Omega^J\backslash \overline{D^{J,\eps}}$. \\
$\Gamma$: the zigzag interface for the joint honeycomb structure.\\
$\Gamma_\pm:=\{\pm\frac{1}{2}\be_2+\ell\be_1, \ell\in\mathbb R\}$: the top and bottom boundaries of the domain $\Omega^J$.\\
$\displaystyle{\Omega_a^J:=\cup_{m\in\mathbb Z}\,(\cC_a +m\be_1^a})$: the infinite strip for the joint honeycomb structure with an armchair interface. \\
$D_a^{J,\eps}:= \left(\cup_{m\geq0} (D^{\eps}+m\be_1^a)\right) \cup \left(\cup_{m<0} (D^{-\eps}+m\be_1^a)\right)$: the domain of inclusions located in $\Omega_a^J$ for the joint honeycomb with the armchair interface.\\ 
$\Omega_a^{J,\eps}:= \Omega_a^J\backslash \overline{D_a^{J,\eps}}$.\\
$\Gamma^a$: the armchair interface restricted on $\Omega_a^J$.\\
$\Gamma_\pm^a:=\{\pm\frac{1}{2}\be_2^a+\ell\be_1^a, \ell\in\mathbb R\}$: the top and bottom boundaries of $\Omega_a^J$.\\
$\kp^*= \Kone\cdot\be_2 = \frac{4\pi}{3}$, ${\kp^{*}}' = \Ktwo\cdot\be_2 = -\frac{4\pi}{3}$, $\kp^{*,a}= \Kone\cdot\be_2^a = 2\pi$.\\

\vskip2ex\noindent
{\bf Energies and modes}\\
$w_i$, $i=1,2$: Bloch modes with quasimomentum $\Kone$ at the Dirac point energy $\lambda_*$. See Theorem~\ref{lem:Dirac}. \\
%
$\lambda_{n,\eps}(\bp(\ell))$: dispersion energies that are numbered from small to large.\\
$u_{n,\eps}(\bx;\bp)$: Bloch modes at energy $\lambda_{n,\eps}(\bp(\ell))$.\\
$\mu_{n,\eps}(\bp(\ell))$: dispersion energies that are smooth along $\bp(\ell)$.\\
$v_{n,\eps}(\bx;\bp(\ell))$: Bloch modes at energy $\mu_{n,\eps}(\bp)$ that are smooth along $\bp(\ell)$.\\
$\lambda_n$, $u_n$, $\mu_n$ and $v_n$: abbreviations of $\lambda_{n,0}$, $u_{n,0}$, $\mu_{n,0}$ and $v_{n,0}$. See
Section~\ref{sec:Floquet}.\\
$\vec v_i:=
\left(\begin{matrix}v_i|_\Gamma \\\partial_n v_i|_\Gamma\end{matrix}\right),\quad
i=1,2,$, see \eqref{eq:vecv}\\
$\mathfrak u_1:= \vec v_1+ \im\vec v_2$,  
$\mathfrak u_2:= \vec v_1- \im\vec v_2.$.

\vskip2ex\noindent
{\bf Green functions}\\
$G^f(\bx,\by;\lambda, \bp)$: quasi-periodic Green function for the Helmholtz equation in $\mathbb R^2$; see \eqref{eq:GQPhomo}.\\
$G^\eps(\bx,\by;\lambda)$:
Green function for the Helmholtz equation in $\mathbb R^2 \backslash\cup_{n_1,n_2\in\mathbb Z} (D^\eps + n_1\be_1+n_2\be_2)$ that is quasiperiodic in $\be_2$, see \eqref{eq:phyGeps}.

\vskip2ex\noindent
{\bf Layer potentials over the interface of the joint honeycomb lattice}\\
$\cS^\eps$, $\cD^\eps$, $\cK^{*,\eps}$, $\cK^\eps$, $\cN^\eps$: layer potentials over interface $\Gamma$, with the kernel $G^{\eps}(\bx,\by;\lambda)$.\\
$\mathbb T^\eps$, $\mathbb T_s^\eps$, $\mathbb T_t^\eps$, $\mathbb T_n^\eps$: matrix operators with layer potentials over $\Gamma$ with the kernel $G^{\eps}(\bx,\by;\lambda)$.\\
$\Omega^0$, 
$\tilde\cS^0(\lambda_*)$, $\tilde\cD^0(\lambda_*)$, $\tilde\cK^0(\lambda_*)$, $\tilde\cK^{*,0}(\lambda_*)$ and $\tilde\cN^0(\lambda_*)$: layer potentials over $\Gamma$, with kernel $\tilde{G}^0(\bx,\by;\lambda)$.\\
$\tilde{\mathbb T}^0(\lambda_*)$: matrix operator with layer potentials over $\Gamma$ with kernel $\tilde{G}^0(\bx,\by;\lambda)$.

\vskip2ex
\noindent{\bf Function spaces}\\
\eqref{eq:edgespace}
\begin{equation*}
\begin{aligned}
\cH^{J,\eps}:=\big\{&u\in H^1(\Omega^{J,\eps}): \Delta u\in L^2(\Omega^{J,\eps}), \quad u=0 \text{ on } \partial D^{J,\eps},\\
&u(\bx+ \be_2)= e^{\im \kp^*}u(\bx) \text{ for }\bx\in\Gamma_-, \quad \partial_{\nuGb} u(\bx+ \be_2) = e^{\im \kp^*}\partial_{\nuGb} u(\bx) \text{ for }\bx\in\Gamma_-\big\},
\end{aligned}
\end{equation*}
\eqref{eq:edgespace_armchair}
\begin{equation*}
\begin{aligned}
\cH_a^{J,\eps}:=\big\{&u\in H^1(\Omega_a^{J,\eps}): \Delta u\in L^2(\Omega_a^{J,\eps}), \quad u=0 \text{ on } \partial D_a^{J,\eps},\\
&u(\bx+ \be_2)= e^{\im \kp^{*,a}}u(\bx) \text{ for }\bx\in\Gamma_-^a, \quad \partial_{\nuGb} u(\bx+ \be_2^a) = e^{\im \kp^{*,a}}\partial_{\nuGb} u(\bx) \text{ for }\bx\in\Gamma_-^a\big\}.
\end{aligned}
\end{equation*}
\eqref{eq:HsD}
\begin{equation*}
H_{i}^{s}(\partial D):= \left\{ \phi\in H^s(\partial D);  R\phi(\bx):=\phi(R^{-1}\bx) = \tau^i \phi(\bx) \right\}, \quad i=0,1,2.
\end{equation*}
\eqref{eq:QP1pm}
\begin{equation*}
\begin{aligned}
\cH^{\eps}_{\text{loc}}:=\{&u\in H^1_{\text{loc}}(\Omega^{\eps}): \Delta u\in L^2_{\text{loc}}(\Omega^{\eps}), \quad u=0 \text{ on } \cup_{m\in\mathbb Z}(\partial D^{\eps} +m\be_1),\\
&u(\bx+ \be_2)= e^{\im \kp^*}u(\bx) \text{ for }\bx\in\Gamma_-, \quad \partial_{\nuGl}  u(\bx+ \be_2) =e^{\im \kp^*}\partial_{\nuGl}  u(\bx) \text{ for }\bx\in\Gamma_-\}.
\end{aligned}
\end{equation*}
\eqref{eq:QP20}
\begin{equation*}
\begin{aligned}
\cH^\eps(\ell):=\{&u\in H^1(\cunp): \Delta u\in L^2(\cpeps), \quad u=0 \text{ on } \partial D^\eps,\\
&u(\bx+ \be_2)= e^{\im \kp^*}u(\bx) \text{ for }\bx\in{\Gb}, \quad \partial_{\nuGb}  u(\bx+ \be_2) =e^{\im \kp^*}\partial_{\nuGb}  u(\bx) \text{ for }\bx\in\Gb\\
&u(\bx+ \be_1)= e^{\im (\Kone+\ell\bbeta_1)\cdot\be_1}u(\bx) \text{ for }\bx\in{\Gl},\quad
\partial_{\nuGl}  u(\bx+ \be_1) =e^{\im (\Kone+\ell\bbeta_1)\cdot\be_1}\partial_{\nuGl}  u(\bx) \text{ for }\bx\in\Gl \}.
\end{aligned}
\end{equation*}
\eqref{eq:H1Delta}
\begin{equation*}
\begin{aligned}
H^1(\Delta,\cpeps):= \{&u\in H^1(\cpeps): \Delta u\in L^2(\cpeps), \quad u=0 \text{ on } \partial D^\eps,\\
&u(\bx+ \be_2)= e^{\im \kp^*}u(\bx) \text{ for }\bx\in\Gb, \quad \partial_{\nuGb}  u(\bx+ \be_2) =e^{\im \kp^*}\partial_{\nuGb}  u(\bx) \text{ for }\bx\in\Gb
\}.
\end{aligned}
\end{equation*}
\eqref{eq:HsG2}
\begin{equation*}
\cH^s(\Gamma):= \left\{ u(\bx_0+ t\be_2) = \sum_{n\in\mathbb Z} a_n e^{\im \kp^* t} e^{\im 2\pi n t}: \|u\|_{\cH^s(\Gamma)}^2:=\sum_{n\in\mathbb Z} |a_n|^2 (1+|n|^2)^{s/2}\right\}.
\end{equation*}

\section{Main results}\label{sec:mainresults}
 The first main result of this work is the existence and asymptotic analysis of Dirac points for a family of honeycomb lattices of impenetrable obstacles with Dirichlet boundary conditions. We assume that the shape of each obstacle in the honeycomb lattice attains $120^\circ$-rotation symmetry and horizontal reflection symmetry.
The second main result is the existence and the number of interface modes for a joint photonic structure formed by gluing two lattices perturbed from the honeycomb lattice attaining Dirac points along an interface.
Our results cover the case of a zigzag interface, an armchair interface, and a rational interface. 
These interface modes are quasi-periodic along the direction of the interface but decay in the direction transverse the interface direction.
We also derive the dispersion relation of the interface modes with respect to the quasi-momentum along the interface.

\subsection{The honeycomb lattice of impenetrable obstacles}
As illustrated in Figure \ref{fig:honeycomb}, an infinite array of impenetrable obstacles are arranged periodically over
the honeycomb lattice $$ \Lambda := \bbZ \be_1 \oplus  \bbZ \be_2 := \{ \ell_1 \be_1 + \ell_2 \be_2 :  \ell_1, \ell_2 \in \bbZ \}, $$
wherein the lattice vectors
\begin{equation*}
\be_1=a(\frac{\sqrt3}{2},-\frac{1}{2})^T, \quad \be_2=a(\frac{\sqrt3}{2},\frac{1}{2})^T.
\end{equation*}
In what follows, without loss of generality, we assume that the lattice constant $a=1$. 
Let 
\begin{equation}\label{eq:cellz}
\cC_z := \{ \ell_1 \be_1 + \ell_2 \be_2 :  \ell_1, \ell_2\in [-1/2,1/2) \}
\end{equation}
be the fundamental cell of the lattice.
Let $ D_*\ssubset \cC_z$ be a connected smooth domain that is invariant under the $\frac{2\pi}{3}$-rotation transform $R$ and the horizontal reflection transform $\rflc$ given by
\begin{equation}\label{eq:RF}
R\bx := \left(\begin{matrix}-\frac{1}{2} & \frac{\sqrt3}{2} \\ -\frac{\sqrt3}{2} & -\frac{1}{2} \\ \end{matrix}\right)\bx, \quad
\rflc (x_1,x_2) = (-x_1,x_2).
\end{equation}
Denote the scaled inclusion 
$D(\eta):=\{\bx =\eta\bx', \bx'\in \partial D_*\}$ for $\eta\in (0,1)$. 

\begin{figure}[h]
\begin{center}
\includegraphics[height=4cm]{./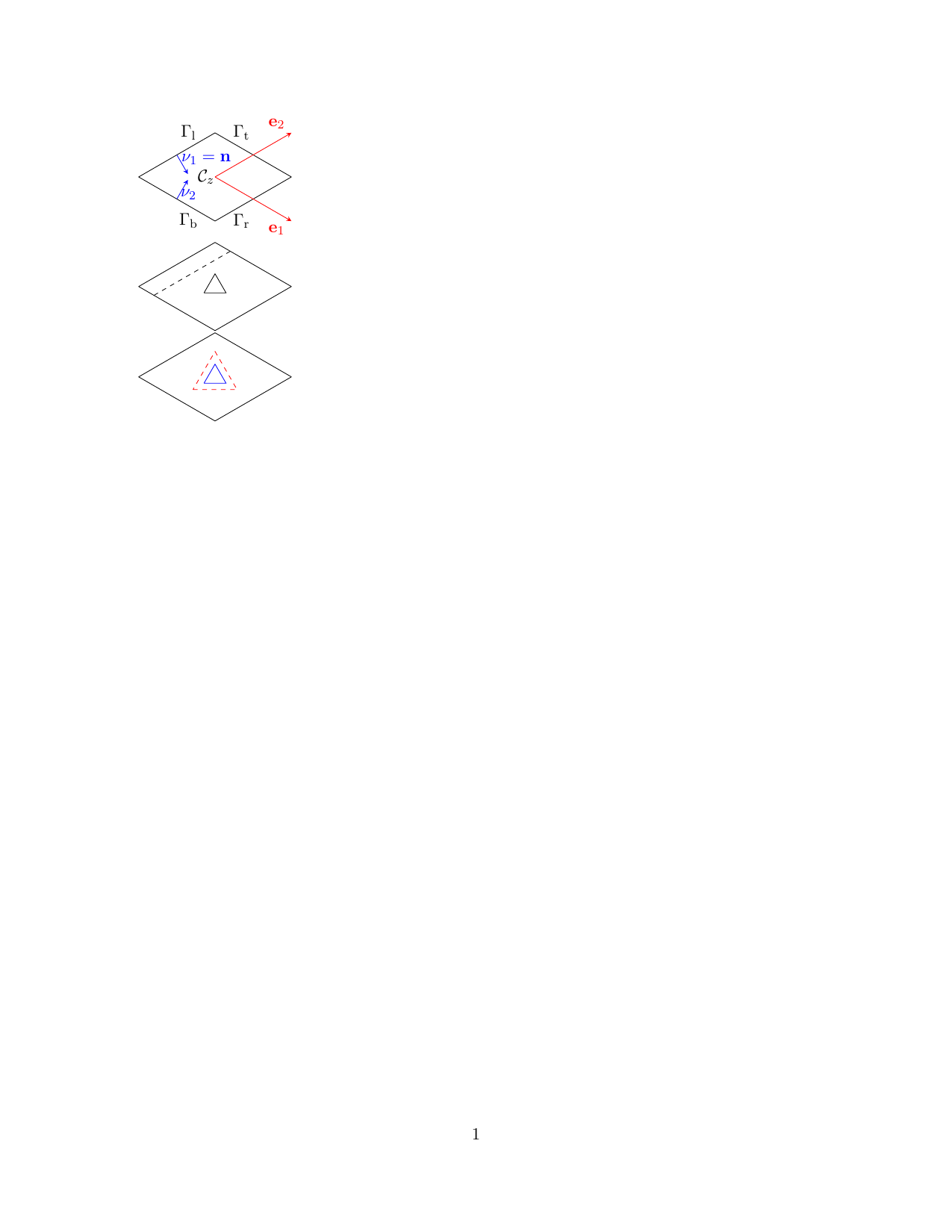} 
\includegraphics[height=5cm]{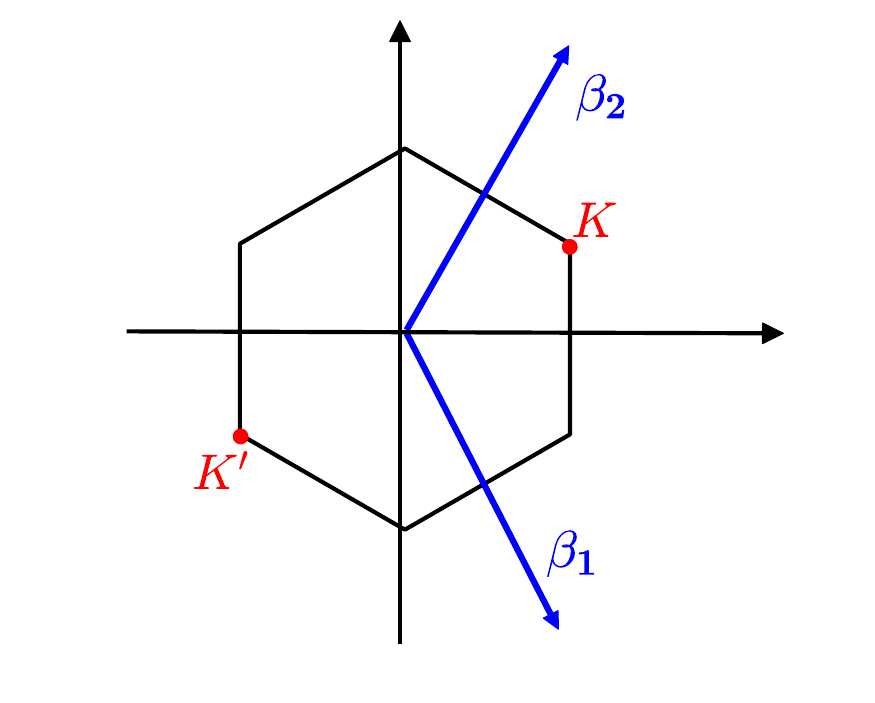}
\caption{The fundamental cell $\cC_z$ (left) and the Brillouin zone (right).}
\label{fig:cellhomo}
\end{center}
\end{figure}

Let
\begin{equation*}
\Lambda^* = \{ 2\pi\ell_1 \bbeta_1 + 2\pi\ell_2 \bbeta_2 :  \ell_1, \ell_2\in\mathbb Z\},
\end{equation*}
be the reciprocal lattice, where the reciprocal lattice vectors 
\begin{equation}
\bbeta_1=(\frac{1}{\sqrt3},-1)^T, \quad \bbeta_2=(\frac{1}{\sqrt3},1)^T
\end{equation}
satisfy $\be_i\cdot\bbeta_j=\delta_{ij}$, $i,j=1,2$.
The hexagon-shaped fundamental cell in $\Lambda^*$, or the Brillouin zone, is denoted by $\cB_z$
as shown in Figure \ref{fig:cellhomo} (right).
The high symmetry points located at the vertices of the Brillouin zone are given by 
$$ K:=2\pi(\frac{1}{\sqrt3},\frac{1}{3})=2\pi(\frac{1}{3}\bbeta_1+\frac{2}{3}\bbeta_2), \; K':=-K, \; RK, \;  R^2K, \; RK' \; \mbox{and}\; R^2K'. $$

\subsection{Dirac points for the honeycomb lattice}\label{subsec:Dirac_point}
Following the Floquet-Bloch theory \cite{Kuchment-12},
for each $\bp\in\cB_z$, we consider the following eigenvalue problem:
\begin{equation}\label{eq:bandu_zigzag}
\begin{aligned}
 - \Delta  u(\bx; \bp) - \lambda u(\bx; \bp)&=0,  \quad &\bx\in \chomo\backslash D(\eta) + \Lambda, \\
u(\bx; \bp)&=0,  \quad  &\bx\in \partial D(\eta) + \Lambda, \\
u(\bx+\be; \bp)&=e^{i \bp \cdot \be}u(\bx; \bp), & \mbox{for} \; \be \in \Lambda.
\end{aligned}
\end{equation}
For each $\bp$, the eigenvalues can be ordered by $\lambda_1(\bp) \le \lambda_2(\bp) \le \cdots \le \lambda_n(\bp) \le \cdots $. As $\bp$ varies in the Brillouin zone $\cB_z$, one obtains the band structure of the honeycomb lattice.

We define the $\eta^2$-vicinity of $|\Kone|$:
\begin{equation}\label{eq:vincexa}
U_{\eta}: = \left\{ \lambda \in \mathbb C:  \frac{(2\pi)^2}{ 3 |\cC_z|} \mathfrak a\eta^2 \le \left|\lambda-|\Kone|^2 \right| \le \frac{(2\pi)^2}{ |\cC_z|} \mathfrak a\eta^2 \right\}.
\end{equation}
where $\mathfrak a \neq 0$ is a complex number defined by \eqref{express-a}. 
The main results regarding the Dirac points at $\bp=K$ and $K'$ are stated below.

\begin{thm}\label{lem:Dirac}
If Assumption \ref{assumpS0} holds, then for $\eta$ sufficiently small but nonzero, there exists a Dirac point at $(\Kone,\lambda_*)$ in the band structure of the honeycomb lattice $D(\eta) + \Lambda$ with $\lambda_*\in U_\eta$. 
The dispersion surface near $(\Kone, \lambda_*)$ takes the form  
\begin{equation}
(\lambda-\lambda_*)^2 = m_*^2 \, |\bp-\Kone|^2+O(|\bp-\Kone|^3),\quad m_*\in\mathbb R, \quad m_*\geq0,
\end{equation}
where the slope of the Dirac cone is 
\begin{equation}
m_* = \frac{2}{3}(1+O(\eta)).
\end{equation}
In addition, the basis of the eigenspace at the Dirac point $(\Kone,\lambda_*)$ can be chosen as $w_1$ and $w_2$ that satisfy
\begin{equation}
Rw_1(\bx):=w_1(R^{-1}\bx) = \tau w_1(\bx),\quad Rw_2(\bx):=w_2(R^{-1}\bx) = \overline{\tau}  w_2(\bx),\quad w_2(\bx)=Fw_1(\bx):=w_1(\rflc\bx),
\end{equation}
in which $\tau = e^{\im\frac{2\pi}{3}}$.

\end{thm}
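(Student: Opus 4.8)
The plan is to recast the Bloch eigenvalue problem \eqref{eq:bandu_zigzag} at $\bp=\Kone$ as a characteristic-value problem for a single-layer boundary integral operator on $\partial D_*$, and then to combine the symmetry of $D_*$ under $R$ and $\rflc$ with a small-$\eta$ asymptotic analysis. First I would represent an exterior Bloch mode as a single-layer potential $u=\mathcal S(\eta,\lambda,\bp)[\phi]$ with the quasi-periodic Helmholtz Green function $G^f(\cdot,\cdot;\lambda,\bp)$ of \eqref{eq:GQPhomo}; since for small $\eta$ the interior Dirichlet eigenvalues of $\eta D_*$ are of order $\eta^{-2}$, hence far from $|\Kone|^2$, this representation is faithful, so $\lambda$ is a Bloch eigenvalue at $\bp$ exactly when the rescaled operator $\mathcal S(\eta,\lambda,\bp)$ on densities over $\partial D_*$ has a nontrivial kernel. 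Because $R$ preserves both $\Lambda$ and $\Lambda^*$ and $R\Kone\equiv\Kone$, $\rflc\Kone\equiv\Kone$ modulo $\Lambda^*$, the kernel $G^f(\cdot,\cdot;\lambda,\Kone)$ is invariant under the simultaneous action of $R$, respectively $\rflc$, on both of its arguments; hence $\mathcal S(\eta,\lambda,\Kone)$ commutes with the unitaries $\phi\mapsto\phi(R^{-1}\cdot)$ and $\phi\mapsto\phi(\rflc\cdot)$ on $L^2(\partial D_*)$, so it respects the $C_3$-isotypic decomposition $H^{-1/2}(\partial D_*)=X_0\oplus X_1\oplus X_2$ with $X_i$ the $\tau^i$-eigenspace of $R$ (cf.\ \eqref{eq:HsD}), and $\rflc$ interchanges $X_1$ and $X_2$.

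The asymptotic core is next. The three reciprocal modes $\psi_j(\bx):=e^{\im (R^j\Kone)\cdot\bx}$, $j=0,1,2$, all satisfy $|R^j\Kone|^2=|\Kone|^2$ and are cyclically permuted by $R$, so their span carries exactly one copy of each $C_3$-character, and $\psi_0+\overline\tau\,\psi_1+\tau\,\psi_2$ (the $X_1$-combination) vanishes at the obstacle center. I would split $G^f(\cdot,\cdot;\lambda,\Kone)$ into the resonant part carried by $\{\psi_j\}$, with a simple pole at $\lambda=|\Kone|^2$, plus a part holomorphic there; after rescaling and Taylor-expanding the $\psi_j$ in $\eta$ this yields, on each block,
\begin{equation*}
\mathcal S(\eta,\lambda,\Kone)\big|_{X_i}=\mathcal S_0\big|_{X_i}+\frac{1}{\lambda-|\Kone|^2}\,\mathcal R_i(\eta)+\mathcal E_i(\eta,\lambda),\qquad i=0,1,2,
\end{equation*}
where $\mathcal R_i$ is finite-rank with $\|\mathcal R_i\|=O(\eta^2)$ for $i=1,2$ (the constant Taylor term projects away from the nontrivial characters) but larger for $i=0$ (where the $\log\eta$ part of $\mathcal S_0$ also enters), and $\mathcal E_i$ is holomorphic and small. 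Under Assumption~\ref{assumpS0}, $\mathcal S_0\big|_{X_1}$ is boundedly invertible (and so is $\mathcal S_0\big|_{X_2}$, by the $\rflc$-symmetry), so the characteristic-value condition on $X_1$ reduces, via the Gohberg--Sigal generalized Rouch\'e theorem on the annulus $U_\eta$ of \eqref{eq:vincexa}, to a scalar equation $1+\dfrac{\mathfrak a\,\eta^2}{\lambda-|\Kone|^2}\bigl(1+O(\eta)\bigr)=0$ with $\mathfrak a\neq0$ as in \eqref{express-a}, which has a unique root $\lambda_*=\lambda_*(\eta)\in U_\eta$. On $X_0$ the corresponding root sits at distance of order $1/\log(1/\eta)\gg\eta^2$ from $|\Kone|^2$, hence outside $U_\eta$. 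Therefore $\ker\mathcal S(\eta,\lambda_*,\Kone)$ is exactly two-dimensional, spanned by some $0\neq\phi_1\in X_1$ and $\phi_2:=\rflc\phi_1\in X_2$; setting $w_i:=\mathcal S(\eta,\lambda_*,\Kone)[\phi_i]$ produces Bloch modes with $Rw_1=\tau w_1$, $Rw_2=\overline{\tau}\,w_2$, and $w_2=\rflc w_1$.

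For the conical shape I would perturb $\bp=\Kone+\bk$ and, near $(\Kone,\lambda_*)$, reduce the characteristic-value problem for $\mathcal S(\eta,\lambda,\Kone+\bk)$ to the eigenvalue problem for a $2\times2$ Hermitian matrix $A(\bk)$ on $\mathrm{span}\{w_1,w_2\}$ with $A(0)=\lambda_*I$, a standard Lyapunov--Schmidt/Gohberg--Sigal reduction. Covariance under $R$ (with $\bk\mapsto R\bk$, $w_1\mapsto\tau w_1$, $w_2\mapsto\overline\tau\,w_2$) forces the diagonal entries of $A$ to have vanishing linear term in $\bk$ and the off-diagonal entry to be $A_{12}(\bk)=\mathfrak c\,(k_1-\im k_2)+O(|\bk|^2)$ with $A_{21}=\overline{A_{12}}$; diagonalizing gives $\lambda^\pm(\bk)=\lambda_*\pm|\mathfrak c|\,|\bk|+O(|\bk|^2)$, i.e.\ $(\lambda-\lambda_*)^2=|\mathfrak c|^2|\bk|^2+O(|\bk|^3)$ with $m_*=|\mathfrak c|\ge0$. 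Evaluating $\mathfrak c$ from the leading (empty-lattice/static) contribution, in which the obstacle enters only at relative order $\eta$, gives $m_*=\tfrac23\bigl(1+O(\eta)\bigr)\neq0$. The case $\bp=\Kone'=-\Kone$ follows from the complex-conjugation symmetry of the problem.

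The step I expect to be the main obstacle is the asymptotic expansion above: cleanly isolating the resonant part of $G^f(\cdot,\cdot;\lambda,\Kone)$, tracking how the Taylor coefficients of $e^{\im (R^j\Kone)\cdot\eta\bx}$ distribute over the three isotypic blocks, and thereby pinning down both the correct power of $\eta$ in $\mathcal R_1$ and the precise constant $\mathfrak a$ --- in particular, establishing that the $X_0$-block root escapes $U_\eta$ while the $X_1$ (hence $X_2$) one lands inside it, which is exactly what fixes $\lambda_*\in U_\eta$ and the multiplicity two. Closely related are the bookkeeping needed to justify rigorously that $\mathcal S(\eta,\lambda,\Kone)$ commutes with the rescaled $R$- and $\rflc$-actions (the reciprocal-lattice shifts $R\Kone-\Kone$ and $\rflc\Kone-\Kone$ must be carried through the quasi-periodicity), and the explicit evaluation of $\mathfrak c$ yielding the nonzero slope $m_*=\tfrac23(1+O(\eta))$.
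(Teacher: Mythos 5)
Your proposal follows the same route as the paper: the single-layer boundary integral formulation of Section~3.1, the $C_3$-isotypic decomposition and $R$/$\rflc$-equivariance of $\mathcal{S}(\eta,\lambda,\Kone)$ of Lemma~3.1, the splitting of $G^f$ into the pole at $\lambda=|\Kone|^2$ plus the free-space log singularity plus a smooth remainder (the paper's $L_1+L_2+R_0+R_1+R_2$ decomposition), the Gohberg--Sigal argument on the annulus $U_\eta$ applied separately to the $H_0^{-1/2}$ and $H_1^{-1/2}$ blocks, and the $\rflc$-conjugation to get the $H_2^{-1/2}$ eigenvector. Your scalar reduced equation $1+\mathfrak a\,\eta^2/(\lambda-|\Kone|^2)(1+O(\eta))=0$ and the identification of $\mathfrak a\ne 0$ via the $\langle\cS_0^{-1}\cdot,\cdot\rangle$ inner product are exactly Lemmas~3.6--3.9 and Theorem~3.10. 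The one place where your implementation differs is the conical dispersion: you derive the form of the reduced $2\times 2$ matrix ($A_{11}$ has no linear term, $A_{12}=\mathfrak c\,(k_1-\im k_2)+O(|\bk|^2)$) directly from covariance under the $R$-action on $\mathrm{span}\{w_1,w_2\}$ and the momentum variable $\bk$, whereas the paper obtains the same matrix structure (Propositions~4.1--4.2, $\cM$ in~\eqref{eq:dispM}) by explicitly computing the derivatives $\langle\brho,\nabla_\bp T\brho\rangle$, $\langle\brho,\partial_\lambda T\brho\rangle$ from the kernel of $G^f$ and then using the rotation/reflection symmetry at the level of those integrals. Your symmetry argument is more conceptual; the paper's explicit computation is needed anyway to pin down $\theta_*$ and $\gamma_*$ and hence the value $m_*=\frac{1}{|\bbeta_1|}|\theta_*/\gamma_*|=\frac{2}{3}(1+O(\eta))$, which your sketch defers to a "leading empty-lattice contribution." Both are valid; the paper's version is essentially your argument carried out one level lower.
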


\begin{remark}
 Recall that $R$ and $\rflc$ defined in \eqref{eq:RF} denote the $\frac{2\pi}{3}$-rotation operation and the horizontal reflection operation acting on vectors in $\mathbb{R}^2$. Here and henceforth, for convenience of notation, we also use $R$ and $\rflc$ to denote the rotation and reflection operators acting on functions. The meaning of the notations should be clear in the context, depending on whether they are applied to vectors or functions. 
\end{remark}

\begin{remark}
In this work, we only prove the existence of Dirac points for the first two bands and when $\eta \ll 1$. It can be shown that Dirac points also exist for higher bands and for $\eta$ not small. This is not the focus of this work and will be reported separately elsewhere.
\end{remark}

\begin{coro}\label{lem:Ktwo}
For $\eta \ll 1$,  $(\Ktwo,\lambda_*)$ is also a Dirac point with the corresponding eigenspace spanned by
\begin{equation}
w_1'(\bx) := \bar w_2(\bx),\quad w_2'(\bx):=\bar w_1(\bx),
\end{equation}
which attain the following symmetry relations:
\begin{equation}
Rw_1'(\bx):=w_1'(R^{-1}\bx) = \tau w_1'(\bx),\quad Rw_2'(\bx):=w_2'(R^{-1}\bx) = \overline{\tau}w_2'(\bx),\quad w_2'(\bx)=w_1'(\rflc\bx).
\end{equation}
\end{coro}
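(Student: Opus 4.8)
The corollary should follow purely from the time-reversal symmetry of the problem, i.e. from the fact that the Helmholtz operator with Dirichlet conditions on $\partial D(\eta)+\Lambda$ has real coefficients, so complex conjugation intertwines the Bloch eigenvalue problem \eqref{eq:bandu_zigzag} at quasimomentum $\bp$ with the one at $-\bp$. The plan is as follows. First I would note that if $u$ solves \eqref{eq:bandu_zigzag} with $\bp=\Kone$ and eigenvalue $\lambda$, then $\bar u$ satisfies $-\Delta\bar u=\bar\lambda\,\bar u$, vanishes on $\partial D(\eta)+\Lambda$, and obeys $\bar u(\bx+\be)=e^{-\im\Kone\cdot\be}\bar u(\bx)=e^{\im\Ktwo\cdot\be}\bar u(\bx)$ since $\Ktwo=-\Kone$; conversely conjugation is an involution, so $u\mapsto\bar u$ is an anti-linear bijection from the $\lambda$-eigenspace at $\Kone$ onto the $\bar\lambda$-eigenspace at $\Ktwo$. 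Because the Bloch Laplacian at a real quasimomentum is self-adjoint, $\lambda_*\in\mathbb R$, hence $\bar\lambda_*=\lambda_*$, and the eigenspace at $(\Ktwo,\lambda_*)$ is exactly the conjugate of the eigenspace at $(\Kone,\lambda_*)$. By Theorem~\ref{lem:Dirac} the latter is two-dimensional with basis $w_1,w_2$, so the former is two-dimensional with basis $w_1':=\bar w_2$, $w_2':=\bar w_1$.

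Second, I would upgrade this to the conical dispersion required of a Dirac point. The same conjugation argument applied band by band gives $\lambda_n(\bp)=\lambda_n(-\bp)$ for every $n$ and every $\bp\in\cB_z$ (both sides enumerate the same multiset of real eigenvalues in increasing order). Writing $\lambda^\pm(\bp)$ for the two bands meeting at the Dirac point near $\Kone$, one has, for $\bq$ in a neighborhood of $\Ktwo=-\Kone$, that $-\bq$ lies in a neighborhood of $\Kone$ and $|(-\bq)-\Kone|=|\bq-\Ktwo|$, so Theorem~\ref{lem:Dirac} yields
\begin{equation*}
(\lambda^\pm(\bq)-\lambda_*)^2 = (\lambda^\pm(-\bq)-\lambda_*)^2 = m_*^2\,|\bq-\Ktwo|^2 + O(|\bq-\Ktwo|^3),
\end{equation*}
so $(\Ktwo,\lambda_*)$ carries a Dirac cone with the same slope $m_*=\tfrac23(1+O(\eta))$.

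Third, I would verify the symmetry relations by direct computation, using that $R$ and $\rflc$ act on functions by $Rg(\bx)=g(R^{-1}\bx)$, $Fg(\bx)=g(\rflc\bx)$, that these commute with complex conjugation, and that $F^2=\mathrm{id}$ because $\rflc^2$ is the identity on $\mathbb R^2$. From $Rw_2=\bar\tau\,w_2$ one gets $Rw_1'(\bx)=\bar w_2(R^{-1}\bx)=\overline{(Rw_2)(\bx)}=\overline{\bar\tau\,w_2(\bx)}=\tau\,\bar w_2(\bx)=\tau\,w_1'(\bx)$; from $Rw_1=\tau w_1$ one gets $Rw_2'=\overline{Rw_1}=\overline{\tau w_1}=\bar\tau\,w_2'$; and from $w_2=Fw_1$ one gets $w_1'(\rflc\bx)=\overline{w_2(\rflc\bx)}=\overline{(Fw_2)(\bx)}=\overline{(F^2w_1)(\bx)}=\bar w_1(\bx)=w_2'(\bx)$.

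I do not anticipate a genuine obstacle here; the only points that need care are the anti-linearity of conjugation — it is exactly what sends the quasiperiodicity phase $e^{\im\Kone\cdot\be}$ to $e^{\im\Ktwo\cdot\be}$ and what exchanges the $\tau$- and $\bar\tau$-isotypic components of the $R$-action, thereby swapping the roles of $w_1$ and $w_2$ — together with the (routine) reality of $\lambda_*$, which is what makes $(\Ktwo,\lambda_*)$, rather than $(\Ktwo,\bar\lambda_*)$, the Dirac point.
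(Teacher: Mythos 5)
Your proposal is correct and uses essentially the same idea as the paper: the paper disposes of the $K'$ case in one line via the time-reversal identity $G^f(\bx,\by;\lambda,\Ktwo)=\overline{G^f(\bx,\by;\lambda,\Kone)}$, which is exactly your conjugation argument expressed at the level of the integral kernel rather than of PDE solutions. Your verification of the symmetry relations and of the conical dispersion (using $|(-\bq)-\Kone|=|\bq-\Ktwo|$ and the reality of $\lambda_*$) is accurate and fills in the details the paper leaves implicit.
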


\subsection{Band-gap opening at Dirac points}
The existence of Dirac points as established in the previous subsection is due to the $\frac{2\pi}{3}$-rotation symmetry and the horizontal reflection symmetry of the lattice structure. Under suitable perturbations that break one of these symmetries, the Dirac points will disappear and a bandgap can be opened therein. We show that this is indeed the case 
when the obstacles in the honeycomb lattice are rotated to their centers with an angle of $\pm\eps$ (cf. Figure \ref{fig:honeycomb_perterb}) so that the horizontal reflection symmetry of the lattice structure is broken. 

In subsequent analysis, we fix $D=D(\eta_0)$ by fixing a small enough $\eta_0>0$ such that Theorem~\ref{lem:Dirac} holds for $D(\eta_0)$. Denote by $D^{\pm\eps}$ the domain obtained by rotating $D$ by an angle of $\pm\eps$ counterclockwise.
We consider the following eigenvalue problem for each $\bp\in\cB_z$:
\begin{equation}\label{eq:bandu_eps}. 
\begin{aligned}
& - \Delta  u_{\pm\eps}(\bx; \bp) - \lambda u_{\pm\eps}(\bx; \bp)=0,  \quad &\bx\in \chomo\backslash D^{\pm\eps} + \Lambda, \\
&u_{\pm\eps}(\bx; \bp)=0,  \quad  &\bx\in \partial D^{\pm\eps} + \Lambda, \\
&u_{\pm\eps}(\bx+\be; \bp)=e^{i \bp \cdot \be}u_{\pm\eps}(\bx; \bp), & \mbox{for} \; \be \in \Lambda.
\end{aligned}
\end{equation}

\begin{thm}\label{thm:band_gap}
Let the constants $t_*$ and $\gamma_*$ be defined as in \eqref{eq:Tderiv} and assume $t_*>0$.
Then the following dispersion relations hold for $\bp$ near $K$ and $\lambda$ near $\lambda_*$:
\begin{equation}\label{eq:lambda_near_K}
\begin{aligned}
\lambda_{1,\pm\eps}(\bp) &= \lambda_*  - \frac{1}{|\gamma_*|}\sqrt{\eps^2 t_*^2 + m_*^2|\gamma_*|^2 | \bp - K |^2}\,\big(1+O(\eps,| \bp - K |)\big), \\
\lambda_{2,\pm\eps}(\bp) &= \lambda_*  + \frac{1}{|\gamma_*|}\sqrt{\eps^2 t_*^2 + m_*^2 |\gamma_*|^2 | \bp - K |^2}\,\big(1+O(\eps,| \bp - K |)\big).
\end{aligned}
\end{equation}
In addition, the corresponding Bloch modes at $\bp=K$ attain the following expansions:
\begin{equation}\label{eq:bloch_modes_K}
\begin{aligned}
& u_{1,\eps}(\bx;K) = w_1 + O(\eps), \quad
u_{2,\eps}(\bx;K) = w_2 + O(\eps)  \\
& u_{1,-\eps}(\bx;K) = w_2 + O(\eps), \quad
u_{2,-\eps}(\bx;K) = w_1 + O(\eps),
\end{aligned}
\end{equation}
in which $w_1$ and $w_2$ are defined in Theorem~\ref{lem:Dirac}.
\end{thm}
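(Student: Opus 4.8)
The plan is to recast the eigenvalue problem~\eqref{eq:bandu_eps} as a characteristic-value problem for the boundary integral operator $T(\eps,\lambda,\bp)$ on $\partial D$ and then perform a Lyapunov--Schmidt reduction, combined with the generalized Rouch\'e theorem of Gohberg--Sigal theory, around the base point $(\eps,\lambda,\bp)=(0,\lambda_*,K)$. This is the same mechanism used in the proof of Theorem~\ref{lem:Dirac}, but now carrying the orientation parameter $\eps$ as an additional analytic variable.

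First I would record the structural facts inherited from the proof of Theorem~\ref{lem:Dirac}: $\lambda$ is an eigenvalue of~\eqref{eq:bandu_eps} at quasimomentum $\bp$ if and only if $T(\eps,\lambda,\bp)$ has a nontrivial kernel; the operator $T$ depends analytically on $(\eps,\lambda,\bp)$ near the base point and is Fredholm of index zero there; and the kernel of $T(0,\lambda_*,K)$ is two-dimensional, spanned by densities $\phi_1,\phi_2$ whose single-layer potentials are the Dirac modes $w_1,w_2$, with $R\phi_1=\tau\phi_1$, $R\phi_2=\bar\tau\phi_2$ and $F\phi_1=\phi_2$. A Gohberg--Sigal reduction then replaces the equation $T(\eps,\lambda,\bp)\phi=0$ by a $2\times2$ system $M(\eps,\lambda,\bp)c=0$, where $M$ is analytic, $M(0,\lambda_*,K)=0$, the entries are $M_{ij}=\langle T(\eps,\lambda,\bp)\widehat\phi_j,\psi_i\rangle$ for correctors $\widehat\phi_j$ and a cokernel basis $\psi_1,\psi_2$ carrying the same $R$-symmetry types, the eigenvalues near $\lambda_*$ are exactly the zeros of $\det M$, and a null vector $c$ of $M$ produces the corresponding Bloch mode as $c_1w_1+c_2w_2+O(\eps,|\bp-K|)$.

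The core of the argument is to pin down the leading Taylor coefficients of $M$ from symmetry. Since the $\tfrac{2\pi}{3}$-rotation $R$ about the obstacle centre commutes with the $\eps$-rotation, each $D^{\pm\eps}$ is still $R$-invariant, so $R$ intertwines $T(\eps,\lambda,K)$; because $\phi_1,\phi_2,\psi_1,\psi_2$ lie in the $\tau$- and $\bar\tau$-eigenspaces of $R$ and $\tau\neq\bar\tau$, the matrix $M(\eps,\lambda,K)$ is forced to be diagonal, while for general $\bp$ the $R$-relation links $M$ at $\bp$ and at $R\bp$, so the leading $\bp$-dependence is off-diagonal and $\mathbb C$-linear in $\bp-K$, reproducing the Dirac terms with slope $m_*$ from Theorem~\ref{lem:Dirac}. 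The horizontal reflection $F$ maps the $\eps$-problem to the $-\eps$-problem while swapping $\phi_1\leftrightarrow\phi_2$ and $\psi_1\leftrightarrow\psi_2$, which forces $M_{11}(\eps,\lambda,K)=M_{22}(-\eps,\lambda,K)$; using also $M(0,\lambda_*,K)=0$, the self-adjointness of the problem (which makes $M$ Hermitian along the real slice), and analyticity, one obtains
\begin{equation*}
M(\eps,\lambda,\bp)=\gamma_*(\lambda-\lambda_*)\,I+\begin{pmatrix} t_*\eps & g_*\,\overline{\zeta(\bp)}\\ \overline{g_*}\,\zeta(\bp) & -\,t_*\eps\end{pmatrix}+O\big(\eps^2+|\bp-K|^2+|\lambda-\lambda_*|^2\big),
\end{equation*}
where $\gamma_*=\partial_\lambda M_{11}$ and $t_*=\partial_\eps M_{11}$ are the constants of~\eqref{eq:Tderiv}, $\zeta$ is a $\mathbb C$-linear function of $\bp-K$ with $|\zeta(\bp)|=|\bp-K|\,(1+O(|\bp-K|))$, and $|g_*|=m_*|\gamma_*|$ by consistency with Theorem~\ref{lem:Dirac}. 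Solving $\det M=0$ with $\gamma_*,t_*,m_*$ real yields~\eqref{eq:lambda_near_K}. At $\bp=K$ the matrix is $\gamma_*(\lambda-\lambda_*)I+\operatorname{diag}(\pm t_*\eps,\mp t_*\eps)$, whose null direction at the lower band edge $\lambda_{1,\pm\eps}$ is $(1,0)$ for $+\eps$ and $(0,1)$ for $-\eps$ once the sign conventions in~\eqref{eq:Tderiv} are fixed so that $\gamma_*>0$; pulling back through $c\mapsto c_1w_1+c_2w_2$ gives~\eqref{eq:bloch_modes_K}. The interchange of $w_1$ and $w_2$ under $\eps\mapsto-\eps$ is exactly the sign flip of the $\eps$-linear diagonal term enforced by $F$, i.e.\ the analytic face of the band inversion.

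The step I expect to be the main obstacle is establishing this symmetry-dictated normal form rigorously: one has to check that the Gohberg--Sigal reduction is valid uniformly on a full neighbourhood of the base point in the two extra real parameters $(\eps,\bp)$ --- so that $\det M=0$ accounts for all eigenvalues near $\lambda_*$ rather than only along a curve --- and, more delicately, compute the $\eps$-derivative of $T$ coming from rotating the obstacle, verifying both that it is diagonal with opposite-sign diagonal entries and that it is compatible with a normalization making $t_*>0$ a genuine hypothesis. The off-diagonal Dirac structure, the joint analyticity, and the Fredholm properties are essentially inherited from the proof of Theorem~\ref{lem:Dirac}, so the genuinely new input is the dependence on $\eps$.
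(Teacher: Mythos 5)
This is essentially the same route the paper takes: Theorem~\ref{thm:band_gap} is proved via Proposition~\ref{lem:pmlamgen}, whose proof does exactly the Lyapunov--Schmidt reduction you describe, feeds in the symmetry-determined derivatives of $T(\eps,\lambda,\bp)$ from Proposition~\ref{lem:Tderiv} (established in Appendix~\ref{sec:Tderiv} using the same $R$- and $F$-symmetry arguments you sketch), and solves $\det\cM=0$ for the reduced $2\times2$ matrix \eqref{eq:dispM}, with Gohberg--Sigal controlling the multiplicity. Your inline derivation of the Hermitian normal form simply folds the content of Proposition~\ref{lem:Tderiv} directly into the reduction rather than factoring it out as a separate lemma.
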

Theorem \eqref{thm:band_gap} can be concluded from Proposition~\ref{lem:pmlamgen} with $\ell=0$.
From the symmetry of the lattice,  similar expansions hold for $\bp$ near $K'$ and $\lambda$ near $\lambda_*$.
In view of \eqref{eq:lambda_near_K}, when $\eps\neq 0$, there holds $\lambda_{1,\pm\eps}(\bp)<\lambda_{2,\pm\eps}(\bp)$ for $\bp$ near $K$, thus a spectral gap is opened. The expansion \eqref{eq:bloch_modes_K} demonstrates the swap of the eigenspace at $\bp=K$ when the obstacles are rotated with the opposite rotation parameter $\pm\eps$.

\begin{remark}
 The assumption that $t_* \neq 0$ can be verified numerically for the structure considered in this work.
\end{remark}

\subsection{Interface modes for the joint photonic structure along a zigzag interface}\label{sec:geometryz} 
We investigate interface modes for the joint photonic structure with the zigzag interface shown in Figure \ref{fig:honeycomb_joint_zigzag}. The obstacles are rotated with an angle of $-\eps$ and $\eps$ about the origin respectively for the semi-infinite honeycomb lattice on the left and right side of the interface.   

Note that the direction of the interface is parallel to $\be_2$. Employing the Floquet theory along $\be_2$, we can restrict our studies to the infinite strip $\displaystyle{\Omega^J:=\cup_{m\in\mathbb Z}\,(\cC_z +m\be_1})$, which is a fundamental period of the joint photonic structure along the interface direction. Inside the strip $\displaystyle{\Omega^J}$, the region occupied by the inclusions is denoted by
\begin{equation*}
D^{J,\eps} := \left(\cup_{m\geq0} (D^{\eps}+m\be_1)\right) \cup \left(\cup_{m<0} (D^{-\eps}+m\be_1)\right),
\end{equation*}
and the region exterior to the inclusions is denoted by
$\Omega^{J,\eps} := \Omega^J\backslash\overline{D^{J,\eps}}$.
We also denote the lower boundary of the infinite strip $\displaystyle{\Omega^J}$ by $\Gamma_-:=\{-\frac{1}{2}\be_2+\ell\be_1, \ell\in\mathbb R\}$, then the upper boundary of the strip is $\Gamma_+=\be_2+\Gamma_-$. The normal direction on $\Gamma_\pm$ is $\nuGb=(\frac{1}{2},\frac{\sqrt3}{2})$.

\begin{figure}[!htbp]
\begin{center}
\includegraphics[height=7cm]{./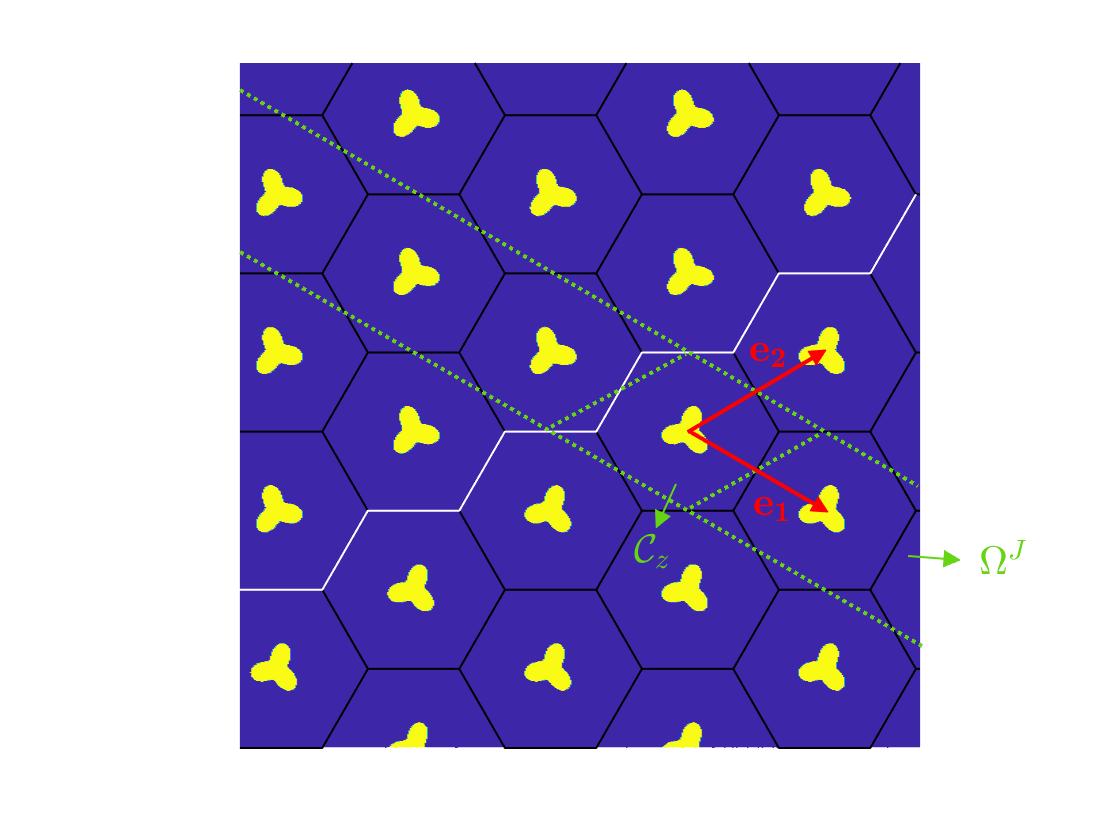}
\caption{Joined photonic structure with a zigzag interface.}
\label{fig:honeycomb_joint_zigzag}
\end{center}
\end{figure}

An interface mode $u\in L^2(\Omega^{J,\eps})$ for the joint photonic structure solves the following spectral problem:
\begin{equation}\label{eq:spectral_prob_zigzag}
\begin{aligned}
-\Delta u -\lambda u &= 0, \quad&\text{in } \Omega^{J,\eps}, \\
u&=0, \quad&\text{ on } \partial D^{J,\eps}, \\
u(\bx+\be_2) &= e^{\im \kp}u(\bx), \quad& \bx \in \Gamma_- \\
\partial_{\nuGb} u(\bx+ \be_2) &= e^{\im \kp}\partial_{\nuGb}u(\bx), \quad& \bx \in \Gamma_-.
\end{aligned}
\end{equation}
In the above, $\kp\in (0, 2\pi)$ is the quasi-momentum of the interface modes along the interface, and $\partial_{\nuGb}$ is normal derivative to $\Gamma_{-}$.

We first focus on interface modes with the quasi-momentum $\kp^* = \Kone\cdot\be_2 = \frac{4\pi}{3}$ by projecting the Bloch wave vector $K$ onto the direction of the interface $\be_2$. Namely, we investigate the interface modes bifurcated from the Dirac point $(K,\lambda^*)$.
The interface modes with other quasi-momenta will be discussed in Section \ref{sec:dispersion_edge_modes}.
To this end, we introduce the following function space 
\begin{equation}\label{eq:edgespace}
\begin{aligned}
\cH^{J,\eps}:=\big\{&u\in H^1(\Omega^{J,\eps}): \Delta u\in L^2(\Omega^{J,\eps}), \quad u=0 \text{ on } \partial D^{J,\eps},\\
&u(\bx+ \be_2)= e^{\im \kp^*}u(\bx) \text{ for }\bx\in\Gamma_-, \quad \partial_{\nuGb} u(\bx+ \be_2) = e^{\im \kp^*}\partial_{\nuGb} u(\bx) \text{ for }\bx\in\Gamma_-\big\}.
\end{aligned}
\end{equation}
Then an interface mode $u\in L^2(\Omega^{J,\eps})$ satisfies
\begin{equation}\label{eq:edgedef}
-\Delta u -\lambda u = 0 \quad\text{in } \Omega^{J,\eps} \; \mbox{and} \; u\in \cH^{J,\eps}.
\end{equation}

\begin{assumption}[The no-fold condition along the direction $\bbeta$]\label{lem:assNoFold}
Let $\bbeta\in\mathbb R^2$ be a fixed Bloch wave vector and $\lambda_*$ be the energy of the Dirac point at $\Kone$ and $\Ktwo$ introduced in Theorem~\ref{lem:Dirac}. For $\bp \in \{\Kone+\ell\bbeta$, $\ell\in \mathbb R\}$, the band energy of \eqref{eq:bandu_zigzag}
takes the value $\lambda_* $ only when 
$\bp\in (\Kone+\Lambda^*)\cup(\Ktwo+\Lambda^*)$. 
\end{assumption}

\begin{figure}[!htbp]
\begin{center}
\includegraphics[width=8cm]{./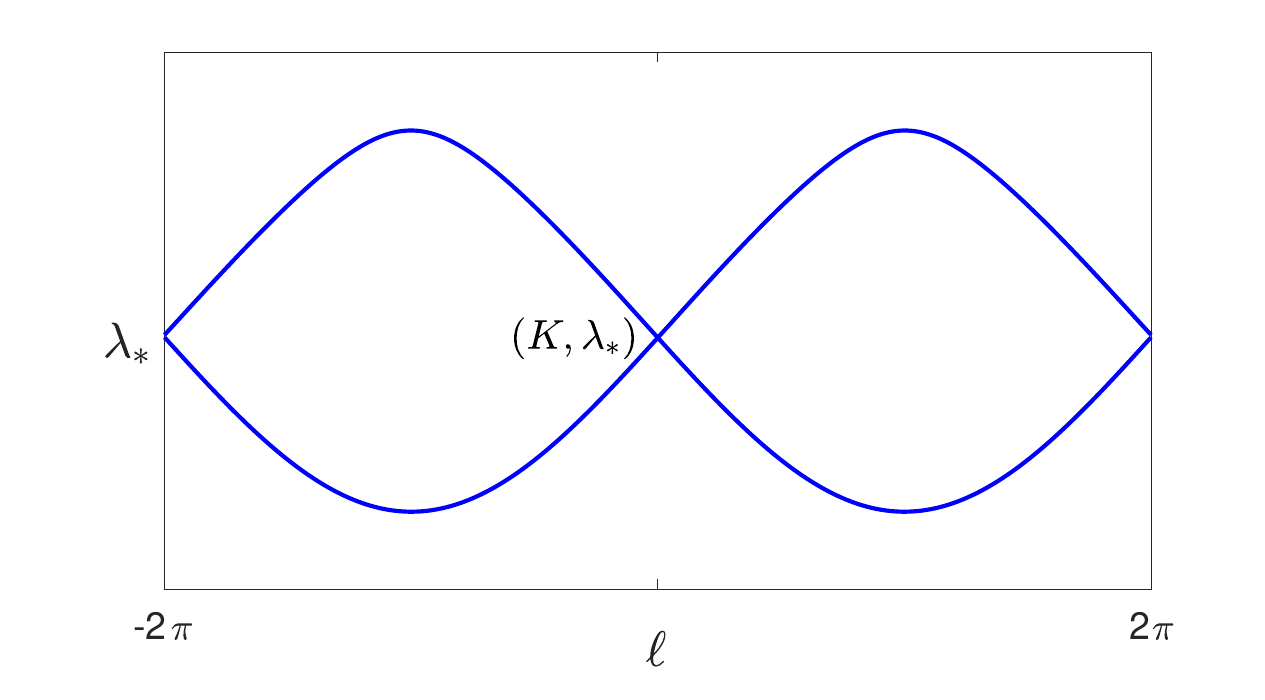}
\includegraphics[width=8cm]{./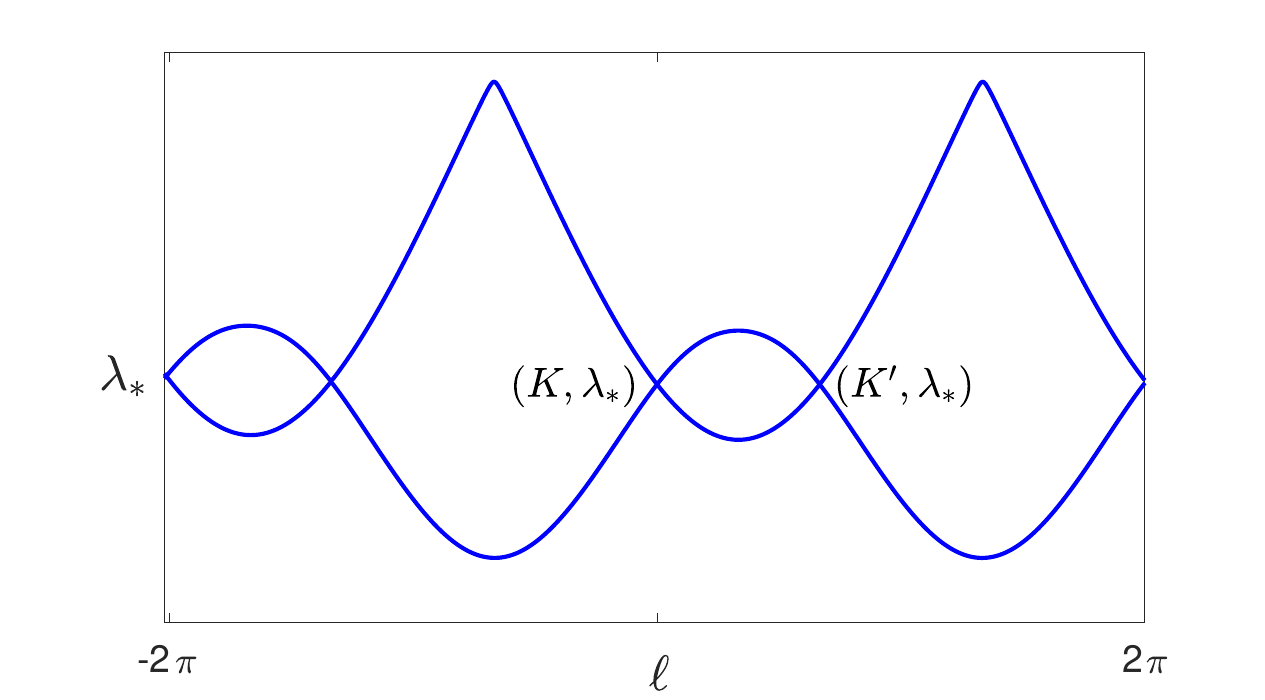}
\caption{The band structure of the spectral problem \eqref{eq:bandu_zigzag} for $\bp \in \{\Kone+\ell\bbeta$, $\ell\in [-2\pi, 2\pi]\}$. Left: $\bbeta=\bbeta_1:=(\frac{1}{\sqrt3},-1)^T$; Right: $\bbeta=\bbeta_1^a:=(0, -2)^T$. }
\label{fig:no_fold}
\end{center}
\end{figure}

\begin{remark}
     The above no-fold condition holds for the configuration of the periodic structures considered in this work. Indeed, the first two bands of the spectral problem \eqref{eq:bandu_zigzag} touch at the  Dirac point $(K,\lambda_*)$. Moreover, the energy  $\lambda_*$ is the maximum of the eigenvalues for the first band and the minimum of the eigenvalues for the second band.
     This can be rigorously proved when the inclusion size $\eta$ is small by using the layer potential technique and asymptotic analysis. The general case, when $\eta$ is not necessarily small, is beyond the scope of this work. Instead, we demonstrate numerically the no-fold conditions in Figure \ref{fig:no_fold} that are used in Theorems \ref{lem:edge} and \ref{lem:edgearm}, wherein $\bbeta=\bbeta_1$ and $\bbeta_1^a$ respectively.
\end{remark}

\begin{thm}\label{lem:edge} 
Let Assumption~\ref{lem:assNoFold} hold along the reciprocal lattice vector $\bbeta_1$.  Let $t_*$ and $\gamma_*$ be the two constants defined in \eqref{eq:Tderiv} and assume that $t_*\neq0$. Let $\mathfrak d$ be an arbitrary constant in $(0,1)$.
For sufficiently small positive $\eps$, there exists a unique interface mode $u\in L^2(\Omega^{J,\eps})$ satisfying \eqref{eq:edgedef} with the corresponding eigenvalue $\lambda\in (\lambda_* - \hrad\eps, \lambda_* + \hrad\eps)$.
In addition, the interface mode $u$ decays exponentially as $|\bx\cdot\be_1|\to \infty$.
\end{thm}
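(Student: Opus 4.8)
The plan is to recast the eigenvalue problem \eqref{eq:edgedef} as a boundary integral equation on the interface $\Gamma$, and then to count the characteristic values of the resulting operator-valued function near $\lambda_*$ by a finite-dimensional reduction together with the generalized Rouch\'e theorem of Gohberg--Sigal. By Theorem~\ref{thm:band_gap}, for $\eps>0$ small the two perturbed lattices $D^{\pm\eps}+\Lambda$ share a common spectral gap of width of order $\eps$, essentially centered at $\lambda_*$, namely $(\lambda_*-|t_*/\gamma_*|\eps(1+O(\eps)),\ \lambda_*+|t_*/\gamma_*|\eps(1+O(\eps)))$, and for $\lambda$ in this gap the $\be_2$-quasiperiodic (with phase $e^{\im\kp^*}$) Green functions $G^{\pm\eps}(\bx,\by;\lambda)$ of those periodic problems exist and decay exponentially as $|(\bx-\by)\cdot\be_1|\to\infty$. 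First I would apply Green's representation formula on each side of $\Gamma$ (inside $\Omega^{J,\eps}$ the inclusions coincide with $D^{\eps}+\Lambda$ for $m\ge0$ and with $D^{-\eps}+\Lambda$ for $m<0$): it shows that any $u$ solving \eqref{eq:edgedef} is obtained on the side $m\ge0$ from single- and double-layer potentials over $\Gamma$ with kernel $G^{\eps}$ applied to its Cauchy data $(u|_\Gamma,\partial_n u|_\Gamma)$, and on the side $m<0$ from the analogous potentials with kernel $G^{-\eps}$, the boundary terms on the inclusions, on $\Gamma_\pm$, and in the limit $|\bx\cdot\be_1|\to\infty$ all dropping out; conversely every such representation produces an element of $\cH^{J,\eps}$ decaying exponentially in $|\bx\cdot\be_1|$. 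Imposing continuity of $u$ and $\partial_n u$ across $\Gamma$ (legitimate since $\Gamma$ lies in the exterior region) and using the jump relations converts \eqref{eq:edgedef} into a homogeneous system $\mathbb T^\eps(\lambda)\phi=0$ for a $2\times2$ matrix of boundary integral operators on $\cH^{1/2}(\Gamma)\times\cH^{-1/2}(\Gamma)$, analytic in $\lambda$ throughout the gap. Thus interface modes at energy $\lambda$ correspond, with multiplicity, to the characteristic values of $\lambda\mapsto\mathbb T^\eps(\lambda)$; since the joint structure is invariant under the map exchanging its two sides, $G^{-\eps}$ can be expressed through $G^{\eps}$ and the remaining analysis involves only $G^{\eps}$.

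Next I would study the limit $\eps\to0$. The kernel $G^{\eps}$ degenerates to the $\kp^*$-quasiperiodic strip Green function $G^{0}(\cdot,\cdot;\lambda)$, which is singular at $\lambda=\lambda_*$ because $\lambda_*$ lies on the spectrum of the strip problem at the fiber $\kp^*$. By the no-fold Assumption~\ref{lem:assNoFold} along $\bbeta_1$, the only Bloch mode of \eqref{eq:bandu_zigzag} on that fiber carrying energy $\lambda_*$ is the one at $\bp=\Kone$, whose eigenspace is the two-dimensional $\mathrm{span}\{w_1,w_2\}$ of Theorem~\ref{lem:Dirac}. Hence, after restriction to $\Gamma\times\Gamma$, the singular part of $G^{\eps}$ near $\lambda_*$ is a finite-rank (at most rank two) operator built from the Cauchy data $\vec v_1,\vec v_2$ on $\Gamma$ --- equivalently from $\mathfrak u_1=\vec v_1+\im\vec v_2$ and $\mathfrak u_2=\vec v_1-\im\vec v_2$ --- whose coefficients carry the transverse decay rate $\kappa^\eps(\lambda)=\frac{1}{m_*|\gamma_*|}\sqrt{t_*^2\eps^2-\gamma_*^2(\lambda-\lambda_*)^2}$ dictated by the dispersion \eqref{eq:lambda_near_K}, while the remainder converges to the operator $\tilde{\mathbb T}^0(\lambda_*)$ with the regular kernel $\tilde G^0$. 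One then shows that $\tilde{\mathbb T}^0(\lambda_*)$ is Fredholm of index $0$ with kernel and cokernel both equal to $\mathrm{span}\{\mathfrak u_1,\mathfrak u_2\}$; this is the step where the Dirac structure and the symmetry relations $Rw_1=\tau w_1$, $Rw_2=\overline\tau w_2$, $w_2=\rflc w_1$ of Theorem~\ref{lem:Dirac} are used.

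Finally I would carry out the reduction and count zeros. Writing $\mathbb T^\eps(\lambda)$ as a perturbation, in the two small quantities $\eps$ and $\lambda-\lambda_*$, of the limiting configuration, a Schur complement (Lyapunov--Schmidt) reduction onto $\mathrm{span}\{\mathfrak u_1,\mathfrak u_2\}$ yields a $2\times2$ analytic matrix $M^\eps(\lambda)$ whose zeros of $\det M^\eps$ in the gap coincide, with multiplicity, with the characteristic values of $\mathbb T^\eps$ there, by the generalized Rouch\'e theorem. The leading terms of $M^\eps(\lambda)$ are supplied by the constants in \eqref{eq:Tderiv} --- the $\lambda$-derivative of the $\eps=0$ operator contributing $\gamma_*$ and the $\eps$-derivative contributing $t_*$ --- together with the Dirac slope $m_*$; a direct computation then shows that, for $\eps>0$ small, $\det M^\eps$ has exactly one zero $\lambda^\eps$ in the gap, that it is simple, and that $|\lambda^\eps-\lambda_*|=o(\eps)$, so that $\lambda^\eps\in(\lambda_*-\mathfrak d|t_*/\gamma_*|\eps,\ \lambda_*+\mathfrak d|t_*/\gamma_*|\eps)$ for any prescribed $\mathfrak d\in(0,1)$ once $\eps$ is sufficiently small; the nondegeneracy $t_*\neq0$ is precisely what makes this zero simple and strictly interior to the gap. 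By the first step, $\lambda^\eps$ then carries a one-dimensional space of interface modes, each lying in $L^2(\Omega^{J,\eps})$ and decaying exponentially in $|\bx\cdot\be_1|$ because $\lambda^\eps$ is strictly inside the common gap of $D^{\pm\eps}$; and since $\tilde{\mathbb T}^0(\lambda)$ is invertible on the complement of $\mathrm{span}\{\mathfrak u_1,\mathfrak u_2\}$ for $\lambda$ near $\lambda_*$, the $2\times2$ reduction accounts for all characteristic values in that range, which gives uniqueness.

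I expect the main obstacle to be the conjunction of the second step with the determinant computation of the third: isolating the exact finite-rank singular part of the strip Green function at the Dirac energy --- where Assumption~\ref{lem:assNoFold} is indispensable, as it excludes any further spectral contribution on the fiber $\kp^*$ --- then proving that the regularized limiting operator $\tilde{\mathbb T}^0(\lambda_*)$ has kernel \emph{exactly} $\mathrm{span}\{\mathfrak u_1,\mathfrak u_2\}$ so that the reduction is genuinely two-dimensional, and finally extracting the leading behaviour of $\det M^\eps$ precisely enough, and with the correct sign, to certify a single simple zero strictly between the two band edges.
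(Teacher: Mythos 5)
Your overall strategy — representation formula on each side of $\Gamma$, reduction of the boundary‐integral system to a finite‐dimensional piece supported on the Cauchy data of the two Dirac modes, and a Gohberg--Sigal count of characteristic values — matches the paper's, but there is a genuine gap in the zero‑counting step that the paper has to work around with a more elaborate bookkeeping than a single $2\times 2$ reduction.

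The flaw is the claim that ``a homogeneous system $\mathbb T^\eps(\lambda)\phi=0$'' captures the interface‑mode problem and that its $2\times 2$ Schur complement $M^\eps(\lambda)$ has exactly one simple zero. When you write the continuity of $u$ and $\partial_n u$ using the Green representations from each side, you get \emph{two} coupled operator systems (the paper's $\mathbb T_s^\eps\vec\phi=0$ from continuity and $\mathbb T_n^\eps\vec\phi=0$, obtained by adding/subtracting the half‑plane identities), and conversely an arbitrary density $(\psi,\phi)$ feeding the layer representation produces a genuine mode only if additionally $\mathbb T_t^\eps(\psi,\phi)\neq0$; the density and the Cauchy data of the resulting $u$ are generally distinct objects. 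If you take $\mathbb T^\eps$ to be the natural ``matching'' operator $\mathbb T_s^\eps = \mathbb T^\eps+\mathbb T^{-\eps}$, the reduction onto $\mathrm{span}\{\vec v_1,\vec v_2\}$ is (in the eigenbasis of the finite‑rank part) diagonal $\mathrm{diag}(2\im\alpha_*\beta(h),-2\im\alpha_*\beta(h))+o(1)$, whose determinant vanishes to \emph{second} order at $h=0$; the Gohberg--Sigal count gives multiplicity $2$, not $1$. These two characteristic values do not correspond to two interface modes, because some root vectors may generate the zero field through the layer ansatz. Resolving this over‑count is exactly what the paper's operators $\mathbb T_t^\eps$ and $\mathbb T_n^\eps$ (Lemma 6.1) are for: by showing (Proposition 7.10) that the simultaneous systems $\{\mathbb T_s^\eps=0,\,\mathbb T_t^\eps=0\}$ and $\{\mathbb T_s^\eps=0,\,\mathbb T_n^\eps=0\}$ each admit at most one solution in the strip $|h|<\mathfrak d|t_*/\gamma_*|$, one concludes that of the two characteristic vectors of $\mathbb T_s^\eps$ at least one yields a nontrivial mode (otherwise both would solve $\{\mathbb T_s^\eps=0,\,\mathbb T_t^\eps=0\}$) and at most one does (otherwise two would solve $\{\mathbb T_s^\eps=0,\,\mathbb T_n^\eps=0\}$). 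Your proposal skips this entirely.

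Concretely, to repair the plan you would either (i) prove a Proposition like 7.10 by projecting onto the complementary one‑dimensional space $X_2=\mathrm{span}\{\mathfrak u_2\}$ (respectively $X_1$) after eliminating $X_2\oplus Y$ (respectively $X_1\oplus Y$) via the nondegenerate part of $\mathbb U_n$ (respectively $\mathbb U_t$), reducing to a scalar equation $2\im\alpha_*\beta(h)+r(h,\eps)=0$ to which ordinary Rouch\'e applies; or (ii) show directly that the triviality of the generated mode is equivalent to membership of the root vector in a one‑dimensional subspace and track this through the perturbation. Without one of these, ``$\det M^\eps$ has exactly one simple zero'' does not follow from the structure you set up, and neither existence nor uniqueness of the interface mode is established. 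The rest of your outline — the no‑fold assumption pinning the propagating content to $\mathrm{span}\{w_1,w_2\}$, the Fredholmness and kernel identification for $\tilde{\mathbb T}^0(\lambda_*)$, the use of \eqref{eq:Tderiv} and $m_*$ to control leading terms, and the conclusion that the energy sits in the stated $O(\eps)$ window — are all consistent with what the paper does.
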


Define the quasi-momentum ${\kp^{*}}' = \Ktwo\cdot\be_2 = -\frac{4\pi}{3}$.
By the time-reversal symmetry of the differential operator, the following corollary is a direct consequence of Theorem~\ref{lem:edge}.
\begin{coro}
\label{lem:edgetwo} 
Under the same assumptions as in Theorem~\ref{lem:edge}, for sufficiently small positive $\eps$, there exists a unique interface mode  $u\in L^2(\Omega^{J,\eps})$ satisfying \eqref{eq:spectral_prob_zigzag} with ${\kp^{*}}' = - \frac{4\pi}{3}$
and the eigenvalue $\lambda\in (\lambda_* - \hrad\eps, \lambda_* + \hrad\eps)$. Furthermore, $u$ decays exponentially as $|\bx\cdot\be_1|\to \infty$.
\end{coro}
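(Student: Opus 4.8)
\medskip\noindent
The plan is to derive Corollary~\ref{lem:edgetwo} directly from Theorem~\ref{lem:edge} by exploiting the time-reversal symmetry of the Helmholtz operator, i.e., the invariance of $-\Delta$ under complex conjugation of functions. The mechanism is that ${\kp^{*}}' = -\frac{4\pi}{3} \equiv -\kp^* \pmod{2\pi}$, so conjugation intertwines the spectral problem \eqref{eq:spectral_prob_zigzag} at quasi-momentum $\kp^*$ with the same problem at ${\kp^{*}}'$. This is consistent with the internal logic of the paper: in Corollary~\ref{lem:Ktwo} the eigenspace at $\Ktwo$ is obtained from that at $\Kone$ by conjugation, $w_i' = \bar w_{3-i}$.

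First I would record that the eigenvalue $\lambda$ produced by Theorem~\ref{lem:edge} is real. Indeed, $\lambda$ lies in the real interval $(\lambda_* - \hrad\eps,\, \lambda_* + \hrad\eps)$ with $\lambda_*$ real, and, more fundamentally, $-\Delta$ acting on functions in $\Omega^{J,\eps}$ with Dirichlet conditions on $\partial D^{J,\eps}$ and $\kp$-quasiperiodic conditions along $\be_2$ is self-adjoint, so its point spectrum is real. Let $u$ be the interface mode from Theorem~\ref{lem:edge}, with quasi-momentum $\kp^*$ and eigenvalue $\lambda$, and set $u' := \bar u$. Since $\Delta$ has real coefficients and $\lambda\in\mathbb{R}$, $u'$ solves $-\Delta u' - \lambda u' = 0$ in $\Omega^{J,\eps}$; since conjugation acts only on function values and leaves the geometric set $D^{J,\eps}$ unchanged, $u' = 0$ on $\partial D^{J,\eps}$. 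Conjugating the two Floquet relations for $u$ on $\Gamma_-$ gives
\[
u'(\bx + \be_2) = e^{-\im\kp^*} u'(\bx) = e^{\im{\kp^{*}}'} u'(\bx), \qquad \partial_{\nuGb} u'(\bx + \be_2) = e^{\im{\kp^{*}}'}\partial_{\nuGb} u'(\bx), \quad \bx\in\Gamma_-.
\]
Finally $\|u'\|_{L^2(\Omega^{J,\eps})} = \|u\|_{L^2(\Omega^{J,\eps})}$ and $|u'| = |u|$ decays exponentially as $|\bx\cdot\be_1|\to\infty$. Hence $u'$ is an interface mode for \eqref{eq:spectral_prob_zigzag} with $\kp = {\kp^{*}}'$ and eigenvalue $\lambda$ in the prescribed interval, which yields existence.

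For uniqueness I would run the argument in reverse. If $\tilde u\in L^2(\Omega^{J,\eps})$ is any interface mode for \eqref{eq:spectral_prob_zigzag} with $\kp = {\kp^{*}}'$ and eigenvalue $\tilde\lambda \in (\lambda_* - \hrad\eps,\, \lambda_* + \hrad\eps)$, then $\tilde\lambda$ is real and $\overline{\tilde u}$ satisfies \eqref{eq:edgedef} with eigenvalue $\tilde\lambda$ in the same interval; the uniqueness assertion of Theorem~\ref{lem:edge} then forces $\tilde\lambda = \lambda$ and $\overline{\tilde u}$ to be a scalar multiple of $u$, so $\tilde u$ is a scalar multiple of $u'$.

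I do not expect a genuine obstacle: the argument is essentially bookkeeping. The only points that need care are (i) confirming the eigenvalue is real, so that conjugation preserves the PDE, which follows from self-adjointness; (ii) observing that conjugation does not alter the underlying geometry, since the rotated obstacles $D^{\pm\eps}$ are fixed real subsets of the plane; and (iii) matching sign and modulus conventions for the quasi-momentum, namely ${\kp^{*}}' = -\kp^* \pmod{2\pi}$, equivalently ${\kp^{*}}' \equiv \frac{2\pi}{3}\in(0,2\pi)$, so that the exponential factors transform as stated.
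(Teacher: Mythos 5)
Your proposal is correct and is exactly the argument the paper intends by invoking ``time-reversal symmetry'' without spelling it out: complex conjugation maps the real-eigenvalue quasi-periodic eigenvalue problem at $\kp^*$ to the one at $-\kp^* \equiv \kp^{*'}$ on the identical domain, carrying the bound state, the eigenvalue, and uniqueness along. Your extra care about the realness of $\lambda$ (which in any case follows trivially from $\lambda$ lying in a real interval) and the invariance of the geometry under conjugation is sound and fills in what the paper leaves implicit.
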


\begin{remark}
Numerical experiment demonstrates that the interface mode persists for $\eps$ not small. This will be analyzed rigorously in the future work.   
\end{remark}

\subsection{Interface modes along an armchair interface}
We consider interface modes for the joint photonic structure with an armchair interface as shown in Figure \ref{fig:honeycomb_joint_armchair}.  The inclusions above the interface are rotated to their centers with an angle of $-\eps$, while the ones below are roated with an angle of $\eps$. Note that the direction of the interface is along the $x_1$ axis, we rewrite the honeycomb lattice  equivalently as $$ \Lambda := \bbZ \be_1^a \oplus  \bbZ \be_2^a := \{ \ell_1 \be_1^a + \ell_2 \be_2^a :  \ell_1, \ell_2 \in \bbZ \}, $$
in which the lattice vectors are given by
\begin{equation}
\be_1^a=\be_1=(\frac{\sqrt3}{2},-\frac{1}{2})^T, \quad \be_2^a:=\be_1+\be_2=(\sqrt{3},0)^T.
\end{equation}
Correspondingly, the fundamental periodic cell is
\begin{equation}
\cC_a := \{ \ell_1 \be_1^a + \ell_2 \be_2^a :  \ell_1, \ell_2\in[-1/2,1/2) \},
\end{equation}
and the reciprocal lattice vectors are 
\begin{equation*}
\bbeta_1^a=(0,-2)^T, \quad \bbeta_2^a=(\frac{1}{\sqrt3},1)^T.
\end{equation*}
We assume that the inclusions $D$ and $D^{\pm\eps}$ are strictly included in the cell $\cC_a$. Similar to the zigzag interface, we introduce the infinite-strip domain $\displaystyle{\Omega_a^J:=\cup_{m\in\mathbb Z}\,(\cC_a +m\be_1^a})$ as the fundamental period for the joint photonic structure, which consists of the inclusions $D_a^{J,\eps}:= \left(\cup_{m\geq0} (D^{\eps}+m\be_1^a)\right) \cup \left(\cup_{m<0} (D^{-\eps}+m\be_1^a)\right)$ and their complement $\Omega_a^{J,\eps}:= \Omega_a^J\backslash D_a^{J,\eps}$. 

\begin{figure}[!htbp]
\begin{center}
\includegraphics[height=7cm]{./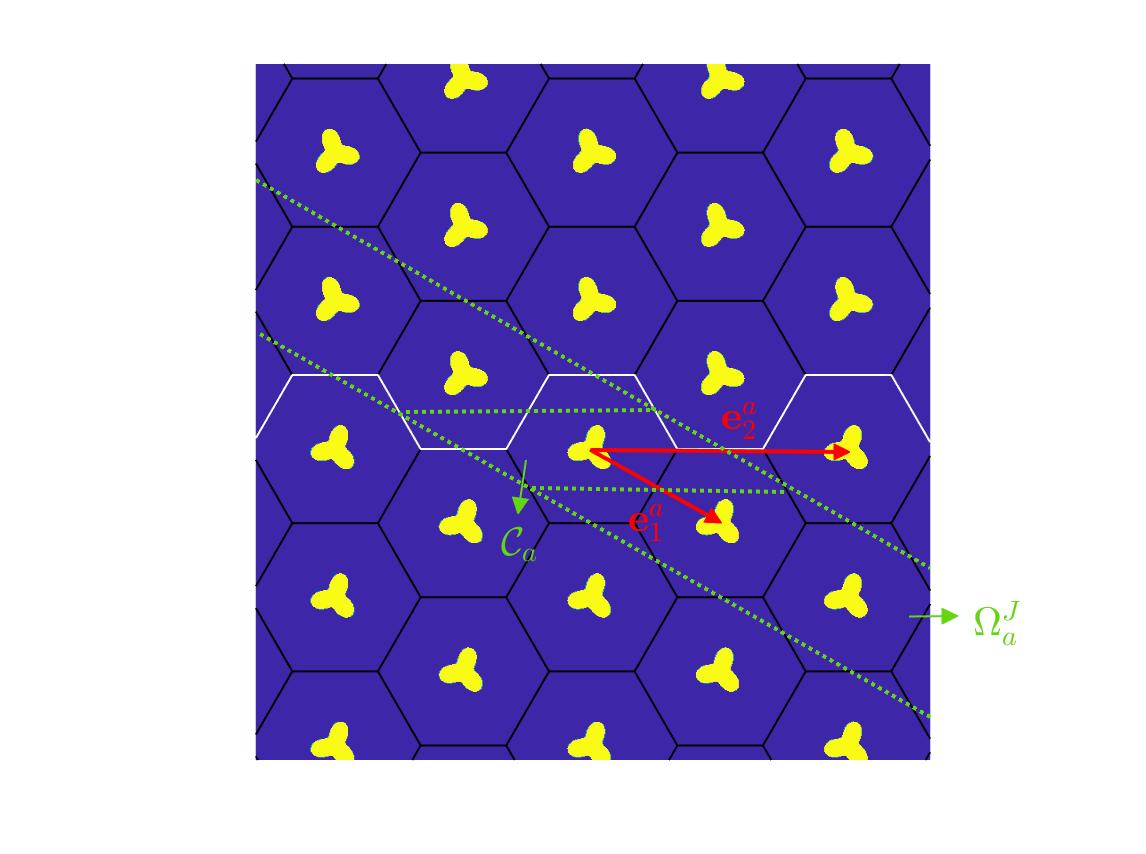}
\caption{Join photonic structure with an armchair interface.}
\label{fig:honeycomb_joint_armchair}
\end{center}
\end{figure}

We now investigate the interface modes bifurcated from the Dirac point $(K,\lambda^*)$ that propagate along the interface direction $\be_2^a$ with the quasi-momentum $\kp^{*,a} = \Kone\cdot\be_2^a = 2\pi$. Correspondingly, we define the function space
\begin{equation}\label{eq:edgespace_armchair}
\begin{aligned}
\cH_a^{J,\eps}:=\big\{&u\in H^1(\Omega_a^{J,\eps}): \Delta u\in L^2(\Omega_a^{J,\eps}), \quad u=0 \text{ on } \partial D_a^{J,\eps},\\
&u(\bx+ \be_2)= e^{\im \kp^{*,a}}u(\bx) \text{ for }\bx\in\Gamma_-^a, \quad \partial_{\nuGb} u(\bx+ \be_2^a) = e^{\im \kp^{*,a}}\partial_{\nuGb} u(\bx) \text{ for }\bx\in\Gamma_-^a\big\}.
\end{aligned}
\end{equation}
Here $\Gamma_-^a:=\{-\frac{1}{2}\be_2^a + \ell\be_1^a, \ell\in\mathbb R\}$ is the lower boundary of the strip $\Omega_a^J$.
Then an interface mode with the quasimomentum $\kp^{*,a} = 2\pi$ and energy $\lambda$ solves
\begin{equation}\label{eq:edgedef_armchair}
-\Delta u -\lambda u = 0 \quad\text{in } \Omega_a^{J,\eps} \; \mbox{for} \; u\in \cH_a^{J,\eps}.
\end{equation}

Note that the quasi-momenta $\bp$ satisfying $\bp\cdot \be_2^a = \Kone\cdot\be_2^a$ lies on the line $\bp(\ell) = \Kone + \ell\bbeta_1^a$ for $\ell\in \mathbb{R}$. We have the following results similar to Theorem~\ref{lem:edge} for the spectral problem \eqref{eq:edgedef_armchair}. 

\begin{thm}\label{lem:edgearm} 
Let Assumption~\ref{lem:assNoFold} hold along the reciprocal lattice vector $\bbeta_1^a$. 
Let $t_*$ and $\gamma_*$ be the two constants defined in \eqref{eq:Tderiv} and assume that $t_*\neq0$. Let $\mathfrak d$ be an arbitrary constant in $(0,1)$. 
For sufficiently small positive $\eps$, there exist exactly two interface modes with $\kp^{*,a} = 2\pi$, with the corresponding eigenvalues $\lambda_\pm \in (\lambda_* - \hrad\eps, \lambda_* + \hrad\eps)$. 
In addition, both interface modes decay exponentially as $|\bx\cdot\be_1^a|\to \infty$.
\end{thm}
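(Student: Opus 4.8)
\medskip
\noindent\textbf{Sketch of the proof.}
The plan is to run the boundary-integral / Gohberg--Sigal argument already used for the zigzag interface in Theorem~\ref{lem:edge}, the single essential new feature being that for the armchair interface the line of transverse quasi-momenta $\bp(\ell)=\Kone+\ell\bbeta_1^a$, projected to the Brillouin zone, passes through \emph{both} Dirac points: one checks $\Ktwo\equiv\Kone+\frac{2\pi}{3}\bbeta_1^a$ modulo $\Lambda^*$. Thus two Dirac points are folded onto the interface quasi-momentum $\kp^{*,a}=2\pi$, and the limiting boundary operator carries twice the multiplicity it has in the zigzag case, which is the source of the ``exactly two'' in the statement. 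First I would establish the common bulk gap: the $\be_2^a$-quasiperiodic bulk operators ($-\Delta$, Dirichlet on $D^{\pm\eps}+\Lambda$, Floquet phase $\kp^{*,a}$) have spectrum $\{\lambda_{n,\pm\eps}(\Kone+\ell\bbeta_1^a):\ell\in\mathbb R\}$, which by Assumption~\ref{lem:assNoFold} along $\bbeta_1^a$ reaches $\lambda_*$ only near $\Kone$ and $\Ktwo$; there Theorem~\ref{thm:band_gap} (applied at $\Kone$ and, by symmetry, at $\Ktwo$) shows a gap of width $\gtrsim\eps$ opens about $\lambda_*$, while elsewhere $\lambda_*$ is not attained (using that $\lambda_*$ is the maximum of the first band and the minimum of the second). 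Consequently, for $\lambda$ in a fixed small complex neighbourhood of $\lambda_*$, the quasiperiodic Green functions $G^{\pm\eps}(\bx,\by;\lambda)$ exist and decay exponentially in the direction transverse to $\be_2^a$, i.e.\ as $|\bx\cdot\be_1^a|\to\infty$.

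Next I would represent a candidate interface mode $u$ by its Cauchy data $\big(u|_{\Gamma^a},\,\partial_n u|_{\Gamma^a}\big)$ on $\Gamma^a$, continuing it above $\Gamma^a$ with $G^{-\eps}$ and below with $G^{+\eps}$; since $\Gamma^a$ is disjoint from every $\partial D^{\pm\eps}$ and from $\Gamma_\pm^a$, the layer potentials are classical, and imposing continuity of $u$ and of $\partial_n u$ across $\Gamma^a$ produces a holomorphic, index-zero Fredholm operator pencil $\lambda\mapsto\mathbb{T}_a^\eps(\lambda)$ on $\HhalfGa\times\HmhalfGa$ that fails to be invertible precisely when $\lambda$ is the energy of an $L^2$ solution of \eqref{eq:edgedef_armchair}. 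I would then count the characteristic values of $\mathbb{T}_a^\eps(\cdot)$ in the disk $\{|\lambda-\lambda_*|<\hrad\eps\}$ by the generalized Rouché theorem, comparing $\mathbb{T}_a^\eps(\lambda)$ with its regularized limit $\tilde{\mathbb{T}}_a^0(\lambda_*)=\lim_{\eps\to0}\mathbb{T}_a^\eps(\lambda_*)$: because of the no-fold condition, the kernel and cokernel of $\tilde{\mathbb{T}}_a^0(\lambda_*)$ are spanned by the interface Cauchy data of the \emph{four} Dirac Bloch modes $w_1,w_2$ at $\Kone$ (Theorem~\ref{lem:Dirac}) and $w_1',w_2'$ at $\Ktwo$ (Corollary~\ref{lem:Ktwo}), a four-dimensional space against two in the zigzag case.

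A Grushin (Lyapunov--Schmidt) reduction onto this four-dimensional space converts ``$\mathbb{T}_a^\eps(\lambda)$ not invertible'' into ``$\det M(\eps,\lambda)=0$'' for a $4\times4$ analytic matrix with $M(0,\lambda_*)=0$, whose first-order expansion $M(\eps,\lambda)=(\lambda-\lambda_*)\mathcal{A}+\eps\mathcal{B}+(\text{higher order})$ is governed by $\gamma_*$ (the $\lambda$-derivative) and $t_*$ (the $\eps$-derivative of the boundary layer operator $T$ on $\partial D$), both from \eqref{eq:Tderiv}. Using $t_*\neq0$, the reflection relation $D^{-\eps}=\rflc D^{+\eps}$ between the two perturbations, and the conjugation (time-reversal) symmetry that exchanges the two valleys --- by Corollary~\ref{lem:Ktwo}, $w_1'=\bar w_2$ and $w_2'=\bar w_1$ --- one pins down the structure of $\mathcal{A}$ and $\mathcal{B}$ and concludes that $\det M(\eps,\lambda)=0$ has exactly two roots $\lambda_\pm$ in $(\lambda_*-\hrad\eps,\lambda_*+\hrad\eps)$, with $\lambda_\pm-\lambda_*=o(\eps)$ so that they lie in this interval for every $\mathfrak d\in(0,1)$; the two remaining characteristic values land near the band-edge frequencies $\lambda_*\pm\frac{|t_*|}{|\gamma_*|}\eps$ and produce no decaying mode. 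Each admissible root yields a genuine interface mode through the representation formula, whose exponential decay as $|\bx\cdot\be_1^a|\to\infty$ is inherited from that of $G^{\pm\eps}(\cdot,\cdot;\lambda_\pm)$.

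I expect the main obstacle to be this counting. One must (i) prove that the no-fold condition along $\bbeta_1^a$ really forces $\dim\ker\tilde{\mathbb{T}}_a^0(\lambda_*)=4$ --- in particular that folded copies of higher bands and the strip boundaries $\Gamma_\pm^a$ contribute nothing, and that the Grushin reduction onto the four-dimensional Dirac space is uniform as $\eps\to0$ --- and (ii) compute the reduced pencil $M$ precisely enough, leveraging $t_*\neq0$ and the $\pm\eps$ symmetry, to certify that \emph{exactly two} of its small characteristic values fall strictly inside the opened gap, so that neither zero nor four interface modes occur. Once these are settled, the residual $2\times2$-type algebra and the Rouché bookkeeping are routine adaptations of the analysis behind Theorem~\ref{lem:edge}.
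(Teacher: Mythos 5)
Your overall architecture --- layer potentials on $\Gamma^a$, a Gohberg--Sigal/Rouch\'e count against a limiting operator, a four-dimensional near-kernel coming from the two valleys $\Kone$ and $\Ktwo$ folded onto $\kp^{*,a}$ --- matches the paper's. But the counting step has a genuine gap. You claim the $4\times4$ Grushin determinant $\det M(\eps,\lambda)$ has exactly two roots in $(\lambda_*-\hrad\eps,\,\lambda_*+\hrad\eps)$, with the other two ``landing near the band-edge frequencies $\lambda_*\pm\frac{|t_*|}{|\gamma_*|}\eps$'' where they produce no decaying mode. That is not what happens: all four characteristic values of the matched boundary operator stay inside the gap for every $\mathfrak d\in(0,1)$, precisely because the limiting operator has a characteristic value of full multiplicity $4$ at $h=0$ (Proposition~\ref{lem:sufflimpropA} of the paper) and the generalized Rouch\'e theorem preserves the total count in the disk $|h|<\hrad$. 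At each of those four values the Green functions $G^{\pm\eps,a}(\cdot,\cdot;\lambda)$ do decay. The reason only two of them yield interface modes lies elsewhere: the layer-potential representation \eqref{eq:denstoedge} is not injective on $\ker\mathbb T_s^{\eps,a}$, i.e.\ a nonzero density pair $(\psi,\phi)$ satisfying the matching conditions can generate the identically-zero field. Your opening assertion that the pencil ``fails to be invertible precisely when $\lambda$ is the energy of an $L^2$ solution'' mistakes a necessary condition for an equivalence, and this is where the argument breaks.

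The paper resolves this by splitting the matching system into a ``solvability'' operator $\mathbb T_s^{\eps,a}$, a ``nontriviality'' operator $\mathbb T_t^{\eps,a}$ (whose \emph{non}-vanishing certifies that the reconstructed field is nonzero), and a ``necessity'' operator $\mathbb T_n^{\eps,a}$ annihilating the Cauchy data of any genuine interface mode; see Lemma~\ref{lem:edgestate_armchair}. Proposition~\ref{lem:uniquestnA} then shows that each of the simultaneous systems $\mathbb T_s=\mathbb T_t=0$ and $\mathbb T_s=\mathbb T_n=0$ has at most two solution pairs in $|h|<\hrad$, and a short pigeonhole argument on the four characteristic values of $\mathbb T_s^{\eps,a}$ yields exactly two modes. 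To salvage your Grushin route you would need to control the kernel of the density-to-field map on the four-dimensional reduced space --- for instance by computing, in the leading $4\times4$ block, which of the four root vectors are annihilated by the analogue of $\mathbb T_t$ --- rather than by arguing that two roots leave the gap. The symmetry facts you invoke ($w_1'=\bar w_2$, $w_2'=\bar w_1$, $D^{-\eps}=\rflc D^{\eps}$, $t_*\neq0$) are the right inputs for that finer computation, but they do not substitute for tracking the trivial-mode degeneracy explicitly.
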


\subsection{Dispersion relations of the interface modes}\label{sec:dispersion_edge_modes}

We consider interface modes with quasi-momentum $\kp$ near $\kp^*$ or $\kp^{*,a}$. In particular, we derive the leading order of the dispersion relation $\lambda(\kp)$ for the interface modes along a zigzag or armchair interface for $\kp$ near $\kp^*$ or $\kp^{*,a}$. The dispersion curve $\lambda(\kp)$ over the whole Bloch interval $[0, 2\pi]$ for both configurations are shown in Figure \ref{fig:edge_mode_dispersion}.

\begin{figure}[!htbp]
\begin{center}
\includegraphics[width=8cm]{./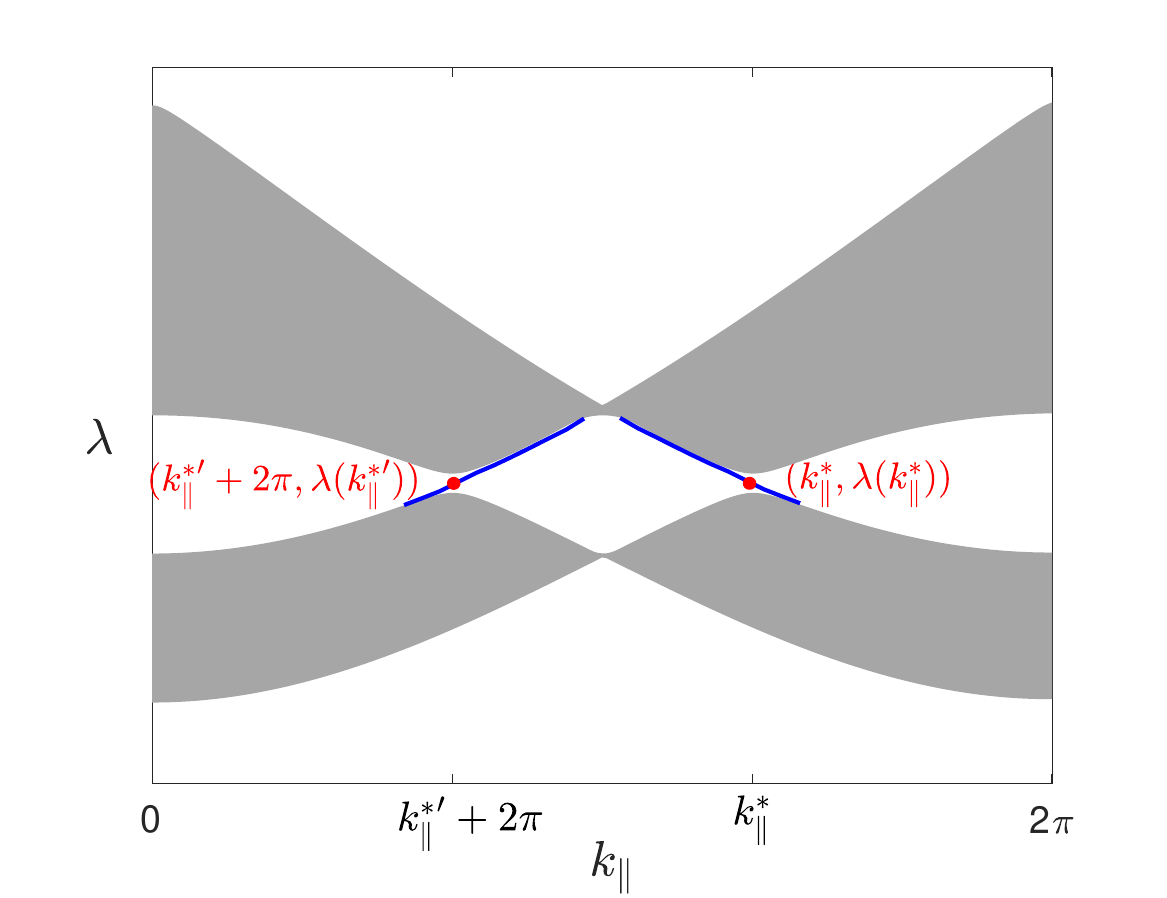}
\includegraphics[width=8cm]{./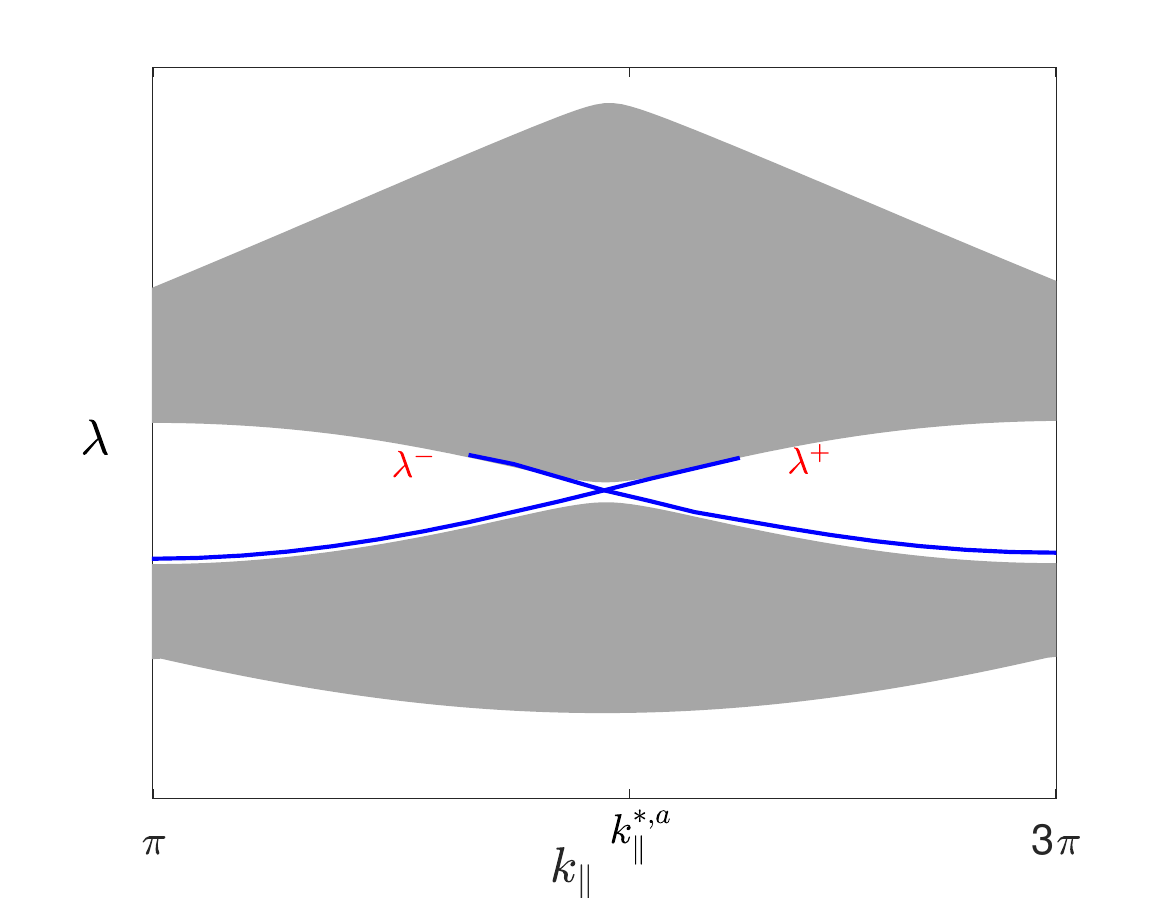}
\caption{The dispersion relations for the interface modes along the zigzag (left) and armchair interface (right). }
\label{fig:edge_mode_dispersion}
\end{center}
\end{figure}

\begin{thm}\label{lem:dispersion}
Let the assumptions in Theorem~\ref{lem:edge} hold and $\mathfrak d$ be an arbitrary constant in $(0,1)$.
If  $\eps>0$ is sufficiently small and $|\kp - \kp^*|< \mathfrak d \eps|\frac{\gamma_*}{t_* m_*}|$, the eigenvalue of the interface mode of the spectral problem \eqref{eq:spectral_prob_zigzag} is given by $\lambda - \lambda_* = \sgn(t_*) \cdot m_*(\kp - \kp^*) \cdot \big(1+ o(1))\big)$.
\end{thm}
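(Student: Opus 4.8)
The plan is to reuse the boundary-integral / Gohberg--Sigal machinery developed for Theorem~\ref{lem:edge}, but now keeping $\kp$ as an extra parameter and tracking how the interface-mode energy moves as $\kp$ leaves $\kp^*$. Writing \eqref{eq:spectral_prob_zigzag} through layer potentials over $\Gamma$ with the kernel $G^\eps(\bx,\by;\lambda)$ (quasi-periodic with momentum $\kp$ along $\be_2$, decaying transversally), the existence of an interface mode is equivalent to the operator-valued analytic function $\lambda\mapsto\cA(\lambda;\kp,\eps)$, assembled from $\mathbb T^\eps,\mathbb T_s^\eps,\mathbb T_t^\eps,\mathbb T_n^\eps$, having a characteristic value. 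For $\kp=\kp^*$ this is exactly the function studied in the proof of Theorem~\ref{lem:edge}, which has a \emph{unique and simple} characteristic value $\lambda(\kp^*,\eps)$ in $(\lambda_*-\hrad\eps,\lambda_*+\hrad\eps)$; since a simple characteristic value of an analytic family depends analytically on parameters, for $\kp$ near $\kp^*$ there persists a unique simple characteristic value $\lambda(\kp,\eps)$. It therefore suffices to compute $\partial_\kp\lambda$ at $\kp^*$ and to bound $|\lambda(\kp^*,\eps)-\lambda_*|$; Assumption~\ref{lem:assNoFold} along $\bbeta_1$ guarantees that no band other than the one through the Dirac point contributes, so the whole analysis is local at $(\Kone,\lambda_*)$.

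The first step is a Lyapunov--Schmidt (Grushin) reduction of $\cA(\lambda;\kp,\eps)$ onto the two-dimensional Dirac eigenspace. For $\eps$, $\kp-\kp^*$ and $\lambda-\lambda_*$ small, $\cA$ is a small perturbation of $\tilde{\mathbb T}^0(\lambda_*)$, whose kernel is spanned by $\mathfrak u_1=\vec v_1+\im\vec v_2$ and $\mathfrak u_2=\vec v_1-\im\vec v_2$ (the interface traces of $w_1,w_2$ from Theorem~\ref{lem:Dirac}); the reduction produces a $2\times2$ matrix $M(\lambda;\kp,\eps)$, analytic in $\lambda$ and smooth in $(\kp,\eps)$, with $M(\lambda_*;\kp^*,0)=0$, and $\lambda$ is an interface-mode energy iff $\det M(\lambda;\kp,\eps)=0$. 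Expanding $M$ to first order and using the derivative identities \eqref{eq:Tderiv} together with the symmetry and normalization of $w_1,w_2$, one finds that $M$ is, to leading order, the effective Dirac operator of a \emph{mass domain wall}: the $(\lambda-\lambda_*)$-term comes with a coefficient proportional to $\gamma_*$, the $(\kp-\kp^*)$-term enters off-diagonally with coefficient $m_*$ (the Dirac-cone slope), and the $\eps$-term enters diagonally with coefficient $+t_*$ on the $D^{+\eps}$ side of $\Gamma$ and $-t_*$ on the $D^{-\eps}$ side, the opposite signs being dictated by the bulk gaps $\lambda_{2,\pm\eps}(\Kone)-\lambda_*=\eps|t_*|/|\gamma_*|$ of Theorem~\ref{thm:band_gap}. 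The essential structural fact is that the $\eps$-part of $M$ is \emph{odd} across $\Gamma$: in the corresponding one-dimensional Dirac operator (obtained by promoting the transverse quasi-momentum to $-\im\partial$) it is precisely a domain-wall mass, which \emph{annihilates} the localized zero mode, so that $\lambda(\kp^*,\eps)-\lambda_*=o(\eps)$ and only the $(\kp-\kp^*)$-term survives in the dispersion.

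The second step is to solve this reduced Jackiw--Rebbi-type problem. The zero mode of the transverse-kinetic-plus-mass part decays on both sides of $\Gamma$ at a rate $\sim\eps|t_*|/(m_*|\gamma_*|)$ and lies in the $+1$ or $-1$ eigenspace of the relevant Pauli matrix depending only on $\sgn(t_*)$ (the orientation of the mass jump relative to the sign of $t_*$); projecting the along-interface kinetic term $m_*(\kp-\kp^*)$ onto this one-dimensional eigenspace gives $\lambda-\lambda_*=\sgn(t_*)\,m_*(\kp-\kp^*)\big(1+o(1)\big)$. Equivalently, one may skip the explicit $1$D reduction and compute $\partial_\kp\lambda|_{\kp^*}$ from the Gohberg--Sigal derivative formula $\partial_\kp\lambda=-\langle\partial_\kp\cA\,\Phi,\Psi\rangle/\langle\partial_\lambda\cA\,\Phi,\Psi\rangle$ with $\Phi,\Psi$ the root functions of $\cA$ and its transpose, identify the ratio with $\sgn(t_*)m_*$ using $\mathfrak u_1,\mathfrak u_2$ and \eqref{eq:Tderiv}, then integrate in $\kp$ using $\lambda(\kp^*,\eps)-\lambda_*=o(\eps)$; the restriction $|\kp-\kp^*|<\mathfrak d\eps|\gamma_*/(t_*m_*)|$ is exactly the condition keeping $\lambda$ inside the common band gap (under the rescaling $\lambda-\lambda_*=\eps\tilde\lambda$, $\kp-\kp^*=\eps\tilde\kp$ it becomes the interval of Theorem~\ref{lem:edge}), which is what makes the error uniform.

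The main obstacle is to carry out the reduction to the effective $1$D Dirac operator with a genuinely uniform error as $\eps\to0$ along $\kp-\kp^*=O(\eps)$. The delicate point is that the decaying Bloch modes on the two sides of $\Gamma$, whose interface traces feed the matching condition, degenerate into the \emph{non}-decaying Dirac modes $w_1,w_2$ as $(\eps,\kp-\kp^*,\lambda-\lambda_*)\to0$; their construction and all estimates on $M$ must therefore be performed after the parabolic rescaling $\lambda-\lambda_*=\eps\tilde\lambda$, $\kp-\kp^*=\eps\tilde\kp$, compatibly with Proposition~\ref{lem:pmlamgen}. One also has to verify that the leading coefficients of $M$ are exactly (proportional to) $\gamma_*$, $m_*$ and $\pm t_*$, which is where \eqref{eq:Tderiv} and the explicit symmetry and normalization of the Dirac eigenfunctions in Theorem~\ref{lem:Dirac} enter. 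Once $M$ and these coefficients are pinned down, the dispersion formula follows from the short Jackiw--Rebbi (or characteristic-value perturbation) computation above.
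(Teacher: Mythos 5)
Your high-level plan is sound: characterize the interface mode by boundary-integral matching over $\Gamma$ with $\kp$ as a parameter, pass to the scaled variables $\lambda-\lambda_*=\eps h$, $\kp-\kp^*=\eps\zeta$, and track the characteristic value. Where you diverge from the paper is in how the finite-dimensional reduction is organized and interpreted, and there are some gaps in making your route rigorous.

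The paper does not reduce to an effective one-dimensional Jackiw--Rebbi Dirac operator. Instead it conjugates $\mathbb T^\eps_s$, $\mathbb T^\eps_t$, $\mathbb T^\eps_n$ by the phase $\mathbb M(\eps\zeta)=e^{-\im\eps\zeta\,\bbeta_2\cdot\bx}$ so that the operator acts on a $\kp$-independent trace space, and then establishes (Proposition~\ref{lem:chadisp}) uniform operator limits of the form $\mathbb M^{-1}\mathbb T^{\pm\eps}\mathbb M\to\tilde{\mathbb T}^0(\lambda_*)+\beta(h,\zeta)\mathbb P\mp\xi(h,\zeta)\mathbb Q+\sigma(h,\zeta)\mathbb O$, where $\mathbb P,\mathbb Q,\mathbb O$ are explicit rank-two operators on the Dirac eigenspace $X$ and $\beta,\xi,\sigma$ are definite nonlinear (not linear) functions of $(h,\zeta)$ sharing a common singular denominator $\sqrt{\beta_*^2+\tfrac34\alpha_*^2\zeta^2-h^2}$. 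The interface-mode energy is then read off by solving the resulting $2\times2$ system $\mathbb U_s(h,\zeta)\vec\phi=0$ on $X$, checking nontriviality via $\mathbb U_t(h,\zeta)\vec\phi\neq0$, and transferring back via Rouch\'e. Your ``domain-wall mass annihilates the zero mode'' picture corresponds mathematically to the observation that in $\mathbb T_s^\eps=\mathbb T^{\eps}+\mathbb T^{-\eps}$ the mass terms $\pm\xi(h,\zeta)\mathbb Q$ cancel, so only $\beta\mathbb P$ and $\sigma\mathbb O$ survive; this is indeed the mechanism behind $h(\zeta=0)\to0$, but it is an algebraic cancellation in the boundary-integral operator, not a spatial ODE argument, and the paper never constructs an effective transverse Dirac ODE.

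Two concrete gaps. First, ``expanding $M$ to first order'' does not literally produce a linear Dirac-type matrix: the entries of the reduced $2\times2$ problem involve the nonlinear $\beta(h,\zeta)$ and $\sigma(h,\zeta)$, and the cancellation of the common denominator is a property of the determinant condition $\sigma=\pm\im\beta$, not of a naive Taylor truncation. You must either carry the exact nonlinear coefficients (as the paper does) or justify separately that the truncation error is uniform over the parabolic wedge $|\zeta|,|h|\lesssim1$; as written the step is not self-contained. Second, your alternative route via $\partial_\kp\lambda=-\langle\partial_\kp\cA\,\Phi,\Psi\rangle/\langle\partial_\lambda\cA\,\Phi,\Psi\rangle$ is legitimate for a simple characteristic value, but the uniformity of this ratio as $\eps\to0^+$ is exactly what Proposition~\ref{lem:chadisp} is needed for: without the scaled operator limits, the denominator $\langle\partial_\lambda\cA\,\Phi,\Psi\rangle$ and the root functions $\Phi,\Psi$ both degenerate as the bulk gaps close, so the formula alone does not give a $\zeta$-uniform expansion. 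Neither objection is fatal in principle---the Jackiw--Rebbi reduction can be made rigorous with a careful multiscale argument and would yield a more ``physics-like'' derivation---but it is genuinely more work than you suggest, and it is not the route the paper takes.
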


\begin{thm}\label{lem:dispersionarm}
Let the assumptions in Theorem~\ref{lem:edgearm} hold and $\mathfrak d$ be an arbitrary constant in $(0,1)$.
If $\eps>0$ is sufficiently small and $|\kp - \kp^{*,a}|< \sqrt3 \mathfrak d\eps|\frac{\gamma_*}{t_* m_* }|$, the eigenvalues of the two interface modes along the armchair interface are $\lambda_\pm - \lambda_* = \pm \frac{1}{\sqrt3}m_*(\kp-\kp^{*,a})\cdot \big(1+ o(1))\big)$. 
\end{thm}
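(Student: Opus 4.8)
The plan is to follow the proof of Theorem~\ref{lem:dispersion} (the zigzag case), with two modifications: the armchair folding brings \emph{both} valleys $\Kone$ and $\Ktwo$ onto the single interface quasi-momentum $\kp^{*,a}\equiv 2\pi$, so the reduced problem doubles in size, and the slope picks up the purely geometric factor $1/|\be_2^a|=1/\sqrt3$. First I would keep the layer-potential formulation behind Theorem~\ref{lem:edgearm} but now treat the interface quasi-momentum as a free variable $\kp$ near $\kp^{*,a}$: replace the $\be_2^a$-quasi-periodic Green's function used there (with phase $\kp^{*,a}$) by the one with phase $\kp$, and let $\mathbb T_a^\eps(\lambda,\kp)$ denote the resulting $2\times2$ matrix of layer operators restricted to the armchair interface $\Gamma^a$. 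Then a function $u\in\cH_a^{J,\eps}$ solving \eqref{eq:edgedef_armchair} at quasi-momentum $\kp$ exists if and only if $\lambda$ is a characteristic value of $\mathbb T_a^\eps(\cdot,\kp)$; moreover $\mathbb T_a^\eps$ is holomorphic in $\lambda$ on the resolvent set of the two bulk operators and analytic in $(\kp,\eps)$ near $(\kp^{*,a},0)$.

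Next, exactly as for Theorem~\ref{lem:edgearm}, a Gohberg--Sigal / Lyapunov--Schmidt reduction near $(\lambda_*,\kp^{*,a},0)$ replaces the characteristic-value problem by $\det M(\lambda,\kp,\eps)=0$ for a $4\times4$ matrix $M$ acting on the span of the interface Cauchy data of the four Dirac modes $w_1,w_2$ (at $\Kone$, Theorem~\ref{lem:Dirac}) and $w_1',w_2'$ (at $\Ktwo$, Corollary~\ref{lem:Ktwo}). I would Taylor-expand $M$ about $(\lambda_*,\kp^{*,a},0)$: the constant term vanishes by the Dirac degeneracy; the $(\lambda-\lambda_*)$-linear part is a fixed nondegenerate matrix (set by $\gamma_*$); the $\eps$-linear part is the mass-opening term, proportional to $t_*/\gamma_*$ and of opposite sign in the two valleys, reproducing \eqref{eq:lambda_near_K}--\eqref{eq:bloch_modes_K}; and the $(\kp-\kp^{*,a})$-linear part comes from $\partial_\kp$ of the quasi-periodic Green's function. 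The decisive computation is this last term: $\partial_\kp$ of the $\be_2^a$-quasi-periodicity is conjugate to an infinitesimal translation along $\be_2^a/|\be_2^a|$, so its pairing against the Dirac modes yields the component of the Dirac-cone group velocity along the interface, namely $m_*/|\be_2^a|=m_*/\sqrt3$; the identical computation for the zigzag interface, where $|\be_2|=1$, produces the slope $m_*$ in Theorem~\ref{lem:dispersion}.

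Then, because Assumption~\ref{lem:assNoFold} along $\bbeta_1^a$ forbids resonance between the two valleys, $\det M$ factors at leading order into two $2\times2$ determinant equations, one per valley; solving each for the root near $\lambda_*$ gives $\lambda-\lambda_*=\pm\frac1{\sqrt3}m_*(\kp-\kp^{*,a})\bigl(1+o(1)\bigr)$, the valley at $\Kone$ furnishing the slope $\sgn(t_*)\tfrac{m_*}{\sqrt3}$ and the valley at $\Ktwo$ the slope $-\sgn(t_*)\tfrac{m_*}{\sqrt3}$, so that both signs appear — consistent with the time-reversal relation used in Corollary~\ref{lem:edgetwo}. The hypothesis $|\kp-\kp^{*,a}|<\sqrt3\,\mathfrak d\,\eps\,|\gamma_*/(t_*m_*)|$ is exactly the window on which these two roots remain strictly inside the common bulk gap and inside the disk on which the generalized Rouch\'e argument of Theorem~\ref{lem:edgearm} already isolates precisely two characteristic values; since their energies lie in the bulk gap and the no-fold condition holds, the corresponding eigenfunctions extend to the two half-strips as superpositions of exponentially decaying Bloch solutions, so both interface modes decay exponentially as $|\bx\cdot\be_1^a|\to\infty$.

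The main obstacle is the second step. One must differentiate the quasi-periodic matrix layer-potential operator $\mathbb T_a^\eps$ in $\kp$ at $\kp^{*,a}$, identify its leading action on the four-dimensional Dirac space with multiplication by $\pm\tfrac{m_*}{\sqrt3}$ on the two valley blocks, and then control the cross terms among the $O(\eps)$, $O(\kp-\kp^{*,a})$ and $O(\lambda-\lambda_*)$ contributions so that they combine into the stated linear law with an $o(1)$ error that is \emph{uniform} over the admissible $\kp$-window. A secondary point, absent in the zigzag case, is to verify that the off-diagonal $2\times2$ blocks of $M$ coupling the two valleys are genuinely of lower order — this is where Assumption~\ref{lem:assNoFold} along $\bbeta_1^a$ is used — so that the $4\times4$ determinant truly decouples and produces exactly the pair $\lambda_\pm$.
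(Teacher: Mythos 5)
Your architecture matches the paper's: layer-potential operators on $\Gamma^a$ with $\kp$ a free parameter, a Gohberg--Sigal reduction to a $4\times 4$ matrix on the Cauchy data of $w_1,w_2,w_1',w_2'$, valley-by-valley factoring, and opposite slopes at $K$ and $K'$ from the sign flip of $t_*$ and $\theta_*$ under time reversal (this last step is exactly the paper's one-line proof of Theorem~\ref{lem:ratarm}). In the paper, Theorem~\ref{lem:dispersionarm} is obtained as the special case $a=b=c=1,\,d=0$ of the rational-interface Theorems~\ref{lem:ratzig}--\ref{lem:ratarm}; your direct geometric reading of the slope as $m_*/|\be_2^a|=m_*/\sqrt3$ bypasses the constant $\mathfrak f^r$ and is, if anything, cleaner, and it does reproduce the stated constant $\tfrac{1}{\sqrt3}m_*$.

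There is, however, a genuine gap in the step you describe as ``Taylor-expand $M(\lambda,\kp,\eps)$ about $(\lambda_*,\kp^{*,a},0)$.'' The operator $\mathbb T_a^\eps(\lambda,\kp)$ is not analytic at $\eps=0$: the bulk gap closes there, the quasi-periodic Green's function develops propagating branches, and no Lyapunov--Schmidt reduction can be anchored at that point. What the paper actually does (Proposition~\ref{lem:chadisp} and its armchair/rational analogue in Appendix~\ref{sec:disprat}) is rescale $\lambda=\lambda_*+\eps h$, $\kp=\kp^{*,a}+\eps\zeta$, conjugate by the modulation $\mathbb M(\eps\zeta)$, and pass to an operator-norm limit as $\eps\to0^+$ at fixed $(h,\zeta)$; the resulting limiting coefficients $\beta^a(h,\zeta),\ \xi^a(h,\zeta),\ \sigma(h,\zeta)$ are genuinely nonlinear in $(h,\zeta)$, carrying square-root singularities at the gap edges, and your ``$\partial_\kp$ of $\mathbb T_a^\eps$'' heuristic has to be replaced by this blow-up computation (this is what delivers the $\sigma$-term and thereby the slope). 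You are right that this is ``the main obstacle,'' but as written it is an assertion, not a derivation. Two smaller points: the leading-order block-diagonality of the $4\times4$ matrix in the two valleys is a consequence of the $q$-form orthogonality $c_i(\vec v_j')=c_i'(\vec v_j)=0$ in \eqref{eq:civjA}, not of the no-fold condition, which is used for the separate purpose of ensuring that no additional propagating modes enter the limiting Green's function at $\lambda_*$; and each $2\times2$ valley block's determinant has two roots, of which only one passes the nontriviality test ($\mathbb U^a_t\vec\phi\neq 0$), so the selection via the $\mathbb T_s/\mathbb T_t/\mathbb T_n$ system should be invoked explicitly rather than left implicit.
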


\subsection{Interface modes along rational interfaces}
\label{sec:rational}
  
We extend the previous studies to interface modes along a rational interface separating two honeycomb photonic structures. 
A rational interface is a line with a direction 
\begin{equation}
a\be_1 + b\be_2,
\end{equation}
where $a$ and $b$ are relatively prime integers.
When $a$ and $b$ are relatively prime, there exist $c,d\in\mathbb Z$, such that $bc-ad=1$ and 
\begin{equation*}
  \left(\begin{matrix} c & a\\ d & b\end{matrix}\right)^{-1} = \left(\begin{matrix} b & -a\\ -d & c\end{matrix}\right).  
\end{equation*}
Therefore, the vectors
\begin{equation}\label{eq:latvecr}
\be_1^r= c\be_1 + d \be_2, \quad \be_2^r= a\be_1 + b \be_2 
\end{equation}
generate the honeycomb lattice. Correspondingly, the reciprocal vectors
\begin{equation}\label{eq:dualvecr}
(\begin{matrix}\bbeta_1^r&\bbeta_2^r \end{matrix}) = (\begin{matrix}\bbeta_1&\bbeta_2 \end{matrix})\left(\left(\begin{matrix} c & a\\ b & d\end{matrix}\right)^{-1}\right)^T
=(\begin{matrix}\bbeta_1&\bbeta_2 \end{matrix})\left(\begin{matrix} b & -d\\ -a & c\end{matrix}\right)
\end{equation}
generate the dual lattice.

We call an interface direction $\be_2^r:=a\be_1 + b\be_2$ of the zigzag type if the dual slice $\{\bp(\ell)= \Kone+\ell \bbeta_1^r, \ell \in \mathbb R\}$ intersects with $\Kone+\Lambda^*$ but not with $\Ktwo+\Lambda^*$, and is of armchair if the slice intersects with both $\Kone+\Lambda^*$ and $\Ktwo+\Lambda^*$.
A straightforward calculation shows that  $\Kone+\ell \bbeta_1^r \in\Ktwo+\Lambda^*$ if and only if $a-b=3k$ for some $k\in\mathbb Z$.

\begin{defi}
The direction
$\be_2^r:=a\be_1 + b\be_2$ is called a rational if
$a$ and $b$ are relatively prime integers. The rational interface is of zigzag type if $a-b\neq3k$ for all $k\in\mathbb Z$, and is of armchair type if $a-b=3k$ for some $k\in\mathbb Z$.
\end{defi}

Let the inclusion $D$ and all of its rotations be compactly supported in the cell $\cC_r:=\{ \ell_1 \be_1^r + \ell_2 \be_2^r :  \ell_1, \ell_2\in[-1/2,1/2) \}$. 
We prove that the analog of Theorem~\ref{lem:edge} - Theorem~\ref{lem:dispersionarm} holds in the case of rational interfaces for the zigzag and armchair types. More specifically, 

Denote
\begin{equation}\label{eq:fr}
\mathfrak f^r = B- \frac{A}{|A|^2}\text{Re}(A\bar B),
\end{equation}
wherein
\begin{equation}
A = b-a\bar\tau,\quad B = -d+c\bar\tau.
\end{equation}
Then if the rational interface $\be_2^r$ is of zigzag type, we have the following theorem.
\begin{thm}\label{lem:ratzig}
Let $\be_2^r$ be a rational edge of zigzag type and let Assumption~\ref{lem:assNoFold} hold along the reciprocal lattice vector $\bbeta_1^r$. Let $t_*$ and $\gamma_*$ be the two constants defined in \eqref{eq:Tderiv} and assume that $t_*\neq0$. Let $\mathfrak d$ be an arbitrary constant in $(0,1)$.  If $\eps>0$ is sufficiently small, then
\begin{itemize}
    \item[(i)] There exists a unique interface mode along the interface, with the quasi-momentum $\kp^* = K\cdot\be_2^r$ and the eigenvalue $\lambda\in (\lambda_* - \hrad\eps, \lambda_* + \hrad\eps)$).
    \item[(ii)] For $|\kp - \kp^*|< \mathfrak d\eps|\frac{\gamma_*}{t_*}|/(|\mathfrak f^r|\frac{\sqrt3}{2}m_*)$, the dispersion relation for the interface mode adopts the expansion $\lambda - \lambda_* = \sgn(t_*)\cdot|\mathfrak f^r| \cdot \frac{\sqrt3}{2}m_* \cdot (\kp - \kp^*) \cdot \big(1+ o(1))\big)$. 
\end{itemize}
\end{thm}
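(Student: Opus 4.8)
The plan is to mirror, step by step, the argument already carried out for the zigzag interface (Theorem~\ref{lem:edge}) and its dispersion refinement (Theorem~\ref{lem:dispersion}), now in the rotated coordinate frame generated by $\be_1^r = c\be_1 + d\be_2$ and $\be_2^r = a\be_1 + b\be_2$. First I would set up the Floquet reduction along $\be_2^r$: since the joint structure is periodic with period $\be_2^r$ along the interface, an interface mode with quasi-momentum $\kp$ is a solution in $\cH^{J,\eps}$-type space on the infinite strip $\Omega^{J,r} := \cup_{m\in\mathbb Z}(\cC_r + m\be_1^r)$, with the quasi-periodic boundary condition $u(\bx + \be_2^r) = e^{\im\kp} u(\bx)$ on the lower edge and inclusions $D^{\eps}$ on the $m\ge 0$ side, $D^{-\eps}$ on the $m<0$ side. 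The relevant Dirac-point slice is $\bp(\ell) = K + \ell\bbeta_1^r$; since the interface is of zigzag type, by definition this slice meets $K+\Lambda^*$ but not $K'+\Lambda^*$, so exactly \emph{one} Dirac point contributes, exactly as in the zigzag case — this is the structural reason the conclusion is a \emph{unique} interface mode rather than two.

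Next I would transplant the bulk asymptotics. By Theorem~\ref{thm:band_gap} (and Proposition~\ref{lem:pmlamgen}), for $\bp$ near $K$ the two perturbed periodic operators have band edges $\lambda_{1,\pm\eps}(\bp), \lambda_{2,\pm\eps}(\bp)$ with the square-root dispersion, and the Bloch modes near the edge at $\bp = K$ are $O(\eps)$-perturbations of $w_1, w_2$, with the eigenspaces swapped between the $+\eps$ and $-\eps$ structures. I would then build the interface eigenvalue problem as a boundary integral equation over $\Gamma^r$ (the rational interface restricted to $\Omega^{J,r}$), using the half-strip Green's functions $G^{\pm\eps}$ quasi-periodic in $\be_2^r$, and form the $2\times 2$ matrix boundary operator $\mathbb T^{\eps}(\lambda,\kp)$ whose characteristic values are the interface-mode energies. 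The key computation is to identify the leading symbol of this operator near $(\lambda_*, \kp^*)$: using the Dirac-point data from Theorem~\ref{lem:Dirac} (the modes $w_1, w_2$, the rotation eigenvalues $\tau, \bar\tau$, and the reflection relation $w_2 = Fw_1$) together with the constants $t_*, \gamma_*$ from \eqref{eq:Tderiv} and $m_*$ from Theorem~\ref{lem:Dirac}, one gets a reduced $2\times 2$ matrix of the schematic form $\im\gamma_*\big(\text{(something)}\cdot(\lambda-\lambda_*) \pm \eps t_* + (\text{linear in }\kp-\kp^*)\big)$, whose determinant vanishes at the interface eigenvalue. The coefficient multiplying $(\kp-\kp^*)$ is exactly where the rational geometry enters: the chain rule $\partial/\partial\ell$ along $\bbeta_1^r$ versus the standard $\bbeta_1$ produces the factor recorded in $\mathfrak f^r = B - \tfrac{A}{|A|^2}\mathrm{Re}(A\bar B)$ with $A = b - a\bar\tau$, $B = -d + c\bar\tau$; the $\tfrac{\sqrt3}{2}$ is the Jacobian $|\bbeta_1\wedge\bbeta_2|$ relating the $\ell$-parametrization to physical arc length along the interface. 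I would verify $|\mathfrak f^r|\ne 0$ for zigzag type — this is precisely the projection of $B$ onto the orthogonal complement of $A$, which is nonzero exactly because the slice misses $K'+\Lambda^*$.

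With the reduced symbol in hand, part (i) follows from a Gohberg–Sigal / generalized Rouché argument: inside the contour $|\lambda - \lambda_*| = \hrad\eps$ the perturbed operator $\mathbb T^{\eps}$ has the same number of characteristic values as its leading-order model, and the $2\times 2$ determinant of the model has exactly one root in that disk (the $\pm\eps t_*$ term from the swapped eigenspaces splits the band edges by $O(\eps)$ and pins a single zero crossing), giving existence and uniqueness; exponential decay as $|\bx\cdot\be_1^r|\to\infty$ comes from the no-fold Assumption~\ref{lem:assNoFold} along $\bbeta_1^r$, which guarantees $\lambda$ lies in the resolvent set of both bulk operators so the half-strip Green's functions decay. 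For part (ii), I would repeat the same analysis keeping $\kp - \kp^*$ as a small parameter comparable to $\eps$: the root of the determinant of the model matrix moves linearly, $\lambda - \lambda_* = \sgn(t_*)\cdot|\mathfrak f^r|\cdot\tfrac{\sqrt3}{2}m_*\cdot(\kp-\kp^*)\cdot(1+o(1))$, with the error controlled on the range $|\kp-\kp^*| < \mathfrak d\eps|\gamma_*/t_*|/(|\mathfrak f^r|\tfrac{\sqrt3}{2}m_*)$ so that $\lambda$ stays inside the common band gap. The main obstacle I anticipate is the bookkeeping in the change of frame: carefully tracking how the quasi-periodicity exponents, the normal derivative on $\Gamma^r$, and the directional derivative of the Dirac dispersion transform under the unimodular change of basis $\binom{c\ a}{d\ b}$, and confirming that all the pieces recombine into the clean constant $\mathfrak f^r$ with the stated nonvanishing for zigzag-type interfaces — everything else is a faithful rerun of the proofs of Theorems~\ref{lem:edge} and~\ref{lem:dispersion}.
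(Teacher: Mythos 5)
Your plan tracks the paper's own proof closely: the paper's argument (Appendix D) is exactly the reduction you describe — re-express the quasimomenta in the rational frame $\be_1^r,\be_2^r$, replace $\ell+\mu\bar\tau$ by $A\ell+B\mu$ with $A=b-a\bar\tau$, $B=-d+c\bar\tau$ in the leading $2\times 2$ symbol, orthogonalize $B$ against $A$ to extract $\mathfrak f^r$, and then rerun the limiting-operator computation of Proposition~\ref{lem:chadisp} and the Gohberg--Sigal counting from the proof of Theorem~\ref{lem:edge} with one Dirac point contributing (zigzag type).

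Two factual slips in your geometric bookkeeping, neither fatal to the plan but both worth correcting before you execute it. First, $\mathfrak f^r\neq 0$ is \emph{not} a consequence of zigzag type: $\mathfrak f^r$ vanishes iff $B\parallel A$ as real vectors, i.e.\ iff $\mathrm{Im}(A\bar B)=0$; but $\mathrm{Im}(A\bar B)=(bc-ad)\,\mathrm{Im}(\tau)=\mathrm{Im}(\tau)\neq 0$ unconditionally, so $\mathfrak f^r\neq 0$ for \emph{every} rational interface (the paper needs this for both Theorem~\ref{lem:ratzig} and Theorem~\ref{lem:ratarm}). What zigzag vs.\ armchair type actually controls is whether one or two Dirac points contribute to the limiting Green's function — you do state this elsewhere in your proposal, and it is what gives a \emph{unique} mode in (i) rather than two. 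Second, the claim that the constant $\tfrac{\sqrt3}{2}$ is the Jacobian $|\bbeta_1\wedge\bbeta_2|$ is off: $|\bbeta_1\wedge\bbeta_2|=2/\sqrt3$, not $\sqrt3/2$. The correct origin of the conversion factor is the relation $\alpha_*=m_*|\bbeta_1|$ from the proof of Theorem~\ref{lem:slope} and the $q$-form normalization: the characteristic equation delivers $h=\pm|\mathfrak f^r|\,\alpha_*\,\zeta$, and expressing $\alpha_*$ through $m_*$ produces the numerical constant in front of $m_*(\kp-\kp^*)$. If you run this carefully you should also cross-check the stated coefficient against the special case $a=0,\,b=c=1,\,d=0$ (where $|\mathfrak f^r|=\sqrt3/2$), which must reduce to the zigzag slope $m_*$ of Theorem~\ref{lem:dispersion}.
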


If the rational interface $\be_2^r$ is of armchair type, we have the following theorem.

\begin{thm}\label{lem:ratarm}
Let $\be_2^r$ be a rational edge of armchair type and let Assumption~\ref{lem:assNoFold} hold along the reciprocal lattice vector $\bbeta_1^r$. Let $t_*$ and $\gamma_*$ be the two constants defined in \eqref{eq:Tderiv} and assume that $t_*\neq0$. Let $\mathfrak d$ be an arbitrary constant in $(0,1)$.  If $\eps>0$ is sufficiently small, then
\begin{itemize}
    \item[(i)] There exist exactly two interface modes along the interface, with the quasi-momentum $\kp^{*,r} = K\cdot\be_2^r$ and the eigenvalues $\lambda_\pm \in (\lambda_* - \hrad\eps, \lambda_* + \hrad\eps)$.
    \item[(ii)] For $|\kp - \kp^{*,r}|< \mathfrak d\eps|\frac{\gamma_*}{t_*}|/(|\mathfrak f^r|\frac{\sqrt3}{2} m_*)$, the dispersion relations for the interface modes adopt the expansions $\lambda - \lambda_* = \pm|\mathfrak  f^r| \cdot \frac{\sqrt3}{2} m_* \cdot (\kp - \kp^{*,r}) \cdot \big(1+ o(1))\big)$.
\end{itemize}
\end{thm}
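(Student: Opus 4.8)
\medskip
\noindent\textbf{Proof plan for Theorem~\ref{lem:ratarm}.}
The plan is to reduce this statement to the armchair case of Theorems~\ref{lem:edgearm} and \ref{lem:dispersionarm} by passing to the rational lattice basis $\{\be_1^r,\be_2^r\}$, and then to track how the Dirac-cone geometry transforms under this change of basis; the latter is precisely what produces the factor $\mathfrak f^r$ of \eqref{eq:fr}. Concretely, I would first rewrite the honeycomb lattice as $\bbZ\be_1^r\oplus\bbZ\be_2^r$ with fundamental cell $\cC_r$ and dual vectors $\bbeta_1^r,\bbeta_2^r$ from \eqref{eq:latvecr}--\eqref{eq:dualvecr}, and --- using the hypothesis that $D$ and all its rotations are compactly contained in $\cC_r$ --- form the infinite strip $\cup_{m\in\bbZ}(\cC_r+m\be_1^r)$ carrying the inclusions rotated by $-\eps$ for $m\ge 0$ and by $+\eps$ for $m<0$, with the rational interface $\Gamma^r$ through the origin in direction $\be_2^r$. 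Applying Floquet theory along $\be_2^r$, an interface mode of quasi-momentum $\kp$ is a nontrivial solution of $-\Delta u=\lambda u$ on the strip exterior to the rotated inclusions in the appropriate $\be_2^r$-quasi-periodic space, exactly as in \eqref{eq:edgedef_armchair}.

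The armchair-type hypothesis enters through the folding structure. Since $a-b=3k$, a direct computation gives $(K-K')\cdot\be_2^r=\tfrac{4\pi}{3}(a+2b)\in 2\pi\bbZ$, so $K$ and $K'$ project to the \emph{same} quasi-momentum $\kp^{*,r}=K\cdot\be_2^r$ along $\be_2^r$; under Assumption~\ref{lem:assNoFold} along $\bbeta_1^r$, these two points (and their $\Lambda^*$-translates) are the only bulk momenta on the slice $\{\bp:\bp\cdot\be_2^r=\kp^{*,r}\}$ at which the band energy equals $\lambda_*$. This doubling is the source of the ``exactly two'' modes. I would then formulate the eigenvalue problem as in the proof of Theorem~\ref{lem:edgearm}: with the $\be_2^r$-quasi-periodic Green function of the Helmholtz operator on $\bbR^2$ minus all rotated inclusions, the transmission conditions across $\Gamma^r$ are encoded by a matrix boundary integral operator $\mathbb T^{\eps,r}(\lambda,\kp)$ on the Cauchy data over $\Gamma^r$, whose characteristic values are the interface eigenvalues. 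Its $\eps=0$ limit $\tilde{\mathbb T}^{0,r}(\lambda_*,\kp^{*,r})$ has a two-dimensional kernel spanned by the Cauchy data on $\Gamma^r$ of the Dirac modes attached to $K$ and $K'$; using the symmetry relations of Theorem~\ref{lem:Dirac} and Corollary~\ref{lem:Ktwo} (the operators $R$ and $F$ act on the lattice independently of the choice of generators, so these relations persist in the rational basis), this kernel can be organized into combinations of the type $\mathfrak u_1,\mathfrak u_2$.

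The analytic core is a Lyapunov--Schmidt / Gohberg--Sigal reduction of $\mathbb T^{\eps,r}(\lambda,\kp)$ onto this two-dimensional kernel, giving a $2\times2$ analytic matrix $\mathcal M(\lambda,\kp,\eps)$ whose zeros are the interface eigenvalues, together with the computation of its leading expansion. I expect, as in the armchair case, that the $\pm\eps$-rotation --- which breaks the reflection $F$ but is compatible with the time reversal interchanging the $K$- and $K'$-sectors --- forces $\mathcal M$ to block-diagonalize, up to higher order, into two scalar factors of the Sturm--Liouville type $\gamma_*(\lambda-\lambda_*)\mp(\,\cdot\,)=0$ already encountered in Theorem~\ref{lem:edge}, each contributing one root. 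Parametrizing the relevant bulk momenta near $K$ as $\bp-K=s\,\bbeta_1^r+(\kp-\kp^{*,r})\,\bbeta_2^r$ and eliminating the evanescent variable $s$ through the Dirac relation $(\lambda-\lambda_*)^2\approx m_*^2|\bp-K|^2$, the coefficient of $(\kp-\kp^{*,r})$ that survives is exactly $\tfrac{\sqrt3}{2}|\mathfrak f^r|\,m_*$, where $\tfrac{\sqrt3}{2}=|\cC_z|$ and $\mathfrak f^r$ is the component of $\bbeta_2^r$ orthogonal to $\bbeta_1^r$ in the $\tau$-weighted pairing of \eqref{eq:fr}. Applying the generalized Rouch\'e theorem to $\mathbb T^{\eps,r}$ against its $\eps=0$ limit on the circle $|\lambda-\lambda_*|=\hrad\eps$ --- where $t_*\neq0$ and $\mathfrak d\in(0,1)$ keep the comparison operator invertible on the circle and both roots strictly inside --- yields part (i): exactly two characteristic values $\lambda_\pm$ (real, since the Dirichlet Laplacian is self-adjoint) in $(\lambda_*-\hrad\eps,\lambda_*+\hrad\eps)$. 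Part (ii), $\lambda_\pm-\lambda_*=\pm|\mathfrak f^r|\tfrac{\sqrt3}{2}m_*(\kp-\kp^{*,r})(1+o(1))$, follows by tracking the $\kp$-dependence in the two scalar factors. Exponential decay in $|\bx\cdot\be_1^r|$ comes from the fact that for $\lambda$ strictly inside the bulk gap the transverse Floquet exponents have imaginary part bounded below by a multiple of $\eps$.

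The main obstacle I anticipate is twofold. First, one must verify that the change of basis carries the whole asymptotic apparatus over faithfully, and in particular recompute the leading entries of $\mathcal M(\lambda,\kp,\eps)$ in the $\{\be_1^r,\be_2^r\}$ coordinates and recognize the precise combination $\tfrac{\sqrt3}{2}|\mathfrak f^r|m_*$: the constants $t_*,\gamma_*$ are defined through layer potentials on $\partial D$ and are untouched, but the geometric prefactors and the quasi-periodicity phases all get rescaled and must be bookkept. Second, one must show that the off-diagonal coupling in $\mathcal M$ between the $K$- and $K'$-sectors is of sufficiently high order that one genuinely obtains two decoupled modes with opposite dispersion slopes rather than a hybridized pair; this rests on the time-reversal symmetry together with the relations $w_2=Fw_1$, $w_1'=\bar w_2$, $w_2'=\bar w_1$ surviving in the rational basis, and is exactly where the armchair-type condition $a-b=3k$ is indispensable. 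The zigzag case, Theorem~\ref{lem:ratzig}, is the same argument with a one-dimensional kernel (only $K$ folds onto $\kp^{*}$), which one can treat in parallel.
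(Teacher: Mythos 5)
Your overall plan --- pass to the rational basis $\{\be_1^r,\be_2^r\}$, apply Floquet theory along $\be_2^r$, set up the interface transmission problem as a boundary integral equation over $\Gamma^r$, and reduce via Gohberg--Sigal --- is the right one, and the observation that the armchair condition $a-b=3k$ makes $K$ and $K'$ fold onto the same quasi-momentum $\kp^{*,r}$ is correct and is the essential structural reason for the doubling. This matches the paper's framework, where Theorem~\ref{lem:ratarm} is obtained by feeding the rational-basis computations of Theorem~\ref{lem:ratzig} into the armchair machinery of Theorem~\ref{lem:edgearm}, with the only extra observation being that by Propositions~\ref{lem:Tderiv} and~\ref{lem:TderivA} the constants $t_*$ and $\theta_*$ change sign between the $K$- and $K'$-sectors, producing two modes with opposite dispersion slopes.

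There is, however, a genuine gap in the dimension count at the heart of your Lyapunov--Schmidt step. You assert that the $\eps=0$ limit operator has a \emph{two}-dimensional kernel ``spanned by the Cauchy data on $\Gamma^r$ of the Dirac modes attached to $K$ and $K'$,'' and reduce onto a $2\times2$ matrix $\mathcal M$. But each Dirac point carries a two-dimensional eigenspace ($w_1,w_2$ at $K$ and $w_1',w_2'$ at $K'$), so the kernel of $\tilde{\mathbb T}^{0,r}(\lambda_*)$ is \emph{four}-dimensional, spanned by $\vec v_1,\vec v_2,\vec v_1',\vec v_2'$ --- this is exactly Proposition~\ref{lem:sufflimpropA} for the armchair strip, where $\ker\mathbb U^a_s(0)=X\oplus X'$ has dimension four and the multiplicity is four. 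Consequently the generalized Rouch\'e theorem applied to the ``sufficient'' operator $\mathbb T_s^{\eps,r}$ yields \emph{four} characteristic values in the disk $|h|<\hrad$, not two, and does not by itself give part (i). The drop from four roots to exactly two genuine interface modes is the delicate part of the argument: it requires combining the sufficient operator $\mathbb T_s$ with the necessary operator $\mathbb T_n$ (respectively the non-triviality operator $\mathbb T_t$), restricting to the two-dimensional subspaces $X_1\oplus X_2'$ (resp.\ $X_2\oplus X_1'$), and invoking a simultaneous-uniqueness statement of the form of Proposition~\ref{lem:uniquestnA}. It is only \emph{after} this combination that one lands on an effective $2\times2$ matrix whose two roots are the interface modes. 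Your sketch conflates these two steps and thereby over-counts.

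A secondary, smaller point: the identification of the prefactor $\tfrac{\sqrt3}{2}$ in the slope $\pm|\mathfrak f^r|\tfrac{\sqrt3}{2}m_*$ with $|\cC_z|$ is not where this factor actually comes from; in the paper it appears from the relation $\alpha_*=m_*|\bbeta_1|$ with $|\bbeta_1|=\tfrac{2}{\sqrt3}$ together with the decomposition $A\ell+B\mu=A\tilde\ell+\mathfrak f^r\mu$ in Appendix~\ref{sec:disprat}. You should derive this explicitly in the rational coordinates rather than guessing the geometric origin. With the kernel dimension corrected to four and the necessary/non-trivial bookkeeping added, the rest of your plan --- exponential decay from the $O(\eps)$-wide gap, the dispersion formula from the $2\times2$ reduction --- goes through.
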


\subsection{Extension of results to other settings}
We note that the method and framework developed in this paper can be extended to other settings:
\begin{enumerate}
     \item [(1)]
There are multiple inclusions in one periodic cell;
    \item [(2)] 
 The inclusions are penetrable such that the medium coefficient is piecewise constant;
   \item [(3)]
   The topological phase transition is induced by perturbations that break either the inversion symmetry or the time-reversal symmetry.

\end{enumerate}

\section{Dirac points for the honeycomb lattice}
In this section, we prove Theorem~\ref{lem:Dirac} regarding the  Dirac points by the layer potential technique. 

\subsection{Integral equation formulation}\label{sec:band}
In this subsection, we formulate the spectral problem for the honeycomb structure by using boundary integral equations. 
For each $\bp \in \cB_z$, let $G^f(\bx,\by; \bp,\lambda)$ be the quasi-periodic Green function over the honeycomb lattice that solves 
\begin{equation}\label{eq:GQPhomo}
(- \Delta - \lambda)  G^f(\bx,\by; \bp,\lambda)= \sum_{\be\in\Lambda}  e^{\im \bp \cdot \be} \delta(\bx-\by - \be)  \quad \mbox{for} \; \bx, \by \in \bbR^2.
\end{equation}

Define the single-layer potential 
\begin{equation*}
   u(\bx;\bp):= \int_{\partial D(\eta)}G^f(\bx,\by;\lambda, \bp)\tilde\phi(\by)\, ds_{\by}, 
\end{equation*}
wherein the density function $\tilde\phi\in H^{-1/2}(\partial D(\eta))$. Then it can be shown that $u$ solves the eigenvalue problem \eqref{eq:bandu_zigzag} if and only if $\tilde\phi\in H^{-1/2}(\partial D(\eta))$ solves the following boundary integral equation:
\begin{equation}\label{eq:banddens}
\int_{\partial D(\eta)}G^f(\bx,\by;\lambda, \bp)\tilde\phi(\by)\, ds_{\by} = 0,\quad \bx\in \partial D(\eta).
\end{equation}

Define $\phi(\bx):=\tilde\phi(\eta\bx)$. 
Then a point $(\bp,\lambda)$ belongs to the dispersion surface of the honeycomb lattice if and only if the triple $(\lambda, \bp, \phi) \in \mathbb R \times \cB_z \times \HmhalfpD$ solves the integral equation
\begin{equation}\label{eq:bandD}
\mathcal S (\eta,\lambda,\bp)[\phi] =0,
\end{equation}
where the single-layer integral operator
\begin{equation}\label{eq:SHelm}
\mathcal S (\eta,\lambda,\bp)[\phi](\bx) := \int_{\partial D_*}G^f(\eta\bx,\eta\by;\lambda, \bp)\phi(\by)\, ds_{\by} \quad \bx\in \partial D_*.
\end{equation}
In the rest of this section, we investigate the characteristic values of the integral operator $S (\eta,\lambda,\bp)$ when $\bp=K$.

\subsection{Symmetry of the integral operator}
In this subsection, we establish symmetry properties of the integral operator $S (\eta,\lambda,\bp)$. 
Note that
the Green function satisfying \eqref{eq:GQPhomo} can be represented by the lattice sum (cf. \cite{Ammari-book})
\begin{equation}\label{eq:GQPlattice}
G^f(\bx,\by;\lambda, \bp) = \frac{\im}{4} \sum_{\be \in \Lambda} e^{\im\bp\cdot \be} H_0^{(1)}(\omega |\bx- \by -\be|),
\end{equation}
where $H_0^{(1)}$ is the zero-order Hankel function of the first kind. More precisely, 
\begin{equation}
\frac{\im}{4}H_0^{(1)}(\omega;\bx) = - \frac{1}{2\pi}\left(\ln|\bx|+\ln\omega+\gamma_0+\ln(\omega|\bx|)\sum_{p\geq1}b_{p,1}(\omega|\bx|)^{2p}+\sum_{p\geq1}b_{p,2}(\omega|\bx|)^{2p}
 \right),
\end{equation}
 where
 \begin{equation*}
     b_{p,1} = \frac{(-1)^p}{2^{2p}(p!)^2},\quad
     b_{p,2} = \left(\gamma_0 - \sum_{s=1}^p\frac{1}{s}\right) b_{p,1}, \quad \gamma_0 = E_0 - \ln 2 - \frac{\im\pi}{2},
 \end{equation*} 
and $\displaystyle{E_0=\lim_{N\to\infty}\left(\sum_{p=1}^N \frac{1}{p}-\ln N\right)}$ is the Euler constant.
The Green function also attains the following spectral decomposition (cf. \cite{Ammari-book}):
\begin{equation}\label{eq:GQPfreq} 
G^f(\bx,\by;\lambda, \bp) =  - \frac{1}{ |\cC_z|} \sum_{\bq\in\Lambda^*} \frac{e^{\im (\bp +\bq)\cdot(\bx-\by)}}{\lambda - |\bp+\bq|^2},
\end{equation} 
wherein $|\cC_z|=\frac{\sqrt3}{2}$ represents the area of the fundamental cell $\cC_z$.

Recall that $H^s(\partial D_*)$ is the Sobolev space of order $s$ defined on $\partial D_*$. Note that the transformation $R\phi(\bx):=\phi(R^{-1}\bx)$ is unitary and it attains three eigenvalues $1$, $\tau$ and $\tau^2$, where  $\tau = e^{\im\frac{2\pi}{3}}$. Define 
\begin{equation}\label{eq:HsD}
H_{i}^{s}(\partial D_*):= \left\{ \phi\in H^s(\partial D_*):  R\phi(\bx):=\phi(R^{-1}\bx) = \tau^i \phi(\bx) \right\}, \quad i=0,1,2.
\end{equation}
These subspaces are pairwise orthogonal under the $L^2(\partial D_*)$ inner product and there holds
 $$
 H^s(\partial D_*) = H_{0}^{s}(\partial D_*) \bigoplus H_{1}^{s}(\partial D_*) \bigoplus H_{2}^{s}(\partial D_*).
 $$ In addition, using the relation $RF=FR^2$, we have $FH_{1}^{s}(\partial D)=H_{2}^{s}(\partial D)$. 

Define
\begin{equation}\label{}
\tilde\Lambda^*:= \Kone + \Lambda^*.
\end{equation} 
A straightforward calculation shows that
\begin{equation}\label{eq:dualsym}
 R \tilde\Lambda^*= \tilde\Lambda^*, \quad \rflc\tilde\Lambda^*=\tilde\Lambda^*.
\end{equation} 
Here we have used
\begin{equation}
R\bbeta_1=-\bbeta_1-\bbeta_2 \quad R\bbeta_2=\bbeta_1, \quad \rflc\bbeta_1=-\bbeta_2, \quad \rflc\bbeta_2=-\bbeta_1,
\end{equation} 
and
\begin{equation}
\Kone= 2\pi (\frac{2}{3}\bbeta_1 +\frac{1}{3}\bbeta_2),\quad  R\Kone= \Kone -\bbeta_2, \quad  \rflc\Kone = \Kone -\bbeta_1 -\bbeta_2.
\end{equation}

\begin{lemma}\label{lem:invsub}
Let $\bp=K$, then the following holds for the integral operator $\mathcal S (\eta,\lambda,\Kone)$:
\begin{itemize}
   \item[(i)] The operator $\mathcal S (\eta,\lambda,\Kone)$  commutes with $R$ and $\rflc$. That is, 
\begin{equation*}
    R\mathcal S (\eta,\lambda,\Kone) = \mathcal S (\eta,\lambda,\Kone) R \quad \mbox{and} \quad \rflc\mathcal S (\eta,\lambda,\Kone) = \mathcal S (\eta,\lambda,\Kone) \rflc.
\end{equation*}

  \item[(ii)] The operator $\mathcal S (\eta,\lambda,\Kone)$ is bounded from $H_{i}^{-1/2}(\partial D_*)$ to $H_{i}^{1/2}(\partial D_*)$ ($i=0,1,2$) for all $\eta$ and $\lambda$.

  \item[(iii)]  The triple $(\bp,\lambda, \phi)\in H_{1}^{-1/2}(\partial D_*)$ solves \eqref{eq:bandD} if and only if the triple point $(\bp,\lambda, \phi(\rflc(\cdot))\in H_{2}^{-1/2}(\partial D_*)$ solves \eqref{eq:bandD}.
\end{itemize}

\end{lemma}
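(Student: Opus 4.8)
The plan is to deduce all three parts from one symmetry property of the quasiperiodic Green function: that $G^f(\cdot,\cdot;\lambda,\Kone)$ is invariant under the diagonal action of $R$ and of $\rflc$, i.e. $G^f(R\bx,R\by;\lambda,\Kone)=G^f(\bx,\by;\lambda,\Kone)$ and $G^f(\rflc\bx,\rflc\by;\lambda,\Kone)=G^f(\bx,\by;\lambda,\Kone)$. First I would establish this. Formally it is transparent from the spectral representation \eqref{eq:GQPfreq}: rewriting the sum over $\bq\in\Lambda^*$ as a sum over $\tilde\Lambda^*=\Kone+\Lambda^*$ and using that $R,\rflc$ are orthogonal and map $\tilde\Lambda^*$ onto itself by \eqref{eq:dualsym}, a reindexing of the sum leaves $G^f$ unchanged. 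To make this rigorous without worrying about the conditional convergence of \eqref{eq:GQPfreq} (recall $\lambda_*$ sits near the Wood anomaly $|\Kone|^2$), I would instead argue by uniqueness of the quasiperiodic Green function: the function $(\bx,\by)\mapsto G^f(R\bx,R\by;\lambda,\Kone)$ satisfies the same defining problem \eqref{eq:GQPhomo} — here one uses that $\Delta$ is rotation invariant, that $R\Lambda=\Lambda$, and that $e^{\im(R^{-1}\Kone)\cdot\be}=e^{\im\Kone\cdot\be}$ for every $\be\in\Lambda$ since $R^{-1}\Kone\in\tilde\Lambda^*$ — and is $\Kone$-quasiperiodic, hence it coincides with $G^f$. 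The argument for $\rflc$ is the same, using $\rflc^{T}=\rflc$ and $\rflc\tilde\Lambda^*=\tilde\Lambda^*$.

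Granting the invariance, part (i) follows by a change of variables. In $\mathcal S(\eta,\lambda,\Kone)[R\phi](\bx)=\int_{\partial D_*}G^f(\eta\bx,\eta\by;\lambda,\Kone)\,\phi(R^{-1}\by)\,ds_\by$ I would substitute $\by\mapsto R\by$, which is legitimate because $R$ maps $\partial D_*$ onto itself and preserves arclength; using $\eta R\by=R(\eta\by)$ and the Green-function invariance applied to the first argument, this becomes $\int_{\partial D_*}G^f(\eta R^{-1}\bx,\eta\by;\lambda,\Kone)\phi(\by)\,ds_\by=\big(\mathcal S(\eta,\lambda,\Kone)[\phi]\big)(R^{-1}\bx)=R\big(\mathcal S(\eta,\lambda,\Kone)[\phi]\big)(\bx)$, i.e. $\mathcal S(\eta,\lambda,\Kone)R=R\mathcal S(\eta,\lambda,\Kone)$. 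Replacing $R$ by $\rflc$ verbatim gives $\mathcal S(\eta,\lambda,\Kone)\rflc=\rflc\mathcal S(\eta,\lambda,\Kone)$.

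For part (ii), the expansion of $\frac{\im}{4}H_0^{(1)}$ together with the lattice sum \eqref{eq:GQPlattice} exhibits the kernel $G^f(\eta\bx,\eta\by;\lambda,\Kone)$ as $-\frac1{2\pi}\ln|\bx-\by|$ plus a smooth remainder, so $\mathcal S(\eta,\lambda,\Kone)$ is the usual single-layer operator and the classical layer-potential estimate gives boundedness $\HmhalfpD\to\HhalfpD$. Since by part (i) it commutes with the unitary $R$, and the spaces $H_i^{-1/2}(\partial D_*)$, $H_i^{1/2}(\partial D_*)$ are by \eqref{eq:HsD} precisely the $\tau^i$-eigenspaces of $R$, one has for $\phi\in H_i^{-1/2}(\partial D_*)$ that $R\,\mathcal S(\eta,\lambda,\Kone)[\phi]=\mathcal S(\eta,\lambda,\Kone)[R\phi]=\tau^i\,\mathcal S(\eta,\lambda,\Kone)[\phi]$, hence $\mathcal S(\eta,\lambda,\Kone)[\phi]\in H_i^{1/2}(\partial D_*)$, with boundedness inherited from that on the full space. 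Finally, part (iii): the map $\phi\mapsto\phi(\rflc(\cdot))$ carries $H_1^{-1/2}(\partial D_*)$ onto $H_2^{-1/2}(\partial D_*)$ (this is the relation $\rflc H_1^s(\partial D_*)=H_2^s(\partial D_*)$ noted before the lemma, which comes from $RF=FR^2$, with $\rflc^2=\mathrm{id}$ giving the reverse inclusion), and if $\phi\in H_1^{-1/2}(\partial D_*)$ solves \eqref{eq:bandD} with $\bp=\Kone$ then by part (i) $\mathcal S(\eta,\lambda,\Kone)[\phi(\rflc(\cdot))]=\rflc\big(\mathcal S(\eta,\lambda,\Kone)[\phi]\big)=0$; applying $\rflc$ once more yields the converse. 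The only genuinely delicate step is the Green-function invariance near the Wood anomaly, which the uniqueness argument above settles cleanly; everything after that is bookkeeping with unitary intertwiners.
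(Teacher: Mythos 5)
Your proposal is correct and follows essentially the same route as the paper: both arguments rest on the invariance of $G^f(\cdot,\cdot;\lambda,\Kone)$ under the diagonal action of $R$ and $\rflc$ (coming from $R\tilde\Lambda^*=\tilde\Lambda^*$, $\rflc\tilde\Lambda^*=\tilde\Lambda^*$ and the orthogonality of $R$, $\rflc$), followed by a change of variables on $\partial D_*$, the identification of $H_i^{\pm 1/2}(\partial D_*)$ as eigenspaces of the unitary $R$, and the relation $RF=FR^2$ for part (iii). The only difference is in how the kernel symmetry is justified --- the paper reindexes the spectral sum \eqref{eq:GQPfreq} directly inside the integral, while you invoke uniqueness of the quasiperiodic Green function --- and both are legitimate for $\lambda$ in the range where the Green function is defined (in particular on $U_{\eta}$, where $\lambda$ stays away from the resonances $|\bm|^2$, $\bm\in\tilde\Lambda^*$).
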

\begin{proof}
For Statement (i), in light of \eqref{eq:GQPfreq} and \eqref{eq:dualsym}, we have
\begin{equation*}
\mathcal S (\eta,\lambda,\Kone) [R\phi] (\bx)=  - \frac{1}{ |\cC_z|}\int_{\partial D_*} \sum_{\bm\in \tilde\Lambda^*} \frac{1}{\lambda - |\bm|^2} e^{\im \bm \cdot(\bx-\by)} \phi(R^{-1}\by)\, ds_\by,
\end{equation*}

and
\begin{equation*}
\begin{aligned}
R\mathcal S (\eta,\lambda,\Kone) [\phi] (\bx) 
&=  - \frac{1}{ |\cC_z|}\int_{\partial D_*} \sum_{\bm\in\tilde\Lambda^*} \frac{1}{\lambda - |\bm|^2} e^{\im \bm \cdot (R^{-1}\bx-\by)}\phi(\by)\, ds_\by\\
&= - \frac{1}{ |\cC_z|}\int_{\partial D_*} \sum_{\bm\in\tilde\Lambda^*} \frac{1}{\lambda - |\bm|^2} e^{\im R\bm \cdot (\bx-\by')} \phi(R^{-1}\by')\, ds_{\by'}
=\mathcal S (\eta,\lambda,\Kone) [R\phi] (\bx).
\end{aligned}
\end{equation*}

In the above, we have used $R(\partial D_*) = \partial D_*$, $|R\bm|=|\bm|$ and $R\tilde\Lambda^* = \tilde\Lambda^*$.
The relation 
\begin{equation*}
\mathcal S (\eta,\lambda,\Kone)[\rflc\phi](\bx) = \rflc\mathcal S (\eta,\lambda,\Kone)[\phi](\bx) 
\end{equation*}
can be shown similarly using the relation $\rflc\tilde\Lambda^* = \tilde\Lambda^*$.

Statement (ii) follows from the standard layer potential theory; see for instance \cite{Ammari-book}.

Statement (iii) is a consequence of the relation $RF = FR^2$, which implies $\phi(\bx)\in H_{1}^{-1/2}(\partial D_*)$ if and only if $\phi(F\bx)\in H_{2}^{-1/2}(\partial D_*)$.
\end{proof}

\subsection{Dirac points in the lowest two bands}\label{sec:Dirac}
In this subsection, we establish the existence of Dirac points in the lowest two bands. In view of \eqref{eq:GQPlattice} and \eqref{eq:GQPfreq}, when $\bp = K$, the Green function $G^f(\bx,\by;\lambda, K)$ attains singularities around $|\bx-\by|=0$ and $\lambda = |\bm|^2$ for each $\bm\in\tilde\Lambda^*$. The singularity for the former arises naturally when the source point $\bf y$ and the target point $\bx$ overlap, while the latter occurs at special frequencies $\lambda = |\bm|^2$ when the spectral decomposition \eqref{eq:GQPfreq} is not well-defined.

As to be shown below, the Dirac point at $\Kone$ with the lowest energy $\lambda$ appears when $\lambda \approx  |\bm_1|^2$, where $\bm_1\in\tilde \Lambda^*$ attains the smallest norm among all lattice points in $\tilde \Lambda^*$. 
A straightforward calculation shows that
\begin{equation*}
|\bm_1|=|\Kone|, \quad
\{\bm\in \tilde \Lambda^*, |\bm|=|\bm_1|\} =\Kone+\left\{\bq_1,\bq_2,\bq_3\right\},
\end{equation*}
in which
\begin{equation*}
\bq_1=(0,0)^T, \quad \bq_2=2\pi(-\frac{2}{\sqrt3},0)^T, \quad \bq_3=2\pi(-\frac{1}{\sqrt3},-1)^T.
\end{equation*}

We now perform asymptotic expansion of the operator $\mathcal{S}(\eta,\lambda,\Kone)$ for $\lambda\approx|\bm_1|^2=|\Kone|^2$. 
To this end, we derive the expansion for Green's function $G^f(\eta\bx,\eta\by;\lambda, \Kone)$ when $\eta$ is small.
For simplicity we consider $G^f(\eta\bx,0;\lambda, \Kone)$ instead, since 
$G^f(\eta\bx,\eta\by;\lambda, \Kone) = G^f(\eta(\bx-\by), 0;\lambda, \Kone)$.
From the above discussions, the Green function $G^f(\eta\bx, 0;\lambda, \Kone)$ attains singularities when $\bx=0$ or  $\lambda = |\bm|^2$ for some $\bm\in \tilde\Lambda^*$, and those terms contributing to the singularities are the leading-order terms in the expansion of $G^f(\eta\bx, 0;\lambda, \Kone)$.

From the lattice sum \eqref{eq:GQPlattice}, the singularity at $\bx=0$ arises from the term $\frac{\im}{4}H_0^{(1)}(\eta\bx;\lambda)$ with $\be = (0,0)^T$. Using the expansion
\begin{equation*}
\frac{\im}{4}H_0^{(1)}(\eta\bx;\lambda) = - \frac{1}{2\pi}\left(\ln|\bx|+\ln\eta+\ln\sqrt{\lambda}+\gamma_0+\left(\ln(\sqrt{\lambda}|\bx|)+\ln\eta\right)\sum_{p\geq1}b_{p,1}(\sqrt{\lambda}\eta|\bx|)^{2p}+\sum_{p\geq1}b_{p,2}(\sqrt{\lambda}\eta|\bx|)^{2p}
 \right),
\end{equation*}
we define the leading-order term by $L_1 (\eta\bx;\lambda)$ and the remainder by $R_1 (\eta\bx;\lambda)$ as follows:
\begin{equation}\label{eq:pDlead1}
\begin{aligned}
L_1 (\eta\bx;\lambda)&:=
- \frac{1}{2\pi}\left(\ln|\bx|+\ln\eta+\ln\sqrt{\lambda}+\gamma_0\right), \\
R_1 (\eta\bx;\lambda)&:= - \frac{1}{2\pi}\left((\ln(\sqrt{\lambda}|\bx|)+\ln\eta)\sum_{p\geq1}b_{p,1}(\sqrt{\lambda}\eta|\bx|)^{2p}+\sum_{p\geq1}b_{p,2}(\sqrt{\lambda}\eta|\bx|)^{2p}
 \right).
\end{aligned}
\end{equation}
From the spectral decompostion \eqref{eq:GQPfreq}, the singularity at $\lambda \approx  |\bm_1|^2$ arises from the terms
\begin{equation*}
\begin{aligned}
- \frac{1}{ |\cC_z|} \sum_{\Kone+\bq\in[\bm_1]} \frac{e^{\im (\bp +\bq)\cdot\bx}}{\lambda - |\bp+\bq|^2} 
&= - \frac{1}{ |\cC_z|} \sum_{\Kone+\bq\in[\bm_1]} \frac{e^{\im (\bp +\bq)\cdot\bx}}{\lambda - |\bp+\bq|^2} \\ 
&= - \frac{1}{ |\cC_z|}\frac{1}{\lambda - |\bm_1|^2}\left(3 - \frac{1}{3}(2\pi)^2(\eta|\bx|)^2 + \sum_{j\geq 3, k=1,2,3}\frac{(\im (\Kone +\bq_k)\cdot\eta\bx)^j}{j!}\right).
\end{aligned}
\end{equation*}
We consider $\lambda$ in the $\eta^2$-neighborhood of $|\bm_1|^2$, namely, $\lambda\in U_\eta$ where $U_\eta$ is defined in \eqref{eq:vincexa}. Correspondingly, we define 
 the leading-order term by $L_2 (\eta\bx;\lambda)$ and the remainder by $R_2 (\eta\bx;\lambda)$ as follows:
\begin{equation}\label{eq:pDlead2}
\begin{aligned}
L_2 (\eta\bx;\lambda,\Kone)&:=
- \frac{1}{ |\cC_z|}\frac{1}{\lambda - |\bm_1|^2}\left(3 - \frac{1}{3}(2\pi)^2(\eta|\bx|)^2 \right),\\
R_2 (\eta\bx;\lambda,\Kone)&:= - \frac{1}{ |\cC_z|}\frac{1}{\lambda - |\bm_1|^2}\left(\sum_{j\geq 3, k=1,2,3}\frac{(\im (\Kone +\bq_k)\cdot\eta\bx)^j}{j!}\right).
\end{aligned}
\end{equation}
Finally, the smooth term in the Green's function is denoted by
\begin{equation}
R_0 (\eta\bx;\lambda):= G^f(\eta\bx,0;\lambda, \Kone) -\frac{\im}{4}H_0^{(1)}(\eta\bx;\lambda) + \frac{1}{ |\cC_z|} \sum_{\Kone+\bq\in[\bm_1]} \frac{e^{\im (\Kone +\bq)\cdot\bx}}{\lambda - |\Kone+\bq|^2}  .
\end{equation}

Using the above expansions for the Green's function, we obtain the decomposition for the integral operator $\mathcal S(\eta,\lambda,\Kone)$:
\begin{equation}
\mathcal S (\eta,\lambda,\Kone) =\mathcal L(\eta,\lambda,\Kone) + \mathcal R(\eta,\lambda,\Kone),
\end{equation}
where the leading-order integral operator is
\begin{equation}\label{eq:pDlead}
\mathcal L(\eta,\lambda,\Kone)\phi(\bx):= \int (L_1 (\eta(\bx-\by);\lambda)+L_2 (\eta(\bx-\by);\lambda,\Kone))\phi(\by)\, d_{s_\by}, 
\end{equation}
and the remainder operator is
\begin{equation}\label{eq:remainder}
\mathcal R(\eta,\lambda,\Kone)\phi(\bx):= \int (R_1 (\eta(\bx-\by);\lambda)+R_2 (\eta(\bx-\by);\lambda,\Kone)+R_0 (\eta(\bx-\by);\lambda,\Kone))\phi(\by)\, d_{s_\by}. 
\end{equation}

Let $\mathcal S_0:\HmhalfpD\to\HhalfpD$ be the single layer potential associated with the Laplace operator in free space defined by
\begin{equation}\label{eq:SLap}
\mathcal S_0[\phi](\bx) :=  \int_{\partial D_*} - \frac{1}{2\pi} \ln(|\bx -\by|) \phi(\by)\, ds_\by.
\end{equation}

\begin{assumption} \label{assumpS0}
 The operator  $\mathcal S_0: H^{-1/2}(\partial D_*)\to H^{1/2}(\partial D_*)$ is invertible. 
 \end{assumption}

\begin{remark}
According to ~\cite{perfekt2014spectral,verchota1984layer}, the above assumption holds generically for a given geometry of the inclusion $D_*$. Therefore, we assume that Assumption \ref{assumpS0} holds throughout the paper.
\end{remark}
 

\begin{remark}\label{lem:moresym}
The operator $\cS_0$ defined in \eqref{eq:SLap}, and the operators $\mathcal L(\eta,\lambda_0,\Kone)$ and $\mathcal R(\eta,\lambda_0,\Kone)$ commute with $R$ and $\rflc$.
\end{remark}

\begin{lemma}\label{lem:vansym}
When $\phi\in H_{1}^{-1/2}(\partial D_*)$,
\begin{subequations} 
\begin{align}
\int_{\partial D_*} \phi(\by)\, ds_\by &= 0, \label{eq:vansym1} \\
\int_{\partial D_*} |\by|^2 \phi(\by)\, ds_\by &= 0, \label{eq:vansym2}\\
\int_{\partial D_*} \by \phi(\by)\, ds_\by &\in\text{span}\{ (1,\im)\}, \label{eq:vansym3}
\end{align}
\end{subequations}
and
\begin{equation}\label{eq:leading_order_int}
 \int_{\partial D_*} L_2 (\eta(\bx-\by);\lambda,\Kone) \phi(\by)\, ds_\by \in\text{span}\{x_1+ \im x_2\}.
\end{equation}
\end{lemma}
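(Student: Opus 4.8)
The plan is to exploit the symmetry of the space $H_{1}^{-1/2}(\partial D_*)$ under the rotation operator $R$, combined with the explicit action of $R$ on low-degree polynomials. First I would recall that for $\phi\in H_{1}^{-1/2}(\partial D_*)$ one has $R\phi=\tau\phi$, i.e. $\phi(R^{-1}\bx)=\tau\phi(\bx)$, equivalently $\phi(R\bx)=\bar\tau\phi(\bx)$. The key mechanism is that if $P(\bx)$ is any function on $\partial D_*$ with $P(R\bx)=\tau^k P(\bx)$ for some $k$, then the change of variables $\by\mapsto R\by$ (which preserves $\partial D_*$ and arclength, since $R(\partial D_*)=\partial D_*$ and $R$ is an isometry) gives
\begin{equation*}
\int_{\partial D_*}P(\by)\phi(\by)\,ds_\by=\int_{\partial D_*}P(R\by)\phi(R\by)\,ds_\by=\tau^{k}\bar\tau\int_{\partial D_*}P(\by)\phi(\by)\,ds_\by=\tau^{k-1}\int_{\partial D_*}P(\by)\phi(\by)\,ds_\by,
\end{equation*}
so the integral vanishes unless $k\equiv1\pmod 3$.

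For \eqref{eq:vansym1} take $P\equiv1$, which is $R$-invariant, so $k=0\not\equiv1$ and the integral vanishes. For \eqref{eq:vansym2} take $P(\by)=|\by|^2$, again $R$-invariant ($|R\by|=|\by|$), so $k=0$ and the integral vanishes. For \eqref{eq:vansym3} I would diagonalize the $R$-action on the two-dimensional space spanned by the coordinate functions $y_1,y_2$: since $R$ is a rotation by $2\pi/3$, writing $y_1+\im y_2$ and $y_1-\im y_2$ as a basis, one checks $R$ acts on $y_1+\im y_2$ by multiplication by $e^{\im 2\pi/3}=\tau$ and on $y_1-\im y_2$ by $\bar\tau$ (here I use that a rotation by angle $\theta$ sends the complex coordinate $z=y_1+\im y_2$ to $e^{\im\theta}z$). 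Thus for $P=y_1-\im y_2$ we get $k=2\not\equiv1$, so $\int_{\partial D_*}(y_1-\im y_2)\phi\,ds_\by=0$, while for $P=y_1+\im y_2$ we get $k=1$ and no constraint. Hence $\int_{\partial D_*}\by\,\phi(\by)\,ds_\by$ is a multiple of $(1,\im)$, which is \eqref{eq:vansym3}.

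For \eqref{eq:leading_order_int} I would substitute the explicit formula for $L_2$ from \eqref{eq:pDlead2}: $L_2(\eta(\bx-\by);\lambda,\Kone)=-\frac{1}{|\cC_z|}\frac{1}{\lambda-|\bm_1|^2}\big(3-\tfrac13(2\pi)^2\eta^2|\bx-\by|^2\big)$. Expanding $|\bx-\by|^2=|\bx|^2-2\bx\cdot\by+|\by|^2$, the integral against $\phi$ produces four terms: the constant $3$ and the $|\bx|^2$ term both pull out a factor $\int\phi\,ds_\by$, which is $0$ by \eqref{eq:vansym1}; the $|\by|^2$ term contributes $\int|\by|^2\phi\,ds_\by=0$ by \eqref{eq:vansym2}; and the cross term $-2\bx\cdot\by$ contributes a multiple of $\bx\cdot\int\by\,\phi(\by)\,ds_\by$, which by \eqref{eq:vansym3} is a multiple of $\bx\cdot(1,\im)=x_1+\im x_2$. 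Therefore the whole expression lies in $\mathrm{span}\{x_1+\im x_2\}$, as claimed. I do not anticipate a serious obstacle here; the only point requiring a little care is getting the eigenvalue bookkeeping of the $R$-action on $\{y_1,y_2\}$ right (in particular the sign/conjugation convention so that it is $y_1+\im y_2$ and not $y_1-\im y_2$ that survives, consistent with the appearance of $(1,\im)$ rather than $(1,-\im)$), and confirming that $R$ genuinely preserves arclength measure on $\partial D_*$ so the change of variables is exact.
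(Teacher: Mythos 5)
Your overall strategy — change variables $\by\mapsto R\by$ on $\partial D_*$, use that $R$ preserves arclength and that $\phi(R\by)=\bar\tau\phi(\by)$, and read off which monomial weights survive — is exactly the mechanism the paper uses, and your treatments of \eqref{eq:vansym1}, \eqref{eq:vansym2}, and \eqref{eq:leading_order_int} are correct.

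However, your derivation of \eqref{eq:vansym3} contains two sign errors that happen to cancel, so the final answer comes out right by accident. First, the matrix in \eqref{eq:RF},
\begin{equation*}
R=\begin{pmatrix}-\tfrac12 & \tfrac{\sqrt3}{2}\\ -\tfrac{\sqrt3}{2} & -\tfrac12\end{pmatrix},
\end{equation*}
is a rotation by $-2\pi/3$ in the standard counterclockwise-positive convention (its lower-left entry is negative, not positive). In complex coordinates $z=y_1+\im y_2$ this means $R$ sends $z\mapsto \bar\tau z$, i.e.\ $P(R\by)=\bar\tau P(\by)=\tau^{2}P(\by)$ for $P=y_1+\im y_2$, so $k=2$ for $y_1+\im y_2$ and $k=1$ for $y_1-\im y_2$ — the opposite of what you wrote. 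Second, and independently, your translation from a vanishing moment to a span is reversed: writing $a=\int y_1\phi$, $b=\int y_2\phi$, the condition $\int (y_1-\im y_2)\phi=0$ is $a=\im b$, which gives $(a,b)\in\text{span}\{(1,-\im)\}$, not $\text{span}\{(1,\im)\}$; it is $\int(y_1+\im y_2)\phi=0$ that yields $(a,b)\in\text{span}\{(1,\im)\}$. With both corrections your argument becomes sound: one has $k=2$ for $y_1+\im y_2$, hence $\int(y_1+\im y_2)\phi\,ds_\by=0$, hence $\int\by\,\phi\,ds_\by\in\text{span}\{(1,\im)\}$, consistent with the paper's calculation $(a,b)=\tau R^{-1}(a,b)$, which identifies $(a,b)$ as a $\tau$-eigenvector of $R$, and $(1,\im)$ is precisely that eigenvector. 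You flagged the bookkeeping as the delicate point; it is, and both of your specific claims there are inverted.
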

\begin{proof}
Using $R\phi(\by)=\phi(R^{-1}\by)=\tau\phi(\by)$, we have
\begin{equation*}
\begin{aligned}
\int_{\partial D_*} \phi(\by)\, ds_\by &= \int_{\partial D_*}  \phi(R^{-1}\by')\, ds_{\by'} = \int_{\partial D_*} \tau \phi(\by')\, ds_{\by'} , \\
\int_{\partial D_*} |\by|^2 \phi(\by)\, ds_\by &= \int_{\partial D_*} |R^{-1}\by'|^2 \phi(R^{-1}\by')\, ds_{\by'} = \int_{\partial D_*}  |\by'|^2 \tau \phi(\by')\, ds_{\by'} . 
\end{aligned}
\end{equation*}
Since $\tau\neq 1$, we obtain \eqref{eq:vansym1} and \eqref{eq:vansym2}. Similarly,
\begin{equation*}
\int_{\partial D_*} \by \phi(\by)\, ds_\by = \int_{\partial D_*}R^{-1}\by' \phi(R^{-1}\by')\, ds_{\by'} = \int_{\partial D_*} R^{-1}\by' \tau \phi(\by')\, ds_{\by'} . 
\end{equation*}
Denoting $(a,b):=\int_{\partial D_*} \by \phi(\by)\, ds_\by$, the above relation reads $(a,b)=\tau R^{-1}(a,b)$, which implies ~\eqref{eq:vansym3}. Finally, \eqref{eq:leading_order_int} follows from \eqref{eq:vansym1} - \eqref{eq:vansym3}.
\end{proof}

\begin{lemma}\label{lem:defa}
There exists a unique function $\phi_*\in \HmhalfpD$ such that $\mathcal S_0[\phi_*](\bx) = x_1+\im \, x_2$. Moreover,  
$\displaystyle{\int_{\partial D_*} \by \phi_*(\by)\, ds_\by = \mathfrak a \, (1,\im)}$ for some $\mathfrak a\in\mathbb C\backslash\{0\}$.
\end{lemma}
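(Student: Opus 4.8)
The plan is to produce $\phi_*$ by inverting $\mathcal S_0$, to show that it lies in the $\tau$-eigenspace $H_{1}^{-1/2}(\partial D_*)$ so that Lemma~\ref{lem:vansym} constrains $\int_{\partial D_*}\by\,\phi_*\,ds_\by$ to be a multiple of $(1,\im)$, and finally to identify the constant $\mathfrak a$ through an energy identity for the single-layer potential $\mathcal S_0[\phi_*]$, which will simultaneously show $\mathfrak a\neq 0$.

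First, since $\bx\mapsto x_1+\im x_2$ is smooth on the smooth curve $\partial D_*$, it lies in $\HhalfpD$, and Assumption~\ref{assumpS0} makes $\mathcal S_0:\HmhalfpD\to\HhalfpD$ invertible, so $\phi_*:=\mathcal S_0^{-1}[x_1+\im x_2]$ is the unique element of $\HmhalfpD$ with $\mathcal S_0[\phi_*]=x_1+\im x_2$. Next I would identify its symmetry class: a short computation from \eqref{eq:RF} gives $R(x_1+\im x_2)(\bx)=(x_1+\im x_2)(R^{-1}\bx)=\tau\,(x_1+\im x_2)(\bx)$, so $x_1+\im x_2\in H_{1}^{1/2}(\partial D_*)$; and since $\mathcal S_0$ commutes with $R$ (Remark~\ref{lem:moresym}), so does $\mathcal S_0^{-1}$, which therefore preserves each $\tau^i$-eigenspace of $R$, whence $\phi_*\in H_{1}^{-1/2}(\partial D_*)$. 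Lemma~\ref{lem:vansym} then immediately gives $\int_{\partial D_*}\phi_*\,ds=0$ from \eqref{eq:vansym1} and $\int_{\partial D_*}\by\,\phi_*(\by)\,ds_\by=\mathfrak a\,(1,\im)$ for some $\mathfrak a\in\bbC$ from \eqref{eq:vansym3}.

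It remains to show $\mathfrak a\neq 0$, which is the one genuinely non-formal step. I would pass to $u:=\mathcal S_0[\phi_*]$, which is harmonic in $\bbR^2\setminus\partial D_*$ and continuous across $\partial D_*$ with trace $x_1+\im x_2$; by uniqueness for the interior Dirichlet problem (the data being itself harmonic) one has $u\equiv x_1+\im x_2$ on $D_*$. Because $\int_{\partial D_*}\phi_*\,ds=0$, the potential decays like $u=O(|\bx|^{-1})$ and $\nabla u=O(|\bx|^{-2})$ as $|\bx|\to\infty$, so Green's identity applies on $D_*$ and on $\bbR^2\setminus\overline{D_*}$ with no contribution from the circle at infinity; adding the two identities and invoking the normal-derivative jump $\partial_\nu u|_- -\partial_\nu u|_+=\phi_*$ yields
\[
\int_{\bbR^2}|\nabla u|^2\,d\bx=\int_{\partial D_*}\bar u\,\phi_*\,ds=\int_{\partial D_*}(x_1-\im x_2)\,\phi_*(\bx)\,ds_\bx=\mathfrak a\,(1,-\im)\cdot(1,\im)=2\mathfrak a .
\]
Since $u\equiv x_1+\im x_2$ on $D_*$, the left-hand side is at least $\int_{D_*}|\nabla(x_1+\im x_2)|^2\,d\bx=2|D_*|>0$, hence $\mathfrak a\ge|D_*|>0$; in particular $\mathfrak a\neq 0$.

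The main obstacle is this last step, and it hides two small subtleties. First, in the displayed chain one must pair the boundary trace of $\bar u$, not of $u$, against $\phi_*$: pairing $x_1+\im x_2$ directly against $\phi_*$ returns only the uninformative $\mathfrak a\,(1,\im)\cdot(1,\im)=0$. Second, the vanishing of the boundary term over $\{|\bx|=R\}$ in the exterior Green's identity must be justified, and the decay $u=O(|\bx|^{-1})$ that makes it work is exactly what is supplied by $\int_{\partial D_*}\phi_*=0$, i.e.\ by $\phi_*\in H_{1}^{-1/2}(\partial D_*)$. All remaining steps are bookkeeping built on Assumption~\ref{assumpS0}, Remark~\ref{lem:moresym}, and Lemma~\ref{lem:vansym}.
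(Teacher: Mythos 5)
Your proof is correct, and its first three steps (existence and uniqueness of $\phi_*$ from Assumption~\ref{assumpS0}; identifying $\phi_*\in H_1^{-1/2}(\partial D_*)$ via the $R$--equivariance of $\mathcal S_0$ and of $x_1+\im x_2$; then invoking Lemma~\ref{lem:vansym} to pin $\int \by\,\phi_*\,ds$ to the $(1,\im)$ line) coincide exactly with the paper's. Where you diverge is the final non-vanishing step. The paper writes $2\mathfrak a=\langle f,\mathcal S_0^{-1}f\rangle_{\HhalfpD,\HmhalfpD}$ with $f=x_1+\im x_2$ and then simply asserts that $\langle\cdot,\mathcal S_0^{-1}\cdot\rangle$ is an equivalent inner product on $\HhalfpD$ (citing the layer-potential literature), which gives $\mathfrak a\neq 0$ in one line but imports an abstract fact about $\mathcal S_0$. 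You instead unpack that same sesquilinear form via Green's identity on $D_*$ and on the exterior, using the normal-derivative jump $\partial_\nu u|_- -\partial_\nu u|_+=\phi_*$ and the decay $u=O(|\bx|^{-1})$ furnished by $\int\phi_*=0$, to get $\int_{\bbR^2}|\nabla u|^2 = 2\mathfrak a$, and then use $u\equiv x_1+\im x_2$ on $D_*$ to bound the left side below by $2|D_*|$. This is the concrete PDE incarnation of the paper's ``equivalent inner product'' claim, and it has two advantages: it makes explicit why the mean-zero condition on $\phi_*$ matters (it is exactly what kills the boundary term at infinity, and it sidesteps any worry about the logarithmic-capacity degeneracy of $\mathcal S_0$ in two dimensions), and it yields the stronger conclusion $\mathfrak a\ge|D_*|>0$, in particular that $\mathfrak a$ is real and positive, which the paper's statement $\mathfrak a\in\bbC\setminus\{0\}$ does not record. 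You also correctly flag the pitfall of pairing $u$ rather than $\bar u$ against $\phi_*$, which is the same reason the paper's pairing must be read sesquilinearly. No gaps; this is a valid and slightly more informative alternative to the paper's last step.
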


\begin{proof}
Noting that $x_1+\im x_2 \in H_{1}^{-1/2}(\partial D_*)$,  we deduce that $\phi_*\in H_{1}^{-1/2}(\partial D_*)$ exists and is unique. 
Combining with Lemma~\ref{lem:vansym}, we have
\begin{equation} \label{express-a}
\mathfrak a:= \frac{\int_{\partial D_*} \bx\cdot \by \phi_*(\by)\, ds_\by}{x_1+\im x_2} \in\mathbb C.
\end{equation}
To show $\mathfrak a\neq0$, we notice that 
\begin{equation}
2\mathfrak a = (1,-\im)\cdot
\int_{\partial D_*} \by \phi_*(\by)\, ds_\by =\langle f, \cS_0^{-1}f \rangle_{\HhalfpD,\HmhalfpD}\neq0,
\end{equation}
where $f(\bx)=x_1+\im x_2$.
The inequality follows since $\langle \cdot, \cS_0^{-1}\cdot \rangle$ is an equivalent inner product on $\HhalfpD\times\HhalfpD$. 
\end{proof}

\begin{lemma}\label{lem:LeadD1}
When $\eta$ is sufficiently small, the following statements hold for the operator $\mathcal L(\eta,\lambda,\Kone):H_{1}^{-1/2}(\partial D_*) \to H_{1}^{1/2}(\partial D_*)$:
\begin{itemize}
    \item [(i)] $\mathcal L(\eta,\lambda,\Kone)$ is  analytic in $\lambda$ in a neighborhood of $U_{\eta}$.
    \item [(ii)] $\mathcal L(\eta,\lambda,\Kone)$ is a Fredholm operator of index zero for $\lambda\in U_{\eta}$.
    \item [(iii)] The only characteristic value of $\mathcal L(\eta,\lambda,\Kone)$ located in $U_{\eta}$ is given by
\begin{equation*}
    \lambda_0:= |\bm_1|^2 + \frac{1}{ |\cC_z|}\frac{2}{3}(2\pi)^2 \mathfrak a \eta^2. 
\end{equation*}
Moreover, 
\begin{equation*}
\ker\left(\mathcal L(\eta,\lambda_0,\Kone)\right) = \text{span}\{\phi_*\},
\end{equation*}
wherein $\phi_*$ is defined in \eqref{lem:defa}.
\item [(iv)]  The  multiplicity of $\lambda_0$ is $1$.
\item [(v)] For $\lambda\in\partial U_{\eta}$, $\mathcal L^{-1}(\eta,\lambda,\Kone)$ exists and the norm $||\mathcal L^{-1}(\eta,\lambda,\Kone)||$ is bounded by a constant indepent of $\eta$.

\end{itemize}
\end{lemma}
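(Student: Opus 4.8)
The plan is to analyze the leading-order operator $\mathcal L(\eta,\lambda,\Kone)$ by rescaling it against the Laplace single-layer operator $\cS_0$, whose invertibility is guaranteed by Assumption~\ref{assumpS0}. First I would write $\mathcal L(\eta,\lambda,\Kone)=\cS_0 + A(\eta,\lambda)$, where $A(\eta,\lambda)$ collects the remaining contributions of $L_1$ (the constant-in-$\bx$ terms $-\frac{1}{2\pi}(\ln\eta+\ln\sqrt\lambda+\gamma_0)$, which form a rank-one operator mapping $\phi$ to a constant times $\int_{\partial D_*}\phi\,ds$) and of $L_2$ (which by \eqref{eq:leading_order_int} maps into $\mathrm{span}\{x_1+\im x_2\}$, hence is also finite-rank). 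For statement (i), analyticity in $\lambda$ near $U_\eta$ is immediate since the only $\lambda$-dependence is through $\ln\sqrt\lambda$ (analytic away from $\lambda=0$, and $U_\eta$ stays near $|\bm_1|^2>0$) and through the factor $(\lambda-|\bm_1|^2)^{-1}$, which is analytic in a neighborhood of $U_\eta$ because $U_\eta$ is an annulus centered at $|\bm_1|^2$ excluding that point. For statement (ii), since $\cS_0$ is invertible (Assumption~\ref{assumpS0}) and $A(\eta,\lambda)$ is finite-rank, hence compact, $\mathcal L$ is a compact perturbation of an invertible operator and thus Fredholm of index zero; restricting to $H_1^{-1/2}(\partial D_*)\to H_1^{1/2}(\partial D_*)$ is legitimate because all operators involved commute with $R$ by Remark~\ref{lem:moresym}.

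For statements (iii) and (iv), the strategy is to solve $\mathcal L(\eta,\lambda,\Kone)\phi=0$ explicitly on $H_1^{-1/2}(\partial D_*)$. By Lemma~\ref{lem:vansym}, $\int_{\partial D_*}\phi\,ds=0$ for $\phi\in H_1^{-1/2}$, so the rank-one part of $L_1$ annihilates such $\phi$ and the equation reduces to $\cS_0[\phi](\bx) + \big(\text{contribution of }L_2\big)=0$. Using the computation preceding \eqref{eq:pDlead2} and Lemma~\ref{lem:vansym}, the $L_2$ contribution equals $-\frac{1}{|\cC_z|}\frac{1}{\lambda-|\bm_1|^2}\big(-\frac13(2\pi)^2\big)\eta^2 \cdot \big(c_1 x_1 + \im c_1 x_2\big)$ where $c_1 = \int_{\partial D_*} |\by|^2\phi\,ds / \ldots$ — more precisely, expanding $|\eta(\bx-\by)|^2 = \eta^2(|\bx|^2 - 2\bx\cdot\by + |\by|^2)$ and using \eqref{eq:vansym1}, \eqref{eq:vansym2}, only the cross term $-2\bx\cdot\by$ survives, producing a multiple of $\int_{\partial D_*}\by\phi(\by)\,ds = \mathfrak c\,(1,\im)$ for a scalar $\mathfrak c$ depending on $\phi$. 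So the equation becomes $\cS_0[\phi] = (\text{const}(\eta,\lambda)\cdot \mathfrak c(\phi))\,(x_1+\im x_2)$, i.e. $\phi$ must be proportional to $\phi_*$ from Lemma~\ref{lem:defa}. Substituting $\phi=\phi_*$ and using $\int_{\partial D_*}\by\phi_*\,ds=\mathfrak a(1,\im)$ gives a scalar consistency equation in $\lambda$ whose unique solution in $U_\eta$ is $\lambda_0 = |\bm_1|^2 + \frac{1}{|\cC_z|}\frac{2}{3}(2\pi)^2\mathfrak a\eta^2$; one checks $\lambda_0\in U_\eta$ from the definition \eqref{eq:vincexa}. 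This simultaneously shows $\ker\mathcal L(\eta,\lambda_0,\Kone)=\mathrm{span}\{\phi_*\}$ (geometric multiplicity one), and the algebraic multiplicity is one because $\partial_\lambda$ of the scalar consistency function at $\lambda_0$ is nonzero (the $(\lambda-|\bm_1|^2)^{-1}$ dependence has nonvanishing derivative) — this is the standard characteristic-value multiplicity computation via the Gohberg–Sigal theory, using that the reduced problem is effectively $1\times1$ on the relevant one-dimensional subspace.

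For statement (v), on $\lambda\in\partial U_\eta$ I would establish a uniform lower bound: write $\mathcal L(\eta,\lambda,\Kone)=\cS_0(I + \cS_0^{-1}A(\eta,\lambda))$, and since $A(\eta,\lambda)$ has rank at most two with range spanned by $1$ and $x_1+\im x_2$, invertibility and a norm bound reduce to invertibility of an explicit $2\times2$ (in fact effectively scalar, after restricting to $H_1^{-1/2}$) matrix whose entries are $\lambda$- and $\eta$-dependent scalars; the relevant determinant is, up to nonvanishing factors, proportional to $\lambda - \lambda_0$, and on $\partial U_\eta$ we have $|\lambda - \lambda_0|\geq c\eta^2$ while the prefactors scale like $\eta^{-2}$, so the product is bounded below by a constant independent of $\eta$. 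Inverting the $2\times2$ system and composing with $\cS_0^{-1}$ then yields the claimed uniform bound on $\|\mathcal L^{-1}(\eta,\lambda,\Kone)\|$.

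\medskip

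\noindent\textbf{Main obstacle.} The delicate point is the bookkeeping in statement (v): one must track the precise $\eta$-powers in every entry of the reduced finite-rank system and confirm that the near-cancellation at $\lambda_0$ is exactly compensated by the annulus width $\sim\eta^2$ in $U_\eta$, so that no hidden $\eta$-dependence survives in the final bound. The algebra behind (iii)–(iv) is routine once Lemma~\ref{lem:vansym} is invoked, but getting the constant in $\lambda_0$ and the multiplicity-one claim right requires carefully isolating which terms in the $L_1,L_2$ expansions actually act nontrivially on $H_1^{-1/2}(\partial D_*)$.
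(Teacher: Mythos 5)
The proposal is correct and follows essentially the same route as the paper: reduce to a rank-one perturbation of $\cS_0$ on $H_1^{-1/2}(\partial D_*)$ using Lemma~\ref{lem:vansym}, solve the scalar consistency equation for (iii)--(iv), and for (v) observe that the relevant scalar $1-\tfrac{C\eta^2\mathfrak a}{\lambda-|\bm_1|^2}=\tfrac{\lambda-\lambda_0}{\lambda-|\bm_1|^2}$ stays bounded away from zero on $\partial U_\eta$. The only cosmetic difference is in (v): the paper makes this uniformity transparent by writing out $\mathcal L$ on each boundary circle and noting the resulting operator is literally $\eta$-independent, whereas you obtain the same conclusion by matching the $\eta^{2}$ scale of $|\lambda-\lambda_0|$ against the $\eta^{-2}$ prefactor; both arguments are equivalent.
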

\begin{proof}
\noindent(i) is obvious from the definition of the operator in \eqref{eq:pDlead1}, \eqref{eq:pDlead2} and \eqref{eq:pDlead}.

\noindent (ii) From Lemma~\ref{lem:vansym}, $\mathcal L(\eta,\lambda,\Kone)$ is the sum of $\cS_0:H_{1}^{-1/2}(\partial D_*) \to H_{1}^{1/2}(\partial D_*)$, which is Fredholm of index zero~\cite{mclean2000strongly}, and a finite-rank operator whose range is in $\text{span}\{x_1+\im x_2\}$. 

\noindent(iii). Using Lemma~\ref{lem:vansym}, we see that $\mathcal L(\eta,\lambda,\Kone)[\phi](\bx) =0$ implies $\cS_0\phi = -L_2(\eta,\lambda,\Kone)[\phi](\bx)\in\text{span}\{x_1+\im x_2\}$, thus $\phi \in\text{span}\{ \phi_*\}$. 
In addition, a straightforward calculation shows that 
\begin{equation*}
\begin{aligned}
\mathcal L(\eta,\lambda,\bp)[\phi_*]
=(x_1+\im x_2)\left(1 - \frac{1}{ |\cC_z|}\frac{1}{\lambda - |\bm_1|^2} \frac{2}{3}(2\pi)^2 \eta^2 \mathfrak a\right).
\end{aligned}
\end{equation*}
Since $\mathfrak a\neq0$, $\mathcal L(\eta,\lambda,\bp)[\phi_*]=0$
 if and only if 
\begin{equation}
\lambda = \lambda_0:= \frac{1}{ |\cC_z|}\frac{2}{3}(2\pi)^2 \mathfrak a \eta^2 + |\bm_1|^2.
\end{equation} 

\noindent(iv). 
Following the definitions in Appendix~\ref{sec:GStheory}, we
assume that $\phi'\in H_{1}^{-1/2}(\partial D_*)$ satisfies
\begin{equation*}
\frac{d}{d\lambda}\mathcal L(\eta,\lambda_0,\Kone)[\phi_*] + \mathcal L(\eta,\lambda_0,\Kone)[\phi'] =0.
\end{equation*}
It can be shown that $\frac{d}{d\lambda}\mathcal L(\eta,\lambda_0,\Kone)[\phi_*] \in\text{span}\{ x_1+\im x_2\}$. Using $L_2(\eta,\lambda,\Kone)[\phi'](\bx)\in \text{span}\{ x_1+\im x_2\}$, we obtain $\cS_0\phi'\in \text{span}\{ x_1+\im x_2\}$,
which implies that $\phi'\propto\phi_*$. On the other hand, it follows from~(iii) that $\mathcal L(\eta,\lambda_0,\Kone)[\phi_*]=0$. Hence $\phi'$ does not exist, $\phi_*$ is of rank $1$, and the multiplicity of $\lambda_0$ is $1$.  \\

\noindent(v). For $\lambda\in\partial U_{\eta}$, $\mathcal L^{-1}(\eta,\lambda,\Kone)$ exists because   $\mathcal L(\eta,\lambda,\Kone)$ is Fredholm and has no characteristic values on $\partial U_{\eta}$.
When 
$\left|\lambda-|\Kone|^2 \right| = \frac{1}{ |\cC_z|}\frac{1}{3}(2\pi)^2 \mathfrak a\eta^2$, there holds
\begin{equation}
\mathcal L(\eta,\lambda,\Kone)\phi = \cS_0\phi +e^{\im\theta} \frac{2}{\mathfrak a}\bx\cdot\int_{\partial D_*}\by\phi(\by)\,ds_{\by},
\end{equation}
where $\theta\in\mathbb R$.
When 
$ \left|\lambda-|\Kone|^2 \right| = \frac{1}{ |\cC_z|}(2\pi)^2 \mathfrak a\eta^2$, there holds
\begin{equation}
\mathcal L(\eta,\lambda,\Kone)\phi = \cS_0\phi e^{\im\theta} \frac{2}{3\mathfrak a}\bx\cdot\int_{\partial D_*}\by\phi(\by)\,ds_{\by},
\end{equation}
where $\theta\in\mathbb R$.
The operators above do not depend on $\eta$. Thus the norm of $\mathcal L^{-1}(\eta,\lambda,\Kone)$ for $\lambda\in\partial U_\eta$ is bounded by a constant that does not depend on $\eta$.
\end{proof}
\begin{lemma}\label{lem:LeadD0}
When $\eta$ is sufficiently small, the following statements hold for the operator $\mathcal L(\eta,\lambda,\Kone):H_{0}^{-1/2}(\partial D_*) \to H_{0}^{1/2}(\partial D_*)$:
\begin{itemize}
    \item [(i)] $\mathcal L(\eta,\lambda,\Kone)$ is analytic in $\lambda$ in a neighborhood of $U_{\eta}$.
    \item[(ii)] For $\lambda\in\overline{U_{\eta}}$, $\mathcal L^{-1}(\eta,\lambda,\Kone)$ exists and the norm $\| \mathcal L^{-1}(\eta,\lambda,\Kone) \|_{H_{0}^{-1/2}(\partial D_*)\to H_0^{1/2}(\partial D_*)}$ is bounded by a positive constant  independent of $\eta$.
    \item [(iii)] $\mathcal L(\eta,\lambda,\Kone)$ is a Fredholm operator of index zero for $\lambda\in U_{\eta}$.
\end{itemize}
\end{lemma}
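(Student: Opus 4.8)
The plan is to re-run the argument of Lemma~\ref{lem:LeadD1}, keeping track of the one structural change that occurs on $H_0^{-1/2}(\partial D_*)$: of the three moment identities in Lemma~\ref{lem:vansym}, only the vanishing of the first moment survives. Indeed, if $\phi\in H_0^{-1/2}(\partial D_*)$ then $R\phi=\phi$, so substituting $\by\mapsto R^{-1}\by$ gives $\int_{\partial D_*}\by\,\phi(\by)\,ds_\by = R^{-1}\!\int_{\partial D_*}\by\,\phi(\by)\,ds_\by$, and since the rotation $R^{-1}$ has no nonzero fixed vector, $\int_{\partial D_*}\by\,\phi(\by)\,ds_\by=0$; on the other hand the functionals $P\phi:=\int_{\partial D_*}\phi\,ds_\by$ and $Q\phi:=\int_{\partial D_*}|\by|^2\phi(\by)\,ds_\by$ generically do not vanish. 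Expanding $|\eta(\bx-\by)|^2=\eta^2(|\bx|^2-2\bx\cdot\by+|\by|^2)$ in $L_2$, discarding the $\bx\cdot\by$ term (which pairs to zero against such $\phi$), and using that $L_1$ contributes $\cS_0$ plus a constant multiple of $P\phi$, one obtains the exact identity
\begin{equation*}
\mathcal L(\eta,\lambda,\Kone)\phi=\cS_0\phi+\big(a(\eta,\lambda)\,P\phi+b(\eta,\lambda)\,Q\phi\big)\cdot 1+b(\eta,\lambda)\,(P\phi)\,|\bx|^2,\qquad \phi\in H_0^{-1/2}(\partial D_*),
\end{equation*}
with $a(\eta,\lambda)=-\tfrac{1}{2\pi}\big(\ln\eta+\ln\sqrt\lambda+\gamma_0\big)-\tfrac{3}{|\cC_z|(\lambda-|\bm_1|^2)}$ and $b(\eta,\lambda)=\tfrac{(2\pi)^2\eta^2}{3|\cC_z|(\lambda-|\bm_1|^2)}$. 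Since $1$ and $|\bx|^2$ are $R$-invariant and smooth, they lie in $H_0^{1/2}(\partial D_*)$, so $\mathcal L(\eta,\lambda,\Kone)$ is a finite-rank perturbation of $\cS_0$ within the $R$-eigenspace. The two defining inequalities of $U_\eta$ give $|\lambda-|\bm_1|^2|\asymp\eta^2$ on $\overline{U_\eta}$, whence $|a(\eta,\lambda)|\gtrsim\eta^{-2}$ (the pole term swamps the $\ln\eta$ term), while $|b(\eta,\lambda)|$ is bounded by a constant independent of $\eta$.

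Statements (i) and (iii) are then quick. For (i): $\cS_0$ is independent of $\lambda$, and $a,b$ are analytic on a neighborhood of $U_\eta$, whose inner radius is a positive multiple of $\eta^2$, so $\lambda=|\bm_1|^2$ (and a fortiori $\lambda=0$) is excluded. For (iii): $\cS_0:H_0^{-1/2}(\partial D_*)\to H_0^{1/2}(\partial D_*)$ is Fredholm of index zero --- it is invertible on $H^{-1/2}(\partial D_*)$ by Assumption~\ref{assumpS0}, and it commutes with $R$ by Remark~\ref{lem:moresym}, hence restricts to an invertible, in particular index-zero Fredholm, operator on the $R$-eigenspace --- and $\mathcal L(\eta,\lambda,\Kone)$ differs from it by a finite-rank operator.

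For statement (ii), since $\mathcal L(\eta,\lambda,\Kone)$ is Fredholm of index zero, invertibility follows from injectivity, and I would obtain injectivity and the uniform bound together. Given $g\in H_0^{1/2}(\partial D_*)$, any solution of $\mathcal L(\eta,\lambda,\Kone)\phi=g$ must satisfy $\phi=\cS_0^{-1}g-\big(a\,P\phi+b\,Q\phi\big)\chi_1-b\,(P\phi)\,\chi_2$ with $\chi_1:=\cS_0^{-1}[1]$, $\chi_2:=\cS_0^{-1}[|\bx|^2]$. Applying $P$ and $Q$ to this identity turns it into a $2\times2$ linear system for $(P\phi,Q\phi)$ with matrix $N(\eta,\lambda)=\left(\begin{smallmatrix}1+a\sfc_0+b\sfc_1 & b\sfc_0\\ a\sfc_1+b\sfc_2 & 1+b\sfc_1\end{smallmatrix}\right)$, where $\sfc_0:=P\chi_1>0$, $\sfc_1:=Q\chi_1=P\chi_2$, $\sfc_2:=Q\chi_2$ are fixed constants (independent of $\eta$), the positivity of $\sfc_0$ coming from $\langle\cdot,\cS_0^{-1}\cdot\rangle$ being an equivalent inner product (cf.\ the proof of Lemma~\ref{lem:defa}). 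Because $a$ occurs only in the first column, $\det N(\eta,\lambda)$ is affine in $a$, and a short computation shows that the coefficient of $a$ is exactly $\sfc_0$ --- the $b$-dependent cross terms cancel --- so
\begin{equation*}
\det N(\eta,\lambda)=\sfc_0\,a(\eta,\lambda)+\big[(1+b\sfc_1)^2-b^2\sfc_0\sfc_2\big].
\end{equation*}
Since $\sfc_0>0$ is fixed, the bracket is $O(1)$ uniformly on $\overline{U_\eta}$, and $|a(\eta,\lambda)|\gtrsim\eta^{-2}$, we conclude $|\det N(\eta,\lambda)|\gtrsim\eta^{-2}$ for $\eta$ small, uniformly in $\lambda\in\overline{U_\eta}$; hence $N$ is invertible, $\mathcal L(\eta,\lambda,\Kone)$ is injective, and therefore invertible. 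For the norm bound, Cramer's rule gives $|P\phi|=O(\eta^2\|g\|)$ and $|Q\phi|=O(\|g\|)$ (the adjugate of $N$ has entries of size $O(\eta^{-2})$ while $\det N\gtrsim\eta^{-2}$); then $|a\,P\phi+b\,Q\phi|=O(\|g\|)$ and $|b\,P\phi|=O(\eta^2\|g\|)$, and substituting back into the formula for $\phi$ yields $\|\phi\|_{H_0^{-1/2}(\partial D_*)}\le\big(\|\cS_0^{-1}\|+O(1)\big)\|g\|_{H_0^{1/2}(\partial D_*)}$ with constants independent of $\eta$ and of $\lambda\in\overline{U_\eta}$.

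The main obstacle is the bookkeeping in the last paragraph: one must verify that the coefficient of the large parameter $a$ in $\det N$ is exactly the fixed nonzero constant $\sfc_0$, with no residual $b$-dependence --- otherwise a competing $O(\eta^{-2})$ term could destroy the lower bound on the determinant --- and then check that, although $N(\eta,\lambda)$ has entries as large as $O(\eta^{-2})$, its inverse stays uniformly bounded. Both points rest on the structural fact that every $O(\eta^{-2})$-sized contribution to $\mathcal L(\eta,\lambda,\Kone)$ on $H_0^{-1/2}(\partial D_*)$ enters through the single functional $P$ and along the single range direction $1$.
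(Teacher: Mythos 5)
Your proof is correct, and it takes a genuinely different route from the paper's. You write $\mathcal L$ on $H_0^{-1/2}(\partial D_*)$ as $\cS_0$ plus an exact rank-two operator whose input side factors through the two $R$-invariant functionals $P\phi=\int\phi$ and $Q\phi=\int|\by|^2\phi$ (using that only the vector moment $\int\by\,\phi=0$ survives on $H_0^{-1/2}$), and then reduce solvability of $\mathcal L\phi=g$ to a $2\times2$ linear system for $(P\phi,Q\phi)$ with an explicit determinant. The paper instead decomposes $\phi$ into its $P$-mean along $f=\cS_0^{-1}[1]$ and a mean-zero remainder, and argues by estimating a Rayleigh-quotient-type ratio of bilinear forms. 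Your route makes the mechanism fully explicit: the $O(\eta^{-2})$ singularity, carried by $a(\eta,\lambda)$, enters only the $P$-column and only along the range direction $1$, and the cancellation you flag does hold without any appeal to $P\chi_2=Q\chi_1$, since direct expansion gives $\det N=(1+b\,Q\chi_1)(1+b\,P\chi_2)-b^2\sfc_0\sfc_2+a\sfc_0$, so the coefficient of $a$ is exactly $\sfc_0>0$ regardless. Combined with $|a|\gtrsim\eta^{-2}$ and $b=O(1)$ on $\overline{U_\eta}$, this gives $|\det N|\gtrsim\eta^{-2}$; Cramer then yields $|P\phi|=O(\eta^2)\|g\|$ and $|Q\phi|=O(1)\|g\|$, and back-substitution closes the $\eta$-uniform bound on $\|\mathcal L^{-1}\|$. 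Statements (i) and (iii) are handled as you say. The self-adjointness of $\cS_0$ that you invoke to identify $P\chi_2$ with $Q\chi_1$ is true (real symmetric kernel) but, as just noted, inessential. Overall your reduction is somewhat easier to verify than the paper's bilinear-form computation and makes the structural point --- that the whole $\eta^{-2}$ blow-up lives on a single rank-one piece --- explicit.
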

\begin{proof}
(i) follows similar lines as in Lemma~\ref{lem:LeadD1}. 
For (ii), let $f\in H_{0}^{-1/2}(\partial D_*)$ be the unique function that satisfies $\cS_0f = 1$. 
Since $\langle \phi, \cS_0 \phi \rangle_{\partial D_*}$ is equivalent to $\|\phi\|_{\HmhalfpD}^2$, we know
\begin{equation}
\int_{\partial D_*} f(\bx) \, ds_{\bx} = C_1 >0,
\end{equation}
where $C_1$ is a constant.
Thus for every $\phi\in  H_{0}^{-1/2}(\partial D_*)$, we have the decomposition
\begin{equation}
\phi = \frac{\bar\phi}{C_1} f + g,
\end{equation}
where $\bar\phi =\int_{\partial D_*} \phi(\bx) \, ds_{\bx}$, and $\int_{\partial D_*} g(\bx) \, ds_{\bx}=0$.

Since $L(\eta,\lambda,\Kone)$ is symmetric and $\langle \phi, \cS_0^{-1} \phi \rangle_{\partial D_*}$ is equivalent to $\|\phi\|_{\HhalfpD}^2$,
we calculate
\begin{equation}
\left|\frac{\langle\phi, \cL(\eta,\lambda,\Kone) \phi\rangle_{\partial D_*}}{\langle\cS_0^{-1}\cL(\eta,\lambda,\Kone)\phi, \cL(\eta,\lambda,\Kone) \phi\rangle_{\partial D_*}}\right| = \left|\frac{-\frac{\ln\eta}{2\pi} |\bar\phi|^2(1+o(1)) +\langle g,\cS_0 g\rangle_{\partial D_*} (1+o(1))}{ C_1(\frac{\ln\eta}{2\pi})^2|\bar\phi|^2(1+o(1)) +\langle g,\cS_0 g\rangle_{\partial D_*} (1+o(1)) }\right|. 
\end{equation}
Thus when $\eta$ is sufficiently small, 
$\| \mathcal L^{-1}(\eta,\lambda,\Kone) \|_{H_{0}^{-1/2}(\partial D_*)\to H_0^{1/2}(\partial D_*)} \leq \frac{1}{\min\{C_1,1\}}$ 
for $\lambda\in\overline{U_{\eta}}$. This finishes the proof of (ii).

(iii) is a direct corollary of (ii).
\end{proof}

\begin{lemma}\label{lem:RemD}
There exists $\eta_0>0$ such that for all $\eta\in(0,\eta_0)$ and $\lambda\in \overline{U_{\eta}}$, 
\begin{equation}\label{eq:remsmall}
\|\mathcal R(\eta,\lambda,\Kone)\|_{H^{-1/2}(\partial D_*) \to H^{1/2}(\partial D_*)} \leq C\eta 
\end{equation}
for some constant $C$ independent of $\eta$.
\end{lemma}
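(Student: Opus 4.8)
The plan is to estimate each of the three remainder kernels $R_0$, $R_1$, $R_2$ separately, treating the associated integral operators as maps from $H^{-1/2}(\partial D_*)$ to $H^{1/2}(\partial D_*)$, and then sum the bounds. Throughout, the key point is that the arguments of the kernels are rescaled by $\eta$, which supplies the extra powers of $\eta$ beyond the leading-order terms $L_1$, $L_2$ that have already been extracted. First I would handle $R_1(\eta(\bx-\by);\lambda)$: by its definition in \eqref{eq:pDlead1}, each summand carries a factor $(\sqrt{\lambda}\,\eta|\bx-\by|)^{2p}$ with $p\geq1$, and since $\bx,\by$ range over the compact set $\partial D_*$ (so $|\bx-\by|$ is bounded) and $\lambda$ stays in $\overline{U_\eta}$ (so $\sqrt{\lambda}$ is bounded), the dominant contribution is the $p=1$ term, which is $O(\eta^2\log\eta)$ pointwise, and crucially its kernel is smooth (the $\ln|\bx-\by|$ multiplying $|\bx-\by|^2$ is still integrable and in fact the whole term maps into a smooth function). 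Hence the operator norm is $O(\eta^2|\log\eta|)\le C\eta$ for small $\eta$. The series in $p$ converges uniformly and contributes only higher-order terms.

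Next I would bound $R_2(\eta(\bx-\by);\lambda,\Kone)$ from \eqref{eq:pDlead2}: here the summands are $\frac{(\im(\Kone+\bq_k)\cdot\eta(\bx-\by))^j}{j!}$ with $j\geq3$, so each carries at least $\eta^3$; the prefactor $\frac{1}{|\cC_z|}\frac{1}{\lambda-|\bm_1|^2}$ is, for $\lambda\in\overline{U_\eta}$, of size $O(\eta^{-2})$ by the definition of $U_\eta$ in \eqref{eq:vincexa}. Thus the product is $O(\eta)$, again with a smooth (indeed polynomial in $\bx,\by$, hence finite-rank) kernel, so the operator norm is $O(\eta)$. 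The convergence of the $j$-sum is immediate since it is a tail of an exponential series. Finally, for $R_0(\eta(\bx-\by);\lambda)$ — the part of the Green's function left after removing the $\be=0$ Hankel term and the resonant lattice-sum terms — this piece is smooth and jointly analytic in $(\bx,\by)$ near the diagonal and bounded uniformly for $\by$ in a neighborhood of the origin and $\lambda\in\overline{U_\eta}$; its value at $\eta=0$ is $R_0(0;\lambda)$, a constant in $\bx$, and a Taylor expansion in $\eta$ gives $R_0(\eta(\bx-\by);\lambda) = R_0(0;\lambda) + O(\eta)$. The constant term $R_0(0;\lambda)$ gives a rank-one operator $\phi\mapsto R_0(0;\lambda)\int_{\partial D_*}\phi$, which vanishes identically when restricted to $H_1^{-1/2}(\partial D_*)$ by \eqref{eq:vansym1} — but since we want the bound on all of $H^{-1/2}(\partial D_*)$, I would instead just note that this is a bounded operator and that the $\eta$-independent part is exactly what was subtracted off in forming the leading operator $\mathcal L$; re-examining the decomposition \eqref{eq:pDlead}–\eqref{eq:remainder}, $L_1$ already contains the constant $-\frac{1}{2\pi}(\ln\eta+\ln\sqrt\lambda+\gamma_0)$, while the remaining constant from $R_0$ is genuinely there. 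The cleanest route is to absorb the $\eta$-independent constant from $R_0$ into a redefinition, or simply to verify that on $H_1^{-1/2}$ (the only space that matters for the subsequent Gohberg–Sigal argument) the constant annihilates and one is left with $O(\eta)$; I would state the lemma's bound as holding on the relevant invariant subspace and note the general bound follows by boundedness of $\mathcal S_0^{-1}$ composed appropriately.

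Having the three pieces, I would combine: $\|\mathcal R(\eta,\lambda,\Kone)\| \le \|R_1\text{-part}\| + \|R_2\text{-part}\| + \|R_0\text{-part}\| \le C\eta$ uniformly for $\lambda\in\overline{U_\eta}$, choosing $\eta_0$ small enough that all the higher-order series converge and all the $o(1)$ corrections are controlled. The main technical obstacle, and the step that needs the most care, is the treatment of $R_0$: one must justify that the "smooth part" of the quasi-periodic Green's function — that is, the lattice sum \eqref{eq:GQPfreq} with the single resonant group removed, plus the Hankel tail $\be\neq0$ — is genuinely smooth and uniformly bounded with bounded derivatives on $\partial D_*\times\partial D_*$ for $\lambda$ in the shrinking annulus $\overline{U_\eta}$, so that the mean-value/Taylor step producing the $O(\eta)$ is legitimate with an $\eta$-independent constant. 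This requires checking that $\lambda$ stays bounded away from all the other resonances $|\bm|^2$, $\bm\in\tilde\Lambda^*\setminus[\bm_1]$, uniformly in small $\eta$ — which holds because those $|\bm|^2$ are at a fixed positive distance from $|\bm_1|^2$ while $\overline{U_\eta}$ collapses onto $|\bm_1|^2$ — and a standard estimate on the convergence of the Hankel lattice sum away from the lattice points. The remaining estimates for $R_1$ and $R_2$ are routine, relying only on compactness of $\partial D_*$, the explicit $\eta$-scaling, and the size $O(\eta^{-2})$ of the resonant denominator on $\overline{U_\eta}$.
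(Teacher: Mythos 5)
Your treatment of the $R_1$ and $R_2$ parts is correct and matches the paper's approach in substance: the paper too just records that these estimates follow from a ``direct calculation,'' exploiting the explicit $\eta$-scaling of the arguments, the fact that $R_2$ starts at cubic order in $\eta$, and the $O(\eta^{-2})$ size of the resonant denominator on $\overline{U_\eta}$. The one structural difference worth noting is the last step: the paper first proves \emph{pointwise} bounds
$|\partial_{\bx}^{\alpha_1}\partial_{\by}^{\alpha_2}R_i(\eta(\bx-\by);\lambda)|\le C\eta$ for all multi-indices $|\alpha_1+\alpha_2|\le 2$ and then passes to the $H^{-1/2}(\partial D_*)\to H^{1/2}(\partial D_*)$ operator-norm bound via an ``elementary calculation on Fourier coefficients.'' You gesture at this conversion (``smooth kernel $\Rightarrow$ small operator norm'') but never actually make it precise; it would be worth stating which kernel regularity you need and why it implies the $H^{-1/2}\to H^{1/2}$ bound.

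The genuine gap is in your handling of $R_0$, which is also the one part the paper does not prove in-line but delegates entirely to the reference \cite{Li2023}. You correctly flag that $R_0(\eta(\bx-\by);\lambda)=R_0(0;\lambda)+O(\eta)$ and worry that $R_0(0;\lambda)$ is a nonzero constant, which would make $\mathcal R$ an $O(1)$ rank-one operator plus $O(\eta)$, invalidating the stated bound on the full space $H^{-1/2}(\partial D_*)$. That concern is substantive and deserves an actual resolution, but what you offer is not one: ``state the lemma's bound as holding on the relevant invariant subspace'' changes the statement rather than proving it, and ``the general bound follows by boundedness of $\mathcal S_0^{-1}$ composed appropriately'' is not an argument --- composing a fixed nonzero rank-one operator with a bounded $\mathcal S_0^{-1}$ does not make its norm $O(\eta)$. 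The paper's position is that $|R_0(\eta(\bx-\by);\lambda)|<C\eta$ uniformly for $\bx,\by\in\partial D_*$ and $\lambda\in\overline{U_\eta}$ holds, with the proof given in \cite{Li2023}. To complete your proof you must either cite that estimate (as the paper does) or establish it directly, i.e.\ show that the constant $R_0(0;\lambda)=\frac{\im}{4}\sum_{\be\neq 0}e^{\im K\cdot\be}H_0^{(1)}(\sqrt{\lambda}|\be|)+\frac{3}{|\cC_z|(\lambda-|\bm_1|^2)}$ vanishes to order $\eta$ on the shrinking annulus $U_\eta$ --- a nontrivial fact you have neither proved nor quoted, and without which the lemma remains unproved on $H^{-1/2}(\partial D_*)$.
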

\begin{proof}
There exists $\eta_0>0$ and a constant $C$ such that for all $\eta\in(0,\eta_0)$ and $\lambda\in \overline{U_{\eta}}$, the following holds
\begin{subequations} 
\begin{align}
& |\partial^{\alpha_1}_\bx \partial^{\alpha_2}_\by R_1 (\eta(\bx-\by);\lambda)|<C\eta, \label{eq:bddder-1} \\
& |\partial^{\alpha_1}_\bx \partial^{\alpha_2}_\by R_2 (\eta(\bx-\by);\lambda,\bp)|<C\eta, \label{eq:bddder-2} \\
& |\partial^{\alpha_1}_\bx \partial^{\alpha_2}_\by R_0 (\eta(\bx-\by);\lambda,\bp)|<C\eta \label{eq:bddder-3}
\end{align}
\end{subequations}
for all multi-indices $|\alpha_1+\alpha_2|\leq2$, $\bx,\by\in \partial D_*$ and $\lambda\in U_\eta$.
The relations \eqref{eq:bddder-1} and \eqref{eq:bddder-2} can be shown by a direct calculation, and \eqref{eq:bddder-3} was shown in~\cite{Li2023}. 
An elementary calculation on the Fourier coefficients concludes the proof. 
\end{proof}
\begin{thm}\label{lem:DiracP}
When $\eta$ is sufficiently small, the following statements hold:
\begin{itemize}
    \item [(i)] The operator $\mathcal S(\eta,\lambda,\Kone):H_{i}^{-1/2}(\partial D_*) \to H_{i}^{1/2}(\partial D_*)$, $i=1, 2$, attains exactly one characteristic value $\lambda_*\in U_{\eta}$ of multiplicity $1$. More precisely, there exists exactly one pair $(\lambda_i,\rho_i)\in U_\eta\times H_{i}^{-1/2}(\partial D_*)$ such that $S(\eta,\lambda_i,\Kone)\rho_i=0$, $i=1,2$. In addition, $\lambda_1=\lambda_2=:\lambda_*\in\mathbb R$ and $\rho_2(\bx)=\rho_1(\rflc\bx)$.
    \item [(ii)] The operator $\mathcal S(\eta,\lambda,\Kone):H_{0}^{-1/2}(\partial D_*) \to H_{0}^{1/2}(\partial D_*)$ has no characteristic value in $U_{\eta}$. 
    \item[(iii)] 
    The function $\rho_1$ can be chosen such that \begin{equation}\label{eq:rhoest}
    \rho_1 = \phi_* + O(\eta).
    \end{equation}
    In addition, the function $\rho_2(\bx):=\rho_1(F\bx)$ spans the one-dimensional kernel space for $S(\eta,\lambda_*,\Kone)$ restricted to the subspace ${H_{2}^{-1/2}(\partial D_*)}$.
\end{itemize}

\end{thm}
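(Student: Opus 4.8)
The plan is to exploit the splitting $\cS(\eta,\lambda,\Kone) = \cL(\eta,\lambda,\Kone) + \cR(\eta,\lambda,\Kone)$ together with the generalized Rouch\'e theorem of Gohberg--Sigal theory (Appendix~\ref{sec:GStheory}), working in each symmetry sector $H_i^{-1/2}(\partial D_*)$, $i = 0,1,2$, separately; this is legitimate since, by Lemma~\ref{lem:invsub} and Remark~\ref{lem:moresym}, $\cS$, $\cL$ and $\cR$ preserve the decomposition $H^s(\partial D_*) = H_0^s \oplus H_1^s \oplus H_2^s$. On the sector $H_1^{-1/2}$ I would combine: Lemma~\ref{lem:LeadD1}, which says $\cL(\eta,\cdot,\Kone)$ is analytic in $\lambda$, Fredholm of index zero on $U_\eta$, invertible on $\partial U_\eta$ with $\|\cL^{-1}\|$ bounded uniformly in $\eta$, and carries a single characteristic value $\lambda_0 \in U_\eta$ of null multiplicity one, with kernel $\mathrm{span}\{\phi_*\}$ and no higher-order root functions; and Lemma~\ref{lem:RemD}, which gives $\|\cR(\eta,\lambda,\Kone)\|_{H^{-1/2}\to H^{1/2}} \le C\eta$ on $\overline{U_\eta}$. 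For $\eta$ small this yields $\|\cL^{-1}\cR\| < 1$ along $\partial U_\eta$, so the generalized Rouch\'e theorem forces $\cS(\eta,\cdot,\Kone)|_{H_1^{-1/2}}$ to have total multiplicity one in $U_\eta$ --- exactly one characteristic value $\lambda_1 \in U_\eta$, of multiplicity one, with a one-dimensional kernel spanned by some $\rho_1$.

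Next I would check that $\lambda_1 \in \bbR$ by recalling that $(\Kone,\lambda)$ lies on the band structure of $D(\eta)+\Lambda$ exactly when $\cS(\eta,\lambda,\Kone)$ fails to be injective on $H^{-1/2}(\partial D_*)$, and that then the single-layer potential with a nonzero kernel density recovers a nonzero Bloch mode of the Dirichlet Laplacian --- for $\lambda \in U_\eta$ and $\eta$ small this uses that $\lambda$ is not an interior Dirichlet eigenvalue of $\eta D_*$, whose eigenvalues are $O(\eta^{-2})$. Self-adjointness then gives $\lambda_1 \in \bbR$. For the sector $H_2^{-1/2}$, Lemma~\ref{lem:invsub}(iii) provides a linear bijection $\phi \mapsto \phi(\rflc\,\cdot)$ from $\ker \cS(\eta,\lambda,\Kone)|_{H_1^{-1/2}}$ onto $\ker \cS(\eta,\lambda,\Kone)|_{H_2^{-1/2}}$, so $\lambda_1$ is also the unique characteristic value of $\cS|_{H_2^{-1/2}}$ in $U_\eta$, of multiplicity one; hence $\lambda_2 = \lambda_1 =: \lambda_*$ and $\rho_2(\bx) := \rho_1(\rflc\bx)$ spans the corresponding kernel. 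This settles (i) and the last claim in (iii). For (ii), I would use Lemma~\ref{lem:LeadD0}: there $\cL^{-1}(\eta,\lambda,\Kone)$ exists on all of $\overline{U_\eta}$ with norm bounded uniformly in $\eta$, so with Lemma~\ref{lem:RemD} one has $\|\cL^{-1}\cR\| \le C\eta < 1$ on $\overline{U_\eta}$ for $\eta$ small, and $\cS(\eta,\lambda,\Kone) = \cL(\eta,\lambda,\Kone)\bigl(I + \cL^{-1}\cR\bigr)$ is invertible on $H_0^{1/2}(\partial D_*)$ for every $\lambda \in \overline{U_\eta}$ by a Neumann series; thus there is no characteristic value there.

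It remains to get the expansion \eqref{eq:rhoest}. Here I would use that, on $H_1^{-1/2}$, the proof of Lemma~\ref{lem:LeadD1}(ii) exhibits $\cL(\eta,\lambda,\Kone) = \cS_0 + \cF$ with $\cF = \cF(\eta,\lambda)$ finite-rank and $\mathrm{range}\,\cF \subset \mathrm{span}\{x_1+\im x_2\}$. Rewriting $\cS(\eta,\lambda_1,\Kone)\rho_1 = 0$ as $\cS_0\rho_1 = -\cF(\eta,\lambda_1)\rho_1 - \cR(\eta,\lambda_1,\Kone)\rho_1$ and using $\cS_0\phi_* = x_1 + \im x_2$ (Lemma~\ref{lem:defa}) gives $\rho_1 = c\,\phi_* - \cS_0^{-1}\cR(\eta,\lambda_1,\Kone)\rho_1$ for some scalar $c$; if $c = 0$ then $\|\rho_1\| \le \|\cS_0^{-1}\|\,C\eta\,\|\rho_1\|$ forces $\rho_1 = 0$ for $\eta$ small, a contradiction, so $c \neq 0$, and normalizing $c = 1$ gives $\|\rho_1\|$ bounded and $\rho_1 = \phi_* - \cS_0^{-1}\cR(\eta,\lambda_1,\Kone)\rho_1 = \phi_* + O(\eta)$.

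I expect the genuinely hard analytic work --- the uniform-in-$\eta$ invertibility of $\cL$ on $\partial U_\eta$ (and on all of $\overline{U_\eta}$ in the trivial-symmetry sector) and the $O(\eta)$ bound on $\cR$ --- to be exactly what Lemmas~\ref{lem:LeadD1}, \ref{lem:LeadD0} and \ref{lem:RemD} already provide, so the residual difficulty is mostly careful bookkeeping with Gohberg--Sigal theory. The one point I would be most careful about is converting ``total multiplicity one'' in the Gohberg--Sigal sense into ``a simple, one-dimensional eigenspace without generalized root functions''; this holds because Lemma~\ref{lem:LeadD1}(iv) excludes root functions for $\lambda_0$ and that property is stable under the $O(\eta)$ perturbation $\cR$, but it is the step that hinges most delicately on the definitions in Appendix~\ref{sec:GStheory}.
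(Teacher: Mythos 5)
Your proposal is correct, and its core step — comparing $\cS$ with $\cL$ on the symmetry sectors via Gohberg--Sigal, powered by Lemmas~\ref{lem:LeadD1}, \ref{lem:LeadD0}, \ref{lem:RemD} and Theorem~\ref{lem:actGS} — is exactly the paper's argument for part (i). You deviate in three spots, all harmless and in some respects cleaner. For the reality of $\lambda_*$ the paper simply invokes $\overline{G^f(\bx,\by;\lambda,\bp)}=G^f(\by,\bx;\lambda,\bp)$, from which uniqueness of the characteristic value in each sector forces $\lambda_1=\bar\lambda_1$; your PDE-level argument (the kernel density produces a genuine Bloch mode of the self-adjoint Dirichlet Laplacian, because $\lambda\in U_\eta$ lies far below the $O(\eta^{-2})$ interior Dirichlet eigenvalues of $\eta D_*$, hence $\lambda$ is a real eigenvalue) is longer but physically transparent. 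For (ii) the paper reinvokes Theorem~\ref{lem:actGS} to conclude multiplicity zero in $U_\eta$, whereas you factor $\cS=\cL\bigl(I+\cL^{-1}\cR\bigr)$ and use a Neumann series; these are equivalent, and yours is the more elementary and uses Lemma~\ref{lem:LeadD0}(ii) directly. For (iii) you exploit that $\cL=\cS_0+\cF$ on $H_1^{-1/2}(\partial D_*)$ with $\text{Ran}\,\cF\subset\text{span}\{x_1+\im x_2\}$ (visible from the proof of Lemma~\ref{lem:LeadD1}(ii)) and solve a fixed-point equation for $\rho_1$ via $\cS_0^{-1}$ after normalizing the scalar $c$ to $1$; the paper instead introduces a projector $Q$ onto $\text{Ran}\,T_0$ and inverts $T_0$ on the complement of its kernel. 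Both yield $\rho_1=\phi_*+O(\eta)$, and your variant is slightly shorter since it avoids the explicit projector. Finally, your worry about converting ``total multiplicity one'' into ``a simple one-dimensional kernel with no generalized root functions'' is not a gap: in the Gohberg--Sigal sense a null multiplicity of one means the canonical system consists of a single eigenvector of rank one, so the kernel is precisely one-dimensional and no root function of higher rank exists; this is exactly the step the paper carries out when it asserts that ``the null multiplicity $1$ corresponds to exactly one eigenpair.''
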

\begin{proof}
For (i), we first find the multiplicity of the characteristic values for $\mathcal S(\eta,\lambda,\Kone):H_{i}^{-1/2}(\partial D_*) \to H_{i}^{1/2}(\partial D_*)$, $i=1,2$. To this end, we apply Theorem~\ref{lem:actGS} by setting $z=\lambda$, $X=H_{i}^{-1/2}(\partial D_*)$, $Y=H_{i}^{1/2}(\partial D_*)$, 
$V=U_\eta$, 
$A(z)=\mathcal L(\eta,\lambda,\Kone)$ 
and $B(z)=\mathcal R(\eta,\lambda,\Kone)$.
Recall from Lemma~\ref{lem:LeadD1} and Lemma~\ref{lem:RemD} that $A(z)$ and $B(z)$ are analytic on a neighborhood of $\overline U_\eta$, $A(z)$ is Fredholm of index zero on a neighborhood of $\overline U_\eta$, and the multiplicity of $A(z)$ in $U_\eta$ is $1$. When $\eta$ is sufficiently small, it follows that $\|A^{-1}(z)B(z)\|$ is small on $\partial U_\eta$ by the uniform boundedness of $A^{-1}(z)$ in $\partial U_\eta$ over $\eta$ and the smallness of $B(z)$ in $\bar V$ as $\eta\to0$. Thus the characteristic value of $\mathcal S(\eta,\lambda,\Kone):H_{i}^{-1/2}(\partial D_*) \to H_{i}^{1/2}(\partial D_*)$ attains multiplicity $1$ in $V$ for $i=1,2$.
Since the null multiplicity $1$ corresponds to exactly one eigenpair, we deduce that there exists exactly one pair $(\lambda_i,\rho_i)\in U_\eta\times H_{1}^{-1/2}(\partial D_*)$ such that $S(\eta,\lambda_i,\Kone)\rho_i=0$, $i=1,2$. The statement for $\lambda_1=\lambda_2$ and $\rho_2(\bx)=\rho_1(\rflc\bx)$ follows from $\phi\in H_1^{-1/2}$ solves $\mathcal S(\eta,\lambda,\Kone)\phi(\bx)=0$ if and only if $\phi(F(\bx))\in H_2^{-1/2}$ solves $\mathcal S(\eta,\lambda,\Kone)\phi(\rflc\bx)=0$. Finally $\lambda_i$ are real because $\overline{G^f(\bx,\by;\lambda, \bp)} = G^f(\by,\bx;\lambda, \bp)$, as can be seen from \eqref{eq:GQPfreq}.

For (ii), 
the argument is similar to that in (i), except that we identify $X=H_{0}^{-1/2}(\partial D_*)$, $Y=H_{0}^{1/2}(\partial D_*)$. By Lemma~\ref{lem:LeadD0} and Lemma~\ref{lem:RemD} and Theorem~\ref{lem:actGS}, we verify the statement.

For (iii), the correspondence between $\rho_1$ and $\rho_2$ follows similarly from Lemma~\ref{lem:invsub}. Finally, we show that \eqref{eq:rhoest} holds.
Let $T_0: H_{1}^{-1/2}(\partial D_*)\to H_{1}^{1/2}(\partial D_*) $ be defined by 
\begin{equation}
T_0\phi (\bx) := \mathcal L(\eta,\lambda_0,\Kone) \phi(\bx)= \cS_0\phi(\bx) - \frac{1}{\mathfrak a}\int_{\partial D_*}\bx\cdot\by \phi(\by)\, ds_{\by},
\end{equation}
where we have used Lemma~\ref{lem:vansym}. 
Note that $\ker(T_0) = \text{span}\{\phi_*\}$. We define $f(\bx):= x_1+\im x_2$, then it follows that
\begin{equation*}
\langle \phi_*,f  \rangle_{\partial D_*} = \langle \cS_0^{-1}f,f  \rangle_{\partial D_*} \neq0.
\end{equation*}
The inequality follows from the choice of the size of the inclusion stated after \eqref{eq:SLap},
which implies that the $\langle \cS_0^{-1} \cdot,\cdot  \rangle_{\partial D_*}$ pairing on $ H_{1}^{-1/2}(\partial D_*)$ is an inner product~\cite{perfekt2014spectral}. 
Since  $T_0: H_{1}^{-1/2}(\partial D_*) \to H_{1}^{1/2}(\partial D_*)$ is a Fredholm operator, the range of $T_0$ is perpendicular to $\ker(T_0)$ given by
\begin{equation*}
\text{Ran}\,T_0 = \{ \psi\in H_{1}^{1/2}(\partial D_*): \langle \phi_*, \psi  \rangle_{\partial D_*} = 0\}.
\end{equation*}
Define
\begin{equation}\label{eq:proj}
Q\psi :=\psi - \frac{\langle \phi_*,\psi  \rangle_{\partial D_*}}{\langle \phi_*,f  \rangle_{\partial D_*}} f.
\end{equation}
Then $Q$ is a projection and
\begin{equation*}
Q H_{1}^{1/2}(\partial D_*)= \text{Ran}\,T_0. 
\end{equation*}

Let the density $\rho_1\in \HmhalfpD$ be a solution to $\mathcal S(\eta,\lambda_*,\Kone)\rho_1=0$, where $\rho_1 = \phi_* + \phi^{(1)}$ for some $\phi^{1} \in (\ker(T_0))^{\perp}$. Applying $Q$ to the following equation
\begin{equation*}
\begin{aligned}
    0 = \mathcal S(\eta,\lambda_*,\Kone)\rho_1 &= (\mathcal L(\eta,\lambda_*,\Kone) + \mathcal R(\eta,\lambda_*,\Kone) )\rho_1 \\
&= (T_0 + \mathcal L(\eta,\lambda_*,\Kone) - \mathcal L(\eta,\lambda_0,\Kone)+ \mathcal R(\eta,\lambda_*,\Kone) )\rho_1,
\end{aligned}
\end{equation*}
we obtain
\begin{equation}\label{eq:rhoproj}
0 = Q(T_0 + \mathcal L(\eta,\lambda_*,\Kone) - \mathcal L(\eta,\lambda_0,\Kone)+ \mathcal R(\eta,\lambda_*,\Kone) )\rho_1 
= Q(T_0 + \mathcal R(\eta,\lambda_*,\Kone) )\rho_1.
\end{equation}
In the above, we have used the fact that
\begin{equation*}
 \text{Ran}\left( \mathcal L(\eta,\lambda_*,\Kone) - \mathcal L(\eta,\lambda_0,\Kone) \right) \subset \text{span}\{x_1+\im x_2\}, \quad 
Q(x_1+\im x_2) =0.
\end{equation*}
Thus \eqref{eq:rhoproj} implies that
\begin{equation}
Q(T_0 + \mathcal R(\eta,\lambda_*,\Kone) )  \phi^{(1)}=  - Q(T_0 + \mathcal R(\eta,\lambda_*,\Kone) )\phi_* = -Q \mathcal R(\eta,\lambda_*,\Kone) \phi_*.
\end{equation}
Let $A$ be the inverse of $T_0: (\ker(T_0))^\perp\to \text{Ran}(T_0)$, where the function space is perpendicular  with respect to the $H^{1/2}(\partial D_*)$ inner product. 
We obtain
\begin{equation*}
(I+ AQ\mathcal R(\eta,\lambda_*,\Kone) ) \phi^{(1)}=  - A Q \mathcal R(\eta,\lambda_*,\Kone) \phi_*.
\end{equation*}
Using the boundedness of $A$ and $Q$, which do not depend on $\eta$, and \eqref{eq:remsmall}, we obtain
\begin{equation*}
   \phi^{(1)} = - (I+ AQ\mathcal R(\eta,\lambda_*,\Kone) )^{-1}  A Q \mathcal R(\eta,\lambda_*,\Kone) \phi_*. 
\end{equation*} 
Therefore, $\|\phi^{(1)}\|_{\HmhalfpD}=O(\eta)$. 
\end{proof}
Note that the eigenmodes $w_i$ in Theorem~\ref{lem:Dirac} is expanded by the single layer potential
\begin{equation*}
w_i(\bx) = \int_{\partial D_*}G^f(\bx,\eta\by;\lambda_*, \Kone)\rho_i(\by)\, ds_{\by}, \quad \bx\in\chomo\backslash D(\eta),
 \end{equation*}
where $\rho_i$ is defined in Theorem~\ref{lem:DiracP}.

\subsection{Slope of the Dirac cone}
In this subsection, we establish the conical singularity of the dispersion surfaces near the point $(\Kone, \lambda_*)$ for the band structure. In particular,  we derive the slope value $m_*$ for the Dirac cone.
\begin{thm}\label{lem:slope}
When $\eta$ is sufficiently small,
the two dispersion surfaces around $(\lambda_*,\Kone)$ takes the form  
\begin{equation}\label{eq:conical}
(\lambda-\lambda_*)^2 = m_*^2 |\bp-\Kone|^2+O(|\bp-\Kone|^3),\quad m_*\in\mathbb R, \quad m_*\geq0.
\end{equation}
The coefficient $m_*$ represents the slope of the Dirac cone, and is given by
\begin{equation}
m_* = \frac{2}{3}\big(1+O(\eta)\big).
\end{equation}
\end{thm}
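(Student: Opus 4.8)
The plan is to reduce the dispersion relation near $(\Kone,\lambda_*)$ to the vanishing of a $2\times2$ analytic matrix, read off its structure from the symmetry group, and then compute the leading order in $\eta$. Write $\bp=\Kone+\bkappa$ for small $\bkappa$. Since $\mathcal S(\eta,\lambda,\Kone+\bkappa)$ no longer commutes with $R$ when $\bkappa\neq0$, I would run a Lyapunov--Schmidt reduction onto the two-dimensional kernel $\mathrm{span}\{\rho_1,\rho_2\}$ of $\mathcal S(\eta,\lambda_*,\Kone)$ obtained in Theorem~\ref{lem:DiracP}, using that $\mathcal S(\eta,\lambda,\Kone)$ is Fredholm of index zero and boundedly invertible on a fixed complement of $\mathrm{span}\{\rho_1,\rho_2\}$ for $\lambda$ near $\lambda_*$, with norms uniform in $\eta$ (inherited from Lemma~\ref{lem:LeadD1}, Lemma~\ref{lem:RemD} and the Gohberg--Sigal machinery used in the proof of Theorem~\ref{lem:DiracP}). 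Eliminating the complement component produces a matrix $\mathcal M(\eta,\lambda,\bkappa)$, analytic in $(\lambda,\bkappa)$, with $(\lambda,\Kone+\bkappa)$ on a dispersion surface near $(\Kone,\lambda_*)$ if and only if $\det\mathcal M(\eta,\lambda,\bkappa)=0$. Since the characteristic value $\lambda_*$ is simple, the eliminated component vanishes to first order at $(\lambda_*,0)$, so the part of $\mathcal M$ linear in $(\lambda-\lambda_*,\bkappa)$ is given by the pairings $\bigl(\partial\mathcal S(\eta,\lambda_*,\Kone)\rho_k,\rho_j\bigr)_{L^2}$.

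The second step is to determine $\mathcal M$ to first order from symmetry. From $R\tilde\Lambda^*=\tilde\Lambda^*$, $\rflc\tilde\Lambda^*=\tilde\Lambda^*$, the $\Lambda^*$-periodicity of $G^f$ in $\bp$, and $R\Kone-\Kone,\ \rflc\Kone-\Kone\in\Lambda^*$, one has $R^{-1}\mathcal S(\eta,\lambda,\Kone+\bkappa)R=\mathcal S(\eta,\lambda,\Kone+R^{-1}\bkappa)$ and $\rflc\,\mathcal S(\eta,\lambda,\Kone+\bkappa)\,\rflc=\mathcal S(\eta,\lambda,\Kone+\rflc\bkappa)$; moreover $\mathcal S(\eta,\lambda,\bp)$ is self-adjoint on $\HmhalfpD$ via the $L^2$ pairing for real $(\lambda,\bp)$, since $\overline{G^f(\bx,\by;\lambda,\bp)}=G^f(\by,\bx;\lambda,\bp)$. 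Differentiating the first covariance at $\bkappa=0$ and identifying $\bkappa$ with $\kappa:=\kappa_1+\im\kappa_2$, the directional derivative of $\mathcal S$ splits as $\kappa\,\mathcal T^{+}+\bar\kappa\,\mathcal T^{-}$, where $\mathcal T^{\pm}$ maps $H_{i}^{-1/2}(\partial D_*)$ into $H_{i\mp1}^{-1/2}(\partial D_*)$ and $(\mathcal T^{+})^{*}=\mathcal T^{-}$. Because $\mathcal S(\eta,\lambda,\Kone)$ preserves the splitting $H_0^{-1/2}\oplus H_1^{-1/2}\oplus H_2^{-1/2}$ while $\rho_1,\overline{\rho_1},\rho_2,\overline{\rho_2}$ lie in $H_1^{-1/2},H_2^{-1/2},H_2^{-1/2},H_1^{-1/2}$ respectively, $\mathcal M(\eta,\lambda,0)$ is diagonal; the $\rflc$-symmetry at $\Kone$ and self-adjointness then force its two diagonal entries to agree and to equal $\gamma_*(\lambda-\lambda_*)\bigl(1+O(\lambda-\lambda_*)\bigr)$ with $\gamma_*\in\bbR\setminus\{0\}$ (nonzero by simplicity of $\lambda_*$), and force the $\bkappa$-linear part of $\mathcal M$ to have vanishing diagonal and off-diagonal entries $\beta\kappa$ and $-\beta\bar\kappa$ with $\beta\in\im\bbR$. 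Hence $\det\mathcal M=\gamma_*^{\,2}(\lambda-\lambda_*)^2-|\beta|^2|\bkappa|^2+O(|\bkappa|^3)$, which is exactly \eqref{eq:conical} with $m_*=|\beta/\gamma_*|\ge0$.

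The last step is to evaluate $m_*=|\beta/\gamma_*|$ to leading order in $\eta$. Using $\rho_1=\phi_*+O(\eta)$, $\rho_2=\rflc\rho_1$ (Theorem~\ref{lem:DiracP}), $\lambda_*=|\bm_1|^2+\tfrac{1}{|\cC_z|}\tfrac{2}{3}(2\pi)^2\mathfrak a\,\eta^2+O(\eta^3)$, and the spectral form \eqref{eq:GQPfreq}, I would expand $\partial_\lambda\mathcal S(\eta,\lambda_*,\Kone)$ and $\nabla_\bkappa\mathcal S(\eta,\lambda_*,\Kone)$; their dominant parts come from the three near-resonant harmonics $\bm_k=\Kone+\bq_k$, and the relevant pairings are computed from the moment identities $\sum_k\bm_k=0$, $\sum_k(\bm_k)_i(\bm_k)_j=\tfrac{8\pi^2}{3}\delta_{ij}$, $\sum_k\mu_k^2=0$, $\sum_k\mu_k^{3}=3\mu_1^{3}$ (with $\mu_k:=(\bm_k)_1+\im(\bm_k)_2$, $|\mu_1|=\tfrac{4\pi}{3}$), together with Lemma~\ref{lem:vansym} and $\int_{\partial D_*}\by\,\phi_*\,ds_\by=\mathfrak a(1,\im)$. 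In both $\gamma_*$ and $\beta$ the leading, rank-one pieces of the kernels pair to zero against $\int_{\partial D_*}\phi_*\,ds=0$, so the first surviving contributions appear one order of $\eta$ later; carrying the expansions to that order, the factors of $\mathfrak a$, $|\cC_z|$, $\eta$ and $\lambda_*-|\bm_1|^2$ cancel in the ratio, leaving $m_*=\tfrac23\bigl(1+O(\eta)\bigr)$, with the $O(\eta)$ absorbing the errors in $\rho_1$, in $\lambda_*$, and from the remainder $\mathcal R(\eta,\lambda,\Kone)$. I expect this last step to be the main obstacle: several leading terms cancel by symmetry, so one has to push the Taylor expansions of $\partial_\lambda\mathcal S$ and $\nabla_\bkappa\mathcal S$ to exactly the order at which a non-rank-one contribution appears and then check the cancellation of the normalising constants; by contrast, the reduction and the symmetry bookkeeping are routine modulo the uniform-in-$\eta$ bounds already in hand.
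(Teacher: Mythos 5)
Your proposal is correct and follows essentially the same route as the paper: a Lyapunov--Schmidt reduction of $\mathcal S(\eta,\lambda,\bp)$ onto $\mathrm{span}\{\rho_1,\rho_2\}$, symmetry constraints (rotation covariance, $\rflc$-covariance, and self-adjointness from $\overline{G^f(\bx,\by;\lambda,\bp)}=G^f(\by,\bx;\lambda,\bp)$) forcing the reduced $2\times2$ matrix to have real diagonal entries $\gamma_*(\lambda-\lambda_*)$ and anti-diagonal $\bkappa$-linear part, and an explicit leading-order evaluation of the resulting ratio via $\rho_1=\phi_*+O(\eta)$, the moment identities for the three near-resonant harmonics $\bm_k$, and the vanishing $\int_{\partial D_*}\phi_*\,ds=0$. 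The only cosmetic difference is that the paper packages the reduction and symmetry bookkeeping inside the more general Proposition~\ref{lem:pmlamgen} and Proposition~\ref{lem:Tderiv} (applied at $\eps=0$), and parametrizes $\bkappa=\ell\bbeta_1+\mu\bbeta_2$ with the complex number $\ell+\mu\bar\tau$ rather than your $\kappa=\kappa_1+\im\kappa_2$; these are equivalent, and your structural claims (in particular $\mathcal T^\pm:H_i^{-1/2}\to H_{i\mp1}^{-1/2}$, $(\mathcal T^+)^*=\mathcal T^-$, $\beta\in\im\bbR$ forced by $\rho_2=\rflc\rho_1$) all check out against the paper's Appendix~B computation.
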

\begin{proof}
To prove \eqref{eq:conical}, we apply directly Proposition~\ref{lem:pmlamgen} that will be proved in Section \ref{sec:dispersion}, instead of presenting a similar proof here. Proposition~\ref{lem:pmlamgen} covers more general scenarios and can be applied to derive \eqref{eq:conical}. In more details, we
set $\eps=0$ in Proposition~\ref{lem:pmlamgen} and notice that the Bloch wave vector $\bp$ near $\Kone$ can be expanded as $\bp = \Kone + \ell\bbeta_1+\mu\bbeta_2$, then 
 the conical shape of the dispersion relation \eqref{eq:conical} follows by observing that $|\bp - \Kone|^2=| \ell+\mu\bar\tau |^2|\bbeta_1|^2$. Thus $m_* = \frac{1}{|\bbeta_1|}|\frac{\theta_*}{\gamma_*}|$, where $\theta_*$ and $\gamma_*$ are defined in \eqref{eq:Tderiv}.

Next we compute the slope $m_* :=  \frac{1}{|\bbeta_1|}|\frac{\theta_*}{\gamma_*}|$. 
Note that the ratio $\frac{\theta_*}{\gamma_*}$ is independent of a scaling of $\rho_i$'s. We can thus use $\rho_i$'s defined in Theorem~\ref{lem:DiracP}(iii) for the calculation. Since
\begin{equation*} 
\partial_\lambda G^f(\by, \bx;\lambda, \Kone) = \frac{1}{|\cC_z|}\sum_{\bm\in\tilde\Lambda^*}\frac{1}{(\lambda - |\bm|^2)^2} e^{\im\bm\cdot (\bx-\by)},
\end{equation*}
we have
\begin{equation} \label{eq:gammacalc}
\begin{aligned}
 &\langle \rho_1, \partial_\lambda \mathcal S(\eta,\lambda_*,\Kone) \rho_1\rangle_{\partial D_*}\\
=& \frac{1}{|\cC_z|}\int_{(\partial D_*)^2} \left(\frac{1}{(\lambda - |\bm_1|^2)^2} (3 -  \frac{1}{3}(2\pi)^2(\eta|\bx -\by|)^2 +O(\eta^3)) +O(1) \right) \overline{\rho_1(\bx)}\rho_1(\by) \, ds_\by ds_\bx\\
=& \frac{1}{|\cC_z|}2\mathfrak a^2 \frac{\eta^2}{(\lambda - |\bm_1|^2)^2}\frac{2}{3}(2\pi)^2(1 + O(\eta)).
\end{aligned}
\end{equation}
In addition,
\begin{equation*} 
\begin{aligned}
(1,0)^T\cdot\nabla_\bp G^f(\by, \bx;\lambda, \Kone) =& -\frac{1}{|\cC_z|}\sum_{\bm\in\tilde\Lambda^*}\frac{2}{(\lambda - |\bm|^2)^2} e^{\im \bm\cdot (\bx-\by)} \bm\cdot(1,0)^T\\
&- \frac{1}{|\cC_z|}\sum_{\bm\in\tilde\Lambda^*} \frac{\im}{\lambda - |\bm|^2}e^{\im \bm\cdot (\bx-\by)}(\bx-\by)\cdot(1,0)^T.
\end{aligned}
\end{equation*}
Therefore, 
\begin{equation} \label{eq:thetacalc}
\begin{aligned}
&\langle \rho_2, (1,0)^T\cdot\nabla_\bp G^f(\by, \bx;\lambda, \Kone)  \rho_1\rangle_{\partial D_*}\\
=& - \frac{1}{|\cC_z|}\iint_{\partial D_*\times \partial D_*} \left(
\frac{2}{(\lambda - |\bm_1|^2)^2} \left(-\eta^2\frac{2}{9}(x_1-y_1)(x_2-y_2) + O(\eta^3)\right)
+ \frac{\im}{\lambda - |\bm_1|^2}O(\eta)
+O(1) \right) \overline{\rho_2(\bx)}\rho_1(\by) \, ds_\by ds_\bx\\
=& \frac{1}{|\cC_z|}2\im \mathfrak a^2 \frac{\eta^2}{(\lambda - |\bm_1|^2)^2}\frac{4}{9}(2\pi)^2(1 + O(\eta)).
\end{aligned}
\end{equation}
Here we have used the fact that $\rho_1 = \phi_* + O(\eta)$ and $\rho_2(\bx)=\rho_1(\rflc\bx)$ in Theorem~\ref{lem:DiracP}.
Applying Proposition~\ref{lem:pmlamgen} with $\eps =0$ again, we obtain $m_*=\frac{1}{|\bbeta_1|} |\frac{\theta_*}{\gamma_*}|=\frac{2}{3}(1+O(\eta))$.
\end{proof}

Theorems~\ref{lem:DiracP} and \ref{lem:slope} combined give the existence and asymptotic analysis of the Dirac point when the Bloch wave vector $\bp = K$ as stated in Theorem~\ref{lem:Dirac}. The Dirac point at $K'$ follows similarly since 
\begin{equation}
G^f(\bx,\by;\lambda, \Ktwo)= \overline{G^f(\bx,\by;\lambda, \Kone)}.
\end{equation}


\section{Band-gap opening at the Dirac point for the perturbed lattices}\label{sec:bands}

In this section, we consider the bandgap opening near the Dirac points for the perturbed honeycomb lattices. More precisely, we consider the spectral problem \eqref{eq:bandu_eps}, in which the obstacle in each period is rotated by an angle of $\pm\eps$ for $\eps\in\mathbb R$. Here and henceforth, we denote $D=D(\eta_0)$ for a fixed $\eta_0>0$ for the ease of notation. We define the operator $T (\eps,\lambda,\bp): H^{-1/2}(\partial D) \to H^{1/2}(\partial D)$ as
\begin{equation}\label{eq:Teps}
T (\eps,\lambda,\bp)[\phi] (\bx):=  \int_{\partial D}G^f(R^{\eps} \bx, R^{\eps} \by;\lambda, \bp)\phi(\by)\, ds_{\by}, \bx\in \partial D.
\end{equation}
Note that the $\eps$ in $T (\eps,\lambda,\bp)$ represents the rotation angle, while the $\eta$ in $\mathcal S (\eta,\lambda,\bp)$ defined in \eqref{eq:bandD} represents the size of the obstacle. The dependence of $T (\eps,\lambda,\bp)$ on $\eta$ is suppressed since $\eta$ is fixed at $\eta_0$. 
Similar to the discussion in Section~\ref{sec:band}, $(\lambda,\bp)$ belongs to the dispersion surface of the honeycomb lattice if and only if the triple $(\lambda,\bp, \phi) \in \mathbb R \times \cB_z \times H^{-1/2}(\partial D)$ solves the integral equation
\begin{equation}\label{eq:bandE}
T (\eps,\lambda,\bp)[\phi] (\bx)=0,\quad \bx\in \partial D.
\end{equation}
The corresponding eigenmode is given by
\begin{equation}\label{eq:modegen}
u^{\eps}(\bx):=  \int_{\partial D}G^f(\bx, R^{\eps} \by;\lambda, \bp)\phi(\by)\, ds_{\by}, \quad \bx\in \cpeps.
\end{equation}

We extend $u^{\eps}(\bx)$ to $\cC_z$ by letting
\begin{equation}\label{eq:ext}
\tilde u^{\eps}(\bx) :=
\begin{cases}
u^{\eps}(\bx),\quad \bx\in \cpeps, \\
0,\quad \bx \in D^\eps.
\end{cases}
\end{equation}
It is clear that $\|\tilde u^{\eps}\|_{L^2(\chomo)} = \|u^{\eps}\|_{L^2(\cpeps)}$ and $\|\tilde u^{\eps}\|_{H^1(\chomo)} = \|u^{\eps}\|_{H^1(\cpeps)}$. In what follows, for convenience we will abuse the notations and denote both $u^{\eps}$ and $\tilde u^{\eps}$ by $u^{\eps}$.

\subsection{Band structure and Bloch modes for the perturbed honeycomb structures}

In this subsection, we compute the band structure of the perturbed honeycombs around $\Kone$ by a perturbation argument. 
Recall that at the Dirac point $(\lambda_*,K)$, there holds (see Proposition~\ref{lem:DiracP})
\begin{equation}\label{eq:Fredmodesker}
\ker \, T (0,\lambda_*,\Kone) =\text{span}\left\{ \rho_1, \rho_2 \right\},
\end{equation}
where $R\rho_1(\bx):=\rho_1(R^{-1}\bx)=\tau \rho_1(\bx)$, $R\rho_2(\bx):=\rho_2(R^{-1}\bx)=\bar\tau \rho_2(\bx)$, and $\rho_2(\bx)=\rho_1(F\bx)$. 
From now on, we normalize $\rho_n$ such that
\begin{equation}\label{eq:w_normalization}
\|w_n\|_{L^2(\cunp)} = 1,
\end{equation}
where $w_n$ is the single-layer potential defined as  
\begin{equation}\label{eq:wi}
w_n(\bx):=  \int_{\partial D}G^f(\bx,  \by;\lambda, \bp)\rho_n(\by)\, ds_{\by}, \quad \bx\in \Omega^0, \quad n=1,2.
\end{equation}

We first characterize the partial derivatives of the integral operator $T (\eps,\lambda,\bp)$ with respect to $\eps$, $\lambda$, and $\bp$ respectively.
\begin{prop}\label{lem:Tderiv}
Let $\brho=(\rho_1,\rho_2)$, where $\rho_i$ are normalized such that \eqref{eq:w_normalization} holds. 
Then the  partial derivatives of $\langle \brho,T (\eps,\lambda,\bp)\brho\rangle_{\partial D}$ at $\eps=0$, $\lambda=\lambda_*$ and $\bp=\Kone$ take the following forms:
\begin{equation}\label{eq:Tderiv}
\begin{aligned}
\langle \brho, \partial_{\lambda}T (0,\lambda_*,\Kone)\brho\rangle_{\partial D}|_{\lambda=\lambda_*}&= 
\left(\begin{matrix}
\gamma_*&0\\
0&\gamma_*
\end{matrix}\right), \\
\langle \brho,  \bbeta_1\cdot\nabla_{\bp}T(0,\lambda_*,\bp)\brho\rangle_{\partial D}|_{\bp=\Kone}&= 
\left(\begin{matrix}
0&\overline{\theta_*}\\
\theta_*&0
\end{matrix}\right),\\
\langle \brho,  \bbeta_2\cdot\nabla_{\bp}T(0,\lambda_*,\bp)\brho\rangle_{\partial D}|_{\bp=\Kone}&= 
\left(\begin{matrix}
0&\tau\overline{\theta_*}\\
\overline{\tau}\theta_*&0
\end{matrix}\right),\\
\langle \brho, \partial_{\eps}T (\eps,\lambda_*,\Kone)\brho\rangle_{\partial D}|_{\eps=0}&= 
\left(\begin{matrix}
t_*&0\\
0&-t_*
\end{matrix}\right), 
\end{aligned}
\end{equation}
where $t_*,\gamma_*\in\mathbb R$, $\theta_*\in\mathbb C$. 
\end{prop}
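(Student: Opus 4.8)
We prove the four identities in \eqref{eq:Tderiv} by representation-theoretic means rather than by direct computation. Throughout, $\langle\brho,A\brho\rangle_{\partial D}$ denotes the $2\times2$ matrix with entries $\langle\rho_i,A\rho_j\rangle_{\partial D}$, and I use the action of $R$ and $\rflc$ on the kernel basis $\brho=(\rho_1,\rho_2)$ of $T(0,\lambda_*,\Kone)$ recorded in \eqref{eq:Fredmodesker} and Theorem~\ref{lem:DiracP}: $R\rho_1=\tau\rho_1$, $R\rho_2=\bar\tau\rho_2$, and $\rflc\rho_1=\rho_2$ (hence $\rflc\rho_2=\rho_1$). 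The normalization \eqref{eq:w_normalization} only rescales all matrix entries by one common positive factor and is irrelevant to the structure. The plan is: establish the intertwining relations of the family $T(\eps,\lambda,\bp)$ with $R$, with $\rflc$, and with the $\bp$-translations $\Lambda^*$; record self-adjointness of the derivatives for real $\lambda$; then assemble the matrices.

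First I would record the symmetry relations of $G^f$. Using the spectral sum \eqref{eq:GQPfreq} and reindexing by $R\Lambda^*=\Lambda^*$ and $\rflc\Lambda^*=\Lambda^*$, one checks $G^f(R\bx,R\by;\lambda,\bp)=G^f(\bx,\by;\lambda,R^{-1}\bp)$, $G^f(\rflc\bx,\rflc\by;\lambda,\bp)=G^f(\bx,\by;\lambda,\rflc\bp)$, that $G^f(\bx,\by;\lambda,\cdot)$ is $\Lambda^*$-periodic, and that $\overline{G^f(\bx,\by;\lambda,\bp)}=G^f(\by,\bx;\lambda,\bp)$ for real $\lambda,\bp$. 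Since $\partial D$ is invariant under $R$ and $\rflc$, and since $R^{\eps}$ commutes with $R$ while $\rflc R^{\eps}\rflc=R^{-\eps}$, these give: (a) $T(0,\lambda,\Kone)$ and $T(\eps,\lambda,\Kone)$ commute with $R$, and $T(0,\lambda,\Kone)$ commutes with $\rflc$, using $R\Kone,\rflc\Kone\in\Kone+\Lambda^*$ from \eqref{eq:dualsym} together with $\Lambda^*$-periodicity; (b) $\rflc\, T(\eps,\lambda,\Kone)\,\rflc=T(-\eps,\lambda,\Kone)$, hence $\partial_\eps T(0,\lambda_*,\Kone)$ anticommutes with $\rflc$; (c) differentiating $R^{-1}T(0,\lambda,\bp)R=T(0,\lambda,R^{-1}\bp)$ at $\bp=\Kone$, and using $\Lambda^*$-periodicity with $R\bbeta_1=-\bbeta_1-\bbeta_2$ and $R\bbeta_2=\bbeta_1$, gives $RT_2R^{-1}=T_1$ and $RT_1R^{-1}=-T_1-T_2$, where $T_j:=\bbeta_j\cdot\nabla_\bp T(0,\lambda_*,\bp)|_{\bp=\Kone}$; (d) $\partial_\lambda T(0,\lambda_*,\Kone)$, $\partial_\eps T(0,\lambda_*,\Kone)$, $T_1$ and $T_2$ are all self-adjoint for $\langle\cdot,\cdot\rangle_{\partial D}$ (differentiate the conjugation identity for $G^f$ in $\lambda$, in $\eps$, and in the real components of $\bp$).

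Next I would assemble the matrices. If an operator $M$ commutes with $R$, then $\langle\rho_i,M\rho_j\rangle_{\partial D}=\langle R\rho_i,MR\rho_j\rangle_{\partial D}=\overline{\tau_i}\,\tau_j\,\langle\rho_i,M\rho_j\rangle_{\partial D}$ with $\tau_1=\tau$, $\tau_2=\bar\tau$; since $\overline{\tau_i}\tau_j\in\{\tau,\bar\tau\}\ne1$ for $i\ne j$, the matrix $\langle\brho,M\brho\rangle_{\partial D}$ is diagonal. Applied to $M=\partial_\lambda T(0,\lambda_*,\Kone)$ and $M=\partial_\eps T(0,\lambda_*,\Kone)$ this yields the diagonal form. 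Since $\partial_\lambda T$ commutes with $\rflc$ and $\rflc\rho_1=\rho_2$, its two diagonal entries are equal; since $\partial_\eps T$ anticommutes with $\rflc$, its diagonal entries are opposite; self-adjointness makes both entries real. This gives the first and fourth identities with $\gamma_*,t_*\in\mathbb R$. For the $\bp$-derivatives, set $M_j:=\langle\brho,T_j\brho\rangle_{\partial D}$. From $RT_2R^{-1}=T_1$ one obtains $(M_1)_{11}=(M_2)_{11}$, $(M_1)_{22}=(M_2)_{22}$, $(M_1)_{12}=\bar\tau(M_2)_{12}$, $(M_1)_{21}=\tau(M_2)_{21}$; from $RT_1R^{-1}=-T_1-T_2$ one gets $(M_1)_{ii}=-(M_1)_{ii}-(M_2)_{ii}$, hence $3(M_1)_{ii}=0$, while on the off-diagonal the same step reduces, via $1+\tau+\bar\tau=0$, to a tautology and imposes no constraint. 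Thus $M_1$ is purely off-diagonal; self-adjointness of $T_1$ gives $(M_1)_{12}=\overline{(M_1)_{21}}$, so setting $\theta_*:=(M_1)_{21}\in\mathbb C$ yields the second identity, and then $(M_2)_{12}=\tau\bar\theta_*$ and $(M_2)_{21}=\bar\tau\theta_*$ yield the third.

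The one step that is not pure bookkeeping is the identity $\rflc\, T(\eps,\lambda,\Kone)\,\rflc=T(-\eps,\lambda,\Kone)$: it rests on $\rflc R^{\eps}\rflc=R^{-\eps}$ combined with the $\rflc$-invariance of $G^f(\cdot,\cdot;\lambda,\Kone)$, and it is exactly this relation that produces the opposite signs $t_*$ and $-t_*$, which is the algebraic seed of the band-gap opening and topological phase transition analyzed in Section~\ref{sec:bands}. A minor technical point I would also address is that $\eps\mapsto T(\eps,\lambda,\Kone)$ is a smooth map into bounded operators $H^{-1/2}(\partial D)\to H^{1/2}(\partial D)$, so that $\partial_\eps T(0,\lambda_*,\Kone)$ is well defined: since $R^{\eps}$ is an isometry, the $-\frac{1}{2\pi}\ln|\bx-\by|$ part of the kernel is independent of $\eps$ and only the smooth remainder varies, and the differentiability in $\lambda$ and $\bp$ near $(\lambda_*,\Kone)$ is inherited from the decomposition of $G^f$ used in Section~\ref{sec:Dirac}. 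Everything else is the routine algebra of the cube roots of unity together with the lattice identities \eqref{eq:dualsym}.
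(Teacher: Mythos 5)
Your proof is correct, and it takes a genuinely different, more structural route than the paper's. You lift all symmetry to the operator level, deriving the intertwining relations $[T(\eps,\lambda,\Kone),R]=0$, $\rflc\,T(\eps,\lambda,\Kone)\,\rflc=T(-\eps,\lambda,\Kone)$, and $R^{-1}T(0,\lambda,\bp)R=T(0,\lambda,R^{-1}\bp)$ (together with $\Lambda^*$-periodicity in $\bp$) directly from the covariance of $G^f$ under $R$, $\rflc$, and $\bq\mapsto\bq+\bq_0$, and then read off the matrix structure by pure character arithmetic: $\bar\tau_i\tau_j\neq1$ for $i\neq j$ forces any $R$-commuting operator to be diagonal in the basis $\brho$; the relation $1+\tau+\bar\tau=0$ applied to the operator identity $RT_1R^{-1}=-T_1-T_2$ yields $3(M_1)_{ii}=0$; and the $\rflc$-commutation/anticommutation together with self-adjointness pin down the signs and reality. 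The paper's proof in Appendix~\ref{sec:Tderiv} instead works with explicit formulas for $\partial_\lambda G^f$, $\nabla_\bp G^f$, and $\partial_\eps G^f(R^\eps\cdot,R^\eps\cdot;\lambda,\Kone)$, and for the vanishing of the diagonal of $\nabla_\bp T$ it decomposes $\partial D=C_1\sqcup C_2\sqcup C_3$ into $R$-orbits and shows the vector-valued integrals satisfy $I_{2,2}=RI_{1,1}$, $I_{3,3}=R^2I_{1,1}$, so they cancel by $I+R+R^2=0$ acting on $\mathbb R^2$. That vector identity is the same group-theoretic fact as the $1+\tau+\bar\tau=0$ behind your scalar relation, and the paper's kernel anticommutation $K(\rflc\bx,\rflc\by)=-K(\bx,\by)$ is the infinitesimal version of your operator identity $\rflc\,\partial_\eps T(0,\lambda_*,\Kone)\,\rflc=-\partial_\eps T(0,\lambda_*,\Kone)$; the two proofs thus encode identical symmetry content. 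What your version buys is the elimination of the boundary decomposition and the explicit kernel manipulations, and a clearer conceptual explanation of the sign flip $t_*\leftrightarrow -t_*$ (it is forced by $\rflc R^\eps\rflc=R^{-\eps}$ alone); what it costs is that the reader must independently check the abstract intertwining identities, which you correctly reduce to $R\Lambda^*=\Lambda^*$, $\rflc\Lambda^*=\Lambda^*$, $R\Kone\equiv\rflc\Kone\equiv\Kone\ (\text{mod}\ \Lambda^*)$, and the $R$-, $\rflc$-invariance of $\partial D$.
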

In the above, for $(\phi_1,\phi_2)\in (H^{-1/2}(\partial D))^2$ and $(\psi_1,\psi_2)\in (H^{1/2}(\partial D))^2$, we denote $$\langle (\phi_1,\phi_2),(\psi_1,\psi_2)\rangle_{\partial D}:= \langle \phi_1,\psi_1\rangle_{\partial D} + \langle \phi_2,\psi_2\rangle_{\partial D}, $$ where the symbol $\langle \cdot,\cdot\rangle_{\partial D}$ on the right hand side represents the regular $H^{-1/2}(\partial D)$-$H^{1/2}(\partial D)$ pairing.  The proof of Proposition~\ref{lem:Tderiv} is given in Appendix~\ref{sec:Tderiv}.
\begin{remark}
In \eqref{eq:gammacalc} and \eqref{eq:thetacalc}, we calculated the values of $\gamma_*$ and $\theta_*$ and showed that $\gamma_*\neq 0$ and $\theta_*\neq0$. 
For simplicity, in all calculations throughout the paper, we assume $\gamma_*>0$ and $t_*>0$. The cases when $\gamma_*<0$ or $t_*<0$ can be treated similarly. Therefore, the main results in Section~\ref{sec:mainresults} hold regardless of the signs of $\gamma_*$ and $t_*$.
\end{remark}

\begin{prop}\label{lem:pmlamgen} Assume that $t_*>0$ and parameterize the quasimomenta near $K$ by $\bp(\ell,\mu):=K + \ell \bbeta_1 + \mu\bbeta_2$. Let $\ell\in\mathbb R$, $\mu\in\mathbb R$ and $\eps\geq0$ be sufficiently small. 

\begin{itemize}
    \item [(i)] The dispersion relations for the spectral problem \eqref{eq:bandu_eps} attain the following expansions:
\begin{equation}\label{eq:energymu}
\begin{aligned}
\lambda_{1,\pm\eps}(\bp(\ell,\mu)) &= \lambda_*  - \frac{1}{|\gamma_*|}\sqrt{\eps^2 t_*^2 + |\theta_*|^2 | \ell+\mu\bar\tau |^2}(1+O(\eps,\ell,\mu)), \\
\lambda_{2,\pm\eps}(\bp(\ell,\mu)) &= \lambda_*  + \frac{1}{|\gamma_*|}\sqrt{\eps^2 t_*^2 + |\theta_*|^2 | \ell+\mu\bar\tau |^2}(1+O(\eps,\ell,\mu)). 
\end{aligned}
\end{equation}
\item[(ii)] 
The corresponding density functions for the integral equation \eqref{eq:bandE} attain the expansions
\begin{equation}\label{eq:pdens}
\begin{aligned}
\phi_{1,\eps}(\bx;\bp(\ell,\mu))&=w_1 + L(\eps,\ell,\mu)w_2  +O(\eps,\ell,\mu),\\
\phi_{2,\eps}(\bx;\bp(\ell,\mu))&= - \overline{L(\eps,\ell,\mu)}w_1 + w_2  +O(\eps,\ell,\mu),
\end{aligned}
\end{equation}
\begin{equation}\label{eq:mdens}
\begin{aligned}
\phi_{1,-\eps}(\bx;\bp(\ell,\mu))= \overline{L(\eps,\ell,\mu)}w_1 + w_2  +O(\eps,\ell,\mu),\\
\phi_{2,-\eps}(\bx;\bp(\ell,\mu))=w_1 - L(\eps,\ell,\mu)w_2  +O(\eps,\ell,\mu).
\end{aligned}
\end{equation}
In the above,
\begin{equation}\label{eq:Lmu}
L(\eps,\ell,\mu):= \frac{\theta_*(\ell+\mu\overline{\tau})}{\eps t_*+ \sqrt{\eps^2 t_*^2 + |\theta_*|^2  | \ell+\mu\bar\tau |^2}}. 
\end{equation}

\item[(iii)] 
The corresponding Bloch modes with unit $L^2(\cC_z\backslash D^{\pm})$ norm take the form
\begin{equation}\label{eq:pmodesLmu}
\begin{aligned}
u_{1,\eps}(\bx;\bp(\ell,\mu))&=\left(w_1 + L(\eps,\ell,\mu)w_2  +O(\eps,\ell,\mu) \right)\frac{1}{\sqrt{1+|L(\eps,\ell,\mu)|^2+O(\eps,\ell,\mu)}},\\
u_{2,\eps}(\bx;\bp(\ell,\mu))&=\left( - \overline{L(\eps,\ell,\mu)}w_1 + w_2  +O(\eps,\ell,\mu) \right)\frac{1}{\sqrt{1+|L(\eps,\ell,\mu)|^2+O(\eps,\ell,\mu)}},
\end{aligned}
\end{equation}
\begin{equation}\label{eq:mmodesLmu}
\begin{aligned}
u_{1,-\eps}(\bx;\bp(\ell,\mu))&=\left( \overline{L(\eps,\ell,\mu)}w_1 + w_2  +O(\eps,\ell,\mu)\right)\frac{1}{\sqrt{1+|L(\eps,\ell,\mu)|^2+O(\eps,\ell,\mu)}},\\
u_{2,-\eps}(\bx;\bp(\ell,\mu))&=\left(w_1 - L(\eps,\ell,\mu)w_2  +O(\eps,\ell,\mu) \right)\frac{1}{\sqrt{1+|L(\eps,\ell,\mu)|^2+O(\eps,\ell,\mu)}}.
\end{aligned}
\end{equation}

\end{itemize}
\end{prop}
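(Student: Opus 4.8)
The plan is to carry out a Lyapunov--Schmidt reduction of the integral equation~\eqref{eq:bandE} near the Dirac point $(\eps,\lambda,\bp)=(0,\lambda_*,\Kone)$, turning it into a $2\times 2$ homogeneous linear system whose coefficient matrix is, at leading order, the one assembled from the derivatives computed in Proposition~\ref{lem:Tderiv}. All three statements then follow by elementary linear algebra together with standard perturbation estimates.

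First I would record that $T(\eps,\lambda,\bp):H^{-1/2}(\partial D)\to H^{1/2}(\partial D)$ depends analytically on $(\lambda,\bp)$ and smoothly on $\eps$ in a fixed neighbourhood of $(0,\lambda_*,\Kone)$: the quasi-periodic Green function is analytic away from its lattice-sum poles $\lambda=|\bm|^2$, and since $\eta_0$ is fixed and $\lambda_*\in U_{\eta_0}$ the value $\lambda_*$ stays a fixed distance from $|\bm_1|^2$, while the $\eps$-dependence enters only through the rotation $R^\eps$. By Theorem~\ref{lem:DiracP}, $T(0,\lambda_*,\Kone)$ is Fredholm of index zero with kernel $\mathrm{span}\{\rho_1,\rho_2\}$, and since $\overline{G^f(\bx,\by;\lambda_*,\Kone)}=G^f(\by,\bx;\lambda_*,\Kone)$ for the real energy $\lambda_*$, it is self-adjoint with respect to the $H^{-1/2}$--$H^{1/2}$ pairing, so its range is the annihilator of $\{\rho_1,\rho_2\}$. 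Picking a dual system $\{\psi_1,\psi_2\}\subset H^{1/2}(\partial D)$ with $\langle\rho_i,\psi_j\rangle_{\partial D}=\delta_{ij}$ and writing $H^{1/2}(\partial D)=\mathrm{Ran}\,T(0,\lambda_*,\Kone)\oplus\mathrm{span}\{\psi_1,\psi_2\}$, I would solve the range component of $T(\eps,\lambda,\bp)(c_1\rho_1+c_2\rho_2+\phi_1)=0$ for $\phi_1$ via the implicit function theorem, obtaining a map $\bc=(c_1,c_2)\mapsto\phi_1$ that is linear in $\bc$ with $\phi_1=O(|\eps|+|\lambda-\lambda_*|+|\bp-\Kone|)\|\bc\|$. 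The remaining equations $\langle\rho_i,T(\eps,\lambda,\bp)(c_1\rho_1+c_2\rho_2+\phi_1)\rangle_{\partial D}=0$, $i=1,2$, form the reduced system. Since $T(0,\lambda_*,\Kone)\phi_1\in\mathrm{Ran}\,T(0,\lambda_*,\Kone)$ is annihilated by $\rho_i$, the $\phi_1$-contribution is quadratic in the small parameters; hence, writing $\bp=\bp(\ell,\mu)=\Kone+\ell\bbeta_1+\mu\bbeta_2$ and invoking Proposition~\ref{lem:Tderiv}, the reduced system reads $M(\eps,\lambda,\bp(\ell,\mu))\bc=0$ with
\[
M(\eps,\lambda,\bp(\ell,\mu))=\begin{pmatrix}\gamma_*(\lambda-\lambda_*)+\eps t_* & \overline{\theta_*}\,(\ell+\mu\tau)\\ \theta_*\,(\ell+\mu\overline{\tau}) & \gamma_*(\lambda-\lambda_*)-\eps t_*\end{pmatrix}+O\!\left((|\eps|+|\lambda-\lambda_*|+|\ell|+|\mu|)^2\right),
\]
and with $\eps$ replaced by $-\eps$ for the clockwise-rotated lattice.

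The eigenvalues of~\eqref{eq:bandu_eps} near $\lambda_*$ are the zeros in $\lambda$ of $\det M$; there are exactly two, counting multiplicity, by the Gohberg--Sigal stability of the total multiplicity, which equals $2$ at the Dirac point by Theorem~\ref{lem:DiracP}. Using $\tau+\overline{\tau}=-1$ and $|\tau|=1$, the determinant of the leading matrix is $\gamma_*^2(\lambda-\lambda_*)^2-\eps^2 t_*^2-|\theta_*|^2|\ell+\mu\overline{\tau}|^2$; comparing $\det M$ with this by Rouch\'e's theorem on small circles around its two simple roots yields~\eqref{eq:energymu} together with the multiplicative $\bigl(1+O(\eps,\ell,\mu)\bigr)$ correction, the two signs of $\eps$ producing the same leading expansion because only $\eps^2$ occurs. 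For the density functions I would take, for each branch, the null vector of the leading matrix: for $\lambda_{1,\eps}$ it is proportional to $(1,L)$ and for $\lambda_{2,\eps}$ to $(-\overline{L},1)$ with $L=L(\eps,\ell,\mu)$ as in~\eqref{eq:Lmu}, the verification resting on the algebraic identity $\theta_*(\ell+\mu\overline{\tau})\,\overline{L}=\sqrt{\eps^2 t_*^2+|\theta_*|^2|\ell+\mu\overline{\tau}|^2}-\eps t_*$; for the $-\eps$ lattice the two diagonal entries swap, giving the null vectors $(\overline{L},1)$ and $(1,-L)$. Because $\phi=c_1\rho_1+c_2\rho_2+\phi_1$ with $\phi_1=O(|\eps|+|\ell|+|\mu|)$, this gives~\eqref{eq:pdens}--\eqref{eq:mdens} after writing $w_i$ for $\rho_i$. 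For part~(iii) I would apply the single-layer potential to these densities; using $\|w_1\|_{L^2(\cunp)}=\|w_2\|_{L^2(\cunp)}=1$ and $w_1\perp w_2$ in $L^2(\cunp)$ --- which follows from $Rw_1=\tau w_1$, $Rw_2=\overline{\tau}w_2$ and the unitarity of $R$, forcing $\langle w_1,w_2\rangle=\tau^2\langle w_1,w_2\rangle$ --- one obtains $\|w_1+Lw_2\|^2=1+|L|^2$, and similarly for the other combinations. Since $|L|\le 1$, the $O(|\eps|+|\ell|+|\mu|)$ remainders (including the $O(\eps)$ change from $\cunp$ to $\cC_z\backslash D^{\pm\eps}$) are harmless, and normalising produces the factor $\bigl(1+|L|^2+O(\eps,\ell,\mu)\bigr)^{-1/2}$, i.e.~\eqref{eq:pmodesLmu}--\eqref{eq:mmodesLmu}.

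The step I expect to be the main obstacle is the bookkeeping inside the reduction: proving cleanly that the Lyapunov--Schmidt corrector $\phi_1$ affects the $2\times 2$ matrix only at second order in the small parameters, so that its leading part is exactly the matrix of Proposition~\ref{lem:Tderiv}, and then pushing the error estimates through $\det M$ carefully enough to recover the multiplicative $\bigl(1+O(\cdot)\bigr)$ form of the dispersion relation uniformly across the whole small-parameter range --- in particular in the borderline regime where $\eps$, $\ell+\mu\overline{\tau}$ and $\lambda-\lambda_*$ are all of comparable size. A minor point to watch is that $\lambda_*=\lambda_*(\eta_0)$ must remain bounded away from the pole $|\bm_1|^2$ so that all operators are analytic in $\lambda$ on a fixed neighbourhood of $\lambda_*$, which is guaranteed since $\eta_0$ is fixed and $\lambda_*\in U_{\eta_0}$.
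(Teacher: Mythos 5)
Your Lyapunov--Schmidt reduction to the $2\times 2$ matrix built from Proposition~\ref{lem:Tderiv}, followed by solving $\det M=0$ and reading the leading densities off the null vectors, is essentially the same argument the paper gives (the paper's matrix $\cM$ in \eqref{eq:dispM} is exactly yours, and its projection $Q$ and corrector $\phi^{(1)}$ play the roles of your range projection and $\phi_1$). The only cosmetic differences are that the paper locates the roots via a rescaled ansatz and the inverse function theorem rather than Rouch\'e, and proves the normalization in part~(iii) through the auxiliary Lemma~\ref{lem:normest} rather than the direct $w_1\perp w_2$ computation you use.
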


Note that for all $\eps$, $\ell$ and $\mu$, the eigenvalues above satisfy $\lambda_{1,\pm\eps}(\bp(\ell,\mu)) < \lambda_{2,\pm\eps}(\bp(\ell,\mu))$. Hence, a band gap is opened near the Dirac point for the spectral problem \eqref{eq:bandu_eps}. 
Another observation is that 
$$|
\bp-K|^2 = |\ell\bbeta_1 +\mu\bbeta_2|^2=\frac{4}{3}(\ell^2+\mu^2-\ell\mu) =\frac{4}{3}| \ell+\mu\bar\tau |^2.
$$ 
This is used to relate  Theorem~\ref{thm:band_gap} to Proposition~\ref{lem:pmlamgen}.

\begin{proof}
Let $V$ be a sufficiently small neighborhood of $\lambda_*$. We first show that for $\eps$ and $|\bp-\Kone|$ sufficiently small, the characteristic value of $T (\eps,\lambda,\bp)$ in $V$ has multiplicity two. 

Note that for $\eps$ and $|\bp-K|$ being sufficiently small, $T (\eps,\lambda,\bp)$ is an analytic family of operators in the variable $\lambda$. When $V$ is sufficiently small, from Section \ref{sec:Dirac}, it is known that $\lambda=\lambda_*$ is the only characteristic value of $T (0,\lambda,\Kone)$ within $V$. Indeed, the multiplicity of the characteristic $\lambda_*$ of $T (0,\lambda,\Kone)$ is two. This follows  from~\eqref{eq:Fredmodesker} and  \eqref{eq:Tderiv}. That is, $\ker\, T (0,\lambda_*,\Kone) =\text{span}\left\{ \rho_1, \rho_2 \right\}$ and 
\begin{equation}
\langle \phi, \partial_{\lambda}T (0,\lambda,\Kone) \phi \rangle_{\partial D} \neq0 \quad \forall \phi\in\ker T (0,\lambda_*,\Kone).
\end{equation}
Thus $\partial_{\lambda}T (0,\lambda,\Kone)|_{\lambda=\lambda_*} \psi \notin \text{Ran}(T (0,\lambda_*,\Kone))$.
We conclude, $\phi$ is of rank one and the multiplicity of $T (0,\lambda_*,\Kone)$ is two. Since $T (\eps,\lambda,\bp)$ is a Fredholm operator ~\cite{mclean2000strongly} and it is continuous with respect to $\eps$ and $\bp$, by Theorem~\ref{lem:actGS}, we deduce that $T (\eps,\lambda,\bp)$ has multiplicity two in $V$.

Next, we use the perturbation argument to show that for $\eps$ and $|\bp-\Kone|$ being sufficiently small, $T (\eps,\lambda,\bp)$ attains two characteristic values in $V$, with multiplicity one each. This argument also gives rise to the asymptotic expansion of the characteristic values and the density functions.

Let $\eps,\ell,\mu\ll1$. We solve for $(\lambda,\phi)$ pairs in $V\times H^{-1/2}(\partial D)$ such that 
$T (\eps,\lambda,\bp)\phi=0$. Let us express 
\begin{equation}
\bp(\ell,\mu) = \Kone+\bp^{(1)},\quad \lambda=\lambda_*+\lambda^{(1)}, \quad T (\eps,\lambda,\bp(\ell,\mu)) = T^{(0)}+T^{(1)}, \quad \phi = \phi^{(0)} +  \phi^{(1)}.
\end{equation}
Here $|\bp^{(1)}| = |\ell\bbeta_1 +\mu\bbeta_2| \ll1$, $\lambda^{(1)}\ll1$,  $T^{(0)} = T (0,\lambda_*,\Kone)$, $T^{(1)} = T (\eps,\lambda,\bp)  - T^{(0)}$, 
$\phi^{(0)} \in \ker (T^{(0)})$, and $\phi^{(1)}\in  \left(\ker (T^{(0)})\right)^{\perp}$, where the perpendicular sign is with respect to the inner product of $H^{-1/2}(\partial D)$.
Using $T^{(0)}\phi^{(0)} =0$, the integral equation $T (\eps,\lambda,\bp)\phi=0$ boils down to
\begin{equation}\label{eq:bandasymp}
T^{(0)} \phi^{(1)}+T^{(1)} (\phi^{(0)} +  \phi^{(1)})=0.
\end{equation}

Note that $T^{(0)}:H^{-1/2}(\partial D) \to H^{1/2}(\partial D)$ is a Fredholm operator and
the range of $T_0$ is the space perpendicular to $\ker(T^{(0)})$ in the dual sense. That is,
\begin{equation}
\text{Ran}T^{(0)} = \{ \psi\in H^{1/2}(\partial D): \langle \rho_i, \psi  \rangle_{\partial D} = 0, i=1,2\}.
\end{equation}
Define
\begin{equation}\label{eq:proj1}
Q\psi :=\psi - \sum_{i=1,2}\frac{\langle \rho_i,\psi  \rangle_{\partial D}}{\langle \rho_i,f_ i \rangle_{\partial D}} f_i,
\end{equation}
where $f_i = S\rho_i$.
It is straightforward to check that  $QH^{1/2}(\partial D)=\text{Ran}(T^{(0)})$ and $Q$ is a projection.
Thus \eqref{eq:bandasymp} is equivalent to 
\begin{equation}\label{eq:LyapCoRan}
\langle\psi, T^{(0)} \phi^{(1)}+T^{(1)} (\phi^{(0)} +  \phi^{(1)})\rangle_{\partial D}=0 \quad\forall\psi\in\ker T^{(0)},
\end{equation}
and
\begin{equation}\label{eq:LyapRan}
T^{(0)} \phi^{(1)}+QT^{(1)} (\phi^{(0)} +  \phi^{(1)})=0.
\end{equation}
Here we have used $QT^{(0)}=T^{(0)}$.

Let $A$ be the inverse of $T^{(0)}|_{(\ker(T^{(0)}))^\perp}: (\ker(T^{(0)}))^\perp\to \text{Ran} (T^{(0)})$. It follows from \eqref{eq:LyapRan} and $\phi^{(1)}\in  (\ker(T^{(0)}))^\perp$ that
\begin{equation*}
(I + AQT^{(1)} ) \phi^{(1)}+ AQT^{(1)} \phi^{(0)} =0.
\end{equation*}
Here we have used the fact that $AT^{(0)} \phi^{(1)} = \phi^{1}$ since $\phi^{1}\in\text{Ran}(T^{(0)})$.
Since $T^{(1)} = O(|\eps|, |\ell|, |\mu|,|\lambda^{(1)}|)$, when $\eps$, $\ell$ and $\lambda^{(1)}$ are sufficiently small,  $(I + AQT^{(1)})$ is invertible with an inverse norm bounded by $\frac{1}{2}$. Thus, there holds 
\begin{equation}\label{eq:phi1band}
\phi^{(1)}= - (I + AQT^{(1)} )^{-1} AQT^{(1)} \phi^{(0)},
\end{equation}
where $(I + AQT^{(1)} )^{-1}$ is the inverse of $I + AQT^{(1)}:(\ker(T^{(0)}))^\perp\to \text{Ran} (T^{(0)})$. We obtain 
\begin{equation*}
   (I + AQT^{(1)} )^{-1} AQT^{(1)} =  O(|\eps|, |\ell|,|\mu| |\lambda^{(1)}|).
\end{equation*}

Using the expansions
\begin{equation}\label{eq:phi0}
\phi^{(0)}=a\rho_1+b\rho_2 \text{ for some constants }a,b\in\mathbb C,
\end{equation}
and
\begin{equation*}
T^{(1)} = \eps \partial_{\eps}T (0,\lambda_*,\Kone) + (\ell \bbeta_1 + \mu\bbeta_2)\cdot\nabla_{\bp} T(0,\lambda_*,\Kone) + \lambda^{(1)} \partial_{\lambda}T (0,\lambda,\Kone) +O(|\eps|^2, |\ell |^2,|\mu|^2, |\lambda^{(1)}|^2),
\end{equation*}
and applying Proposition~\ref{lem:Tderiv}, \eqref{eq:LyapCoRan} becomes
\begin{equation}\label{eq:lam1phi0}
\cM(\eps, \ell, \mu,\lambda^{(1)})
\left(\begin{matrix}
a\\
b
\end{matrix}\right)
=0,
\end{equation}
where
\begin{equation}\label{eq:dispM}
\cM(\eps, \ell,\mu, \lambda^{(1)})=
\left(\begin{matrix}
t_*\eps + \gamma_* \lambda^{(1)}& \ell\overline{\theta_*}+\mu\tau\overline{\theta_*}\\\
\ell\theta_*+\mu\overline{\tau}\theta_*&-t_*\eps + \gamma_* \lambda^{(1)}
\end{matrix}\right)
+O(|\eps|^2, |\ell |^2, |\mu|^2, |\lambda^{(1)}|^2).
\end{equation}
With the ansatz 
\begin{equation}
 \lambda^{(1)} = \frac{x}{|\gamma_*|}\sqrt{\eps^2 t_*^2 + |\theta_*|^2 |\ell+\mu\overline{\tau}|^2},
\end{equation}
the inverse function theorem implies that when $\eps$, $\ell$, $\mu$ and $\phi^{(1)}$ are sufficiently small,
there exist $x=1+O(\eps,\ell,\mu)$ and $x= - 1+O(\eps,\ell,\mu)$ such that $\lambda^{(1)}$
satisfies $\det(\cM(\eps, \ell,\mu, \lambda^{(1)}))=0$.
For each of these two values of $\lambda^{(1)}$,
by solving \eqref{eq:phi0}, we obtain $\phi^{(1)}$ from \eqref{eq:phi1band} and \eqref{eq:phi0}.


The expansion of normalized eigenmodes follows from Lemma~\ref{lem:normest} below.
\end{proof}

Before presenting Lemma~\ref{lem:normest}, we introduce the following auxiliary lemma whose proof is elementary.
\begin{lemma}\label{lem:invfact}
Let $X$ and $Y$ be two Banach spaces. 
Consider two operators $A_\eps, A_0: X\to Y$ and two functions $f_\eps, f_0\in Y$.
Suppose $A_\eps^{-1}$ and $A_0^{-1}$ exist. Then 
\begin{equation}
\begin{aligned}
\| A_\eps^{-1} f_\eps - A_0 ^{-1} f_0\|_{X} 
&\leq
\| A_\eps^{-1}\|_{Y\to X}  \| A_0 - A_\eps\|_{X\to Y}  \|A_0 ^{-1}\|_{Y\to X}  \|f_\eps\|_{Y}  + \| A_0 ^{-1}\|_{Y\to X}   \|f_\eps -  f_0\|_{Y} .
\end{aligned}
\end{equation}
\end{lemma}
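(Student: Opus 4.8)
The plan is to split the difference $A_\eps^{-1} f_\eps - A_0^{-1} f_0$ into two pieces by inserting the intermediate quantity $A_0^{-1} f_\eps$, and then to bound each piece separately via the triangle inequality together with the standard resolvent-type identity for operator inverses.

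First I would write
\[
A_\eps^{-1} f_\eps - A_0^{-1} f_0 = \big(A_\eps^{-1} - A_0^{-1}\big) f_\eps + A_0^{-1}\big(f_\eps - f_0\big),
\]
so that, by the triangle inequality in $X$, it suffices to control the two summands. The second one is immediately bounded by $\|A_0^{-1}\|_{Y\to X}\,\|f_\eps - f_0\|_Y$. For the first one I would invoke the algebraic identity
\[
A_\eps^{-1} - A_0^{-1} = A_\eps^{-1}\big(A_0 - A_\eps\big) A_0^{-1},
\]
valid since both inverses are assumed to exist (it follows by expanding the right-hand side: $A_\eps^{-1} A_0 A_0^{-1} - A_\eps^{-1} A_\eps A_0^{-1} = A_\eps^{-1} - A_0^{-1}$). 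Taking operator norms and using submultiplicativity then gives
\[
\big\| \big(A_\eps^{-1} - A_0^{-1}\big) f_\eps \big\|_X \le \|A_\eps^{-1}\|_{Y\to X}\,\|A_0 - A_\eps\|_{X\to Y}\,\|A_0^{-1}\|_{Y\to X}\,\|f_\eps\|_Y,
\]
and summing the two estimates yields exactly the claimed inequality.

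There is essentially no obstacle in this argument; the only point deserving a line of justification is the resolvent identity above, which is purely algebraic once $A_\eps^{-1}$ and $A_0^{-1}$ are bounded operators $Y\to X$. In the applications the lemma is used to compare the normalized Bloch modes at $\eps\neq 0$ with those at $\eps=0$ through their single-layer-potential representations, where $A_\eps$ is a perturbation of $A_0$ with $\|A_0-A_\eps\|$ and $\|f_\eps-f_0\|$ both $O(\eps)$, so that the right-hand side is $O(\eps)$.
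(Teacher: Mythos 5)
Your proof is correct and is the natural argument the paper has in mind when it remarks that the lemma is ``elementary'' (the paper omits the proof); the decomposition via the intermediate term $A_0^{-1}f_\eps$ together with the resolvent identity $A_\eps^{-1}-A_0^{-1}=A_\eps^{-1}(A_0-A_\eps)A_0^{-1}$ gives exactly the stated bound.
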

\begin{lemma}\label{lem:normest}
Let
\begin{equation}
\tilde u_{1,\eps}(\bx;\bp(\ell,\mu)):= 
\int_{\partial D}G^f(\bx, R^{\eps} \by;\lambda, \bp(\ell,\mu))\phi_{1,\eps}(\by)\, ds_{\by}, \quad \bx\in \cpeps.
\end{equation}
There holds
\begin{equation}
\|\tilde u_{1,\eps}(\cdot;\bp(\ell,\mu)) - \big(w_1 + L(\eps,\ell,\mu)w_2 \big) \|_{H^{1}(\chomo)} = O(\eps,\ell),\quad n=1,2.
\end{equation}
Here $w_i$ are defined in \eqref{eq:wi} and $L(\eps,\ell,\mu)$ is defined in \eqref{eq:Lmu}. 
\end{lemma}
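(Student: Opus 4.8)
The plan is to compare the two single-layer representations through the solution operator
\[
\cV(\eps,\lambda,\bp)[\phi](\bx):=\int_{\partial D}G^f(\bx,R^\eps\by;\lambda,\bp)\,\phi(\by)\,ds_\by,\qquad \bx\in\chomo\setminus\partial D^\eps ,
\]
so that $\tilde u_{1,\eps}(\cdot;\bp(\ell,\mu))=\cV(\eps,\lambda,\bp(\ell,\mu))[\phi_{1,\eps}]$ with $\lambda=\lambda_{1,\eps}(\bp(\ell,\mu))$, while by \eqref{eq:wi} one has $w_1+L w_2=\cV(0,\lambda_*,\Kone)[\rho_1+L\rho_2]$ (here $L=L(\eps,\ell,\mu)$ is a scalar). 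By Proposition~\ref{lem:pmlamgen}, $\lambda_{1,\eps}(\bp(\ell,\mu))=\lambda_*+O(\eps,\ell,\mu)$, $|L(\eps,\ell,\mu)|\le1$, and $\|\phi_{1,\eps}-(\rho_1+L\rho_2)\|_{H^{-1/2}(\partial D)}=O(\eps,\ell,\mu)$. Writing $\psi:=\rho_1+L\rho_2$, I would split
\begin{equation*}
\tilde u_{1,\eps}-(w_1+L w_2)=\cV(\eps,\lambda,\bp)\big[\phi_{1,\eps}-\psi\big]+\big(\cV(\eps,\lambda,\bp)-\cV(0,\lambda_*,\Kone)\big)[\psi]
\end{equation*}
and estimate the two terms separately in $H^1(\chomo)$. (Note that $\cV(\eps,\lambda_{1,\eps},\bp(\ell,\mu))[\phi_{1,\eps}]$ vanishes in $D^\eps$, since on $\partial D^\eps$ it equals $T(\eps,\lambda_{1,\eps},\bp(\ell,\mu))[\phi_{1,\eps}]\circ R^{-\eps}=0$ and $\lambda_{1,\eps}$ is not a Dirichlet eigenvalue of $D^\eps$, so it coincides with the zero-extension used in \eqref{eq:ext}; likewise $w_1+Lw_2$ vanishes in $D$.)

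For the first term I would invoke the standard mapping property of the quasi-periodic single-layer potential: for $\eps$, $|\bp-\Kone|$ and $|\lambda-\lambda_*|$ small, $\cV(\eps,\lambda,\bp)\colon H^{-1/2}(\partial D)\to H^1(\chomo)$ is bounded with norm uniform in those parameters, since $G^f(\cdot,\cdot;\lambda,\bp)$ stays away from its poles $\{|\bp+\bq|^2:\bq\in\Lambda^*\}$ and the obstacles $D^\eps=R^\eps D$ form a smooth family of perturbations of $D$. Combined with the $H^{-1/2}$-bound on $\phi_{1,\eps}-\psi$, the first term is $O(\eps,\ell,\mu)$.

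The second term is the substantive one. Decompose $G^f(\bx,\by;\lambda,\bp)=-\tfrac1{2\pi}\ln|\bx-\by|+g(\bx-\by;\lambda,\bp)$, where $g$ extends jointly real-analytically across $\bx-\by=0$ and near $(\lambda,\bp)=(\lambda_*,\Kone)$. First observe that $\psi\in C^\infty(\partial D)$: indeed $\rho_i$ is, up to sign, the normal-derivative jump of $w_i$ across $\partial D$, and $w_i$ is smooth up to $\partial D$ from either side by elliptic regularity, $\partial D$ being smooth. The $g$-contribution $\int_{\partial D}\big(g(\bx-R^\eps\by;\lambda,\bp)-g(\bx-\by;\lambda_*,\Kone)\big)\psi(\by)\,ds_\by$, together with its first $\bx$-derivatives, is $O(\eps,\ell,\mu)$ uniformly on $\overline{\chomo}$, because $|R^\eps\by-\by|=O(\eps)$ on $\partial D$, $|\lambda-\lambda_*|+|\bp-\Kone|=O(\eps,\ell,\mu)$, $g\in C^1$, and $\psi$ is bounded. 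For the logarithmic contribution the key point is that $R^\eps$ is an isometry, so $|\bx-R^\eps\by|=|R^{-\eps}\bx-\by|$; hence, with $v(\bx):=-\tfrac1{2\pi}\int_{\partial D}\ln|\bx-\by|\,\psi(\by)\,ds_\by$ the Laplace single-layer potential over $\partial D$, the logarithmic contribution to the second term equals $v(R^{-\eps}\bx)-v(\bx)$ on $\chomo$.

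\textbf{The main obstacle} is thus to bound $\|v\circ R^{-\eps}-v\|_{H^1(\chomo)}$. Since $\psi\in C^\infty(\partial D)$, $v\in H^1_{\mathrm{loc}}(\mathbb R^2)$ is smooth up to $\partial D$ from either side, but $\nabla v$ jumps across $\partial D$ (by $-\psi\not\equiv 0$), so $v\notin H^2$ globally. Writing $R^{-\eps}=I+O(\eps)$, the $L^2$ part of the difference is at once $O(\eps)\|\nabla v\|_{L^2}$; for the gradient one splits $\chomo$ into the region away from $\partial D\cup\partial D^\eps$, where $v$ is smooth and the estimate is routine, and the $O(\eps)$-thin symmetric difference $(D\setminus D^\eps)\cup(D^\eps\setminus D)$ near the obstacle, where one exploits that $\tilde u_{1,\eps}$ and $w_1+Lw_2$ vanish on $\partial D^\eps$ and $\partial D$ respectively—so each is $O(\mathrm{dist}(\cdot,\partial))$ there—together with the $O(\eps)$ measure of that region, or, equivalently, flattens $\partial D^\eps$ onto $\partial D$ by a diffeomorphism that is $I+O(\eps)$ in $C^1$ and transplants the standard layer-potential estimates. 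Assembling the three contributions yields the asserted bound (which on the slices $\bp=\Kone+\ell\bbeta$ used in Theorems~\ref{lem:edge} and \ref{lem:edgearm}, where $\mu$ is proportional to $\ell$, reads $O(\eps,\ell)$); the estimate for $\tilde u_{2,\eps}$ is identical.
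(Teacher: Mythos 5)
Your decomposition takes a genuinely different route from the paper: you work inside the single-layer operator $\cV(\eps,\lambda,\bp)$ and split the error into a density error plus a kernel error (the kernel error split further into logarithmic and smooth parts), whereas the paper writes $\tilde u_{1,\eps}=A_\eps^{-1}f_\eps$ and $w_1+Lw_2=A_0^{-1}f_0$ with $A_\eps=-\Delta-\lambda_{1,\eps}(\bp(\ell,\mu))$ on the Floquet space and $f_\eps=\delta_{\partial D^\eps}\,\phi_{1,\eps}\circ R^{-\eps}$, then applies the abstract comparison Lemma~\ref{lem:invfact}. The ``moving obstacle'' difficulty surfaces in a single place in either approach — in yours as $\|v\circ R^{-\eps}-v\|_{H^1(\chomo)}$, in the paper's as $\|f_\eps-f_0\|_{H^{-1}(\chomo)}$ — so your plan is structurally sound up to that point.

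The gap is precisely in your treatment of that term. On the symmetric difference $D\triangle D^\eps$ you propose to ``exploit that $\tilde u_{1,\eps}$ and $w_1+Lw_2$ vanish on $\partial D^\eps$ and $\partial D$ respectively,'' but that vanishing is a property of the \emph{full} quasi-periodic Helmholtz single layers (it is the statement $T(\eps,\lambda_{1,\eps},\bp)\phi_{1,\eps}=0$ and $T(0,\lambda_*,K)(\rho_1+L\rho_2)=0$), not of the Laplace logarithmic piece $v$, which you have isolated at this step and which does \emph{not} vanish on $\partial D$ or $\partial D^\eps$; so the $O(\mathrm{dist}(\cdot,\partial))$ smallness you invoke is simply unavailable. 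The obstruction you yourself diagnosed then bites: $\nabla v$ jumps by $-\psi\neq0$ across $\partial D$ while $\nabla(v\circ R^{-\eps})$ jumps across $\partial D^\eps$, so on the $O(\eps)$-measure set $D\triangle D^\eps$ the two gradients sit on opposite sides of those jumps and differ by $O(1)$, giving $\|\nabla(v\circ R^{-\eps})-\nabla v\|_{L^2}=O(\eps^{1/2})$ rather than $O(\eps)$ — the single-layer analogue of $\|\chi_{[0,\infty)}(\cdot+h)-\chi_{[0,\infty)}\|_{L^2}=|h|^{1/2}$. The ``flattening diffeomorphism'' alternative does not escape this, since a $C^1$-small change of variables cannot align the two distinct jump sets, and the gradient stays only in $L^2$, not $H^1$. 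The natural repair is to reorganize so that the near-boundary gradient estimate is performed directly on $\tilde u_{1,\eps}-(w_1+Lw_2)$, which \emph{is} the function satisfying both Dirichlet conditions, but you should then verify carefully what power of $\eps$ that actually yields, since that function too has an $O(1)$ gradient supported on an $O(\eps)$-measure set.
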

\begin{proof}
The function $\tilde u_{n,\eps}(\bx,\bp(\ell,\mu))$ attains the quasi-momentum $\bp(\ell,\mu)$ and it solves the differential equation
\begin{equation}
\big(-\Delta - \lambda_{1,\eps}(\bp(\ell,\mu))\big) \tilde u_{1,\eps}(\bx;\bp(\ell,\mu)) = \delta(\bx\in\partial D^\eps)\phi_{1,\eps}(R^{-\eps}\bx) \quad \text{in }\cC_z,
\end{equation}
The function $w_1 + L(\eps,\ell,\mu)w_2$ attains the quasi-momentum $\Kone$ and it solves
\begin{equation}
(-\Delta -  \lambda_*) \big(w_1 + L(\eps,\ell,\mu)w_2\big)= \delta(\bx\in\partial D)\big(\rho_1(\bx) + L(\eps,\ell,\mu)\rho_2(\bx)\big) \quad \text{in }\cC_z.
\end{equation}

Let us fix a pair of small $\ell$ and $\mu$.  Define the operators
\begin{equation*}
A_\eps := -\Delta - \lambda_{1,\eps}(\bp(\ell,\mu)) \quad \mbox{and} \quad f_\eps:= \delta(\bx\in\partial D^\eps)\phi_{1,\eps}(R^{-\eps}\bx), \quad \text{for }\eps\neq0
\end{equation*}
and
\begin{equation*}
A_0 := -\Delta -  \lambda_* \quad {and} \quad f_0:=\delta(\bx\in\partial D)\big(\rho_1(\bx) + L(\eps,\ell,\mu)\rho_2(\bx)\big).
\end{equation*}
Then $\tilde u_{1,\eps} - \big(w_1 + L(\eps,\ell,\mu)w_2)  = A_\eps^{-1} f_\eps - A_0 ^{-1} f_0$.
It is straightforward to verify that
\begin{equation}
\begin{aligned}
&|\bp(\ell,\mu)-\Kone|=O(\ell,\mu),\quad |\lambda_{1,\eps}(\bp(\ell,\mu)) -  \lambda_*| = O(\eps,\ell,\mu),\\
&\|\delta(\bx\in\partial D^\eps)\phi_{1,\eps}(R^{-\eps}\bx) - \delta(\bx\in\partial D)\big(\rho_1(\bx) + L(\eps,\ell,\mu)\rho_2(\bx)\big)\|_{H^{-1}(\chomo)} = O(\eps,\ell,\mu).
\end{aligned}
\end{equation}
By Theorem~\ref{lem:Dirac}, for $\eps$ sufficiently small, $\lambda_{1,\eps}$ are uniformly away from $\{|\bm|^2\}_{\bm\in \tilde \Lambda^*}$. Thus the inverses of $A_\eps$ and $A_0$ exist and are uniformly bounded. Applying Lemma~\ref{lem:invfact}, we finish the proof. 
\end{proof}

\section{Floquet theory and the Green functions in a periodic strip with a zigzag cross section}\label{sec:prep}
In this section and the subsequent two sections, we investigate the existence of interface modes for the joint photonic structure along a zigzag interface that solve \eqref{eq:spectral_prob_zigzag} for $\kp= \kp^* := \frac{4\pi}{3}$. 
The purpose of this section is to introduce the Floquet theory and Green functions in the following infinite strip
\begin{equation*}
\Omega^{\eps} := \Omega^J\backslash \cup_{m\in\mathbb Z} (D^{\eps}+m\be_1), \quad \eps\in\mathbb R.
\end{equation*}
We note that when $\eps=0$, $\Omega^0$ represents the unperturbed strip.
We define the following function spaces that are quasi-periodic along $\be_2$:
\begin{equation}\label{eq:QP1pm}
\begin{aligned}
\cH^{\eps}_{\text{loc}}:=\{&u\in H^1_{\text{loc}}(\Omega^{\eps}): \Delta u\in L^2_{\text{loc}}(\Omega^{\eps}), \quad u=0 \text{ on } \cup_{m\in\mathbb Z}(\partial D^{\eps} +m\be_1),\\
&u(\bx+ \be_2)= e^{\im \kp^*}u(\bx) \text{ for }\bx\in\Gamma_-, \quad \partial_{\nuGl}  u(\bx+ \be_2) =e^{\im \kp^*}\partial_{\nuGl}  u(\bx) \text{ for }\bx\in\Gamma_-\}.
\end{aligned}
\end{equation}
Then the analysis boils down to the spectrum of the operator $\Delta$ in $\cH^{\eps}_{\text{loc}}$,  that is, eigenpairs  $(\lambda,u)\in \mathbb R\times \cH^{\eps}_{\text{loc}}$ satisfying
\begin{equation}\label{eq:stripspec}
\begin{aligned}
-\Delta u - \lambda u &= 0 \quad&\text{on } \Omega^{\eps},\\
u &= 0 \quad&\text{on } \cup_{m\in\mathbb Z} (\partial D^\eps+m\be_1).
\end{aligned}
\end{equation}

\subsection{Floquet theory in a periodic strip  with a zigzag cross section}\label{sec:Floquet}
In this subsection, we decompose the operator $\Delta$ on $\cH^{\eps}_{\text{loc}}$ using the Floquet theory along the direction $\be_1$.
Let $\bp(\ell):=\Kone+\ell\bbeta_1$. For each $\ell\in\mathbb R$, we
denote $-\Delta^\eps(\ell)$ the restriction of 
 $-\Delta$ on the space $\cH^\eps(\ell)$ with the quasi-momentum $(\Kone+\ell\bbeta_1)\cdot\be_1$ along the direction $\be_1$, i.e. 
\begin{equation}\label{eq:QP20}
\begin{aligned}
\cH^\eps(\ell):=\{&u\in H^1(\cunp): \Delta u\in L^2(\cpeps), \quad u=0 \text{ on } \partial D^\eps,\\
&u(\bx+ \be_2)= e^{\im \kp^*}u(\bx) \text{ for }\bx\in\Gb, \quad \partial_{\nuGb}  u(\bx+ \be_2) =e^{\im \kp^*}\partial_{\nuGb}  u(\bx) \text{ for }\bx\in\Gb\\
&u(\bx+ \be_1)= e^{\im (\Kone+\ell\bbeta_1)\cdot\be_1}u(\bx) \text{ for }\bx\in\Gl,\quad
\partial_{\nuGl}  u(\bx+ \be_1) =e^{\im (\Kone+\ell\bbeta_1)\cdot\be_1}\partial_{\nuGl}  u(\bx) \text{ for }\bx\in\Gl \}.
\end{aligned}
\end{equation}
Here $\Gb$ and $\Gl$ are the bottom and left boundaries of $\chomo$ shown in Figure~\ref{fig:cellhomo},
the directional derivative $\partial_{\nuGb}$ is normal to $\Gt$ and $\Gb$ in the direction $\nuGb=(\frac{1}{2},\frac{\sqrt3}{2})$
and the directional derivative $\partial_{\nuGl}$ is normal to $\Gl$ and $\Gr$ in the direction $\nuGl=(\frac{1}{2},-\frac{\sqrt3}{2})$.
Equivalently, we solve for the $(\lambda,u)$ pair for each $\ell\in\mathbb R$ that satisfies
\begin{equation}\label{eq:band0} 
\begin{aligned}
\begin{cases}
-\Delta u - \lambda u = 0 \quad &\text{in } \cpeps,\\
u=0 \quad &\text{on }\partial D^\eps, \\
u(\bx+ \be_2)= e^{\im \kp^*}u(\bx) \quad &\text{for }\bx\in\Gb, \\
 \partial_{\nuGb}  u(\bx+ \be_2) =e^{\im \kp^*}\partial_{\nuGb}  u(\bx)  \quad &\text{ for }\bx\in\Gb,\\
u(\bx+ \be_1)= e^{\im (\Kone+\ell\bbeta_1)\cdot\be_1}u(\bx) \quad &\text{ for }\bx\in\Gl,\\
\partial_{\nuGl}  u(\bx+ \be_1) =e^{\im (\Kone+\ell\bbeta_1)\cdot\be_1}\partial_{\nuGl}  u(\bx)\quad& \text{ for }\bx\in\Gl.
\end{cases}
\end{aligned}
\end{equation}

For each fixed $\ell\in\mathbb R$, $-\Delta^{\eps}(\ell)$ is a self-adjoint 
positive operator with compact resolvent, thus its spectrum is real, discrete, and accumulates at $\infty$. The eigenvalues of $-\Delta^\eps(\ell)$ are labeled as $\lambda_n(\ell)$ in an increasing order
\begin{equation}
0\leq \lambda_{1,\eps}(\ell) \leq \lambda_{2,\eps}(\ell)\leq\cdots\leq \lambda_{n,\eps}(\ell) \leq\cdots.
\end{equation}
Note that $\lambda_{n,\eps}(\ell)$ are $2\pi$-periodic, continuous and piecewise differentiable functions in $\ell$. 
The corresponding eigenmodes $u_{n,\eps}(\bx,\bp(\ell))$ are chosen to be orthonormal with respect to the $L^2$ inner product in $\cpeps$. 
$\lambda_{n,\eps}(\ell)$ may not be differentiable at points $\ell$ where $\lambda_{n,\eps}(\ell)$ is not a simple eigenvalue, which only occurs at a finite number of $\ell$ values within a period for each $n$. 

The spectrum of $-\Delta^\eps(\ell)$ can alternatively be labeled as smooth branches as follows.
The smooth labeling enables a representation of the Green function using the Bloch modes to be introduced in Section~\ref{sec:dispstrip}. To be more precise, there exists a sequence of complex neighborhoods $D_{n,\eps}$ of $\mathbb R$, a sequence of analytic functions $\mu_{n,\eps}(\ell):D_{n,\eps}\to \mathbb C$, and a sequence of analytic functions $v_{n,\eps}(\bx,\bp(\ell)): D_{n,\eps}\to H^1(\Delta,\cpeps)$ such that
\begin{equation}\label{eq:muv}
\begin{aligned}
&v_{n,\eps}(\cdot,\bp(\ell))\in \cH^\eps(\ell), \quad -\Delta v_{n,\eps}(\cdot,\bp(\ell)) = \mu_{n,\eps}(\ell)v_{n,\eps}(\cdot,\bp(\ell)),\quad \ell\in \mathbb R, \\
&\left(v_{n,\eps}(\cdot,\bp(\ell)),v_{m,\eps}(\cdot,\bp(\ell))\right)_{L^2(\cpeps)} = \delta_{m,n},\quad \ell\in \mathbb R.
\end{aligned}
\end{equation}
In the above,
\begin{equation}\label{eq:H1Delta}
\begin{aligned}
H^1(\Delta,\cpeps):= \{&u\in H^1(\cpeps): \Delta u\in L^2(\cpeps), \quad u=0 \text{ on } \partial D^\eps,\\
&u(\bx+ \be_2)= e^{\im \kp^*}u(\bx) \text{ for }\bx\in\Gb, \quad \partial_{\nuGb}  u(\bx+ \be_2) =e^{\im \kp^*}\partial_{\nuGb}  u(\bx) \text{ for }\bx\in\Gb
\}.
\end{aligned}
\end{equation}
Moreover,
\begin{equation}
\forall \ell\in\mathbb R, \quad \{ \mu_{n,\eps}(\ell), n\geq1\} = \{ \lambda_{n,\eps}(\ell), n\geq1\},
\end{equation}
and the eigenmodes $u_{n,\eps}(\cdot,\ell)$ are chosen such that 
\begin{equation}\label{eq:unchoice}
\forall \ell\in\mathbb R, \quad \{ v_{n,\eps}(\cdot,\bp(\ell))), n\geq1\} = \{\alpha_{n,\eps} u_{n,\eps}(\cdot,\bp(\ell))), n\geq1\},
\end{equation}
where $\alpha_{n,\eps}$ is an $\ell$-dependent phase factor.
We extended the eigenmodes $u_{n,\eps}$ and $v_{n,\eps}$ to the whole strip $\Omega^\eps$ as quasi-periodic functions by letting
\begin{equation}
\begin{aligned}
u_{n,\eps}(\bx+m\be_1,\Kone+\ell\bbeta_1)=e^{\im (\Kone+\ell\bbeta_1)\cdot m\be_1}u_{n,\eps}(\bx,\Kone+\ell\bbeta_1),\quad \bx\in\cunp,\quad m\in\mathbb Z,\\
v_{n,\eps}(\bx+m\be_1,\Kone+\ell\bbeta_1)=e^{\im (\Kone+\ell\bbeta_1)\cdot m\be_1}v_{n,\eps}(\bx,\Kone+\ell\bbeta_1),\quad \bx\in\cunp,\quad m\in\mathbb Z.
\end{aligned}
\end{equation}

When $\eps=0$, for convenience we will  abbreviate $D_{n,0}$, $\lambda_{n,0}$, $u_{n,0}$, $\mu_{n,0}$ and $v_{n,0}$ as $D_n$, $\lambda_n$, $u_n$, $\mu_n$ and $v_n$, respectively.

\subsection{The band structure for the 
periodic strip $\Omega^{\eps}$ with a zigzag cross section near $\lambda^*$} \label{sec:dispstrip}

In this subsection, we derive the band structure for the periodic strip $\Omega^{\eps}$ with a zigzag cross section near $\lambda^*$. Note that the eigenvalues that solve \eqref{eq:band0} near the Dirac point $(\lambda_*,\Kone)$ can be obtained from Proposition~\ref{lem:pmlamgen} by letting $\mu=0$. Denoting $L(\eps,\ell) = L(\eps,\ell,0)$, where $L(\eps,\ell,\mu)$ is defined in \eqref{eq:Lmu}, we have

\begin{lemma}\label{lem:pmstrip}
Assume $t_*>0$. For sufficiently small $\ell\in\mathbb R$ and $\eps\geq0$, the eigenvalues for 
\eqref{eq:band0} are given by
\begin{equation}\label{eq:energ}
\begin{aligned}
\lambda_{1,\pm\eps}(\bp(\ell)) &= \lambda_*  - \frac{1}{|\gamma_*|}\sqrt{\eps^2 t_*^2 + |\theta_*|^2 \ell^2}(1+O(\eps,\ell)), \\
\lambda_{2,\pm\eps}(\bp(\ell)) &= \lambda_*  + \frac{1}{|\gamma_*|}\sqrt{\eps^2 t_*^2 + |\theta_*|^2 \ell^2}(1+O(\eps,\ell)). 
\end{aligned}
\end{equation}
The $L^2$-normalized Bloch modes for the first two bands on the $\pm\eps$-strips take the following forms
\begin{equation}\label{eq:pmodesL}
\begin{aligned}
u_{1,\eps}(\bx;\bp(\ell))&=\left(w_1 + L(\eps,\ell)w_2  +O(\eps,\ell) \right)\frac{1}{\sqrt{1+|L(\eps,\ell)|^2+O(\eps,\ell)}},\\
u_{2,\eps}(\bx;\bp(\ell))&=\left( - \overline{L(\eps,\ell)}w_1 + w_2  +O(\eps,\ell) \right)\frac{1}{\sqrt{1+|L(\eps,\ell)|^2+O(\eps,\ell)}},
\end{aligned}
\end{equation}
\begin{equation}\label{eq:mmodesL}
\begin{aligned}
u_{1,-\eps}(\bx;\bp(\ell))=\left( \overline{L(\eps,\ell)}w_1 + w_2  +O(\eps,\ell)\right)\frac{1}{\sqrt{1+|L(\eps,\ell)|^2+O(\eps,\ell)}},\\
u_{2,-\eps}(\bx;\bp(\ell))=\left(w_1 - L(\eps,\ell)w_2  +O(\eps,\ell) \right)\frac{1}{\sqrt{1+|L(\eps,\ell)|^2+O(\eps,\ell)}}.
\end{aligned}
\end{equation}
\end{lemma}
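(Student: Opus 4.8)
The plan is to recover this lemma directly from Proposition~\ref{lem:pmlamgen} by specializing to the one-parameter slice $\bp(\ell)=\Kone+\ell\bbeta_1$, i.e. by setting $\mu=0$ in all the formulas of that proposition. First I would recall that the eigenvalue problem \eqref{eq:band0} governing the periodic strip $\Omega^\eps(\ell)$ is, by the Floquet construction of Section~\ref{sec:Floquet}, exactly the cell problem \eqref{eq:bandu_eps} with the quasimomentum constrained to $\bp=\bp(\ell,0)=\Kone+\ell\bbeta_1$: the quasiperiodicity in $\be_2$ carries the fixed phase $e^{\im\kp^*}=e^{\im\Kone\cdot\be_2}$, and the quasiperiodicity in $\be_1$ carries $e^{\im(\Kone+\ell\bbeta_1)\cdot\be_1}$. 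Hence the boundary integral equation \eqref{eq:bandE} with operator $T(\eps,\lambda,\bp(\ell,0))$ characterizes precisely the dispersion branches solving \eqref{eq:band0} near $(\lambda_*,\Kone)$, and Proposition~\ref{lem:pmlamgen}(i) with $\mu=0$ gives the two branches
\begin{equation*}
\lambda_{1,\pm\eps}(\bp(\ell)) = \lambda_*-\frac{1}{|\gamma_*|}\sqrt{\eps^2t_*^2+|\theta_*|^2\ell^2}\,(1+O(\eps,\ell)),\quad
\lambda_{2,\pm\eps}(\bp(\ell)) = \lambda_*+\frac{1}{|\gamma_*|}\sqrt{\eps^2t_*^2+|\theta_*|^2\ell^2}\,(1+O(\eps,\ell)),
\end{equation*}
which is \eqref{eq:energ}. (Here I use $|\ell+\mu\bar\tau|^2=\ell^2$ at $\mu=0$.)

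Next, for the Bloch modes, I would observe that the density functions for \eqref{eq:bandE} at $\bp=\bp(\ell,0)$ are given by Proposition~\ref{lem:pmlamgen}(ii), namely $\phi_{1,\eps}=w_1+L(\eps,\ell)w_2+O(\eps,\ell)$ and similarly for the other three, with $L(\eps,\ell):=L(\eps,\ell,0)=\theta_*\ell/(\eps t_*+\sqrt{\eps^2t_*^2+|\theta_*|^2\ell^2})$. Then the associated single-layer potential produces the Bloch mode via Lemma~\ref{lem:normest} (again with $\mu=0$), which yields $\|\tilde u_{1,\eps}(\cdot;\bp(\ell))-(w_1+L(\eps,\ell)w_2)\|_{H^1(\chomo)}=O(\eps,\ell)$; dividing by the $L^2$ norm of $w_1+L(\eps,\ell)w_2$, which equals $\sqrt{1+|L(\eps,\ell)|^2+O(\eps,\ell)}$ (using $\|w_i\|_{L^2(\cunp)}=1$ from \eqref{eq:w_normalization} and the $O(\eps,\ell)$ cross term), gives the normalized expressions \eqref{eq:pmodesL}. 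The modes on the $-\eps$ strip come from \eqref{eq:mdens} in the same way, producing \eqref{eq:mmodesL}.

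I expect there is essentially no new obstacle here: the lemma is a corollary obtained by freezing one Bloch parameter, and everything reduces to Proposition~\ref{lem:pmlamgen} and Lemma~\ref{lem:normest}. The only point requiring a small amount of care is the bookkeeping identifying the strip eigenvalue problem \eqref{eq:band0} with the cell problem at the constrained quasimomentum — one must check that the self-adjoint operator $-\Delta^\eps(\ell)$ on $\cH^\eps(\ell)$ has exactly the same spectrum near $\lambda_*$ as the set of $\lambda$ for which $T(\eps,\lambda,\bp(\ell,0))$ is not invertible, which follows from the single-layer potential representation \eqref{eq:modegen}–\eqref{eq:ext} and the equivalence established in Section~\ref{sec:band}. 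Given that identification, the statement follows immediately.
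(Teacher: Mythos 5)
Your proposal is correct and follows exactly the paper's route: the paper itself obtains Lemma~\ref{lem:pmstrip} simply by setting $\mu=0$ in Proposition~\ref{lem:pmlamgen}, invoking the identification of the strip spectral problem \eqref{eq:band0} with the cell problem \eqref{eq:bandu_eps} at $\bp=\Kone+\ell\bbeta_1$, and using Lemma~\ref{lem:normest} for the normalized modes. Your added remark about verifying that the strip operator $-\Delta^\eps(\ell)$ has the same near-$\lambda_*$ spectrum as the characteristic set of $T(\eps,\lambda,\bp(\ell,0))$ is a sensible bit of bookkeeping that the paper leaves implicit.
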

\begin{remark}\label{lem:commongap}
By Lemma~\ref{lem:pmstrip}, when Assumption~\ref{lem:assNoFold} holds in $\bbeta_1$, for an arbitrary fixed constant $\mathfrak d\in(0,1)$, when $\eps>0$ is sufficiently small, $-\Delta^{\pm\eps}$ on $\Omega^{\pm\eps}$ attain a common spectral band gap $(\lambda_* - \hrad\eps, \lambda_* + \hrad\eps)$.
\end{remark}
\begin{remark}
When $\eps>0$ in Lemma~\ref{lem:pmstrip}, observe that $\lambda_{n,\pm\eps}(\bp(\ell))$, $n=1,2$, are smooth in $\ell$. Thus 
$\mu_{n,\pm\eps}(\bp(\ell)) =\lambda_{n,\pm\eps}(\bp(\ell))$ and
$v_{n,\pm\eps}(\bx;\bp(\ell))= u_{n,\pm\eps}(\bx;\bp(\ell))$. 
\end{remark}
Setting $\eps=0$ in Lemma~\ref{lem:pmstrip}, we observe that $\lambda_n$, $n=1,2$, are not smooth at $\ell=0$. Thus $\mu_n(\bp(\ell))$ are obtained by matching different branches of $\lambda_n$ as shown in the following lemma and illustrated in Figure~\ref{fig:bandsplit}.
\begin{lemma}\label{lem:0strip}
For sufficiently small $\ell\in\mathbb R$,
\begin{equation}\label{eq:Diracband}
\begin{aligned}
\mu_1(\bp(\ell)) &= \lambda_*  + |\frac{\theta_*}{\gamma_*}|\ell(1+O(\ell)) \quad\text{(increasing in $\ell$)}, \\
\mu_2(\bp(\ell)) &= \lambda_*  - |\frac{ \theta_*}{\gamma_*}|\ell(1+O(\ell)) \quad\text{(decreasing in $\ell$)}. 
\end{aligned}
\end{equation}
The corresponding Bloch modes can be chosen as
\begin{equation}
\begin{aligned}
v_1(\bx;\bp(\ell))&=\left( \frac{\overline{\theta_*}}{|\theta_*|}w_1 - w_2  +O(\ell) \right)\frac{1}{\sqrt{2+O(\ell)}},\\
v_2(\bx;\bp(\ell))&=\left(  \frac{\overline{\theta_*}}{|\theta_*|}w_1 +  w_2  +O(\ell) \right)\frac{1}{\sqrt{2+O(\ell)}}.
\end{aligned}
\end{equation}
\end{lemma}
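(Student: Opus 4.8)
The plan is to specialize the Gohberg--Sigal reduction carried out in the proof of Proposition~\ref{lem:pmlamgen} to the parameters $\eps=0$, $\mu=0$, and then to factor the resulting characteristic equation near the Dirac point. Recall that solving $T(0,\lambda,\bp(\ell))\phi=0$ for $\lambda=\lambda_*+\lambda^{(1)}$ with $\lambda^{(1)}$ and $\ell$ small was reduced to the $2\times2$ system \eqref{eq:lam1phi0} with matrix $\cM$; by \eqref{eq:dispM} with $\eps=\mu=0$,
\begin{equation*}
\cM(0,\ell,0,\lambda^{(1)})=
\begin{pmatrix}
\gamma_*\lambda^{(1)} & \ell\,\overline{\theta_*}\\
\ell\,\theta_* & \gamma_*\lambda^{(1)}
\end{pmatrix}
+O\!\big(\ell^2,(\lambda^{(1)})^2\big),
\end{equation*}
where the remainder is analytic in $(\ell,\lambda^{(1)})$ near $(0,0)$ and vanishes there to second order. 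The decisive structural fact is that the first-order-in-$\ell$ part of $\cM$ is purely off-diagonal — precisely the antidiagonal form of $\langle\brho,\bbeta_1\cdot\nabla_\bp T(0,\lambda_*,\Kone)\brho\rangle$ in Proposition~\ref{lem:Tderiv} — so that $\det\cM(0,\ell,0,\lambda^{(1)})=\gamma_*^2(\lambda^{(1)})^2-|\theta_*|^2\ell^2+G(\ell,\lambda^{(1)})$ with $G$ analytic and vanishing to order at least three; the crossing at $\ell=0$ is therefore formed by two analytic lines and not by a $|\ell|$-shaped cone.

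To extract the branches I would blow up by setting $\lambda^{(1)}=\ell s$ and dividing by $\ell^2$, obtaining $\Phi(\ell,s):=\gamma_*^2s^2-|\theta_*|^2+\ell\,\widehat G(\ell,s)$ with $\widehat G$ analytic for $\ell$ near $0$ and $s$ in a bounded set. Since $\gamma_*>0$, the equation $\Phi(0,\cdot)=0$ has the two simple zeros $s=\pm|\theta_*|/|\gamma_*|$ with $\partial_s\Phi\ne0$ there, so the implicit function theorem yields analytic $s_\pm(\ell)$ with $s_\pm(0)=\pm|\theta_*|/|\gamma_*|$. Hence $\lambda^{(1)}_\pm(\ell):=\ell\,s_\pm(\ell)$ are analytic and, by Weierstrass preparation of $\det\cM$ in $\lambda^{(1)}$, are the only zeros near $(0,0)$; since these zeros are exactly the eigenvalue branches of \eqref{eq:band0} near $\lambda_*$, they coincide in a neighbourhood of $\ell=0$ with the analytic branches $\mu_1,\mu_2$ of Section~\ref{sec:Floquet}. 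Labelling so that $\mu_1$ is the increasing and $\mu_2$ the decreasing branch,
\begin{equation*}
\mu_{1,2}(\bp(\ell))=\lambda_*+\lambda^{(1)}_\pm(\ell)=\lambda_*\pm\Big|\tfrac{\theta_*}{\gamma_*}\Big|\,\ell\,\bigl(1+O(\ell)\bigr),
\end{equation*}
consistent with $\{\mu_n(\ell)\}=\{\lambda_n(\ell)\}$ and $\lambda_1(0)=\lambda_2(0)=\lambda_*$ for the first two bands.

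For the Bloch modes, for $\lambda^{(1)}=\lambda^{(1)}_\pm(\ell)$ the one-dimensional kernel of $\cM(0,\ell,0,\lambda^{(1)}_\pm(\ell))$ fixes the coefficients $(a,b)$ of $\phi^{(0)}=a\rho_1+b\rho_2$; to leading order $\cM$ equals $\ell\left(\begin{smallmatrix}\pm|\theta_*| & \overline{\theta_*}\\ \theta_* & \pm|\theta_*|\end{smallmatrix}\right)$ (using $\gamma_*s_\pm(0)=\pm|\theta_*|$), whose kernel is spanned by $\bigl(\overline{\theta_*}/|\theta_*|,\ \mp1\bigr)$. Inserting this into the single-layer representation \eqref{eq:wi} and controlling the $H^1$ error by the perturbation argument of Lemma~\ref{lem:normest} (now with $\eps=0$, $\ell\to0$) gives that the associated Bloch mode is $\tfrac{\overline{\theta_*}}{|\theta_*|}w_1\mp w_2+O(\ell)$. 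Finally, $w_1$ and $w_2$ belong to the $\tau$- and $\overline\tau$-eigenspaces of the unitary operator $R$, hence are $L^2$-orthogonal, and $\|w_i\|_{L^2(\cunp)}=1$ by \eqref{eq:w_normalization}, so this mode has $L^2$-norm $\sqrt2+O(\ell)$; dividing by it produces exactly the stated $v_1,v_2$.

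The \emph{main obstacle} is the analyticity of the two branches through the Dirac point, i.e.\ verifying that the discriminant $\sim|\theta_*|^2\ell^2$ of $\det\cM$, viewed as a quadratic in $\lambda^{(1)}$, equals an analytic nonvanishing function times a perfect square in $\ell$; this is exactly where the purely off-diagonal, order-$\ell$ (not order-$\ell^{1/2}$) nature of $\bbeta_1\cdot\nabla_\bp T$ at the Dirac point, established in Proposition~\ref{lem:Tderiv}, enters. Once that is in place, the branches, their slopes, the kernel vectors and the normalization are all routine.
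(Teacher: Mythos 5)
Your proof is correct, and it reaches the lemma by a route that differs in its key step from what the paper does. The paper obtains Lemma~\ref{lem:0strip} by setting $\eps=0$, $\mu=0$ in Proposition~\ref{lem:pmlamgen} (which yields the non-smooth branches $\lambda_*\pm|\tfrac{\theta_*}{\gamma_*}||\ell|(1+O(\ell))$ with $L(0,\ell)=\tfrac{\theta_*}{|\theta_*|}\sgn(\ell)$) and then \emph{matching} the piecewise branches across $\ell=0$, relying on the a priori existence of the analytic labeling $\mu_n$, $v_n$ from Section~\ref{sec:Floquet}; the identification of the matched branch with the analytic one is left implicit (cf.\ Figure~\ref{fig:bandsplit}). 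You instead return to the reduced $2\times2$ matrix $\cM(0,\ell,0,\lambda^{(1)})$ of \eqref{eq:dispM} and factor its determinant directly via the blow-up $\lambda^{(1)}=\ell s$, producing the two analytic branches $\ell s_{\pm}(\ell)$ by the implicit function theorem and counting them by Weierstrass preparation. This is more self-contained: it constructs the smooth branches rather than invoking them, and it isolates exactly why the crossing is transversal (the order-$\ell$, purely antidiagonal form of $\bbeta_1\cdot\nabla_{\bp}T$ from Proposition~\ref{lem:Tderiv}, so that $G:=\det\cM-\gamma_*^2(\lambda^{(1)})^2+|\theta_*|^2\ell^2$ vanishes to third order and $G(\ell,\ell s)=\ell^3 H(\ell,s)$). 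The only ingredient you use that the paper does not state explicitly is the joint analyticity of the remainder in $\cM$ in $(\ell,\lambda^{(1)})$; this does follow from the construction (analyticity of $T(0,\lambda,\bp(\ell))$ and of the Neumann series $(I+AQT^{(1)})^{-1}$), but it is worth saying so since the paper only records the remainder as $O(|\ell|^2,|\lambda^{(1)}|^2)$. Your kernel computation, the resulting modes $\tfrac{\overline{\theta_*}}{|\theta_*|}w_1\mp w_2+O(\ell)$, and the normalization via the $L^2$-orthogonality of $w_1,w_2$ (distinct $R$-eigenvalues $\tau\neq\bar\tau$) all check out and agree, up to the admissible phase factors, with Lemma~\ref{lem:pmstrip} at $\eps=0$.
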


\begin{figure}
\begin{center}
\includegraphics[scale=0.4]{./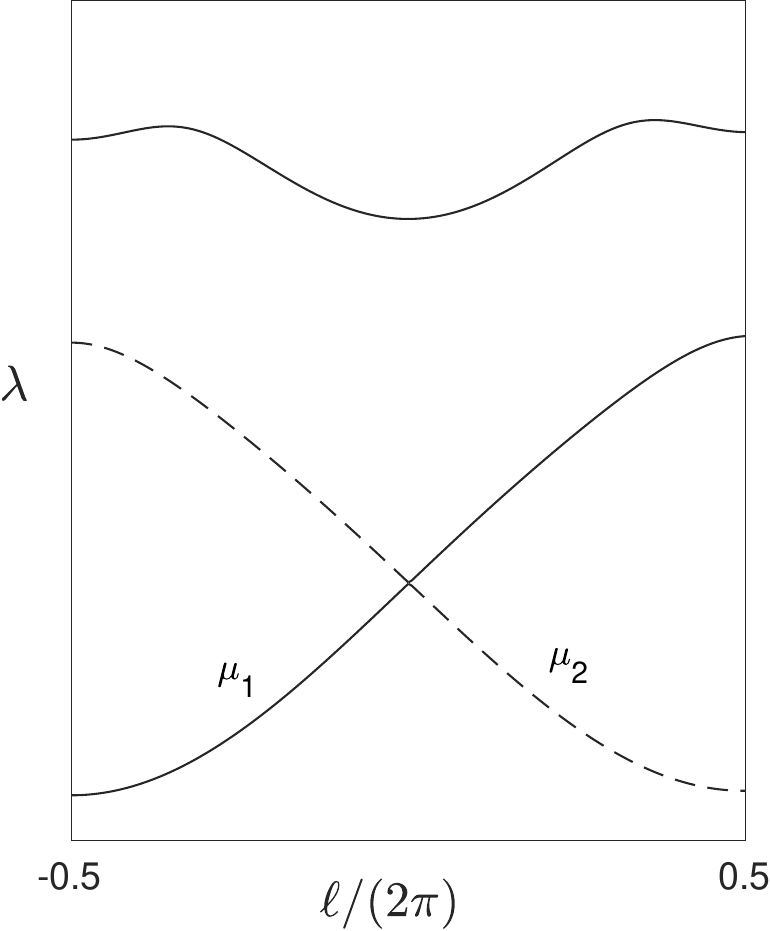}
\includegraphics[scale=0.4]{./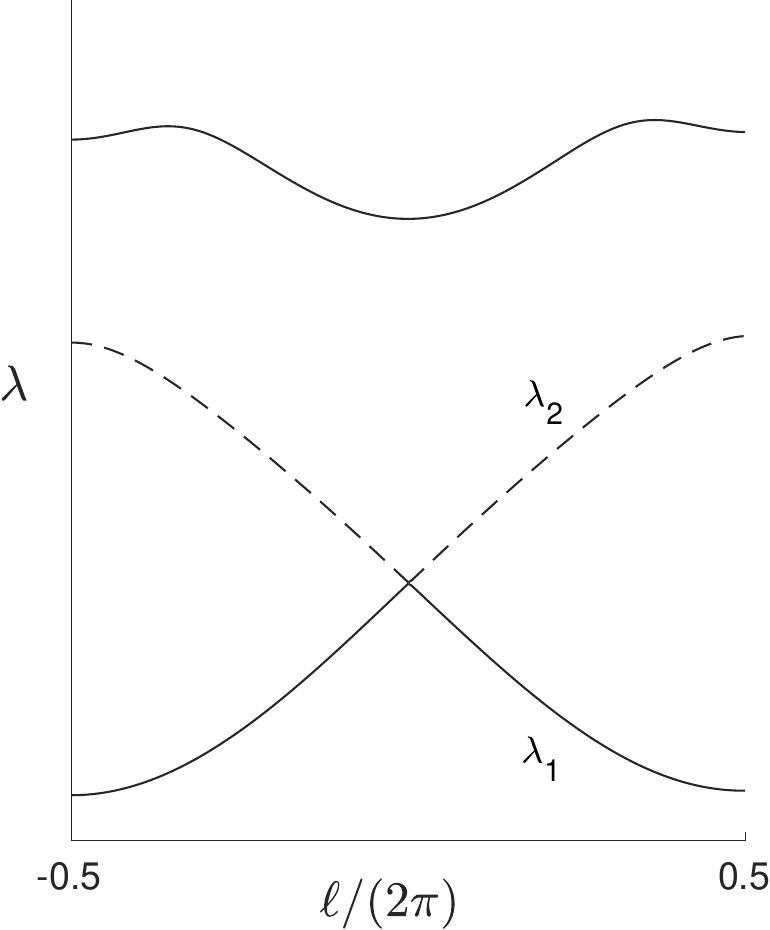}
\includegraphics[scale=0.4]{./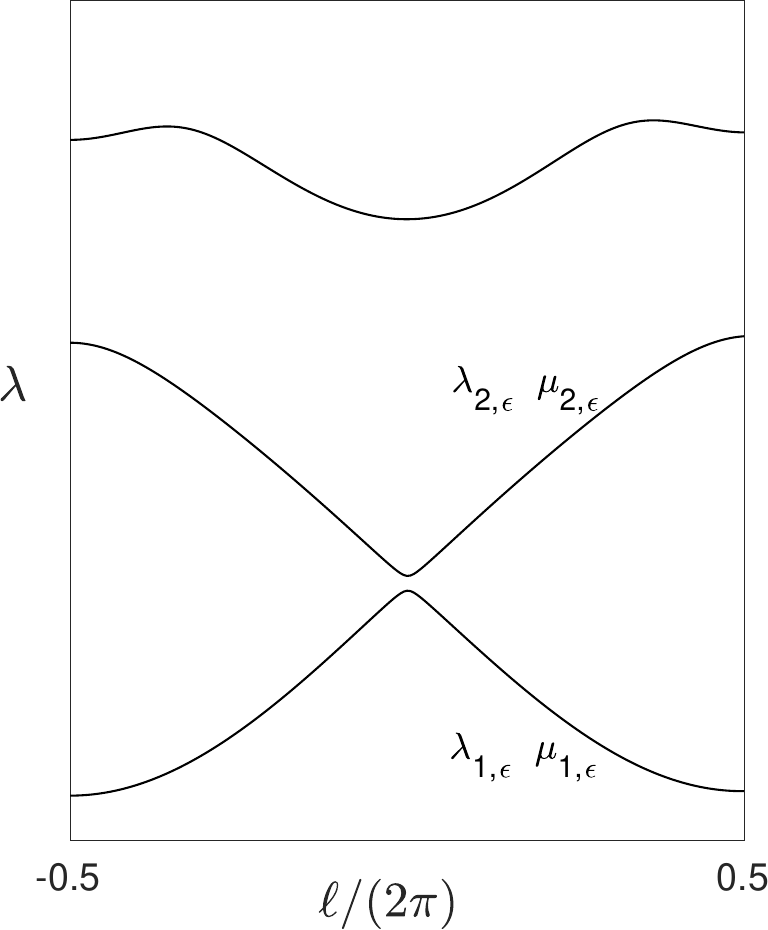}
\end{center}
\caption{The labelings of the band eigenvalues: $\mu_{n,\eps}$ are smooth branches, while $\lambda_{n,\eps}$ are piecewise smooth. When $\eps\neq0$, $\mu_{n,\eps}=\lambda_{n,\eps}$ for $n=1,2$.}
\label{fig:bandsplit}
\end{figure}

Let $v_i := v_i(\bx;\Kone)$. It follows that
\begin{equation}\label{eq:winv}
\begin{cases}
v_1=\frac{1}{\sqrt2}\left( \frac{\overline{\theta_*}}{|\theta_*|} w_1 - w_2\right) \\
v_2=\frac{1}{\sqrt2}\left(  \frac{\overline{\theta_*}}{|\theta_*|}w_1 +  w_2 \right)
\end{cases}
,\quad
\begin{cases}
w_1 = \frac{1}{\sqrt2}\frac{\theta_*}{|\theta_*|} (v_1  +  v_2)\\
w_2 = \frac{1}{\sqrt2} (- v_1 + v_2)
\end{cases}.
\end{equation}

\begin{remark}\label{lem:phase}
For $n=1,2$, there exist $\ell$-dependent phase factors $\alpha_n$ such that $\|u_n(\cdot,\bp(\ell))- \alpha_n u_{n,\eps}(\cdot,\bp(\ell))\|_{H^1(\chomo)} = O(\eps)$ uniformly for $\ell>0$ that are sufficiently small; and there exist $\ell$-dependent phase factors $\beta_n$ such that $\|u_n(\cdot,\bp(\ell))- \beta_n u_{n,\eps}(\cdot,\bp(\ell))\|_{H^1(\chomo)} = O(\eps)$ uniformly for $\ell<0$ that are sufficiently small. The same holds when $u_{n,\eps}$ is replaced by $u_{n,-\eps}$.
\end{remark}

\subsection{The $q$-sesquilinear form}

Define the quasi-periodic Sobolev space on $\Gamma$, $\cH^s(\Gamma)$, for $s\in\mathbb R$ by 
\begin{equation}\label{eq:HsG2}
\cH^s(\Gamma):= \left\{ u(\bx_0+ t\be_2) = \sum_{n\in\mathbb Z} a_n e^{\im \kp^* t} e^{\im 2\pi n t}: \|u\|_{\cH^s(\Gamma)}^2:=\sum_{n\in\mathbb Z} |a_n|^2 (1+|n|^2)^s\right\},
\end{equation}
Here $\bx_0 = -\frac{1}{2}\be_1 -\frac{1}{2}\be_2$ is the lower left corner of $\chomo$. 

Define the $q$-sesquilinear form on $\Gamma$ for functions $a,b$ in some neighborhood of $\Gamma$ with traces in $\HhalfG$ and normal derivatives in $\HmhalfG$:
\begin{equation}\label{eq:qses}
q(a,b):=\overline{\langle\partial_n a,b\rangle}_{\Gamma} - \langle\partial_n b,a\rangle_{\Gamma},
\end{equation}
where $\partial_n$ represents the normal derivative on $\Gamma$ in the direction $\bn = \nu_1=(\frac{1}{2},-\frac{\sqrt3}{2})$,  $\langle \phi,\psi\rangle_{\Gamma}$ represents the $\HmhalfG$-$\HhalfG$  pairing (basically $\int_\Gamma \overline{\phi}\psi\, ds$).
The $q$-sesquilinear form orthogonalized the modes with the same quasimomentum and same energy. That is, if $\mu_n(\bp(\ell_0)) = \mu_m(\bp(\ell_0))$, then
\begin{equation}
q(v_n(\cdot,\bp(\ell_0)),v_m(\cdot,\bp(\ell_0)))=0,\quad m,n\in\{1,2\}, \quad  m\neq n.
\end{equation}
\begin{equation}
q(v_n(\cdot,\bp(\ell_0)),v_n(\cdot,\bp(\ell_0)))=\im \frac{d\mu_n(\bp(\ell))}{d\ell}|_{\ell=\ell_0}, \quad n=1,2.
\end{equation}
On the unperturbed strip $\Omega^0$,  by Lemma~\ref{lem:0strip}, $\mu_1(\bp(0)) = \mu_m(\bp(0)) = \lambda_*$ and $\bp(0)=\Kone$, we know
\begin{equation}\label{eq:ortho0}
q(v_n(\cdot,\Kone),v_m(\cdot,\Kone))=0,\quad m,n\in\{1,2\}, \quad  m\neq n.
\end{equation}
\begin{equation}\label{eq:ortho1}
q(v_n(\cdot,\Kone),v_n(\cdot,\Kone))=\im \frac{d\mu_n(\bp(\ell))}{d\ell}|_{\ell=0}, \quad n=1,2.
\end{equation}
In addition, $\frac{d\mu_2(\bp(\ell))}{d\ell}|_{\ell=0} = - \frac{d\mu_1(\bp(\ell))}{d\ell}|_{\ell=0}$. We denote 
\begin{equation}\label{eq:alphadef}
\alpha_* := \left|\frac{d\mu_n(\bp(\ell))}{d\ell}|_{\ell=0}\right|,\quad n=1,2.
\end{equation}
\begin{remark}
\label{lem:alphabeta}
By Lemma~\ref{lem:0strip}, the derivative $\alpha_*$ defined in \eqref{eq:alphadef} is given by $\alpha_*=\left|\frac{\theta_*}{\gamma_*}\right|$.
\end{remark}

\subsection{The Green functions in the periodic strip $\Omega^{\eps}$ with a zigzag cross section}\label{sec:Green}
In this subsection, we introduce 
the Green functions in $\Omega^{\eps}$ with the quasi-periodic conditions using the limiting absorption principle and spectral representation in Section~\ref{sec:Floquet}. This result extends that in \cite{FlissJoly2016}.

Consider solving the following problem in $\Omega^\eps$:
\begin{equation}
\begin{aligned}
\begin{cases}
-\Delta u - (\lambda+ \im \sigma)  u  = f \quad &\text{in } \Omega^\eps,\\
u =0 \quad &\text{on }  \cup_{m\in\mathbb Z} (\partial D^\eps+m\be_1), \\
u(\bx+ \be_2)= e^{\im \kp^*}u(\bx) \quad &\text{for }\bx\in\Gamma_-, \\
\partial_{\nuGb}  u(\bx+ \be_2) =e^{\im \kp^*}\partial_{\nuGb}  u(\bx)  \quad &\text{ for }\bx\in\Gamma_-,\\
\end{cases}
\end{aligned}
\end{equation}
where $f\in L^2(\Omega^\eps)$, and $\sigma$ is a positive constant that converges to $0$. 
The corresponding Green function $G^\eps(\bx,\by; \lambda)$ satisfies
\begin{equation}\label{eq:phyGeps}
\begin{aligned}
\begin{cases}
(-\Delta_{\bx}  - \lambda) G^\eps(\bx,\by; \lambda)  = \delta(\bx-\by) \quad &\bx\in \Omega^\eps,\\
G^\eps(\bx,\by; \lambda) =0 \quad &\bx\in\cup_{m\in\mathbb Z} (\partial D^\eps+m\be_1), \\
G^\eps(\bx+ \be_2,\by; \lambda)= e^{\im \kp^*}G^\eps(\bx,\by; \lambda) \quad &\text{for }\bx\in\Gamma_-, \\
\partial_{\nuGb}  G^\eps(\bx+ \be_2,\by; \lambda) =e^{\im \kp^*}\partial_{\nuGb}  G^\eps(\bx,\by; \lambda)  \quad &\text{ for }\bx\in\Gamma_-. \\
\end{cases}
\end{aligned}
\end{equation}
In addition, the radiation conditions are imposed using the limiting absorption principle.

We will need the  Green functions on $\Omega^{0}$ at the energy $\lambda_*$. Recall that $\lambda_*$ is only an eigenvalue of \eqref{eq:band0} when $\eps=0$ and $\bp=\Kone$, where $\lambda_*$ is an eigenvalue of~\eqref{eq:band0} of multiplicity two. We have
\begin{equation}
\begin{aligned}\label{eq:G0}
G^0(\bx,\by; \lambda_*) = &
\sum_{n\geq3}\frac{1}{2\pi}\int_{[-\pi,\pi ]} 
\frac{\overline{v_{n}(\by;\bp(\ell))} v_{n}(\bx;\bp(\ell))}{ \mu_{n}(\bp(\ell)) - \lambda_* } \, \dpt +\sum_{n=1,2}\frac{1}{2\pi}\text{p.v.}\int_{[-\pi,\pi ]} 
\frac{\overline{v_{n}(\by;\bp(\ell))} v_{n}(\bx;\bp(\ell))}{ \mu_{n}(\bp(\ell)) - \lambda_* } \, \dpt \\
&+ \frac{\im}{2\alpha_*}\overline{v_{1}(\by;\Kone)} v_{1}(\bx;\Kone) + \frac{\im}{2\alpha_*}\overline{v_{2}(\by;\Kone)} v_{2}(\bx;\Kone),\quad \bx,\by\in\Omega^0,
\end{aligned}
\end{equation}
where $\mu_n$ and $v_n$ are the eigenvalues and eigenfunctions that are analytic in $\ell$ as introduced in \eqref{eq:muv}. For convenience, we denote the integral portion of the Green function by
\begin{equation}\label{eq:tildeG0}
\tilde{G^0}(\bx,\by; \lambda_*) := 
\sum_{n\geq3}\frac{1}{2\pi}\int_{[-\pi,\pi ]} 
\frac{\overline{v_{n}(\by;\bp(\ell))} v_{n}(\bx;\bp(\ell))}{ \mu_{n}(\bp(\ell)) - \lambda_* } \, \dpt  +\sum_{n=1,2}\frac{1}{2\pi}\text{p.v.}\int_{[-\pi,\pi ]} 
\frac{\overline{v_{n}(\by;\bp(\ell))} v_{n}(\bx;\bp(\ell))}{ \mu_{n}(\bp(\ell)) - \lambda_* } \, \dpt.
\end{equation}

When $\bx\cdot\be_1\to +\infty$, the terms in $G^0(\bx,\by; \lambda_*)$ can be regrouped as
\begin{equation}\label{eq:Gright}
G^0(\bx,\by; \lambda_*) = G^{0,+}(\bx,\by; \lambda_*)
+ \frac{\im}{\alpha_*}\overline{v_{1}(\by;\Kone)} v_{1}(\bx;\Kone), 
\end{equation}
where $G^{0,+}(\bx,\by; \lambda_*)$ decays exponentially as $\bx\cdot\be_1\to+\infty$, and is given by
\begin{equation}\label{eq:G0p}
G^{0,+}(\bx,\by; \lambda_*) :=\tilde{G^{0}}(\bx,\by; \lambda_*) - \frac{\im}{2\alpha_*}\overline{v_{1}(\by;\Kone)} v_{1}(\bx;\Kone) + \frac{\im}{2\alpha_*}\overline{v_{2}(\by;\Kone)} v_{2}(\bx;\Kone).
\end{equation}
When $\bx\cdot\be_1\to -\infty$, the terms in $G^0(\bx,\by; \lambda_*)$ can be regrouped as
\begin{equation}
G^0(\bx,\by; \lambda_*) = 
G^{0,-}(\bx,\by; \lambda_*) 
+ \frac{\im}{\alpha_*}\overline{v_{2}(\by;\Kone)} v_{2}(\bx;\Kone), 
\end{equation}
where $G^{0,-}(\bx,\by; \lambda_*)$ decays exponentially as $\bx\cdot\be_1\to-\infty$, and is given by
\begin{equation}\label{eq:G0m}
G^{0,-}(\bx,\by; \lambda_*) :=
\tilde{G^{0}}(\bx,\by; \lambda_*) 
+ \frac{\im}{2\alpha_*}\overline{v_{1}(\by;\Kone)} v_{1}(\bx;\Kone) - \frac{\im}{2\alpha_*}\overline{v_{2}(\by;\Kone)} v_{2}(\bx;\Kone).
\end{equation}

Denote the  Green functions in $\Omega^{\pm\eps}$ by $G^{\pm\eps}(\bx,\by;\lambda)$. For $\lambda\in(\lambda_* - \hrad\eps, \lambda_* + \hrad\eps)$, 
by~\cite{FlissJoly2016}, there holds
\begin{equation}\label{eq:pGreenpmeps}
G^{\pm\eps}(\bx,\by; \lambda) = 
 \sum_{n\geq1}\frac{1}{2\pi}\int_{[-\pi,\pi ]}
\frac{\overline{v_{n,\pm\eps}(\by;\bp(\ell))} v_{n,\pm\eps}(\bx;\bp(\ell))}{ \mu_{n,\pm\eps}(\bp(\ell)) - \lambda } \, \dpt ,\quad \bx,\by\in\Omega^0.
\end{equation}
Moreover, $G^{\pm\eps}(\bx,\by; \lambda)$ decays exponentially as $|\bx\cdot\be_1|\to\infty$.

\begin{remark}
Note that the $q$-sesquilinear form and the Green functions are independent of phase factors of the Floquet modes $u_{n}$, $v_{n}$, $u_{n,\pm\eps}$ and $v_{n,\pm\eps}$. 
\end{remark}

\section{Integral equations for the interface modes along a zigzag interface}\label{sec:characterization}

In this section, we establish the integral equations for the interface modes at a zigzag interface separating two honeycomb lattices using the layer potentials~\cite{colton2013integral,mclean2000strongly}. This is achieved by matching the Dirichlet and Neumann traces of the wave fields along the interface. 
Let $\Gamma$ be the interface two lattices as shown in Figure \ref{fig:honeycomb_joint_zigzag}
and $\bn=(\frac{1}{2},-\frac{\sqrt3}{2})$ be the unit normal vector of $\Gamma$ pointing to the right. Let $\eps>0$ and $\lambda\in (\lambda_* - \hrad\eps, \lambda_* + \hrad\eps)$, for $(\psi,\phi)\in \HhalfG\times\HmhalfG$, we define the single and double layer potentials:
\begin{equation}\label{eq:lpotentialeps}
\begin{aligned}
\cS^{\pm\eps}(\lambda)\phi(\bx)&:=\int_\Gamma  G^{\pm\eps}(\bx,\by;\lambda) \phi(\by)\, ds_{\by}, \quad \bx\notin \Gamma, \\
\cD^{\pm\eps}(\lambda)\psi(\bx)&:=\int_\Gamma  \partial_{n_\by} G^{\pm\eps}(\bx,\by;\lambda) \psi(\by)\, ds_{\by} \quad \bx\notin \Gamma,  
\end{aligned}
\end{equation}
where $G^{\pm\eps}(\bx,\by;\lambda)$ are the  Green functions on the $\pm\eps$-strip defined in \eqref{eq:pGreenpmeps}.
The single layer potential $\cS^{\pm\eps}(\lambda)\phi(\bx)$ can be continuously extended to $\Gamma$ and it defines an bounded integer operator from $\HmhalfG$ to $\HhalfG$, which we still denote by $\cS^{\pm\eps}$. Given $(\psi,\phi)\in \HhalfG\times\HmhalfG$, we also define the integral operators
\begin{equation}\label{eq:int_opt_K}
\begin{aligned}
\cK^{\pm\eps}(\lambda)\psi(\bx)&:=\int_\Gamma  \partial_{n_\by} G^{\pm\eps}(\bx,\by;\lambda) \psi(\by)\, ds_{\by} \quad \bx\in \Gamma,\\
\cK^{*,\pm\eps}(\lambda)\phi(\bx)&:=\int_\Gamma  \partial_{n_\bx} G^{\pm\eps}(\bx,\by;\lambda) \phi(\by)\, ds_{\by} \quad \bx\in \Gamma.\\
\end{aligned}
\end{equation}
It can be shown that $\cK^{\pm\eps}:\HhalfG\to\HhalfG$ and $\cK^*:\HmhalfG\to\HmhalfG$ are bounded.

By taking the limit of the layer potentials as $\bx\to\Gamma$,
the following jump relationship holds~\cite{colton2013integral}: 
\begin{equation}\label{eq:jumpeps}
\begin{aligned}
[\cS^\eps\psi (\lambda) ]_{\pm} &= \cS^\eps (\lambda) \psi, \\
[\partial_n \cS^\eps(\lambda)  \psi ]_{\pm} &= \mp\frac{1}{2}\psi+\cK^{*,\eps}(\lambda) \psi, \\
[ \cD^\eps\phi (\lambda) ]_{\pm} &= \pm\frac{1}{2}\phi + \cK^\eps (\lambda) \phi, \\
[\partial_n \cD^\eps (\lambda) \phi ]_{\pm}& =: \cN^\eps\phi. 
\end{aligned}
\end{equation}
In the above, the subscript $-$ and $+$ represent the limit of the layer potentials as $\bx\to\Gamma$ from the left and right side respectively. $\partial_n$ represents the normal derivative, and $\cN^{\pm\eps}: H^{1/2}(\Gamma)\to H^{-1/2}(\Gamma)$ are well-defined bounded operators.
In addition, it is clear that the jump relations \eqref{eq:jumpeps} hold when $\eps$ is replaced by $-\eps$.

Assume that $u(\bx)$ is an interface mode of \eqref{eq:spectral_prob_zigzag} with the eigenvalue $\lambda$. 
Let  $u|_\Gamma \in \HhalfG$ and $\partial_n u|_\Gamma \in \HmhalfG$ be the traces of $u$ and the normal derivatives of $u$ on $\Gamma$. Then by the Green's formula, it can be shown that $u$ attains the following representation in the infinite strip $\Omega^{J,\eps}$:
\begin{equation}\label{eq:tracetoedge}
u(\bx) = 
\begin{cases}
\big[\cD^\eps (\lambda)  u|_\Gamma \big](\bx) - \big[\cS^\eps(\lambda) \partial_n u|_\Gamma  \big] (\bx)  \quad \text{for}\; \bx \; \text{on the right of }\Gamma,\\
-\big[ \cD^{-\eps} (\lambda)  u|_\Gamma \big] (\bx) +  \big[\cS^{-\eps} \partial_n u|_\Gamma (\lambda)  \big](\bx)  \quad \text{for}\; \bx \; \text{on the left of }\Gamma.
\end{cases}
\end{equation}
Here we used the fact that  $u\in\cH^{J,\eps}$, especially the decay of $u$ when $|\bx\cdot\be_1| \to\infty$ when applying the Green's formula.
Taking the limit from either side of $\Gamma$, we obtain the following two systems of integral equations:
\begin{equation}\label{eq:densuright}
\left(\begin{matrix}
u|_\Gamma\\
\partial_n u|_\Gamma
\end{matrix}\right)
 =
\left(\begin{matrix}
\cK^{\eps}(\lambda)  + \frac{1}{2}\cI  & -\cS^{\eps}(\lambda)   \\
\cN^\eps(\lambda)  & -\cK^{*,\eps}(\lambda)  + \frac{1}{2}\cI
\end{matrix}\right) 
\left(\begin{matrix}
u|_\Gamma\\
\partial_n u|_\Gamma
\end{matrix}\right),
\end{equation}
and
\begin{equation}
\left(\begin{matrix}
u|_\Gamma\\
\partial_n u|_\Gamma
\end{matrix}\right)
=
\left(\begin{matrix}
- \cK^{-\eps}(\lambda)  + \frac{1}{2}\cI & \cS^{-\eps} (\lambda)    \\
 - \cN^{-\eps}(\lambda)  & \cK^{*,-\eps}(\lambda)  + \frac{1}{2}\cI
\end{matrix}\right) 
\left(\begin{matrix}
u|_\Gamma\\
\partial_n u|_\Gamma
\end{matrix}\right).
\end{equation}
The above is equivalent to the following two systems
\begin{equation}\label{eq:densuleft}
\begin{aligned}
&\left(\begin{matrix}
-(\cK^{\eps}(\lambda)+\cK^{-\eps}(\lambda)) & \cS^{\eps}(\lambda)+\cS^{-\eps}(\lambda) \\
 -(\cN^{\eps}(\lambda)+\cN^{-\eps}(\lambda)) & \cK^{*,\eps}(\lambda)+\cK^{*,-\eps}(\lambda)
\end{matrix}\right)
\left(\begin{matrix}
u|_\Gamma\\
\partial_n u|_\Gamma
\end{matrix}\right)
=0,\\
\text{and}\\
&\left(\begin{matrix}
-\cK^{\eps}(\lambda)+\cK^{-\eps}(\lambda)+\cI & \cS^{\eps}(\lambda)-\cS^{-\eps}(\lambda) \\
 -\cN^{\eps} (\lambda) +\cN^{-\eps}(\lambda) & \cK^{*,\eps}(\lambda)-\cK^{*,-\eps}(\lambda) + \cI 
\end{matrix}\right)
\left(\begin{matrix}
u|_\Gamma\\
\partial_n u|_\Gamma
\end{matrix}\right)
=0.
\end{aligned}
\end{equation}
It is obvious that $u$ is nontrivial only when $(u|_\Gamma,\partial_n u|_\Gamma)$ is nontrivial.

Conversely, assume $\lambda\in (\lambda_* - \hrad\eps, \lambda_* + \hrad\eps)$. Let $(\psi,\phi)\in \HhalfG\times\HmhalfG$, which is not necessarily the Cauchy data of an interface mode on the interface $\Gamma$. We define $u(\bx)$ in the infinite strip $\Omega^{J,\eps}$ as a combination of single and double layer potentials:
\begin{equation}\label{eq:denstoedge}
u(\bx)= 
\begin{cases}
  [\cD^\eps(\lambda)  \psi](\bx)  - [\cS^\eps(\lambda)  \phi](\bx)  \quad \text{on the right of }\Gamma, \\
-[\cD^{-\eps}(\lambda)  \psi](\bx)  + [\cS^{-\eps}(\lambda)  \phi](\bx)  \quad \text{on the left of }\Gamma.
\end{cases}
\end{equation}
Since the Green functions $G^{\pm\eps}(\bx,\by;\lambda)$ decay as $|\bx\cdot \be_1|\to\infty$  for $\lambda$ located in the gap, $u$ defined above is an interface mode if and only if it is nontrivial and its value and normal derivatives are continuous across the interface $\Gamma$. Using \eqref{eq:jumpeps}, taking the limit of the layer potentials and their normal derivatives as $\bx\to\Gamma$, we obtain the system of integral equations:
\begin{equation}\label{eq:dataright}
\left(\begin{matrix}
 \cK^{\eps}(\lambda) + \frac{1}{2}\cI & -\cS^{\eps}(\lambda) \\
 \cN^\eps(\lambda) & -\cK^{*,\eps}(\lambda) + \frac{1}{2}\cI 
\end{matrix}\right)  
\left(\begin{matrix}
\psi\\
\phi
\end{matrix}\right)
=
\left(\begin{matrix}
- \cK^{-\eps}(\lambda) + \frac{1}{2}\cI & \cS^{-\eps}(\lambda)   \\
- \cN^{-\eps}(\lambda) & \cK^{*,-\eps}(\lambda) + \frac{1}{2}\cI
\end{matrix}\right) 
\left(\begin{matrix}
\psi\\
\phi
\end{matrix}\right)\neq0.
\end{equation}
This is equivalent to
\begin{equation}
\left(\begin{matrix}
-(\cK^{\eps}(\lambda)+\cK^{-\eps}(\lambda)) & \cS^{\eps}(\lambda)+\cS^{-\eps}(\lambda) \\
 -(\cN^{\eps}(\lambda)+\cN^{-\eps}(\lambda)) & \cK^{*,\eps}(\lambda)+\cK^{*,-\eps}(\lambda) 
\end{matrix}\right)
(\lambda)
\left(\begin{matrix}
\psi\\
\phi
\end{matrix}\right)
=0,
\end{equation}
and 
\begin{equation}
\left(\begin{matrix}
\cK^{\eps}(\lambda) - \cK^{-\eps}(\lambda)+\cI & - \cS^{\eps}(\lambda) + \cS^{-\eps}(\lambda) \\
\cN^{\eps}(\lambda) - \cN^{-\eps}(\lambda) & - \cK^{*,\eps}(\lambda) + \cK^{*,-\eps}(\lambda) + \cI
\end{matrix}\right)
(\lambda)
\left(\begin{matrix}
\psi\\
\phi
\end{matrix}\right)
\neq0.
\end{equation}

Define the integral operators on $\HhalfG\times\HmhalfG$ 
\begin{equation}\label{eq:bTeps}
\mathbb T^\eps
(\lambda):=
\left(\begin{matrix}
-\cK^{\eps}(\lambda) & \cS^{\eps}(\lambda) \\
 -\cN^{\eps}(\lambda) & \cK^{*,\eps}(\lambda)
\end{matrix}\right),
\end{equation}
and
\begin{equation}\label{eq:bTs}
\mathbb T_s^\eps(\lambda):=\mathbb T^\eps + \mathbb T^{-\eps}, \quad
\mathbb T_t^\eps(\lambda):=- \mathbb T^\eps + \mathbb T^{-\eps} + \mathbb I, \quad
\mathbb T_n^\eps(\lambda):=\mathbb T^\eps - \mathbb T^{-\eps} + \mathbb I,
\end{equation}
where $\mathbb I$ is the identity operator.
Based on the above discussion, we obtain the following lemma for the characterization of interface modes.
\begin{lemma}\label{lem:edgestate}
Let $\lambda\in (\lambda_* - \hrad\eps, \lambda_* + \hrad\eps)$.

\begin{itemize}
    \item [(i)] There exists an interface mode $u$ satisfying \eqref{eq:edgedef} if and only if there exists $(\psi,\phi)\in \HhalfG\times\HmhalfG$ such that
\begin{equation}\label{eq:sufficient}
\mathbb T_s^\eps
(\lambda)
\left(\begin{matrix}
\psi\\
\phi
\end{matrix}\right)
=0,\quad  \mathbb T_t^\eps
(\lambda)
\left(\begin{matrix}
\psi\\
\phi
\end{matrix}\right)
\neq0.
\end{equation}
Furthermore, each solution to \eqref{eq:sufficient} yields an interface mode expressed by \eqref{eq:denstoedge}.
\item [(ii)] If $u$ is an interface mode satisfying \eqref{eq:edgedef}, then $0\neq( u|_\Gamma, \partial_n u|_\Gamma) \in \HhalfG\times \HmhalfG$ satisfies
\begin{equation}\label{eq:necessary}
\mathbb T_s^\eps
(\lambda)
\left(\begin{matrix}
u|_\Gamma\\
\partial_n u|_\Gamma
\end{matrix}\right)
=0,\quad
\mathbb T_n^\eps
(\lambda)
\left(\begin{matrix}
u|_\Gamma\\
\partial_n u|_\Gamma
\end{matrix}\right)
=0.
\end{equation}
\end{itemize}
\end{lemma}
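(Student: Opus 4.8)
The plan is to translate the existence of an interface mode into solvability of the two operator systems, exploiting the representation formulas \eqref{eq:tracetoedge} and \eqref{eq:denstoedge} together with the jump relations \eqref{eq:jumpeps}. The two statements are proven by unwinding the two equivalent formulations already derived above equation \eqref{eq:bTeps}: the ``$\mathbb T_s$'' row comes from requiring continuity of the Cauchy data (the difference of the two one-sided layer-potential representations vanishes), while the ``$\mathbb T_t$'' and ``$\mathbb T_n$'' rows come from the remaining jump, which distinguishes whether the candidate field is genuinely nontrivial.

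For part (i), I would argue both directions. Given $(\psi,\phi)$ solving \eqref{eq:sufficient}, define $u$ by the layer-potential formula \eqref{eq:denstoedge}. Because $\lambda\in(\lambda_*-\hrad\eps,\lambda_*+\hrad\eps)$ lies in the common gap (Remark~\ref{lem:commongap}), the strip Green functions $G^{\pm\eps}(\bx,\by;\lambda)$ decay exponentially as $|\bx\cdot\be_1|\to\infty$ by \eqref{eq:pGreenpmeps}, so $u\in\cH^{J,\eps}$ automatically; the PDE $-\Delta u-\lambda u=0$ holds off $\Gamma$ since the kernels are Green functions of that operator on the respective strips, and the quasiperiodicity along $\be_2$ is inherited from $G^{\pm\eps}$. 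Taking the limits $\bx\to\Gamma$ from the left and right using \eqref{eq:jumpeps}, the Dirichlet trace and Neumann trace each agree from the two sides precisely when $\mathbb T_s^\eps(\lambda)(\psi,\phi)^T=0$ (this is the ``even'' combination), while $\mathbb T_t^\eps(\lambda)(\psi,\phi)^T\ne0$ is exactly the statement that the two one-sided boundary values do not both vanish, hence $(u|_\Gamma,\partial_n u|_\Gamma)\ne0$ and $u\not\equiv0$ by unique continuation. Conversely, if $u$ is an interface mode, Green's formula on each half-strip (legitimate because of the exponential decay built into $\cH^{J,\eps}$) gives the representation \eqref{eq:tracetoedge}; matching the traces from both sides yields \eqref{eq:densuright} and its left analogue, whose sum and difference are \eqref{eq:densuleft}. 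The sum is $\mathbb T_s^\eps(\lambda)(u|_\Gamma,\partial_n u|_\Gamma)^T=0$; the nontriviality $\mathbb T_t^\eps(\lambda)(u|_\Gamma,\partial_n u|_\Gamma)^T\ne0$ follows because if it vanished then both \eqref{eq:densuright}-type systems would force $(u|_\Gamma,\partial_n u|_\Gamma)=0$, contradicting $u\not\equiv0$ (again via \eqref{eq:tracetoedge} and unique continuation).

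For part (ii), the $\mathbb T_s^\eps$ equation is already contained in the computation above. The second relation $\mathbb T_n^\eps(\lambda)(u|_\Gamma,\partial_n u|_\Gamma)^T=0$ is the \emph{difference} of the two one-sided systems \eqref{eq:densuright} and its left counterpart, i.e. the second line of \eqref{eq:densuleft} rewritten with $\mathbb T^{\pm\eps}$; since genuine Cauchy data of an actual interface mode satisfy \emph{both} one-sided representations, their difference must vanish, which is precisely $\mathbb T_n^\eps$ applied to the Cauchy data. (Note the distinction with part (i): for arbitrary $(\psi,\phi)$ that are not Cauchy data, only $\mathbb T_s$ vanishes and $\mathbb T_t\ne0$ certifies nontriviality, whereas true Cauchy data kill both $\mathbb T_s$ and $\mathbb T_n$.)

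I expect the main obstacle to be the careful justification of the Green's representation formula \eqref{eq:tracetoedge} on the unbounded strip: one must verify that the boundary terms at $\bx\cdot\be_1\to\pm\infty$ in Green's identity vanish, which rests on the exponential decay of both $u$ (from $u\in L^2(\Omega^{J,\eps})$ together with the gap condition and elliptic regularity, making $u\in\cH^{J,\eps}$) and the Green functions $G^{\pm\eps}$, and on the matching of quasiperiodic boundary contributions on $\Gamma_\pm$, which cancel in pairs because both $u$ and $G^{\pm\eps}$ carry the same phase $e^{\im\kp^*}$. The jump relations \eqref{eq:jumpeps} themselves and the mapping properties of $\cS^{\pm\eps},\cK^{\pm\eps},\cK^{*,\pm\eps},\cN^{\pm\eps}$ are standard once one knows $G^{\pm\eps}(\bx,\by;\lambda)$ differs from the free-space/quasiperiodic fundamental solution by a smoother remainder, so those are routine; the bookkeeping that turns the two one-sided systems into the symmetric/antisymmetric pairs $\mathbb T_s,\mathbb T_t,\mathbb T_n$ is purely algebraic.
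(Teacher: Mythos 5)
Your proposal is correct and follows essentially the same route as the paper: the lemma is established by combining the Green's representation \eqref{eq:tracetoedge} on each half-strip, the jump relations \eqref{eq:jumpeps}, and the algebraic recombination of the two one-sided trace identities into the symmetric and antisymmetric pairs $\mathbb T_s^\eps,\mathbb T_t^\eps,\mathbb T_n^\eps$ --- exactly the chain of equations \eqref{eq:densuright}--\eqref{eq:densuleft} and \eqref{eq:denstoedge}--\eqref{eq:dataright} that the paper records immediately above the lemma statement. One small point you could make sharper: in the forward direction of (i), rather than arguing loosely that ``both systems would force the data to vanish,'' it is cleanest to note the identity $\mathbb T_t^\eps+\mathbb T_n^\eps=2\mathbb I$, so that (ii) (which gives $\mathbb T_n^\eps(u|_\Gamma,\partial_n u|_\Gamma)^T=0$) immediately yields $\mathbb T_t^\eps(u|_\Gamma,\partial_n u|_\Gamma)^T=2(u|_\Gamma,\partial_n u|_\Gamma)^T\neq0$ for a nonzero interface mode; this is the precise algebraic content behind your observation.
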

\begin{remark}
First, suppose that $(\lambda,\psi_i,\phi_i)$ $i=1,\cdots,N$ satisfy \eqref{eq:sufficient} for some positive integer $N$. Let $u_i$ be defined by \eqref{eq:denstoedge} correspondingly. When $\{(\psi_i,\phi_i)\}_{i=1,\cdots,N}$ are linearly independent, $\{u_i\}_{i=1,\cdots,N}$ may be linearly dependent.

Second, the converse of Lemma~\ref{lem:edgestate} part (ii) does not hold. That is, a triple $(\lambda, \psi,\phi)$ satisfying \eqref{eq:necessary} may not produce an interface mode through \eqref{eq:denstoedge}. 

Third, the subscript for $\mathbb T^\eps_s$ represents ``sufficient", 
that for $\mathbb T^\eps_t$ represents ``nontrivial",
and that for $\mathbb T^\eps_n$ represents ``necessary".
\end{remark}

We introduce some notations similar to \eqref{eq:lpotentialeps}-\eqref{eq:jumpeps} and \eqref{eq:bTeps}. Specifically, for $\eps=0$, in the infinite strip $\Omega^0$, we define
$\cS^0(\lambda_*)$, $\cD^0(\lambda_*)$, $\cK^0(\lambda_*)$, $\cK^{*,0}(\lambda_*)$ and $\cN^0(\lambda_*)$ parallel to  \eqref{eq:lpotentialeps}-\eqref{eq:jumpeps} where the Green functions are replaced by $G^0(\bx,\by,\lambda_*)$ defined in \eqref{eq:G0}, and
$\tilde\cS^0(\lambda_*)$, $\tilde\cD^0(\lambda_*)$, $\tilde\cK^0(\lambda_*)$, $\tilde\cK^{*,0}(\lambda_*)$ and $\tilde\cN^0(\lambda_*)$,
where the Green functions are replaced by $\tilde G^0(\bx,\by,\lambda_*)$ defined in \eqref{eq:tildeG0}.
We also define 
$\cS^{0,\pm}(\lambda_*)$, $\cD^{0,\pm}(\lambda_*)$, $\cK^{0,\pm}(\lambda_*)$, $\cK^{*,0,\pm}(\lambda_*)$ and $\cN^{0,\pm}(\lambda_*)$, where the Green functions are replaced by $G^{0,\pm}(\bx,\by,\lambda_*)$ defined in \eqref{eq:G0p} and \eqref{eq:G0m}.
These layer potentials have the jump relations when $\eps$ is replaced by $0$ and $\lambda$ is replaced by $\lambda_*$ in \eqref{eq:jumpeps}.

Finally, define the integral operators on $\HhalfG\times\HmhalfG$ 
\begin{equation}\label{eq:bT0}
\mathbb T^0
(\lambda_*):=
\left(\begin{matrix}
-\cK^0(\lambda_*) & \cS^0(\lambda_*) \\
 -\cN^0(\lambda_*) & \cK^{0,*}(\lambda_*)
\end{matrix}\right)\quad\text{and}\quad
\tilde{\mathbb T}^0
(\lambda_*):=
\left(\begin{matrix}
-\tilde\cK^0(\lambda_*) & \tilde\cS^0(\lambda_*) \\
 -\tilde\cN^0(\lambda_*) & \tilde\cK^{0,*}(\lambda_*)
\end{matrix}\right).
\end{equation}

\section{The proof of Theorem \ref{lem:edge}}\label{sec:proofedge}
In this section, we investigate interface modes along a zigzag interface using the integral equation formulation in Lemma~\ref{lem:edgestate}.  We will first derive the limit of the integral operators, and then apply the generalized Rouch\'{e} theorem in Gohberg-Sigal theory to investigate the characteristic values of the integral operators.

\subsection{The limiting operators for $\mathbb T^\eps$, $\mathbb T_s^\eps$, $\mathbb T_t^\eps$, and $\mathbb T_n^\eps$ }
We derive asymptotic expansions for the integral operators $\mathbb T^\eps$, $\mathbb T_s^\eps$, $\mathbb T_t^\eps$, and $\mathbb T_n^\eps$ in this subsection. To this end, we first introduce several notations. For $\vec\phi = (\psi,\phi)\in\HhalfG\times\HmhalfG$, let
\begin{equation}\label{eq:ci}
c_i(\vec\phi):= \overline{\langle \phi,v_i\rangle}_\Gamma - \langle\partial_n v_i, \psi\rangle_\Gamma,
\end{equation}
where $v_i$ are defined in Remark~\ref{lem:phase}.
We also denote
\begin{equation}\label{eq:vecv}
\vec v_i:=
\left(\begin{matrix}v_i|_\Gamma \\\partial_n v_i|_\Gamma\end{matrix}\right),\quad
i=1,2,
\end{equation}
and define the operators
\begin{equation}\label{eq:P}
\mathbb P
\vec\phi : =
c_1(\vec\phi)\vec v_1
+c_2(\vec\phi)\vec v_2,
\end{equation}
\begin{equation}\label{eq:Q}
\mathbb Q
\vec\phi 
:=
c_2(\vec\phi)\vec v_1
+c_1(\vec\phi)\vec v_2.
\end{equation}
Let $\beta(h)$ and $\xi(h)$ be two functions given by
\begin{equation}\label{eq:betaxi}
\begin{aligned}
    & \beta(h):=   \frac{1}{2}\left|\frac{\gamma_*}{\theta_*}\right|\frac{h}{\sqrt{(\frac{t_*}{\gamma_*})^2-h^2}} = \frac{1}{2\alpha_*}\frac{h}{\sqrt{\beta_*^2-h^2}}, \\
    & \xi(h):=  \frac{t_*}{2|\theta_*|}\frac{1}{\sqrt{(\frac{t_*}{\gamma_*})^2-h^2}} = \frac{\beta_*}{2\alpha_*}\frac{1}{\sqrt{\beta_*^2-h^2}},
\end{aligned}
\end{equation}
where $\alpha_*$ is defined in Remark~\ref{lem:alphabeta} and  $\beta_*:= \frac{t_*}{|\theta_*|}$.
We have the following lemma for the limit of the integral operator $\mathbb T^{\pm\eps}$ as $\eps\to0$.
\begin{prop} \label{lem:oplim}
Let Assumption~\ref{lem:assNoFold} holds along $\bbeta_1$ and $t_*>0$. Let $\mathfrak d\in(0,1)$ be a constant. 
Then the following limit holds uniformly for $h\in\mathbb C$ that satisfy $|h|< \hrad$ as $\eps\to 0^+$: 
\begin{equation}\label{eq:oplim}
\mathbb T^{\pm\eps}(\lambda_*+ \eps h) 
\to \tilde{\mathbb T}^0 (\lambda_*) 
+ \beta(h)\mathbb P \mp \xi(h)\mathbb Q =:\mathbb U^{\pm}(h),
\end{equation}
where
the convergence is understood with the operator norm from $\HhalfG\times\HmhalfG$ to $\HhalfG\times\HmhalfG$. 
\end{prop}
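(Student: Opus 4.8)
The plan is to compute the limit of $\mathbb T^{\pm\eps}(\lambda_*+\eps h)$ by tracking how each of the five scalar layer potentials $\cS^{\pm\eps}, \cK^{\pm\eps}, \cK^{*,\pm\eps}, \cN^{\pm\eps}$ behaves, and this reduces entirely to understanding the limit of the Green function $G^{\pm\eps}(\bx,\by;\lambda_*+\eps h)$ restricted to a neighborhood of $\Gamma$ as $\eps\to 0^+$. Starting from the spectral representation \eqref{eq:pGreenpmeps}, I would split the $\ell$-integral into the contribution of the two ``Dirac bands'' $n=1,2$ and the rest $n\geq 3$. For $n\geq 3$ the bands stay uniformly bounded away from $\lambda_*$ for small $\eps$ (by Assumption~\ref{lem:assNoFold} and continuity), so those terms converge to the corresponding $n\geq 3$ terms of $\tilde G^0(\bx,\by;\lambda_*)$ in \eqref{eq:tildeG0}, using Remark~\ref{lem:phase} to pass the Floquet modes $v_{n,\pm\eps}\to v_n$ (up to phase, which drops out of the bilinear combination $\overline{v_n(\by)}v_n(\bx)$). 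The main analytic work is the $n=1,2$ part.

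For $n=1,2$, I would substitute the dispersion expansions from Lemma~\ref{lem:pmstrip}, $\mu_{1,\pm\eps}(\bp(\ell))=\lambda_*-\tfrac1{|\gamma_*|}\sqrt{\eps^2t_*^2+|\theta_*|^2\ell^2}\,(1+O(\eps,\ell))$ and $\mu_{2,\pm\eps}=\lambda_*+(\cdots)$, so that the denominator $\mu_{n,\pm\eps}(\bp(\ell))-\lambda_*-\eps h$ becomes $\mp\tfrac1{|\gamma_*|}\sqrt{\eps^2t_*^2+|\theta_*|^2\ell^2}-\eps h$ to leading order. Rescaling $\ell=\eps s$ turns the principal-value integral over $[-\pi,\pi]$ into an integral over $[-\pi/\eps,\pi/\eps]\to\mathbb R$ of the form $\tfrac1{2\pi}\,\mathrm{p.v.}\!\int_{\mathbb R}\frac{\overline{v_n(\by;\Kone)}v_n(\bx;\Kone)}{\mp|\gamma_*|^{-1}\sqrt{t_*^2+|\theta_*|^2 s^2}-h}\,ds$ plus a vanishing remainder; here I use that $v_{n,\pm\eps}(\cdot;\bp(\eps s))\to v_n(\cdot;\Kone)$ uniformly on compacta (Remark~\ref{lem:phase} again, together with analyticity in $\ell$) and that the condition $|h|<\mathfrak d|t_*/\gamma_*|$ keeps $h$ strictly inside the gap so no new pole appears. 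Evaluating this elementary scalar integral — which is where the functions $\beta(h)$ and $\xi(h)$ in \eqref{eq:betaxi} come from, together with the residue/limiting-absorption term $\tfrac{\im}{2\alpha_*}$ already present at $\eps=0$ in \eqref{eq:G0} — produces exactly the rank-two correction $\beta(h)\,\overline{v_1}v_1$-type and $\mp\xi(h)\,\overline{v_1}v_2$-type kernels. Recognizing that when this kernel is inserted into the single/double layer potentials over $\Gamma$ and the traces are taken, the action on $\vec\phi=(\psi,\phi)$ is precisely through the pairings $c_i(\vec\phi)=\overline{\langle\phi,v_i\rangle}_\Gamma-\langle\partial_n v_i,\psi\rangle_\Gamma$ of \eqref{eq:ci} (this is the $q$-sesquilinear structure of \eqref{eq:qses}), I would identify the rank-two limit operator as $\beta(h)\mathbb P\mp\xi(h)\mathbb Q$ with $\mathbb P,\mathbb Q$ as in \eqref{eq:P}, \eqref{eq:Q}, and the ``regular'' part as $\tilde{\mathbb T}^0(\lambda_*)$.

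The delicate point, and the main obstacle, is making the convergence \emph{uniform in the operator norm on $\HhalfG\times\HmhalfG$} rather than merely pointwise: the integrand in \eqref{eq:pGreenpmeps} is singular in $\bx-\by$, and the rescaling $\ell=\eps s$ interacts with this singularity, so I must control the kernel $G^{\pm\eps}-\mathbb U^\pm$ in an appropriate Sobolev-kernel norm on $\Gamma\times\Gamma$, uniformly in $h$ on the closed disc $|h|\le\mathfrak d|t_*/\gamma_*|$. I would handle this by peeling off the (finite-rank, explicitly computed) $n=1,2$ singular contribution first, after which the remainder is a sum: a tail $\sum_{n\geq 3}$ that converges in a smooth (even real-analytic in $\bx,\by$ away from $\Gamma$) topology by dominated convergence and uniform spectral gap estimates, plus the near-diagonal singularity which is common to $G^{\pm\eps}$ and $\tilde G^0$ (both behave like the free-space Helmholtz Green function near $\bx=\by$) and hence cancels to leading order, leaving a difference that is $O(\eps)$ in the relevant kernel norm. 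The estimate $\|v_{n,\pm\eps}-\alpha_n v_n\|_{H^1}=O(\eps)$ from Remark~\ref{lem:phase}, combined with trace theorems $H^1(\chomo)\to\HhalfG$ and $\Delta u\in L^2\Rightarrow\partial_n u\in\HmhalfG$, upgrades the modewise convergence to the operator-norm convergence of the associated layer potentials, completing the argument.
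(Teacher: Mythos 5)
Your high-level plan — spectral representation of the Green function, isolate the two Dirac bands, rescale in $\ell$ to extract $\beta(h)$ and $\xi(h)$, handle the higher bands by dominated convergence, and read off the limit through the $q$-sesquilinear pairings $c_i$ — is essentially the same strategy the paper uses in Appendix~\ref{sec:oplim}. But several of the steps as you describe them would not actually close, and the difficulties you flag at the end (``the delicate point'') are precisely the ones you leave unresolved.

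First, for the $n=1,2$ contribution you treat the whole interval $[-\pi,\pi]$ by the single rescaling $\ell=\eps s$, remarking only that the p.v.\ is ``already present'' at $\eps=0$. The dispersion expansions of Lemma~\ref{lem:pmstrip} and the mode expansions \eqref{eq:pmodesL}--\eqref{eq:mmodesL} are only valid for $\ell$ small; nothing controls the $O(\eps,\ell)$ error after rescaling once $\ell$ leaves a shrinking neighborhood of $0$. The paper splits the $n=1,2$ part at $|\ell|=\eps^{1/3}$: inside, the rescaling gives the explicit $\beta(h)$, $\xi(h)$ kernels (and the error is controlled by an $O(\eps^{1/3}\ln\eps)$ estimate, which requires the cutoff); outside, the denominator is bounded below by $\approx\eps^{1/3}$ and a separate comparison argument (five intermediate quantities $I_1,\ldots,I_5$ in Section~C.2) produces the principal value. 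Without a cutoff your rescaled integral does not converge to the stated limit uniformly.

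Second, and more seriously, you invoke Remark~\ref{lem:phase} to pass $v_{n,\pm\eps}\to v_n$ for $n\ge 3$, but that remark is stated only for $n=1,2$ and $\ell$ small; it gives you nothing about the higher bands, let alone uniformly in $n$. Moreover, $v_{n,\pm\eps}$ and $v_n$ are functions on \emph{different domains} $\cpeps$ versus $\cunp$ — the obstacle rotates with $\eps$. Before any convergence statement makes sense, one has to transport everything to a fixed domain; the paper does this via the diffeomorphism $\by^\eps(\bx)$ of Appendix~C.3, which equals the identity near $\Gamma$, and then rewrites the $n\ge 3$ sum through the uniformly bounded resolvent $(I-(\lambda_*+\eps h)(-\tilde\Delta_\eps(\ell))^{-1})^{-1}P_{n\ge3,\eps,\ell}(-\tilde\Delta_\eps(\ell))^{-1}$ rather than trying to sum mode-by-mode. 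Your alternative — peeling off the near-diagonal singularity and arguing it cancels between $G^{\pm\eps}$ and $\tilde G^0$ ``to leading order'' in kernel norm — is not developed, and it is not clear how to carry it out uniformly in $h$ while the geometry is moving; in particular the double-layer/hypersingular entries $\cK$, $\cN$ cannot be handled by a pointwise kernel estimate alone. The paper's treatment of $\cD$ and $\cN$ in Appendix~C.4 uses yet another step (an extension operator $E$ and the auxiliary functions $\tilde u$, $w$) to reduce to the same resolvent argument, which you do not anticipate.

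In short, the decomposition and the role of $\beta$, $\xi$, $\mathbb P$, $\mathbb Q$ are identified correctly, but the three concrete technical devices that make the proof work — the $\eps^{1/3}$ cutoff for the Dirac bands, the $\eps$-dependent diffeomorphism putting all objects on the fixed domain, and the resolvent (rather than mode-by-mode) treatment of the higher bands together with the extension-operator trick for the double-layer part — are all missing.
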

The proof of the proposition is presented in Appendix~\ref{sec:oplim}. It is based on the representation of the  Green functions in the infinite strip in terms of the band modes~\cite{FlissJoly2016}: 
\begin{equation}\label{eq:Seps} 
\cS^{\pm\eps}(\lambda) \phi = 
\sum_{n\geq1}\frac{1}{2\pi}\int_{[-\pi,\pi ]} 
\frac{\overline{\langle \phi, v_{n,\pm\eps}(\cdot;\bp(\ell))\rangle}_\Gamma v_{n,\pm\eps}(\bx;\bp(\ell))}{ \mu_{n,\pm\eps}(\bp(\ell)) - \lambda } \, \dpt ,
\end{equation}
\begin{equation}\label{eq:Keps}
\cK^{\pm\eps}(\lambda) \psi =
\sum_{n\geq1}\frac{1}{2\pi}\int_{[-\pi,\pi ]} 
\frac{\langle \partial_n v_{n,\pm\eps}(\cdot;\bp(\ell)),\psi\rangle_\Gamma v_{n,\pm\eps}(\bx;\bp(\ell))}{ \mu_{n,\pm\eps}(\bp(\ell)) - \lambda} \, \dpt ,
\end{equation}
\begin{equation}\label{eq:Kstareps}
\cK^{*,\pm\eps}(\lambda) \phi=
\sum_{n\geq1}\frac{1}{2\pi}\int_{[-\pi,\pi ]} 
\frac{\overline{\langle \phi, v_{n,\pm\eps}(\cdot;\bp(\ell))\rangle}_\Gamma \partial_n v_{n,\pm\eps}(\bx;\bp(\ell))}{ \mu_{n,\pm\eps}(\bp(\ell)) - \lambda} \, \dpt ,
\end{equation}
\begin{equation}\label{eq:Neps}
\cN^{\pm\eps}(\lambda) \psi =
\partial_n\left( 
\sum_{n\geq1}\frac{1}{2\pi}\int_{[-\pi,\pi ]} 
\frac{\langle \partial_n v_{n,\pm\eps}(\cdot;\bp(\ell)),\psi\rangle_\Gamma  v_{n,\pm\eps}(\bx;\bp(\ell))}{ \mu_{n,\pm\eps}(\bp(\ell)) - \lambda} \, \dpt ,
\right)
\end{equation}
and
\begin{equation}\label{eq:tildeS0}
\tilde\cS^{0} (\lambda) \phi = 
\left(\sum_{n\geq3}+\sum_{n=1,2}\text{p.v.}\right)\frac{1}{2\pi}\int_{[-\pi,\pi ]} 
\frac{\overline{\langle \phi, v_{n}(\cdot;\bp(\ell))\rangle}_\Gamma v_{n}(\bx;\bp(\ell))}{ \mu_{n}(\bp(\ell)) - \lambda} \, \dpt ,
\end{equation}
\begin{equation}\label{eq:tildeK0}
\tilde\cK^{0}(\lambda)\psi =
\left(\sum_{n\geq3}+\sum_{n=1,2}\text{p.v.}\right)\frac{1}{2\pi}\int_{[-\pi,\pi ]} 
\frac{\langle \partial_n v_{n}(\cdot;\bp(\ell)),\psi\rangle_\Gamma v_{n}(\bx;\bp(\ell))}{ \mu_{n}(\bp(\ell))- \lambda} \, \dpt ,
\end{equation}
\begin{equation}\label{eq:tildeKstar0}
\tilde\cK^{*,0}(\lambda)\phi=
\left(\sum_{n\geq3}+\sum_{n=1,2}\text{p.v.}\right)\frac{1}{2\pi}\int_{[-\pi,\pi ]} 
\frac{\overline{\langle \phi, v_{n}(\cdot;\bp(\ell))\rangle}_\Gamma \partial_n v_{n}(\bx;\bp(\ell))}{ \mu_{n}(\bp(\ell))- \lambda} \, \dpt ,
\end{equation}
\begin{equation}\label{eq:tildeN0}
\tilde\cN^{0}(\lambda) \psi =
\partial_n\left( 
\left(\sum_{n\geq3}+\sum_{n=1,2}\text{p.v.}\right)\frac{1}{2\pi}\int_{[-\pi,\pi ]} 
\frac{\langle \partial_n v_{n}(\cdot;\bp(\ell)),\psi\rangle_\Gamma v_{n}(\bx;\bp(\ell))}{ \mu_{n}(\bp(\ell))- \lambda} \, \dpt
\right).
\end{equation}

\begin{coro}\label{lem:stnlimits}
Let Assumption~\ref{lem:assNoFold} hold along $\bbeta_1$ and $t_*>0$. Let $\mathfrak d\in(0,1)$ be a constant.
The following limits hold under the operator norm from $\HmhalfG\times\HhalfG$ to $\HhalfG\times\HmhalfG$ 
uniformly for $h\in\mathbb C$ that satisfy $|h|< \hrad$ 
as $\eps\to 0^+$: 
\begin{equation}\label{eq:sufficientlim}
\mathbb T_s^\eps (\lambda_*+\eps h)
= \mathbb U_s (h)+\mathbb R_1(h,\eps),
\end{equation}
\begin{equation}\label{eq:nontriviallim}
\mathbb T_t^\eps (\lambda_*+\eps h)
=\mathbb U_t(h)+\mathbb R_2(h,\eps),
\end{equation}
\begin{equation}\label{eq:necessarylim}
\mathbb T_n^\eps (\lambda_*+\eps h)
=\mathbb U_n(h)+\mathbb R_3(h,\eps).
\end{equation}
Here the limiting operators are
\begin{equation}\label{eq:ziglim}
\mathbb U_s (h) := 2\tilde{\mathbb T}^0(\lambda_*) 
+ 2\beta(h)\mathbb P,\quad
\mathbb U_t(h):=\mathbb I
+2\xi(h)\mathbb Q,\quad
\mathbb U_n(h):=\mathbb I
-2\xi(h)\mathbb Q,
\end{equation}
and the remainder terms have the estimate
 $\|\mathbb R_i(h,\eps)\|_{ \HhalfG\times\HmhalfG\to \HhalfG\times\HmhalfG} =o(1)$ as $\eps\to 0^+$ uniformly for $|h|<\hrad$, 
$i=1,2,3$. 

\end{coro}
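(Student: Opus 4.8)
The statement is a purely algebraic consequence of Proposition~\ref{lem:oplim}. Recall from \eqref{eq:bTs} that, with every operator evaluated at $\lambda = \lambda_* + \eps h$,
\begin{equation*}
\mathbb T_s^\eps = \mathbb T^\eps + \mathbb T^{-\eps},\qquad
\mathbb T_t^\eps = -\mathbb T^\eps + \mathbb T^{-\eps} + \mathbb I,\qquad
\mathbb T_n^\eps = \mathbb T^\eps - \mathbb T^{-\eps} + \mathbb I.
\end{equation*}
The plan is to substitute into each of these the uniform limit $\mathbb T^{\pm\eps}(\lambda_* + \eps h) \to \mathbb U^{\pm}(h) = \tilde{\mathbb T}^0(\lambda_*) + \beta(h)\mathbb P \mp \xi(h)\mathbb Q$ furnished by Proposition~\ref{lem:oplim}, valid in the operator norm on $\HhalfG\times\HmhalfG$ uniformly for $|h| < \hrad$, and then to collect like terms.

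Carrying this out, the $\mp\xi(h)\mathbb Q$ contributions cancel in $\mathbb U^+(h) + \mathbb U^-(h) = 2\tilde{\mathbb T}^0(\lambda_*) + 2\beta(h)\mathbb P = \mathbb U_s(h)$; the $\tilde{\mathbb T}^0(\lambda_*)$ and $\beta(h)\mathbb P$ contributions cancel in $-\mathbb U^+(h) + \mathbb U^-(h) + \mathbb I = \mathbb I + 2\xi(h)\mathbb Q = \mathbb U_t(h)$; and likewise $\mathbb U^+(h) - \mathbb U^-(h) + \mathbb I = \mathbb I - 2\xi(h)\mathbb Q = \mathbb U_n(h)$. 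Here one should note that $\beta(h)$ and $\xi(h)$ of \eqref{eq:betaxi} are well defined on the disc $|h| < \hrad$: since $\mathfrak d \in (0,1)$ we have $|h^2| < (t_*/\gamma_*)^2$, so the radicand $(t_*/\gamma_*)^2 - h^2$ never vanishes there and a holomorphic branch of its square root exists; consequently $\mathbb U_s, \mathbb U_t, \mathbb U_n$ are bounded operators depending continuously on $h$ on that disc.

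It then remains to define $\mathbb R_1(h,\eps) := \mathbb T_s^\eps(\lambda_* + \eps h) - \mathbb U_s(h)$ and, similarly, $\mathbb R_2, \mathbb R_3$ from $\mathbb T_t^\eps, \mathbb T_n^\eps$, and to check $\|\mathbb R_i(h,\eps)\|_{\HhalfG\times\HmhalfG \to \HhalfG\times\HmhalfG} = o(1)$ as $\eps \to 0^+$ uniformly in $|h| < \hrad$. By the triangle inequality each $\mathbb R_i$ equals a fixed $\pm1$-coefficient linear combination of $\mathbb T^{\eps}(\lambda_* + \eps h) - \mathbb U^+(h)$ and $\mathbb T^{-\eps}(\lambda_* + \eps h) - \mathbb U^-(h)$, so the bound is immediate from the uniform convergence in Proposition~\ref{lem:oplim}. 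Since all the analytic substance — the limit of the strip layer potentials built from the Green's functions \eqref{eq:Seps}--\eqref{eq:tildeN0} — is already contained in Proposition~\ref{lem:oplim}, there is no genuine difficulty at this stage; the only point deserving attention is that the $o(1)$ in Proposition~\ref{lem:oplim} must be uniform over the whole disc $|h| < \hrad$, not merely pointwise, which is exactly what permits passing it through the above finite algebraic combinations.
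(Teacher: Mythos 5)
Your proof is correct and is precisely the intended argument: the corollary is stated in the paper without a separate proof because it follows, exactly as you show, by taking the fixed $\pm1$ linear combinations of the two uniform limits $\mathbb T^{\pm\eps}(\lambda_*+\eps h)\to\mathbb U^{\pm}(h)$ supplied by Proposition~\ref{lem:oplim} and observing the cancellations of $\xi(h)\mathbb Q$ (resp.\ of $\tilde{\mathbb T}^0$ and $\beta(h)\mathbb P$). Your remarks on holomorphy of $\beta$, $\xi$ on the disc and on the need for the $o(1)$ in Proposition~\ref{lem:oplim} to be uniform in $h$ are exactly the right sanity checks.
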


\subsection{Properties of the limiting operators $\mathbb U_s$, $\mathbb U_t$ and $\mathbb U_n$}

Using the definition of $c_i$ in \eqref{eq:ci}, the definition of the q-sesquilinear form \eqref{eq:qses}, and the relations \eqref{eq:ortho0} and \eqref{eq:ortho1},
we obtain
\begin{equation}\label{eq:civj}
\begin{aligned}
c_1(\vec v_1) &= q(v_1,v_1) =  \overline{\langle\partial_n v_1,v_1\rangle}_\Gamma - \langle\partial_n v_1,v_1\rangle_\Gamma =\im \alpha_*, \\
c_2(\vec v_2) &= q(v_2,v_2) =  \overline{\langle\partial_n v_2,v_2\rangle}_\Gamma - \langle\partial_n v_2,v_2\rangle_\Gamma = -\im \alpha_*, \\
c_i(\vec v_j) &= q(v_j,v_i) =  \overline{\langle\partial_n v_j,v_i\rangle}_\Gamma - \langle\partial_n v_i,v_j\rangle_\Gamma= 0 \text{ for }i\neq j,
\end{aligned}
\end{equation}
where $\vec v_i$ are defined in \eqref{eq:vecv}.
Define the function spaces
\begin{equation}\label{eq:XY}
 X:=\text{span}\{\vec v_1, \vec v_2\},\quad Y:=\{\vec\phi \in \HhalfG\times\HmhalfG\}, c_i(\vec\phi)=0, i=1,2\}.
\end{equation}
Then $Y$ is the orthogonal complement of $X$ in the sense of dual spaces.
We let
\begin{equation}\label{eq:PY}
P_Y (\vec\phi) :=\vec\phi -
\frac{c_1(\vec\phi)}{i\alpha_*}
\vec v_1 -
\frac{c_2(\vec\phi)}{-i\alpha_*}
\vec v_2. 
\end{equation}
Since $c_i(P_Y (\vec\phi))=0$, $i=1,2$, we obtain
the following direct sum decomposition:
\begin{equation}\label{eq:directsum}
\HhalfG\times\HmhalfG = X \bigoplus Y.
\end{equation}

The following fact will be used repeatedly in the sequel. \eqref{eq:civj} implies that $\vec v_1$ and $\vec v_2$ are linearly independent.
For the operators defined in \eqref{eq:P} and \eqref{eq:Q}, there holds
\begin{equation}\label{eq:PQX}
\mathbb P \vec v_1= \im\alpha_* \vec v_1, \quad \mathbb P \vec v_2= - \im\alpha_* \vec v_2, \quad 
\mathbb Q \vec v_1= \im\alpha_*\vec v_2, \quad 
\mathbb Q \vec v_2= - \im\alpha_* \vec v_1,
\end{equation}
and 
\begin{equation}\label{eq:PQY}
\mathbb P Y= \mathbb Q Y=0.
\end{equation}

\begin{lemma}\label{lem:tildeT0}
The kernel and range of the operator $\tilde{\mathbb T}^0$ are
\begin{equation}
\ker \; \tilde{\mathbb T}^0 (\lambda_*) =X,\quad \text{Ran}  \; \tilde{\mathbb T}^0 (\lambda_*) =Y.
\end{equation}
\end{lemma}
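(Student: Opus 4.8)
\textbf{Proof plan for Lemma~\ref{lem:tildeT0}.}

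The plan is to identify $\tilde{\mathbb T}^0(\lambda_*)$ with the trace map of the layer-potential representation and use the spectral form of the Green function $\tilde G^0$ in \eqref{eq:tildeG0}, together with the $q$-sesquilinear orthogonality relations \eqref{eq:civj}. The key point is that $\tilde{\mathbb T}^0(\lambda_*)$ is obtained from $\mathbb T^\eps$ by deleting precisely the two resonant ``pole" contributions of the modes $v_1, v_2$ at $(\Kone, \lambda_*)$, so its kernel should be exactly the span of the corresponding traces $\vec v_1, \vec v_2$, and its range the $q$-orthogonal complement $Y$.

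First I would show $X \subseteq \ker \tilde{\mathbb T}^0(\lambda_*)$. Recall from \eqref{eq:G0}–\eqref{eq:tildeG0} that $\tilde G^0 = G^0 - \frac{\im}{2\alpha_*}(\overline{v_1(\by;\Kone)}v_1(\bx;\Kone) + \overline{v_2(\by;\Kone)}v_2(\bx;\Kone))$. Using the layer-potential identity \eqref{eq:tracetoedge}–\eqref{eq:denstoedge} with $G^0$ in place of $G^{\pm\eps}$, one has for the genuine strip eigenmode $v_i$ (which satisfies $(-\Delta-\lambda_*)v_i = 0$ on $\Omega^0$) the representation $v_i(\bx) = [\cD^0(\lambda_*) v_i|_\Gamma](\bx) - [\cS^0(\lambda_*)\partial_n v_i|_\Gamma](\bx)$ on the right of $\Gamma$, whence, taking traces, $\mathbb T^0(\lambda_*)\vec v_i = (\tfrac12\cI - \text{jump terms})\vec v_i$ collapses to give the relation $\mathbb T^0(\lambda_*)\vec v_i$ in terms of $\vec v_i$; subtracting the rank-two resonant term, whose trace on $\Gamma$ produces exactly $c_j(\vec v_i)\vec v_j$-type contributions computed via \eqref{eq:civj}, I would verify $\tilde{\mathbb T}^0(\lambda_*)\vec v_i = 0$. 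Concretely, the difference $\mathbb T^0 - \tilde{\mathbb T}^0$ acting on $\vec\phi$ equals (up to the $\frac{\im}{2\alpha_*}$ factor and sign bookkeeping from the $q$-form) a combination of $c_i(\vec\phi)\vec v_i$, and on $\vec v_1$ this accounts for precisely the full value $\mathbb T^0(\lambda_*)\vec v_1 = \vec v_1$, leaving $\tilde{\mathbb T}^0(\lambda_*)\vec v_1 = 0$; similarly for $\vec v_2$.

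Next I would show $\text{Ran}\,\tilde{\mathbb T}^0(\lambda_*) \subseteq Y$, i.e. $c_i(\tilde{\mathbb T}^0(\lambda_*)\vec\phi) = 0$ for $i=1,2$ and all $\vec\phi$. This is a self-adjointness/symmetry computation: $c_i(\tilde{\mathbb T}^0 \vec\phi)$ can be rewritten via the $q$-sesquilinear form as $q$ of the layer-potential field $u$ (defined by \eqref{eq:denstoedge} with $\tilde G^0$) against $v_i$; since both $u$ and $v_i$ solve the homogeneous Helmholtz equation at $\lambda_*$ on the strip and $v_i$ decays appropriately in the relevant half-strip (using that $\tilde G^0$ is the purely oscillatory/integral part and that the $q$-form is conserved along $\be_1$), Green's identity forces this to vanish. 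The technically delicate point here is that $\tilde G^0$ itself does not decay as $|\bx\cdot\be_1|\to\infty$ (only $G^{0,\pm}$ do), so one must carefully track the boundary contributions at $\bx\cdot\be_1 \to \pm\infty$ and show the non-decaying pieces, which are proportional to $v_1(\bx;\Kone)$ and $v_2(\bx;\Kone)$, contribute zero to $c_i$ by the orthogonality $q(v_i, v_j) = \pm\im\alpha_*\delta_{ij}$ combined with the precise coefficients in \eqref{eq:G0p}–\eqref{eq:G0m}. I expect \emph{this} step — the rigorous handling of the non-decaying tails of $\tilde G^0$ in the Green's-identity argument — to be the main obstacle.

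Finally, to upgrade the inclusions $X \subseteq \ker$ and $\text{Ran} \subseteq Y$ to equalities, I would invoke that $\tilde{\mathbb T}^0(\lambda_*)$ is, up to the finite-rank resonant correction just described, a Fredholm operator of index zero (this follows from the standard mapping properties of the layer potentials $\cS^0, \cK^0, \cN^0$ on $\HhalfG\times\HmhalfG$, cf.~\cite{mclean2000strongly}, since $\tilde G^0$ differs from a free-space-type Green function by a smooth kernel). Combined with the direct sum decomposition $\HhalfG\times\HmhalfG = X\oplus Y$ in \eqref{eq:directsum} and $\dim X = 2$, the index-zero Fredholm property forces $\ker\tilde{\mathbb T}^0(\lambda_*)$ to be exactly two-dimensional, hence equal to $X$, and then $\text{Ran}\,\tilde{\mathbb T}^0(\lambda_*)$ has codimension two, and being contained in $Y$ (also of codimension two) must equal $Y$. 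This completes the proof.
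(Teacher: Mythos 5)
Your Step~1 (showing $X\subseteq\ker\tilde{\mathbb T}^0(\lambda_*)$) is essentially the paper's argument, despite some sign carelessness that you flag yourself: the precise identities are $\tilde{\mathbb T}^0(\lambda_*)\vec v_1=(\mathbb T^0(\lambda_*)+\tfrac12\mathbb I)\vec v_1$, $\tilde{\mathbb T}^0(\lambda_*)\vec v_2=(\mathbb T^0(\lambda_*)-\tfrac12\mathbb I)\vec v_2$, combined with $(\mathbb T^0(\lambda_*)\pm\tfrac12\mathbb I)\vec v_{2,1}=0$ obtained by Green's identity on the truncated region between $\Gamma$ and $\Gamma_A$ and letting $A\to\pm\infty$ (so in particular $\mathbb T^0\vec v_1=-\tfrac12\vec v_1$, not $\vec v_1$). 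Your Step~2 (showing $\text{Ran}\subseteq Y$ by a direct $q$-form/Green's-identity computation with $\tilde G^0$) is a genuinely different route from the paper: the paper instead first finishes $\ker\tilde{\mathbb T}^0(\lambda_*)=X$, then uses the algebraic symmetry $(\tilde{\mathbb T}^0)^*=\bigl(\begin{smallmatrix}0&-1\\1&0\end{smallmatrix}\bigr)\tilde{\mathbb T}^0\bigl(\begin{smallmatrix}0&-1\\1&0\end{smallmatrix}\bigr)$ together with the Fredholm closed-range property to identify the range as the $q$-annihilator of $X$, which is $Y$ — this avoids the non-decaying-tail bookkeeping you (rightly) worry about.

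The gap is in Step~3. Fredholm index zero for $\tilde{\mathbb T}^0(\lambda_*)$ gives only $\dim\ker=\text{codim}\,\text{Ran}$; it does \emph{not} force $\dim\ker=2$. You have $\ker\supseteq X$ (so $\dim\ker\geq 2$) and $\text{Ran}\subseteq Y$ (so $\text{codim}\,\text{Ran}\geq 2$), and index zero is perfectly consistent with both quantities being $3$, $4$, etc. To conclude you must separately rule out a nontrivial element of $\ker\tilde{\mathbb T}^0(\lambda_*)\cap Y$. The paper does this with Lemma~\ref{lem:bdstate}: any $\vec\phi\in Y$ with $\tilde{\mathbb T}^0(\lambda_*)\vec\phi=0$ would, via the layer-potential representation \eqref{eq:denstoedge0}, produce a nontrivial $H^1(\Omega^0)$ eigenfunction of $-\Delta$ at energy $\lambda_*$ (the $Y$-condition $c_i(\vec\phi)=0$ is exactly what makes the nondecaying pieces proportional to $v_1,v_2$ drop out, so the resulting $u$ decays and is genuinely $L^2$), and the unperturbed periodic operator $-\Delta^0$ on $\Omega^0$ has no such bound state since $\lambda_*$ lies in its continuous spectrum with no gap opened. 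Without this uniqueness/no-bound-state input the Fredholm-index argument cannot close the proof.
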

\begin{proof}
We first show $X\subset \ker(\tilde{\mathbb T}^0 (\lambda_*))$. To this end, we establish the following relations:
\begin{equation}\label{eq:tildeT0vi}
\tilde{\mathbb T}^0 (\lambda_*) \vec v_1 = \left(\mathbb T^0 (\lambda_*) + \frac{1}{2}\mathbb I \right)\vec v_1,\quad \tilde{\mathbb T}^0(\lambda_*)\vec v_2 = \left(\mathbb T^0 (\lambda_*) - \frac{1}{2}\mathbb I \right)\vec v_2.
\end{equation}
We see that
\begin{equation}\label{eq:relatiion_tilde_T0}
\begin{aligned}
&- \cK^0(\lambda_*)( v_1|_{\Gamma})(\bx) + \cS^0(\lambda_*)(\partial_n v_1|_{\Gamma})(\bx) \\
=&  - \tilde\cK^0(\lambda_*)(v_1|_{\Gamma})(\bx) - \frac{\im}{2\alpha_*}(v_1(\bx)\langle \partial_n v_1, v_1\rangle +v_2(\bx)\langle \partial_n v_2, v_1\rangle)\\
&+ \tilde\cS^0(\lambda_*)(\partial_n v_1|_{\Gamma})(\bx) +  \frac{\im}{2\alpha_*}(v_1(\bx)\overline{\langle  v_1, \partial_n v_1\rangle} +v_2(\bx)\overline{\langle v_2, \partial_n v_1\rangle}), \quad \bx\in\Gamma.
\end{aligned}
\end{equation}
Using \eqref{eq:civj}, we obtain
\begin{equation}
\begin{aligned}
&- \cK^0(\lambda_*)( v_1|_{\Gamma})(\bx) + \cS^0(\lambda_*)(\partial_n v_1|_{\Gamma})(\bx) \\
=&  - \tilde\cK^0(\lambda_*)(v_1|_{\Gamma})(\bx) + \tilde\cS^0(\lambda_*)(\partial_n v_1|_{\Gamma})(\bx) - \frac{1}{2}v_1(\bx),\quad \bx\in\Gamma.
\end{aligned}
\end{equation}
Hence, the first relation in \eqref{eq:tildeT0vi} holds, and the other relation can be shown similarly.

Next we show that
\begin{equation}\label{eq:T0vi}
(\mathbb T^0 (\lambda_*) + \frac{1}{2})\vec v_1 =0,\quad  (\mathbb T^0 (\lambda_*) - \frac{1}{2})\vec v_2=0.
\end{equation}
For each constant $A\in\mathbb R$, define $\Gamma_A := \Gamma + A \be_1$.   
Integrating by parts, we obtain when $A>0$ and $\bx$ is between $\Gamma$ and $\Gamma_A$,
\begin{equation}\label{eq:viright}
v_i(\bx) = \cD^0(\lambda_*) (v_i|_\Gamma)(\bx) - \cS^0(\lambda_*) (\partial_n v_i|_\Gamma)(\bx) + \left(-\cD_A^0(\lambda_*) (v_i|_{\Gamma_A})(\bx) + \cS_A^0(\lambda_*) (\partial_n v_i|_{\Gamma_A})(\bx)\right),
\end{equation}
and when $A<0$ and $\bx$ is between $\Gamma$ and $\Gamma_A$,
\begin{equation}\label{eq:vileft}
v_i(\bx) = - \cD^0(\lambda_*) (v_i|_\Gamma)(\bx) + \cS^0(\lambda_*) (\partial_n v_i|_\Gamma)(\bx) + \left(\cD_A^0(\lambda_*) (v_i|_{\Gamma_A})(\bx) - \cS_A^0(\lambda_*) (\partial_n v_i|_{\Gamma_A})(\bx)\right).
\end{equation}
Here $\cS_A^0(\lambda_*)$ and $\cD_A^0(\lambda_*)$ represents the single and double layer potentials with kernel 
$G^0(\bx,\by; \lambda_*)$ as defined in \eqref{eq:G0} and normal derivative in the direction $\bn$. 
Set $i=2$ in \eqref{eq:viright}. By the relation \eqref{eq:Gright}, the decay of $G^{0,+}(\bx,\by; \lambda_*)$ and \eqref{eq:civj}, the limit of \eqref{eq:viright} when $A\rightarrow+\infty$ gives
\begin{equation}
v_2(\bx) = \cD^0 (v_2|_\Gamma)(\bx) - \cS^0 (\partial_n v_2|_\Gamma)(\bx) ,\quad \text{$\bx$ is to the right of $\Gamma$}.
\end{equation}
Taking the trace and normal derivative of $v_2$ from the right of $\Gamma$, we obtain the first relation in \eqref{eq:T0vi}. The second equation can be similarly obtained by setting $i=1$ in \eqref{eq:vileft}. 
Combining \eqref{eq:tildeT0vi} and \eqref{eq:T0vi}, we obtain $X\subset \ker(\tilde{\mathbb T}^0 (\lambda_*))$.

The relation $\ker \tilde{\mathbb T}^0 (\lambda_*)\subset X$
follows from Lemma~\ref{lem:bdstate} and the fact that $-\Delta^0$ does not have eigenvalues in the unperturbed strip.

For the range of  $\tilde{\mathbb T}^0$, we observe
\begin{equation}
\text{Ran}  \; \tilde{\mathbb T}^0 (\lambda_*) 
= \left\{(\psi_1,\phi_1)\in \HhalfG\times\HmhalfG, \overline{\langle\phi_1,\psi_2\rangle}_\Gamma + \langle\phi_2,\psi_1\rangle_\Gamma =0 \text{ for all }    (\psi_2,\phi_2) \in \text{Ker}\; \left(\tilde{\mathbb T}^0 (\lambda_*)\right)^* \right\}.
\end{equation}
It is straightfoward to verify that 
\begin{equation}\label{eq:tildeT0adj}
\left(\tilde{\mathbb T}^0 (\lambda_*)\right)^* = 
\left(\begin{matrix}
    0 &-1\\1&0
\end{matrix}\right)
\tilde{\mathbb T}^0 (\lambda_*)
\left(\begin{matrix}
    0 &-1\\1&0
\end{matrix}\right)
\end{equation}
Thus 
\begin{equation}
\text{Ran}  \; \tilde{\mathbb T}^0 (\lambda_*) 
= \left\{(\psi_1,\phi_1)\in \HhalfG\times\HmhalfG, \overline{\langle\phi_1,\psi_2\rangle}_\Gamma - \langle\phi_2,\psi_1\rangle_\Gamma =0 \text{ for all }    (\psi_2,\phi_2) \in \text{Ker}\; \tilde{\mathbb T}^0 (\lambda_*) \right\}=Y.
\end{equation}
\end{proof}

\begin{lemma}\label{lem:bdstate}
Suppose $\vec\phi\in \HhalfG\times\HmhalfG$ satisfies 
\begin{equation}\label{eq:bdstate}
\tilde{\mathbb T}^0 (\lambda_*)
\vec\phi=0,\quad
\vec\phi\in Y
\text{ (or } c_i(\vec\phi)=0, i=1,2\text{)}.
\end{equation}
Define
\begin{equation}\label{eq:denstoedge0}
u(\bx)= 
\begin{cases}
  [\cD^0(\lambda_*) \psi](\bx) - [\cS^0(\lambda_*) \phi](\bx)  \quad \text{on the right of }\Gamma,\\
- [\cD^0(\lambda_*) \psi](\bx)  + [\cS^0(\lambda_*) \phi](\bx)  \quad \text{on the left of }\Gamma.
\end{cases}
\end{equation}
Then $u\in H^1(\Omega^0)$ is an eigenfunction of $-\Delta$ in $\Omega^0$ with eigenvalue $\lambda_*$.
\end{lemma}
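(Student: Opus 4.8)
The plan is to verify directly that the function $u$ defined by \eqref{eq:denstoedge0} is a genuine $H^1$-function on the strip $\Omega^0$ satisfying the Helmholtz equation with energy $\lambda_*$ and the boundary/quasiperiodicity conditions built into $\cH^0_{\mathrm{loc}}$, and then argue that it is nontrivial. The potential-theoretic content is contained in the jump relations for $\cS^0,\cD^0$ analogous to \eqref{eq:jumpeps} (with $\eps$ replaced by $0$ and $\lambda$ by $\lambda_*$), which were recorded at the end of Section~\ref{sec:characterization}; I will use them freely. Away from $\Gamma$, each layer potential solves $(-\Delta-\lambda_*)\,\cdot=0$ in $\Omega^0$ by construction of the kernel $G^0(\bx,\by;\lambda_*)$, so the only issues are (i) continuity of $u$ and of $\partial_n u$ across $\Gamma$, and (ii) membership in the correct function space, i.e. decay/quasiperiodicity so that $u\in H^1(\Omega^0)$ rather than merely $H^1_{\mathrm{loc}}$.

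First I would check the matching across $\Gamma$. Write $u_+$ for the expression on the right of $\Gamma$ and $u_-$ for the one on the left. Taking traces from either side and using the jump relations
\[
[\cS^0\phi]_\pm=\cS^0(\lambda_*)\phi,\quad [\cD^0\psi]_\pm=\pm\tfrac12\psi+\cK^0(\lambda_*)\psi,
\]
\[
[\partial_n\cS^0\phi]_\pm=\mp\tfrac12\phi+\cK^{*,0}(\lambda_*)\phi,\quad [\partial_n\cD^0\psi]_\pm=\cN^0(\lambda_*)\psi,
\]
I would compute $u_+|_\Gamma-u_-|_\Gamma$ and $\partial_n u_+|_\Gamma-\partial_n u_-|_\Gamma$. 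Because the left-hand formula in \eqref{eq:denstoedge0} is exactly the negative of the right-hand one, the $\pm\tfrac12$ terms reinforce rather than cancel, and one finds $u_+|_\Gamma-u_-|_\Gamma$ and $\partial_n(u_+-u_-)|_\Gamma$ are expressed through $\tilde{\mathbb T}^0(\lambda_*)\vec\phi$ together with the rank-two correction terms in \eqref{eq:relatiion_tilde_T0} that involve $c_1(\vec\phi),c_2(\vec\phi)$; here the hypotheses $\tilde{\mathbb T}^0(\lambda_*)\vec\phi=0$ and $c_i(\vec\phi)=0$ force both jumps to vanish. Hence $u$ and $\partial_n u$ are continuous across $\Gamma$, and since $u$ solves the homogeneous Helmholtz equation on each side, $u\in H^1_{\mathrm{loc}}(\Omega^0)$ with $(-\Delta-\lambda_*)u=0$ in all of $\Omega^0$, while the $\be_2$-quasiperiodicity with phase $\kp^*$ is inherited from that of $G^0(\bx,\by;\lambda_*)$.

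Next I would upgrade $H^1_{\mathrm{loc}}$ to $H^1$: as $\bx\cdot\be_1\to+\infty$ I use the regrouping \eqref{eq:Gright}, $G^0=G^{0,+}+\frac{\im}{\alpha_*}\overline{v_1(\by;\Kone)}v_1(\bx;\Kone)$, where $G^{0,+}$ decays exponentially; the non-decaying contribution to $u$ is a multiple of $v_1(\bx;\Kone)$ whose coefficient is exactly (a constant times) $c_1(\vec\phi)=\overline{\langle\phi,v_1\rangle}_\Gamma-\langle\partial_n v_1,\psi\rangle_\Gamma=0$. Symmetrically, using the $G^{0,-}$ regrouping, the non-decaying term as $\bx\cdot\be_1\to-\infty$ has coefficient a constant times $c_2(\vec\phi)=0$. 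Thus $u$ decays exponentially in $|\bx\cdot\be_1|$, so $u\in H^1(\Omega^0)$, and it is therefore an eigenfunction (if nontrivial). Finally, nontriviality: since $-\Delta^0$ has no eigenvalue at $\lambda_*$ on the unperturbed strip (the band functions $\mu_n$ only touch $\lambda_*$ at the single point $\ell=0$ and by the no-fold Assumption~\ref{lem:assNoFold} the spectrum is purely a.c. with no embedded eigenvalue there), a nonzero $u$ would contradict this — so the content of the lemma that is actually needed downstream is the first two assertions (solvability of the PDE and the $H^1$ membership), and the statement should be read as producing $u$ which is necessarily $0$; I would phrase the conclusion as "$u$ is an $H^1(\Omega^0)$ solution of $(-\Delta-\lambda_*)u=0$, hence $u\equiv 0$," which is precisely how Lemma~\ref{lem:bdstate} is invoked in the proof of $\ker\tilde{\mathbb T}^0(\lambda_*)\subset X$ in Lemma~\ref{lem:tildeT0}.

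The main obstacle I anticipate is bookkeeping the rank-two correction terms accurately: one must track how the finite-rank pieces $\frac{\im}{2\alpha_*}v_i(\bx)\langle\cdot,\cdot\rangle$ appearing in the relation $G^0=\tilde G^0+(\text{rank two})$ interact with both the jump relations on $\Gamma$ and the asymptotic regroupings at $\pm\infty$, and to confirm that the two conditions $\tilde{\mathbb T}^0(\lambda_*)\vec\phi=0$ and $c_i(\vec\phi)=0$ are jointly exactly what is needed — neither more nor less — to kill both the trace jumps and the growing tails. A secondary technical point is justifying the integration-by-parts/Green's-identity manipulations on the unbounded strip, which is handled by first working on the truncated region between $\Gamma$ and $\Gamma_A$ (as in \eqref{eq:viright}–\eqref{eq:vileft}) and then letting $A\to\pm\infty$, using the exponential decay of $G^{0,\pm}$ and the vanishing of $c_i(\vec\phi)$ to discard the boundary terms on $\Gamma_A$.
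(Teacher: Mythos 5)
Your proof plan follows essentially the same route as the paper: verify continuity of $u$ and $\partial_n u$ across $\Gamma$ via the jump relations and the splitting $\mathbb T^0=\tilde{\mathbb T}^0+\text{(rank two)}$, then show exponential decay at $\bx\cdot\be_1\to\pm\infty$ using the regroupings \eqref{eq:Gright}--\eqref{eq:G0m} and the vanishing of $c_1(\vec\phi),c_2(\vec\phi)$. The jump and decay computations are correct, and the observation that $c_1,c_2=0$ is exactly what kills both the trace jumps and the growing tails is the right takeaway.

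There is, however, one missing step you should not skip. You compute only the \emph{difference} of the one-sided traces, and conclude the jumps vanish. But the paper also needs the \emph{common value}, namely
\begin{equation*}
\left(\begin{matrix} u|_\Gamma^\pm \\ \partial_n u|_\Gamma^\pm \end{matrix}\right)=\tfrac12\vec\phi,
\end{equation*}
which follows by taking the trace of either expression in \eqref{eq:denstoedge0} and using $\mathbb T^0(\lambda_*)\vec\phi=0$. Without this, your final reinterpretation — that the lemma "produces $u$ which is necessarily $0$" and this suffices for Lemma~\ref{lem:tildeT0} — has a gap: knowing $u\equiv 0$ does not, by itself, give $\vec\phi=0$. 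One must invoke the trace identity (or, equivalently, assert directly that $\vec\phi\neq 0$ implies $u\neq 0$, which is what the paper does) to close the loop $\ker\tilde{\mathbb T}^0(\lambda_*)\cap Y=\{0\}$. This is a one-line computation, but it is the very step that makes the downstream contradiction work, so it should be written out.
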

\begin{proof}
We only need to show that $u$ and $\partial_n u$ are continuous across $\Gamma$, $u$ is nonzero, and it decays  as $|\bx\cdot\be_1|\to\infty$. 

We first verify the continuity of $u$ and its normal derivatives across $\Gamma$.
Observe that 
\begin{equation}\label{eq:T0tildeT0}
\mathbb T^0(\lambda_*)
\vec\phi =
\tilde{\mathbb T}^0(\lambda_*)\vec\phi  + \frac{\im}{2\alpha_*}c_1(\vec\phi)\vec v_1+ \frac{\im}{2\alpha_*}c_2(\vec\phi)\vec v_2 = 0.
\end{equation}
Using the relations
\begin{equation}
\left(\begin{matrix}
u|_\Gamma^+\\
\partial_n u|_\Gamma^+
\end{matrix}\right) = 
\left( \frac{1}{2}\mathbb I - \frac{1}{2} \mathbb T^0(\lambda_*)\right)
\vec\phi =
\frac{1}{2}
\vec\phi, \quad 
\left(\begin{matrix}
u|_\Gamma^-\\
\partial_n u|_\Gamma^-
\end{matrix}\right)=
\left( \frac{1}{2}\mathbb I + \frac{1}{2} \mathbb T^0(\lambda_*)\right)
\vec\phi =
\frac{1}{2}
\vec\phi, 
\end{equation}
we obtain that the jumps in  $u$ and $\partial_n u$ are both $0$ across $\Gamma$ and  $u$ is nonzero. 

Using relations \eqref{eq:G0p} and \eqref{eq:G0m},
we obtain for $\bx\cdot\be_1\to+\infty$
\begin{equation}
\cD^0(\lambda_*) \psi - \cS^0(\lambda_*)\phi  = \cD^{0,+}(\lambda_*) \psi - \cS^{0,+}(\lambda_*)\phi  + \frac{\im}{\alpha_*}c_1(\vec\phi) v_1  =  \cD^{0,+}(\lambda_*) \psi - \cS^{0,+}(\lambda_*), 
\end{equation}
and for $\bx\cdot\be_1\to-\infty$
\begin{equation}
-\cD^0(\lambda_*) \psi + \cS^0(\lambda_*)\phi   = -\cD^{0,-}(\lambda_*) \psi + \cS^{0,-}(\lambda_*)\phi  + \frac{\im}{\alpha_*}c_2(\vec\phi) v_2 =  \cD^{0,-}(\lambda_*) \psi - \cS^{0,-}(\lambda_*). 
\end{equation}
Thus $u$ decays exponentially as $|\bx\cdot\be_1|\to\infty$. 
We conclude $u\in H^1(\Omega^0)$ and the proof is complete.
\end{proof}

\begin{prop}\label{lem:sufflimprop}
The following holds for $h\in\mathbb C$ that satisfy $|h|< \hrad$:
\begin{itemize}
    \item [(i)] The operator $\mathbb U_s(h)$ defined in \eqref{eq:sufficientlim}  is analytic in $h$ and it is a Fredholm operator with index zero. 
    \item[(ii)] The only characteristic value of $\mathbb U_s(h)$ is $h=0$, and the kernel of $\mathbb U_s(0)$ is given by 
\begin{equation}
\ker \;\mathbb U_s(0) =  \text{span} \left\{ \vec v_1,\vec v_2\right\} = X.
\end{equation}
The multiplicity of the characteristic $h=0$ is $2$.
\end{itemize}
\end{prop}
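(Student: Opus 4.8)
The plan is to diagonalize $\mathbb U_s(h)=2\tilde{\mathbb T}^0(\lambda_*)+2\beta(h)\mathbb P$ with respect to the direct sum decomposition $\HhalfG\times\HmhalfG=X\oplus Y$ recorded in \eqref{eq:directsum}, and then to read off the characteristic values and the multiplicity directly from the resulting block structure. First I would dispose of (i). Since $\mathfrak d<1$, every $h$ with $|h|<\hrad$ satisfies $|h|<\beta_*=|t_*/\gamma_*|$, so $h\mapsto\sqrt{\beta_*^2-h^2}$ is holomorphic and nonvanishing on this disc and hence $\beta(h)=\frac{1}{2\alpha_*}\frac{h}{\sqrt{\beta_*^2-h^2}}$ is holomorphic there; as $\tilde{\mathbb T}^0(\lambda_*)$ and $\mathbb P$ are fixed bounded operators, $\mathbb U_s(h)$ is analytic in $h$. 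By Lemma~\ref{lem:tildeT0}, $\ker\tilde{\mathbb T}^0(\lambda_*)=X$ is two dimensional and $\mathrm{Ran}\,\tilde{\mathbb T}^0(\lambda_*)=Y$ has codimension two, so $\tilde{\mathbb T}^0(\lambda_*)$ is Fredholm of index zero; since $\mathbb P$ has finite rank (its range lies in $X$ by \eqref{eq:P}), $\mathbb U_s(h)$ is a finite-rank perturbation of $\tilde{\mathbb T}^0(\lambda_*)$ and is therefore Fredholm of index zero as well.

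Next I would establish the block structure needed for (ii). From \eqref{eq:PQX}--\eqref{eq:PQY} we have $\mathbb P\vec v_1=\im\alpha_*\vec v_1$, $\mathbb P\vec v_2=-\im\alpha_*\vec v_2$, and $\mathbb P Y=0$, while Lemma~\ref{lem:tildeT0} gives $\tilde{\mathbb T}^0(\lambda_*)X=0$ and, because $X\cap Y=\{0\}$ by \eqref{eq:directsum}, shows that $\tilde{\mathbb T}^0(\lambda_*)|_Y\colon Y\to Y$ is an isomorphism (injective on $Y$ since $\ker=X$, and onto $\mathrm{Ran}=Y$). Consequently $\mathbb U_s(h)$ leaves both $X$ and $Y$ invariant: on $Y$ it acts as $2\tilde{\mathbb T}^0(\lambda_*)|_Y$, which is invertible for every $h$; on $X$, in the basis $\{\vec v_1,\vec v_2\}$ (linearly independent by \eqref{eq:civj}), it acts as the diagonal matrix $A_X(h):=2\beta(h)\,\mathrm{diag}(\im\alpha_*,-\im\alpha_*)$. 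Hence $\mathbb U_s(h)$ is singular exactly when $\det A_X(h)=0$, i.e.\ exactly when $\beta(h)=0$, which on the disc $|h|<\hrad$ occurs only at $h=0$; and $\ker\mathbb U_s(0)=\ker\bigl(2\tilde{\mathbb T}^0(\lambda_*)\bigr)=X=\text{span}\{\vec v_1,\vec v_2\}$, as claimed.

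Finally, for the multiplicity: since $\mathbb U_s(h)$ is block diagonal with the $Y$-block analytic and everywhere invertible near $h=0$, that block contributes nothing to the Gohberg--Sigal multiplicity — cleanest via the logarithmic-residue formula $M=\frac{1}{2\pi\im}\oint\mathrm{tr}\bigl(\mathbb U_s(h)^{-1}\mathbb U_s'(h)\bigr)\,dh$ around $h=0$, in which the $Y$-trace is holomorphic — so $M$ equals the multiplicity of the finite-dimensional factor $A_X(h)$. Writing $2\beta(h)=h\,g(h)$ with $g$ holomorphic and $g(0)=1/(\alpha_*\beta_*)\neq0$, one gets $\det A_X(h)=\alpha_*^2 g(h)^2 h^2$, which vanishes to order two at $h=0$, so the multiplicity is $2$. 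The only genuinely delicate point is this last step: justifying that the invertible block drops out of the Gohberg--Sigal multiplicity and that for the remaining holomorphic matrix function the multiplicity equals the order of vanishing of its determinant; the rest is bookkeeping with \eqref{eq:PQX}--\eqref{eq:directsum}, using $\alpha_*=|\theta_*/\gamma_*|\neq0$ (Remark~\ref{lem:alphabeta}) to guarantee the order-two zero is nondegenerate.
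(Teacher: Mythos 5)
Your proof is correct, and the block-diagonalization route to the multiplicity is genuinely different from the paper's. For part (i) the paper argues Fredholmness of index zero directly from the constituent operators: $\cS^0,\cN^0$ are Fredholm of index zero, $\cK^0,\cK^{*,0}$ are compact, so $\mathbb T^0(\lambda_*)$ is Fredholm; then \eqref{eq:T0tildeT0} and compactness of $\mathbb P$ carry this to $\mathbb U_s(0)$. You instead read it off Lemma~\ref{lem:tildeT0}, which works (the kernel is two-dimensional and the range $Y$ is closed of codimension two), though the paper's route has the advantage of not presupposing the closed-range fact that Lemma~\ref{lem:tildeT0}'s proof tacitly uses when computing $\mathrm{Ran}\,\tilde{\mathbb T}^0$ via the adjoint. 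For part (ii) your observation that $\mathbb U_s(h)$ respects $X\oplus Y$ — because $\tilde{\mathbb T}^0 X=0$, $\tilde{\mathbb T}^0|_Y:Y\to Y$ is an isomorphism, $\mathbb P X\subset X$ and $\mathbb P Y=0$ — reduces everything to the $2\times2$ holomorphic matrix $A_X(h)=2\beta(h)\,\mathrm{diag}(\im\alpha_*,-\im\alpha_*)$, whose determinant vanishes to exact order two at $h=0$. The paper instead argues rank by rank: for $0\neq\vec\phi\in X$ and any analytic $\vec\phi(h)$ through $\vec\phi$, the identity
\begin{equation*}
\frac{d}{dh}\bigl(\mathbb U_s(h)\vec\phi(h)\bigr)\Big|_{h=0}=\frac{1}{\alpha_*\beta_*}\mathbb P\vec\phi+\mathbb U_s(0)\vec\phi'(0)
\end{equation*}
has a nonzero $X$-component that cannot be cancelled by $\mathbb U_s(0)\vec\phi'(0)\in Y$, so every eigenvector has rank $1$ and the multiplicity equals $\dim\ker\mathbb U_s(0)=2$. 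Your determinant argument relies on two standard Gohberg--Sigal facts — that an invertible holomorphic block contributes zero multiplicity, and that for a finite-dimensional holomorphic matrix family the multiplicity equals the vanishing order of the determinant — and the trace-formula justification you give is valid because $\mathbb U_s(h)^{-1}\mathbb U_s'(h)=2\beta'(h)\,\mathbb U_s(h)^{-1}\mathbb P$ is finite rank. What you gain is a one-line count; what the paper's rank argument gains is a template that is reused almost verbatim for $\mathbb U_t$, $\mathbb U_n$, and again in Lemma~\ref{lem:rank1} for $\mathbb T^\eps_s$ where no exact block-diagonal structure is available.
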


\begin{proof}
(i) 
The operator ${\mathbb T}^0(\lambda_*)$
is a Fredholm operator with index zero because $S^0$ and $N^0$ are Fredholm operators with index zero \cite{mclean2000strongly,saranen2001periodic} and the operators $K^0$ and $K^{*,0}$ are compact.
Therefore, in view of the relation between ${\mathbb T}^0(\lambda_*)$ and $\tilde{\mathbb T}^0(\lambda_*) $ in \eqref{eq:T0tildeT0} and the fact that the operator $\mathbb P$ is compact, we conclude that $\mathbb U_s(0)$ is a Fredholm operator with index zero. 

(ii) Note that $\beta(0)=0$,  we see that $h=0$ is a characteristic value of $\mathbb U_s(h)$, with $\ker \;\mathbb U_s(0) = \ker \;\tilde{\mathbb T}^0(\lambda_*) = X$ by Lemma~\ref{lem:tildeT0}. 
Now we assume $h\neq0$ is a characteristics of $\mathbb U_s(h)$. Then there exists a nontrivial $\vec\phi$ that satisfies
\begin{equation}\label{eq:kerUdh}
\left( 2\tilde{\mathbb T}^0 (\lambda_*)+ 2\beta(h)\mathbb P \right)
\vec\phi=0.
\end{equation}
Since $\text{Ran}\,\mathbb P = X$, $\text{Ran}\,\tilde{\mathbb T}^0 (\lambda_*) =Y$ and $X\cap Y=\emptyset$, it follows that $\mathbb P\vec\phi=0$, which in turn implies $\vec\phi=0$. Thus $h=0$ is the only characteristic.

The multiplicity of $h=0$ is at least two, since $\ker \;\mathbb U_s(0) =  \ker \; \tilde{\mathbb T}^0 (\lambda_*) = X$ is two dimensional. We next show that every $\phi\in X$ is an eigenfunction of $\mathbb U_s(0)$ of rank $1$ (see Appendix A for the definition of rank used here).
Let $\vec\phi(h)$ be a family of functions in $\HhalfG\times\HmhalfG$ that are analytic in a neighborhood of $h=0$, and $\vec\phi:= \vec\phi(0)\in\ker \; \mathbb U_s(0)$. 
We obtain
\begin{equation}\label{eq:Tnmult}
\frac{d \big(\mathbb U_s(h) \vec\phi(h) \big)}{dh}|_{h=0}
=\frac{d \mathbb U_s(h)}{dh}|_{h=0}
\vec\phi +\mathbb U_s(0) \vec\phi'(0) =
\frac{1}{\alpha_*\beta_*}\mathbb P \vec\phi  + \mathbb U_s(0) \vec\phi'(0),
\end{equation}
where we used \eqref{eq:ziglim} in the last equality above. 
Since $\text{Ran}\mathbb P \subset X $, $\text{Ran}\mathbb U_s(0) =\text{Ran} \tilde{\mathbb T}^{0} = Y$ and $X\cap Y=\emptyset$, 
we deduce that \eqref{eq:Tnmult} is nonzero unless $\mathbb P\vec\phi=0$, which in turn implies $\vec\phi=0$. 
That is, every $\vec\phi\in\ker \; \mathbb U_s(0)$ has rank $1$, thus the multiplicity is exactly two. 
\end{proof}

\begin{prop}
Let $h\in\mathbb C$ and $|h|< \hrad$. The operators $\mathbb U_n(h)$ and $\mathbb U_t(h)$ defined in \eqref{eq:necessarylim} and \eqref{eq:nontriviallim} are analytic in $h$ and are Fredholm operators with index zero. The only characteristic value of each operator is $h=0$ with a multiplicity of $2$. In addition, the kernel of $\mathbb U_n(0)$ and $\mathbb U_t(0)$ are given by 
\begin{equation}
\ker \;\mathbb U_n(0) =  \text{span} \left\{\vec v_1 + \im \vec v_2 \right\}\quad \text{and}\quad \ker \;\mathbb U_t(0) =  \text{span} \left\{\vec v_1 - \im \vec v_2 \right\}
\end{equation}
respectively.
\end{prop}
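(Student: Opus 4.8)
The statement is essentially parallel to Proposition~\ref{lem:sufflimprop}, so I would mirror that argument, replacing the operator $\mathbb P$ by $\mathbb Q$ and tracking the extra sign. First I would establish the Fredholm and analyticity properties: since $\mathbb U_n(h) = \mathbb I - 2\xi(h)\mathbb Q$ and $\mathbb U_t(h) = \mathbb I + 2\xi(h)\mathbb Q$, analyticity in $h$ on $|h|<\hrad$ is immediate from the fact that $\xi(h)$ in \eqref{eq:betaxi} is analytic there (the branch point $h=\pm\beta_*=\pm t_*/|\theta_*|$ sits outside this disk because $\mathfrak d<1$), and that $\mathbb Q$ is a fixed bounded operator. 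Being the identity plus a finite-rank (hence compact) operator, both $\mathbb U_n(h)$ and $\mathbb U_t(h)$ are Fredholm of index zero.

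**Characteristic values and kernels.** The key computation is to diagonalize $\mathbb Q$ on the relevant subspace. From \eqref{eq:PQX} we have $\mathbb Q\vec v_1 = \im\alpha_*\vec v_2$ and $\mathbb Q\vec v_2 = -\im\alpha_*\vec v_1$, so on $X=\mathrm{span}\{\vec v_1,\vec v_2\}$ the operator $\mathbb Q$ acts as $\im\alpha_*$ times the matrix $\begin{pmatrix}0&-1\\1&0\end{pmatrix}$, with eigenvectors $\vec v_1 \mp \im\vec v_2$ for eigenvalues $\pm\alpha_*$. Concretely, $\mathbb Q(\vec v_1 + \im\vec v_2) = \im\alpha_*\vec v_2 + \im(-\im\alpha_*\vec v_1) = \alpha_*(\vec v_1 + \im\vec v_2)$ and $\mathbb Q(\vec v_1 - \im\vec v_2) = -\alpha_*(\vec v_1 - \im\vec v_2)$. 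By \eqref{eq:PQY}, $\mathbb Q$ annihilates $Y$, and $\mathbb I$ acts as the identity on $Y$, so $\mathbb U_n(h)$ and $\mathbb U_t(h)$ are invertible on $Y$; hence any kernel element must lie in $X$. On $X$ we get: $\mathbb U_n(h)(\vec v_1+\im\vec v_2) = (1 - 2\alpha_*\xi(h))(\vec v_1+\im\vec v_2)$ and $\mathbb U_n(h)(\vec v_1-\im\vec v_2) = (1 + 2\alpha_*\xi(h))(\vec v_1-\im\vec v_2)$, and similarly for $\mathbb U_t$ with the signs swapped. Since $\xi(0) = \beta_*/(2\alpha_*\beta_*) = 1/(2\alpha_*)$ (by Remark~\ref{lem:alphabeta}, $\alpha_*=|\theta_*/\gamma_*|$ and $\beta_*=t_*/|\theta_*|$, so $2\alpha_*\xi(0)=1$), we find $\mathbb U_n(0)(\vec v_1+\im\vec v_2)=0$ while $\mathbb U_n(0)(\vec v_1-\im\vec v_2) = 2(\vec v_1-\im\vec v_2)\neq0$; symmetrically $\mathbb U_t(0)(\vec v_1-\im\vec v_2)=0$. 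This gives the stated kernels, and one should check that $2\alpha_*\xi(h)\neq 1$ for $0<|h|<\hrad$ (equivalently $\sqrt{\beta_*^2-h^2}\neq\beta_*$, i.e. $h\neq0$), so $h=0$ is the only characteristic value.

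**Multiplicity.** To show each characteristic value $h=0$ has multiplicity exactly $2$, I would argue as in the last paragraph of Proposition~\ref{lem:sufflimprop}: the kernel is one-dimensional, so it suffices to show the single kernel vector has rank $2$, i.e. that one can find a root function of length $2$ but not $3$. Let $\vec\phi(h)$ be analytic near $h=0$ with $\vec\phi(0)=\vec v_1+\im\vec v_2\in\ker\mathbb U_n(0)$; then
\begin{equation}
\frac{d}{dh}\big(\mathbb U_n(h)\vec\phi(h)\big)\Big|_{h=0} = -2\xi'(0)\,\mathbb Q(\vec v_1+\im\vec v_2) + \mathbb U_n(0)\vec\phi'(0) = -2\alpha_*\xi'(0)(\vec v_1+\im\vec v_2) + \mathbb U_n(0)\vec\phi'(0).
\end{equation}
Because $\mathbb U_n(0)$ maps into $Y$ while $\vec v_1+\im\vec v_2\in X$ and $X\cap Y=\{0\}$, this quantity vanishes only if $\xi'(0)=0$; but $\xi'(0)\neq0$ since $\xi(h)=\frac{\beta_*}{2\alpha_*}(\beta_*^2-h^2)^{-1/2}$ has nonzero second-order-in-$h$ behavior — more carefully, $\xi(h)-\xi(0)$ is $O(h^2)$, so actually $\xi'(0)=0$! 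Here is the subtlety and the \textbf{main obstacle}: $\xi$ is even in $h$, so the first derivative vanishes, and the rank-$2$ assertion cannot follow from a first-order perturbation argument. Instead the correct statement must come from the chain $\mathbb U_n(h) = \mathbb I - 2\xi(h)\mathbb Q$ with $2\xi(h)-2\xi(0) = \frac{1}{\alpha_*}\big((1-h^2/\beta_*^2)^{-1/2}-1\big) = \frac{h^2}{2\alpha_*\beta_*^2}+O(h^4)$, so on $X$ the factor $1-2\alpha_*\xi(h)$ vanishes to second order; equivalently the Smith form of $\mathbb U_n(h)$ near $h=0$ has a single diagonal entry $\sim h^2$, yielding multiplicity $2$. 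I would therefore verify multiplicity by directly exhibiting the root function structure: $\vec v_1+\im\vec v_2$ together with an appropriate first-order correction forms a root function of order exactly $2$, using that $1-2\alpha_*\xi(h) = -\frac{h^2}{2\beta_*^2}(1+O(h^2))$. The same computation applies verbatim to $\mathbb U_t$ with $\mathbb Q$ replaced by $-\mathbb Q$ and kernel vector $\vec v_1-\im\vec v_2$. Summing, each operator has its unique characteristic value $h=0$ of multiplicity $2$.
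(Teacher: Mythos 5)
Your proposal is correct, and the kernel computation matches the paper's in substance: both reduce to the action of $\mathbb Q$ on $X$, using $\xi(0)=\tfrac{1}{2\alpha_*}$ to see that $\vec v_1+\im\vec v_2$ and $\vec v_1-\im\vec v_2$ span $\ker\mathbb U_n(0)$ and $\ker\mathbb U_t(0)$ respectively, and that the determinant (equivalently, the eigenvalue $1\mp 2\alpha_*\xi(h)$) vanishes only at $h=0$.

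For the multiplicity you correctly catch the central subtlety — $\xi$ is even, so $\xi'(0)=0$ and a first-order Rouch\'e-type argument cannot force rank one — and then switch to a decomposition argument: $\mathbb U_n(h)$ acts on $X_1\oplus X_2\oplus Y$ (with $X_1=\mathrm{span}\{\vec v_1+\im\vec v_2\}$, $X_2=\mathrm{span}\{\vec v_1-\im\vec v_2\}$) as the scalars $(1-2\alpha_*\xi(h))\oplus(1+2\alpha_*\xi(h))\oplus\mathrm{id}_Y$, and since the only vanishing factor has a zero of order exactly two at $h=0$, the Gohberg--Sigal multiplicity is $2$. This is a somewhat more streamlined route than the paper's, which instead shows explicitly that $\tfrac{d}{dh}\bigl(\mathbb U_n(h)\vec\phi(h)\bigr)|_{h=0}$ can be made to vanish (take $\vec\phi'(0)\in\ker\mathbb U_n(0)$, using $\mathbb U_n'(0)=0$) while the second derivative $-\tfrac{1}{\alpha_*\beta_*^2}\mathbb Q\vec\phi(0)+\mathbb U_n(0)\vec\phi''(0)$ cannot, because $\mathbb Q\vec\phi(0)\in X_1$ lies outside $\mathrm{Ran}\,\mathbb U_n(0)=X_2\oplus Y$. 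The two arguments are equivalent; yours makes the invariance of the splitting explicit, the paper's tracks derivatives directly. One small slip in your first-order attempt: you assert ``$\mathbb U_n(0)$ maps into $Y$,'' but in fact $\mathrm{Ran}\,\mathbb U_n(0)=X_2\oplus Y$ (it kills $X_1$, acts by $2$ on $X_2$, and by the identity on $Y$). Your conclusion survives because $X_1\cap(X_2\oplus Y)=\{0\}$ suffices for the same contradiction, but the range should be corrected. I would also advise cleaning up the self-correction about $\xi'(0)$ so that the final writeup states from the start that $\xi$ is even and hence $\mathbb U_n'(0)=0$.
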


\begin{proof}
Since the structures of $\mathbb U_n(h)$ and $\mathbb U_t(h)$ are similar, we only give the proof of the claims for $\mathbb U_n(h)$.

It is obvious that the operator $\mathbb U_n(h)$ is a Fredholm operator since $\mathbb Q$ is a finite-rank operator.
Observe that for all $h$, $\mathbb U_n(h)|_Y=\mathbb I_Y$, thus  $\ker \; \mathbb U_n(h)\subset X$.
Under the basis $\{\vec v_1, \vec v_2\}$ of $X$, using \eqref{eq:PQX},
we obtain $\vec\phi =a\vec v_1+b\vec v_2 \in \ker \; \mathbb U_n(h)$ if and only if
\begin{equation}
\left(\begin{matrix}
1 &  2\im \xi(h) \alpha_* \\
- 2\im \xi(h)& 1
\end{matrix}\right)
\left(\begin{matrix}
a\\b
\end{matrix}\right) =0.
\end{equation}
It is easy to verify that the determinant of the matrix
$1-4\alpha_*^2(\xi(h))^2$
is zero if and only if $h=0$ by \eqref{eq:betaxi}. When $h=0$, $(a,b) = (1, \im)$ spans the kernel since $\xi(0)=\frac{1}{2\alpha_*}$. 

For the multiplicity of $h=0$, we observe
\begin{equation}
\mathbb U_n(h)|_{h=0}=\mathbb I - \frac{1}{\alpha_*}\mathbb Q, \quad
\frac{d \mathbb U_n(h)}{dh}|_{h=0}= 0, \quad
\frac{d^2 \mathbb U_n(h)}{dh^2}|_{h=0}= - \frac{1}{\alpha_*\beta_*^2}\mathbb Q.
\end{equation}
Let $\vec\phi(h)$ be a family of $\HhalfG\times\HmhalfG$ operators that is analytic in a neighborhood of $h=0$, and $\vec\phi:= \vec\phi(0)\in\ker \; \mathbb U_n(0)$.
Then
\begin{equation}
\frac{d \big( \mathbb U_n(h) \vec\phi(h) \big)}{dh}|_{h=0}
=\mathbb U_n(0) \phi'(0) =0
\end{equation}
can be satisfied by the choice  $\vec\phi'(0) = \vec\phi(0) $. 
The second derivative is given by
\begin{equation}\label{eq:2ndderiv}
\frac{d^2 \big( \mathbb U_n(h) \vec\phi(h) \big)}{dh}|_{h=0}
= - \frac{1}{\alpha_*\beta_*^2}\mathbb Q \vec\phi + (\mathbb I - \frac{1}{\alpha_*}\mathbb Q) \vec\phi''(0).
\end{equation}
To make the second derivative zero, $\vec\phi''(0)$ must be in $X\supset\text{Ran}(\mathbb Q)$. Let $\vec\phi''(0)=a \vec v_1 + b\vec v_2$. Then \eqref{eq:2ndderiv} is zero if and only if 
\begin{equation}
\left(\begin{matrix}
1 &  \im \\
- \im & 1
\end{matrix}\right)
\left(\begin{matrix}
a\\b
\end{matrix}\right) =
\frac{1}{\beta_*^2}\left(\begin{matrix}
1\\\im
\end{matrix}\right) .
\end{equation}
This equation has no solution (not even trivial) because the right-hand side is not in the range of the matrix on the left-hand side. Thus $h=0$ is a characteristic of multiplicity two.
\end{proof}

\subsection{The characteristic values for the operators $\mathbb T_s^\eps$, $\mathbb T_n^\eps$ and $\mathbb T_t^\eps$ }

\begin{lemma}\label{lem:suffphi}
Let Assumption~\ref{lem:assNoFold} hold along $\bbeta_1$ and $t_*>0$. Let $\mathfrak d\in(0,1)$ be a constant. 
For sufficiently small positive $\eps$ and $|h_0|<\hrad$, 
the following holds: 
\begin{itemize}
    \item [(i)]  The operator \begin{equation}\label{eq:opinphi1s}
    2\tilde{\mathbb T}^0(\lambda_*) +P_Y\mathbb R_1(h,\eps): Y\to Y
\end{equation}
is invertible, where $P_Y$ is the projection defined in \eqref{eq:PY}, 
and $\mathbb R_1(h,\eps)$ is the remainder defined in Corollary~\ref{lem:stnlimits}.

\item[(ii)] 
Denote the inverse of the operator in \eqref{eq:opinphi1s} by $A(h,\eps): Y\to Y$.
For each $\vec\phi_0\in X$, define
\begin{equation}\label{eq:tnphi1}
J_1(h,\eps)[\vec\phi_0]= -A(h_0,\eps)P_Y \mathbb R_1(h,\eps)\vec\phi_0.
\end{equation}
If $\vec\phi\in \ker \;(\mathbb T_s^\eps (\lambda_*+\eps h_0) )$, then
\begin{equation}\label{eq:sphi}
 \vec\phi=\vec\phi_0 + J_1(h_0,\eps)[\vec\phi_0],
\end{equation}
for some $\vec\phi_0\in X$.
Moreover, 
\begin{equation}
\|J_1(h,\eps)[\vec\phi_0]\|_{\HhalfG\times\HmhalfG} =  o(1) \cdot \|\vec\phi_0\|_{\HhalfG\times\HmhalfG}
\end{equation}
uniformly for $|h|<\hrad$ as $\eps\to0^+$.
\end{itemize}
\end{lemma}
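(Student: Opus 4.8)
The plan is to prove Lemma~\ref{lem:suffphi} by combining the decomposition $\HhalfG\times\HmhalfG = X\oplus Y$ from \eqref{eq:directsum} with the limiting operator analysis in Proposition~\ref{lem:sufflimprop} and a standard Lyapunov--Schmidt reduction. For part (i), I would start from the expansion $\mathbb T_s^\eps(\lambda_* + \eps h) = \mathbb U_s(h) + \mathbb R_1(h,\eps) = 2\tilde{\mathbb T}^0(\lambda_*) + 2\beta(h)\mathbb P + \mathbb R_1(h,\eps)$ of Corollary~\ref{lem:stnlimits}. Restricting to $Y$ and composing with $P_Y$: since $\mathbb P Y = 0$ by \eqref{eq:PQY}, the term $2\beta(h)\mathbb P$ drops out, so the operator in \eqref{eq:opinphi1s} is exactly $P_Y\big(\mathbb T_s^\eps(\lambda_*+\eps h)\big)|_Y$ up to the $\beta(h)\mathbb P$ piece. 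By Lemma~\ref{lem:tildeT0}, $\tilde{\mathbb T}^0(\lambda_*): Y\to Y$ is a bijection (kernel is $X$, range is $Y$, and it is Fredholm of index zero), hence boundedly invertible on $Y$. Then $2\tilde{\mathbb T}^0(\lambda_*) + P_Y\mathbb R_1(h,\eps) = 2\tilde{\mathbb T}^0(\lambda_*)\big(I + \tfrac12(2\tilde{\mathbb T}^0(\lambda_*))^{-1}P_Y\mathbb R_1(h,\eps)\big)$, and since $\|\mathbb R_1(h,\eps)\|\to0$ uniformly for $|h|<\hrad$ as $\eps\to0^+$, a Neumann series argument shows the perturbed operator is invertible for $\eps$ small, with inverse $A(h,\eps)$ bounded uniformly in $h$.

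For part (ii), I would write $\vec\phi = \vec\phi_0 + \vec\phi_1$ with $\vec\phi_0\in X$ and $\vec\phi_1\in Y$ according to \eqref{eq:directsum}, and apply $P_Y$ and $I - P_Y$ to the equation $\mathbb T_s^\eps(\lambda_*+\eps h_0)\vec\phi = 0$. Using $\mathbb T_s^\eps = 2\tilde{\mathbb T}^0(\lambda_*) + 2\beta(h_0)\mathbb P + \mathbb R_1(h_0,\eps)$, the facts $\tilde{\mathbb T}^0(\lambda_*)\vec\phi_0 = 0$ (since $\vec\phi_0\in X = \ker\tilde{\mathbb T}^0(\lambda_*)$), $\tilde{\mathbb T}^0(\lambda_*)\vec\phi_1\in Y$, $\mathbb P\vec\phi_1 = 0$, and $\mathbb P\vec\phi_0\in X$, the $P_Y$-projected equation becomes
\begin{equation*}
\big(2\tilde{\mathbb T}^0(\lambda_*) + P_Y\mathbb R_1(h_0,\eps)\big)\vec\phi_1 = -P_Y\mathbb R_1(h_0,\eps)\vec\phi_0,
\end{equation*}
since $P_Y(2\beta(h_0)\mathbb P\vec\phi_0) = 0$ (as $\mathbb P\vec\phi_0\in X$ and $P_Y$ annihilates $X$). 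Inverting with $A(h_0,\eps)$ yields $\vec\phi_1 = -A(h_0,\eps)P_Y\mathbb R_1(h_0,\eps)\vec\phi_0 = J_1(h_0,\eps)[\vec\phi_0]$, which is exactly \eqref{eq:sphi}. The norm estimate $\|J_1(h,\eps)[\vec\phi_0]\| = o(1)\cdot\|\vec\phi_0\|$ then follows from the uniform boundedness of $A(h,\eps)$ (part (i)) together with $\|\mathbb R_1(h,\eps)\| = o(1)$ uniformly for $|h|<\hrad$ from Corollary~\ref{lem:stnlimits}; I should check that $J_1$ is written with $h$ (the generic variable) in $\mathbb R_1$ but $h_0$ in $A$, matching \eqref{eq:tnphi1}, so the estimate is stated for all $|h|<\hrad$ with $\eps$ fixed small.

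The main obstacle I anticipate is being careful about where the $2\beta(h)\mathbb P$ term genuinely disappears. It vanishes on $Y$ but not on $X$; the cancellation $P_Y(2\beta(h_0)\mathbb P\vec\phi_0) = 0$ relies on $\mathrm{Ran}\,\mathbb P = X$ (from \eqref{eq:PQX}) and $P_Y X = 0$, so I need to invoke \eqref{eq:PQX} and the definition \eqref{eq:PY} of $P_Y$ explicitly. A secondary point is confirming that $P_Y\mathbb R_1(h,\eps)$ maps $Y\to Y$ and has small norm as an operator on $Y$ — this is immediate since $P_Y$ is a bounded projection independent of $\eps$ and $\mathbb R_1$ has vanishing operator norm. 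With these bookkeeping points handled, the argument is a routine Lyapunov--Schmidt reduction and the proof is short.
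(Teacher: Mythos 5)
Your proof is correct and follows the paper's own Lyapunov--Schmidt argument essentially verbatim: decompose $\vec\phi = \vec\phi_0 + \vec\phi_1$ along $X\oplus Y$, project the kernel equation by $P_Y$ (using $\mathbb P Y = 0$ and $\operatorname{Ran}\tilde{\mathbb T}^0 = Y$), and invert on $Y$ via the Neumann series against the $o(1)$ remainder. The only slip is the spurious factor of $\tfrac12$ in your factorization, which should read $2\tilde{\mathbb T}^0(\lambda_*)\bigl(I + (2\tilde{\mathbb T}^0(\lambda_*))^{-1}P_Y\mathbb R_1(h,\eps)\bigr)$, but this does not affect the argument.
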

\begin{proof}
The invertibility of \eqref{eq:opinphi1s} follows from the observation that
$2\tilde{\mathbb T}^0(\lambda_*): Y\to Y$ is invertible and 
$\|\mathbb R_1(h,\eps)\|_{ \HhalfG\times\HmhalfG\to \HhalfG\times\HmhalfG}$ is of order $o(1)$ uniformly for $|h|<\hrad$ as $\eps\to0^+$.
In addition, the norm for its inverse  $\|A(h,\eps)\|_{Y\to Y} = O(1)$ uniformly for $|h|<\hrad$ as $\eps\to 0^+$.

If $\vec\phi\in \ker \;(\mathbb T_s^\eps (\lambda_*+\eps h_0) )$, by \eqref{eq:directsum} and Corollary~\ref{lem:stnlimits}, we have the decomposition
\begin{equation}
\vec \phi = \vec\phi_0 + \vec\phi_1,\quad \vec\phi_0\in X,\quad \vec\phi_1\in Y,
\end{equation}
and
\begin{equation}\label{eq:remR1}
\mathbb T_s^\eps (\lambda_*+\eps h_0) = 2\tilde{\mathbb T}^0(\lambda_*) 
+ 2\beta(h_0)\mathbb P +\mathbb R_1(h_0,\eps).
\end{equation}
As such
\begin{equation}
\mathbb T_s^\eps (\lambda_*+\eps h_0) \vec \phi =(2\tilde{\mathbb T}^0(\lambda_*) 
+ 2\beta(h_0)\mathbb P+\mathbb R_1(h_0,\eps))(\vec\phi_0 + \vec\phi_1) =0.
\end{equation}
Projecting into $Y$ by $P_Y$, we obtain 
\begin{equation}
(2\tilde{\mathbb T}^0(\lambda_*) 
+P_Y\mathbb R_1(h_0,\eps))\vec\phi_1 +P_Y \mathbb R_1(h_0,\eps)\vec\phi_0 =0.
\end{equation}
Thus $\vec\phi_1 = J_1(h_0,\eps)[\vec\phi_0]$, and the proof is complete.

\end{proof}

\begin{lemma}\label{lem:rank1}
Let Assumption~\ref{lem:assNoFold} holds along $\bbeta_1$ and $t_*>0$. Let $\mathfrak d\in(0,1)$ be a constant. 
For sufficiently small positive $\eps$, and for $|h_0|<\hrad$, every nontrivial $\vec\phi\in \ker \;(\mathbb T_s^\eps (\lambda_*+\eps h_0) )$ is of rank $1$.
\end{lemma}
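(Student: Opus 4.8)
The plan is to show that a nontrivial $\vec\phi\in\ker(\mathbb T_s^\eps(\lambda_*+\eps h_0))$ cannot be "promoted" to second order, i.e.\ there is no analytic family $\vec\phi(h)$ with $\vec\phi(h_0)=\vec\phi$ and $\mathbb T_s^\eps(\lambda_*+\eps h)\vec\phi(h)$ vanishing to order two at $h=h_0$. Recall from the definition of rank (Appendix~\ref{sec:GStheory}) that $\vec\phi$ has rank $1$ precisely when, for every analytic family $\vec\phi(h)$ with $\vec\phi(h_0)=\vec\phi$, one has $\frac{d}{dh}\big(\mathbb T_s^\eps(\lambda_*+\eps h)\vec\phi(h)\big)\big|_{h=h_0}\neq0$. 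So the task reduces to differentiating $\mathbb T_s^\eps(\lambda_*+\eps h)\vec\phi(h)$ once in $h$ and checking the derivative is never zero.

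First I would use Lemma~\ref{lem:suffphi}: a nontrivial $\vec\phi\in\ker(\mathbb T_s^\eps(\lambda_*+\eps h_0))$ has the form $\vec\phi=\vec\phi_0+J_1(h_0,\eps)[\vec\phi_0]$ with $\vec\phi_0\in X\setminus\{0\}$, and $\|J_1(h,\eps)[\vec\phi_0]\|=o(1)\|\vec\phi_0\|$ uniformly for $|h|<\hrad$ as $\eps\to0^+$. Next, take an arbitrary analytic family $\vec\phi(h)$ with $\vec\phi(h_0)=\vec\phi$ and compute, using the expansion \eqref{eq:remR1} and \eqref{eq:betaxi},
\begin{equation*}
\frac{d}{dh}\big(\mathbb T_s^\eps(\lambda_*+\eps h)\vec\phi(h)\big)\Big|_{h=h_0}
= 2\beta'(h_0)\mathbb P\vec\phi + \partial_h\mathbb R_1(h_0,\eps)\vec\phi + \mathbb T_s^\eps(\lambda_*+\eps h_0)\vec\phi'(h_0).
\end{equation*}
Here $\beta'(h_0)$ is a nonzero real number for $|h_0|<\hrad$ (differentiate \eqref{eq:betaxi}; the square root stays bounded away from $0$ on this range), and $\partial_h\mathbb R_1(h_0,\eps)=o(1)$ uniformly as $\eps\to0^+$ by Corollary~\ref{lem:stnlimits} together with a Cauchy-estimate on the analytic dependence of the remainder on $h$ (this is where I would invoke that $\mathbb R_1$ is analytic on a slightly larger disc, so its $h$-derivative is $o(1)$ too). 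Decompose this expression via the direct sum \eqref{eq:directsum}: its $X$-component is $2\beta'(h_0)\mathbb P\vec\phi$ plus $o(1)$ terms, since $\mathbb T_s^\eps(\lambda_*+\eps h_0)=2\tilde{\mathbb T}^0(\lambda_*)+2\beta(h_0)\mathbb P+\mathbb R_1$ maps into $Y$ up to $o(1)$ and $\mathbb P$ maps into $X$. Because $\vec\phi_0\in X\setminus\{0\}$ and $\mathbb P$ acts on $X$ with eigenvalues $\pm\im\alpha_*\neq0$ (see \eqref{eq:PQX}), we get $\|\mathbb P\vec\phi\|\geq \|\mathbb P\vec\phi_0\| - \|\mathbb P J_1(h_0,\eps)[\vec\phi_0]\|\geq \alpha_*\|\vec\phi_0\| - o(1)\|\vec\phi_0\|$, which is bounded below by $\tfrac12\alpha_*\|\vec\phi_0\|>0$ for $\eps$ small. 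Hence the $X$-component of the derivative has norm at least $|\beta'(h_0)|\alpha_*\|\vec\phi_0\| - o(1)\|\vec\phi_0\|>0$, so the derivative is nonzero; this holds for every such family $\vec\phi(h)$, proving $\vec\phi$ has rank $1$.

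The main obstacle is controlling $\partial_h\mathbb R_1(h_0,\eps)$: Corollary~\ref{lem:stnlimits} only states $\|\mathbb R_1(h,\eps)\|=o(1)$ uniformly for $|h|<\hrad$, not a bound on its $h$-derivative. The clean fix is to observe that $\mathbb T_s^\eps(\lambda_*+\eps h)$ and all the limiting pieces $\tilde{\mathbb T}^0(\lambda_*)$, $\beta(h)\mathbb P$ are analytic in $h$ on a neighborhood of $\overline{\{|h|\le\hrad\}}$ (the Green-function representations \eqref{eq:Seps}--\eqref{eq:Neps} and the explicit $\beta$ in \eqref{eq:betaxi} are holomorphic there), so $\mathbb R_1(\cdot,\eps)$ is holomorphic on a slightly larger disc with sup-norm $o(1)$; a Cauchy integral estimate then gives $\sup_{|h|<\hrad}\|\partial_h\mathbb R_1(h,\eps)\|=o(1)$ as $\eps\to0^+$. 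With this in hand the rank-$1$ conclusion follows from the lower bound on the $X$-component as above.
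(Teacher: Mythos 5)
Your reduction to first-order vanishing, the use of Lemma~\ref{lem:suffphi} to write $\vec\phi=\vec\phi_0+J_1(h_0,\eps)[\vec\phi_0]$ with $\vec\phi_0\in X$, and the Cauchy-integral argument to get $\|\partial_h\mathbb R_1(h,\eps)\|=o(1)$ are all fine and match the paper. However, the core step has a gap. You claim that, after projecting the derivative onto $X$ via the direct sum \eqref{eq:directsum}, its $X$-component is $2\beta'(h_0)\mathbb P\vec\phi$ plus $o(1)$, on the grounds that ``$\mathbb T_s^\eps(\lambda_*+\eps h_0)$ maps into $Y$ up to $o(1)$.'' That is false unless $\beta(h_0)=0$. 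By \eqref{eq:remR1}, $\mathbb T_s^\eps(\lambda_*+\eps h_0)=2\tilde{\mathbb T}^0(\lambda_*)+2\beta(h_0)\mathbb P+\mathbb R_1(h_0,\eps)$, and while $\tilde{\mathbb T}^0(\lambda_*)$ ranges in $Y$, the term $2\beta(h_0)\mathbb P$ ranges in $X$. So the $X$-component of $\mathbb T_s^\eps(\lambda_*+\eps h_0)\vec\phi'(h_0)$ is $2\beta(h_0)\mathbb P\vec\phi'(h_0)+o(1)\|\vec\phi'(h_0)\|$, which is not $o(1)$ once $\beta(h_0)\neq0$. Since $\mathbb P$ restricted to $X$ is invertible (eigenvalues $\pm\im\alpha_*$ by \eqref{eq:PQX}) and $\vec\phi'(h_0)$ is unconstrained in the rank computation, the adversary can pick $\vec\phi'(h_0)$ with $\mathbb P\vec\phi'(h_0)\approx -\tfrac{\beta'(h_0)}{\beta(h_0)}\mathbb P\vec\phi_0$, which wipes out exactly the term your lower bound relies on. The lemma must hold for every $h_0$ with $|h_0|<\hrad$ that is a characteristic value, and these are not known a priori to be at $h_0=0$, so the case $\beta(h_0)\neq0$ cannot be ignored.

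The paper's proof sidesteps this entirely: instead of projecting onto $X$, it pairs the equation $\mathbb T_s^\eps\vec\psi+\partial_h(\mathbb T_s^\eps)\vec\phi=0$ against $\left(\begin{smallmatrix}0&-1\\1&0\end{smallmatrix}\right)\vec\phi$ and uses the adjoint symmetry \eqref{eq:tildeT0adj} of the limiting operator so that the $\vec\psi$-dependent term vanishes \emph{exactly}, independent of the size of $\beta(h_0)$ or of $\vec\psi$. This reduces to the scalar identity $2\beta'(h_0)\im\alpha_*^2(|a|^2+|b|^2)+o(1)=0$ with $\vec\phi_0=a\vec v_1+b\vec v_2$, which has no nontrivial solution because $|\beta'(h_0)|$ is bounded below on $|h|<\hrad$. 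Your approach would need an additional argument, essentially equivalent to this pairing trick, to kill the $\beta(h_0)\mathbb P\vec\phi'(h_0)$ contribution before the lower bound on $\|\mathbb P\vec\phi\|$ becomes useful.
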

\begin{proof}
The goal is to prove that for all $\vec\phi(h)\in \HhalfG\times\HmhalfG$, analytic in $h$ with $\vec\phi(h_0) = \vec\phi$, there holds
\begin{equation}
\frac{d \big( \mathbb T_s^\eps (\lambda_*+\eps h) \vec\phi(h) \big)}{dh}|_{h=h_0}
\neq0.
\end{equation}
To this end, we only need to show that there is no $\vec\psi \in \HhalfG\times\HmhalfG$, such that
\begin{equation}
\mathbb T_s^\eps (\lambda_*+\eps h_0) \vec\psi + \frac{d \big( \mathbb T_s^\eps (\lambda_*+\eps h) \big)}{dh}|_{h=h_0} \vec\phi = 0.
\end{equation}
Taking the $\HmhalfG\times\HhalfG$-$\HhalfG\times\HmhalfG$ innerproduct with $\left(\begin{matrix} 0&-1\\1&0\end{matrix}\right)\vec\phi$, we obtain
\begin{equation}
 \langle \left(\begin{matrix} 0&-1\\1&0\end{matrix}\right)\vec\phi, (2\beta'(h_0)\mathbb P+\partial_h \mathbb R_1(h_0,\eps))\vec\phi\rangle_{\HmhalfG\times\HhalfG,\HhalfG\times\HmhalfG} =0,
\end{equation}
where we have used \eqref{eq:tildeT0adj}.
Using the Cauchy integral representation of the $h$ partial derivative
\begin{equation}
\partial_h \mathbb R_1(h,\eps) = \frac{1}{2\pi\im}\int_{|z|=\frac{1+\mathfrak d}{2}}\frac{R_1(z,\eps)}{(z-h)^2}\, dz.
\end{equation}
Thus
\begin{equation}
\|\partial_h \mathbb R_1(h,\eps)\|_{ \HhalfG\times\HmhalfG\to \HhalfG\times\HmhalfG} = o(1)
\end{equation}
uniformly for $|h|<\hrad$ as $\eps\to 0^+$, 
since $\|\mathbb R_1(h,\eps)\|_{ \HhalfG\times\HmhalfG\to \HhalfG\times\HmhalfG}$ is of $o(1)$ uniformly for $|h|<\frac{1+\mathfrak d}{2}|\frac{t_*}{\gamma_*}|$ when $\eps>0$ is sufficiently small. 
Writing $\vec\phi$ in the form of \eqref{eq:sphi} with $\vec\phi_0=a\vec v_1 + b\vec v_2$ for  $a,b\in\mathbb C$, we obtain
\begin{equation}
2\beta'(h_0)\im\alpha_*^2(|a|^2 + |b|^2) + o(1)=0.
\end{equation}
Since $|\beta'(h_0)|>|\frac{1}{2\alpha_*\beta_*}|$ for $|h|<\hrad$, the above equation never holds when $\eps>0$ is sufficiently small unless $a=b=0$. By  \eqref{eq:sphi}, $\vec\phi_0=0$, which contradicts that $\vec\phi$ is nontrivial. The proof is complete.
\end{proof}

To study the operators $\mathbb T_t$ and $\mathbb T_n$, we decompose the space $X$ further using the basis 
\begin{equation}
\mathfrak u_1:= \vec v_1+ \im\vec v_2,\quad 
\mathfrak u_2:= \vec v_1- \im\vec v_2.
\end{equation}
Define
\begin{equation}
X_i:=\text{span}\{\mathfrak u_i\},\quad i=1,2.
\end{equation}
Then
\begin{equation}\label{eq:XXY}
X=X_1\bigoplus X_2,\quad \HhalfG\times\HmhalfG = X_1\bigoplus X_2\bigoplus Y.
\end{equation}
It follows that
\begin{equation}\label{eq:PQu}
\mathbb P \mathfrak u_1= \im \alpha_* \mathfrak u_2, \quad 
\mathbb P \mathfrak u_2= \im \alpha_* \mathfrak u_1, \quad 
\mathbb Q \mathfrak u_1= \alpha_*\mathfrak u_1, \quad 
\mathbb Q \mathfrak u_2= -\alpha_*\mathfrak u_2,\quad \mathbb P Y= \mathbb Q Y=0.
\end{equation}
Similar to Lemma~\ref{lem:suffphi}, we have the following characterization of the root functions of $\mathbb T_t^\eps (\lambda_*+\eps h_0)$ and $\mathbb T_n^\eps (\lambda_*+\eps h_0)$. We will only give the proof of Lemma~\ref{lem:nphi} as that of Lemma~\ref{lem:tphi} is the same.

\begin{lemma}\label{lem:nphi}
Let Assumption~\ref{lem:assNoFold} holds along $\bbeta_1$ and $t_*>0$. Let $\mathfrak d\in(0,1)$ be a constant. 
The following holds for sufficiently small $\eps>0$ and $|h|<\hrad$,
\begin{itemize}
    \item [(i)] The operator 
    \begin{equation}\label{eq:invop}
    \mathbb I - \xi(h)P_{X_2\bigoplus Y} \mathbb Q + P_{X_2\bigoplus Y} \mathbb R_3(h,\eps) :  X_2\bigoplus Y\to X_2\bigoplus Y
\end{equation} 
is invertible, where
$\mathbb R_3(h,\eps)$ is the remainder defined in Corollary~\ref{lem:stnlimits}.
Denote the inverse by $C(h,\eps): X_2\bigoplus Y\to X_2\bigoplus Y$.

\item[(ii)] Define 
\begin{equation}\label{eq:nphi1}
 J_2(h,\eps)[\mathfrak u_1]= -C(h,\eps)P_{X_2\bigoplus Y} \mathbb R_3(h,\eps)\mathfrak u_1.
\end{equation}
Then $ J_2(h,\eps)[\mathfrak u_1]$ is analytic in $h$ and
\begin{equation}\label{eq:phi1small}
\| J_2(h,\eps)[\mathfrak u_1]\|_{\HhalfG\times\HmhalfG} =  o(1)\|\mathfrak u_1\|_{\HhalfG\times\HmhalfG},\quad \text{uniformly  for $|h|<\hrad$ as $\eps\to0^+$}.
\end{equation}
Moreover,
if $\vec\phi\in \ker \;(\mathbb T_n^\eps (\lambda_*+\eps h_0) )$, then up to a constant factor
\begin{equation}\label{eq:nphi}
 \vec\phi= \mathfrak u_1 + J_2(h_0,\eps)[\mathfrak u_1].
\end{equation}

\end{itemize}
\end{lemma}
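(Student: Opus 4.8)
The plan is to mimic, essentially verbatim, the proof structure of Lemma~\ref{lem:suffphi}, using the direct-sum decomposition \eqref{eq:XXY} in place of \eqref{eq:directsum} and the compressed operator on $X_2\bigoplus Y$ in place of the one on $Y$. The key point is that in the basis $\{\mathfrak u_1,\mathfrak u_2\}$ the operator $\mathbb Q$ acts diagonally via \eqref{eq:PQu}, so the limiting operator $\mathbb U_n(h)=\mathbb I-2\xi(h)\mathbb Q$ has $\ker\mathbb U_n(0)=\text{span}\{\mathfrak u_1\}=X_1$ (using $\xi(0)=\frac{1}{2\alpha_*}$ and $\mathbb Q\mathfrak u_1=\alpha_*\mathfrak u_1$), and is invertible on the complementary subspace $X_2\bigoplus Y$ for $|h|<\hrad$. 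This is the structural fact that makes the one-dimensionality of the kernel and the rank-one reduction \eqref{eq:nphi} work.

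For part (i): I would first observe that $\mathbb I-\xi(h)P_{X_2\bigoplus Y}\mathbb Q: X_2\bigoplus Y\to X_2\bigoplus Y$ is invertible uniformly for $|h|<\hrad$. Indeed, on $Y$ it is the identity (since $\mathbb Q Y=0$ by \eqref{eq:PQu}), and on $X_2$ it acts as multiplication by $1+\xi(h)\alpha_*$, which is bounded away from $0$ precisely when $h\ne 0$ stays in the disk $|h|<\hrad$ — here one uses \eqref{eq:betaxi}, namely $\xi(h)=\frac{\beta_*}{2\alpha_*}(\beta_*^2-h^2)^{-1/2}$ and the definition $\hrad = \mathfrak d|t_*/\gamma_*| = \mathfrak d\beta_*$, to get the quantitative lower bound $|1+\xi(h)\alpha_*|\ge c>0$ with $c$ depending only on $\mathfrak d$. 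Then, since $\|\mathbb R_3(h,\eps)\|=o(1)$ uniformly for $|h|<\hrad$ as $\eps\to 0^+$ by Corollary~\ref{lem:stnlimits}, a Neumann series argument gives invertibility of \eqref{eq:invop} for $\eps$ small, with $\|C(h,\eps)\|=O(1)$ uniformly.

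For part (ii): analyticity of $J_2(h,\eps)[\mathfrak u_1]$ in $h$ follows from analyticity of $\mathbb R_3(h,\eps)$ in $h$ (via a Cauchy-integral representation as in the proof of Lemma~\ref{lem:rank1}) and the Neumann series for $C(h,\eps)$; the estimate \eqref{eq:phi1small} is immediate from $\|C(h,\eps)\|=O(1)$ and $\|\mathbb R_3(h,\eps)\|=o(1)$. For the characterization \eqref{eq:nphi}, take $\vec\phi\in\ker(\mathbb T_n^\eps(\lambda_*+\eps h_0))$; decompose $\vec\phi=\vec\phi_0+\vec\phi_1$ with $\vec\phi_0\in X_1$, $\vec\phi_1\in X_2\bigoplus Y$ according to \eqref{eq:XXY}; write $\mathbb T_n^\eps(\lambda_*+\eps h_0)=\mathbb I-2\xi(h_0)\mathbb Q+\mathbb R_3(h_0,\eps)$ using \eqref{eq:necessarylim}; apply the projection $P_{X_2\bigoplus Y}$ to $\mathbb T_n^\eps(\lambda_*+\eps h_0)\vec\phi=0$; use $\mathbb Q\mathfrak u_1=\alpha_*\mathfrak u_1\in X_1$ (so $P_{X_2\bigoplus Y}\mathbb Q\mathfrak u_1=0$ for the $\vec\phi_0$-component) together with $\xi$ being a scalar to get $(\mathbb I-\xi(h_0)P_{X_2\bigoplus Y}\mathbb Q+P_{X_2\bigoplus Y}\mathbb R_3(h_0,\eps))\vec\phi_1 = -P_{X_2\bigoplus Y}\mathbb R_3(h_0,\eps)\vec\phi_0$, whence $\vec\phi_1=J_2(h_0,\eps)[\vec\phi_0]$. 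Normalizing so that the $X_1$-component is exactly $\mathfrak u_1$ (possible since, as one checks, $\vec\phi_0\ne 0$: if $\vec\phi_0=0$ then $\vec\phi=\vec\phi_1\in X_2\bigoplus Y$ and the compressed operator, being invertible, forces $\vec\phi_1=0$) gives \eqref{eq:nphi}. I expect the main obstacle to be purely bookkeeping: carefully verifying the quantitative bound $|1+\xi(h)\alpha_*|\ge c(\mathfrak d)>0$ on the punctured disk $0<|h|<\hrad$ — and that this lower bound degenerates exactly as $\mathfrak d\to 1$, which is why the hypothesis $\mathfrak d\in(0,1)$ is needed — and keeping the factor of $2$ (from $\mathbb U_n(h)=\mathbb I-2\xi(h)\mathbb Q$ in \eqref{eq:ziglim}) consistent throughout.
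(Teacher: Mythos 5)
Your proof is correct and follows the paper's own argument step for step: use the direct-sum decomposition \eqref{eq:XXY}, observe from \eqref{eq:PQu} that $P_{X_2\bigoplus Y}\mathbb Q$ vanishes on $Y$ and acts as the scalar $-\alpha_*$ on $X_2$ so that the unperturbed compressed operator $\mathbb I-\xi(h)P_{X_2\bigoplus Y}\mathbb Q$ is invertible with uniformly bounded inverse, perturb via a Neumann series using the uniform $o(1)$ bound on $\mathbb R_3(h,\eps)$ from Corollary~\ref{lem:stnlimits}, and finally project the kernel equation $\mathbb T_n^\eps(\lambda_*+\eps h_0)\vec\phi=0$ onto $X_2\bigoplus Y$ to solve for the $X_2\bigoplus Y$-component and obtain \eqref{eq:nphi}. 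One side remark is inaccurate: the lower bound on $|1+\xi(h)\alpha_*|$ (or $|1+2\xi(h)\alpha_*|$) does \emph{not} degenerate as $\mathfrak d\to 1$ -- since $\xi(h)\to\infty$ as $h^2\to\beta_*^2$, the modulus actually grows -- the role of $\mathfrak d<1$ is only to keep $h$ away from the branch points of $\sqrt{\beta_*^2-h^2}$ so that $\xi(h)$ remains analytic and bounded. You are right that there is a latent factor-of-$2$ inconsistency between \eqref{eq:invop} and $\mathbb U_n(h)=\mathbb I-2\xi(h)\mathbb Q$ in \eqref{eq:ziglim}; it is harmless because $1+c\,\xi(h)\alpha_*$ is nonvanishing on the disk for $c\in\{1,2\}$, but worth normalizing consistently.
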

\begin{proof}[Proof of Lemma~\ref{lem:nphi}]
The invertibility of \eqref{eq:invop} follows from $P_{X_2\bigoplus Y} \mathbb Q|_{X_2}  = -\alpha_*$, $\xi(h)>0$, $P_{X_2\bigoplus Y} \mathbb Q|_Y  = 0$ and the uniform smallness of $\mathbb R_3(h,\eps)$. 
The analyticity of  $J_2[\vec\phi_0](h,\eps)$ in $h$ and the smallness \eqref{eq:phi1small} follow straightforwardly. 

For the statement \eqref{eq:nphi}, by \eqref{eq:XXY} and Corollary~\ref{lem:stnlimits}, we have 
\begin{equation}
\vec\phi = \mathfrak u_1 + \vec\phi_1(h,\eps), \quad\text{for some } \vec\phi_1(h,\eps)\in X_2\bigoplus Y,
\end{equation}
and
\begin{equation}
\mathbb T_n^\eps (\lambda_*+\eps h_0)\vec\phi(h_0,\eps) = \big(\mathbb I - \xi(h_0)\mathbb Q+ \mathbb R_3(h_0,\eps)\big) (\mathfrak u_1 + \vec\phi_1(h_0,\eps)) =0.
\end{equation}
Projecting into $X_2\bigoplus Y$ using $P_{X_2\bigoplus Y}$, we have 
\begin{equation}
(\mathbb I - \xi(h_0)P_{X_2\bigoplus Y}\mathbb Q + P_{X_2\bigoplus Y} \mathbb R_3(h_0,\eps))  \vec\phi_1(h_0,\eps) + P_{X_2\bigoplus Y} \mathbb R_3(h_0,\eps) \mathfrak u_1 =0.
\end{equation}
By the invertibility of \eqref{eq:invop}, $\vec\phi_1 = \| J_2[\vec\phi_0](h,\eps)$. The proof is concluded.
\end{proof}
\begin{lemma}\label{lem:tphi}
Let Assumption~\ref{lem:assNoFold} holds along $\bbeta_1$ and $t_*>0$. Let $\mathfrak d\in(0,1)$ be a constant.
Then the following holds for sufficiently small $\eps>0$ and $|h|<\hrad$:
\begin{itemize}
    \item [(i)] The operator 
\begin{equation*}
\mathbb I + \xi(h)P_{X_1\bigoplus Y} \mathbb Q + P_{X_1\bigoplus Y} \mathbb R_2(h,\eps) :  X_1\bigoplus Y\to X_1\bigoplus Y.
\end{equation*} 
is invertible, where
$\mathbb R_2(h,\eps)$ is the remainder defined in Corollary~\ref{lem:stnlimits}. Denote the inverse by $B(h,\eps): X_1\bigoplus Y\to X_1\bigoplus Y$.

\item[(ii)] 
Let  
\begin{equation}\label{eq:tphi1}
 J_3(h,\eps)[\mathfrak u_2]= -B(h,\eps)P_{X_1\bigoplus Y} \mathbb R_2(h,\eps)\mathfrak u_2.
\end{equation}
Then $\vec\phi_1(h,\eps)$ is analytic in $h$ and
\begin{equation}
\|J_3(h,\eps)[\mathfrak u_2]\|_{\HhalfG\times\HmhalfG} =  o(1)\|\mathfrak u_2\|_{\HhalfG\times\HmhalfG},\quad \text{uniformly for $|h|<\hrad$ as $\eps\to0^+$}.
\end{equation}
Furthermore,
if $\vec\phi\in \ker \;(\mathbb T_t^\eps (\lambda_*+\eps h_0) )$, then
\begin{equation}\label{eq:tphi}
\vec\phi= \mathfrak u_2 +J_3(h_0,\eps)[\mathfrak u_2].
\end{equation}
\end{itemize}
\end{lemma}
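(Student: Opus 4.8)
\textbf{Proof strategy for Lemma~\ref{lem:tphi}.} The plan is to mirror the argument of Lemma~\ref{lem:nphi} verbatim, exploiting the decomposition $\HhalfG\times\HmhalfG = X_1\bigoplus X_2\bigoplus Y$ from \eqref{eq:XXY} and the action of $\mathbb P,\mathbb Q$ on the basis $\{\mathfrak u_1,\mathfrak u_2\}$ recorded in \eqref{eq:PQu}. The only structural difference from the $\mathbb T_n$ case is a sign: by Corollary~\ref{lem:stnlimits}, $\mathbb T_t^\eps(\lambda_*+\eps h) = \mathbb U_t(h) + \mathbb R_2(h,\eps)$ with $\mathbb U_t(h) = \mathbb I + 2\xi(h)\mathbb Q$, whose kernel at $h=0$ is $X_1 = \mathrm{span}\{\mathfrak u_1\}$ (rather than $\mathrm{span}\{\mathfrak u_1\}$ for $\mathbb U_n$ — here it is the $-\alpha_*$-eigenvector of $\mathbb Q$, which is $\mathfrak u_2$, that survives; so the role of $\mathfrak u_1$ and $\mathfrak u_2$ is swapped and the complementary space is $X_1\bigoplus Y$). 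Concretely, one takes $P_{X_1\bigoplus Y}$ to be the projection along $X_2$, and works in $X_1\bigoplus Y$.

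\textbf{Step 1 (invertibility of the operator in (i)).} Restrict $\mathbb U_t(0) = \mathbb I + 2\xi(0)\mathbb Q$ to $X_1\bigoplus Y$. Since $\xi(0) = \tfrac{1}{2\alpha_*}$, \eqref{eq:PQu} gives $\mathbb Q\mathfrak u_1 = \alpha_*\mathfrak u_1$ on $X_1$ and $\mathbb Q Y = 0$, so $\mathbb I + \xi(h)\mathbb Q$ acts as multiplication by $1+\xi(h)\alpha_*$ on $X_1$ (which is nonzero and bounded away from $0$ for $|h|<\hrad$, using the explicit form of $\xi$ in \eqref{eq:betaxi}) and as the identity on $Y$. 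Adding the remainder $P_{X_1\bigoplus Y}\mathbb R_2(h,\eps)$, whose operator norm is $o(1)$ uniformly for $|h|<\hrad$ as $\eps\to0^+$ by Corollary~\ref{lem:stnlimits}, a Neumann series argument shows the operator in (i) is invertible with inverse $B(h,\eps)$ of norm $O(1)$, uniformly. This gives (i).

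\textbf{Step 2 (the correction term and its smallness).} Define $J_3(h,\eps)[\mathfrak u_2] := -B(h,\eps)P_{X_1\bigoplus Y}\mathbb R_2(h,\eps)\mathfrak u_2$ as in \eqref{eq:tphi1}. Analyticity in $h$ follows from analyticity of $B(h,\eps)$ (composition of the analytic family $\mathbb U_t(h)+\mathbb R_2(h,\eps)$ with the bounded-inverse map) and of $\mathbb R_2(h,\eps)$; the bound $\|J_3(h,\eps)[\mathfrak u_2]\|=o(1)\|\mathfrak u_2\|$ uniformly for $|h|<\hrad$ as $\eps\to0^+$ is immediate from $\|B\|=O(1)$ and $\|\mathbb R_2\|=o(1)$.

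\textbf{Step 3 (every kernel element has the stated form).} Suppose $\vec\phi\in\ker(\mathbb T_t^\eps(\lambda_*+\eps h_0))$. Using \eqref{eq:XXY} write $\vec\phi = c\,\mathfrak u_2 + \vec\phi_1$ with $\vec\phi_1\in X_1\bigoplus Y$; since $\vec\phi$ is determined up to a scalar we may normalize $c=1$. Applying $\mathbb T_t^\eps(\lambda_*+\eps h_0) = \mathbb I + \xi(h_0)\mathbb Q + \mathbb R_2(h_0,\eps)$ (note $\mathbb Q\mathfrak u_2 = -\alpha_*\mathfrak u_2$ and $\xi(0)\alpha_* = \tfrac12$, so the leading part annihilates $\mathfrak u_2$ at $h_0=0$; for $h_0\neq0$ one carries $\xi(h_0)$ along) and projecting onto $X_1\bigoplus Y$ via $P_{X_1\bigoplus Y}$ yields
\begin{equation*}
\big(\mathbb I + \xi(h_0)P_{X_1\bigoplus Y}\mathbb Q + P_{X_1\bigoplus Y}\mathbb R_2(h_0,\eps)\big)\vec\phi_1 + P_{X_1\bigoplus Y}\mathbb R_2(h_0,\eps)\mathfrak u_2 = 0,
\end{equation*}
so by (i), $\vec\phi_1 = J_3(h_0,\eps)[\mathfrak u_2]$, which is \eqref{eq:tphi}. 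The expected main obstacle — already dispatched in Corollary~\ref{lem:stnlimits} — is the uniform $o(1)$ control of $\mathbb R_2(h,\eps)$ over the full complex disc $|h|<\hrad$; granting that, the present lemma is a routine projection-and-Neumann-series argument, and the only care needed is bookkeeping the $X_1/X_2$ role-swap relative to Lemma~\ref{lem:nphi}.
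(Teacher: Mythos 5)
Your proof is correct and follows the same route as the paper, which simply declares the proof of this lemma to be identical to that of Lemma~\ref{lem:nphi} with the roles of $X_1$ and $X_2$ (equivalently $\mathfrak u_1$ and $\mathfrak u_2$) swapped and the sign in front of $\xi(h)\mathbb Q$ flipped --- exactly the bookkeeping you carry out in Steps 1--3. The only blemish is the garbled parenthetical in your strategy paragraph asserting that $\ker\,\mathbb U_t(0)$ is $X_1$; as you correctly use afterwards, $\ker\,\mathbb U_t(0)=\text{span}\{\mathfrak u_2\}=X_2$, which is precisely why the complementary space on which one inverts is $X_1\bigoplus Y$.
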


\begin{prop}\label{lem:uniquestn}
Let Assumption~\ref{lem:assNoFold} holds along $\bbeta_1$ and $t_*>0$. Let $\mathfrak d\in(0,1)$ be a constant.
For sufficiently small $\eps>0$, the system
\begin{equation}\label{eq:simulsn}
\mathbb T_s^\eps (\lambda_*+\eps h)\vec\phi=0\quad\text{and} \quad\mathbb T_n^\eps (\lambda_*+\eps h)\vec\phi=0
\end{equation}
attains at most one pair of solution $(h,\vec\phi)$, with $|h|<\hrad$ and $\vec\phi\in\HhalfG\times\HmhalfG$.
The same holds for the system
\begin{equation}\label{eq:simulst}
\mathbb T_s^\eps (\lambda_*+\eps h)\vec\phi=0 \quad\text{and}\quad \mathbb T_t^\eps (\lambda_*+\eps h)\vec\phi=0.
\end{equation}
\end{prop}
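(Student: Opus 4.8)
The plan is to use the kernel descriptions already obtained for $\mathbb T_s^\eps$, $\mathbb T_n^\eps$ and $\mathbb T_t^\eps$ to reduce each simultaneous system to a single scalar equation in $h$, and then to show that this scalar equation has at most one root in $\{|h|<\hrad\}$ by a contraction argument built on the conformality of $\beta$.

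Assume $(h,\vec\phi)$ with $\vec\phi\neq0$ and $|h|<\hrad$ solves \eqref{eq:simulsn}. Since $\vec\phi\in\ker(\mathbb T_n^\eps(\lambda_*+\eps h))$, Lemma~\ref{lem:nphi}(ii) forces $\vec\phi=c\,(\mathfrak u_1+J_2(h,\eps)[\mathfrak u_1])$ for some scalar $c\neq0$; after dividing by $c$ we may take $\vec\phi=\mathfrak u_1+J_2(h,\eps)[\mathfrak u_1]$. Substituting this into the remaining equation $\mathbb T_s^\eps(\lambda_*+\eps h)\vec\phi=0$ and using the expansion $\mathbb T_s^\eps(\lambda_*+\eps h)=2\tilde{\mathbb T}^0(\lambda_*)+2\beta(h)\mathbb P+\mathbb R_1(h,\eps)$ from \eqref{eq:sufficientlim}, I would project onto the summand $X_2$ of the decomposition $\HhalfG\times\HmhalfG=X_1\oplus X_2\oplus Y$ in \eqref{eq:XXY}. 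The structural facts $\mathfrak u_1\in X=\ker\tilde{\mathbb T}^0(\lambda_*)$, $\text{Ran}\,\tilde{\mathbb T}^0(\lambda_*)=Y$ (Lemma~\ref{lem:tildeT0}), $\mathbb P\mathfrak u_1=\im\alpha_*\mathfrak u_2$, $\mathbb P(X_2\oplus Y)\subseteq X_1$ and $\mathbb P Y=0$ (from \eqref{eq:PQu}), together with the uniform bounds $\|\mathbb R_1(h,\eps)\|=o(1)$ and $\|J_2(h,\eps)[\mathfrak u_1]\|=o(1)$ as $\eps\to0^+$, make every term other than $2\beta(h)\im\alpha_*\mathfrak u_2$ either land in $Y$ (hence project to $0$), or land in $X_1$, or be $o(1)$. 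The $X_2$-component of the equation therefore collapses to a scalar identity
\begin{equation*}
2\im\alpha_*\,\beta(h)=r(h,\eps),
\end{equation*}
where $r(\cdot,\eps)$ is analytic on a disk slightly larger than $\{|h|<\hrad\}$ (since $\mathbb R_1$ and $J_2[\mathfrak u_1]$ are analytic in $h$ there) and $\sup_{|h|<\hrad}|r(h,\eps)|=o(1)$ as $\eps\to0^+$.

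It then remains to see that $2\im\alpha_*\beta(h)=r(h,\eps)$ has at most one solution. The map $\beta(h)=\tfrac{1}{2\alpha_*}\,h/\sqrt{\beta_*^2-h^2}$ is holomorphic and injective on $\{|h|<\hrad\}$ (it is odd with injective square $h\mapsto h^2/(\beta_*^2-h^2)$), with $\beta(0)=0$ and $|\beta'(h)|\ge\tfrac{1}{2\alpha_*\beta_*}$, the bound already used in the proof of Lemma~\ref{lem:rank1}; let $\beta^{-1}$ denote its inverse, which is Lipschitz near $0$. A Cauchy estimate on a slightly smaller disk gives $\sup_{|h|<\hrad}|\partial_h r(h,\eps)|=o(1)$ as well. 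Hence the equation is equivalent to the fixed-point problem $h=\Phi(h):=\beta^{-1}\!\big(r(h,\eps)/(2\im\alpha_*)\big)$, where, for $\eps$ small, $\Phi$ carries $\{|h|<\hrad\}$ into a small neighbourhood of $0$ inside that disk and has $\sup|\Phi'|=o(1)<1$; the contraction mapping theorem then yields a unique fixed point, so at most one value of $h$ can occur. For that $h$ the companion $\vec\phi$ is determined up to a scalar by Lemma~\ref{lem:nphi}(ii), which is the assertion for \eqref{eq:simulsn}. The system \eqref{eq:simulst} is handled identically, replacing Lemma~\ref{lem:nphi}(ii) by Lemma~\ref{lem:tphi}(ii) (so $\vec\phi=\mathfrak u_2+J_3(h,\eps)[\mathfrak u_2]$ up to a scalar) and projecting $\mathbb T_s^\eps(\lambda_*+\eps h)\vec\phi=0$ onto $X_1$ instead of $X_2$; there $\mathbb P\mathfrak u_2=\im\alpha_*\mathfrak u_1$ supplies the leading term $2\im\alpha_*\beta(h)$ and the same contraction argument applies.

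The main obstacle is the projection step: one must check that the $o(1)$ contributions of $\mathbb R_1$, of the correction $J_2[\mathfrak u_1]$ (resp. $J_3[\mathfrak u_2]$), and of $\mathbb P$ acting on that correction do not cancel the $\beta(h)$ term, so that the reduced scalar equation is genuinely an $o(1)$ perturbation of $2\im\alpha_*\beta(h)=0$. This is precisely where the decomposition identities $\text{Ran}\,\tilde{\mathbb T}^0(\lambda_*)=Y$ and $\mathbb P(X_2\oplus Y)\subseteq X_1$ (resp. $\mathbb P(X_1\oplus Y)\subseteq X_2$) are used, since they confine all non-$\beta$ terms to $Y$ or to the complementary line in $X$, keeping the relevant projection clean. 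The surrounding analytic bookkeeping — analyticity of $r$ in $h$ on a disk slightly larger than $\{|h|<\hrad\}$ and the Cauchy bound on $\partial_h r$ — is routine and mirrors the argument in the proof of Lemma~\ref{lem:rank1}.
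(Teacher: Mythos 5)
Your proposal is correct and follows essentially the same route as the paper: parametrize the common kernel via Lemma~\ref{lem:nphi} (resp.\ Lemma~\ref{lem:tphi}), substitute into $\mathbb T_s^\eps(\lambda_*+\eps h)\vec\phi=0$, and project onto $X_2$ (resp.\ $X_1$) to obtain the scalar equation $2\im\alpha_*\beta(h)+r(h,\eps)=0$ with $r$ analytic and uniformly $o(1)$. The only divergence is in the final step: the paper deduces that this scalar equation has exactly one root in $\{|h|<\hrad\}$ by the one-variable Rouch\'e theorem (counting zeros of $\beta$ and treating $r$ as a small perturbation), whereas you recast it as a fixed-point problem $h=\beta^{-1}(r(h,\eps)/(2\im\alpha_*))$ and invoke the contraction mapping theorem via the lower bound $|\beta'(h)|\ge \tfrac{1}{2\alpha_*\beta_*}$ together with a Cauchy estimate on $\partial_h r$ — both arguments are routine and equivalent here, so this is a cosmetic rather than substantive difference.
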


\begin{proof}
We only display the proof on the system \eqref{eq:simulsn}.
Suppose $\vec\phi$ solves both equations in \eqref{eq:simulsn}. By Lemma~\ref{lem:nphi}, the solution to the second equation necessarily takes the form
 $\vec\phi= \mathfrak u_1 +J_2(h,\eps)[\mathfrak u_1]$ where $J_2(h,\eps)[\mathfrak u_1]$ is defined in \eqref{eq:nphi1}.
 Substituting $\vec\phi$ into the first equation in \eqref{eq:simulsn}, we obtain
\begin{equation}
(2\tilde{\mathbb T}^0(\lambda_*) + 2\beta(h)\mathbb P + \mathbb R_1(h,\eps))(\mathfrak u_1 + J_2(h,\eps)[\mathfrak u_1])= 0.
\end{equation}
Projecting into $X_2$ using $P_{X_2}$, we obtain
\begin{equation}\label{eq:uniq1D}
2\im\alpha_* \beta(h) \mathfrak u_2  + [P_{X_2}\mathbb R_1(h,\eps)\mathfrak u_1 + P_{X_2}\Big(2\beta(h)\mathbb P + \mathbb R_1(h,\eps)\Big)J_2(h,\eps)[\mathfrak u_1]]= 0.
\end{equation}

Since $X_2$ is a one-dimensional space, \eqref{eq:uniq1D} becomes
\begin{equation}\label{eq:uniqscalar}
2\im\alpha_* \beta(h) + r(h,\eps) = 0.
\end{equation}
Note that for each $\eps$,  $r(h,\eps)$ is analytic in $h$, and $|r(h,\eps)|\to 0$ as $\eps\to 0^+$ uniformly in $|h|<\hrad$. Thus the Rouch\'e Theorem for single-variable complex functions implies that there is exactly one $h_0$ that solves \eqref{eq:uniqscalar} in the region $|h|<\hrad$. 
There is also at most one root function at $h_0$, since $\vec\phi$ is determined by \eqref{eq:nphi} when $h=h_0$.
\end{proof}

\subsection{Proof of Theorem~\ref{lem:edge}}
%
\begin{proof}[Proof of Theorem~\ref{lem:edge}] 

We claim that $\mathbb T_s^\eps (\lambda_*+\eps h)$ is of multiplicity $2$ in $|h|<\hrad$ when $\eps>0$ is sufficiently small.
In Theorem~\ref{lem:actGS}, we identify $z=h$, $X=Y=\HhalfG\times\HmhalfG$, 
$V=\{h,|h|<\hrad\}$, 
$A(z)=\mathbb U_s (h)$ 
and $B(z)=\mathbb R_1(h,\eps)$.
In Proposition~\ref{lem:sufflimprop}, we have shown that $A(z)$ is analytic and Fredholm of index zero on a neighborhood of $\overline V$, and the multiplicity of $A(z)$ in $V$ is $2$. 
On $\partial V$, we know $A(z)$ is invertible and is independent of $\eps$, and $B(z)$ converges uniformly to $0$ as $\eps\to0^+$
by Corollary~\ref{lem:stnlimits}. 
Since $B(z)$ is analytic on a neighborhood of $\overline V$, by Theorem~\ref{lem:actGS} and the relation $\mathbb T_s^\eps (\lambda_*+\eps h)=A(z)+B(z)$, we conclude the claim.

By Lemma~\ref{lem:rank1}, 
there are two pairs $(h_i,\vec\phi_i)$ solving $\mathbb T_s^\eps (\lambda_*+\eps h)\vec\phi=0$, $i=1,2$. 
We first show that there is at least one interface mode.
Assume $u_i$ generated by $\vec\phi_i$ through the expression \eqref{eq:denstoedge} are both equal to zero. Then $(h_i,\vec\phi_i)$, $i=1,2$ are both solutions to the system \eqref{eq:simulst}. 
This contradicts the uniqueness of simultaneous solutions to \eqref{eq:simulst} established in Proposition~\ref{lem:uniquestn}.
Next, we show that there is at most one interface mode. Suppose there are two linearly independent interface modes $u_i$ at $h_i$ respectively, $i=1,2$. Denote  $\vec\phi_i = (u_i|_\Gamma,\partial_n u_i|_\Gamma)$. Then $(h_i,\vec\phi_i)$, $i=1,2$ are the solutions to the system \eqref{eq:simulsn}. 
This contradicts the uniqueness of solutions to the system \eqref{eq:simulsn} established in Proposition~\ref{lem:uniquestn}.

\end{proof}

\section{The interface modes along an armchair interface}\label{sec:geometrya}

In this section, we study interface modes along the armchair interface as stated in the eigenvalue problem \eqref{eq:edgedef_armchair} and prove Theorem \ref{lem:edgearm}. To this end, we extend in parallel the mathematical framework developed in the previous sections for the zigzag interface.

For ease of notation, we will abuse notations $v_i$, $\vec v_i$, $c_i$, $\mathbb P$, $\mathbb Q$, $X$, $\mathfrak u_i$, $X_i$ that are introduced in Sections~\ref{sec:bands}, ~\ref{sec:prep}, \ref{sec:characterization}, and \ref{sec:proofedge} to represent the quantities relevant to the zigzag interface. In this section, these notations represent the quantities relevant to the armchair interface.

For the eigenvalue problem \eqref{eq:edgedef_armchair}, the Floquet theory on the strip $\Omega_a^J = \cup_{n_1\in\mathbb Z}(\cC_a +n_1\be_1^a)$ corresponds to the slice of quasimomenta $\bp(\ell):= \Kone + \ell\bbeta_1^a$. This slice intersects with both $\Kone+\Lambda$ and $\Ktwo+\Lambda$, as $\bp(0)=\Kone$, and $\bp(-\frac{2\pi}{3}) = \Ktwo+\bbeta_1 - \bbeta_2$. We have the following proposition whose proof is given in Appendix~\ref{sec:Tderiv}.

\begin{prop}\label{lem:TderivA}
Let $\brho := (\rho_1,\rho_2)$, where $\rho_i$ are the functions defined in Proposition~\ref{lem:DiracP} with the normalization such that \eqref{eq:w_normalization} holds. 
Let $t_*,\gamma_*\in\mathbb R$ and $\theta_*\in\mathbb C$ be the constants defined in Proposition~\ref{lem:Tderiv}.
There holds
\begin{equation}\label{eq:TderivA}
\langle \brho,  \bbeta_1^a\cdot\nabla_{\bp}T(0,\lambda_*,\bp)\brho\rangle_{\partial D}|_{\bp=\Kone}= 
\left(\begin{matrix}
0&\sqrt3\im\,\overline{\tau}\overline{\theta_*}\\
-\sqrt3\im\tau\theta_*&0
\end{matrix}\right).
\end{equation}
Let $\brho':=(\rho_1',\rho_2')$,  where 
\begin{equation}
\rho_1'(\bx) := \bar \rho_2(\bx),\quad \rho_2'(\bx):=\bar \rho_1(\bx).
\end{equation}
In particular,
\begin{equation}
R\rho_1'(\bx):=\rho_1'(R^{-1}\bx) = \tau \rho_1'(\bx),\quad R\rho_2'(x):=\rho_2'(R^{-1}\bx) = \overline{\tau},\quad \rho_2'(\bx)=\rho_1'(\rflc\bx).
\end{equation}
The partial derivatives of $\langle \brho',T (\eps,\lambda,\bp)\brho'\rangle_{\partial D}$ at $\eps=0$, $\lambda=\lambda_*$ and $\bp=\Ktwo$ take the forms
\begin{equation}\label{eq:TderivKtwo}
\begin{aligned}
\langle \brho', \partial_{\lambda}T (0,\lambda,\Ktwo)\brho'\rangle_{\partial D}|_{\lambda=\lambda_*}&= 
\left(\begin{matrix}
\gamma_*&0\\
0&\gamma_*
\end{matrix}\right), \\
\langle \brho',  \bbeta_1\cdot\nabla_{\bp}T(0,\lambda_*,\bp)\brho'\rangle_{\partial D}|_{\bp=\Ktwo}&= 
\left(\begin{matrix}
0&-\overline{\theta_*}\\
-\theta_*&0
\end{matrix}\right),\\
\langle \brho', \partial_{\eps}T (\eps,\lambda_*,\Ktwo)\brho'\rangle_{\partial D}|_{\eps=0}&= 
\left(\begin{matrix}
-t_*&0\\
0&t_*
\end{matrix}\right). 
\end{aligned}
\end{equation}
In particular,
\begin{equation}\label{eq:TderivAKtwo}
\langle \brho',  \bbeta_1^a\cdot\nabla_{\bp}T(0,\lambda_*,\bp)\brho'\rangle_{\partial D}|_{\bp=\Ktwo}= 
\left(\begin{matrix}
0&-\sqrt3\im\,\overline{\tau}\overline{\theta_*}\\
\sqrt3\im\tau\theta_*&0
\end{matrix}\right).
\end{equation}
Here $\langle\cdot,\cdot\rangle_{\partial D}$ represents the $(H^{-1/2}(\partial D))^2$-$(H^{1/2}(\partial D))^2$ pairing. 
\end{prop}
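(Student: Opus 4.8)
The plan is to reduce the entire statement to Proposition~\ref{lem:Tderiv} by combining two elementary facts. The first is the reciprocal-vector identity $\bbeta_1^a = \bbeta_1 - \bbeta_2$, which I would check directly from the coordinate formulas $\bbeta_1 = (\frac{1}{\sqrt3},-1)^T$, $\bbeta_2 = (\frac{1}{\sqrt3},1)^T$, $\bbeta_1^a = (0,-2)^T$. The second is the conjugation symmetry $G^f(\bx,\by;\lambda,-\bq) = \overline{G^f(\bx,\by;\lambda,\bq)}$, obtained by relabeling $\bq \mapsto -\bq$ in the lattice sum \eqref{eq:GQPfreq} (with $\lambda,|\bq|^2$ real), which in particular recovers $G^f(\cdot,\cdot;\lambda,\Ktwo) = \overline{G^f(\cdot,\cdot;\lambda,\Kone)}$ as already used in Section~\ref{sec:Dirac}. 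I would also record the scalar identities $1-\tau = \sqrt3\,\im\,\overline\tau$ and $1-\overline\tau = -\sqrt3\,\im\,\tau$, immediate from $\tau = e^{\im 2\pi/3}$.

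For \eqref{eq:TderivA}: since $\bp \mapsto \langle\brho, T(0,\lambda_*,\bp)\brho\rangle_{\partial D}$ is smooth and $\nabla_\bp$ is linear, I would write $\bbeta_1^a\cdot\nabla_\bp T(0,\lambda_*,\bp)|_{\bp=\Kone} = \bbeta_1\cdot\nabla_\bp T(0,\lambda_*,\bp)|_{\bp=\Kone} - \bbeta_2\cdot\nabla_\bp T(0,\lambda_*,\bp)|_{\bp=\Kone}$ and substitute the second and third lines of \eqref{eq:Tderiv}. This produces off-diagonal entries $(1-\tau)\overline{\theta_*}$ and $(1-\overline\tau)\theta_*$, which collapse via the scalar identities to $\sqrt3\,\im\,\overline\tau\,\overline{\theta_*}$ and $-\sqrt3\,\im\,\tau\,\theta_*$, giving \eqref{eq:TderivA}.

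For the statements at $\Ktwo$: the Green's function symmetry together with the fact that $R^\eps$ is a real map yields, for real $\lambda$ and $\eps$, the operator identity $T(\eps,\lambda,\Ktwo)[\phi] = \overline{T(\eps,\lambda,\Kone)[\overline\phi]}$. Differentiating in $\lambda$ or in $\eps$ preserves this relation, while differentiating in $\bp$ at $\Ktwo = -\Kone$ introduces a sign, $\nabla_\bp T(0,\lambda,\bp)|_{\bp=\Ktwo}[\phi] = -\overline{\nabla_\bq T(0,\lambda,\bq)|_{\bq=\Kone}[\overline\phi]}$, by the chain rule $\bp \mapsto -\bp$. With $\rho_1' = \overline{\rho_2}$, $\rho_2' = \overline{\rho_1}$, the claimed symmetry relations follow from $R\overline\phi = \overline{R\phi}$, $\rho_2 = F\rho_1$, and $F^2 = I$; the same relation gives $T(0,\lambda_*,\Ktwo)\rho_i' = 0$, so $\{\rho_1',\rho_2'\}$ spans $\ker T(0,\lambda_*,\Ktwo)$ (consistent with Corollary~\ref{lem:Ktwo}), and the normalization \eqref{eq:w_normalization} is inherited since $\|w_i'\|_{L^2(\cunp)} = \|w_{3-i}\|_{L^2(\cunp)}$. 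Using that the $\HmhalfpD$--$\HhalfpD$ pairing is conjugate-linear in the first slot, a one-line computation then gives $\langle \rho_i', \mathcal A'\rho_j'\rangle_{\partial D} = \pm\,\overline{\langle \rho_{3-i}, \mathcal A\,\rho_{3-j}\rangle_{\partial D}}$, where $\mathcal A,\mathcal A'$ are the corresponding derivative operators at $\Kone,\Ktwo$ and the sign is $+$ for $\partial_\lambda,\partial_\eps$ and $-$ for $\nabla_\bp$. The index swap $i\mapsto 3-i$ conjugates the $2\times2$ matrices in \eqref{eq:Tderiv} by $\bigl(\begin{smallmatrix}0&1\\1&0\end{smallmatrix}\bigr)$; since $t_*,\gamma_*$ are real, \eqref{eq:TderivKtwo} drops out at once, and \eqref{eq:TderivAKtwo} follows from \eqref{eq:TderivKtwo} by applying $\bbeta_1^a = \bbeta_1 - \bbeta_2$ and the scalar identities exactly as in the first step.

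The computation is essentially bookkeeping, and no estimate beyond Proposition~\ref{lem:Tderiv} is needed. The one subtlety, and the only place where sign errors could creep in, is the asymmetry between the $\bp$-derivative — which acquires a minus sign at $\Ktwo = -\Kone$ — and the $\lambda$- and $\eps$-derivatives, which do not, combined with the need to carry complex conjugates through every application of the sesquilinear pairing.
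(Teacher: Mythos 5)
Your proposal is correct and follows essentially the same route as the paper's proof: \eqref{eq:TderivA} by writing $\bbeta_1^a = \bbeta_1 - \bbeta_2$ and taking the linear combination of the second and third lines of \eqref{eq:Tderiv}, then collapsing $(1-\tau)$ and $(1-\overline\tau)$ via the identity $\tau = e^{\im 2\pi/3}$; and \eqref{eq:TderivKtwo}, \eqref{eq:TderivAKtwo} by the conjugation symmetry $G^f(\cdot,\cdot;\lambda,\Ktwo) = \overline{G^f(\cdot,\cdot;\lambda,\Kone)}$, which the paper records at the kernel-derivative level in \eqref{eq:TderivZKK} (with the sign flip on $\nabla_\bp$ coming from $\Ktwo=-\Kone$) and which you package as the operator identity $T(\eps,\lambda,\Ktwo)[\phi]=\overline{T(\eps,\lambda,\Kone)[\overline\phi]}$ before taking derivatives. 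The extra steps you include — verifying the $R$- and $F$-symmetry of $\rho_i'$, checking that $\rho_i'$ span the kernel at $\Ktwo$ and inherit the normalization — are not in the paper's proof but are sound and appropriate bookkeeping.
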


Define 
\begin{equation}\label{eq:newalpha}
\alpha_*^a := |\frac{\sqrt3\im\,\overline{\tau}\overline{\theta_*}}{\gamma_*}| = \sqrt{3}\alpha_*.
\end{equation} 

\subsection{Band structure of the periodic strip $\Omega^{\eps}_a$ near the Dirac points}
We start with the following strip region 
\begin{equation}
\Omega^{\eps}_a := \Omega^J_a\backslash \cup_{m\geq0} (D^{\eps}+m\be_1^a).
\end{equation}
When $\eps=0$, $\Omega^0_a$ represents the region when the rotation angle $\eps=0$.
For $\eps\in\mathbb R$, define
\begin{equation}\label{eq:QP1pmA}
\begin{aligned}
\cH^{\eps,a}_{\text{loc}}:=\{&u\in H^1_{\text{loc}}(\Omega^{\eps}_a): \Delta u\in L^2_{\text{loc}}(\Omega^{\eps}_a), \quad u=0 \text{ on } \cup_{m\in\mathbb Z}(\partial D^{\eps} +m\be_1^a),\\
&u(\bx+ \be_2)= e^{\im \kp^*}u(\bx) \text{ for }\bx\in\Gamma_-^a, \quad \partial_{\nuGb}  u(\bx+ \be_2) =e^{\im \kp^*}\partial_{\nuGb}  u(\bx) \text{ for }\bx\in\Gamma_-^a\}.
\end{aligned}
\end{equation}
We consider the dispersion relation of the operator $\Delta$ on $\cH^{\eps,a}_{\text{loc}}$, that is, we find  $(\lambda,u)\in \mathbb R\times \cH^{\eps,a}_{\text{loc}}$, such that
\begin{equation}
\begin{aligned}
-\Delta u - \lambda u &= 0 \quad&\text{on } \Omega^{\eps}_a,\\
u &= 0 \quad&\text{on } \cup_{m\in\mathbb Z} (\partial D^{\eps}+m\be_1^a).
\end{aligned}
\end{equation}

At $\ell=0$ and $\bp(0) = \Kone$, comparing  \eqref{eq:TderivA} and the second equation in \eqref{eq:Tderiv}, $-\sqrt{3}\im\tau\theta_*$ is in the place of $\theta_*$. At $\ell=-\frac{2\pi}{3}$ and $\bp(-\frac{2\pi}{3}) = \Ktwo$, comparing \eqref{eq:TderivAKtwo} the second equation in \eqref{eq:Tderiv}, $\sqrt{3}\im\tau\theta_*$ is in the place of $\theta_*$.
Thus we obtain the following remark and lemma parallel to Remark~\ref{lem:commongap}
and Lemma~\ref{lem:0strip}, respectively.

\begin{remark}\label{lem:commongapA}
If the Assumption~\ref{lem:assNoFold} holds along $\bbeta_1^a$, then for an arbitrary fixed constant $\mathfrak d\in(0,1)$, when $\eps>0$ is sufficiently small, the operator $\Delta$ on $\Omega^{\pm\eps}_a$ have a common band gap $(\lambda_* - \sqrt3\hrad\eps, \lambda_* + \sqrt3\hrad\eps)$.
\end{remark}

\begin{lemma}\label{lem:0stripA}
Let $\bp(\ell):= \Kone + \bbeta_1^a$.
For $|\ell| \ll 1 $,
\begin{equation}\label{eq:DiracbandA}
\begin{aligned}
\mu_1(\bp(\ell)) &= \lambda_*  + \sqrt3|\frac{\theta_*}{\gamma_*}|\ell(1+O(\ell)) \quad\text{(increasing in $\ell$)}, \\
\mu_2(\bp(\ell)) &= \lambda_*  - \sqrt3|\frac{ \theta_*}{\gamma_*}|\ell(1+O(\ell)) \quad\text{(decreasing in $\ell$)}. 
\end{aligned}
\end{equation}
The corresponding Bloch modes can be chosen as
\begin{equation}
\begin{aligned}
v_1(\bx;\bp(\ell))&=\left( \frac{-\im \, \overline{\tau\theta_*}}{|\theta_*|}w_1 - w_2  +O(\ell) \right)\frac{1}{\sqrt{2+O(\ell)}},\\
v_2(\bx;\bp(\ell))&=\left(  \frac{-\im \, \overline{\tau\theta_*}}{|\theta_*|}w_1 +  w_2  +O(\ell) \right)\frac{1}{\sqrt{2+O(\ell)}}.
\end{aligned}
\end{equation}
For $|\ell + \frac{2\pi}{3}| \ll 1$,
\begin{equation}\label{eq:DiracbandA2}
\begin{aligned}
\mu_1(\bp(\ell)) &= \lambda_*  + \sqrt3|\frac{\theta_*}{\gamma_*}|(\ell + \frac{2\pi}{3})(1+O(\ell + \frac{2\pi}{3})) \quad\text{(increasing in $\ell$)}, \\
\mu_2(\bp(\ell)) &= \lambda_*  - \sqrt3|\frac{ \theta_*}{\gamma_*}|(\ell + \frac{2\pi}{3}))(1+O(\ell + \frac{2\pi}{3})) \quad\text{(decreasing in $\ell$)}. 
\end{aligned}
\end{equation}
The corresponding Bloch modes can be chosen as
\begin{equation}\label{eq:armKtwov}
\begin{aligned}
\mathfrak b_1 v_1(\bx;\bp(\ell))&=\left( \frac{\im\,\overline{\tau\theta_*}}{|\theta_*|}w_1 - w_2  +O(\ell) \right)\frac{1}{\sqrt{2+O(\ell + \frac{2\pi}{3})}},\\
\mathfrak b_2 
v_2(\bx;\bp(\ell))&=\left(  \frac{\im\,\overline{\tau\theta_*}}{|\theta_*|}w_1 +  w_2  +O(\ell) \right)\frac{1}{\sqrt{2+O(\ell + \frac{2\pi}{3})}}.
\end{aligned}
\end{equation}
Here $\mathfrak b_i$, $i=1,2$, are two phase factors.
\end{lemma}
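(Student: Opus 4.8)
The plan is to obtain Lemma~\ref{lem:0stripA} by rerunning the perturbation argument behind Proposition~\ref{lem:pmlamgen} and Lemma~\ref{lem:0strip}, now along the slice $\bp(\ell)=\Kone+\ell\bbeta_1^a$ and at the \emph{two} degeneracies this slice meets: the Dirac point at $\Kone$ (reached at $\ell=0$) and the Dirac point at $\Ktwo$ (reached at $\ell=-\tfrac{2\pi}{3}$, modulo a reciprocal lattice vector). For the first, I would set $\eps=0$ in the proof of Proposition~\ref{lem:pmlamgen} and replace the quasimomentum increment $\ell\bbeta_1$ by $\ell\bbeta_1^a$; the only datum that changes is the off-diagonal block of $\langle\brho,T(0,\lambda_*,\bp)\brho\rangle$, which by \eqref{eq:TderivA} has modulus $\sqrt3\,|\theta_*|$ instead of $|\theta_*|$. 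Consequently the dispersion matrix $\cM$ of \eqref{eq:dispM} degenerates (after dividing the first row by $\sqrt3$) to the same structure as in the zigzag case with $\bar\theta_*$ replaced by $\im\bar\tau\bar\theta_*$, so its characteristic equation yields $\lambda-\lambda_*=\pm\sqrt3\,|\theta_*/\gamma_*|\,|\ell|\,(1+O(\ell))$, and the two kernel vectors of $\cM$ — read in the basis $\{\rho_1,\rho_2\}$, hence $\{w_1,w_2\}$ via \eqref{eq:wi} — after $L^2$-normalization give the claimed expressions for $v_1,v_2$ near $\ell=0$. Passing from the ordered labelling $\lambda_{n}$ to the analytic labelling $\mu_n$ is done exactly as in Lemma~\ref{lem:0strip}: the two branches cross transversally at $\ell=0$, so analytic continuation interchanges the lower and upper bands and produces the increasing branch $\mu_1$ and the decreasing branch $\mu_2$, giving \eqref{eq:DiracbandA}.

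For the neighbourhood of $\ell=-\tfrac{2\pi}{3}$ I would repeat the whole computation at the Dirac point $(\Ktwo,\lambda_*)$, invoking Corollary~\ref{lem:Ktwo} for its eigenspace $\mathrm{span}\{w_1',w_2'\}$ with the conjugated densities $\brho'=(\rho_1',\rho_2')=(\bar\rho_2,\bar\rho_1)$, together with the derivative formulas \eqref{eq:TderivKtwo} and \eqref{eq:TderivAKtwo}. The $\bbeta_1^a$-block at $\Ktwo$ again has modulus $\sqrt3\,|\theta_*|$ (it is the negative of the one at $\Kone$), so the slope $\sqrt3\,|\theta_*/\gamma_*|$ reappears and \eqref{eq:DiracbandA2} follows by the same branch-matching. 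The corresponding Bloch modes are kernel vectors of the analogous $2\times2$ matrix; rewriting them in the $\{w_1,w_2\}$-frame — using $w_i'=\bar w_{3-i}$ and the identification $\bbeta_1^a=\bbeta_1-\bbeta_2$ — produces \eqref{eq:armKtwov}, with the undetermined unit factors $\mathfrak b_1,\mathfrak b_2$ reflecting that each analytic eigenbranch is defined only up to an $\ell$-dependent phase.

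The main obstacle I anticipate is the bookkeeping around the second degeneracy rather than any new analytic difficulty. One must pin down precisely which reciprocal-lattice translate carries $\bp(-\tfrac{2\pi}{3})$ onto $\Ktwo$, translate all statements about modes at $\Ktwo$ back into modes at $\bp(-\tfrac{2\pi}{3})$ written in terms of $w_1,w_2$, and keep the conjugation relations $G^f(\cdot,\cdot;\lambda,\Ktwo)=\overline{G^f(\cdot,\cdot;\lambda,\Kone)}$, $\rho_2(\bx)=\rho_1(\rflc\bx)$, and $w_i'=\bar w_{3-i}$ consistent so that the signs and phases in \eqref{eq:armKtwov} come out correctly. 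A secondary point is to observe that Assumption~\ref{lem:assNoFold} along $\bbeta_1^a$ is exactly what guarantees these are the \emph{only} degeneracies at energy $\lambda_*$ on the slice, so that the analytic branches $\mu_n$ are genuinely smooth off $\ell\in\{0,-\tfrac{2\pi}{3}\}+2\pi\mathbb Z$. Everything else — existence of the analytic branches, the $L^2$-normalization estimate, and uniform control of the remainder — is inherited verbatim from Sections~\ref{sec:bands} and \ref{sec:prep} through Proposition~\ref{lem:pmlamgen} and Lemma~\ref{lem:normest}.
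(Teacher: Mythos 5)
Your proposal is correct and follows essentially the same route as the paper: the paper obtains Lemma~\ref{lem:0stripA} by declaring it parallel to Lemma~\ref{lem:0strip}, with the only change being that the $\bbeta_1$-derivative block $\theta_*$ in \eqref{eq:Tderiv} is replaced by $-\sqrt3\im\tau\theta_*$ at $\Kone$ and by $\sqrt3\im\tau\theta_*$ at $\Ktwo$, exactly as supplied by Proposition~\ref{lem:TderivA}, and then rerunning the $2\times2$ perturbation argument of Proposition~\ref{lem:pmlamgen} and the branch-matching step of Lemma~\ref{lem:0strip} at each of the two degeneracies on the slice. Your identification of the bookkeeping issues (which reciprocal translate carries $\bp(\ell)$ onto $\Ktwo$, the conjugation relations $w_i'=\bar w_{3-i}$, and the origin of the phase factors $\mathfrak b_i$ as artifacts of the analytic continuation of the branches across $[-\pi,\pi]$) is precisely where care is needed; when you carry this out you will want to verify the exact signs in \eqref{eq:armKtwov} and the value of $\ell$ at which the slice meets $\Ktwo$ modulo $\Lambda^*$, since these are only sketched in the paper.
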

In the above, the phase factors show up in  \eqref{eq:armKtwov} because $v_n(\bp(\ell))$ needs to be smooth for  $\ell\in(-\pi,\pi)$ as explained in Section~\ref{sec:Floquet}.
Define $v_i := v_i(\bx;\bp(0))=v_i(\bx;\Kone)$, and $v_i' :=\mathfrak b_i v_i(\bx;\bp(-\frac{2\pi}{3}))= \mathfrak b_i v_i(\bx;\Ktwo)$. We obtain the relations
\begin{equation}
\begin{cases}
w_1 = \frac{1}{\sqrt2}\frac{-\im\tau\theta_*}{|\theta_*|} (v_1  +  v_2)\\
w_2 = \frac{1}{\sqrt2} (- v_1 + v_2)
\end{cases}, \quad
\begin{cases}
w_1' = \frac{1}{\sqrt2}\frac{\im\tau\theta_*}{|\theta_*|} (v_1'  +  v_2')\\
w_2' = \frac{1}{\sqrt2} (- v_1' + v_2')
\end{cases}.
\end{equation}


\subsection{Green's function in the infinite strip and the integral equation formulation}
Compared to the Green function \eqref{eq:G0} for the infinite strip $\Omega^0$ considered in Section \ref{sec:prep},
the Green function in the strip $\Omega_a^0$  contains four propagating modes: $v_{i}(\bx;\Kone)$ and $v_{i}(\bx;\Ktwo)$ ($i=1,2$), due to the fact that $\bp(\ell):= \Kone + \bbeta_1^a\ell$ intersects with both $\Kone+\Lambda$ and $\Ktwo+\Lambda$. As a result, it attains the following spectral representation:
\begin{equation}
\begin{aligned}\label{eq:G0A}
G^{0,a}(\bx,\by; \lambda_*) = &
\sum_{n\geq3}\frac{1}{2\pi}\int_{[-\pi,\pi ]} 
\frac{\overline{v_{n}(\by;\bp(\ell))} v_{n}(\bx;\bp(\ell))}{ \mu_{n}(\bp(\ell)) - \lambda_* } \, \dpt +\sum_{n=1,2}\frac{1}{2\pi}\text{p.v.}\int_{[-\pi,\pi ]} 
\frac{\overline{v_{n}(\by;\bp(\ell))} v_{n}(\bx;\bp(\ell))}{ \mu_{n}(\bp(\ell)) - \lambda_* } \, \dpt \\
&+\sum_{n=1,2}\frac{\im}{2\alpha_*^a}\overline{v_{i}(\by;\Kone)} v_{i}(\bx;\Kone) +\sum_{n=1,2}\frac{\im}{2\alpha_*^a}\overline{v_{i}(\by;\Ktwo)} v_{i}(\bx;\Ktwo) ,\quad \bx,\by\in\Omega^0,
\end{aligned}
\end{equation}
where its integral part takes the form
\begin{equation}
\begin{aligned}\label{eq:tildeG0A}
\tilde G^{0,a}(\bx,\by; \lambda_*) = 
\sum_{n\geq3}\frac{1}{2\pi}\int_{[-\pi,\pi ]} 
\frac{\overline{v_{n}(\by;\bp(\ell))} v_{n}(\bx;\bp(\ell))}{ \mu_{n}(\bp(\ell)) - \lambda_* } \, \dpt +\sum_{n=1,2}\frac{1}{2\pi}\text{p.v.}\int_{[-\pi,\pi ]} 
\frac{\overline{v_{n}(\by;\bp(\ell))} v_{n}(\bx;\bp(\ell))}{ \mu_{n}(\bp(\ell)) - \lambda_* } \, \dpt.
\end{aligned}
\end{equation}
The Green function in $\Omega^{\pm\eps}_a$ attains the following spectral representation:
\begin{equation}\label{eq:pGreenpmepsA}
G^{\pm\eps,a}(\bx,\by; \lambda) = 
 \sum_{n\geq1}\frac{1}{2\pi}\int_{[-\pi,\pi ]}
\frac{\overline{v_{n,\pm\eps}(\by;\bp(\ell))} v_{n,\pm\eps}(\bx;\bp(\ell))}{ \mu_{n,\pm\eps}(\bp(\ell)) - \lambda } \, \dpt ,\quad \bx,\by\in\Omega^0.
\end{equation}

We now investigate the interface modes along the armchair interface.  For $s\in\mathbb R$, we define the following quasi-periodic Sobolev space on $\Gamma^a$:
\begin{equation}\label{eq:HsG1a}
\cH^{s,a}(\Gamma^a):= \left\{ u(\bx_0+ t\be_2^a) = \sum_{n\in\mathbb Z} a_n e^{\im \Kone\cdot\be_2^a t} e^{\im 2\pi n t}: \|u\|_{\cH^{s,a}(\Gamma^a)}^2:=\sum_{n\in\mathbb Z} |a_n|^2 (1+n^2)^s\right\}.
\end{equation}
Here $x_0 = -\frac{1}{2}\be_1^a-\frac{1}{2}\be_2^a$.

We define the layer potentials 
$\cS^{\pm\eps,a}(\lambda)$, $\cD^{\pm\eps,a}(\lambda)$, $\cK^{\pm\eps,a}(\lambda)$, $\cK^{\pm\eps,a}(\lambda)$ and $\cN^{\pm\eps,a}(\lambda)$ parallel to 
\eqref{eq:lpotentialeps},  
 where the Green functions are replaced by $G^{\pm\eps,a}(\bx,\by,\lambda)$, the integral region is replaced by $\Gamma^a$, the integral operators on $\HhalfGa\times\HmhalfGa$ are defined by
\begin{equation}\label{eq:bTaeps}
\mathbb T^{\eps,a}
(\lambda):=
\left(\begin{matrix}
-\cK^{\eps,a}(\lambda) & \cS^{\eps,a}(\lambda) \\
 -\cN^{\eps,a}(\lambda) & \cK^{*,\eps,a}(\lambda)
\end{matrix}\right),
\end{equation}
and
\begin{equation}\label{eq:bTas}
\mathbb T_s^{\eps,a}(\lambda):=\mathbb T^{\eps,a} + \mathbb T^{-\eps,a}, \quad
\mathbb T_t^{\eps,a}(\lambda):=- \mathbb T^{\eps,a} + \mathbb T^{-\eps,a} + \mathbb I, \quad
\mathbb T_n^{\eps,a}(\lambda):=\mathbb T^{\eps,a} - \mathbb T^{-\eps,a} + \mathbb I.
\end{equation}
We characterize the interface modes by using boundary integral operators as follows. 
\begin{lemma}\label{lem:edgestate_armchair}
Let $\lambda \in (\lambda_* - \sqrt3\hrad\eps, \lambda_* + \sqrt3\hrad\eps)$.
\begin{itemize}
    \item [(i)] There exists an interface mode $u$ satisfying \eqref{eq:edgedef_armchair} if and only if there exists $(\psi,\phi)\in \HhalfGa\times\HmhalfGa$ such that
\begin{equation}\label{eq:sufficient_armchair}
\mathbb T_s^{\eps,a}
(\lambda)
\left(\begin{matrix}
\psi\\
\phi
\end{matrix}\right)
=0,\quad  \mathbb T_t^{\eps,a}
(\lambda)
\left(\begin{matrix}
\psi\\
\phi
\end{matrix}\right)
\neq0.
\end{equation}
\item [(ii)] If $u$ is an interface mode satisfying \eqref{eq:edgedef_armchair}, then $0\neq( u|_{\Gamma^a}, \partial_n u|_{\Gamma^a}) \in \HhalfGa\times \HmhalfGa$ satisfies
\begin{equation}\label{eq:necessary_armchair}
\mathbb T_s^{\eps,a}
(\lambda)
\left(\begin{matrix}
u|_{\Gamma^a}\\
\partial_n u|_{\Gamma^a}
\end{matrix}\right)
=0,\quad
\mathbb T_n^{\eps,a}
(\lambda)
\left(\begin{matrix}
u|_{\Gamma^a}\\
\partial_n u|_{\Gamma^a}
\end{matrix}\right)
=0.
\end{equation}
\end{itemize}
\end{lemma}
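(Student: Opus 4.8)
The plan is to mirror, line by line, the argument already developed for the zigzag interface in Sections~\ref{sec:characterization} and~\ref{sec:proofedge}, replacing the geometric data of $\Omega^J$ by that of $\Omega_a^J$ and the single pair of propagating modes by the two pairs $v_i(\cdot;\Kone)$, $v_i(\cdot;\Ktwo)$. The statement of Lemma~\ref{lem:edgestate_armchair} itself is a representation/matching result: it is the armchair analogue of Lemma~\ref{lem:edgestate}, and its proof is obtained by repeating the derivation of \eqref{eq:tracetoedge}--\eqref{eq:denstoedge} verbatim on the strip $\Omega_a^{J,\eps}$.

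First I would record the jump relations for the layer potentials $\cS^{\pm\eps,a}$, $\cD^{\pm\eps,a}$, $\cK^{\pm\eps,a}$, $\cK^{*,\pm\eps,a}$, $\cN^{\pm\eps,a}$ built from the Green functions $G^{\pm\eps,a}(\bx,\by;\lambda)$ in \eqref{eq:pGreenpmepsA}. These are the standard jump formulas \eqref{eq:jumpeps} with $\Gamma$ replaced by $\Gamma^a$; they hold because $G^{\pm\eps,a}$ has the same local singularity $-\tfrac{1}{2\pi}\ln|\bx-\by|$ as the free-space Green function, the difference being smooth near the diagonal (this follows from the spectral representation \eqref{eq:pGreenpmepsA} exactly as in the zigzag case, invoking \cite{FlissJoly2016}). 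The key structural input is that for $\lambda\in(\lambda_*-\sqrt3\hrad\eps,\lambda_*+\sqrt3\hrad\eps)$ the operator $\Delta$ on $\Omega_a^{\pm\eps}$ has a common spectral gap containing $\lambda$ — this is Remark~\ref{lem:commongapA}, which is itself a consequence of Proposition~\ref{lem:TderivA}, Lemma~\ref{lem:0stripA}, and Assumption~\ref{lem:assNoFold} along $\bbeta_1^a$. Consequently $G^{\pm\eps,a}(\bx,\by;\lambda)$ decays exponentially as $|\bx\cdot\be_1^a|\to\infty$.

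For part (ii): given an interface mode $u\in\cH_a^{J,\eps}$ with eigenvalue $\lambda$ in the gap, apply Green's second identity on the two sides of $\Gamma^a$ inside $\Omega_a^{J,\eps}$; the boundary terms at $\partial D_a^{J,\eps}$ vanish by the Dirichlet condition, the contributions along $\Gamma_\pm^a$ cancel by the $\be_2^a$-quasiperiodicity built into $\cH_a^{J,\eps}$, and the contributions at $|\bx\cdot\be_1^a|\to\infty$ vanish by the decay of $G^{\pm\eps,a}$ together with the decay of $u$. This yields the armchair analogue of \eqref{eq:tracetoedge}. Taking Dirichlet and Neumann traces from each side and using the jump relations gives two operator identities; adding and subtracting them, and regrouping exactly as in the passage from \eqref{eq:densuleft} to \eqref{eq:necessary}, produces $\mathbb T_s^{\eps,a}(\lambda)(u|_{\Gamma^a},\partial_n u|_{\Gamma^a})^T=0$ and $\mathbb T_n^{\eps,a}(\lambda)(u|_{\Gamma^a},\partial_n u|_{\Gamma^a})^T=0$; the nontriviality $0\neq(u|_{\Gamma^a},\partial_n u|_{\Gamma^a})$ follows since otherwise the representation formula forces $u\equiv0$. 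For part (i): the ``only if'' direction is contained in (ii) together with the observation that an interface mode cannot satisfy $\mathbb T_t^{\eps,a}(\lambda)(\psi,\phi)^T=0$ (adding the two systems shows $\mathbb T_t^{\eps,a}+\mathbb T_n^{\eps,a}=2\mathbb I+2\mathbb T_s^{\eps,a}$-type relation, so vanishing of both would force $(\psi,\phi)=0$); conversely, given $(\psi,\phi)$ with $\mathbb T_s^{\eps,a}(\psi,\phi)^T=0$ and $\mathbb T_t^{\eps,a}(\psi,\phi)^T\neq0$, define $u$ by the armchair analogue of \eqref{eq:denstoedge} and verify via the jump relations that $u$ and $\partial_n u$ match across $\Gamma^a$ (this is the content of $\mathbb T_s^{\eps,a}(\psi,\phi)^T=0$), that $u$ decays at infinity (gap decay of $G^{\pm\eps,a}$), and that $u\not\equiv0$ (this is $\mathbb T_t^{\eps,a}(\psi,\phi)^T\neq0$), so $u$ solves \eqref{eq:edgedef_armchair}.

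I do not expect a genuine obstacle here: every step is a transcription of an argument already carried out in the zigzag setting, and the only place where the armchair geometry enters nontrivially — the presence of \emph{two} pairs of propagating modes in $G^{0,a}$ — does not affect the proof of this lemma, since the decay of $G^{\pm\eps,a}(\bx,\by;\lambda)$ for $\lambda$ strictly inside the gap is all that is used (the doubling of modes matters later, in the armchair analogue of Proposition~\ref{lem:oplim} and in counting the characteristic values, not in the representation formula). The one point requiring a modicum of care is bookkeeping the normal direction $\bn$ on $\Gamma^a$ and the orientation of ``left'' versus ``right'' of $\Gamma^a$ so that the signs in the armchair analogues of \eqref{eq:denstoedge} and \eqref{eq:jumpeps} are consistent; I would fix $\bn$ pointing from the $\eps$-side to the $-\eps$-side and keep that convention throughout.
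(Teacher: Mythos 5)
Your proposal is correct and takes essentially the same route as the paper, which states Lemma~\ref{lem:edgestate_armchair} without a separate proof precisely because it is the verbatim transcription of the derivation of Lemma~\ref{lem:edgestate} in Section~\ref{sec:characterization} to the armchair strip; you reproduce that derivation step by step. One small slip worth fixing: in the ``only if'' step of part (i) you invoke a relation of the form $\mathbb T_t^{\eps,a}+\mathbb T_n^{\eps,a}=2\mathbb I+2\mathbb T_s^{\eps,a}$, but from the definitions \eqref{eq:bTas} the $\pm\mathbb T^{\pm\eps,a}$ cancel exactly and one has the cleaner identity $\mathbb T_t^{\eps,a}+\mathbb T_n^{\eps,a}=2\mathbb I$; combined with part (ii), which gives $\mathbb T_n^{\eps,a}(\lambda)\bigl(u|_{\Gamma^a},\partial_n u|_{\Gamma^a}\bigr)^T=0$, this immediately yields $\mathbb T_t^{\eps,a}(\lambda)\bigl(u|_{\Gamma^a},\partial_n u|_{\Gamma^a}\bigr)^T=2\bigl(u|_{\Gamma^a},\partial_n u|_{\Gamma^a}\bigr)^T\neq0$, so the trace data of an interface mode furnishes a pair $(\psi,\phi)$ satisfying \eqref{eq:sufficient_armchair}. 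This does not affect your conclusion (since $\mathbb T_s^{\eps,a}$ vanishes on that data anyway), but the precise identity simplifies the argument.
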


\subsection{Limiting operators}

Define
\begin{equation}
\begin{aligned}
c_i(\vec\phi)&= \overline{\langle \phi,v_i\rangle}_{\Gamma^a} - \langle\partial_n v_i, \psi\rangle_{\Gamma^a},\quad
c_i'(\vec\phi)&= \overline{\langle \phi,v_i'\rangle}_{\Gamma^a} - \langle\partial_n v_i', \psi\rangle_{\Gamma^a}, \quad i=1,2.
\end{aligned}
\end{equation}
Here $\langle\cdot,\cdot\rangle_{\Gamma^a}$ represents the $\cH^{-1/2,a}(\Gamma^a)$-$\cH^{1/2,a}(\Gamma^a)$ pairing on the armchair interface $\Gamma^a$.  
Define  $\vec v_i:= (v_i|_{\Gamma^a}, \partial_n v_i|_{\Gamma^a})$ and $\vec v_i':= (v_i'|_{\Gamma^a}, \partial_n v_i'|_{\Gamma^a})$. Similar to the argument for \eqref{eq:civj}, we have
\begin{equation}\label{eq:civjA}
\begin{aligned}
&c_1(\vec v_1) =\im \alpha_*^a, \quad c_2(\vec v_2) = -\im \alpha_*^a,  \quad c_i(\vec v_j) = 0 &\text{ for }i\neq j,\\
&c_1'(\vec v_1') =\im \alpha_*^a, \quad c_2'(\vec v_2') = -\im \alpha_*^a, \quad c_i'(\vec v_j') = 0 &\text{ for }i\neq j,  \\
&c_i(\vec v_j')= c_i'(\vec v_j)=0 &\text{ for }i,j=1,2.
\end{aligned}
\end{equation}
Define
\begin{equation}\label{eq:Pp}
\mathbb P\vec\phi : =
c_1(\vec\phi)\vec v_1+c_2(\vec\phi)\vec v_2, \quad 
\mathbb P'\vec\phi : =
c_1'(\vec\phi)\vec v_1'+c_2'(\vec\phi)\vec v_2',
\end{equation}
and
\begin{equation}\label{eq:Qp}
\mathbb Q
\vec\phi :=
c_2(\vec\phi)\vec v_1
+c_1(\vec\phi)\vec v_2,\quad
\mathbb Q'
\vec\phi :=
c_2'(\vec\phi)\vec v_1'
+c_1'(\vec\phi)\vec v_2'.
\end{equation}
Let
\begin{equation}\label{eq:betaxi_a}
    \beta^a(h): = \frac{1}{2\alpha_*^a}\frac{h}{\sqrt{\beta_*^2-h^2}}, \quad
    \xi^a(h):= \frac{\beta_*}{2\alpha_*^a}\frac{1}{\sqrt{\beta_*^2-h^2}},
\end{equation}
where $\alpha_*^a$ is defined in \eqref{eq:newalpha}.

\begin{prop} \label{lem:oplimA}
Let Assumption~\ref{lem:assNoFold} hold along $\bbeta_1^a$ and $t_*>0$. Let $\mathfrak d\in(0,1)$ be a constant.
The following limit holds uniformly for $h\in\mathbb C$ satisfying $|h|< \hrad$ as $\eps\to 0^+$ under the operator norm on $\HhalfGa\times\HmhalfGa$: 
\begin{equation}\label{eq:oplimarm}
\mathbb T^{\pm\eps,a}(\lambda_*+ \eps h) 
\to \mathbb U^{\pm}(h),
\end{equation}
where
\begin{equation}\label{eq:Upm}
\mathbb U^{\pm,a}(h):=
\tilde{\mathbb T}^{0,a} (\lambda_*) 
+ \beta^a(h)(\mathbb P+\mathbb P') \mp \xi^a(h)(\mathbb Q - \mathbb Q'),
\end{equation}
and 
\begin{equation}
\tilde{\mathbb T}^{0,a}
(\lambda):=
\left(\begin{matrix}
-\tilde\cK^{0,a}(\lambda) & \tilde\cS^{0,a}(\lambda) \\
 -\tilde\cN^{0,a}(\lambda) & \tilde\cK^{*,0,a}(\lambda)
\end{matrix}\right).
\end{equation}
Here the layer potentials 
$\tilde\cS^{0,a}(\lambda)$, $\tilde\cD^{0,a}(\lambda)$, $\tilde\cK^{0,a}(\lambda)$, $\tilde\cK^{0,a}(\lambda)$ and $\tilde\cN^{0,a}(\lambda)$ are defined parallel to 
\eqref{eq:lpotentialeps},  
 where the Green functions are replaced by $\tilde G^{0,a}(\bx,\by,\lambda)$ as defined in \eqref{eq:tildeG0A}, and the integral region is replaced by $\Gamma^a$.
\end{prop}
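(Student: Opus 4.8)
\textbf{Proof proposal for Proposition~\ref{lem:oplimA}.}

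The plan is to reduce the armchair computation to the zigzag one by following verbatim the structure of the proof of Proposition~\ref{lem:oplim} (given in Appendix~\ref{sec:oplim}), but now keeping track of the two families of propagating modes $v_i(\cdot;\Kone)$ and $v_i(\cdot;\Ktwo)$ instead of one. The starting point is the spectral representation \eqref{eq:pGreenpmepsA} of $G^{\pm\eps,a}(\bx,\by;\lambda)$ together with the representations \eqref{eq:Seps}--\eqref{eq:Neps} of the layer potentials (with $\Gamma$ replaced by $\Gamma^a$, and the band data replaced by $v_{n,\pm\eps}$, $\mu_{n,\pm\eps}$ along the slice $\bp(\ell)=\Kone+\ell\bbeta_1^a$). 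First I would isolate, in each of the four integrals making up the entries of $\mathbb T^{\pm\eps,a}(\lambda_*+\eps h)$, the contributions of the two bands $n=1,2$ near the two values $\ell=0$ and $\ell=-\tfrac{2\pi}{3}$ at which $\mu_{n,\pm\eps}(\bp(\ell))\to\lambda_*$; by Assumption~\ref{lem:assNoFold} along $\bbeta_1^a$, these are the only places where the denominators degenerate. All the remaining parts of the sums and integrals converge, uniformly for $|h|<\hrad$, to the corresponding pieces of $\tilde{\mathbb T}^{0,a}(\lambda_*)$ (this is exactly the $n\ge 3$ plus principal-value $n=1,2$ content of \eqref{eq:tildeG0A}); this step is identical to the zigzag case and I would simply cite the Appendix~\ref{sec:oplim} argument.

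Next comes the heart of the matter: evaluating the two singular contributions. Near $\ell=0$, Lemma~\ref{lem:0stripA} gives $\mu_{n,\pm\eps}(\bp(\ell))$ and the normalized Bloch modes, and the slope there is $\alpha_*^a=\sqrt3\alpha_*$ by \eqref{eq:newalpha} and \eqref{eq:TderivA}; near $\ell=-\tfrac{2\pi}{3}$, the analogous data come from \eqref{eq:DiracbandA2}--\eqref{eq:armKtwov}, with the \emph{same} slope $\alpha_*^a$ but with $w_i$ replaced by $w_i'$ (equivalently $\theta_*\to -\theta_*$ up to the phase $\tau$, as recorded in Proposition~\ref{lem:TderivA}). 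Following the zigzag computation in Appendix~\ref{sec:oplim}, the $\ell=0$ neighborhood contributes precisely $\beta^a(h)\mathbb P\mp\xi^a(h)\mathbb Q$ — this is the same Cauchy-principal-value-plus-residue computation as before, only with $\alpha_*$ replaced by $\alpha_*^a$ and with the pair $\{v_1,v_2\}$ of the armchair cell, which is why $\beta^a,\xi^a$ in \eqref{eq:betaxi_a} carry $\alpha_*^a$. By the identical argument applied at $\ell=-\tfrac{2\pi}{3}$ with the primed modes, that neighborhood contributes $\beta^a(h)\mathbb P'\mp\xi^a(h)\mathbb Q'$; the sign in front of $\mathbb Q'$ is $\mp$ for the same reason as for $\mathbb Q$ (the residue picks up $\mathrm{sgn}$ of the band slope times the $\pm\eps$ splitting, and by \eqref{eq:TderivAKtwo} the splitting at $\Ktwo$ has the opposite sign of that at $\Kone$, which is exactly why \eqref{eq:Upm} has $\mathbb Q-\mathbb Q'$ rather than $\mathbb Q+\mathbb Q'$). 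Summing the two singular pieces and the regular limit yields $\mathbb U^{\pm,a}(h)$ as in \eqref{eq:Upm}.

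Finally I would record the uniform-in-$h$ operator-norm convergence: on $\{|h|<\hrad\}$ the denominators $\mu_{n,\pm\eps}(\bp(\ell))-(\lambda_*+\eps h)$ stay uniformly bounded away from $0$ outside fixed neighborhoods of $\ell\in\{0,-\tfrac{2\pi}{3}\}$, and inside those neighborhoods the rescaling $\ell\mapsto \ell/\eps$ together with Lemma~\ref{lem:0stripA} and Remark~\ref{lem:phase} (the $O(\eps)$ closeness of $u_{n,\pm\eps}$ to the limiting modes in $H^1$, hence of their traces in $\HhalfGa\times\HmhalfGa$) makes the error $o(1)$ uniformly; this is the same Lebesgue-type estimate used in Appendix~\ref{sec:oplim}. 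The main obstacle, and the only genuinely new bookkeeping relative to Proposition~\ref{lem:oplim}, is getting the \emph{signs and phases right at the second Dirac point $\Ktwo$}: one must check that the phase factors $\mathfrak b_i$ in \eqref{eq:armKtwov} (needed to make $v_n(\bp(\ell))$ analytic across $\ell\in(-\pi,\pi)$) drop out of the bilinear expressions defining $\mathbb P',\mathbb Q'$ — which they do, since the Green function and hence all layer potentials are phase-independent (Remark after \eqref{eq:pGreenpmeps}) — and that the relative sign between the $\Kone$ and $\Ktwo$ residues is precisely what produces $\mathbb Q-\mathbb Q'$; this is where the computation of $\langle\brho',\partial_\eps T(0,\lambda_*,\Ktwo)\brho'\rangle$ in \eqref{eq:TderivKtwo} (the $-t_*$ versus $+t_*$ on the diagonal) is used decisively.
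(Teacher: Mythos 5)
Your proposal is correct and follows exactly the route the paper intends: the paper states Proposition~\ref{lem:oplimA} without a separate proof, relying on the parallel with Proposition~\ref{lem:oplim}, and your argument supplies precisely the expected adaptation — splitting off the two singular neighborhoods $\ell\approx 0$ and $\ell\approx -\tfrac{2\pi}{3}$, rescaling with the slope $\alpha_*^a=\sqrt3\alpha_*$ to obtain $\beta^a,\xi^a$, and using the sign flip $t_*\mapsto -t_*$ at $\Ktwo$ from \eqref{eq:TderivKtwo} to get $\mathbb Q-\mathbb Q'$ rather than $\mathbb Q+\mathbb Q'$. Your observation that the phase factors $\mathfrak b_i$ cancel in the bilinear expressions defining $\mathbb P'$ and $\mathbb Q'$ is the one genuinely new bookkeeping point, and you handle it correctly.
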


\begin{coro}\label{lem:stnlimitsarm}
Let Assumption~\ref{lem:assNoFold} hold along $\bbeta_1^a$ and $t_*>0$. Let $\mathfrak d\in(0,1)$ be a constant.
We have
\begin{equation}\label{eq:sufficientlimarm}
\mathbb T_s^{\eps,a} (\lambda_*+\eps h)
 =\mathbb U^a_s (h)+\mathbb R_1^a(h,\eps),
\end{equation}
\begin{equation}\label{eq:nontriviallimarm}
\mathbb T_t^{\eps,a} (\lambda_*+\eps h)
=\mathbb U^a_t(h)+\mathbb R_2^a(h,\eps),
\end{equation}
\begin{equation}\label{eq:necessarylimarm}
\mathbb T_n^{\eps,a} (\lambda_*+\eps h)
=\mathbb U^a_n(h)+\mathbb R_3^a(h,\eps),
\end{equation}
where
\begin{equation}
\mathbb U^a_s (h):=2\tilde{\mathbb T}^{0,a}(\lambda_*) 
+ 2\beta^a(h)(\mathbb P + \mathbb P'), \quad
\mathbb U^a_t(h):=\mathbb I
+2\xi^a(h)(\mathbb Q - \mathbb Q'), \quad
\mathbb U^a_n(h):=\mathbb I
-2\xi^a(h)(\mathbb Q - \mathbb Q'),
\end{equation}
and $\|R_i^a(h,\eps)\|_{ \HhalfGa\times\HmhalfGa\to \HhalfGa\times\HmhalfGa} =o(1)$ as $\eps\to 0^+$ uniformly for $|h|<\hrad$, 
$i=1,2,3$. 
\end{coro}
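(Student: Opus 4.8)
\textbf{Proof proposal for Corollary~\ref{lem:stnlimitsarm}.}

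The plan is to reduce everything to Proposition~\ref{lem:oplimA} combined with the algebraic definitions \eqref{eq:bTas} of $\mathbb T_s^{\eps,a}$, $\mathbb T_t^{\eps,a}$, $\mathbb T_n^{\eps,a}$, in exact parallel to how Corollary~\ref{lem:stnlimits} was obtained from Proposition~\ref{lem:oplim} in the zigzag case. First I would recall from \eqref{eq:bTas} that
\begin{equation*}
\mathbb T_s^{\eps,a}(\lambda) = \mathbb T^{\eps,a}(\lambda) + \mathbb T^{-\eps,a}(\lambda), \quad
\mathbb T_t^{\eps,a}(\lambda) = - \mathbb T^{\eps,a}(\lambda) + \mathbb T^{-\eps,a}(\lambda) + \mathbb I, \quad
\mathbb T_n^{\eps,a}(\lambda) = \mathbb T^{\eps,a}(\lambda) - \mathbb T^{-\eps,a}(\lambda) + \mathbb I,
\end{equation*}
and then substitute $\lambda = \lambda_* + \eps h$ and apply the convergence \eqref{eq:oplimarm}-\eqref{eq:Upm}. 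Writing $\mathbb T^{\pm\eps,a}(\lambda_*+\eps h) = \mathbb U^{\pm,a}(h) + \mathbb E^{\pm}(h,\eps)$ with $\|\mathbb E^{\pm}(h,\eps)\| = o(1)$ uniformly for $|h|<\hrad$, I would compute
\begin{equation*}
\mathbb U^{+,a}(h) + \mathbb U^{-,a}(h) = 2\tilde{\mathbb T}^{0,a}(\lambda_*) + 2\beta^a(h)(\mathbb P + \mathbb P'),
\end{equation*}
since the $\mp\xi^a(h)(\mathbb Q - \mathbb Q')$ terms cancel, giving $\mathbb U^a_s(h)$; similarly $-\mathbb U^{+,a}(h) + \mathbb U^{-,a}(h) + \mathbb I = \mathbb I + 2\xi^a(h)(\mathbb Q - \mathbb Q') = \mathbb U^a_t(h)$ because the $\tilde{\mathbb T}^{0,a}$ and $\mathbb P,\mathbb P'$ terms cancel, and $\mathbb U^{+,a}(h) - \mathbb U^{-,a}(h) + \mathbb I = \mathbb I - 2\xi^a(h)(\mathbb Q - \mathbb Q') = \mathbb U^a_n(h)$. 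Setting $\mathbb R_1^a(h,\eps) := \mathbb E^+(h,\eps) + \mathbb E^-(h,\eps)$, $\mathbb R_2^a(h,\eps) := -\mathbb E^+(h,\eps) + \mathbb E^-(h,\eps)$, $\mathbb R_3^a(h,\eps) := \mathbb E^+(h,\eps) - \mathbb E^-(h,\eps)$ immediately yields the three identities \eqref{eq:sufficientlimarm}-\eqref{eq:necessarylimarm} with remainders of operator norm $o(1)$ uniformly in $|h|<\hrad$ as $\eps\to 0^+$, by the triangle inequality.

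Since Proposition~\ref{lem:oplimA} is invoked as a stated result, the corollary is essentially a bookkeeping consequence; there is no genuine obstacle. The one point deserving a line of care is the uniformity of the error estimate: the $o(1)$ bound on $\mathbb E^{\pm}(h,\eps)$ must hold uniformly over the full disk $|h|<\hrad$ (equivalently $\lambda$ ranging over the corresponding arc near $\lambda_*$ with $|\lambda - \lambda_*|<\sqrt3\hrad\eps$), which is exactly the content of the ``uniformly for $h\in\mathbb C$ satisfying $|h|<\hrad$'' clause in Proposition~\ref{lem:oplimA}; passing to sums preserves this uniformity trivially. I would also note in passing that the operator norm is taken from $\HhalfGa\times\HmhalfGa$ to $\HhalfGa\times\HmhalfGa$ throughout, so the finite-rank operators $\mathbb P,\mathbb P',\mathbb Q,\mathbb Q'$ (built from the traces $\vec v_i, \vec v_i'$ of the four propagating Bloch modes at $\Kone$ and $\Ktwo$) and the operator $\tilde{\mathbb T}^{0,a}(\lambda_*)$ are all bounded on this space, so each $\mathbb U^a_\bullet(h)$ is a well-defined bounded operator and the decomposition is meaningful. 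This completes the proof.
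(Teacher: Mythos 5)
Your proposal is correct and matches the paper's (implicit) reasoning: the corollary is an immediate algebraic consequence of Proposition~\ref{lem:oplimA} via the definitions in \eqref{eq:bTas}, exactly in parallel to how Corollary~\ref{lem:stnlimits} follows from Proposition~\ref{lem:oplim}. One tiny slip in your aside: with $\lambda=\lambda_*+\eps h$ and $|h|<\hrad$ one has $|\lambda-\lambda_*|<\hrad\eps$, not $\sqrt{3}\,\hrad\eps$ — but this parenthetical remark plays no role in the argument, which is otherwise clean.
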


Define the function spaces
\begin{equation}\label{eq:XYA}
 X:=\text{span}\{\vec v_1, \vec v_2\},\quad  X':=\text{span}\{\vec v_1', \vec v_2'\},\quad Z:=\{\vec\phi \in \HhalfGa\times\HmhalfGa, c_i(\vec\phi)=0, i=1,2\}.
\end{equation}
\begin{equation}\label{eq:PZ}
P_Z (\vec\phi) :=\vec\phi -
\frac{c_1(\vec\phi)}{i\alpha_*^a}
\vec v_1 -
\frac{c_2(\vec\phi)}{-i\alpha_*^a}
\vec v_2-
\frac{c_1'(\vec\phi)}{i\alpha_*^a}
\vec v_1' -
\frac{c_2'(\vec\phi)}{-i\alpha_*^a}
\vec v_2'. 
\end{equation}
Since $c_i(P_Z (\vec\phi))=c_i'(P_Z (\vec\phi))=0$, $i=1,2$, we obtain
the direct sum decomposition:
\begin{equation}\label{eq:directsumA}
\HhalfGa\times\HmhalfGa = X \bigoplus X'\bigoplus Z.
\end{equation}

The following fact will be used repeatedly in the proofs. 
The relations in \eqref{eq:civjA} imply that $\vec v_1$, $\vec v_2$, $\vec v_1'$ and $\vec v_2'$ are linearly independent.
For the operators defined in \eqref{eq:P} and \eqref{eq:Q}, 
\begin{equation}\label{}
\begin{aligned}
\mathbb P \vec v_1= \im\alpha_*^a \vec v_1, \quad \mathbb P \vec v_2= - \im\alpha_*^a \vec v_2, \quad 
\mathbb Q \vec v_1= \im \vec v_2, \quad 
\mathbb Q \vec v_2= - \im\alpha_*^a\vec v_1,\\
\mathbb P' \vec v_1'= \im\alpha_*^a \vec v_1', \quad \mathbb P' \vec v_2'= - \im\alpha_*^a \vec v_2', \quad 
\mathbb Q' \vec v_1'= \im \vec v_2', \quad 
\mathbb Q' \vec v_2'= - \im\alpha_*^a\vec v_1',
\end{aligned}
\end{equation}
and 
\begin{equation}\label{}
\mathbb P X'= \mathbb Q X'=0,\quad \mathbb P' X= \mathbb Q' X=0, \quad \mathbb P Z= \mathbb Q Z=\mathbb P' Z= \mathbb Q' Z=0.
\end{equation}

\begin{prop}\label{lem:sufflimpropA}
The following holds for $h\in\mathbb C$ satisfying $|h|< \hrad$. The operators $\mathbb U^a_s(h)$, $\mathbb U^a_s(h)$, and $\mathbb U^a_t(h)$ are analytic in $h$ and are Fredholm operators with index zero. The only characteristic value of each operator is $h=0$ and the multiplicity of the characeteric value is 4. In addition, the kernels are $\mathbb U^a_s(0)$, $\mathbb U^a_t(0)$ and $\mathbb U^a_n(0)$ are given by 
\begin{equation*}
\begin{aligned}
& \ker \;\mathbb U^a_s(0) =  \text{span} \left\{ \vec v_1,\vec v_2, \vec v_1',\vec v_2'\right\} = X\bigoplus X', \\
& \ker \;\mathbb U^a_t(0) =  \text{span} \left\{ \vec v_1 - \im\vec v_2,  \vec v_1' + \im\vec v_2' \right\}, \\
& \ker \;\mathbb U^a_n(0) =  \text{span} \left\{ \vec v_1 + \im\vec v_2,  \vec v_1' -  \im\vec v_2' \right\}.
\end{aligned}
\end{equation*}
\end{prop}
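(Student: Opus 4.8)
The plan is to mirror, nearly verbatim, the structure of the zigzag analysis carried out in Proposition~\ref{lem:sufflimprop} and the two propositions on $\mathbb U_n$, $\mathbb U_t$, but now with the doubled collection of propagating modes $\{\vec v_1,\vec v_2,\vec v_1',\vec v_2'\}$ and the decomposition $\HhalfGa\times\HmhalfGa = X\bigoplus X'\bigoplus Z$ of \eqref{eq:directsumA}. First I would establish the Fredholm and index-zero properties: the operator $\tilde{\mathbb T}^{0,a}(\lambda_*)$ is Fredholm of index zero because its diagonal constituents $\tilde\cS^{0,a}$ and $\tilde\cN^{0,a}$ are (by the theory in \cite{mclean2000strongly,saranen2001periodic}) and the off-diagonal $\tilde\cK^{0,a}$, $\tilde\cK^{*,0,a}$ are compact; adding the finite-rank perturbations $\beta^a(h)(\mathbb P+\mathbb P')$ and $\pm\xi^a(h)(\mathbb Q-\mathbb Q')$ preserves this. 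Analyticity in $h$ is immediate from the explicit dependence of $\beta^a$, $\xi^a$ on $h$ through $\sqrt{\beta_*^2-h^2}$, which is analytic for $|h|<\hrad<\beta_*$.

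Next I would compute the kernel of each limiting operator at $h=0$ and show $h=0$ is the only characteristic value. For $\mathbb U^a_s$: since $\beta^a(0)=0$, we get $\ker\mathbb U^a_s(0)=\ker\tilde{\mathbb T}^{0,a}(\lambda_*)$, and the analogue of Lemma~\ref{lem:tildeT0} (proven the same way, using Lemma~\ref{lem:bdstate} adapted to $\Omega_a^0$ and the fact that $-\Delta^{0,a}$ has no eigenvalue in the unperturbed strip) gives $\ker\tilde{\mathbb T}^{0,a}(\lambda_*)=X\bigoplus X'$; for $h\neq 0$, if $(2\tilde{\mathbb T}^{0,a}+2\beta^a(h)(\mathbb P+\mathbb P'))\vec\phi=0$ then, because $\operatorname{Ran}(\mathbb P+\mathbb P')\subset X\bigoplus X'$ and $\operatorname{Ran}\tilde{\mathbb T}^{0,a}=Z$ with $(X\bigoplus X')\cap Z=\{0\}$, both summands vanish, forcing $\vec\phi=0$. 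For $\mathbb U^a_t$ and $\mathbb U^a_n$: since $\mathbb U^a_t(h)|_Z=\mathbb U^a_n(h)|_Z=\mathbb I_Z$, kernels live in $X\bigoplus X'$; writing $\vec\phi=a\vec v_1+b\vec v_2+a'\vec v_1'+b'\vec v_2'$ and using the action formulas \eqref{eq:PQu}-type relations for $\mathbb Q,\mathbb Q'$ on the respective bases, the kernel condition decouples into two $2\times2$ systems (one on $\operatorname{span}\{\vec v_1,\vec v_2\}$, one on $\operatorname{span}\{\vec v_1',\vec v_2'\}$) whose determinants are $1-4(\alpha_*^a)^2(\xi^a(h))^2$, vanishing iff $h=0$ by \eqref{eq:betaxi_a}; at $h=0$ one reads off $\ker\mathbb U^a_t(0)=\operatorname{span}\{\vec v_1-\im\vec v_2,\ \vec v_1'+\im\vec v_2'\}$ and $\ker\mathbb U^a_n(0)=\operatorname{span}\{\vec v_1+\im\vec v_2,\ \vec v_1'-\im\vec v_2'\}$ (note $\mathbb Q'$ flips the sign pattern relative to $\mathbb Q$, which is why the primed summand picks the opposite combination — this is the algebraic fingerprint of the second Dirac point $\Ktwo$).

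Finally I would verify that each characteristic value $h=0$ has multiplicity exactly $4$. The kernel is $4$-dimensional in each case, so the multiplicity is at least $4$; I would then show every kernel element has rank $1$ by differentiating $h\mapsto\mathbb U^a_\bullet(h)\vec\phi(h)$ at $h=0$ exactly as in the zigzag proofs. For $\mathbb U^a_s$ the first derivative picks out $\tfrac{1}{\alpha_*^a\beta_*}(\mathbb P+\mathbb P')\vec\phi$, which is nonzero in $X\bigoplus X'$ unless $\vec\phi=0$ (same disjointness argument), so each $\vec\phi\in\ker$ has rank $1$, giving multiplicity $4$. For $\mathbb U^a_t$, $\mathbb U^a_n$ the first derivative of $\xi^a$ vanishes at $0$, so rank $1$ follows from the second-derivative obstruction: the equation forcing the second derivative to vanish has no solution because the relevant right-hand side lies outside the range of the singular $2\times2$ matrix $\left(\begin{smallmatrix}1&\pm\im\\ \mp\im&1\end{smallmatrix}\right)$, exactly as in the zigzag case. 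The main obstacle is bookkeeping rather than conceptual: one must keep the primed/unprimed modes and the opposite sign of $\mathbb Q'$ straight throughout, and confirm that the analogue of Lemma~\ref{lem:tildeT0} genuinely gives $\ker\tilde{\mathbb T}^{0,a}(\lambda_*)=X\bigoplus X'$ (and not something larger) — this rests on the no-fold Assumption~\ref{lem:assNoFold} along $\bbeta_1^a$, which guarantees that $\lambda_*$ is attained only over $(\Kone+\Lambda^*)\cup(\Ktwo+\Lambda^*)$, so that exactly four propagating modes, and no more, sit at energy $\lambda_*$.
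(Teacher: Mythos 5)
Your plan follows the intended route: mirror the zigzag proofs of Proposition~\ref{lem:sufflimprop} and the $\mathbb U_t$/$\mathbb U_n$ proposition, doubled across the decomposition $X\oplus X'\oplus Z$. The Fredholm/analyticity discussion, the use of the disjointness of $\mathrm{Ran}(\mathbb P+\mathbb P')\subset X\oplus X'$ and $\mathrm{Ran}\,\tilde{\mathbb T}^{0,a}=Z$ to pin down $h=0$ as the only characteristic value of $\mathbb U^a_s$, the $2\times 2$ block computations of the kernels, and the invocation of an analogue of Lemma~\ref{lem:tildeT0} (resting on Assumption~\ref{lem:assNoFold} along $\bbeta_1^a$, so that exactly four propagating modes sit at $\lambda_*$) are all correct and in line with the paper.

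There is one genuine error in the multiplicity step, and it contradicts the conclusion you are trying to reach. You claim that every kernel element of $\mathbb U^a_t(0)$ and $\mathbb U^a_n(0)$ has \emph{rank one}, ``following the second-derivative obstruction.'' But those kernels are only two-dimensional, so rank one would give null multiplicity $2$, not $4$. In fact the very mechanism you cite forces rank two, not one: since $\xi^{a\prime}(0)=0$, one has $\tfrac{d}{dh}[\mathbb U^a_n(h)\vec\phi(h)]|_{h=0}=\mathbb U^a_n(0)\vec\phi'(0)$, which can always be killed by taking $\vec\phi'(0)\in\ker\mathbb U^a_n(0)$, so every nonzero $\vec\phi\in\ker\mathbb U^a_n(0)$ has rank at least $2$; the second-derivative obstruction (the right-hand side $\propto(1,\pm\im)^T$ is outside the range of the singular matrix $\bigl(\begin{smallmatrix}1&\pm\im\\\mp\im&1\end{smallmatrix}\bigr)$) then caps the rank at exactly $2$. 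This is precisely what the paper's zigzag proof establishes for $\mathbb U_n$: a one-dimensional kernel with a rank-$2$ eigenvector gives multiplicity $2$. Here the kernel is two-dimensional (one copy for $X$, one for $X'$), so the correct count is $2+2=4$. By contrast your rank-one claim is correct for $\mathbb U^a_s$, because there $\beta^{a\prime}(0)\neq0$ and the kernel is four-dimensional, so $4\cdot 1=4$. As a smaller point, the phrase ``$\mathbb Q'$ flips the sign pattern relative to $\mathbb Q$'' is not what is happening — $\mathbb Q'$ acts on $X'$ with the \emph{same} sign pattern as $\mathbb Q$ on $X$; the flip comes from the minus sign in $\mathbb Q-\mathbb Q'$ in $\mathbb U^a_t$, $\mathbb U^a_n$, which in turn traces back to the opposite sign of $t_*$ at $\Ktwo$ in Proposition~\ref{lem:TderivA}. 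The matrix computation you carry out is right; only the attribution is loose.
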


\subsection{The characteristic values of the integral operators}
Define
\begin{equation}
\mathfrak u_1:= \vec v_1+ \im\vec v_2,\quad 
\mathfrak u_2:= \vec v_1- \im\vec v_2,\quad
\mathfrak u_1':= \vec v_1'+ \im\vec v_2',\quad 
\mathfrak u_2':= \vec v_1'- \im\vec v_2',
\end{equation}
and
\begin{equation}
X_i:=\text{span}\{\mathfrak u_i\},\quad X_i':=\text{span}\{\mathfrak u_i'\},\quad i=1,2.
\end{equation}
Then
\begin{equation}\label{eq:XXYarm}
X=X_1\bigoplus X_2,\quad X'= X_1'\bigoplus X_2',\quad \HhalfGa\times\HmhalfGa = X_1\bigoplus X_2\bigoplus X_1'\bigoplus X_2'\bigoplus Z.
\end{equation}
We have 
\begin{equation}\label{}
\begin{aligned}
&\mathbb P \mathfrak u_1= \im\alpha_*^a\mathfrak u_2, \quad 
\mathbb P \mathfrak u_2= \im\alpha_*^a\mathfrak u_1, \quad 
\mathbb Q \mathfrak u_1= \alpha_*^a\mathfrak u_1, \quad 
\mathbb Q \mathfrak u_2= -\alpha_*^a\mathfrak u_2,\quad \mathbb P X'= \mathbb Q X'=0,\\
&\mathbb P' \mathfrak u_1'= \im\alpha_*^a\mathfrak u_2', \quad 
\mathbb P' \mathfrak u_2'= \im\alpha_*^a\mathfrak u_1', \quad 
\mathbb Q' \mathfrak u_1'= \alpha_*^a\mathfrak u_1', \quad 
\mathbb Q' \mathfrak u_2'= -\alpha_*^a\mathfrak u_2',\quad \mathbb P' X= \mathbb Q' X=0,\\
&\mathbb P Z = \mathbb Q Z= \mathbb P' Z = \mathbb Q' Z =0
\end{aligned}
\end{equation}

Lemmas~\ref{lem:rank1arm}-\ref{lem:nphiarm} are  parallel to Lemmas~\ref{lem:rank1}- \ref{lem:tphi}. We state them below without proof.
\begin{lemma}\label{lem:rank1arm}
Let Assumption~\ref{lem:assNoFold} hold along $\bbeta_1^a$ and $t_*>0$. Let $\mathfrak d\in(0,1)$ be a constant.
Suppose $|h_0|<\hrad$ and $0\neq\vec\phi\in\ker \; \mathbb T_s^{\eps,a} (\lambda_*+\eps h_0)$. When $|\eps| \ll 1$, the rank of $\vec\phi$ is $1$.
\end{lemma}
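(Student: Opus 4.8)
\textbf{Proof proposal for Lemma~\ref{lem:rank1arm}.} The plan is to mirror the argument of Lemma~\ref{lem:rank1} in the zigzag case, adapting the algebra to the four-dimensional kernel that appears along the armchair interface. By definition of rank (Appendix~\ref{sec:GStheory}), I must show that whenever $\vec\phi(h)\in\HhalfGa\times\HmhalfGa$ is analytic near $h_0$ with $\vec\phi(h_0)=\vec\phi\in\ker\;\mathbb T_s^{\eps,a}(\lambda_*+\eps h_0)$ and $\vec\phi\neq0$, the derivative $\tfrac{d}{dh}\big(\mathbb T_s^{\eps,a}(\lambda_*+\eps h)\vec\phi(h)\big)\big|_{h=h_0}$ is nonzero; equivalently, that there is no $\vec\psi$ solving
\begin{equation*}
\mathbb T_s^{\eps,a}(\lambda_*+\eps h_0)\vec\psi + \frac{d}{dh}\big(\mathbb T_s^{\eps,a}(\lambda_*+\eps h)\big)\big|_{h=h_0}\vec\phi = 0.
\end{equation*}
By Corollary~\ref{lem:stnlimitsarm}, $\tfrac{d}{dh}\mathbb T_s^{\eps,a} = 2(\beta^a)'(h)(\mathbb P+\mathbb P') + \partial_h\mathbb R_1^a(h,\eps)$, and the Cauchy integral representation of $\partial_h\mathbb R_1^a$ on a slightly larger disc shows $\|\partial_h\mathbb R_1^a(h,\eps)\| = o(1)$ uniformly for $|h|<\hrad$ as $\eps\to0^+$, exactly as in Lemma~\ref{lem:rank1}.

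First I would pair the above identity with $\left(\begin{smallmatrix}0&-1\\1&0\end{smallmatrix}\right)\vec\phi$ in the $(\HmhalfGa\times\HhalfGa)$-$(\HhalfGa\times\HmhalfGa)$ duality. Using the skew-adjointness relation for $\tilde{\mathbb T}^{0,a}(\lambda_*)$ analogous to \eqref{eq:tildeT0adj} (which annihilates the $\mathbb T_s^{\eps,a}\vec\psi$ term since $\mathbb T_s^{\eps,a} = 2\tilde{\mathbb T}^{0,a}+2\beta^a(\mathbb P+\mathbb P')+\mathbb R_1^a$ and $\beta^a(h_0)(\mathbb P+\mathbb P')\vec\phi$ contributes a controllable term once $\vec\phi$ is expanded), I obtain a scalar equation
\begin{equation*}
\big\langle \left(\begin{matrix}0&-1\\1&0\end{matrix}\right)\vec\phi,\; 2(\beta^a)'(h_0)(\mathbb P+\mathbb P')\vec\phi + \partial_h\mathbb R_1^a(h_0,\eps)\vec\phi\big\rangle = 0.
\end{equation*}
Next I would expand $\vec\phi$ using the decomposition \eqref{eq:directsumA}: by the armchair analogue of Lemma~\ref{lem:suffphi}, any element of $\ker\;\mathbb T_s^{\eps,a}$ has the form $\vec\phi = \vec\phi_0 + J_1^a(h_0,\eps)[\vec\phi_0]$ with $\vec\phi_0 = a\vec v_1 + b\vec v_2 + a'\vec v_1' + b'\vec v_2' \in X\oplus X'$ and $\|J_1^a\|=o(1)\|\vec\phi_0\|$. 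Feeding this into the pairing and using \eqref{eq:civjA} together with the action of $\mathbb P,\mathbb P'$ on the $\vec v_i,\vec v_i'$, the quadratic form $\langle(\begin{smallmatrix}0&-1\\1&0\end{smallmatrix})\vec\phi,(\mathbb P+\mathbb P')\vec\phi\rangle$ evaluates to a nonzero multiple of $(\alpha_*^a)^2(|a|^2+|b|^2+|a'|^2+|b'|^2)$ — the key point being that the cross terms between the $X$ and $X'$ blocks vanish because $c_i(\vec v_j')=c_i'(\vec v_j)=0$, so the form is block-diagonal and definite on $X\oplus X'$. Since $|(\beta^a)'(h_0)|$ is bounded below by $\big|\tfrac{1}{2\alpha_*^a\beta_*}\big|$ for $|h_0|<\hrad$ (differentiating \eqref{eq:betaxi_a}) while the remainder term is $o(1)$, the scalar equation forces $a=b=a'=b'=0$, hence $\vec\phi_0=0$ and $\vec\phi=0$, contradicting nontriviality. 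This proves every nontrivial kernel element has rank $1$.

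The routine parts are the bookkeeping of the pairing identity and the Cauchy-integral bound on $\partial_h\mathbb R_1^a$, both verbatim from Section~\ref{sec:proofedge}. The main obstacle — and the only genuinely new ingredient relative to the zigzag case — is verifying that the quadratic form built from $\mathbb P+\mathbb P'$ remains \emph{definite} on the enlarged space $X\oplus X'$; this rests on the orthogonality relations $c_i(\vec v_j')=c_i'(\vec v_j)=0$ in \eqref{eq:civjA}, which themselves follow from the fact that $v_i$ and $v_i'$ are Bloch modes at distinct quasimomenta $\Kone$ and $\Ktwo$ (mod $\Lambda^*$) and hence are $q$-orthogonal. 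Once that block structure is in hand, the positivity is immediate and the rest of the argument is a direct transcription.
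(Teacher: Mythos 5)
Your proof is correct and supplies exactly the argument the paper leaves implicit: the paper states Lemma~\ref{lem:rank1arm} without proof, declaring it ``parallel to Lemmas~\ref{lem:rank1}--\ref{lem:tphi},'' and your reconstruction is the intended parallelism. You correctly isolate the one genuinely new ingredient relative to the zigzag case, namely that the quadratic form $\langle(\begin{smallmatrix}0&-1\\1&0\end{smallmatrix})\vec\phi,(\mathbb P+\mathbb P')\vec\phi\rangle$ remains definite on the enlarged kernel $X\oplus X'$ because the $q$-orthogonality $c_i(\vec v_j')=c_i'(\vec v_j)=0$ from \eqref{eq:civjA} makes it block-diagonal.
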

\begin{lemma}\label{lem:tphiarm}
Let Assumption~\ref{lem:assNoFold} hold along $\bbeta_1^a$ and $t_*>0$. Let $\mathfrak d\in(0,1)$ be a constant.
For sufficiently small positive $\eps$ and $|h|<\hrad$,
the operator below is invertible
\begin{equation}
\mathbb I + \xi(h)P_{X_1\bigoplus X_2'\bigoplus Y} (\mathbb Q - \mathbb Q') + P_{X_1\bigoplus X_2'\bigoplus Y} \mathbb R_2^a(h,\eps) :  X_1\bigoplus X_2'\bigoplus Y\to X_1\bigoplus X_2'\bigoplus Y.
\end{equation} 
Here $P_{X_1\bigoplus X_2'\bigoplus Y}$ is the projection onto $X_1\bigoplus X_2'\bigoplus Y$ associated to the direct sum \eqref{eq:XXYarm}, and $\mathbb R_2^a(h,\eps)$ is the remainder defined in Corollary~\ref{lem:stnlimitsarm}.
Denote the inverse by $B^a(h,\eps): X_1\bigoplus X_2'\bigoplus Y\to X_1\bigoplus X_2'\bigoplus Y$, and
define 
\begin{equation}\label{eq:tphi1arm}
J_4(h,\eps)[\vec\phi_0]= -B^a(h,\eps)P_{X_1\bigoplus X_2'\bigoplus Y} \mathbb R_2^a(h,\eps)\vec\phi_0,
\end{equation}
for each given $\vec\phi_0\in X_2\bigoplus X_1'$.
We have $J_4(h,\eps)[\vec\phi_0]$ is analytic in $h$ and
\begin{equation}
\|J_4(h,\eps)[\vec\phi_0]\|_{\HhalfGa\times\HmhalfGa} =  o(1)\|\vec\phi_0\|_{\HhalfGa\times\HmhalfGa},\quad \text{uniformly in $h$}.
\end{equation}

Moreover,
if $\vec\phi\in \ker \;(\mathbb T_t^\eps (\lambda_*+\eps h_0) )$, then
\begin{equation}\label{eq:tphiarm}
\vec\phi= \vec\phi_0 +J_4(h,\eps)[\vec\phi_0]
\end{equation}
for some $\vec\phi_0\in X_2\bigoplus X_1'$.
\end{lemma}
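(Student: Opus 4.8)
The statement to prove is Lemma~\ref{lem:tphiarm}, which characterizes the root functions of $\mathbb T_t^{\eps,a}(\lambda_*+\eps h_0)$ in the armchair setting.

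\textbf{Overall approach.} The plan is to mirror the proof of the zigzag analog (Lemma~\ref{lem:tphi}, which in turn follows Lemma~\ref{lem:suffphi}/Lemma~\ref{lem:nphi}), now working with the four-dimensional space $X\bigoplus X' = X_1\bigoplus X_2\bigoplus X_1'\bigoplus X_2'$ and the refined direct sum \eqref{eq:XXYarm}. The key structural input is Corollary~\ref{lem:stnlimitsarm}, which gives $\mathbb T_t^{\eps,a}(\lambda_*+\eps h) = \mathbb I + 2\xi^a(h)(\mathbb Q - \mathbb Q') + \mathbb R_2^a(h,\eps)$ with $\|\mathbb R_2^a(h,\eps)\| = o(1)$ uniformly for $|h|<\hrad$ as $\eps\to0^+$, together with Proposition~\ref{lem:sufflimpropA}, which identifies $\ker\,\mathbb U_t^a(0) = \text{span}\{\mathfrak u_2, \mathfrak u_1'\} = X_2\bigoplus X_1'$.

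\textbf{Step-by-step plan.} First I would establish part (i): the invertibility of $\mathbb I + \xi(h)P_{X_1\bigoplus X_2'\bigoplus Y}(\mathbb Q - \mathbb Q') + P_{X_1\bigoplus X_2'\bigoplus Y}\mathbb R_2^a(h,\eps)$ on $X_1\bigoplus X_2'\bigoplus Y$. On this complementary subspace, the action of $\mathbb Q - \mathbb Q'$ is explicitly computable from the relations displayed just before Proposition~\ref{lem:sufflimpropA}: $(\mathbb Q - \mathbb Q')|_{X_1} = \alpha_*^a\,\mathrm{id}$, $(\mathbb Q - \mathbb Q')|_{X_2'} = \alpha_*^a\,\mathrm{id}$, and $(\mathbb Q-\mathbb Q')|_Y = 0$, so $\mathbb I + \xi(h)P_{X_1\bigoplus X_2'\bigoplus Y}(\mathbb Q-\mathbb Q')$ acts as $(1+\xi^a(h)\alpha_*^a)$ on $X_1\oplus X_2'$ and as the identity on $Y$; since $\xi^a(h)>0$ for $|h|<\hrad$ (from \eqref{eq:betaxi_a}, as $\beta_*^2 - h^2$ stays away from zero there), this is invertible with uniformly bounded inverse, and a Neumann series argument absorbs the $o(1)$ perturbation $P_{X_1\bigoplus X_2'\bigoplus Y}\mathbb R_2^a(h,\eps)$ when $\eps$ is small. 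The analyticity of $J_4(h,\eps)[\vec\phi_0]$ in $h$ follows because $B^a(h,\eps)$ is given by a Neumann series in $h$-analytic operators, and the bound $\|J_4(h,\eps)[\vec\phi_0]\| = o(1)\|\vec\phi_0\|$ is immediate from the boundedness of $B^a(h,\eps)$ and $P_{X_1\bigoplus X_2'\bigoplus Y}$ together with $\|\mathbb R_2^a(h,\eps)\| = o(1)$.

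For part (ii), I would take $\vec\phi\in\ker\,(\mathbb T_t^{\eps,a}(\lambda_*+\eps h_0))$, decompose it via \eqref{eq:XXYarm} as $\vec\phi = \vec\phi_0 + \vec\phi_1$ with $\vec\phi_0\in X_2\bigoplus X_1'$ and $\vec\phi_1\in X_1\bigoplus X_2'\bigoplus Y$ (after normalizing so that the $X_2\bigoplus X_1'$ component is the fixed $\vec\phi_0$), substitute into $\mathbb T_t^{\eps,a}(\lambda_*+\eps h_0)\vec\phi = 0$ using \eqref{eq:necessarylimarm}, and project with $P_{X_1\bigoplus X_2'\bigoplus Y}$. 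The projected equation reads $(\mathbb I + \xi^a(h_0)P_{X_1\bigoplus X_2'\bigoplus Y}(\mathbb Q - \mathbb Q') + P_{X_1\bigoplus X_2'\bigoplus Y}\mathbb R_2^a(h_0,\eps))\vec\phi_1 + P_{X_1\bigoplus X_2'\bigoplus Y}\mathbb R_2^a(h_0,\eps)\vec\phi_0 = 0$ — here one uses that $\mathbb Q\vec\phi_0, \mathbb Q'\vec\phi_0$ and $\mathbb Q, \mathbb Q'$ applied to $X_2\bigoplus X_1'$ land appropriately, and that $\mathbb Q\mathfrak u_2 = -\alpha_*^a\mathfrak u_2$, $\mathbb Q'\mathfrak u_1' = \alpha_*^a\mathfrak u_1'$ so the unperturbed part $\mathbb I + \xi^a(h_0)(\mathbb Q-\mathbb Q')$ annihilates $X_2\bigoplus X_1'$ exactly (consistent with $\ker\,\mathbb U_t^a(0)$ after accounting for the $\xi^a(0)\alpha_*^a$ normalization). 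Inverting by $B^a(h_0,\eps)$ gives $\vec\phi_1 = -B^a(h_0,\eps)P_{X_1\bigoplus X_2'\bigoplus Y}\mathbb R_2^a(h_0,\eps)\vec\phi_0 = J_4(h_0,\eps)[\vec\phi_0]$, which is exactly \eqref{eq:tphiarm}.

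\textbf{Main obstacle.} The routine parts are the Neumann-series invertibility and the projection bookkeeping. The one point requiring care is pinning down the exact action of $\mathbb Q - \mathbb Q'$ on the four one-dimensional pieces $X_1, X_2, X_1', X_2'$ and on $Y$ — in particular verifying that the ``surviving'' directions $\mathfrak u_2$ and $\mathfrak u_1'$ are annihilated by $\mathbb I + 2\xi^a(0)(\mathbb Q-\mathbb Q')$ (matching $\ker\,\mathbb U_t^a(0)$ from Proposition~\ref{lem:sufflimpropA}) while $\mathfrak u_1$ and $\mathfrak u_2'$ pick up a nonzero factor $1 + 2\xi^a(0)\alpha_*^a \neq 0$, so that the projected operator is genuinely invertible on the complement. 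This sign/normalization tracking is the same computation already implicit in the zigzag case but must be done consistently with the armchair constants $\alpha_*^a = \sqrt3\,\alpha_*$ and $\beta_*$ entering $\xi^a$; once that is in hand, the rest of the proof is a verbatim adaptation of Lemma~\ref{lem:tphi}.
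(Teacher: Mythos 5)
Your proof is correct and follows exactly the route the paper intends: the paper explicitly declares Lemmas~7.10--7.12 (including this one) ``parallel to Lemmas~6.10--6.12'' and states them without proof, so the intended argument is precisely the verbatim transplant of the proof of Lemma~\ref{lem:nphi} to the five-block decomposition $X_1\oplus X_2\oplus X_1'\oplus X_2'\oplus Z$ that you carry out. Your bookkeeping of $(\mathbb Q-\mathbb Q')$ on the four rank-one pieces is right: it is $+\alpha_*^a\,\mathrm{id}$ on $X_1$ and $X_2'$, $-\alpha_*^a\,\mathrm{id}$ on $X_2$ and $X_1'$, and zero on the remainder, so $\mathbb U_t^a(h)$ restricted to the complementary block acts as $1+2\xi^a(h)\alpha_*^a$ (uniformly bounded away from zero for $|h|<\hrad$ since $\mathrm{Re}\,\xi^a(h)>0$ there), while $\mathbb U_t^a(0)$ annihilates $X_2\oplus X_1'$, matching the kernel from Proposition~\ref{lem:sufflimpropA}; the Neumann-series invertibility and the projection step then go through unchanged. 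Two cosmetic remarks for alignment with the source: the lemma statement writes $Y$ where the armchair section defines the third summand as $Z$, and (as in the paper's own proof of Lemma~\ref{lem:nphi}) the displayed operator carries $\xi(h)$ rather than the $2\xi^a(h)$ coefficient that Corollary~\ref{lem:stnlimitsarm} actually produces — both are typographical slips in the paper, not errors in your argument, and your awareness of the normalization issue is exactly the care the lemma needs.
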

\begin{lemma}\label{lem:nphiarm}
Let Assumption~\ref{lem:assNoFold} hold along $\bbeta_1^a$ and $t_*>0$. Let $\mathfrak d\in(0,1)$ be a constant.
For sufficiently small positive $\eps$ and $|h|<\hrad$,
the operator below is invertible
\begin{equation}
\mathbb I - \xi(h)P_{X_2\bigoplus X_1'\bigoplus Y} (\mathbb Q -\mathbb Q') + P_{X_2\bigoplus X_1'\bigoplus Y} \mathbb R_3^a(h,\eps) :  X_2\bigoplus X_1'\bigoplus Y\to X_2\bigoplus X_1'\bigoplus Y.
\end{equation} 
Here $P_{X_2\bigoplus X_1'\bigoplus Y}$ is the projection onto $X_2\bigoplus X_1'\bigoplus Y$ associated to the direct sum \eqref{eq:XXYarm}, and $\mathbb R_3^a(h,\eps)$ is the remainder defined in Corollary~\ref{lem:stnlimitsarm}.
Denote the inverse by $C^a(h,\eps): X_2\bigoplus X_1'\bigoplus Y\to X_2\bigoplus X_1'\bigoplus Y$, and
define 
\begin{equation}\label{eq:nphi1arm}
 J_5(h,\eps)[\vec\phi_0]= -C^a(h,\eps)P_{X_2\bigoplus X_1'\bigoplus Y} \mathbb R_3^a(h,\eps)\vec\phi_0,
\end{equation}
for each given $\vec\phi_0\in X_1\bigoplus X_2'$.
We have $ J_5(h,\eps)[\vec\phi_0]$ is analytic in $h$ and
\begin{equation}
\| J_5(h,\eps)[\vec\phi_0]\|_{\HhalfGa\times\HmhalfGa} =  o(1)\|\vec\phi_0\|_{\HhalfGa\times\HmhalfGa},\quad \text{uniformly in $h$}.
\end{equation}

Moreover,
if $\vec\phi\in \ker \;(\mathbb T_t^\eps (\lambda_*+\eps h_0) )$, then
\begin{equation}\label{eq:nphiarm}
\vec\phi= \vec\phi_0 + J_5(h,\eps)[\vec\phi_0]
\end{equation}
for some $\vec\phi_0\in X_1\bigoplus X_2'$.
\end{lemma}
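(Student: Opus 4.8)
The plan is to transcribe the proof of Lemma~\ref{lem:nphi} into the richer five-term splitting \eqref{eq:XXYarm} $\HhalfGa\times\HmhalfGa = X_1\oplus X_2\oplus X_1'\oplus X_2'\oplus Y$, where $Y$ denotes the common complement on which all $c_i,c_i'$ vanish. The input is Corollary~\ref{lem:stnlimitsarm}, which writes $\mathbb T_n^{\eps,a}(\lambda_*+\eps h) = \mathbb U^a_n(h) + \mathbb R_3^a(h,\eps)$ with $\mathbb U^a_n(h) = \mathbb I - 2\xi^a(h)(\mathbb Q-\mathbb Q')$ and $\|\mathbb R_3^a(h,\eps)\|\to 0$ uniformly for $|h|<\hrad$ as $\eps\to0^+$. (I note in passing that the hypothesis ``$\vec\phi\in\ker\,\mathbb T_t^\eps(\lambda_*+\eps h_0)$'' in the statement should read ``$\vec\phi\in\ker\,\mathbb T_n^{\eps,a}(\lambda_*+\eps h_0)$'', matching Lemma~\ref{lem:nphi}, since $\ker\mathbb U^a_n(0)=X_1\oplus X_2'$ by Proposition~\ref{lem:sufflimpropA}.)

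The first step is to record the action of $\mathbb Q-\mathbb Q'$ on the five pieces of \eqref{eq:XXYarm}. Using the relations displayed right after \eqref{eq:XXYarm} together with $\mathbb Q X'=0$, $\mathbb Q' X=0$ and $\mathbb Q Y=\mathbb Q'Y=0$, one checks that $\mathbb Q-\mathbb Q'$ leaves each of $X_1,X_2,X_1',X_2',Y$ invariant, acting as multiplication by $\alpha_*^a$ on $X_1$ and $X_2'$, by $-\alpha_*^a$ on $X_2$ and $X_1'$, and by $0$ on $Y$. The consequence I need is purely structural: $\mathbb Q-\mathbb Q'$ preserves the kernel $X_1\oplus X_2'$ of $\mathbb U^a_n(0)$ and also preserves its complement $X_2\oplus X_1'\oplus Y$; this is exactly what makes the projection $P_{X_2\oplus X_1'\oplus Y}$ decouple the equation. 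For part (i), the operator $\mathbb I-\xi(h)P_{X_2\oplus X_1'\oplus Y}(\mathbb Q-\mathbb Q')$ is a finite-rank perturbation of the identity, hence Fredholm of index zero, and by the eigenvalue computation above it acts as $1+\xi(h)\alpha_*^a$ on $X_2$ and $X_1'$ and as $\mathbb I$ on $Y$; since $\xi(0)=\tfrac1{2\alpha_*^a}>0$ these stay bounded away from $0$ for $|h|<\hrad$, so this operator is invertible with inverse norm $O(1)$. Adding the $o(1)$ perturbation $P_{X_2\oplus X_1'\oplus Y}\mathbb R_3^a(h,\eps)$ and running a Neumann series gives invertibility for all sufficiently small $\eps>0$, with $\|C^a(h,\eps)\|=O(1)$ uniformly in $h$.

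For part (ii), analyticity of $J_5(h,\eps)[\vec\phi_0]$ in $h$ follows because $C^a(h,\eps)$ is the inverse of an operator-valued function analytic on $|h|<\hrad$, and $\mathbb R_3^a(h,\eps)$ is analytic in $h$ by the Cauchy-integral representation used in the proof of Lemma~\ref{lem:rank1}; the bound $\|J_5(h,\eps)[\vec\phi_0]\| = o(1)\|\vec\phi_0\|$ is then immediate from $\|C^a\|=O(1)$ and $\|\mathbb R_3^a\|=o(1)$. For the representation \eqref{eq:nphiarm}: given $\vec\phi\in\ker\,\mathbb T_n^{\eps,a}(\lambda_*+\eps h_0)$, decompose $\vec\phi=\vec\phi_0+\vec\phi_1$ along \eqref{eq:XXYarm} with $\vec\phi_0\in X_1\oplus X_2'$ and $\vec\phi_1\in X_2\oplus X_1'\oplus Y$. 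Apply $\mathbb U^a_n(h_0)+\mathbb R_3^a(h_0,\eps)$ to $\vec\phi$ and project by $P_{X_2\oplus X_1'\oplus Y}$: since $\mathbb U^a_n(h_0)\vec\phi_0\in X_1\oplus X_2'$ it is annihilated by the projection, while $\mathbb U^a_n(h_0)\vec\phi_1$ stays in $X_2\oplus X_1'\oplus Y$, so the projected equation reads $\big(\mathbb I-\xi(h_0)P_{X_2\oplus X_1'\oplus Y}(\mathbb Q-\mathbb Q')+P_{X_2\oplus X_1'\oplus Y}\mathbb R_3^a(h_0,\eps)\big)\vec\phi_1 = -P_{X_2\oplus X_1'\oplus Y}\mathbb R_3^a(h_0,\eps)\vec\phi_0$, and inverting with $C^a(h_0,\eps)$ yields $\vec\phi_1 = J_5(h_0,\eps)[\vec\phi_0]$ as in \eqref{eq:nphi1arm}.

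There is no genuinely hard analysis beyond what is already packaged in Corollary~\ref{lem:stnlimitsarm}; the only real point requiring care — the ``main obstacle'' — is bookkeeping the five-fold splitting \eqref{eq:XXYarm} and checking that $\mathbb Q-\mathbb Q'$ respects both the kernel $X_1\oplus X_2'$ and its complement $X_2\oplus X_1'\oplus Y$, so that the projection onto $X_2\oplus X_1'\oplus Y$ decouples the eigenvalue equation cleanly. Once that is in hand, Fredholmness, Neumann-series invertibility, analyticity via the Cauchy integral, and the uniform $o(1)$ estimates are all direct transcriptions of the zigzag argument in Lemma~\ref{lem:nphi}.
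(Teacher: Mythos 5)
Your proof is correct and is exactly the intended argument: the paper states Lemmas~\ref{lem:rank1arm}--\ref{lem:nphiarm} without proof, remarking only that they are ``parallel to Lemmas~\ref{lem:rank1}--\ref{lem:tphi},'' and what you have written is the faithful transcription of the proof of Lemma~\ref{lem:nphi} onto the five-fold splitting $X_1\oplus X_2\oplus X_1'\oplus X_2'\oplus Y$ of \eqref{eq:XXYarm}. The key structural observation — that $\mathbb Q-\mathbb Q'$ acts diagonally on the splitting, scalar $\alpha_*^a$ on $X_1,X_2'$, scalar $-\alpha_*^a$ on $X_2,X_1'$, zero on $Y$, so that it preserves both $\ker\mathbb U_n^a(0)=X_1\oplus X_2'$ and the complement $X_2\oplus X_1'\oplus Y$ — is verified correctly and is precisely what makes the projection decouple as in the zigzag case. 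You also correctly flagged that the displayed hypothesis $\vec\phi\in\ker\,\mathbb T_t^\eps(\lambda_*+\eps h_0)$ must be a typo for $\ker\,\mathbb T_n^{\eps,a}(\lambda_*+\eps h_0)$, consistent with $\ker\mathbb U_n^a(0)=X_1\oplus X_2'$ and with the zigzag analogue. (One further typo in the statement, which you carried over but which does not affect the argument: the statement writes $\xi(h)$ where, matching Corollary~\ref{lem:stnlimitsarm}, it should be $2\xi^a(h)$; the zigzag Lemma~\ref{lem:nphi} has the same missing factor of $2$, and invertibility holds either way since the relevant eigenvalues $1\pm\xi\alpha$ stay bounded away from zero on $|h|<\hrad$.)
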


\begin{prop}\label{lem:uniquestnA}
Let Assumption~\ref{lem:assNoFold} hold along $\bbeta_1^a$ and $t_*>0$. Let $\mathfrak d\in(0,1)$ be a constant. 
For sufficiently small $\eps>0$, the system
\begin{equation}\label{eq:simulsnarm}
\mathbb T_s^{\eps,a} (\lambda_*+\eps h)\vec\phi=0 \quad\text{and } \quad \mathbb T_n^{\eps,a} (\lambda_*+\eps h)\vec\phi=0
\end{equation}
attains at most two pairs of solutions $(h,\vec\phi)$, with $|h|<\hrad$ and $\vec\phi\in\HhalfGa\times\HmhalfGa$.
Moreover, if $h_1=h_2$, then $\vec\phi_1$ and $\vec\phi_2$ are linearly independent.
The same holds for the system
\begin{equation}\label{eq:simulstarm}
\mathbb T_s^{\eps,a} (\lambda_*+\eps h)\vec\phi=0 \quad\text{and } \quad \mathbb T_t^{\eps,a} (\lambda_*+\eps h) \vec\phi=0.
\end{equation}
\end{prop}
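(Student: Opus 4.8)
The plan is to follow the proof of Proposition~\ref{lem:uniquestn} almost verbatim, the only structural difference being that the kernels of the limiting operators are now two-dimensional, so each simultaneous system reduces to a $2\times2$ (rather than scalar) problem. I would describe the argument for \eqref{eq:simulsnarm}; the one for \eqref{eq:simulstarm} is the same after interchanging the subspaces $X_1\bigoplus X_2'$ and $X_2\bigoplus X_1'$.

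First I would take $\vec\phi$ solving both equations in \eqref{eq:simulsnarm} for some $|h|<\hrad$. By Lemma~\ref{lem:nphiarm}, the second equation forces $\vec\phi=\vec\phi_0+J_5(h,\eps)[\vec\phi_0]$ with $\vec\phi_0=a\mathfrak u_1+b\mathfrak u_2'\in X_1\bigoplus X_2'$. Substituting into $\mathbb T_s^{\eps,a}(\lambda_*+\eps h)\vec\phi=0$ and invoking Corollary~\ref{lem:stnlimitsarm} gives
\begin{equation*}
\big(2\tilde{\mathbb T}^{0,a}(\lambda_*)+2\beta^a(h)(\mathbb P+\mathbb P')+\mathbb R_1^a(h,\eps)\big)\big(\vec\phi_0+J_5(h,\eps)[\vec\phi_0]\big)=0 .
\end{equation*}
I would then project this onto $X_2\bigoplus X_1'$ along the decomposition \eqref{eq:XXYarm}. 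The term $\tilde{\mathbb T}^{0,a}(\lambda_*)(\cdot)$ drops out because $\text{Ran}\,\tilde{\mathbb T}^{0,a}(\lambda_*)=Z$ (the armchair counterpart of Lemma~\ref{lem:tildeT0}, recorded along the same lines); the structure relations give $(\mathbb P+\mathbb P')\vec\phi_0=\im\alpha_*^a(a\mathfrak u_2+b\mathfrak u_1')$; and the contributions of $\mathbb R_1^a$ and of $\big(2\beta^a(h)(\mathbb P+\mathbb P')+\mathbb R_1^a(h,\eps)\big)J_5(h,\eps)[\vec\phi_0]$ are $o(1)\cdot(|a|+|b|)$ uniformly for $|h|<\hrad$ and analytic in $h$. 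Identifying $X_2\bigoplus X_1'$ with $\mathbb C^2$, this yields
\begin{equation*}
\big(2\im\alpha_*^a\beta^a(h)\,I_2+\mathcal E(h,\eps)\big)\begin{pmatrix}a\\b\end{pmatrix}=0,\qquad \|\mathcal E(h,\eps)\|=o(1)\ \text{as }\eps\to0^+,
\end{equation*}
with $\mathcal E(\cdot,\eps)$ analytic on $\{|h|<\hrad\}$.

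Next I would analyze $D(h,\eps):=\det\big(2\im\alpha_*^a\beta^a(h)\,I_2+\mathcal E(h,\eps)\big)$: it is analytic on $\{|h|<\hrad\}$ and converges uniformly there, as $\eps\to0^+$, to $D_0(h):=(2\im\alpha_*^a\beta^a(h))^2$, which (using $\hrad<\beta_*$, as in the zigzag case, so that $\beta^a$ has a simple zero at $h=0$ and no other zero in the disk) has a zero of order exactly two at $h=0$ and is otherwise nonvanishing. By the single-variable Rouch\'e theorem, $D(\cdot,\eps)$ then has exactly two zeros in $\{|h|<\hrad\}$ counted with multiplicity for $\eps$ small, so the admissible set of $h$ is finite and the total solution count is at most two: at a simple zero the matrix $2\im\alpha_*^a\beta^a(h)\,I_2+\mathcal E$ has corank one (corank two would make it vanish there, forcing $D$ to vanish to order two), giving one $\vec\phi$ up to scale; two zeros can merge only at a single $h_0$ where the matrix vanishes, and then $\mathfrak u_1+J_5(h_0,\eps)[\mathfrak u_1]$ and $\mathfrak u_2'+J_5(h_0,\eps)[\mathfrak u_2']$ are two solutions, linearly independent since $\|J_5\|=o(1)$. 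The same scheme handles \eqref{eq:simulstarm}, using Lemma~\ref{lem:tphiarm} to parametrize $\ker\,\mathbb T_t^{\eps,a}(\lambda_*+\eps h)$ by $\vec\phi_0=a\mathfrak u_2+b\mathfrak u_1'\in X_2\bigoplus X_1'$ and projecting $\mathbb T_s^{\eps,a}(\lambda_*+\eps h)\vec\phi=0$ onto $X_1\bigoplus X_2'$ (using $(\mathbb P+\mathbb P')\vec\phi_0=\im\alpha_*^a(a\mathfrak u_1+b\mathfrak u_2')$).

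I expect the main obstacle to be the bookkeeping in the reduction step — tracking exactly which of the five summands of \eqref{eq:XXYarm} each of $\tilde{\mathbb T}^{0,a}$, $\mathbb P$, $\mathbb P'$, $J_4$, $J_5$ maps into, so that the surviving $2\times2$ block is precisely $2\im\alpha_*^a\beta^a(h)\,I_2$ plus an $o(1)$ perturbation — together with the corank analysis of that matrix at the zeros of its determinant, which is needed to pass from ``two zeros counted with multiplicity'' to ``at most two solution pairs, linearly independent when their $h$-values coincide.''
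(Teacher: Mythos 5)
Your proposal follows the paper's proof closely and correctly: the same parametrization of $\ker\,\mathbb T_n^{\eps,a}$ via Lemma~\ref{lem:nphiarm} (resp.\ $\ker\,\mathbb T_t^{\eps,a}$ via Lemma~\ref{lem:tphiarm}), the same projection onto $X_2\bigoplus X_1'$ (resp.\ $X_1\bigoplus X_2'$) using \eqref{eq:XXYarm}, and the same reduction to a $2\times2$ system whose leading block is $2\im\alpha_*^a\beta^a(h)\,I_2$. The only methodological difference is cosmetic: you count zeros of the determinant with the scalar Rouch\'e theorem, where the paper applies the generalized (Gohberg--Sigal) Rouch\'e theorem to the $2\times2$ analytic matrix family $D(h)+E(h,\eps)$ directly — for finite-dimensional problems these coincide, since the Gohberg--Sigal multiplicity equals the order of vanishing of the determinant.

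One side remark in your write-up is not literally correct. You assert that ``two zeros can merge only at a single $h_0$ where the matrix vanishes.'' This fails for general $2\times2$ analytic families, even with the present leading-term structure: for instance $M(h)=hI_2+\bigl(\begin{smallmatrix}0&\delta\\0&0\end{smallmatrix}\bigr)$ has $\det M(h)=h^2$ with a double zero at $h=0$ while $M(0)\neq0$ (corank one). So a double zero of $\det M$ need not produce two independent eigenvectors. This does not damage the conclusion you need — each zero of $\det M$, whether simple or double, contributes at most a two-dimensional kernel, so ``at most two solution pairs'' still follows, and the ``moreover'' clause is automatic once distinct pairs are understood up to scaling of $\vec\phi$ — but the argument should not rest on that false implication. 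The paper's own phrasing (``when $h_1=h_2$, $(a_1,b_1)$ and $(a_2,b_2)$ are linearly independent'') carries the same unaddressed subtlety, so you are in good company.
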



\begin{proof}

Suppose $\vec\phi$ solves both equations in \eqref{eq:simulsnarm}. By Lemma~\ref{lem:nphiarm}, the solution to the second equation necessarily takes the form
 $\vec\phi= \vec\phi_0 +\vec \phi_1(h,\eps)$, where $\vec\phi_1(h,\eps) = J_5(h,\eps)[\vec\phi_0]$ as defined in \eqref{eq:nphi1arm}, with $\vec\phi_0 = a\mathfrak u_1 + b\mathfrak u_2' $ for some $a,b\in\mathbb C$.
 Substituting $\vec\phi$ into the first equation, we obtain
 \begin{equation*}
(2\tilde{\mathbb T}^{0,a}(\lambda_*) + 2\beta^a(h)(\mathbb P+\mathbb P') + \mathbb R_1^a(h,\eps))(a \mathfrak u_1 + b \mathfrak u_2' + \vec\phi_1(h,\eps))= 0.
\end{equation*}
Projecting the above onto the space $X_2\bigoplus X_1'$ using $P_{X_2\bigoplus X_1'}$, we obtain
\begin{equation*}
P_{X_2\bigoplus X_1'} 2\beta^a(h)(\mathbb P+\mathbb P')(a \mathfrak u_1 + b \mathfrak u_2') +
P_{X_2\bigoplus X_1'} \left((2\beta^a(h)(\mathbb P+\mathbb P') + \mathbb R_1^a(h,\eps))  
\vec\phi_1(h_\eps,\eps) 
+ \mathbb R_1^a(h,\eps) 
(a \mathfrak u_1 + b \mathfrak u_2')\right) =0.
\end{equation*}
The projections onto $X_2$ and $X_1'$ give
\begin{equation}\label{eq:uniquearm}
\left(D(h)+ E(h,\eps)\right)
\left(\begin{matrix}a\\b\end{matrix}\right)=0,
\end{equation}
where $D(h)$ is defined by
\begin{equation}
\left(\begin{matrix}
2\im\beta^a(h)\alpha_*^a & 0\\
0& 2\im\beta^a(h)\alpha_*^a
\end{matrix}\right)
\end{equation}
and $E(h,\eps)$ is of higher order in $\eps$.

It is obvious that  $D(h)$ is analytic in $h$ in a neighborhood of $\{h\in \cC, |h|<\hrad$\}. Furthermore, $h=0$ is the unique characteristic of $D(h)$ in $|h|<\hrad$, and the multiplicity of $h=0$ is two.
Note that $E(h,\eps)$ is analytic and its matrix norm is of order $o(1)$ uniformly in $h$ as $\eps\to 0^+$. 
Thus the generalized Rouch\'e Theorem implies that there are two pairs of $(h_i,(a_i,b_i))$ $(i=1, 2)$ solving \eqref{eq:uniquearm}. When $h_1=h_2$, $(a_1,b_1)$ and $(a_2,b_2)$ are linearly independent. By the independence of $\mathfrak u_2$ and $\mathfrak u_1'$, we complete the proof.
\end{proof}

\subsection{Proof of Theorem~\ref{lem:edgearm}}
\begin{proof}[Proof of Theorem~\ref{lem:edgearm}]

Using the same argument as that in the proof of Theorem~\ref{lem:edge}, by Propositions~\ref{lem:sufflimpropA}  and Theorem~\ref{lem:actGS}, we conclude that $\mathbb T_s^{\eps,a} (\lambda_*+\eps h)$ is of multiplicity four in $|h|<\hrad$ when $\eps>0$ is sufficiently small.
By Lemma~\ref{lem:rank1arm}, there are four pairs $(h_i,\vec\phi_i)$ solving $\mathbb T_s^{\eps,a} (\lambda_*+\eps h)\vec\phi=0$, $i=1,\cdots,4$. Moreover, if any of the $h_i$'s coincide, the corresponding $\vec\phi_i$'s form a linearly independent set.

We first show that there are at least two interface modes.
Let $u_i$ be generated by $\vec\phi_i$ \eqref{eq:denstoedge}, where the Green function is replaced by 
$G^{\pm\eps,a}(\bx,\by; \lambda)$ as defined in \eqref{eq:pGreenpmepsA}, and the integral domain is replaced by $\Gamma^a$. 
Assume $u_i$, $i=1,\cdots,4$ represents fewer than two interface modes. 
Then we have the following two cases:
\begin{itemize}
    \item [(i)] All $u_i$, $i=1,\cdots,4$ are zero. Then $(h_i,\vec\phi_i)$ are four solutions to \eqref{eq:simulstarm}, which contradicts with Proposition~\ref{lem:uniquestnA}.
\item [(ii)] After rearranging, for some $m\in\{2,3,4\}$, $u_1,u_2,\cdots,u_m$ are nonzero and span a one-dimensional space, and the rest $(4-m)$ of $u_i$'s are zero. 
Then it follows that $h_1=\cdots = h_m$. Also, using $(u_i|_{\Gamma^a},\partial_n u_i|_{\Gamma^a})$, $i=1,\cdots,m$, we can construct  densities $\vec\phi'_i$, $i= 1,\cdots,m-1$, which are linearly independent and generates zero modes.
Thus we have $(4-m)+(m-1)=3$ simultaneous solutions to \eqref{eq:simulstarm}, which contradicts with  Proposition~\ref{lem:uniquestnA}.
\end{itemize}

Next, we show that there are at most two interface modes. Suppose there are more than two linearly independent interface modes $u_i$ at $h_i$ respectively for  $i=1,2,3$. Denote $\vec\phi_i = (u_i|_{\Gamma^a},\partial_n u_i|_{\Gamma^a})$. Then $(h_i,\vec\phi_i)$ with $i=1,2,3$ are solutions to the system \eqref{eq:simulsnarm}, and if $h_i=h_j$, for some $i\neq j$, then $\vec\phi_i$ and $\vec\phi_j$ are linearly independent. This contradicts with Proposition~\ref{lem:uniquestnA}.

\end{proof}

\section{Dispersion relations of interface modes}
\label{sec:dispersion}
In this section, we investigate the dispersion relation along the zigzag interface as stated in Theorem~\ref{lem:dispersion}. The dispersion relations along the armchair interface and arbitrary rational interfaces as stated in Theorems~\ref{lem:dispersionarm}, \ref{lem:ratzig} and \ref{lem:ratarm} are proved in Appendix~\ref{sec:disprat}.

For $\eps$ in a neighborhood of $0$, define the Green function on the periodic zigzag strips with quasimomentum $\kp^* + \mu$ by
\begin{equation}
\begin{aligned}
\begin{cases}
(-\Delta_{\bx}  - \lambda) G^\eps[\mu](\bx,\by; \lambda)  = \delta(\bx-\by) \quad &\bx\in \Omega^\eps,\\
G^\eps[\mu](\bx,\by; \lambda) =0 \quad &\bx\in\cup_{m\in\mathbb Z} (\partial D^\eps+m\be_1), \\
G^\eps[\mu](\bx+ \be_2,\by; \lambda)= e^{\im \kp^*+\mu}G^\eps[\mu](\bx,\by; \lambda) \quad &\text{for }\bx\in\Gamma_-, \\
\partial_{\nuGb}  G^\eps[\mu](\bx+ \be_2,\by; \lambda) =e^{\im \kp^*+\mu}\partial_{\nuGb}  G^\eps[\mu](\bx,\by; \lambda)  \quad &\text{ for }\bx\in\Gamma_-,\\
G^\eps[\mu](\bx,\by; \lambda) \,\,\text{satisfies the radiation conditions when $|\bx\cdot\be_1| \to \infty$}.
\end{cases}
\end{aligned}
\end{equation}

Similar to \eqref{eq:HsG2} and \eqref{eq:HsG1a}, we define the following quasi-periodic Sobolev space on $\Gamma$ for $s\in\mathbb R$ 
\begin{equation}\label{eq:HsG1eta}
\cH^s(\Gamma,\mu):= \left\{ u(\bx_0+ t\be_2) = \sum_{n\in\mathbb Z} a_n e^{\im (\Kone + \mu\bbeta_2)\cdot\be_2 t} e^{\im 2\pi n t}: \|u\|_{\cH^s(\Gamma,\eta)}^2:=\sum_{n\in\mathbb Z} |a_n|^2 (1+ n^2)^s\right\}.
\end{equation}
The functions in $\cH^s(\Gamma,\mu)$ attain the quasimomentum $\kp^*+\mu$ along the zigzag edge $\be_2$. In particular, $\cH^s(\Gamma,0) = \cH^s(\Gamma)$ in \eqref{eq:HsG2}.

Define the layer potentials 
$\cS^{\pm\eps}(\lambda,\mu)$, $\cD^{\pm\eps}(\lambda,\mu)$, $\cK^{\pm\eps}(\lambda,\mu)$, $\cK^{\pm\eps}(\lambda,\mu)$ and $\cN^{\pm\eps}(\lambda,\mu)$ in parallel to 
\eqref{eq:lpotentialeps},  
 where the Green functions are replaced by $G^{\pm\eps}[\mu](\bx,\by,\lambda)$ above. 
We also define the integral operators on $\HhalfG\times\HmhalfG$ as in \eqref{eq:bTeps} and \eqref{eq:bTs} by
\begin{equation}\label{eq:bTeps_mu}
\mathbb T^\eps
(\lambda, \mu):=
\left(\begin{matrix}
-\cK^{\eps}(\lambda,\mu) & \cS^{\eps}(\lambda,\mu) \\
 -\cN^{\eps}(\lambda,\mu) & \cK^{*,\eps}(\lambda,\mu)
\end{matrix}\right),
\end{equation}
and
\begin{equation}\label{eq:bTs_mu}
\mathbb T_s^\eps(\lambda,\mu):=\mathbb T^\eps + \mathbb T^{-\eps}, \quad
\mathbb T_t^\eps(\lambda,\mu):=- \mathbb T^\eps + \mathbb T^{-\eps} + \mathbb I, \quad
\mathbb T_n^\eps(\lambda,\mu):=\mathbb T^\eps - \mathbb T^{-\eps} + \mathbb I.
\end{equation}
Let $\mathbb M(\mu)$ be the operator of multiplication by the factor $e^{-\im \mu\bbeta_2\cdot\bx}$. 
Since $\phi\in \cH^s(\Gamma,\mu)$ if and only if $\mathbb M(\mu)\phi\in \cH^s(\Gamma)$, we have the following characterization of edge states with quasimomentum $\kp^*+\mu$.
\begin{lemma}\label{lem:dispoplim}
Let $\mu=\eps\zeta$. 
There exists an interface mode of quasimomentum $\kp^*+\mu$ along $\be_2$ if and only if there exists $(\psi,\phi)\in \HhalfG\times\HmhalfG$, such that
\begin{equation}\label{eq:sufficientmu}
\mathbb M^{-1}(\eps\zeta)\mathbb T^{\eps}_s(\lambda_*+ \eps h,\eps\zeta)\mathbb M(\eps\zeta)
\left(\begin{matrix}
\psi\\
\phi
\end{matrix}\right)
=0,\quad  \mathbb M^{-1}(\eps\zeta)\mathbb T^{\eps}_t(\lambda_*+ \eps h,\eps\zeta)\mathbb M(\eps\zeta)
\left(\begin{matrix}
\psi\\
\phi
\end{matrix}\right)
\neq0.
\end{equation}
Moreover, if $u$ is an interface mode with quasimomentum $\kp^*+\mu$ along $\be_2$, then $0\neq( u|_\Gamma, \partial_n u|_\Gamma) \in \HhalfG\times \HmhalfG$ satisfies
\begin{equation}\label{eq:necessarymu}
\mathbb M^{-1}(\eps\zeta)\mathbb T^{\eps}_s(\lambda_*+ \eps h,\eps\zeta)\mathbb M(\eps\zeta)
\left(\begin{matrix}
u|_\Gamma\\
\partial_n u|_\Gamma
\end{matrix}\right)
=0,\quad
\mathbb M^{-1}(\eps\zeta)\mathbb T^{\eps}_n(\lambda_*+ \eps h,\eps\zeta)\mathbb M(\eps\zeta)
\left(\begin{matrix}
u|_\Gamma\\
\partial_n u|_\Gamma
\end{matrix}\right)
=0.
\end{equation}
\end{lemma}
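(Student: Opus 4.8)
\textbf{Plan of proof for Lemma~\ref{lem:dispoplim}.}
The plan is to reduce the spectral problem for interface modes with quasi-momentum $\kp^*+\mu$ along $\be_2$ to the one already analyzed at $\mu=0$, via the unitary gauge transformation $\mathbb M(\mu)$ of multiplication by $e^{-\im\mu\bbeta_2\cdot\bx}$. The starting point is the observation that a function $u$ defined on $\Omega^{J,\eps}$ satisfies the quasi-periodicity relations $u(\bx+\be_2)=e^{\im(\kp^*+\mu)}u(\bx)$ and $\partial_{\nuGb}u(\bx+\be_2)=e^{\im(\kp^*+\mu)}\partial_{\nuGb}u(\bx)$ on $\Gamma_-$ if and only if $\mathbb M(\mu)u$ satisfies the corresponding relations with $\kp^*$ in place of $\kp^*+\mu$; this uses $\bbeta_2\cdot\be_2=1$ so that $e^{-\im\mu\bbeta_2\cdot(\bx+\be_2)}=e^{-\im\mu}e^{-\im\mu\bbeta_2\cdot\bx}$. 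Hence the interface-mode spectral problem \eqref{eq:spectral_prob_zigzag} with quasi-momentum $\kp^*+\mu$ is gauge-equivalent to a modified Helmholtz problem for $\mathbb M(\mu)u$ with quasi-momentum exactly $\kp^*$, at the cost of replacing $-\Delta$ by the conjugated operator $\mathbb M(\mu)(-\Delta)\mathbb M(-\mu)=-(\nabla-\im\mu\bbeta_2)^2$.

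First I would make precise the reduction on the Green-function level: by construction, $G^{\pm\eps}[\mu](\bx,\by;\lambda)$ is the kernel solving \eqref{eq:phyGeps} with $\kp^*$ replaced by $\kp^*+\mu$ (and the radiation conditions from the limiting absorption principle), so $e^{-\im\mu\bbeta_2\cdot\bx}G^{\pm\eps}[\mu](\bx,\by;\lambda)e^{\im\mu\bbeta_2\cdot\by}$ solves the conjugated equation. This yields directly the operator identities $\mathbb M^{-1}(\mu)\cS^{\pm\eps}(\lambda,\mu)\mathbb M(\mu)$, and likewise for $\cD,\cK,\cK^{*},\cN$, as the layer potentials over $\Gamma$ built from the conjugated kernel, and therefore $\mathbb M^{-1}(\mu)\mathbb T_{s,t,n}^{\eps}(\lambda,\mu)\mathbb M(\mu)$ act on $\HhalfG\times\HmhalfG$ as the integral operators associated with the gauge-transformed problem. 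The jump relations \eqref{eq:jumpeps} are preserved by this conjugation because $\mathbb M(\mu)$ is smooth and nonvanishing near $\Gamma$, so it does not affect the principal symbols or the $\pm\frac12$ jump terms.

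Next I would run the argument of Section~\ref{sec:characterization} verbatim in the gauge-transformed setting. Given an interface mode $u$ of quasi-momentum $\kp^*+\mu$, the function $\tilde u:=\mathbb M(\mu)u$ is an interface mode of quasi-momentum $\kp^*$ for the conjugated operator, decaying as $|\bx\cdot\be_1|\to\infty$ since $|\mathbb M(\mu)|=1$; applying Green's identity in $\Omega^{J,\eps}$ with the conjugated fundamental solution gives the representation \eqref{eq:tracetoedge} with the $\mu$-dependent Green functions and traces $\mathbb M(\mu)u|_\Gamma$, $\partial_n(\mathbb M(\mu)u)|_\Gamma$. Taking limits from both sides of $\Gamma$ produces the homogeneous systems $\mathbb T_s^\eps(\lambda,\mu)(\mathbb M(\mu)u|_\Gamma,\partial_n(\mathbb M(\mu)u)|_\Gamma)^T=0$ together with the $\mathbb T_t^\eps$-nontriviality or $\mathbb T_n^\eps$-compatibility conditions, exactly as in Lemma~\ref{lem:edgestate}. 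Undoing the conjugation — using $\mathbb M(\mu)u|_\Gamma=\mathbb M(\mu)(u|_\Gamma)$ and $\partial_n(\mathbb M(\mu)u)|_\Gamma=\mathbb M(\mu)(\partial_n u|_\Gamma)+(\partial_n\mathbb M(\mu))(u|_\Gamma)$ — and absorbing the lower-order term into the already-conjugated operators gives \eqref{eq:necessarymu}; conversely, given $(\psi,\phi)$ solving \eqref{eq:sufficientmu}, the function defined by \eqref{eq:denstoedge} with the $\mu$-Green functions, gauged back by $\mathbb M(-\mu)$, is the desired interface mode, the nontriviality being exactly the second condition in \eqref{eq:sufficientmu}. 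I expect the main obstacle to be purely bookkeeping: keeping the normal-derivative conjugation and the $(\partial_n\mathbb M(\mu))$ term consistently accounted for on both the ``necessary'' and ``sufficient'' sides, and verifying that the gauge transformation maps $\cH^s(\Gamma,\mu)$ isomorphically onto $\cH^s(\Gamma)$ with the same Sobolev index so that all mapping properties in Section~\ref{sec:characterization} transfer unchanged; none of these steps is conceptually hard, but they must be stated carefully so that the subsequent perturbation analysis in $\mu=\eps\zeta$ can be carried out.
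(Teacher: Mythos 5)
Your overall strategy — reduce to the $\mu=0$ characterization of Lemma~\ref{lem:edgestate} via the gauge factor $e^{-\im\mu\bbeta_2\cdot\bx}$ — is the right idea and matches the paper's intent, but the specific route you take (conjugating the \emph{PDE} to $-(\nabla-\im\mu\bbeta_2)^2$ and then trying to identify the resulting boundary-integral system with $\mathbb M^{-1}(\mu)\mathbb T^{\eps}_{s,t,n}(\lambda,\mu)\mathbb M(\mu)$) has an unresolved step that you acknowledge but do not close. The operators in \eqref{eq:sufficientmu}--\eqref{eq:necessarymu} are \emph{fixed}: they are the conjugations of the matrix operators built from the kernel $G^{\pm\eps}[\mu]$. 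The layer potentials of your conjugated operator have kernel $e^{-\im\mu\bbeta_2\cdot\bx}G^{\pm\eps}[\mu](\bx,\by;\lambda)e^{\im\mu\bbeta_2\cdot\by}$, and the entries of $\mathbb T^\eps$ involving $\partial_{n_\by}$ or $\partial_{n_\bx}$ (i.e.\ $\cD$, $\cK$, $\cK^{*}$, $\cN$) do \emph{not} commute with the multiplication: since $\bbeta_2\cdot\bn=-1/\sqrt3\neq0$ on $\Gamma$, conjugation produces genuine off-diagonal correction terms proportional to $\im\mu(\bbeta_2\cdot\bn)$, exactly matching the $(\partial_n\mathbb M(\mu))u|_\Gamma$ term in your trace computation. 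You cannot ``absorb'' these into operators whose definition is already pinned down by the lemma; either the stated identities hold for the stated operators or they do not, and your argument as written leaves this open.

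The clean way to close the gap is to not gauge-transform the PDE at all. Run the argument of Section~\ref{sec:characterization} verbatim with the kernel $G^{\pm\eps}[\mu]$ in place of $G^{\pm\eps}$: Green's identity in $\Omega^{J,\eps}$ is legitimate because $u$ and $G^{\pm\eps}[\mu](\cdot,\by;\lambda)$ carry the \emph{same} quasimomentum $\kp^*+\mu$ along $\be_2$ (so the boundary contributions on $\Gamma_\pm$ cancel) and both decay as $|\bx\cdot\be_1|\to\infty$ for $\lambda$ in the gap. This gives the representation \eqref{eq:tracetoedge} and the systems \eqref{eq:densuright}--\eqref{eq:densuleft} with $\mathbb T^{\eps}_{s,t,n}(\lambda,\mu)$ acting directly on the Cauchy data, with no correction terms, exactly as in Lemma~\ref{lem:edgestate}. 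Only \emph{after} that do you conjugate the finished operator identities by the invertible multiplication $\mathbb M(\mu)$, which is a pure similarity transformation ($\mathbb T\vec\phi=0$ iff $(\mathbb M^{-1}\mathbb T\mathbb M)(\mathbb M^{-1}\vec\phi)=0$) whose only purpose is to transport the system onto the $\mu$-independent spaces $\HhalfG\times\HmhalfG$, so that the subsequent perturbation analysis in $\mu=\eps\zeta$ (Proposition~\ref{lem:chadisp}) takes place on a fixed space. With this ordering there is nothing to absorb, and the equivalence of the ``sufficient'' and ``necessary'' conditions follows word for word from the proof of Lemma~\ref{lem:edgestate}.
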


Define
\begin{equation}\label{eq:betaxisig}
\begin{aligned}
\beta(h,\zeta)&:=   \frac{1}{2} \left|\frac{\gamma_*}{\theta_*}\right|\frac{h}{\sqrt{(\frac{t_*}{\gamma_*})^2+ \frac{3}{4}|\frac{\theta_*}{\gamma_*}|^2\zeta^2-h^2}} =  \frac{1}{2\alpha_*}\frac{h}{\sqrt{\beta_*^2 + \frac{3}{4}\alpha_*^2\zeta^2 -h^2}},\\
\xi(h,\zeta) &:=  \frac{t_*}{2|\theta_*|}\frac{1}{\sqrt{(\frac{t_*}{\gamma_*})^2+ \frac{3}{4}|\frac{\theta_*}{\gamma_*}|^2\zeta^2-h^2}}=  \frac{\beta_*}{2\alpha_*}\frac{1}{\sqrt{\beta_*^2 + \frac{3}{4}\alpha_*^2\zeta^2 -h^2}},\\
\sigma(h,\zeta)& := \im \frac{\sqrt3}{2}\zeta\frac{1}{\sqrt{(\frac{t_*}{\gamma_*})^2+ \frac{3}{4}|\frac{\theta_*}{\gamma_*}|^2\zeta^2-h^2}}  =   \im \frac{\sqrt3}{2}\zeta\frac{1}{\sqrt{\beta_*^2 + \frac{3}{4}\alpha_*^2\zeta^2 -h^2}}.
\end{aligned}
\end{equation}
Similarly to Proposition~\ref{lem:oplim}, we obtain the following operator limit.
\begin{prop}\label{lem:chadisp}
Let Assumption~\ref{lem:assNoFold}  hold along $\bbeta_1$ and $t_*>0$. Let $\mathfrak d\in(0,1)$ be a constant.
For each constant $\zeta$, the following limit holds uniformly for $|h|<\hrad$ as $\eps\to 0^+$ in the operator norm from $\HhalfG\times\HmhalfG$ to $\HhalfG\times\HmhalfG$: 
\begin{equation}\label{eq:oplimeta}
\mathbb M^{-1}(\eps\zeta)\mathbb T^{\pm\eps}(\lambda_*+ \eps h,\eps\zeta) \mathbb M(\eps\zeta)
\to 
\tilde{\mathbb T}^0 (\lambda_*) 
+ \beta(h,\zeta)\mathbb P \mp \xi(h,\zeta)\mathbb Q + \sigma(h,\zeta)\mathbb O,
\end{equation}
where 
\begin{equation}
\mathbb O\vec\phi:= - c_2(\vec\phi) \vec v_1 + c_1(\vec\phi)\vec v_2 .
\end{equation}
In addition, there hold the uniform convergences
\begin{equation}\label{eq:etasuff}
\mathbb M^{-1}(\eps\zeta)\mathbb T^{\eps}_s(\lambda_*+ \eps h,\eps\zeta)\mathbb M(\eps\zeta) 
\to 
2\tilde{\mathbb T}^0 (\lambda_*) 
+2 \beta(h,\zeta)\mathbb P  + 2\sigma(h,\zeta)\mathbb O=: \mathbb U_s (h,\zeta),
\end{equation}
\begin{equation}\label{eq:etanont}
\mathbb M^{-1}(\eps\zeta)\mathbb T^{\eps}_t(\lambda_*+ \eps h,\eps\zeta)\mathbb M(\eps\zeta) 
\to 
I + 2 \xi(h,\zeta)\mathbb Q =: \mathbb U_t (h,\zeta),
\end{equation}
\begin{equation}\label{eq:etanece}
\mathbb M^{-1}(\eps\zeta)\mathbb T^{\eps}_n(\lambda_*+ \eps h,\eps\zeta)\mathbb M(\eps\zeta)
\to 
I - 2 \xi(h,\zeta)\mathbb Q =: \mathbb U_n (h,\zeta),
\end{equation}
\end{prop}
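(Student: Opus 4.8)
## Proof proposal for Proposition~\ref{lem:chadisp}

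The plan is to mimic closely the proof of Proposition~\ref{lem:oplim} (carried out in Appendix~\ref{sec:oplim}), tracking the one extra ingredient: the quasimomentum perturbation $\mu = \eps\zeta$ along $\be_2$, which shifts the relevant dual slice from $\{\Kone + \ell\bbeta_1\}$ to $\{\Kone + \mu\bbeta_2 + \ell\bbeta_1\}$. The conjugation by $\mathbb M(\eps\zeta)$ is precisely what converts the $(\kp^*+\mu)$-quasiperiodic problem on $\Gamma$ into an operator acting on the fixed space $\HhalfG\times\HmhalfG$, so after conjugation everything happens on a single function space and the Gohberg--Sigal machinery of Section~\ref{sec:proofedge} applies verbatim. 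First I would record the spectral representation \eqref{eq:Seps}--\eqref{eq:Neps} for the Green functions $G^{\pm\eps}[\mu]$ on the $\mu$-shifted strips, now with band functions $\mu_{n,\pm\eps}(\bp(\ell,\mu))$ and modes $v_{n,\pm\eps}(\cdot;\bp(\ell,\mu))$, where $\bp(\ell,\mu) = \Kone + \ell\bbeta_1 + \mu\bbeta_2$; the key input is Proposition~\ref{lem:pmlamgen} (the \emph{two-parameter} version, with both $\ell$ and $\mu$), which gives the dispersion surfaces and Bloch modes near $(\lambda_*,\Kone)$ as analytic functions of $(\eps,\ell,\mu)$.

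The core of the argument is the asymptotic analysis of the integral $\frac{1}{2\pi}\int_{[-\pi,\pi]} \frac{\overline{v_{n,\pm\eps}} v_{n,\pm\eps}}{\mu_{n,\pm\eps}(\bp(\ell,\mu)) - (\lambda_*+\eps h)}\,\dpt$ for $n=1,2$ as $\eps\to 0^+$, with $\mu = \eps\zeta$. Near $\ell = 0$ the denominator is, by \eqref{eq:energymu} with $\mu=\eps\zeta$,
\begin{equation*}
\mu_{n,\pm\eps}(\bp(\ell,\eps\zeta)) - \lambda_* - \eps h = \pm\tfrac{1}{|\gamma_*|}\sqrt{\eps^2 t_*^2 + |\theta_*|^2|\ell + \eps\zeta\bar\tau|^2} - \eps h + \text{h.o.t.},
\end{equation*}
so rescaling $\ell = \eps s$ isolates a universal profile: the leading contribution becomes a contour-type integral in $s$ whose value produces the $\frac{3}{4}\alpha_*^2\zeta^2$ term under the square roots in \eqref{eq:betaxisig} (this is where $|\ell+\mu\bar\tau|^2 = \ell^2 + \mu^2 - \ell\mu$ and hence the factor $\frac34$ enters, after completing the square in $s$), and the cross-term linear in $\zeta$ — coming from the $\mu\bar\tau$ shift inside the modes $v_{n}$ and from $\Im(\bar\tau)=\frac{\sqrt3}{2}$ — yields the new operator $\sigma(h,\zeta)\mathbb O$. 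The modes $v_{n,\pm\eps}(\cdot;\bp(\eps s,\eps\zeta))$ converge (after suitable phase normalization, cf. Remark~\ref{lem:phase}) to fixed combinations of $v_1,v_2$ on $\Gamma$, so the numerators factor through the coefficients $c_i(\vec\phi)$ of \eqref{eq:ci}, giving $\mathbb P$, $\mathbb Q$, and the new $\mathbb O$. The contribution from $n\geq 3$ and the principal-value parts of $n=1,2$ away from $\ell=0$ converge to $\tilde{\mathbb T}^0(\lambda_*)$ exactly as in Proposition~\ref{lem:oplim}, and uniformity in $|h|<\hrad$ follows from the no-fold Assumption~\ref{lem:assNoFold} along $\bbeta_1$, which keeps all other bands bounded away from $\lambda_*$ on the slice.

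Once \eqref{eq:oplimeta} is established, the three formulas \eqref{eq:etasuff}--\eqref{eq:etanece} are immediate by forming $\mathbb T^\eps \pm \mathbb T^{-\eps}$: the $\tilde{\mathbb T}^0$ and $\sigma\mathbb O$ parts are even in the $\pm$ sign (they come with $\mathbb T^\eps + \mathbb T^{-\eps}$ and survive in $\mathbb U_s$ with a factor $2$), while the $\xi\mathbb Q$ part is odd (survives in $\mathbb U_t, \mathbb U_n$), exactly as in Corollary~\ref{lem:stnlimits}; note $\mathbb P$ also survives in $\mathbb U_s$ while $\mathbb O$ does not appear in $\mathbb U_t,\mathbb U_n$ because $\mathbb O$ enters only through the $\mathbb T^\eps + \mathbb T^{-\eps}$ combination. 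I expect the main obstacle to be the bookkeeping in the rescaled integral: one must verify that conjugating by $\mathbb M(\eps\zeta)$ and rescaling $\ell=\eps s$ commute cleanly with taking the normal-derivative trace on $\Gamma$ (so that $\cN^{\pm\eps}[\mu]$ behaves as claimed), and that the phase factors $\alpha_n,\beta_n$ from the smooth-vs-piecewise-smooth band labeling (Remark~\ref{lem:phase}) do not contaminate the limit — they cancel in the products $\overline{v_{n}}v_{n}$ and in $c_i$, but this cancellation has to be checked carefully, together with the subleading $O(\eps,\ell,\mu)$ errors in \eqref{eq:pmodesLmu}, to get the stated uniform operator-norm convergence.
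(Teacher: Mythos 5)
Your proposal follows essentially the same route as the paper's proof: reduce to the near-Dirac-point integral via the two-parameter expansion of Proposition~\ref{lem:pmlamgen}, rescale in $\ell$, complete the square in $|\ell+\eps\zeta\bar\tau|^2=(\ell-\tfrac12\eps\zeta)^2+\tfrac34(\eps\zeta)^2$ to produce the modified $\beta,\xi$ and the new cross term $\sigma(h,\zeta)\mathbb O$, and then obtain \eqref{eq:etasuff}--\eqref{eq:etanece} from the parity of each piece under $\eps\mapsto-\eps$. The only presentational difference is that the paper carries out the computation in the $w_i\overline{w_j}$ basis (coefficients $F_i^{\pm}$) and converts to $v_i\overline{v_j}$ at the end, which is a bookkeeping choice rather than a different argument.
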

The proof is in Appendix~\ref{sec:disprat}.

Now we are ready to prove Theorem~\ref{lem:dispersion}.
\begin{proof}[Proof of Theorem~\ref{lem:dispersion}]
Assume $t_*>0$. 
We fix an $\zeta\neq0$ and let $\vec\phi = a \vec v_1 + b\vec v_2$.
First, $\mathbb U_s (h,\zeta)\vec\phi = 0$ yields
\begin{equation}\label{eq:etaslim}
\im\alpha_*\left(\begin{matrix}
\beta(h,\zeta) &  \sigma(h,\zeta) \\
\sigma(h,\zeta)& - \beta(h,\zeta)
\end{matrix}\right)
\left(\begin{matrix}
a\\b
\end{matrix}\right) =0. 
\end{equation}
Solving \eqref{eq:etaslim}, we obtain 
two characteristic values for $h$. More specifically,   
$\sigma(h,\zeta) = \im\beta(h,\zeta)$ gives that $h= \frac{\sqrt3}{2}\alpha_*\zeta $, $(a,b) = (1,\im)$; and $\sigma(h,\zeta) = - \im\beta(h,\zeta)$ gives that $h= - \frac{\sqrt3}{2}\alpha_*\zeta $, $(a,b) = (1, - \im)$.

Next, $\mathbb U_t (h,\zeta)\vec\phi = 0$  yields
\begin{equation}\label{eq:etatlim}
\left(\begin{matrix}
1 &  -2\im\alpha_*\xi(h,\zeta) \\
2\im\alpha_*\xi(h,\zeta)& 1
\end{matrix}\right)
\left(\begin{matrix}
a\\b
\end{matrix}\right) =0.
\end{equation}
Solving \eqref{eq:etanlim}, we obtain two characteristic values for $h$. Indeed, the condition $1 - 4\alpha_*^2 (\xi(h,\zeta))^2=0$ is achieved if and only if $|2\alpha_*\xi(h,\zeta)| = 1$, which is equivalent to $\frac{3}{4}\alpha_*^2\zeta^2-h^2 = 0$.
For both $h = \pm \frac{\sqrt3}{2}\alpha_*\zeta$, $(a,b) = (1,-\im)$.

Similarly, $\mathbb U_n (h,\zeta)\vec\phi = 0$  yields that
\begin{equation}\label{eq:etanlim}
\left(\begin{matrix}
1 &  2\im\alpha_*\xi(h,\zeta) \\
-2\im\alpha_*\xi(h,\zeta)& 1
\end{matrix}\right)
\left(\begin{matrix}
a\\b
\end{matrix}\right) =0,
\end{equation}
which also gives two characteristic values for $h$. 
For both $h = \pm \frac{\sqrt3}{2}\alpha_*\zeta$, $(a,b) = (1,\im)$. 

Thus for a fixed $\zeta\neq0$, $\mathbb T^{\eps}_s(\lambda_*+ \eps h,\zeta) $ has one characteristic value $h\approx\frac{\sqrt3}{2}\alpha_*\zeta$, with root function $\vec\phi \approx \vec v_1 + \im\vec v_2$. This $\vec\phi$ generates a nonzero interface mode since $\mathbb U_t (h,\zeta)( \vec v_1 + \im\vec v_2)\neq 0$. 
The theorem for $t_*>0$ follows by noting the following relation
$\lambda = \lambda_* + \eps h$, $\kp = (K+\eps\zeta\bbeta_2+\ell\bbeta_1)\cdot\be_2 = \kp^* + \eps \zeta$ and $\alpha_* = m_*|\bbeta_1|$.

When $t_*<0$, doing the parallel calculations, we obtain the theorem when $\sgn(t_*)=-1$.
\end{proof}

\begin{appendices}

\section{Gohberg and Sigal theory}
\label{sec:GStheory}
We briefly introduce the Gohberg and Sigal theory. We refer to Chapter 1.5 of \cite{Ammari-book} for a thorough exposition of the topic.

Let $X$ and $Y$ be two Banach spaces. 
Let $\mathfrak{U}(z_0)$ be the set of all operator-valued functions with values in $\mathcal{B}(X,Y)$, which are holomorphic in some neighborhood of $z_0$, except possibly at $z_0$. Then the point $z_0$ is called a \textbf{characteristic value} of $A(z)\in\mathfrak{U}(z_0)$ if there exists a vector-valued function $\phi(z)$ with values in $X$ such that
\begin{enumerate}
    \item $\phi(z)$ is holomorphic at $z_0$ and $\phi(z_0)\neq 0$,
    \item $A(z)\phi(z)$ is holomorphic at $z_0$ and vanishes at this point.
\end{enumerate}
Here $\phi(z)$ is called a \textbf{root function} of $A(z)$ associated with the characteristic value $z_0$, and $\phi(z_0)$ is called an \textbf{eigenvector}. By this definition, there exists an integer $m(\phi)\geq 1$ and a vector-valued function $\psi(z)\in Y$, holomorphic at $z_0$, such that
\begin{equation*}
    A(z)\phi(z)=(z-z_0)^{m(\phi)}\psi(z),\quad \psi(z_0)\neq 0.
\end{equation*}
The number $m(\phi)$ is called the \textbf{multiplicity} of the root function $\phi(z)$. For $\phi_0\in\text{Ker}A(z_0)$, the \textbf{rank} of $\phi_0$, which is denoted by $\text{rank}(\phi_0)$, is defined as the maximum of the multiplicities of all root functions $\phi(z)$ with $\phi(z_0)=\phi_0$.

Suppose that $n=\text{dim Ker}A(z_0)<+\infty$ and the ranks of all vectors in $\text{Ker}A(z_0)$ are finite. A system of eigenvectors $\phi_0^j$ ($j=1,2,\cdots,n$) is called a \textbf{canonical system of eigenvectors} of $A(z)$ associated to $z_0$ if for $j=1,2,\cdots,n$, $\text{rank}(\phi_0^j)$ is the maximum of the ranks of all eigenvectors in the direct complement in $\text{Ker}A(z_0)$ of the linear span of the vectors $\phi_0^{1},\cdots,\phi_0^{j-1}$. We call
\begin{equation*}
    N(A(z_0)):=\sum_{j=1}^{n}\text{rank}(\phi_0^{j})
\end{equation*}
the \textbf{null multiplicity} of the characteristic value $z_0$ of $A(z)$. 
Suppose that $A^{-1}(z)$ exists and is holomorphic in some neighborhood of $z_0$, except possibly at $z_0$. Then the number
\begin{equation*}
    M(A(z_0)):= N(A(z_0))- N(A^{-1}(z_0))
\end{equation*}
is called the \textbf{multiplicity} of $z_0$.
Note if $A(z)$ is bounded in a neighborhood of $z_0$, then $N(A^{-1}(z_0))=0$.

Now, let $V$ be a simply connected bounded domain with a rectifiable boundary $\partial V$. Let $A(z)$ be an operator-valued  function that is analytic in a neighborhood of $\overline{V}$, is Fredholm of index zero in $V$ and is invertible in $\overline{V}$ except at possibly a finite number of points.
For such a function $A(z)$, the full multiplicity $\mathcal{M}(A(z);\partial V)$ counts the number of characteristic values of $A(z)$ in $V$ (computed with their multiplicities). Namely,
\begin{equation*}
    \mathcal{M}(A(z);\partial V):=\sum_{i=1}^{\sigma}M(A(z_i))=\sum_{i=1}^{\sigma}N(A(z_i)),
\end{equation*}
where $z_i$ ($i=1,2,\cdots,\sigma$) are all characteristic values of $A(z)$ lying in $V$.
The generalized Rouch\'{e} theorem for analytic operator-valued functions is stated as follows. This is a special case of the generalized Rouche\'{e} theorem for finitely meromorphic operator-valued functions~\cite[Theorem 1.5]{Ammari-book}.


\begin{thm}\label{lem:actGS}
The operator-valued function $A(z)$ is analytic and Fredholm of index zero in a neighborhood of $\overline{V}$, $A^{-1}(z)$ exists except at a finite number of points in $\overline{V}$. The operator-valued function $B(z)$ is analytic in a neighborhood of $\overline{V}$. Suppose 
\begin{equation*}
    \|A^{-1}(z)B(z)\|_{\mathcal{B}(X,Y)}<1,\quad z\in \partial V.
\end{equation*}
Then the multiplicity of $A(z)$ in $V$ equals the multiplicity of $A(z)+B(z)$ in $V$. That is,
\begin{equation*}
    \mathcal{M}(A(z);\partial V)=\mathcal{M}(A(z)+B(z);\partial V).
\end{equation*}
In addition the multiplicities of $A(z)$ and $A(z)+B(z)$ do not involve their inverses. That is,
\begin{equation*}
    \mathcal{M}(A(z);\partial V)=\sum_{i=1}^{\sigma}N(A(z_i)),
\end{equation*}
\begin{equation*}
    \mathcal{M}((A+B)(z);\partial V)=\sum_{i=i}^{\sigma'}N((A+B)(z_i')),
\end{equation*}
where $z_i$, $i=1,\cdots,\sigma$ are all characteristic values of $A(z)$ in $V$, and $z_i'$, $i=1,\cdots,\sigma'$ are all characteristic values of $(A+B)(z)$ in $V$.
\end{thm}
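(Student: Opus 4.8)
The plan is to deduce Theorem~\ref{lem:actGS} from the general Gohberg--Sigal theory for finitely meromorphic, Fredholm operator-valued functions developed in~\cite[Chapter 1.5]{Ammari-book}, by checking that the hypotheses stated here place us inside its scope and then recording the simplification that occurs because both $A(z)$ and $A(z)+B(z)$ are genuinely analytic (not merely meromorphic) on a neighborhood of $\overline V$. Concretely, the claim to be reached is the generalized Rouch\'e identity $\mathcal M(A(z);\partial V)=\mathcal M(A(z)+B(z);\partial V)$ together with the statement that, under analyticity, each side collapses to a sum of \emph{null} multiplicities $N(\cdot)$.

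First I would recall the two analytic ingredients. (1) The generalized argument principle: if $\cA(z)$ is analytic and Fredholm of index zero on a neighborhood of $\overline V$ and invertible on $\partial V$, then
\begin{equation*}
\mathcal M(\cA(z);\partial V) = \frac{1}{2\pi\im}\,\mathrm{tr}\oint_{\partial V} \cA^{-1}(z)\,\cA'(z)\,dz,
\end{equation*}
where the integrand is finitely meromorphic with finite-rank residues supported at the characteristic values of $\cA$. This rests in turn on the Gohberg--Sigal normal form, which factors $\cA(z)$ near an isolated characteristic value $z_i$ as $E(z)\,\mathrm{diag}\big((z-z_i)^{\kappa_1},\dots,(z-z_i)^{\kappa_r},I\big)\,F(z)$ with $E,F$ holomorphic and invertible at $z_i$; the exponents $\kappa_j$ are the partial multiplicities, and $M(\cA(z_i))$ equals their signed sum. (2) The stability/homotopy step: along $\cA_t(z):=A(z)+tB(z)$, $t\in[0,1]$, the bound $\|A^{-1}(z)B(z)\|_{\mathcal B(X,Y)}<1$ on $\partial V$ gives $\cA_t(z)=A(z)\big(I+tA^{-1}(z)B(z)\big)$ on $\partial V$, a product of an analytic index-zero Fredholm function and an everywhere-invertible one; hence $\cA_t(z)$ is Fredholm of index zero near $\overline V$ (by analytic continuation of the $\partial V$ behaviour) and invertible on $\partial V$ for every $t$, with $\cA_t^{-1}(z)$ jointly continuous in $(t,z)\in[0,1]\times\partial V$. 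Feeding this into the trace formula shows $t\mapsto\mathcal M(\cA_t(z);\partial V)$ is continuous; being integer-valued it is constant, whence $\mathcal M(A;\partial V)=\mathcal M(A+B;\partial V)$.

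Next I would record why the multiplicities here are pure null multiplicities. Since $A(z)$ is analytic on a full neighborhood of $\overline V$, it is bounded near each characteristic value $z_i\in V$; as noted in the statement, boundedness of $A(z)$ forces $N(A^{-1}(z_i))=0$, so $M(A(z_i))=N(A(z_i))$ and therefore $\mathcal M(A(z);\partial V)=\sum_{i}N(A(z_i))$. Because $A(z)+B(z)$ is likewise analytic near $\overline V$, the identical reasoning applies at each of its characteristic values $z_i'$, giving $\mathcal M((A+B)(z);\partial V)=\sum_{i}N((A+B)(z_i'))$. Combined with the homotopy identity of the previous paragraph, this yields all assertions of the theorem.

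I expect the genuine content — and the main obstacle if one were to prove this from scratch rather than quote~\cite[Theorem 1.5]{Ammari-book} — to be the Gohberg--Sigal normal form together with the logarithmic-residue trace formula: showing that an analytic Fredholm pencil of index zero admits, near an isolated characteristic value, the diagonal factorization above with a well-defined finite system of partial multiplicities, and that the relevant traces behave well under the meromorphic perturbation $t A^{-1}B$ restricted to $\partial V$. The homotopy invariance and the reduction $M=N$ under analyticity are then routine. Since the paper invokes Theorem~\ref{lem:actGS} only with $A(z)$ analytic, Fredholm of index zero and invertible on the contour, and with $B(z)$ small on $\partial V$, it suffices to cite~\cite[Theorem 1.5]{Ammari-book} and verify these hypotheses in each application.
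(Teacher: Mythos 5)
Your proposal is correct and follows the same route the paper takes: the paper gives no proof of its own, simply citing \cite[Theorem 1.5]{Ammari-book} and noting that the present statement is the special case of the generalized Rouch\'e theorem in which both operator families are analytic (rather than merely finitely meromorphic), so that each is bounded near its characteristic values and $N(A^{-1}(z_i))=0$, i.e.\ $M=N$. Your sketch of the underlying argument-principle/homotopy proof and the reduction to null multiplicities is consistent with that reference and adds nothing beyond what the citation supplies, so the two treatments are in agreement.
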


\section{Proof of Propositions~\ref{lem:Tderiv} and~\ref{lem:TderivA}}
\label{sec:Tderiv}

We first display a set of facts that can be easily checked for the readers' convenience.
Define 
\begin{equation}
R_s := \left(\begin{matrix} \cos s & \sin s \\ -\sin s&\cos s\end{matrix}\right), \quad s\in\mathbb R, \quad R:=R_{\frac{2\pi}{3}}\quad\text{and}\quad J:= \left(\begin{matrix}0 &1\\-1&0 \end{matrix}\right).
\end{equation}
Acting on vectors in $\mathbb R^2$, it holds
\begin{equation}\label{eq:oprelat}
\begin{aligned}
&R^3=I, \quad I+R+R^2=0, \quad R_aR_b=R_bR_a,\quad \frac{\partial }{\partial_\eps}R_\eps|_{\eps=0}=J, \\ &\rflc^2=I, \quad \rflc R_\theta=(R_\theta)^{-1}\rflc, \quad
J^{-1}=-J,\quad JR=RJ,\quad \rflc J = - J \rflc.
\end{aligned}
\end{equation} 
For all vectors $\bx,\by\in\mathbb R^2$,
\begin{equation}
\quad R\bx\cdot R\by=\bx\cdot\by,  \quad \rflc\bx\cdot \rflc\by=\bx\cdot\by.
\end{equation}
The set $\tilde{\Lambda}^*:=\{\Kone+\bq,\bq\in\Lambda^*\}$ is invariant under $R$ and $\rflc$ as shown in \eqref{eq:dualsym}. The invariance of $\partial D$ under $R$ guarantees a partition 
\begin{equation}
\partial D = \sqcup_{n=1,2,3}C_n, \quad \bx\in C_n \text{ iff } \bx= R^{n-1}\bx' \text{ for some }\bx'\in C_1. 
\end{equation} 
We also have the relations
\begin{equation}\label{eq:Gxy}
\overline{G^f(\bx,\by;\lambda, \bp)} =G^f(\by,\bx;\lambda, \bp) \quad\forall \lambda\in\mathbb R, \forall\bp\in\mathbb R^2,
\end{equation} 
and
\begin{equation}\label{eq:RFmodes}
\begin{aligned}
&\overline{\rho_i(R^{-1}\bx)}\rho_i(R^{-1}\by) =\overline{\rho_i(\bx)}\rho_i(\by),\quad  
i=1,2,\quad 
\overline{\rho_1(R^{-1}\bx)}\rho_2(R^{-1}\by) =\overline{\tau}^2\overline{\rho_1(\bx)}\rho_2(\by),\\
&\overline{\rho_1(\rflc\bx)}\rho_1(\rflc\by) =\overline{\rho_2(\bx)}\rho_2(\by),\quad  
\overline{\rho_1(\rflc\bx)}\rho_2(\rflc\by) =\overline{\rho_2(\bx)}\rho_1(\by).
\end{aligned}
\end{equation}
Since $D=RD = \rflc D$, it is straightforward to verify 
\begin{equation}\label{eq:RFkernels}
\begin{aligned}
\int_{\partial D}\int_{\partial D} K(\bx,\by) \overline{f(\by)}g(\bx)\,ds_\bx \,ds_\by &= \int_{\partial D}\int_{\partial D} K(R^{-1}\bx,R^{-1}\by) \overline{f(R^{-1}\bx)}g(R^{-1}\by)\,ds_\bx \,ds_\by\\
&=\int_{\partial D}\int_{\partial D} K(\rflc\bx,\rflc\by) \overline{f(\rflc\bx)}g(\rflc\by)\,ds_\bx \,ds_\by.
\end{aligned}
\end{equation}

Recall
\begin{equation}
\langle f, T(\eps,\lambda,\bp)g\rangle_{\partial D} := \int_{\partial D}\int_{\partial D} \overline{f(\bx)} G^f(R^{\eps} \bx, R^{\eps} \by;\lambda, \bp)g(\by)\,ds_\bx \,ds_\by,
\end{equation}
where $\langle \cdot,\cdot\rangle_{\partial D}$ represents the $H^{-1/2}(\partial D)$-$H^{1/2}(\partial D)$ pairing. 
\begin{lemma}\label{lem:Tpair}
For all  $\lambda\in\mathbb R$, all quasimomenta $\bp$ with $\ell\in\mathbb R$, $\eps\in\mathbb R$, and $f,g\in H^{-1/2}(\partial D)$,
\begin{equation}
\overline{\langle f, T(\eps,\lambda,\bp)g\rangle}_{\partial D} =\langle g, T(\eps,\lambda,\bp)f\rangle_{\partial D}, \quad \text{and}\quad \langle f, T(\eps,\lambda,\bp)f\rangle_{\partial D}\in\mathbb R.
\end{equation}
\end{lemma}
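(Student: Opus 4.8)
The plan is to prove both identities directly from the lattice-sum (or spectral) representation of the quasiperiodic Green function together with the two symmetry facts assembled just above: $\overline{G^f(\bx,\by;\lambda,\bp)}=G^f(\by,\bx;\lambda,\bp)$ for real $\lambda$, and the fact that $R^\eps\partial D=\partial D$ for every $\eps$ only matters at $\eps=0$, but here the argument is even simpler because $R^\eps$ enters $T(\eps,\lambda,\bp)$ only through applying the \emph{same} rotation to both arguments of $G^f$. First I would write out
\begin{equation*}
\overline{\langle f, T(\eps,\lambda,\bp)g\rangle}_{\partial D}
= \int_{\partial D}\int_{\partial D} f(\bx)\,\overline{G^f(R^\eps\bx,R^\eps\by;\lambda,\bp)}\,\overline{g(\by)}\,ds_\bx\,ds_\by,
\end{equation*}
then apply $\overline{G^f(R^\eps\bx,R^\eps\by;\lambda,\bp)}=G^f(R^\eps\by,R^\eps\bx;\lambda,\bp)$ (valid since $\lambda\in\mathbb R$), and finally relabel the dummy variables $\bx\leftrightarrow\by$. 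This yields exactly $\int\int \overline{g(\bx)}\,G^f(R^\eps\bx,R^\eps\by;\lambda,\bp)\,f(\by)\,ds_\bx\,ds_\by=\langle g, T(\eps,\lambda,\bp)f\rangle_{\partial D}$, which is the first assertion.

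The second assertion is the special case $f=g$: the first identity gives $\overline{\langle f,T(\eps,\lambda,\bp)f\rangle}_{\partial D}=\langle f,T(\eps,\lambda,\bp)f\rangle_{\partial D}$, hence $\langle f,T(\eps,\lambda,\bp)f\rangle_{\partial D}\in\mathbb R$. I would also note that the conjugation identity for $G^f$ used here is precisely \eqref{eq:Gxy}, which is recorded above and follows from the spectral decomposition \eqref{eq:GQPfreq} together with the observation that the reciprocal lattice sum is invariant under $\bq\mapsto-\bq$ (equivalently, from the lattice sum \eqref{eq:GQPlattice} since $H_0^{(1)}$ applied to a real argument has the symmetry absorbed into the $e^{\im\bp\cdot\be}$ factors after reindexing $\be\mapsto-\be$). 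The only care needed is that the $H^{-1/2}$--$H^{1/2}$ pairing is bilinear-with-conjugation in the convention used in the paper, i.e. $\langle\phi,\psi\rangle_{\partial D}=\int_{\partial D}\overline\phi\,\psi$, so that $\overline{\langle f, Tg\rangle}$ moves the conjugate onto the kernel and onto $g$ as written; the relabeling step then produces the correct unconjugated $f$ in the second slot.

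I do not expect any genuine obstacle here: the statement is a bookkeeping lemma whose entire content is ``$T$ is self-adjoint for real $\lambda$'' and it reduces to the Green function conjugation symmetry plus a change of dummy variables. If anything, the only point worth stating carefully is that no regularization of the (weakly singular) kernel is needed for the manipulation, since all integrals are understood in the standard layer-potential sense on the smooth closed curve $\partial D$ and the operations (complex conjugation, Fubini, relabeling) are all justified on the dense subspace of smooth densities and extend by continuity of $T:H^{-1/2}(\partial D)\to H^{1/2}(\partial D)$ and of the pairing. I would therefore present the proof in three short lines as above and remark that the same computation applies verbatim with $T(\eps,\lambda,\bp)$ replaced by any of the related operators, which is why the self-adjointness is used freely in the sequel.
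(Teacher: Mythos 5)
Your proof is correct and follows the paper's argument: apply the Hermitian symmetry $\overline{G^f(\bx,\by;\lambda,\bp)}=G^f(\by,\bx;\lambda,\bp)$ for real $\lambda$ (equation~\eqref{eq:Gxy}), then relabel the dummy integration variables, with the reality statement following from the $f=g$ case. The paper's proof is a one-line appeal to~\eqref{eq:Gxy}; you have simply spelled out the relabeling step and the density/continuity justification, which the paper leaves implicit.
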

\begin{proof} 
By \eqref{eq:Gxy}, 
\begin{equation}
\overline{ \int_{\partial D}\int_{\partial D} \overline{f(\bx)} G^f(R^{\eps} \bx, R^{\eps} \by;\lambda, \bp)g(\by)\,ds_\bx \,ds_\by } = \int_{\partial D}\int_{\partial D} f(\bx) G^f(R^{\eps} \by, R^{\eps} \bx;\lambda, \bp) \overline{g(\by)}\,ds_\bx \,ds_\by.
\end{equation}
\end{proof}
Now we are ready to prove Proposition~\ref{lem:Tderiv}.
\begin{proof}[Proposition~\ref{lem:Tderiv}]
Since the derivatives in Proposition~\ref{lem:Tderiv} are all with respect to real variables, by Lemma~\ref{lem:Tpair}, the four matrices in Proposition~\ref{lem:Tderiv} all take the form
\begin{equation} \label{eq:derivgeneral}
\left(\begin{matrix}a&c\\c^*&b \end{matrix}\right),\quad a,b\in\mathbb R, \quad c\in\mathbb C.
\end{equation}

For the first equation, consider the $\lambda$ derivative when $\eps=0$ and $\bp=\Kone$.
We have
\begin{equation} \label{eq:Gderlam}
\begin{aligned}
\partial_\lambda G^f(\bx, \by;\lambda, \Kone) = \frac{1}{|\cC_z|}\sum_{\bm\in\tilde\Lambda^*}\frac{1}{(\lambda - |\bm|^2)^2} e^{\im\bm\cdot (\bx-\by)}. 
\end{aligned}
\end{equation}
We observe that
\begin{equation} \label{eq:lambdaR}
\begin{aligned}
\partial_\lambda G^f(R^{-1}\bx, R^{-1}\by;\lambda, \Kone) 
&= \frac{1}{|\cC_z|}\sum_{\bm\in\tilde\Lambda^*}\frac{1}{(\lambda - |\bm|^2)^2} e^{\im\bm\cdot (\bx-\by)}
= \frac{1}{|\cC_z|}\sum_{\bm\in\tilde\Lambda^*}\frac{1}{(\lambda - |\bm|^2)^2} e^{\im\bm\cdot R^{-1}(\bx-\by)} \\
&= \frac{1}{|\cC_z|}\sum_{\bm\in\tilde\Lambda^*}\frac{1}{(\lambda - |\bm|^2)^2} e^{\im R\bm\cdot (\bx-\by)} 
= \partial_\lambda G^f(\bx, \by;\lambda, \Kone)
\end{aligned}
\end{equation}
and
\begin{equation}\label{eq:lambdaF}
\begin{aligned}
\partial_\lambda G^f(\rflc\bx, \rflc\by;\lambda, \Kone) 
&= \frac{1}{|\cC_z|}\sum_{\bm\in\tilde\Lambda^*}\frac{1}{(\lambda - |\bm|^2)^2} e^{\im\bm\cdot \rflc(\bx-\by)}\\
&= \frac{1}{|\cC_z|}\sum_{\bm\in\tilde\Lambda^*}\frac{1}{(\lambda - |\bm|^2)^2} e^{\im\rflc\bm\cdot (\bx-\by)}
= \partial_\lambda G^f(\bx, \by;\lambda, \Kone).
\end{aligned}
\end{equation}
Here we have used \eqref{eq:dualsym}
 and $|\bm| = |R\bm| = |\rflc\bm|$. 
The diagonal terms are equal because
\begin{equation} 
\begin{aligned}
\int_{\partial D}\int_{\partial D} \partial_\lambda G^f(\bx, \by;\lambda, \Kone)\overline{\rho_1(\bx)}\rho_1(\by)\,ds_\bx \,ds_\by
&=\int_{\partial D}\int_{\partial D} \partial_\lambda G^f(\rflc\bx, \rflc\by;\lambda, \Kone)\overline{\rho_2(\bx)}\rho_2(\by)\,ds_\bx \,ds_\by\\
&=\int_{\partial D}\int_{\partial D} \partial_\lambda G^f(\bx,\by;\lambda, \Kone)\overline{\rho_2(\bx)}\rho_2(\by)\,ds_\bx \,ds_\by.
\end{aligned}
\end{equation}
Here the first equality follows from \eqref{eq:RFmodes} and \eqref{eq:RFkernels}, and the second equaltiy follows from \eqref{eq:lambdaF}.  
The off-diagonal terms are zero follows from the relation below and the fact $\overline{\tau}^2\neq1$:
\begin{equation} 
\begin{aligned}
\int_{\partial D}\int_{\partial D} \partial_\lambda G^f(\bx, \by;\lambda, \Kone)\overline{\rho_1(\bx)}\rho_2(\by)\,ds_\bx \,ds_\by
&=\overline{\tau}^2\int_{\partial D}\int_{\partial D} \partial_\lambda G^f(R^{-1}\bx, R^{-1}\by;\lambda, \Kone)\overline{\rho_1(\bx)}\rho_2(\by)\,ds_\bx \,ds_\by\\
&=\overline{\tau}^2\int_{\partial D}\int_{\partial D} \partial_\lambda G^f(\bx,\by;\lambda, \Kone)\overline{\rho_1(\bx)}\rho_2(\by)\,ds_\bx \,ds_\by.
\end{aligned}
\end{equation}
Here we have used \eqref{eq:RFmodes}, \eqref{eq:RFkernels} and \eqref{eq:lambdaR}.

For the second equation, consider the $\nabla_\bp$ derivative when $\eps=0$ around $\Kone$ 
\begin{equation} \label{eq:Gderbeta1}
\begin{aligned}
\nabla_\bp G^f(\bx, \by;\lambda, \bp)|_{\bp=\Kone} =& -\frac{1}{|\cC_z|}\sum_{[\bm]\in[\tilde\Lambda^*]}\frac{2}{(\lambda - |\bm|^2)^2} e^{\im \bm\cdot (\bx-\by)} \bm-\frac{1}{|\cC_z|}\sum_{[\bm]\in[\tilde\Lambda^*]} \frac{\im}{\lambda - |\bm|^2}e^{\im \bm\cdot (\bx-\by)})(\bx-\by)
\end{aligned}
\end{equation}
We only need to show that the diagonal terms are zero. 
Decompose the integral domain into
\begin{equation}
\begin{aligned}
\partial D\times\partial D =&\left((C_1\times C_1)\sqcup (C_2\times C_2) \sqcup (C_3\times C_3)\right) \\ 
&\sqcup \left((C_1\times C_2)\sqcup (C_2\times C_3) \sqcup (C_3\times C_2)\right) \\
&\sqcup \left((C_1\times C_3)\sqcup (C_2\times C_1) \sqcup (C_3\times C_2)\right).
\end{aligned}
\end{equation}
Define the vectors 
\begin{equation}
I_{i,j}:=\int_{C_i}\int_{C_j} \overline{\rho_1(\bx)} \nabla_\bp G^f(R^{\eps} \bx, R^{\eps} \by;\lambda, \bp)\rho_1(\by)\,ds_\bx \,ds_\by,\quad i,j=1,2,3.
\end{equation}
We have 
\begin{equation} 
\begin{aligned}
&I_{2,2} = \\
&-\frac{1}{|\cC_z|}\sum_{\bm\in\tilde\Lambda^*}\int_{C_2\times C_2} \left( \frac{2}{(\lambda - |\bm|^2)^2} e^{\im \bm\cdot (\bx-\by)} \bm 
+ \frac{\im}{\lambda - |\bm|^2}e^{\im \bm\cdot (\bx-\by)}(\bx-\by)1 \right) \overline{\rho_1(\bx)}\rho_1(\by) \,ds_\bx \,ds_\by\\
&= -\frac{1}{|\cC_z|}\sum_{\bm\in\tilde\Lambda^*}\int_{C_1\times C_1} \left( \frac{2}{(\lambda - |\bm|^2)^2} e^{\im \bm\cdot R(\bx'-\by')} \bm
+ \frac{\im}{\lambda - |\bm|^2}e^{\im \bm\cdot R(\bx'-\by')}R(\bx'-\by')\right) \overline{\rho_1(R\bx')}\rho_1(R\by') \,ds_{\bx'} \,ds_{\by'}\\
&= -\frac{1}{|\cC_z|}\sum_{\bm\in\tilde\Lambda^*}\int_{C_1\times C_1} \left(\frac{2}{(\lambda - |\bm|^2)^2} e^{\im R^2\bm\cdot (\bx'-\by')} \bm + \frac{\im}{\lambda - |\bm|^2}e^{\im R^2\bm\cdot (\bx'-\by')}R(\bx'-\by') \right)\overline{\rho_1(\bx')}\rho_1(\by') \,ds_{\bx'} \,ds_{\by'}\\
&= -\frac{1}{|\cC_z|}\sum_{\bm'\in\tilde\Lambda^*}\int_{C_1\times C_1} \left(\frac{2}{(\lambda - |\bm'|^2)^2} e^{\im \bm'\cdot (\bx-\by)} R\bm'  + \frac{\im}{\lambda - |\bm'|^2}e^{\im \bm'\cdot (\bx-\by)}R(\bx-\by) \right)\overline{\rho_1(\bx')}\rho_1(\by') \,ds_\bx \,ds_\by\\
&=RI_{1,1}.
\end{aligned}
\end{equation}
Similarly we have $I_{3,3} = R^2I_{1,1}$ and thus $I_{1,1}+I_{2,2}+I_{3,3}=0$ by  
$I+R+R^2=0$ on vectors as stated in \eqref{eq:oprelat}. 
Similarly, $I_{1,2}+I_{2,3}+I_{3,1}=I_{1,3}+I_{2,1}+I_{3,2}=0$. Thus $\langle \rho_1, \nabla_\bp G^f(\bx, \by;\lambda, \bp)|_{\bp=\Kone}\rho_1\rangle_{\partial D=0}$. The same method gives  $\langle \rho_2, \nabla_\bp G^f(\bx, \by;\lambda, \bp)|_{\bp=\Kone}\rho_2\rangle_{\partial D=0}$.

For the third equation in \eqref{eq:Tderiv}, we only need to show 
\begin{equation}
\bbeta_2\cdot\langle \rho_1, \nabla_\bp G^f(\bx, \by;\lambda, \bp)|_{\bp=\Kone}\rho_2\rangle_{\partial D=0}
=\tau \bbeta_1\cdot\langle \rho_1, \nabla_\bp G^f(\bx, \by;\lambda, \bp)|_{\bp=\Kone}\rho_2\rangle_{\partial D=0}
\end{equation}
This is true since
\begin{equation} 
\begin{aligned}
& \sum_{\bm\in\tilde\Lambda^*}\int_{\partial D\times \partial D} \left( \frac{2}{(\lambda - |\bm|^2)^2} e^{\im \bm\cdot (\bx-\by)} \bm \cdot\bbeta_2 
+ \frac{\im}{\lambda - |\bm|^2}e^{\im \bm\cdot (\bx-\by)}(\bx-\by)\cdot\bbeta_2 \right) \overline{\rho_1(\bx)}\rho_2(\by) \,ds_\bx \,ds_\by\\
& =\sum_{\bm\in\tilde\Lambda^*}\int_{\partial D\times \partial D} \left( \frac{2}{(\lambda - |\bm|^2)^2} e^{\im \bm\cdot (\bx-\by)} \bm \cdot R^{-1}\bbeta_1
+ \frac{\im}{\lambda - |\bm|^2}e^{\im \bm\cdot (\bx-\by)}(\bx-\by)\cdot R^{-1}\bbeta_1\right) \overline{\rho_1(\bx)}\rho_2(\by) \,ds_\bx \,ds_\by\\
& = \sum_{\bm\in\tilde\Lambda^*}\int_{\partial D\times \partial D} \left( \frac{2}{(\lambda - |\bm|^2)^2} e^{\im R\bm\cdot R(\bx-\by)} R\bm \cdot\bbeta_1
+ \frac{\im}{\lambda - |\bm|^2}e^{\im R\bm\cdot R(\bx-\by)}R(\bx-\by)\cdot R\bbeta_1\right) \overline{\rho_1(R^{-1}R\bx)}\rho_2(R^{-1}R\by) \,ds_\bx \,ds_\by\\
& = \sum_{\bm'\in\tilde\Lambda^*}\int_{\partial D\times \partial D} \left( \frac{2}{(\lambda - |\bm'|^2)^2} e^{\im \bm'\cdot (\bx'-\by')} \bm' \cdot\bbeta_1
+ \frac{\im}{\lambda - |\bm'|^2}e^{\im \bm'\cdot (\bx'-\by')}(\bx'-\by')\cdot R\bbeta_1\right) \overline{\tau\rho_1(\bx')}\bar\tau\rho_2(\by') \,ds_{\bx'} \,ds_{\by'}.
\end{aligned}
\end{equation}
Here we have used \eqref{eq:Gderbeta1}, $\bbeta_2 = R^{-1}\bbeta_1$ and $(\bar\tau)^2=\tau$.

For the fourth equation, consider the  $\eps$ derivative when $\bp=\Kone$ around $\eps=0$. Define
\begin{equation} \label{eq:Gdereps}
\begin{aligned}
K(\bx,\by):=\partial_\eps G^f(R^\eps\by, R^\eps\bx;\lambda, \Kone)|_{\eps=0} =-\frac{1}{|\cC_z|}\sum_{\bm\in\tilde\Lambda^*} \frac{\im}{\lambda - |\bm|^2}e^{\im \bm\cdot (\bx-\by)}\bm\cdot J(\bx-\by).
\end{aligned}
\end{equation}
We verify 
\begin{equation} \label{eq:epsR}
\begin{aligned}
K(R^{-1}\bx, R^{-1}\by) 
&= -\frac{1}{|\cC_z|}\sum_{\bm\in\tilde\Lambda^*} \frac{\im}{\lambda - |\bm|^2}e^{\im \bm\cdot R^{-1}(\bx-\by)}\bm\cdot JR^{-1}(\bx-\by)\\
&=-\frac{1}{|\cC_z|}\sum_{\bm\in\tilde\Lambda^*} \frac{\im}{\lambda - |\bm|^2}e^{\im R\bm\cdot (\bx-\by)}R\bm\cdot J(\bx-\by)
= K(\bx,\by) 
\end{aligned}
\end{equation}
and
\begin{equation}\label{eq:epsF}
\begin{aligned}
K(\rflc\bx, \rflc\by)
&= -\frac{1}{|\cC_z|}\sum_{\bm\in\tilde\Lambda^*} \frac{\im}{\lambda - |\bm|^2}e^{\im \bm\cdot \rflc(\bx-\by)}\bm\cdot J\rflc(\bx-\by)\\
&= \frac{1}{|\cC_z|} \sum_{\bm\in\tilde\Lambda^*} \frac{\im}{\lambda - |\bm|^2}e^{\im \rflc\bm\cdot (\bx-\by)}\rflc\bm\cdot J(\bx-\by)
= -K(\bx,\by).
\end{aligned}
\end{equation}
Here we have used \eqref{eq:oprelat}, \eqref{eq:dualsym} and $|\bm| = |R\bm| = |\rflc\bm|$.
Thus a similar argument as that for the first equation gives opposite diagonal terms are zero off-diagonal terms.
\end{proof}

\begin{proof}[Proof of Proposition~\ref{lem:TderivA}]
A linear combination of the second and third equations in \eqref{eq:Tderiv} gives \eqref{eq:TderivA}, since $\tilde\bbeta_1 = \bbeta_1 - \bbeta_2$.

Observe
\begin{equation}\label{}
\Ktwo + \Lambda^* = -\Kone +\Lambda^*.
\end{equation} 
Using \eqref{eq:Gderlam}, \eqref{eq:Gderbeta1} and \eqref{eq:Gdereps}, we know that 
\begin{equation}\label{eq:TderivZKK}
\begin{aligned}
\partial_\lambda G^f(\by, \bx;\lambda, \Ktwo) &= \overline{\partial_\lambda G^f(\by, \bx;\lambda, \Kone)}, \\
\nabla_\bp G^f(\by, \bx;\lambda, \bp)|_{\bp=\Ktwo} &= -\overline{ \nabla_\bp G^f(\by, \bx;\lambda, \bp)|_{\bp=\Kone} }\\
\partial_\eps G^f(R^\eps\by, R^\eps\bx;\lambda, \Ktwo)|_{\eps=0} &=
\overline{\partial_\eps G^f(R^\eps\by, R^\eps\bx;\lambda, \Kone)|_{\eps=0}}.
\end{aligned}
\end{equation}
Using $\rho_1'(\bx)= \overline{\rho_2(\bx)}$ and $\rho_2'(\bx) = \overline{\rho_1(\bx)}$, we obtain 
\begin{equation}
\langle \rho_1', \nabla_\bp G^f(\bx, \by;\lambda, \bp)|_{\bp=\Ktwo}\rho_2'\rangle_{\partial D=0}
= - \overline{\langle \rho_2, \nabla_\bp G^f(\bx, \by;\lambda, \bp)|_{\bp=\Kone}\rho_1\rangle_{\partial D=0}} = -\overline{\theta_*}.
\end{equation}
This finishes the proof of \eqref{eq:TderivKtwo}.
\end{proof}

\section{Proof of Proposition~\ref{lem:oplim}}
\label{sec:oplim}
In this subsection, all $\langle \cdot, \cdot\rangle$ pairings represent the $\HhalfG$-$\HmhalfG$ pairing.
We prove Proposition~\ref{lem:oplim} in this appendix. Recall the definitions of $\mathbb T^\eps(\lambda)$ in \eqref{eq:bTeps} and $\mathbb U_{\pm}(h)$ in \eqref{eq:ci} to \eqref{eq:Upm}. We are claiming that
\begin{equation}\label{eq:Sv}
\begin{aligned}
\left\|\cS^{\pm\eps}(\lambda_*+\eps h)\phi - \left(\tilde\cS^0(\lambda_*)\phi 
+\beta(h)\overline{\langle \phi, v_1\rangle} v_1
+\right.\right.&\beta(h)\overline{\langle \phi, v_2\rangle} v_2
\mp \xi(h)\overline{\langle \phi, v_1\rangle} v_2 \\
&\left.\left.
\mp \xi(h)\overline{\langle \phi, v_2\rangle} v_1
\right) \right\|_{\HhalfG} /\|\phi\|_{\HmhalfG} \to 0.
\end{aligned}
\end{equation}
\begin{equation}
\begin{aligned}
\left\|\cK^{\pm\eps}(\lambda_*+\eps h)\psi - \left(\tilde\cK^0(\lambda_*)\psi 
+\right.\right.&\beta(h)\langle \partial_n v_1, \psi \rangle v_1
+\beta(h)\langle \partial_n v_2, \psi \rangle v_2\\
&\left.\left.
\mp \xi(h)\langle \partial_n v_1, \psi \rangle v_2
\mp \xi(h)\langle \partial_n v_2, \psi \rangle v_1
\right)
\right\|_{\HhalfG} /\|\psi\|_{\HhalfG} \to 0,
\end{aligned}
\end{equation}
\begin{equation}
\begin{aligned}
\left\|\cK^{*,\pm\eps}(\lambda_*+\eps h)\phi - \left(\tilde\cK^{*,0}(\lambda_*)\phi 
+\right.\right.&\beta(h)\overline{\langle \phi, v_1\rangle} \partial_n v_1
+\beta(h)\overline{\langle \phi, v_2\rangle} \partial_n v_2\\
&\left.\left.
\mp \xi(h)\overline{\langle \phi, v_1\rangle} \partial_n v_2
\mp \xi(h)\overline{\langle \phi, v_2\rangle} \partial_n v_1
\right) \right\|_{\HmhalfG} /\|\phi\|_{\HmhalfG} \to 0,
\end{aligned}
\end{equation}
\begin{equation}
\begin{aligned}
\left\|\cN^{\pm\eps}(\lambda_*+\eps h)\psi - \left(\tilde\cN^0(\lambda_*)\psi 
+\right.\right.&\beta(h)\langle \partial_n v_1, \psi \rangle \partial_n v_1
+\beta(h)\langle \partial_n v_2, \psi \rangle \partial_n v_2\\
&\left.\left.
\mp \xi(h)\langle \partial_n v_1, \psi \rangle \partial_n v_2
\mp \xi(h)\langle \partial_n v_2, \psi \rangle \partial_n v_1
\right) \right\|_{\HmhalfG} /\|\psi\|_{\HmhalfG} \to 0,
\end{aligned}
\end{equation}

In subsections~\ref{sec:nearDirac}-\ref{sec:Sn3}, we focus on $\cS^{\pm\eps}(\lambda_*+\eps h)$ in \eqref{eq:Sv}.
Using the representations of
$\cS^{\pm\eps}(\lambda_*+\eps h)$ in \eqref{eq:Seps} and $\tilde\cS^{0} (\lambda)$ in \eqref{eq:tildeS0}, we will break the integral in $\cS^{\pm\eps}(\lambda_*+\eps h)$ into three parts: (a) near the Dirac point given by \eqref{eq:nearDiracF},  (b) the first two bands away from the Dirac point given by \eqref{eq:Sn12}, and (c) higher bands given by \eqref{eq:Sn3}.
The convergence of the other three operators are verified similarly, and 
are given by Lemma~\ref{lem:NearD}, Lemma~\ref{lem:n12}, Corollary~\ref{lem:Kstarn3} and Subsection~\ref{sec:Dn3}.

\subsection{Near the Dirac point}\label{sec:nearDirac}
\begin{lemma}
The following convergence holds in the operator norm from $\HmhalfG$ to $\HhalfG$ 
uniformly for $h\in\mathbb C$ that satisfy $|h|< \hrad$, 
as $\eps\to 0^+$: 
\begin{equation}\label{eq:nearDiracF}
\begin{aligned}
\sum_{n=1,2}&\frac{1}{2\pi}\int_{[-\eps^{1/3},\eps^{1/3} ]} 
\frac{\overline{\langle \phi, v_{n,\pm\eps}(\cdot;\bp(\ell))\rangle} v_{n,\pm\eps}(\bx;\bp(\ell))}{ \mu_{n,\pm\eps}(\bp(\ell)) - (\lambda_*+\eps h)} \, \dpt\\
&\to 
\beta(h)\overline{\langle \phi, v_1\rangle} v_1
+\beta(h)\overline{\langle \phi, v_2\rangle} v_2
\mp \xi(h)\overline{\langle \phi, v_1\rangle} v_2
\mp \xi(h)\overline{\langle \phi, v_2\rangle} v_1.
\end{aligned}
\end{equation}
\end{lemma}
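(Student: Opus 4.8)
The strategy is to analyze the $n=1,2$ contributions to $\cS^{\pm\eps}(\lambda_*+\eps h)$ near $\ell=0$ by first replacing the exact Bloch modes $v_{n,\pm\eps}(\bx;\bp(\ell))$ and dispersion energies $\mu_{n,\pm\eps}(\bp(\ell))$ by their leading-order asymptotics from Lemma~\ref{lem:pmstrip}, and then computing the resulting scalar integrals explicitly. Concretely, on the window $\ell\in[-\eps^{1/3},\eps^{1/3}]$ we substitute $\mu_{n,\pm\eps}(\bp(\ell)) - \lambda_* = \pm\frac{1}{|\gamma_*|}\sqrt{\eps^2 t_*^2+|\theta_*|^2\ell^2}\,(1+O(\eps,\ell))$ (with the appropriate sign for $n=1$ versus $n=2$), so the denominator becomes $\mp\frac{1}{|\gamma_*|}\sqrt{\eps^2 t_*^2+|\theta_*|^2\ell^2}-\eps h$ up to higher-order corrections. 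We then change variables $\ell = \eps s$, which turns $\dpt = \eps\, ds$, converts the integration range to $s\in[-\eps^{-2/3},\eps^{-2/3}]$, and makes the integrand homogeneous of degree $-1$ in $\eps$ so the $\eps$ factors cancel. The modes $v_{n,\pm\eps}(\bx;\bp(\eps s))$ converge (up to a phase, cf.\ Remark~\ref{lem:phase}) to the fixed modes $w_1, w_2$, hence by \eqref{eq:winv} to linear combinations of $v_1,v_2$; the key point is that the phase factors and the branch structure conspire so that $\overline{\langle\phi,v_{n,\pm\eps}\rangle}\,v_{n,\pm\eps}$ behaves like a specific combination of the four rank-one operators $\overline{\langle\phi,v_i\rangle}v_j$, $i,j\in\{1,2\}$, appearing on the right-hand side.

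After the rescaling, the problem reduces to evaluating, as $\eps\to 0^+$, integrals of the form $\frac{1}{2\pi}\int_{\mathbb R}\frac{ds}{\mp\frac{1}{|\gamma_*|}\sqrt{t_*^2+|\theta_*|^2 s^2}- h}$ and $\frac{1}{2\pi}\int_{\mathbb R}\frac{(\text{mode cross-terms})(s)\,ds}{\mp\frac{1}{|\gamma_*|}\sqrt{t_*^2+|\theta_*|^2 s^2}-h}$. These are elementary: the diagonal part (coefficient of $\overline{\langle\phi,v_i\rangle}v_i$) yields $\beta(h)$ and the off-diagonal part (coefficient of $\overline{\langle\phi,v_i\rangle}v_j$, $i\ne j$) yields $\mp\xi(h)$, where $\beta,\xi$ are exactly the functions defined in \eqref{eq:betaxi}; this is where the restriction $|h| < \hrad = \mathfrak d|t_*/\gamma_*|$ is used, since it guarantees $\sqrt{\beta_*^2-h^2}\ne 0$, i.e.\ $h$ stays away from the poles of $\beta$ and $\xi$, so the integrals converge and depend analytically on $h$. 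The cross-terms in the modes, computed from \eqref{eq:pmodesL}--\eqref{eq:mmodesL} and the change of basis \eqref{eq:winv}, produce precisely the combination $\beta(h)(\overline{\langle\phi,v_1\rangle}v_1 + \overline{\langle\phi,v_2\rangle}v_2) \mp \xi(h)(\overline{\langle\phi,v_1\rangle}v_2 + \overline{\langle\phi,v_2\rangle}v_1)$ after one sums the $n=1$ and $n=2$ branches.

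The technical heart of the argument — and the main obstacle — is controlling the errors uniformly in $h$ and in the function-space norm. There are two sources of error: (i) replacing $v_{n,\pm\eps}$, $\mu_{n,\pm\eps}$ by their asymptotic expansions introduces $O(\eps,\ell) = O(\eps^{1/3})$ relative errors inside the integrand, and one must check these are integrable after rescaling and vanish as $\eps\to 0^+$; (ii) the tails $|s| > \eps^{-2/3}$ that are discarded when passing from the finite rescaled interval to $\mathbb R$ must be shown to be $o(1)$, which follows because the integrand decays like $1/|s|$ and the cutoff at the original scale $|\ell| = \eps^{1/3}$ means the tail contributes $O(\int_{\eps^{-2/3}}^\infty s^{-1}\cdot(\text{decaying mode factor})\,ds)$; here one exploits that the modes $v_{n,\pm\eps}(\cdot;\bp(\ell))$ stay bounded in $H^1(\cC_z)$ (hence their traces stay bounded in $\HhalfG$) uniformly in $\ell$, and that away from $\ell=0$ the denominator is bounded below. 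The uniformity over $\phi$ is handled by noting that all operators in question are sums of rank-one operators $\phi\mapsto \overline{\langle\phi,a\rangle}\,b$ whose operator norm is $\|a\|_{\HhalfG}\|b\|_{\HhalfG}$, so norm convergence of the operators reduces to convergence of the scalar coefficients together with $H^{1/2}(\Gamma)$-convergence of the trace functions $a,b$; the latter follows from the $H^1$-convergence in Lemma~\ref{lem:normest} and the trace theorem. Assembling these estimates, the stated limit \eqref{eq:nearDiracF} follows, with the convergence uniform for $|h| < \hrad$.
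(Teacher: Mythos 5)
Your overall route — expand $\mu_{n,\pm\eps}$ and $v_{n,\pm\eps}$ via Lemma~\ref{lem:pmstrip}, rescale $\ell=\eps s$, and pass to scalar integrals — is the same as the paper's. But two steps in your error control are wrong as written, and a third mechanism that makes the computation work is missing.

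First, your tail estimate is incorrect and in fact self-contradictory. You write that the tail contributes $O\bigl(\int_{\eps^{-2/3}}^\infty s^{-1}\cdot(\text{decaying mode factor})\,ds\bigr)$ while simultaneously (and correctly) observing that the modes $v_{n,\pm\eps}(\cdot;\bp(\ell))$ stay \emph{bounded} in $\HhalfG$ uniformly in $\ell$ — they do not decay; they are $L^2$-normalized. With bounded modes, $\int_{\eps^{-2/3}}^\infty s^{-1}\,ds$ diverges, so the argument as stated fails. The true reason the tail and the full $s$-integral are controlled is a cancellation between the $n=1$ and $n=2$ branches: after rescaling the $n=1$ integrand behaves like $1/(-\tfrac{1}{|\gamma_*|}\sqrt{t_*^2+|\theta_*|^2 s^2}-h)\sim -|\gamma_*|/(|\theta_*||s|)$ and the $n=2$ integrand like $+|\gamma_*|/(|\theta_*||s|)$; their sum decays like $h/s^2$ and is integrable, while each individually diverges. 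This is exactly the remark in the paper's proof that the individual integrals $f_1,\dots,f_4$ are divergent but the combinations $f_1+f_4$ and $f_2+f_3$ converge. Your write-up never confronts this: you treat each band's scalar integral as separately convergent, which it is not.

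Second, you never identify the mechanism that kills the $\overline{\langle\phi,w_1\rangle}w_2$ and $\overline{\langle\phi,w_2\rangle}w_1$ cross-terms; you only gesture that ``the phase factors and the branch structure conspire.'' After substituting \eqref{eq:pmodesL}--\eqref{eq:mmodesL}, these cross-terms carry the factor $L(\eps,\ell)$ (or $\overline{L(\eps,\ell)}$), and the decisive observation is that $L(\eps,\ell)/\bigl(\tfrac{1}{|\gamma_*|}\sqrt{\eps^2 t_*^2+|\theta_*|^2\ell^2}\pm\eps h\bigr)$ is odd in $\ell$, so these contributions vanish upon integrating over the symmetric window $[-\eps^{1/3},\eps^{1/3}]$. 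Without this parity argument you cannot justify that the limit operator is diagonal in the $w$-basis, which in turn is what produces (after the basis change \eqref{eq:winv} from $w_i$ to $v_i$) the precise $\beta(h)\cdot(\text{diagonal})\mp\xi(h)\cdot(\text{off-diagonal})$ structure on the right-hand side. Claiming ``diagonal yields $\beta$, off-diagonal yields $\mp\xi$'' directly in the $v$-basis skips the step where the oddness cancellation actually happens and where the $45^\circ$ rotation between the two bases is used.
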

We prove this lemma by proving the next two lemmas. 
\begin{lemma}
The following convergence holds in the operator norm from $\HmhalfG$ to $\HhalfG$ 
uniformly for $h\in\mathbb C$ that satisfy $|h|< \hrad$, 
as $\eps\to 0^+$: 
\begin{equation}\label{eq:nearDirac}
\sum_{n=1,2}\frac{1}{2\pi}\int_{[-\eps^{1/3},\eps^{1/3} ]} 
\frac{\overline{\langle \phi, v_{n,\pm\eps}(\cdot;\bp(\ell))\rangle} v_{n,\pm\eps}(\bx;\bp(\ell))}{ \mu_{n,\pm\eps}(\bp(\ell)) - (\lambda_*+\eps h)} \, \dpt
\to a_{\pm}(h) \overline{\langle \phi, w_1\rangle} w_1 + b_{\pm}(h)   \overline{\langle \phi, w_2\rangle} w_2. 
\end{equation}
where 
\begin{equation}\label{eq:abpm}
a_+(h) = b_-(h)=  f_1(h) + f_4(h), \quad 
b_+(h) = a_-(h)= f_2(h) + f_3(h).
\end{equation}
and 
\begin{equation}\label{eq:fgpm}
\begin{aligned}
f_1(h)&=\frac{1}{2\pi}\int_{\mathbb R} \frac{1}{- \frac{1}{|\gamma_*|}\sqrt{ t_*^2 + |\theta_*|^2 \ell^2} - h} \cdot \frac{1}{1+|L(1,\ell)|^2}\, d\ell,\\
f_2(h) &= \frac{1}{2\pi}\int_{\mathbb R} \frac{1}{-\frac{1}{|\gamma_*|}\sqrt{t_*^2 + |\theta_*|^2 \ell^2}-h} \cdot \frac{|L(1,\ell)|^2}{1+|L(1,\ell)|^2}\, d\ell,\\
f_3(h) &=\frac{1}{2\pi}\int_{\mathbb R} \frac{1}{\frac{1}{|\gamma_*|}\sqrt{t_*^2 + |\theta_*|^2 \ell^2} - h} \cdot \frac{1}{1+|L(1,\ell)|^2}\, d\ell.\\
f_4(h) &= \frac{1}{2\pi}\int_{\mathbb R} \frac{1}{\frac{1}{|\gamma_*|}\sqrt{ t_*^2 + |\theta_*|^2 \ell^2} - h } \cdot \frac{|L(1,\ell)|^2}{1+|L(1,\ell)|^2}\, d\ell,
\end{aligned}
\end{equation}
and $L(eps,\ell) = L(\eps,\ell,0)$ and $L(\eps,\ell,\mu)$ is defined in \eqref{eq:Lmu}. 
Note that the individual integrals in \eqref{eq:fgpm} are all divergent, but $f_1(h) + f_4(h)$ and $f_2(h) +f_3(h)$ are convergent. 
\end{lemma}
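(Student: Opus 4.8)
The plan is to track the single-layer part of the strip Green functions $G^{\pm\eps}(\bx,\by;\lambda_*+\eps h)$ near the energy $\lambda_*$ using the spectral representation \eqref{eq:Seps}, restricted to the first two bands and to the quasimomentum slice $|\ell|<\eps^{1/3}$. On this slice, by Lemma~\ref{lem:pmstrip}, the eigenvalues are $\mu_{n,\pm\eps}(\bp(\ell)) = \lambda_* \pm \tfrac{1}{|\gamma_*|}\sqrt{\eps^2 t_*^2 + |\theta_*|^2\ell^2}\,(1+O(\eps,\ell))$, and the $L^2$-normalized Bloch modes are $w_1$, $w_2$ (up to $L(\eps,\ell)$-mixing and $O(\eps,\ell)$ corrections). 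First I would substitute these expansions into \eqref{eq:Seps}, change variables $\ell = \eps s$ so that the integral over $[-\eps^{1/3},\eps^{1/3}]$ becomes an integral over $[-\eps^{-2/3},\eps^{-2/3}]$ in $s$, and observe that $\sqrt{\eps^2 t_*^2 + |\theta_*|^2\eps^2 s^2} = \eps\sqrt{t_*^2 + |\theta_*|^2 s^2}$ and $\mu_{n,\pm\eps}-(\lambda_*+\eps h) = \eps\big(\pm\tfrac{1}{|\gamma_*|}\sqrt{t_*^2+|\theta_*|^2 s^2} - h\big)(1+o(1))$, while $L(\eps,\eps s) = L(1,s)$ by homogeneity of the expression in \eqref{eq:Lmu}. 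The factor of $\eps$ from $d\ell = \eps\,ds$ cancels the $\eps$ in the denominator, leaving an $\eps$-free limiting integral.

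Next I would identify the limiting integrand. Using the orthonormalization $\|w_n\|_{L^2}=1$ and the mode expansions \eqref{eq:pmodesL}–\eqref{eq:mmodesL}, the rank-one term $\overline{\langle\phi, v_{n,\pm\eps}\rangle}\,v_{n,\pm\eps}$ expands, modulo cross terms that integrate to zero by oddness in $s$ (the odd part of $L(1,s)$ and its conjugate) and modulo $o(1)$ errors, into a combination of $\overline{\langle\phi,w_1\rangle}w_1$ and $\overline{\langle\phi,w_2\rangle}w_2$ with weights given precisely by the integrals $f_1,\dots,f_4$ in \eqref{eq:fgpm} — the $\tfrac{1}{1+|L(1,s)|^2}$ and $\tfrac{|L(1,s)|^2}{1+|L(1,s)|^2}$ factors come from the normalizing denominators in \eqref{eq:pmodesL}–\eqref{eq:mmodesL}, and the sign of the square root in each denominator selects which of the two bands $n=1$ or $n=2$ contributes. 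Summing the $n=1$ and $n=2$ contributions on the $+\eps$ strip gives $a_+(h)=f_1+f_4$ on the $w_1$ slot and $b_+(h)=f_2+f_3$ on the $w_2$ slot; the $\pm$ swap in \eqref{eq:abpm} follows from the swap $w_1\leftrightarrow w_2$ between the $+\eps$ and $-\eps$ mode expansions visible in \eqref{eq:pmodesL} versus \eqref{eq:mmodesL}.

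The main obstacle, and the one I would treat most carefully, is the \emph{convergence of the integrals}: each of $f_1,\dots,f_4$ individually diverges logarithmically as $|s|\to\infty$ (the integrand decays like $1/|s|$ since $\sqrt{t_*^2+|\theta_*|^2 s^2}\sim|\theta_*||s|$ and $\tfrac{|L(1,s)|^2}{1+|L(1,s)|^2}\to\text{const}$), but the combinations $f_1+f_4$ and $f_2+f_3$ pair a $+\sqrt{\cdots}$ denominator with a $-\sqrt{\cdots}$ denominator so that the $1/|s|$ tails cancel and leave an integrable $O(1/s^2)$ tail. To make the limit rigorous I would (i) show the tail of the $s$-integral over $\eps^{-2/3}<|s|<\eps^{-2/3}$-complement, i.e. $|\ell|$ near $\eps^{1/3}$, is controlled and vanishes, using that for $|\ell|\gtrsim\eps^{1/3}$ one has $\sqrt{\eps^2 t_*^2 + |\theta_*|^2\ell^2}\gtrsim|\theta_*||\ell|$ so the denominator is bounded below and the two-band sum telescopes; (ii) split the finite-$s$ part and pass to the limit by dominated convergence, using the uniform $H^1(\cC_z)$-closeness of $u_{n,\pm\eps}$ to the $w$-combinations from Lemma~\ref{lem:normest} and the continuity of the trace map $H^1\to\cH^{1/2}(\Gamma)$; and (iii) keep all estimates uniform in $|h|<\hrad$, which holds because $\hrad = \mathfrak d|t_*/\gamma_*|<|t_*/\gamma_*|$ keeps $h$ strictly inside the gap so that $\pm\tfrac{1}{|\gamma_*|}\sqrt{t_*^2+|\theta_*|^2 s^2} - h$ stays bounded away from $0$ for all real $s$. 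The remaining two subsections (the first two bands away from the Dirac point, and the higher bands $n\ge3$) are where the $\text{p.v.}$ and the plain integrals in $\tilde\cS^0(\lambda_*)$ are recovered, and those pieces are routine once the singular near-Dirac piece has been isolated as above.
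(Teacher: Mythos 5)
Your plan matches the paper's proof essentially step by step: substitute the band expansions from Lemma~\ref{lem:pmstrip} into the spectral representation \eqref{eq:Seps}, factor out the leading denominator using that $|h|<\hrad$ keeps $\eps h$ strictly inside the gap, drop the cross terms $L\,\overline{\langle\phi,w_1\rangle}w_2$ and $\bar L\,\overline{\langle\phi,w_2\rangle}w_1$ by oddness of $L(\eps,\ell)/\bigl(\tfrac{1}{|\gamma_*|}\sqrt{\eps^2 t_*^2+|\theta_*|^2\ell^2}\pm\eps h\bigr)$ in $\ell$, rescale $\ell\mapsto\ell/\eps$ using $L(\eps,\ell)=L(1,\ell/\eps)$, and read off $f_1,\dots,f_4$; the $\pm$ swap is indeed visible from \eqref{eq:pmodesL} versus \eqref{eq:mmodesL}. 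The only cosmetic difference is in how you bound the error: the paper directly estimates the $O(\eps,\ell)\|\phi\|$ remainder by $O(\eps^{1/3})\cdot O(\ln\eps)\to 0$ using $\int_{|\ell|<\eps^{1/3}}\bigl(\tfrac{1}{|\gamma_*|}\sqrt{\eps^2t_*^2+|\theta_*|^2\ell^2}\bigr)^{-1}\,d\ell = O(\ln\eps)$, whereas you propose a dominated-convergence/tail-splitting route; both work, and the direct logarithmic estimate is cleaner here because the pointwise integrand is \emph{not} dominated by an $\eps$-independent integrable function near $\ell=0$ unless you exploit the cancellation in the two-band sum, so you would still need the same quantitative bound to justify your DCT.
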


\begin{proof}
Define
\begin{equation}
I_i^{\pm\eps}(h):=\frac{1}{2\pi}\int_{[-\eps^{1/3},\eps^{1/3} ]} 
\frac{\overline{\langle \phi, v_{i,\pm\eps}(\cdot;\bp(\ell))\rangle} v_{i,\pm\eps}(\bx;\bp(\ell))}{ \mu_{i,\pm\eps}(\bp(\ell)) - (\lambda_*+\eps h)} \, \dpt. 
\end{equation}
Using \eqref{eq:energ}, \eqref{eq:pmodesL}, Remark~\ref{lem:phase} and
\begin{equation}
\begin{aligned}
0\leq L(\eps,\ell)&\leq 1,\\
\frac{1}{|\gamma_*|}\sqrt{\eps^2 t_*^2 + |\theta_*|^2 \ell^2}(1+O(\eps,\ell))\pm \eps h &= \left(\frac{1}{|\gamma_*|}\sqrt{\eps^2 t_*^2 + |\theta_*|^2 \ell^2}\pm \eps h\right)(1+O(\eps,\ell)), \quad |h|< \hrad,\\
\frac{1}{|1+|L(\eps,\ell)|^2+O(\eps,\ell)|} &= \frac{1}{|1+|L(\eps,\ell)|^2|} (1+O(\eps,\ell)),
\end{aligned}
\end{equation}
we obtain
\begin{equation}
\begin{aligned}
&I_1^\eps(h) =
-\frac{1}{2\pi}\int_{[-\eps^{1/3},\eps^{1/3} ]} \\
&\frac{
\overline{\langle \phi, w_1\rangle}w_1 + \overline{L(\eps,\ell)} \overline{\langle \phi, w_2 \rangle} w_1 +  L(\eps,\ell)\overline{\langle \phi, w_1\rangle}w_2 + |L(\eps,\ell)|^2 \overline{\langle \phi, w_2 \rangle} w_2
+ O(\eps,\ell)\|\phi\|_{\HmhalfG}}{ (\frac{1}{|\gamma_*|}\sqrt{\eps^2 t_*^2 + |\theta_*|^2 \ell^2}+\eps h)(|1+|L(\eps,\ell)|^2|)}(1+O(\eps,\ell)) \, \dpt. 
\end{aligned}
\end{equation}
The fact that 
$\frac{L(\eps,\ell)}{ (\frac{1}{|\gamma_*|}\sqrt{\eps^2 t_*^2 + |\theta_*|^2 \ell^2}\pm \eps h)}$ is odd in $\ell$ implies that
\begin{equation}
\begin{aligned}
&-\frac{1}{2\pi}\int_{[-\eps^{1/3},\eps^{1/3} ]} 
\frac{
\overline{\langle \phi, w_1\rangle}w_1 + \overline{L(\eps,\ell)} \overline{\langle \phi, w_2 \rangle} w_1 +  L(\eps,\ell)\overline{\langle \phi, w_1\rangle}w_2 + |L(\eps,\ell)|^2 \overline{\langle \phi, w_2 \rangle} w_2
}{ (\frac{1}{|\gamma_*|}\sqrt{\eps^2 t_*^2 + |\theta_*|^2 \ell^2}+\eps h)(|1+|L(\eps,\ell)|^2|)} \, \dpt\\
&=-\frac{1}{2\pi}\int_{[-\eps^{1/3},\eps^{1/3} ]} 
\frac{
\overline{\langle \phi, w_1\rangle}w_1 + |L(\eps,\ell)|^2 \overline{\langle \phi, w_2 \rangle} w_2
}{ (\frac{1}{|\gamma_*|}\sqrt{\eps^2 t_*^2 + |\theta_*|^2 \ell^2}+\eps h)(|1+|L(\eps,\ell)|^2|)} \, \dpt\\
\end{aligned}
\end{equation}
Using 
\begin{equation}
L(\eps,\ell)=L(1,\ell/\eps),\quad 
\frac{1}{(\frac{1}{|\gamma_*|}\sqrt{\eps^2 t_*^2 + |\theta_*|^2 \ell^2}+\eps h)}\dpt = \frac{1}{(\frac{1}{|\gamma_*|}\sqrt{t_*^2 + |\theta_*|^2 (\ell/\eps)^2}+h)}\dpt/\eps,
\end{equation}
we see that $f_1(h)$ and $f_2(h)$ emerge as the coefficients of $\overline{\langle \phi, w_1\rangle}w_1$ and $\overline{\langle \phi, w_2\rangle}w_2$ as~$\eps\to0$.
A similar argument for $I_2^\eps(h)$ gives that
\begin{equation}
I_1^\eps(h) + I_2^\eps(h) =I^{\eps}(h)+ \int_{[-\eps^{1/3},\eps^{1/3} ]} 
\frac{O(\eps,\ell)\|\phi\|_{\HmhalfG}}{ (\frac{1}{|\gamma_*|}\sqrt{\eps^2 t_*^2 + |\theta_*|^2 \ell^2}+\eps h)(|1+|L(\eps,\ell)|^2|)} \, \dpt, 
\end{equation}
where 
\begin{equation}
I^{\eps}(h) \to  f_1(h)\overline{\langle \phi, w_1\rangle}w_1 + f_2(h) \overline{\langle \phi, w_2 \rangle} w_2 + f_4(h) \overline{\langle \phi, w_1\rangle}w_1 + f_3(h) \overline{\langle \phi, w_2 \rangle} w_2,\quad \text{as }\eps\to0.
\end{equation}
Thus to establish the $+\eps$ identity in \eqref{eq:nearDirac}, we only need to show
\begin{equation}
\int_{[-\eps^{1/3},\eps^{1/3} ]} 
\frac{O(\eps,\ell)}{ (\frac{1}{|\gamma_*|}\sqrt{\eps^2 t_*^2 + |\theta_*|^2 \ell^2}+\eps h)(|1+|L(\eps,\ell)|^2|)} \, \dpt \to 0, \text{ as }\eps\to 0. 
\end{equation}
This is true because $O(\eps,\ell) = O(\eps^{1/3})$ within the integral domain, the integral domain is of size $\eps^{1/3}$, and 
\begin{equation}
\int_{[-\eps^{1/3},\eps^{1/3} ]} \frac{1}{\frac{1}{|\gamma_*|}\sqrt{\eps^2 t_*^2 + |\theta_*|^2 \ell^2}} \, \dpt =|\frac{\gamma_*}{\theta_*}|\int_{[-|\frac{\theta_*}{t_*}|\eps^{-2/3},|\frac{\theta_*}{t_*}|\eps^{-2/3} ]} 
\frac{1}{\sqrt{1+x^2}} \, dx
= O(\ln \eps).
\end{equation}
Here we have used $\int\frac{1}{\sqrt{1+x^2}} \, dx = \ln(x+\sqrt{x^2+1})$.

The $-\eps$ identity in \eqref{eq:nearDirac} can be shown similarly.
\end{proof}

\begin{lemma}
The coefficients defined in \eqref{eq:abpm} are equal to
\begin{equation}\label{eq:onesidecoeff}
a_+(h)=b_-(h){=  \beta(h) - \xi(h)} ,\quad
b_+(h) =a_-(h){ = \beta(h) + \xi(h) },
\end{equation}
where $\beta(h)$ and $\xi(h)$ are defined in \eqref{eq:betaxi}. There holds the identity
\begin{equation}\label{eq:nearDiracre}
a_{\pm}(h) \overline{\langle \phi, w_1\rangle} w_1 + b_{\pm}(h)   \overline{\langle \phi, w_2\rangle} w_2
=\beta(h)\overline{\langle \phi, v_1\rangle} v_1
+\beta(h)\overline{\langle \phi, v_2\rangle} v_2
\mp \xi(h)\overline{\langle \phi, v_1\rangle} v_2
\mp \xi(h)\overline{\langle \phi, v_2\rangle} v_1
\end{equation}
\end{lemma}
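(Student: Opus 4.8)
The statement to prove is a pair of identities: an explicit evaluation of the divergent-looking integrals $f_1(h)+f_4(h)$ and $f_2(h)+f_3(h)$ in terms of $\beta(h)$ and $\xi(h)$, and a rewriting of the near-Dirac contribution in the basis $v_1,v_2$ rather than $w_1,w_2$. The plan is to handle these two claims in sequence, since the second is a purely algebraic change of basis once the first is established.

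First I would compute $a_+(h)=f_1(h)+f_4(h)$ and $b_+(h)=f_2(h)+f_3(h)$ directly. Adding $f_1$ and $f_4$ combines the integrands over the single line $\ell\in\mathbb R$; because $f_1$ carries the factor $\tfrac{1}{1+|L(1,\ell)|^2}$ against the denominator $-\tfrac{1}{|\gamma_*|}\sqrt{t_*^2+|\theta_*|^2\ell^2}-h$ while $f_4$ carries $\tfrac{|L(1,\ell)|^2}{1+|L(1,\ell)|^2}$ against $\tfrac{1}{|\gamma_*|}\sqrt{t_*^2+|\theta_*|^2\ell^2}-h$, the sum telescopes after one uses the explicit form \eqref{eq:Lmu} of $L(1,\ell)$, namely $L(1,\ell)=\dfrac{\theta_*\ell}{t_*+\sqrt{t_*^2+|\theta_*|^2\ell^2}}$, so that $1+|L(1,\ell)|^2$ and $|L(1,\ell)|^2/(1+|L(1,\ell)|^2)$ simplify to rational expressions in $s:=\sqrt{t_*^2+|\theta_*|^2\ell^2}$. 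Carrying this out, the combined integrand becomes (up to constants) $\dfrac{1}{|\gamma_*|^{-2}s^2-h^2}$ times an even rational function of $\ell$; then the substitution $\ell\mapsto$ (the variable making $s$ linear), or equivalently a direct residue/contour evaluation using $\int_{\mathbb R}\tfrac{d\ell}{(A^2+\ell^2)^{1/2}(\,\cdot\,)}$-type formulas, yields the closed forms $\beta(h)\mp\xi(h)$. I would do the same for $f_2+f_3$. The cleanest route is probably to change variables to $\lambda=\tfrac{1}{|\gamma_*|}\sqrt{t_*^2+|\theta_*|^2\ell^2}$ early, turning each $f_i$ into a one-dimensional spectral integral whose principal value / residue is elementary; one recognizes the answer as exactly the half-residue at $\lambda=\pm h$ weighted by the group-velocity factor $\alpha_*=|\theta_*/\gamma_*|$, which produces the $\tfrac{1}{2\alpha_*}$ normalizations appearing in $\beta,\xi$ in \eqref{eq:betaxi}.

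Second, for identity \eqref{eq:nearDiracre} I would simply substitute the change-of-basis relations \eqref{eq:winv}, $w_1=\tfrac{1}{\sqrt2}\tfrac{\theta_*}{|\theta_*|}(v_1+v_2)$ and $w_2=\tfrac{1}{\sqrt2}(-v_1+v_2)$, into $a_\pm(h)\overline{\langle\phi,w_1\rangle}w_1+b_\pm(h)\overline{\langle\phi,w_2\rangle}w_2$. The phase factor $\theta_*/|\theta_*|$ has unit modulus so it cancels between $\overline{\langle\phi,w_1\rangle}$ and $w_1$; expanding the two rank-one terms gives a linear combination of $\overline{\langle\phi,v_i\rangle}v_j$ with coefficients $\tfrac12(a_\pm+b_\pm)$ on the diagonal and $\tfrac12(a_\pm-b_\pm)$ (up to sign from the $-v_1$ in $w_2$) off-diagonal. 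Using \eqref{eq:onesidecoeff}, $\tfrac12(a_\pm+b_\pm)=\beta(h)$ and $\tfrac12(a_\pm-b_\pm)=\mp\xi(h)$, which matches the right-hand side of \eqref{eq:nearDiracre} after checking the placement of signs on the cross terms. This step is routine bookkeeping.

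The main obstacle is the first step: the individual $f_i$ are genuinely divergent (logarithmically, as the proof of the previous lemma already notes), so one must combine $f_1$ with $f_4$ (and $f_2$ with $f_3$) \emph{before} evaluating, and track the cancellation carefully enough to extract the finite answer — a naive termwise contour integration fails. The key algebraic input that makes the cancellation transparent is the identity $\tfrac{1}{1+|L(1,\ell)|^2}+\tfrac{|L(1,\ell)|^2}{1+|L(1,\ell)|^2}=1$ together with the \emph{odd-in-$\ell$} vanishing already exploited, so that pairing the $-$branch integrand of $f_1$ with the $+$branch integrand of $f_4$ over the same $\ell$ produces $\dfrac{1}{|\gamma_*|^{-2}(t_*^2+|\theta_*|^2\ell^2)-h^2}$, now absolutely integrable for $|h|<|t_*/\gamma_*|$, i.e.\ exactly on the disc $|h|<\mathfrak d|t_*/\gamma_*|$. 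After that the evaluation is a standard rational-plus-square-root integral. I would also double-check at $h=0$, where $\beta(0)=0$ and $\xi(0)=\tfrac{1}{2\alpha_*}$, to pin down the signs and the overall normalization.
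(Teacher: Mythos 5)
Your proposal is correct and follows essentially the same strategy as the paper's proof: combine $f_1$ with $f_4$ (and $f_2$ with $f_3$) over a common denominator, use the algebraic identities $1-|L(1,\ell)|^2 = \frac{2t_*}{t_*+\sqrt{t_*^2+|\theta_*|^2\ell^2}}$ and $1+|L(1,\ell)|^2 = \frac{2\sqrt{t_*^2+|\theta_*|^2\ell^2}}{t_*+\sqrt{t_*^2+|\theta_*|^2\ell^2}}$ to reduce the combined integrand to a constant multiple of $\bigl(\tfrac{1}{\gamma_*^2}(t_*^2+|\theta_*|^2\ell^2)-h^2\bigr)^{-1}$, evaluate this elementary integral in closed form (the paper simply quotes $\int_{\mathbb R}\frac{d\ell}{\frac{1}{\gamma_*^2}(t_*^2+|\theta_*|^2\ell^2)-h^2}=\frac{|\gamma_*|}{|\theta_*|}\frac{\pi}{\sqrt{(t_*/\gamma_*)^2-h^2}}$ rather than invoking residues, but that is the same computation), and then substitute the basis relations \eqref{eq:winv} to obtain \eqref{eq:nearDiracre}, with the unimodular phase $\theta_*/|\theta_*|$ cancelling in $\overline{\langle\phi,w_1\rangle}w_1$ exactly as you observe. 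Your emphasis that the individual $f_i$ diverge and must be paired before integration is the same observation that makes the paper's computation legitimate, so the two arguments coincide in all essential respects.
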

\begin{proof}
We first derive the simplified forms of the coefficients $a_{\pm}(h)$ and $b_{\pm}(h)$.
Direct calculation shows
\begin{equation}
\begin{aligned}
2\pi(f_1(h)+f_4(h))&=\int_{\mathbb R} \frac{1}{ -\frac{1}{|\gamma_*|}\sqrt{ t_*^2 + |\theta_*|^2 \ell^2} - h} \cdot \frac{1}{1+|L(1,\ell)|^2}\, d\ell
+\int_{\mathbb R} \frac{1}{\frac{1}{\gamma_*}\sqrt{ t_*^2 + |\theta_*|^2 \ell^2} - h } \cdot \frac{|L(1,\ell)|^2}{1+|L(1,\ell)|^2}\, d\ell,\\
&= - \int_{\mathbb R} \frac{ 1- |L(1,\ell)|^2 }{ 1+ |L(1,\ell)|^2 } \cdot  \frac{\frac{1}{\gamma_*}\sqrt{ t_*^2 + |\theta_*|^2 \ell^2}}{ \frac{1}{(\gamma_*)^2}(t_*^2 + |\theta_*|^2 \ell^2) - h^2} d\ell
+ \int_{\mathbb R} \frac{h}{ \frac{1}{(\gamma_*)^2}(t_*^2 + |\theta_*|^2 \ell^2) - h^2} d\ell
\end{aligned}
\end{equation}
Using 
\begin{equation}
\begin{aligned}
1- |L(1,\ell)|^2 &= \frac{2t_*}{t+\sqrt{t_*^2+|\theta_*|^2\ell^2}},\\
1+ |L(1,\ell)|^2 &= \frac{2\sqrt{t_*^2+|\theta_*|^2\ell^2}}{t+\sqrt{t_*^2+|\theta_*|^2\ell^2}},\\
\frac{ 1- |L(1,\ell)|^2 }{ 1+ |L(1,\ell)|^2 } &= \frac{t_*}{\sqrt{t_*^2+|\theta_*|^2\ell^2}}, \\
\int_{\mathbb R}\frac{1}{ \frac{1}{(\gamma_*)^2}(t_*^2 + |\theta_*|^2 \ell^2) - h^2} d\ell &= \frac{|\gamma_*|}{|\theta_*|}\frac{1}{\sqrt{(\frac{t_*}{\gamma_*})^2-h^2}} \pi,
\end{aligned}
\end{equation}
we obtain the first equation in \eqref{eq:onesidecoeff}. The second equation in \eqref{eq:onesidecoeff} can be similarly obtained.

The relation \eqref{eq:nearDiracre} can be obtained using the relation between $w_i$ and $v_i$ \eqref{eq:winv}.
\end{proof}

Using the same method, we obtain the convergence of the first two bands close to the Dirac point in the other operators.
\begin{lemma}\label{lem:NearD}
The following convergences hold in the operator norm  from $\HmhalfG$ to $\HhalfG$, from $\HhalfG$ to $\HhalfG$ and from $\HhalfG$ to $\HmhalfG$ respectively
uniformly for $h\in\mathbb C$ that satisfy $|h|<\hrad$, 
as $\eps\to 0^+$: 
\begin{equation}\label{eq:Kstarn3}
\begin{aligned}
\sum_{n=1,2}\frac{1}{2\pi}\int_{[-\eps^{1/3},\eps^{1/3}]}&
\frac{\overline{ \langle v_{n,\pm\eps}(\cdot;\bp(\ell)),\psi\rangle} \partial_n v_{n,\pm\eps}(\bx;\bp(\ell))}{ \mu_{n,\pm\eps}(\bp(\ell)) - \lambda_*- \eps h} \, \dpt \\
&\to
\beta(h)\langle \partial_n v_1, \psi \rangle v_1
+\beta(h)\langle \partial_n v_2, \psi \rangle v_2
\mp \xi(h)\langle \partial_n v_1, \psi \rangle v_2
\mp \xi(h)\langle \partial_n v_2, \psi \rangle v_1
\end{aligned}
\end{equation}
\begin{equation}
\begin{aligned}
\sum_{n=1,2}\frac{1}{2\pi}\int_{[-\eps^{1/3},\eps^{1/3}]}&
\frac{\langle \partial_n v_{n,\pm\eps}(\cdot;\bp(\ell),\psi)\rangle v_{n,\pm\eps}(\bx;\bp(\ell))}{ \mu_{n,\pm\eps}(\bp(\ell)) - \lambda_*- \eps h} \, \dpt
\to\\
&\beta(h)\overline{\langle \phi, v_1\rangle} \partial_n v_1
+\beta(h)\overline{\langle \phi, v_2\rangle} \partial_n v_2
\mp \xi(h)\overline{\langle \phi, v_1\rangle} \partial_n v_2
\mp \xi(h)\overline{\langle \phi, v_2\rangle} \partial_n v_1
\end{aligned}
\end{equation}
\begin{equation}
\begin{aligned}
\partial_n\sum_{n=1,2}\frac{1}{2\pi}\int_{[-\eps^{1/3},\eps^{1/3}]}&
\frac{\langle \partial_n v_{n,\pm\eps}(\cdot;\bp(\ell),\psi)\rangle v_{n,\pm\eps}(\bx;\bp(\ell))}{ \mu_{n,\pm\eps}(\bp(\ell)) - \lambda_*- \eps h} \, \dpt
\to\\
&\beta(h)\langle \partial_n v_1, \psi \rangle \partial_n v_1
+\beta(h)\langle \partial_n v_2, \psi \rangle \partial_n v_2
\mp \xi(h)\langle \partial_n v_1, \psi \rangle \partial_n v_2
\mp \xi(h)\langle \partial_n v_2, \psi \rangle \partial_n v_1
\end{aligned}
\end{equation}
\end{lemma}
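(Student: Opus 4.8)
The plan is to repeat, for the three operators $\cK^{*,\pm\eps}$, $\cK^{\pm\eps}$ and $\cN^{\pm\eps}$, the argument already carried out for $\cS^{\pm\eps}$ in the proof of \eqref{eq:nearDiracF}. Using the spectral representations \eqref{eq:Kstareps}, \eqref{eq:Keps} and \eqref{eq:Neps}, it suffices to analyze the portion of each integral over the small interval $[-\eps^{1/3},\eps^{1/3}]$ around the Dirac point, restricted to the first two band branches $n=1,2$. The only change compared with the $\cS$-case is that the factor $v_{n,\pm\eps}(\bx;\bp(\ell))$ in the numerator is replaced by $\partial_n v_{n,\pm\eps}(\bx;\bp(\ell))$ (for the $\cK^{*}$- and $\cN$-outputs), and/or the $\HmhalfG$-pairing $\overline{\langle\phi,v_{n,\pm\eps}\rangle}$ is replaced by the $\HhalfG$-pairing $\langle\partial_n v_{n,\pm\eps},\psi\rangle$ (for the $\cK$- and $\cN$-inputs). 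Since $v_{n,\pm\eps}(\cdot;\bp(\ell))\in H^1(\Delta,\cpeps)$, these traces and normal traces are well defined and depend continuously on $v_{n,\pm\eps}$ in $H^1$, so all the estimates transfer.

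First I would substitute the eigenvalue expansions \eqref{eq:energ} and the mode expansions \eqref{eq:pmodesL}--\eqref{eq:mmodesL}, together with Remark~\ref{lem:phase}, into each integrand. Because the numerators are built from the rank-one quantities $\overline{(\cdot)}\,(\cdot)$ evaluated on $v_{n,\pm\eps}$, the $\ell$-dependent phase factors of Remark~\ref{lem:phase} cancel, and one may replace $v_{1,\pm\eps},v_{2,\pm\eps}$ by their $w_1,w_2$ representations up to $O(\eps,\ell)$ errors measured in $H^1(\chomo)$. This turns each numerator into a bilinear form in $\{w_1,w_2\}$ (and their traces and normal traces on $\Gamma$) with scalar coefficients among $1$, $L(\eps,\ell)$, $\overline{L(\eps,\ell)}$, $|L(\eps,\ell)|^2$. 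The cross terms carry the factor $L(\eps,\ell)\big/\big(\pm\tfrac{1}{|\gamma_*|}\sqrt{\eps^2t_*^2+|\theta_*|^2\ell^2}\pm\eps h\big)$, which is odd in $\ell$, hence drops out after integrating over the symmetric interval. After the rescaling $\ell\mapsto\ell/\eps$ and using $L(\eps,\ell)=L(1,\ell/\eps)$, the surviving diagonal terms converge, as $\eps\to0^+$, to the coefficient functions $f_1(h),\dots,f_4(h)$ of \eqref{eq:fgpm}, whose convergent combinations $f_1+f_4$ and $f_2+f_3$ have already been identified with $\beta(h)-\xi(h)$ and $\beta(h)+\xi(h)$ in \eqref{eq:onesidecoeff}. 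One then rewrites the resulting $w$-basis expression in the $v$-basis via \eqref{eq:winv}, exactly as in \eqref{eq:nearDiracre}, to obtain the three displayed limits.

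For the error control I would argue as in the $\cS$-case: on $[-\eps^{1/3},\eps^{1/3}]$ one has $O(\eps,\ell)=O(\eps^{1/3})$, the interval has length $2\eps^{1/3}$, and
\[
\int_{-\eps^{1/3}}^{\eps^{1/3}}\frac{d\ell}{\tfrac{1}{|\gamma_*|}\sqrt{\eps^2 t_*^2+|\theta_*|^2\ell^2}}
=\Big|\tfrac{\gamma_*}{\theta_*}\Big|\int_{-|\theta_*/t_*|\eps^{-2/3}}^{|\theta_*/t_*|\eps^{-2/3}}\frac{dx}{\sqrt{1+x^2}}=O(\ln\eps),
\]
so that the remainder is $O(\eps^{1/3}\ln\eps)\to0$, uniformly for $\bx$ on $\Gamma$ and uniformly for $|h|<\hrad$; taking traces and normal traces on $\Gamma$ (and, for the $\cN$-output, invoking the boundedness $\cN^{\pm\eps}\colon\HhalfG\to\HmhalfG$) preserves the estimate in the relevant norms $\HhalfG$ or $\HmhalfG$. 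The main obstacle will be the last operator $\cN^{\pm\eps}$: there the normal derivative is applied to the sum itself, so one must check that the normal trace of the limiting kernel is well defined and that interchanging $\partial_n$ with the $\eps\to0$ limit is legitimate; this is handled, as in \cite{FlissJoly2016}, by working with the hypersingular operator $\cN^{\pm\eps}$ directly on $\HhalfG$ rather than differentiating the single and double layer potentials pointwise, and by noting that the near-Dirac contribution to $\cN^{\pm\eps}$ is, like the others, a finite-rank operator whose range lies in $\mathrm{span}\{\partial_n v_1|_\Gamma,\partial_n v_2|_\Gamma\}\subset\HmhalfG$. A secondary subtlety, already flagged after \eqref{eq:fgpm}, is that the individual limit integrals $f_i(h)$ diverge; one must therefore group the $w_1$- and $w_2$-diagonal contributions from $n=1$ and $n=2$ before passing to the limit.
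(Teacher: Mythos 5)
Your proposal is correct and takes the same route as the paper, which provides no separate argument for Lemma~\ref{lem:NearD} and simply states that the convergences follow ``using the same method'' as in the preceding treatment of $\cS^{\pm\eps}$. You have supplied exactly the intended adaptation: replace boundary traces by normal traces where the kernels require it (which is controlled because $v_{n,\pm\eps}(\cdot;\bp(\ell))$ is an $H^1(\Delta,\cpeps)$-valued analytic function of $\ell$, so both $v_{n,\pm\eps}|_\Gamma$ and $\partial_n v_{n,\pm\eps}|_\Gamma$ depend continuously on $\ell$ and on $\eps$ in the appropriate Sobolev traces), keep the phase cancellation and the odd-in-$\ell$ cancellation of the cross terms, regroup the $n=1,2$ contributions before passing to the limit exactly as for $f_1+f_4$ and $f_2+f_3$, and note that the near-Dirac piece of $\cN^{\pm\eps}$ is a finite-rank operator so the interchange of $\partial_n$ with the $\ell$-integral and the $\eps\to0^+$ limit poses no difficulty.
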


\subsection{The first two bands away from the Dirac point}\label{sec:Sn12}
\begin{lemma}
The following convergence holds in the operator norm from $\HmhalfG$ to $\HhalfG$ 
uniformly for $h\in\mathbb C$ that satisfy $|h|< \hrad$, 
as $\eps\to 0$: 
\begin{equation}\label{eq:Sn12}
\sum_{n=1,2}\frac{1}{2\pi}\int_{[-\pi,-\eps^{1/3}]\cup[\eps^{1/3},\pi]}
\frac{\overline{ \langle \phi, v_{n,\pm\eps}(\cdot;\bp(\ell))\rangle} v_{n,\pm\eps}(\bx;\bp(\ell))}{ \mu_{n,\pm\eps}(\bp(\ell)) - \lambda_* - \eps h} \, \dpt
\to
\sum_{n=1,2}\frac{1}{2\pi}\text{p.v.}\int_{[-\pi,\pi]} \frac{\overline{ \langle \phi, v_{n}(\cdot;\bp(\ell))\rangle} v_{n}(\bx;\bp(\ell))}{ \mu_{n}(\bp(\ell)) - \lambda_* } \, \dpt.
\end{equation}
\end{lemma}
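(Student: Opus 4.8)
The plan is to break the integration set $[-\pi,-\eps^{1/3}]\cup[\eps^{1/3},\pi]$ into an outer part $\{\delta\le|\ell|\le\pi\}$ and an inner annulus $\{\eps^{1/3}\le|\ell|\le\delta\}$, where $\delta>0$ is a small constant fixed independently of $\eps$ so that the expansions of Lemma~\ref{lem:pmstrip} and Lemma~\ref{lem:0strip} hold for $|\ell|\le\delta$. On each part I would compare the $\pm\eps$ integrand (expanded via \eqref{eq:Seps}) with the $\eps=0$ integrand (expanded via \eqref{eq:tildeS0}), keeping all estimates proportional to $\|\phi\|_{\HmhalfG}$ so as to get operator-norm convergence, and finally let the cut-off $\eps^{1/3}$ shrink to recover the principal value.

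On the outer part, Assumption~\ref{lem:assNoFold} along $\bbeta_1$ forces $\mu_n(\bp(\ell))-\lambda_*$ (for $n=1,2$) to vanish on $[-\pi,\pi]$ only at $\ell=0$, so $|\mu_n(\bp(\ell))-\lambda_*|\ge c_\delta>0$ for $\delta\le|\ell|\le\pi$; hence for $\eps$ small the denominators $\mu_{n,\pm\eps}(\bp(\ell))-\lambda_*-\eps h$ are bounded away from $0$ uniformly in $|h|<\hrad$. Since the band data $(\mu_{n,\pm\eps},v_{n,\pm\eps})$ converge uniformly on this compact set to $(\mu_n,v_n)$ up to phases, which cancel in the rank-two numerators $\overline{\langle\phi,v_n\rangle}_\Gamma\,v_n$, the integrand converges uniformly in $\ell$, and integrating over a bounded interval gives convergence of this contribution to $\sum_{n=1,2}\frac1{2\pi}\int_{\delta\le|\ell|\le\pi}\overline{\langle\phi,v_n(\cdot;\bp(\ell))\rangle}_\Gamma\,v_n(\bx;\bp(\ell))/(\mu_n(\bp(\ell))-\lambda_*)\,\dpt$, with no principal value needed away from $0$.

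The inner annulus is the crux. Writing $D_{n,\pm\eps}=\mu_{n,\pm\eps}(\bp(\ell))-\lambda_*-\eps h$, $D_n=\mu_n(\bp(\ell))-\lambda_*$, and $N_{n,\pm\eps},N_n$ for the numerators, the expansions \eqref{eq:energ} together with $\sqrt{\eps^2t_*^2+|\theta_*|^2\ell^2}\ge|\theta_*||\ell|$ and $|\eps h|+\eps t_*=O(\eps)\ll|\theta_*||\ell|$ for $|\ell|\ge\eps^{1/3}$ give $|D_{n,\pm\eps}|\gtrsim|\ell|$ and $|D_n|\gtrsim|\ell|$, while comparing $L(\eps,\ell)$ and $L(0,\ell)$ in \eqref{eq:Lmu} gives $|L(\eps,\ell)-L(0,\ell)|=O(\eps/|\ell|)=O(\eps^{2/3})$ on this region; together with \eqref{eq:pmodesL}, \eqref{eq:mmodesL}, Lemma~\ref{lem:0strip} and the stability estimate Lemma~\ref{lem:invfact} for the underlying perturbation expansion, this yields $\|N_{n,\pm\eps}-N_n\|=O(\eps^{2/3})\|\phi\|_{\HmhalfG}$ and $|D_{n,\pm\eps}-D_n|=O(\eps|\ell|+\eps^{5/3}+\eps)$. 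Bounding the difference of integrands by $\|N_{n,\pm\eps}-N_n\|/|D_{n,\pm\eps}|+\|N_n\|\,|D_{n,\pm\eps}-D_n|/(|D_{n,\pm\eps}|\,|D_n|)$ and integrating over $\eps^{1/3}\le|\ell|\le\delta$ produces contributions of order $\eps^{2/3}\int_{\eps^{1/3}}^{\delta}d\ell/\ell=O(\eps^{2/3}\log(1/\eps))$ and $\eps\int_{\eps^{1/3}}^\delta d\ell/\ell^2=O(\eps^{2/3})$, both tending to $0$; this is exactly where the cut-off exponent $1/3$ (any exponent in $(0,1)$ would do) is used, so that the polynomial smallness of the Bloch-data discrepancy dominates the logarithmic divergence of $\int d\ell/|\ell|$ near $\ell=0$. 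Finally, since $D_n$ has a simple zero at $\ell=0$ with nonzero slope and $N_n$ is analytic in $\ell$ near $0$ with $\HhalfG$-norm $\le C\|\phi\|_{\HmhalfG}$, the principal value $\text{p.v.}\int_{|\ell|\le\delta}N_n/D_n\,\dpt$ exists and $\int_{\eps^{1/3}\le|\ell|\le\delta}N_n/D_n\,\dpt$ converges to it, uniformly for $\|\phi\|_{\HmhalfG}\le1$, so adding the two contributions gives \eqref{eq:Sn12}. The step I expect to be the main obstacle is precisely verifying that the inner-annulus discrepancy between the perturbed and unperturbed Bloch data is of positive power in $\eps$ on $\{|\ell|\ge\eps^{1/3}\}$, rather than merely $o(1)$, since this is the only point where the fine asymptotics of Lemma~\ref{lem:pmstrip}, not just continuity of the bands in $\eps$, are indispensable.
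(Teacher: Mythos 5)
The proposal is correct and follows essentially the same strategy as the paper's proof: compare the $\pm\eps$ integrand with the $\eps=0$ integrand on $\{|\ell|\ge\eps^{1/3}\}$, using the fact that on this set the denominators are bounded below by a positive power of $\eps$ (or of $|\ell|$) while the Bloch-data discrepancy vanishes at a faster power, then recover the principal value from the regularity and odd leading behavior of $N_n/D_n$ near $\ell=0$. The paper organizes this as a five-term chain $I_1,\dots,I_5$ (with the intermediate step of passing between the smooth labels $\mu_n,v_n$ and the increasing labels $\lambda_n,u_n$) and uses a uniform coarse lower bound $|D|\gtrsim\eps^{1/3}$, yielding $I_3-I_4=O(\eps^{1/3})$, whereas you split further into an outer region $|\ell|\ge\delta$ and an inner annulus and use the sharper pointwise bound $|D|\gtrsim|\ell|$, arriving at $O(\eps^{2/3}\log(1/\eps))$; both give $o(1)$ and either bookkeeping suffices. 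Your observation that any cutoff exponent in $(0,1)$ works, and that the phase ambiguity of the Bloch modes is harmless because the numerator is a rank-one projection, matches the paper's reasoning.
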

\begin{proof}
We show that among the following five quantities, the difference between each adjacent pair converges to zero in operator norm as $\eps\to 0$. The five quantities are 
\begin{equation}
\begin{aligned}
I_1\phi&:=\sum_{n=1,2}\lim_{\delta\to0^+}\frac{1}{2\pi}\int_{[-\pi,-\delta]\cup[\delta,\pi]} \frac{\overline{ \langle \phi, v_{n}(\cdot;\bp(\ell))\rangle} v_{n}(\bx;\bp(\ell))}{\mu_{n}(\bp(\ell)) - \lambda_*} \, \dpt,\\
I_2\phi&:=\sum_{n=1,2}\frac{1}{2\pi}\int_{[-\pi,-\eps^{1/3}]\cup[\eps^{1/3},\pi]} \frac{\overline{ \langle \phi, v_{n}(\cdot;\bp(\ell))\rangle} v_{n}(\bx;\bp(\ell))}{\mu_{n}(\bp(\ell)) - \lambda_*} \, \dpt,\\
I_3\phi&:=\sum_{n=1,2}\frac{1}{2\pi}\int_{[-\pi,-\eps^{1/3}]\cup[\eps^{1/3},\pi]} \frac{\overline{ \langle \phi, u_{n}(\cdot;\bp(\ell))\rangle} u_{n}(\bx;\bp(\ell))}{\lambda_{n}(\bp(\ell)) - \lambda_*} \, \dpt,\\
I_4\phi&:=\sum_{n=1,2}\frac{1}{2\pi}\int_{[-\pi,-\eps^{1/3}]\cup[\eps^{1/3},\pi]}
\frac{\overline{ \langle \phi, u_{n,\pm\eps}(\cdot;\bp(\ell))\rangle} u_{n,\pm\eps}(\bx;\bp(\ell))}{\lambda_{n,\pm\eps}(\bp(\ell)) - \lambda_*- \eps h} \, \dpt,\\
I_5\phi&:=\sum_{n=1,2}\frac{1}{2\pi}\int_{[-\pi,-\eps^{1/3}]\cup[\eps^{1/3},\pi]}
\frac{\overline{ \langle \phi, v_{n,\pm\eps}(\cdot;\bp(\ell))\rangle} v_{n,\pm\eps}(\bx;\bp(\ell))}{\mu_{n,\pm\eps}(\bp(\ell)) - \lambda_*- \eps h} \, \dpt.
\end{aligned}
\end{equation}

For $I_1-I_2\to 0$, we define $f(\bx,\ell):= \overline{ \langle \phi, v_1(\cdot;\bp(\ell))\rangle} v_1(\bx;\bp(\ell))$. From the analyticity of $v_1(\cdot;\bp(\ell))$ in $\ell$ in a neighborhood of $\mathbb R$ as $H^1(\cunp)$ functions as stated above \eqref{eq:muv}, we know $\| \frac{d}{d\ell} v_1(\cdot;\bp(\ell))\|_{H^1(\cunp)}$ is bounded on $\ell\in[0,1]$, thus
\begin{equation}
\begin{aligned}
\| f(\cdot,\ell)|\|_{\HhalfG} &\leq 
 \|\phi\|_{\HmhalfG}\max_{|\ell|\leq1} (\| v_1(\cdot,\ell)\|_{H^1(\cunp)} \|v_1(\cdot,\ell)|\|_{H^1(\cunp)}),\\
\|\frac{d}{d\ell}f(\cdot, \ell) \|_{\HhalfG} &= \| \overline{ \langle \phi,  \frac{d}{d\ell} v_1(\cdot;\bp(\ell))\rangle} v_1(\bx;\bp(\ell)) + \overline{ \langle \phi, v_1(\cdot;\bp(\ell))\rangle} \frac{d}{d\ell}  v_1(\bx;\bp(\ell))\|_{\HhalfG}\\
&\leq 2\|\phi\|_{\HmhalfG}\max_{|\ell|\leq\eps^{1/3}} (\| \frac{d}{d\ell}v_1(\cdot,\ell)|\|_{\HhalfG} \|v_1(\cdot,\ell)|\|_{\HhalfG})\\
&\leq  2\|\phi\|_{\HmhalfG}\max_{|\ell|\leq1} (\| \frac{d}{d\ell}v_1(\cdot,\ell)\|_{H^1(\cunp)} \|v_1(\cdot,\ell)|\|_{H^1(\cunp)}).
\end{aligned}
\end{equation}
Thus
\begin{equation}
\begin{aligned}
&\left\|\lim_{\delta\to0^+}\frac{1}{2\pi}\int_{[-\eps^{1/3},-\delta]\cup[\delta,\eps^{1/3}]} \frac{\overline{ \langle \phi, v_1(\cdot;\bp(\ell))\rangle} v_1(\bx;\bp(\ell))}{\mu_1(\bp(\ell)) - \lambda_*} \, \dpt \right\|_{\HhalfG}/ \|\phi\|_{\HmhalfG} \\
= &\left\|\lim_{\delta\to0^+}\frac{1}{2\pi}\int_{[-\eps^{1/3},-\delta]\cup[\delta,\eps^{1/3}]} \frac{f(\bx,\ell)}{|\alpha_*|\ell}(1+O(\ell)) \, \dpt \right\|_{\HhalfG}/ \|\phi\|_{\HmhalfG}\\
=&\left\|\lim_{\delta\to0^+}\frac{1}{2\pi}\int_{[\delta,\eps^{1/3}]} \frac{f(\bx,\ell) - f(\bx,-\ell)}{|\alpha_*|\ell}
+O(\ell)\frac{f(\bx,\ell)}{|\alpha_*|\ell} +O(\ell)\frac{f(\bx, - \ell)}{|\alpha_*|\ell} \, \dpt \right\|_{\HhalfG}/ \|\phi\|_{\HmhalfG}\\
=&O(\eps^{1/3}) \left(  \max_{|\ell|\leq\eps^{1/3}}\| \frac{d}{d\ell}f(\cdot,\ell)|\|_{\HhalfG} +  \max_{|\ell|\leq\eps^{1/3}}\| f(\cdot,\ell)|\|_{\HhalfG} \right)/ \|\phi\|_{\HmhalfG}
\to 0. 
\end{aligned}
\end{equation}
From Fig.~\ref{fig:bandsplit}, we observe $I_2$ and $I_3$ are the same, and $I_4$ and $I_5$ are the same.
Finally,  on the integral domain, 
\begin{equation}\label{eq:error}
\begin{aligned}
|\lambda_{n}(\bp(\ell)) - \lambda_*|\gtrapprox \eps^{1/3}, \quad |\lambda_{n,\pm\eps}(\bp(\ell)) - \lambda_*- \eps h| \gtrapprox \eps^{1/3}, \\
\|u_{n,\pm\eps}(\cdot;\bp(\ell)) - \alpha u_n(\cdot;\bp(\ell))\|_{H^1(\chomo)} = O(\eps),\\
\lambda_{n,\pm\eps}(\bp(\ell)) - \eps h - \lambda_{n}(\bp(\ell)) = O(\eps).
\end{aligned}
\end{equation}
Here $\alpha$ is a phase factor that depends on $\pm\eps$, $n$ and $\ell$ as remarked on Remark~\ref{lem:phase}.
Thus by elementary insertion and grouping, $I_3-I_4=O(\eps^{1/3})\to 0$. 
\end{proof}

Using the same method, we obtain the convergence of the first two bands away from the Dirac point in the other operators.
\begin{lemma}\label{lem:n12}
The following convergences hold in the operator norm  from $\HmhalfG$ to $\HhalfG$, from $\HhalfG$ to $\HhalfG$ and from $\HhalfG$ to $\HmhalfG$ respectively
uniformly for $h\in\mathbb C$ that satisfy $|h|< \hrad$, 
as $\eps\to 0$: 
\begin{equation}
\begin{aligned}
\sum_{n=1,2}\frac{1}{2\pi}\int_{[-\pi,-\eps^{1/3}]\cup[\eps^{1/3},\pi]}&
\frac{\overline{ \langle v_{n,\pm\eps}(\cdot;\bp(\ell)),\psi\rangle} \partial_n v_{n,\pm\eps}(\bx;\bp(\ell))}{ \mu_{n,\pm\eps}(\bp(\ell)) - \lambda_*- \eps h} \, \dpt \\
&\to
\sum_{n=1,2}\frac{1}{2\pi}\text{p.v.}\int_{[-\pi,\pi ]}  \frac{\overline{ \langle \phi, v_{n}(\cdot;\bp(\ell))\rangle} \partial_n v_{n}(\bx;\bp(\ell))}{ \mu_{n}(\bp(\ell)) - \lambda_*} \, \dpt. 
\end{aligned}
\end{equation}
\begin{equation}
\begin{aligned}
\sum_{n=1,2}\frac{1}{2\pi}\int_{[-\pi,-\eps^{1/3}]\cup[\eps^{1/3},\pi]}&
\frac{\langle \partial_n v_{n,\pm\eps}(\cdot;\bp(\ell),\psi)\rangle v_{n,\pm\eps}(\bx;\bp(\ell))}{ \mu_{n,\pm\eps}(\bp(\ell)) - \lambda_*- \eps h} \, \dpt
\to\\
&\sum_{n=1,2}\frac{1}{2\pi}\text{p.v.}\int_{[-\pi,\pi ]}  \frac{ \langle \partial_n  v_{n}(\cdot;\bp(\ell)) ,\psi\rangle v_{n}(\bx;\bp(\ell))}{ \mu_{n}(\bp(\ell)) - \lambda_*} \, \dpt, 
\end{aligned}
\end{equation}
\begin{equation}
\begin{aligned}
\partial_n\sum_{n=1,2}\frac{1}{2\pi}\int_{[-\pi,-\eps^{1/3}]\cup[\eps^{1/3},\pi]}&
\frac{\langle \partial_n v_{n,\pm\eps}(\cdot;\bp(\ell),\psi)\rangle v_{n,\pm\eps}(\bx;\bp(\ell))}{ \mu_{n,\pm\eps}(\bp(\ell)) - \lambda_*- \eps h} \, \dpt
\to\\
&\partial_n\sum_{n=1,2}\frac{1}{2\pi}\text{p.v.}\int_{[-\pi,\pi ]}  \frac{ \langle \partial_n  v_{n}(\cdot;\bp(\ell)) ,\psi\rangle v_{n}(\bx;\bp(\ell))}{ \mu_{n}(\bp(\ell)) - \lambda_*} \, \dpt. 
\end{aligned}
\end{equation}
\end{lemma}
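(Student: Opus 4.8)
\textbf{Proof of Lemma~\ref{lem:n12} (plan).} The argument follows the proof of the scalar identity \eqref{eq:Sn12} for $\cS^{\pm\eps}$ essentially verbatim, carried out in parallel for the three operators $\cK^{\pm\eps}$, $\cK^{*,\pm\eps}$, $\cN^{\pm\eps}$. For each of the three I would fix one of the two band indices (say $n=1$) and introduce the five comparison quantities $I_1,\dots,I_5$ exactly as in the proof of \eqref{eq:Sn12}, but with the rank-one numerator $\overline{\langle\phi,v_1(\cdot;\bp(\ell))\rangle}\,v_1(\bx;\bp(\ell))$ replaced by the corresponding bilinear expression appearing in \eqref{eq:Keps}, \eqref{eq:Kstareps}, \eqref{eq:Neps} respectively, namely $\langle\partial_n v_1(\cdot;\bp(\ell)),\psi\rangle\,v_1(\bx;\bp(\ell))$, $\overline{\langle\phi,v_1(\cdot;\bp(\ell))\rangle}\,\partial_n v_1(\bx;\bp(\ell))$, and $\partial_n\big(\langle\partial_n v_1(\cdot;\bp(\ell)),\psi\rangle\,v_1(\bx;\bp(\ell))\big)$; the limits are then the matching pieces of \eqref{eq:tildeK0}, \eqref{eq:tildeKstar0}, \eqref{eq:tildeN0}. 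One shows that the difference of each consecutive pair $I_j-I_{j+1}$ tends to $0$ in the appropriate operator norm ($\HhalfG\to\HhalfG$, $\HmhalfG\to\HhalfG$, $\HhalfG\to\HmhalfG$), uniformly for $|h|<\hrad$.

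The four estimates go through with the following ingredients. First, regularity: by \eqref{eq:muv} the Bloch mode $v_n(\cdot;\bp(\ell))$ is analytic in $\ell$ in a complex neighborhood of $\mathbb R$ as a map into $H^1(\Delta,\cpeps)$; since $\Delta v_n\in L^2(\cpeps)$ the Dirichlet and Neumann traces on $\Gamma$ are bounded, so both $v_n|_\Gamma\in\HhalfG$ and $\partial_n v_n|_\Gamma\in\HmhalfG$, together with their $\ell$-derivatives, are bounded uniformly for $|\ell|\le 1$. Hence the trace-valued functions $f(\bx,\ell)$ built out of the three numerators satisfy $\|f(\cdot,\ell)\|+\|\tfrac{d}{d\ell}f(\cdot,\ell)\|=O(1)$ in the relevant norm, which is precisely the input used in the scalar case. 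Second, $I_1-I_2$: the principal-value regularization on $[-\eps^{1/3},\eps^{1/3}]$ contributes only $O(\eps^{1/3})$, because $\mu_1(\bp(\ell))-\lambda_*=|\alpha_*|\ell(1+O(\ell))$ by Lemma~\ref{lem:0strip} and $f(\bx,\ell)-f(\bx,-\ell)=O(\ell)$, so the singular factor is cancelled and the integrand is bounded over a domain of length $O(\eps^{1/3})$. Third, $I_2=I_3$ and $I_4=I_5$: these are the same integrals after relabeling the analytic branches $\mu_n$ by the ordered branches $\lambda_n$ according to Figure~\ref{fig:bandsplit}. Fourth, $I_3-I_4$: on $[-\pi,-\eps^{1/3}]\cup[\eps^{1/3},\pi]$ one uses \eqref{eq:error}, i.e. the denominators are $\gtrsim\eps^{1/3}$, $\lambda_{n,\pm\eps}(\bp(\ell))-\eps h-\lambda_n(\bp(\ell))=O(\eps)$ by Lemma~\ref{lem:pmstrip}, and $\|u_{n,\pm\eps}(\cdot;\bp(\ell))-\alpha\,u_n(\cdot;\bp(\ell))\|_{H^1(\chomo)}=O(\eps)$ by Remark~\ref{lem:phase} (hence the same for the traces and normal derivatives); elementary insertion-and-grouping gives $I_3-I_4=O(\eps^{1/3})\to 0$. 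Chaining these estimates yields the three claimed limits.

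The one genuinely new point relative to the $\cS^{\pm\eps}$ case, and the main technical obstacle, is the outer normal derivative $\partial_n$ in $\cN^{\pm\eps}$: one must verify that applying $\partial_n$ to the band-sum produces a convergent series in $\HmhalfG$ and that this operation commutes with the $\eps\to 0^+$ limit. This is handled by recalling that $\cN^{\pm\eps}=[\partial_n\cD^{\pm\eps}]_\pm$ is a priori a bounded operator $\HhalfG\to\HmhalfG$ (as in the setup of Proposition~\ref{lem:oplim}, via \cite{mclean2000strongly,saranen2001periodic}), so the decomposition into near-Dirac, away-from-Dirac, and higher-band pieces is performed at the level of the already-bounded operator; the normal derivative then acts on each $v_n(\bx;\bp(\ell))$ producing $\partial_n v_n|_\Gamma\in\HmhalfG$, and the uniform $\ell$-regularity from the first step of the previous paragraph controls all quantities. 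Once this boundedness and interchange are granted, steps (ii)--(iv) above apply with the target $\HhalfG$-norms replaced by $\HmhalfG$-norms, completing the proof.
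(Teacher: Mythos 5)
Your proof follows exactly the route the paper intends: since the paper itself states Lemma~\ref{lem:n12} with the single remark that it follows ``using the same method'' as for the scalar convergence \eqref{eq:Sn12}, the correct argument is precisely the five-quantity comparison $I_1,\dots,I_5$ carried out three more times with the rank-one numerator replaced by $\langle\partial_n v_1,\psi\rangle v_1(\bx)$, $\overline{\langle\phi,v_1\rangle}\partial_n v_1(\bx)$, and $\langle\partial_n v_1,\psi\rangle\partial_n v_1(\bx)$, and your proposal does this correctly, with the right inputs (analyticity of $v_n(\cdot;\bp(\ell))$ in $\ell$ as an $H^1(\Delta,\cpeps)$-valued map, Lemma~\ref{lem:0strip}, Remark~\ref{lem:phase}, and \eqref{eq:error}). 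One small remark: your final paragraph about $\cN^{\pm\eps}$ over-engineers the issue. There is no problem of ``convergent series'' or of justifying interchange of $\partial_n$ with a limit here, because Lemma~\ref{lem:n12} concerns only the bands $n=1,2$ integrated over $[-\pi,-\eps^{1/3}]\cup[\eps^{1/3},\pi]$, away from the zero of $\mu_n(\bp(\ell))-\lambda_*$; the corresponding contribution to the Green function kernel is a smooth function of $\bx,\by\in\Gamma$, so the resulting ``layer potentials'' are smoothing, $\partial_n$ passes through the finite sum and compactly supported integral trivially, and one simply uses the analyticity of $\partial_n v_n|_\Gamma\in\HmhalfG$ in $\ell$ in place of that of $v_n|_\Gamma\in\HhalfG$. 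The a priori boundedness of the hypersingular operator and the delicate $\partial_n$ treatment are only needed for the $n\geq 3$ piece (Section~\ref{sec:Dn3}), not for this lemma.
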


\subsection{Higher bands}\label{sec:Sn3}

In this section, we will prove Lemma~\ref{lem:Sn3} and Lemma~\ref{lem:Kstarn3}.
\begin{lemma}\label{lem:Sn3}
The following convergence holds in the operator norm from $\HmhalfG$ to $\HhalfG$
uniformly for $h\in\mathbb C$ that satisfy $|h|<\hrad $, 
as $\eps\to 0$: 
\begin{equation}\label{eq:Sn3}
\begin{aligned}
\sum_{n\geq3}\frac{1}{2\pi}\int_{[-\pi,\pi ]} 
\frac{\overline{ \langle \phi, v_{n,\pm\eps}(\cdot;\bp(\ell))\rangle} v_{n,\pm\eps}(\bx;\bp(\ell))}{ \mu_{n,\pm\eps}(\bp(\ell)) - \lambda_*- \eps h} \, \dpt
\to
\sum_{n\geq3}\frac{1}{2\pi}\int_{[-\pi,\pi ]}  \frac{\overline{ \langle \phi, v_{n}(\cdot;\bp(\ell))\rangle} v_{n}(\bx;\bp(\ell))}{ \mu_{n}(\bp(\ell)) - \lambda_*} \, \dpt. 
\end{aligned}
\end{equation}
\eqref{eq:Sn3} is equivalently be represented by
\begin{equation}\label{eq:Sn3u}
\begin{aligned}
\sum_{n\geq3}\frac{1}{2\pi}\int_{[-\pi,\pi ]} 
\frac{\overline{ \langle \phi, u_{n,\pm\eps}(\cdot;\bp(\ell))\rangle} u_{n,\pm\eps}(\bx;\bp(\ell))}{ \lambda_{n,\pm\eps}(\bp(\ell)) - \lambda_*- \eps h} \, \dpt
\to
\sum_{n\geq3}\frac{1}{2\pi}\int_{[-\pi,\pi ]}  \frac{\overline{ \langle \phi, u_{n}(\cdot;\bp(\ell))\rangle} u_{n}(\bx;\bp(\ell))}{ \lambda_{n}(\bp(\ell)) - \lambda_*} \, \dpt, 
\end{aligned}
\end{equation}
where $\lambda_{n,\pm\eps}$ and $\lambda_n$ are ranked increasingly as introduced in Section~\ref{sec:Floquet}.
\end{lemma}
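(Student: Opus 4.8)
The plan is to prove the convergence \eqref{eq:Sn3} (equivalently \eqref{eq:Sn3u}) by exploiting the uniform spectral gap that the higher bands $n\geq 3$ maintain away from $\lambda_*$. The starting point is that, by the no-fold Assumption~\ref{lem:assNoFold} along $\bbeta_1$ together with the remark after it (which identifies $\lambda_*$ as the top of the first band and the bottom of the second band), there exists a constant $\delta_0>0$, independent of $\eps$, such that $\mu_{n,\pm\eps}(\bp(\ell)) - \lambda_* \geq \delta_0$ for all $n\geq 3$, all $\ell\in[-\pi,\pi]$, and all sufficiently small $\eps\geq 0$. This uses the continuity of the band functions in $\eps$ (perturbation of the domain $D^{\pm\eps}$ by rotation) together with the fact that the third band at $\eps=0$ stays strictly above $\lambda_*$. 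Consequently, for $|h|<\hrad$ and $\eps$ small, the denominators $\mu_{n,\pm\eps}(\bp(\ell)) - \lambda_* - \eps h$ are bounded below in modulus by $\delta_0/2$, so the series in \eqref{eq:Sn3u} converges absolutely in operator norm, uniformly in $\eps$ and $h$, with the tail $\sum_{n\geq N}$ controlled by standard elliptic estimates on the resolvent of $-\Delta^{\pm\eps}(\ell)$ (spectral decay $\lambda_{n,\pm\eps}(\ell)\sim cn$ as $n\to\infty$, uniformly in $\ell$ and $\eps$, from Weyl asymptotics on the fixed-topology strip cell).

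Given the uniform tail control, it suffices to prove the convergence term by term for each fixed $n\geq 3$ and then sum. For a fixed $n$, the integrand is
\begin{equation*}
\frac{\overline{\langle \phi, u_{n,\pm\eps}(\cdot;\bp(\ell))\rangle}\, u_{n,\pm\eps}(\bx;\bp(\ell))}{\lambda_{n,\pm\eps}(\bp(\ell)) - \lambda_* - \eps h},
\end{equation*}
and I would compare it to the $\eps=0$ integrand by inserting and grouping, exactly as in the last step of the proof of \eqref{eq:Sn12}: write the difference as a sum of three pieces, one accounting for the change in the numerator (controlled by $\|u_{n,\pm\eps}(\cdot;\bp(\ell)) - \alpha\, u_n(\cdot;\bp(\ell))\|_{H^1} = O(\eps)$ from Remark~\ref{lem:phase}, with $\alpha$ an $\ell$-dependent phase — note the numerator is phase-independent), one accounting for the change in the eigenvalue in the denominator ($\lambda_{n,\pm\eps}(\bp(\ell)) - \eps h - \lambda_n(\bp(\ell)) = O(\eps)$ by the same continuity), and the harmless $\eps h$ shift. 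Because the denominators are uniformly bounded away from zero (here there is no principal-value subtlety, unlike the $n=1,2$ case), each of these pieces is $O(\eps)$ in the operator norm from $\HmhalfG$ to $\HhalfG$, uniformly in $\ell\in[-\pi,\pi]$ and $|h|<\hrad$; integrating over the bounded interval $[-\pi,\pi]$ preserves this. Summing over $n\geq 3$ using the uniform tail bound gives the claimed operator-norm convergence.

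The equivalence of \eqref{eq:Sn3} and \eqref{eq:Sn3u} is immediate from the relation \eqref{eq:unchoice}: the smooth branches $\{v_{n,\pm\eps}(\cdot;\bp(\ell))\}_{n\geq 3}$ are, for each $\ell$, a reindexing and rephasing of the increasingly-ordered $\{u_{n,\pm\eps}(\cdot;\bp(\ell))\}_{n\geq 3}$, and both the numerator bilinear expression $\overline{\langle\phi,\cdot\rangle}\,(\cdot)$ and the value of the summed integral are invariant under such rephasing and under permutation within the block $n\geq 3$ (the permutations do not mix the block $n\geq 3$ with $n\in\{1,2\}$ because, by the no-fold condition, the first two bands are separated from the rest near $\ell=0$ and the crossings among higher bands only permute indices $\geq 3$). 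So one may work throughout with the increasingly-ordered labels, which is the form \eqref{eq:Sn3u} where the insertion-and-grouping estimate is cleanest.

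The main obstacle is establishing the uniform spectral gap $\mu_{n,\pm\eps}(\bp(\ell)) - \lambda_* \geq \delta_0 > 0$ for $n\geq 3$ with $\delta_0$ independent of $\eps$ and $\ell$. At $\eps=0$ this follows from Lemma~\ref{lem:pmstrip} and the no-fold Assumption~\ref{lem:assNoFold} (the third band is bounded below away from $\lambda_*$ over the whole Floquet interval); propagating it to small $\eps>0$ requires a uniform-in-$\eps$ continuity statement for the band functions under the rotation perturbation $D\mapsto D^{\pm\eps}$, which one obtains from the min-max characterization together with the uniform boundedness of the resolvents of $-\Delta^{\pm\eps}(\ell)$ established in Section~\ref{sec:Floquet}. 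Once this gap is in hand, everything else is a routine repetition of the insertion-and-grouping argument already used for \eqref{eq:Sn12}, without the principal-value complications, plus a Weyl-type tail estimate.
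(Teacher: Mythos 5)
Your central step---absolute convergence of the higher-band series $\sum_{n\geq 3}$ in the $\HmhalfG\to\HhalfG$ operator norm, uniformly in $\eps$, with a Weyl-type tail bound---is a genuine gap, and the Weyl argument you invoke does not deliver it. Each $n$-th term (for fixed $\ell$) is a rank-one operator $\phi\mapsto \overline{\langle\phi,u_{n,\eps}\rangle}_\Gamma\,u_{n,\eps}|_\Gamma/(\lambda_{n,\eps}-\lambda_*-\eps h)$, whose operator norm is $\|u_{n,\eps}|_\Gamma\|_{\HhalfG}^2/|\lambda_{n,\eps}-\lambda_*-\eps h|$. With $u_{n,\eps}$ normalized in $L^2$ and $\|\nabla u_{n,\eps}\|_{L^2}^2=\lambda_{n,\eps}$, the trace theorem only gives $\|u_{n,\eps}|_\Gamma\|_{\HhalfG}\leq C\sqrt{1+\lambda_{n,\eps}}$, so the rank-one norms are $O(1)$; Weyl asymptotics $\lambda_{n,\eps}\sim cn$ does not make them summable, and the analogous bound for the truncated resolvent, $\sup_{n\geq N}(1+\lambda_{n,\eps})/|\lambda_{n,\eps}-\lambda_*-\eps h|\to 1$ as $N\to\infty$, shows there is no $H^{-1}\to H^{1}$ decay in $N$ that a tail estimate could exploit. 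The modal expansion converges, but not absolutely in norm; it converges because of cancellation among the modes, so your plan---term-by-term convergence plus a uniform tail bound, then sum---has no anchor, and this is precisely the obstruction the paper's proof is designed to avoid.

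The paper never peels off individual modes $n\geq 3$. In Lemma~\ref{lem:DTC}, the whole higher-band block is bundled via \eqref{eq:Smeaning} into the single composite $(I-(\lambda_*+\eps h)(-\Delta_\eps(\ell))^{-1})^{-1}P_{n\geq 3,\eps,\ell}(-\Delta_\eps(\ell))^{-1}$ acting on $\phi\delta(\Gamma)$, estimated directly in $H^1$ by resolvent bounds; your observation about the uniform spectral gap to the third band is correct and does appear there, but it enters as a denominator bound in the resolvent estimate, not as a tail-summability tool. Dominated convergence is then applied to the integral over $\ell$ (not the sum over $n$), with the three hypotheses---continuity in $\ell$, uniform boundedness over $\ell$ and $\eps$, and pointwise convergence for each $\ell\neq 0$---verified in Lemma~\ref{lem:DTC}. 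Two secondary points your proposal also misses: (i) $u_{n,\eps}(\cdot;\bp(\ell))$ and $u_n(\cdot;\bp(\ell))$ live on different domains $\cpeps$ and $\cunp$, so any insertion-and-grouping comparison must first transplant everything to a common domain via the diffeomorphism $\by^\eps(\bx)$, as the paper does just before Lemma~\ref{lem:Sn3diffeo}; and (ii) the $O(\eps)$ eigenpair comparison you cite from Remark~\ref{lem:phase} is stated there only for $n=1,2$, and a uniform-in-$n$ version for all $n\geq 3$ and all $\ell$ is not available (the perturbation constant grows with $n$), which is yet another reason the term-by-term route would not close.
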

\begin{remark}
The limits in Lemma~\ref{lem:Sn3} do not depend on the sign of $\eps$. Thus we will work with $\eps$ with sufficiently small absolute values. 
We also use $\lambda_n = \lambda_{n,0}$ and $u_n=u_{n,0}$ when convenient.
\end{remark}

Notice that $u_{n,\eps}$ are supported on different domains $\cpeps$ as $\eps$ vaires. Instead of extending them by $0$ into $D^\eps$ as done in \eqref{eq:ext}, we convert them to the same support $\cunp$ through diffeomorphisms, where results in \cite[p.423]{kato2013perturbation} and~\cite{Kuchmet2022} can be applied. 
Let $\bx$ and $\by^\eps$ be the Euclidean coordinates of $\cunp$ and $\cpeps$. Fix an open set $\cO$ compactly supported in $\chomo$ and containing $D^\eps$ for all $\eps$.
There is a smooth bijective map from $\cunp$ to $\cpeps$ that is analytic in $\eps$, denoted by $y^\eps=y^\eps(x)$.
Moreover, we may require that $\by^\eps(\bx)$ satisfy the following conditions:(i) $\by^\eps(\bx)=\bx$ for $\bx\in\cO$, 
(ii) $|\by^\eps(\bx)-\bx|\to0$ uniformly in $\cunp$ as $\eps\to0$, and (iii) the Jacobian $|\frac{\partial \by^\eps(\bx)}{\partial \bx}|\to0$ uniformly in $\cunp$ as $\eps\to0$. 
Every function $u(\by^\eps)$ on $\cpeps$ can be treated as a function $u (\by^\eps(\bx))$ on $\cunp$.

Since $\by^\eps(\bx)=\bx$ in a neighborhood of $\Gamma$, Lemma~\ref{lem:Sn3} follows from the following lemma and taking trace to $\Gamma$.
\begin{lemma}\label{lem:Sn3diffeo}
As $\eps\to 0$, the following convergence holds uniformly for $h\in\mathbb C$ that satisfy $|h|<\hrad $
\begin{equation}\label{eq:S3dom}
\| \Sepsevan (\lambda_* + \eps h) - \Sevan (\lambda_*) \|_{\HmhalfG\to H^1(\cunp)} \to 0,
\end{equation}
where
\begin{equation}
\begin{aligned}
\Sepsevan (\lambda_* + \eps h) \phi (\bx) &:= \sum_{n\geq3}\frac{1}{2\pi}\int_{[-\pi,\pi ]}
\frac{\overline{ \langle \phi, u_{n,\eps}(\cdot;\bp(\ell))\rangle} u_{n,\eps}(\by^\eps(\bx);\bp(\ell))}{ \lambda_{n,\eps}(\bp(\ell)) - \lambda_*- \eps h} \, \dpt,\\
\Sevan (\lambda_*)\phi (\bx) &:=\sum_{n\geq3}\frac{1}{2\pi}\int_{[-\pi,\pi ]}  \frac{\overline{ \langle \phi, u_{n}(\cdot;\bp(\ell))\rangle} u_{n}(\bx;\bp(\ell))}{ \lambda_{n}(\bp(\ell)) - \lambda_*} \, \dpt .
\end{aligned}
\end{equation}
\end{lemma}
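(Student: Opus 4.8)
\textbf{Proof proposal for Lemma~\ref{lem:Sn3diffeo}.}

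The plan is to exploit the spectral gap that separates the first two bands from the rest. By Lemma~\ref{lem:pmstrip} and Remark~\ref{lem:commongap}, there is a constant $\delta_0>0$, independent of $\eps$ for $\eps$ small, such that $\lambda_{n,\eps}(\bp(\ell))\geq\lambda_*+\delta_0$ for all $n\geq3$, all $\ell\in[-\pi,\pi]$, and all sufficiently small $|\eps|$; the same lower bound holds for $n\geq 3$ with $\eps=0$. Consequently, for $|h|<\hrad$ and $\eps$ small, the denominators $\lambda_{n,\eps}(\bp(\ell))-\lambda_*-\eps h$ and $\lambda_n(\bp(\ell))-\lambda_*$ are bounded below by $\delta_0/2$ in modulus. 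First I would use this to rewrite each operator as a resolvent-type object: $\Sepsevan(\lambda_*+\eps h)=P_{\geq3,\eps}(-\Delta^\eps(\ell)-\lambda_*-\eps h)^{-1}$ averaged over $\ell$, where $P_{\geq 3,\eps}$ is the spectral projection onto the bands $n\geq 3$, and similarly for $\Sevan$. The point of passing through the diffeomorphism $\by^\eps$ is that after pulling back, all operators act on the fixed domain $\cunp$ (equivalently $\cpeps$ transported to $\cunp$), so I can compare them as bounded operators between fixed Banach spaces $\HmhalfG\to H^1(\cunp)$.

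Next I would set up the comparison. Let $-\Delta^\eps_\flat(\ell)$ denote the pullback under $\by^\eps$ of the operator $-\Delta^\eps(\ell)$ to the fixed domain $\cunp$; by the properties (i)--(iii) of $\by^\eps$ (identity near $\cO\supset D^\eps$, uniform $C^1$-smallness of $\by^\eps(\bx)-\bx$ and of the Jacobian deviation), the pulled-back operator has coefficients converging to those of $-\Delta^0(\ell)$ in $L^\infty$ uniformly in $\ell$, with the Dirichlet condition on $\partial D^0$ exactly matched. This is the standard domain-perturbation setup of \cite[p.423]{kato2013perturbation} and \cite{Kuchmet2022}: the sesquilinear forms converge in the norm-resolvent sense, uniformly in the parameter $\ell\in[-\pi,\pi]$ since everything is analytic and periodic in $\ell$. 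Therefore
\begin{equation*}
\big\|(-\Delta^\eps_\flat(\ell)-z)^{-1}-(-\Delta^0(\ell)-z)^{-1}\big\|_{\HmhalfG\to H^1(\cunp)}\to 0
\end{equation*}
uniformly for $\ell\in[-\pi,\pi]$ and for $z$ in a fixed compact subset of the resolvent set — in particular for $z=\lambda_*+\eps h$ with $|h|<\hrad$, which lies within $O(\eps)$ of $\lambda_*$ and stays a distance $\geq\delta_0/2$ from the bands $n\geq 3$. Composing with the spectral projections $P_{\geq 3,\eps}$ (which also converge in norm because the relevant spectral gap is stable, by a Riesz-projection contour integral around the first two bands) and integrating over $\ell\in[-\pi,\pi]$ against $\frac{1}{2\pi}\,\dpt$ gives \eqref{eq:S3dom}; the uniformity in $h$ is inherited from the uniform resolvent bound. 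Finally, since $\by^\eps(\bx)=\bx$ in a neighborhood of $\Gamma$, taking the Dirichlet and Neumann traces on $\Gamma$ of \eqref{eq:S3dom} yields Lemma~\ref{lem:Sn3}, and hence \eqref{eq:Sn3} and its equivalent form \eqref{eq:Sn3u}; the analogous traces applied to the same resolvent convergence (composed on the appropriate side with the trace/normal-trace maps) give the $\cK$, $\cK^*$, and $\cN$ statements in Corollary~\ref{lem:Kstarn3} and Subsection~\ref{sec:Dn3}.

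The main obstacle I expect is making the uniformity in $\ell$ fully rigorous for the domain perturbation: one must check that the family of diffeomorphisms can be chosen independently of $\ell$ (it can, since $D^\eps$ does not depend on $\ell$) and that the quasi-periodicity conditions along $\Gb$ and $\Gl$ with $\ell$-dependent Bloch factors are preserved under the pullback and do not spoil the uniform form-convergence — this is where the precise function-space bookkeeping in $\cH^\eps(\ell)$ versus $\cH^0(\ell)$ matters. A secondary technical point is the norm convergence of the spectral projections $P_{\geq3,\eps}$, which requires the gap $\delta_0$ to be genuinely uniform in both $\ell$ and $\eps$; this is guaranteed by Lemma~\ref{lem:pmstrip} near the Dirac point together with Assumption~\ref{lem:assNoFold} and continuity/compactness away from it, but it should be stated carefully. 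Everything else is a routine application of the norm-resolvent convergence machinery together with the uniform lower bound on the denominators.
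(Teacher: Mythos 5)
Your approach has a genuine gap at the core step. You assert that the full resolvents converge in norm uniformly in $\ell$, specifically that $\|(-\Delta^\eps_\flat(\ell)-z)^{-1}-(-\Delta^0(\ell)-z)^{-1}\|\to 0$ uniformly for $\ell\in[-\pi,\pi]$ at $z=\lambda_*+\eps h$, and only afterwards compose with the spectral projection $P_{\geq 3,\eps}$. But the limiting object does not exist: at $\ell=0$ one has $\lambda_1(\bp(0))=\lambda_2(\bp(0))=\lambda_*$, so $\lambda_*\in\sigma(-\Delta^0(0))$ and $(-\Delta^0(0)-\lambda_*)^{-1}$ is undefined. The ``fixed compact subset of the resolvent set'' you invoke shrinks to nothing as $\ell\to 0$. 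Moving the projection to the front does not immediately repair this, because the standard resolvent identity $(A_\eps-z)^{-1}-(A_0-z)^{-1}=(A_\eps-z)^{-1}(A_0-A_\eps)(A_0-z)^{-1}$ still routes through the full resolvents in the middle; commuting a single $P_{\geq 3,\eps}$ past these factors does not eliminate the contributions from bands $1,2$, where the resolvent norm is $\sim\eps^{-1}$. One could try to control the cross terms by writing out $P_{\geq 3,0}\,(-\Delta^\eps_\flat-z)^{-1}\,(P_{\geq 3,\eps}-P_{\geq 3,0})$ and exploiting the near-orthogonality $\|P_{\geq 3,0}P_{\leq 2,\eps}\|=O(\eps)$ together with $\|(-\Delta^\eps_\flat-z)^{-1}|_{P_{\leq 2,\eps}}\|=O(\eps^{-1})$, but that is a delicate three-factor cancellation that you have not carried out, and it is not at all ``a routine application of the norm-resolvent convergence machinery.''

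The paper avoids this entirely by not claiming uniformity in $\ell$. Lemma~\ref{lem:DTC} establishes only: (1) continuity of $\Sepsevan(\lambda_*+\eps h,\bp(\ell))$ in $\ell$, (2) a uniform bound in $(\ell,\eps,h)$, and (3) pointwise operator-norm convergence \emph{for each fixed $\ell\neq 0$}, with an explicitly $\ell$-dependent constant $\mathfrak c_3(\ell)$ that degenerates as $\ell\to 0$ (precisely because $\lambda_*$ approaches the low bands there). Statements (1)--(3) are then combined via a dominated-convergence argument over the integral in $\ell$, which is what produces the claimed convergence of $\Sepsevan(\lambda_*+\eps h)$ to $\Sevan(\lambda_*)$. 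Your proof, as written, skips straight to uniform convergence of an object that does not exist at $\ell=0$, and therefore does not establish the lemma. To salvage your route you would need to either (a) formalize the three-factor projection argument sketched above and prove that all estimates are uniform in $\ell$, or (b) fall back to the paper's strategy of pointwise convergence away from $\ell=0$ plus dominated convergence.
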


We will prove Lemma~\ref{lem:Sn3diffeo} using the dominant convergence theorem and the following results.
\begin{lemma} \label{lem:DTC}
Let $\Sepsevan (\lambda_* + \eps h ,\bp(\ell))$ and $\Sevan (\lambda_* ,\bp(\ell))$ be operators from $\HmhalfG$ to $H^1(\cunp)$ that are defined by
\begin{equation}\label{eq:Sn3p}
\begin{aligned}
\Sepsevan (\lambda_* + \eps h ,\bp) \phi(\bx)  &:=
\sum_{n\geq3}\frac{\overline{ \langle \phi, v_{n,\eps}(\cdot;\bp(\ell))\rangle} v_{n,\eps}(\by^\eps(\bx);\bp(\ell))}{ \mu_{n,\eps}(\bp(\ell)) - \lambda_*- \eps h}, \\
\Sevan (\lambda_* ,\bp)\phi(\bx)  &:=\sum_{n\geq3}\frac{\overline{ \langle \phi, v_{n}(\cdot;\bp(\ell))\rangle} v_{n}(\bx;\bp(\ell))}{ \mu_{n}(\bp(\ell)) - \lambda_*}.
\end{aligned}
\end{equation} 
There is a constant $\eps_1$, such that the following three statements hold uniformly for $h\in\mathbb C$ that satisfy $|h|<\hrad$.
\begin{enumerate}
\item For each $|\eps|<\eps_1$, $\Sepsevan (\lambda_* + \eps h ,\bp(\ell))$ are continuous in $\ell$ in the operator norm.
\item $\Sepsevan (\lambda_* + \eps h ,\bp(\ell))$ are uniformly bounded in the operator norm over $\ell$ and over $|\eps|<\eps_1$.
\item For each $\ell\neq 0$, $\Sepsevan (\lambda_* + \eps h ,\bp(\ell))$ converge to $\Sevan (\lambda_* ,\bp(\ell))$ in the operator norm as $\eps\to0$.
\end{enumerate}
\end{lemma}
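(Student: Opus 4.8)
\textbf{Proof plan for Lemma~\ref{lem:DTC}.}
The three statements are the standard hypotheses needed to invoke the dominated convergence theorem for the $\ell$-integral defining $\Sepsevan(\lambda_*+\eps h)$, so the plan is to verify each one by a perturbation argument on the fibered operators. The underlying principle throughout is that, away from the Dirac point, the branches $\mu_{n,\eps}(\bp(\ell))$ with $n\ge 3$ are uniformly bounded away from $\lambda_*$: by Lemma~\ref{lem:pmstrip} and Assumption~\ref{lem:assNoFold} along $\bbeta_1$, the only band touching $\lambda_*$ in the unperturbed strip are the first two, and $\lambda_*$ is simultaneously the maximum of the first band and the minimum of the second. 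Hence there is a gap $\delta_0>0$ with $|\mu_{n}(\bp(\ell))-\lambda_*|\ge \delta_0$ for all $n\ge 3$ and all $\ell$; by continuity of the spectrum of $-\Delta^{\eps}(\ell)$ in $\eps$ (using the diffeomorphism $\by^\eps$ to put all these operators on the fixed domain $\cunp$, as in \cite{kato2013perturbation}), the same bound holds with $\delta_0/2$ for $|\eps|<\eps_1$ and $|h|<\hrad$, provided $\eps_1$ is small enough that $\eps_1 \hrad < \delta_0/4$. This uniform resolvent bound is the engine for statements (1) and (2).

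First I would prove statement (2): writing $\Sepsevan(\lambda_*+\eps h,\bp(\ell))$ as the compression of the resolvent $(-\Delta^{\eps}(\ell)-\lambda_*-\eps h)^{-1}$ to the spectral subspace spanned by the higher bands, pre- and post-composed with the trace/extension maps $\HmhalfG\to L^2$ and $L^2\to H^1(\cunp)$ (which are bounded uniformly in $\eps$ because $\by^\eps\to\mathrm{id}$ in $C^1$), the uniform gap gives $\|\Sepsevan(\lambda_*+\eps h,\bp(\ell))\|\le C/\delta_0$ uniformly in $\ell$, in $|\eps|<\eps_1$, and in $|h|<\hrad$. Statement (1), continuity in $\ell$, follows from the analytic (hence continuous) dependence of $\mu_{n,\eps}(\bp(\ell))$ and $v_{n,\eps}(\cdot;\bp(\ell))$ on $\ell$ asserted in \eqref{eq:muv}, combined with the uniform gap, which lets one differentiate/estimate the series term by term and control the tail $n\to\infty$ by comparison with the free resolvent (the tail converges in operator norm uniformly in $\ell$ and $\eps$, since for large $n$ the branches $\mu_{n,\eps}(\bp(\ell))\to\infty$ with $\eps$-uniform lower bounds). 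Statement (3) is the fiberwise convergence: for fixed $\ell\neq 0$, the first two branches are also simple and bounded away from $\lambda_*$ near that $\ell$ (again by Lemma~\ref{lem:pmstrip} and the no-fold assumption), so $-\Delta^{\eps}(\ell)\to -\Delta^{0}(\ell)$ in norm-resolvent sense on $\cunp$ via the diffeomorphism, $\lambda_*+\eps h\to\lambda_*$, and hence the spectral projections onto the higher bands and the compressed resolvents converge in operator norm; one also uses the estimates \eqref{eq:error}-type bounds, i.e. $\|v_{n,\eps}(\cdot;\bp(\ell))-v_n(\cdot;\bp(\ell))\|_{H^1}=O(\eps)$ term by term together with uniform tail control to pass the limit through the series.

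The main obstacle is the tail of the series, $n\to\infty$: the individual numerators $\overline{\langle\phi,v_{n,\eps}\rangle}\,v_{n,\eps}$ and denominators $\mu_{n,\eps}(\bp(\ell))-\lambda_*-\eps h$ must be controlled with constants that are uniform in $\ell$, in $h$, \emph{and} in $\eps$ simultaneously, which requires a Weyl-type lower bound $\mu_{n,\eps}(\bp(\ell))\gtrsim n$ that is $\eps$-uniform. This I would obtain from the min-max characterization together with the fact that the quadratic forms of $-\Delta^\eps(\ell)$ on $\cunp$ (after the diffeomorphism pullback) are uniformly equivalent to that of $-\Delta^0(\ell)$ as $\eps\to 0$, so the eigenvalue counting functions differ by a bounded amount uniformly in $\eps$. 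Once the tail is dominated uniformly, the head (finitely many bands) is handled by the simple perturbation arguments above, and Lemma~\ref{lem:DTC} follows; Lemma~\ref{lem:Sn3diffeo}, and then Lemma~\ref{lem:Sn3} after taking traces on $\Gamma$ (where $\by^\eps=\mathrm{id}$), is then an application of dominated convergence to the $\ell$-integral.
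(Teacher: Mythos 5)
Your overall plan is aligned with the paper's: pull everything back to the fixed reference cell $\cunp$ via the diffeomorphism $\by^\eps$, exploit the uniform spectral gap $|\mu_{n,\eps}(\bp(\ell))-\lambda_*|\ge \delta_0/2$ for $n\ge 3$, and treat $\Sepsevan$ as a (projected) resolvent. Statement (3) via norm-resolvent convergence on $\cunp$ (the paper phrases this through Lemma~\ref{lem:invfact}) is essentially the same argument. But there is a genuine gap in the way you propose to handle the tail for statements (1) and (2), and a related slip in the function spaces.

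The gap: you plan to control the tail by pairing a Weyl-type lower bound $\mu_{n,\eps}(\bp(\ell))\gtrsim n$ with term-by-term estimates of the numerators. This cannot give an absolutely convergent series. With the $L^2$-normalization $\|v_{n,\eps}\|_{L^2(\cunp)}=1$ one has $\|v_{n,\eps}\|_{H^1(\cunp)}^2 = 1+\mu_{n,\eps}\sim n$, and the trace inequality gives only $|\langle\phi,v_{n,\eps}\rangle|\lesssim \|\phi\|_{\HmhalfG}\,\|v_{n,\eps}\|_{H^1}\sim \sqrt{n}\,\|\phi\|$. Thus each summand $\overline{\langle\phi,v_{n,\eps}\rangle}\,v_{n,\eps}/(\mu_{n,\eps}-\lambda_*-\eps h)$ has $H^1$-norm $O(\sqrt{n}\cdot\sqrt{n}/n)=O(1)$, so the series is not dominated term by term: Weyl asymptotics plus the trace theorem are simply insufficient, and the ``head/tail'' split you propose never closes. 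Relatedly, your factorization routes the source through $L^2$, but the density $\phi$ gives rise to $\phi\,\delta(\Gamma)$, which is a genuine distribution living in $H^{-1}$ on the two-dimensional cell, not an $L^2$ function; the intermediate space must be $H^{-1}$, not $L^2$.

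The paper avoids all of this by never estimating the series term by term. It writes $\Sepsevan$ (in the $\HmhalfG\to H^1$ sense) as the composition
\[
\Sepsevan = \big(I-(\lambda_*+\eps h)(-\Delta_\eps(\ell))^{-1}\big)^{-1}\,P_{n\ge3,\eps,\ell}\,(-\Delta_\eps(\ell))^{-1}\cdot\big(\phi\mapsto\phi\,\delta(\Gamma)\big),
\]
where $(-\Delta_\eps(\ell))^{-1}:H^{-1,\eps}(\ell)\to H^{1,\eps}(\ell)$ has norm $\sup_n|{(1+\lambda_{n,\eps})/\lambda_{n,\eps}}|$, which is bounded uniformly in $n$, $\ell$, $\eps$ (this is where the regularity gain $H^{-1}\to H^1$ enters and why no Weyl asymptotics are needed); $P_{n\ge3}$ is an orthogonal projection; and the last inverse, restricted to $P_{n\ge3}H^{1,\eps}(\ell)$, has norm $\sup_{n\ge3}|1-(\lambda_*+\eps h)/\lambda_{n,\eps}|^{-1}$, which is exactly where the gap $\mu_{n,\eps}\ge\lambda_*+\frac{1+\mathfrak d}{2}|t_*/\gamma_*|$ for $n\ge3$ is used. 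Pulling back through $\by^\eps$ only changes norms by $1+O(\eps)$, giving statement (2) with constants uniform in $\ell$, $h$, and $|\eps|<\eps_1$. This is what buys the boundedness: orthogonality and the resolvent's $H^{-1}\to H^1$ smoothing replace any attempt to dominate individual terms. For statement (1), the paper invokes analyticity in $\ell$ via gauge transformations, which again works at the level of whole operators and sidesteps tail control; your term-by-term differentiation plan would face the same non-summability problem.
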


To prepare for the proof of Lemma~\ref{lem:DTC}, we introduce the following functions for each fixed $\ell$:
\begin{equation}\label{eq:singlev}
w^\eps(\bx):=\sum_{n\geq3}\frac{\overline{ \langle \phi, u_{n,\eps}(\cdot;\bp(\ell))\rangle} u_{n,\eps}(\bx;\bp(\ell))}{ \lambda_{n,\eps}(\bp(\ell)) - \lambda_*- \eps h}
\end{equation}
and
\begin{equation}\label{eq:singlevdiff}
\hat{w}^\eps(\bx):=\sum_{n\geq3}\frac{\overline{ \langle \phi, u_{n,\eps}(\cdot;\bp(\ell))\rangle} u_{n,\eps}(\by^\eps(\bx);\bp(\ell))}{ \lambda_{n,\eps}(\bp(\ell)) - \lambda_*- \eps h}.
\end{equation} 
We introduce the following function spaces, whose more detailed properties can be found in~\cite{mclean2000strongly,saranen2001periodic}. 
Let $H^{s}(\ell)$ be the Sobolev space on $\mathbb R^2$ of order $s$ that is quasiperiodic with quasimomenta $\bp(\ell)\cdot\be_i$ in $\be_i$, $i=1,2$.
Let $L^{2,\eps}(\ell)=H^{0,\eps}(\ell)$ be $H^0(\ell)$ functions that are supported on $\mathbb R^2\backslash\cup_{n_1,n_2\in\mathbb Z}(D^\eps+n_1\be_1+n_2\be_2)$, 
$H^{1,\eps}(\ell)$ be $H^1(\ell)$ functions that are supported on $\overline{\mathbb R^2\backslash\cup_{n_1,n_2\in\mathbb Z}(D^\eps+n_1\be_1+n_2\be_2)}$, and $H^{-1,\eps}(\ell)$ distributions on $\mathbb R^2\backslash\cup_{n_1,n_2\in\mathbb Z}(\overline{D^\eps}+n_1\be_1+n_2\be_2)$ who can be extended to distributions in $H^{-1}(\ell)$.
The dual space of $H^{1,\eps}(\ell)$ is $H^{-1,\eps}(\ell)$. 

The pairings on these spaces are defined through their quasiperiodic Fourier expansions. Denote the $L^{2,\eps}(\ell)$-$L^{2,\eps}(\ell)$ pairing by $\langle \cdot, \cdot \rangle_{\eps}$, and the $H^{-1,\eps}(\ell)$-$H^{1,\eps}(\ell)$ pairing by $(\cdot,\cdot)_\eps$.
The innerproduct on $H^{1,\eps}(\ell)$ is given by $\langle \nabla u, \nabla v\rangle_{\eps} + \langle u, v\rangle_{\eps} $.

Define the operators $( - \Delta_\eps(\ell))^{-1}: H^{-1,\eps}(\ell) \to H^{1,\eps}(\ell)$ by 
\begin{equation}
( - \Delta_\eps(\ell))^{-1} f = u \text{ if and only if }  \langle\nabla u, \nabla v \rangle_{\eps} = (f,v)_\eps  \quad \forall v\in H^{1,\eps}(\ell).
\end{equation}
Notice that when restricted on $H^{1,\eps}(\ell)$, the operator $ (- \Delta_\eps(\ell))^{-1}: H^{1,\eps}(\ell) \to H^{1,\eps}(\ell)$ is bounded, selfadjoint, compact and positive. Thus there exists an orthogonal eigensystem of $(- \Delta_\eps(\ell))^{-1}$ that is complete in $H^{1,\eps}(\ell)$. This eigensystem coinsides with  $\lambda_{n,\eps}:=\lambda_{n,\eps}(\bp(\ell))$ and $u_{n,\eps}:=u_{n,\eps}(\bx,\bp(\ell))$ that are defined in Section~\ref{sec:Floquet}. We have
\begin{equation}\label{eq:lapinv}
\begin{aligned}
(- \Delta_\eps(\ell))^{-1} u_{n,\eps}(\ell) = \frac{1}{\lambda_{n,\eps}(\ell)} u_n(\ell), \quad  0<\lambda_{1,n}(\ell) \leq \lambda_{2,\eps}(\ell) \leq \cdots \to \infty, \\
\langle \nabla u_{n,\eps}(\ell), \nabla u_{m,\eps}(\ell)\rangle_{\eps} + \langle u_{n,\eps}(\ell),  u_{m,\eps}(\ell)\rangle_{\eps}= (1+\lambda_{n,\eps}(\ell))\delta_{m,n}.
\end{aligned}
\end{equation}
We have the expansion
\begin{equation}\label{eq:H1basis}
\forall f\in H^{1,\eps}(\ell), \quad f = \sum_{n\geq1} \langle u_{n,\eps}(\ell),f\rangle_{\eps} u_{n,\eps}(\ell), \text{ which converges in }H^{1,\eps}(\ell).
\end{equation}
By the density of $H^{1,\eps}(\ell)$ in $L^2(C^\eps)$ and $H^{-1,\eps}(\ell)$, we also have
\begin{equation}\label{eq:L2basis}
 \forall f\in L^2(C^\eps),\quad  f = \sum_{n\geq1} \langle u_{n,\eps}(\ell),f\rangle_{\eps} u_{n,\eps}(\ell),  \text{ which converges in }L^2(\cpeps),
\end{equation}
and
\begin{equation}\label{eq:Hm1basis}
\forall f\in H^{-1,\eps}(\ell) ,\quad f =  \sum_{n\geq1} \overline{(f, u_{n,\eps}(\ell))_\eps} u_{n,\eps}(\ell),  \text{ which converges in }H^{-1,\eps}(\ell).
\end{equation}
For $A\subset\mathbb Z^{+}$, define the projection operator $P_{n\in A,\eps,\ell}:H^{1,\eps}(\ell) \to H^{1,\eps}(\ell)$ by
\begin{equation}
P_{n\in A,\eps,\ell} f = \sum_{n\in A} \langle u_{n,\eps}(\ell),f\rangle_{\eps} u_{n,\eps}(\ell).
\end{equation}
It is straightforward to verify that the function $w^\eps$ in \eqref{eq:singlev} solves the following problem
\begin{equation}\label{eq:Smeaning}
\begin{aligned}
(I - (\lambda_*+ \eps h)( - \Delta_\eps(\ell))^{-1})  w(\bx) 
&=  ( - \Delta_\eps(\ell))^{-1} \left(  \phi\delta(\Gamma) -  \sum_{n\geq1,2}
\overline{ \langle \phi, u_{n,\eps}(\cdot;\bp(\ell))\rangle} u_{n,\eps}(\bx;\bp(\ell)) \right)\\
&= P_{n\geq3,\eps,\ell}  ( - \Delta_\eps(\ell))^{-1}   \left( \phi\delta(\Gamma) \right).
\end{aligned}
\end{equation}
This can be equivalently represented as
\begin{equation}\label{eq:Ssource}
(-\Delta - (\lambda_*+ \eps h)) w(\bx) 
= \phi\delta(\Gamma) -  \sum_{n\geq1,2}
\overline{ \langle \phi, u_{n,\eps}(\cdot;\bp(\ell))\rangle} u_{n,\eps}(\bx;\bp(\ell)).
\end{equation}
Here we have used $\phi\delta(\Gamma)\in H^{-1,\eps}(\ell)$ and 
\begin{equation}
\overline{(\phi\delta(\Gamma),u_{n,\eps})_\eps} = 
\overline{ \langle \phi, u_{n,\eps}(\cdot;\bp(\ell))\rangle}.
\end{equation}

The diffeomorphism $\by^\eps(\bx)$ converts \eqref{eq:singlev} to \eqref{eq:singlevdiff}. We have the following change of variables formulas
\begin{equation}
\begin{aligned}
\int_{\cpeps} \overline{u}(\by^\eps) v(\by^\eps)\, d\by^\eps &=  \int_{\cunp} \overline{u(\by^\eps(\bx))} v(\by^\eps(\bx)) \left|\left(\frac{ \partial \by^\eps}{\partial \bx}\right)\right|\, d\bx ,\\
\int_{\cpeps} \nabla_{\by^\eps} \overline{u(\by^\eps)} \cdot \nabla_{\by^\eps} v(\by^\eps)\, d\by^\eps &=  \int_{\cunp} \left(\frac{\partial \bx}{ \partial \by^\eps}\right)\nabla_{\bx} \overline{u(\by^\eps(\bx))} \cdot \left(\frac{\partial \bx}{ \partial \by^\eps}\right)\nabla_{\bx} v(\by^\eps(\bx)) \left|\left(\frac{ \partial \by^\eps}{\partial \bx}\right)\right| \, d\bx.
\end{aligned}
\end{equation}
Thus 
\begin{equation}
\| u(\by^\eps(\cdot)) \|_{H^{k,0}(\ell)} = \| u(\cdot) \|_{H^{k,\eps}(\ell)} (1+O(\eps)), \quad k=-1,0,1.
\end{equation}
Define $(- \tilde\Delta_\eps)^{-1}: H^{-1,0}(\ell) \to H^{1,0}(\ell)$ by
\begin{equation}\label{eq:LapDiff}
\begin{aligned}
&( - \tilde\Delta_\eps(\ell))^{-1} f =  u \text{ if and only if }  \\
&\int_{\cunp} \left(\frac{\partial x}{ \partial y^\eps}\right)\nabla_{x} \overline{u(x)} \cdot \left(\frac{\partial x}{ \partial y^\eps}\right)\nabla_{x} v(x) \left|\left(\frac{ \partial y^\eps}{\partial x}\right)\right| \, dx
= \int_{\cunp} \overline{f(x)} v(x) \left|\left(\frac{ \partial y^\eps}{\partial x}\right)\right|\, dx.
\end{aligned}
\end{equation}
Then we have
\begin{equation}
( - \tilde\Delta_\eps(\ell))^{-1} f(\by^\eps(\cdot)) =  u(\by^\eps(\cdot))
\text{ if and only if } ( - \Delta_\eps(\ell))^{-1} f(\cdot) = u(\cdot).
\end{equation}
Combining with \eqref{eq:Ssource}, we obtain that \eqref{eq:singlevdiff} solves
\begin{equation}\label{eq:Ssourcediff}
(-\tilde \Delta_\eps - (\lambda_*+ \eps h)) \hat w(\bx)  
= \phi\delta(\Gamma) -  \sum_{n\geq1,2}
\overline{ \langle \phi, u_{n,\eps}(\cdot;\bp(\ell))\rangle} u_{n,\eps}(\by^\eps(\bx);\bp(\ell)).
\end{equation}

We are ready to prove Lemma~\ref{lem:DTC}.
\begin{proof}[Proof of Lemma~\ref{lem:DTC}]
We know that there are constants $\eps_1$, $\mathfrak c_1,\mathfrak c_2>0$ and a family of constants $\mathfrak c_3(\ell)>0$ depending on $\ell$ such that 
\begin{equation}
\begin{aligned}
&\lambda_{n,\eps}(\ell) >\mathfrak c_1\quad \text{for all } |\eps|<\eps_1, \quad n\geq1, \quad  \ell\in[-\pi,\pi]; \\
&\lambda_{n,\eps}(\ell) >\lambda_*+\frac{1+\mathfrak d}{2}|\frac{t_*}{\gamma_*}|, \quad \text{for all } |\eps|<\eps_1, \quad n\geq3, \quad  \ell\in[-\pi,\pi]; \\
&\left|\lambda_{n,\eps}(\ell) - \lambda  \right| > \mathfrak c_2 \quad \text{for all } |\eps|<\eps_1, 
|\lambda - \lambda_*|<\hrad\eps,\quad n\geq3,\quad \ell\in[-\pi,\pi] ;\\
\text{for each }\ell\neq0,\quad &\left|1- \frac{\lambda}{\lambda_{n,\eps}(\ell)}  \right| > \mathfrak c_3(\ell) \quad \text{for all } |\eps|<\eps_1, |\lambda - \lambda_*|<\hrad\eps, \quad n\geq1.
\end{aligned}
\end{equation}

Statement 1 follows from the analyticity in $\bp(\ell)$ through gauge transformations~\cite{Qiu-23,FlissJoly2016,kato2013perturbation}.

For statement 2, before the diffeomorphism, treated as functions on $\cpeps$, \eqref{eq:Smeaning} gives
\begin{equation}
v = (I - (\lambda_*+ \eps h)( - \Delta_\eps(\ell))^{-1})^{-1} P_{n\geq3,\eps,\ell} ( - \Delta_\eps(\ell))^{-1}  \left( \phi\delta(\Gamma) \right).
\end{equation}
We have 
\begin{equation}
\begin{aligned}
\|\phi\delta(\Gamma)\|_{H^{-1,\eps}(\ell)}&\leq C\|\phi\|_{\HmhalfG},\\ 
\left\| ( - \Delta_\eps(\ell))^{-1} \right\|_{H^{-1,\eps}(\ell) \to H^{1,\eps}(\ell)} 
&=\sup_n \left| \frac{1+\lambda_{n,\eps}(\ell)}{\lambda_{n,\eps}(\ell)} \right| 
\leq \frac{1+\mathfrak c_1}{\mathfrak c_1},\\
\left\| P_{n\geq3,\eps,\ell}  \right\|_{H^{-1,\eps}(\ell) \to H^{1,\eps}(\ell)} &\leq 1,\\
\left\| (I - (\lambda_*+ \eps h)( - \Delta_\eps(\ell))^{-1})^{-1} \right\|_{P_{n\geq3,\eps,\ell} H^{1,\eps}(\ell) \to H^{1,\eps}(\ell)} &=  \sup_{n\geq3} \left|\left(1- \frac{\lambda_*+ \eps h}{\lambda_{n,\eps}(\ell)} \right)^{-1} \right| <\frac{1}{1-(\lambda_*+\hrad)/(\lambda_*+\frac{1+\mathfrak d}{2}|\frac{t_*}{\gamma_*}) }.
\end{aligned}
\end{equation}
Since the operator norms are related by a factor of $(1+O(\eps))$ uniformly when treated as maps between functions on $\cunp$ through the diffeomorphisms, the proof of Statement 2 is complete.

For Statement 3, for each fixed $\ell$, we apply Lemma~\ref{lem:invfact} to
\begin{equation}
\begin{aligned}
A_\eps &= - \tilde \Delta_\eps - \lambda_* - \eps h,\\
A_0 &= - \tilde\Delta_0 - \lambda_*,\\
f_\eps &= \phi\delta(\Gamma) -  \sum_{n\geq1,2}
\overline{ \langle \phi, u_{n,\eps}(\cdot;\bp(\ell))\rangle} u_{n,\eps}(\by^\eps(\bx);\bp(\ell)),\\
f_0 &= \phi\delta(\Gamma) -  \sum_{n\geq1,2}
\overline{ \langle \phi, u_{n,\eps}(\cdot;\bp(\ell))\rangle} u_{n,\eps}(\bx;\bp(\ell)).
\end{aligned}
\end{equation}
The forms defined in \eqref{eq:LapDiff} are analytic in $\eps$. Thus by~\cite{kato2013perturbation}, each pair of eigenvalue and eigeneigenfunction $\lambda_{n,\eps}$, $u_{n,\eps}$ converges to $\lambda_n$, $u_n$ in $\mathbb R \times H^1(\cunp)$ at a rate of $O(\eps)$. Thus
it is straight forward to verify that when $\eps_1$ is sufficiently small, $|\eps|<\eps_1$, $|h|<\hrad$, for each $\ell\neq0$, there exists a constant $C(\ell)$, such that
\begin{equation}
\begin{aligned}
&\| A_0^{-1}\|_{H^{-1,0}(\ell) \to H^{1,0}(\ell)}, \| A_\eps^{-1}\|_{H^{-1,0}(\ell) \to H^{1,0}(\ell)} \leq C(\ell),\\
&\|f_0\|_{H^{-1,0}(\ell)}, \|f_\eps\|_{H^{-1,0}(\ell)} \leq C(\ell)\|\phi\|_{\HmhalfG},\\
&\|f_\eps - f_0\|_{H^{-1,0}(\ell)} \leq  \eps C(\ell)\|\phi\|_{\HmhalfG},\\
&\| A_\eps - A_0\|_{H^{1,0}(\ell)\to H^{-1,0}(\ell)} \to 0.
\end{aligned}
\end{equation}
Thus we obtain Statement 3.
\end{proof}

Since we showed convergence in $H^1$, taking the normal derivatives and using the jump relations, we obtain the following corollary. 
\begin{coro}\label{lem:Kstarn3}
The following convergence holds in the operator norm from $\HmhalfG$ to $\HmhalfG$
uniformly for $h\in\mathbb C$ that satisfy $|h|<\hrad $, 
as $\eps\to 0$: 
\begin{equation}
\begin{aligned}
\sum_{n\geq3}\frac{1}{2\pi}\int_{[-\pi,\pi ]} 
\frac{\overline{ \langle v_{n,\pm\eps}(\cdot;\bp(\ell)),\psi\rangle} \partial_n v_{n,\pm\eps}(\bx;\bp(\ell))}{ \mu_{n,\pm\eps}(\bp(\ell)) - \lambda_*- \eps h} \, \dpt
\to
\sum_{n\geq3}\frac{1}{2\pi}\int_{[-\pi,\pi ]}  \frac{\overline{ \langle \phi, v_{n}(\cdot;\bp(\ell))\rangle} \partial_n v_{n}(\bx;\bp(\ell))}{ \mu_{n}(\bp(\ell)) - \lambda_*} \, \dpt. 
\end{aligned}
\end{equation}
\end{coro}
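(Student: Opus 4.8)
The plan is to obtain Corollary~\ref{lem:Kstarn3} from the $H^1$-convergence of the higher-band single-layer operators in Lemma~\ref{lem:Sn3} (equivalently Lemma~\ref{lem:Sn3diffeo}) by taking normal-derivative traces on $\Gamma$ — exactly the route announced in the text. The only real issue is that $H^1$-convergence alone does not control a normal-derivative trace, so the first task is to upgrade it to convergence in $H^1(\Delta,\cO):=\{u\in H^1(\cO):\Delta u\in L^2(\cO)\}$ on a fixed neighborhood $\cO$ of $\Gamma$, after which the bounded (two-sided) normal-derivative trace map $H^1(\Delta,\cO)\to\HmhalfG$ finishes the argument. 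Throughout we work with $|\eps|$ small, since by Lemma~\ref{lem:Sn3} the limit does not depend on $\sgn(\eps)$, and with $|h|<\hrad$.

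Fix $\phi\in\HmhalfG$ and write $w^{\pm\eps}:=\Sepsevan(\lambda_*+\eps h)\phi$ and $w^0:=\Sevan(\lambda_*)\phi$ as in Lemma~\ref{lem:Sn3diffeo}; since the diffeomorphisms $\by^{\pm\eps}$ are the identity near $\Gamma$, on $\cO$ these are the honest spectral sums. Viewing $\Sepsevan$ as the single-layer potential over $\Gamma$ with the evanescent part of the strip Green function (the $n\ge 3$ part of \eqref{eq:pGreenpmeps}), the eigen-relations for the band modes give, exactly as in \eqref{eq:Smeaning}--\eqref{eq:Ssource},
\[
(-\Delta-\lambda_*-\eps h)\,w^{\pm\eps}=\phi\,\delta(\Gamma)-\sum_{n=1,2}\frac{1}{2\pi}\int_{[-\pi,\pi]}\overline{\langle\phi,v_{n,\pm\eps}(\cdot;\bp(\ell))\rangle_\Gamma}\,v_{n,\pm\eps}(\bx;\bp(\ell))\,\dpt\quad\text{in }\cO,
\]
and likewise for $w^0$ with $v_{n}$ and $\lambda_*$ in place of $v_{n,\pm\eps}$ and $\lambda_*+\eps h$. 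Subtracting, the singular term $\phi\,\delta(\Gamma)$ cancels, leaving
\[
(-\Delta-\lambda_*)(w^{\pm\eps}-w^0)=\eps h\,w^{\pm\eps}+\sum_{n=1,2}\frac{1}{2\pi}\int_{[-\pi,\pi]}\Big(\overline{\langle\phi,v_{n}\rangle_\Gamma}v_{n}-\overline{\langle\phi,v_{n,\pm\eps}\rangle_\Gamma}v_{n,\pm\eps}\Big)(\bx;\bp(\ell))\,\dpt\quad\text{in }\cO .
\]
The right-hand side lies in $L^2(\cO)$ with norm $o(1)\|\phi\|_{\HmhalfG}$ uniformly for $|h|<\hrad$: the first term is bounded by $|\eps h|$ times the uniform bound on $\Sepsevan$ from Lemma~\ref{lem:DTC}, and in the second the $\ell$-dependent phase factors cancel in each product $\overline{\langle\phi,\cdot\rangle}\cdot$, so Remark~\ref{lem:phase} (the $O(\eps)$ convergence $v_{n,\pm\eps}\to v_n$ up to phase, uniform in $\ell$) and the boundedness of the trace on $\Gamma$ control it.

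Combining this $L^2(\cO)$ estimate with $\|w^{\pm\eps}-w^0\|_{H^1(\cunp)}=o(1)\|\phi\|_{\HmhalfG}$ from Lemma~\ref{lem:Sn3diffeo} yields $\|w^{\pm\eps}-w^0\|_{H^1(\Delta,\cO)}=o(1)\|\phi\|_{\HmhalfG}$, uniformly in $h$. Because $\Delta(w^{\pm\eps}-w^0)\in L^2(\cO)$ with no interface distribution, the function $w^{\pm\eps}-w^0$ has a well-defined normal-derivative trace on $\Gamma$ (its one-sided limits coincide), and this trace is bounded by the $H^1(\Delta,\cO)$-norm; hence $\partial_n w^{\pm\eps}|_\Gamma\to\partial_n w^0|_\Gamma$ in $\HmhalfG$, uniformly for $|h|<\hrad$. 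Finally, by the spectral representation \eqref{eq:Kstareps} together with the jump relations \eqref{eq:jumpeps}, the $n\ge3$ part of $\cK^{*,\pm\eps}(\lambda_*+\eps h)$ is precisely the (averaged two-sided) normal-derivative trace of $w^{\pm\eps}$ on $\Gamma$, and by \eqref{eq:tildeKstar0} the $n\ge3$ part of $\tilde\cK^{*,0}(\lambda_*)$ is that of $w^0$; this gives the operator-norm convergence claimed in \eqref{eq:Kstarn3}, for both signs of $\eps$.

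The main obstacle is precisely the upgrade from $H^1$- to $H^1(\Delta,\cO)$-convergence, and this rests entirely on the exact cancellation of the Dirac source $\phi\,\delta(\Gamma)$ in the difference $w^{\pm\eps}-w^0$ — which in turn uses the identities \eqref{eq:Smeaning}--\eqref{eq:Ssource} for the higher-band sums and the fact that, near $\Gamma$, the pulled-back Laplacian $\tilde\Delta_{\pm\eps}$ of Appendix~\ref{sec:oplim} coincides with the ordinary $\Delta$. Once that is in place, the residual estimates (the $L^2$ bound on the first-two-band terms, the bookkeeping of phase factors, and the boundedness of the trace maps) are routine, and the convergence of the remaining limiting operators in Proposition~\ref{lem:oplim} is obtained by the same argument, as indicated after \eqref{eq:Sv}.
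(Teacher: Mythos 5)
Your argument is correct, and it is essentially the route the paper intends: the paper's own justification for this corollary is a single sentence, ``Since we showed convergence in $H^1$, taking the normal derivatives and using the jump relations,'' which, as you rightly observe, glosses over the fact that $H^1$-convergence of $w^{\pm\eps}-w^0$ alone does not control a normal-derivative trace. Your fix --- subtracting the two PDEs \eqref{eq:Ssource} so that the surface Dirac source $\phi\,\delta(\Gamma)$ cancels, leaving $(-\Delta-\lambda_*)(w^{\pm\eps}-w^0)=\eps h\,w^{\pm\eps}+(\text{finite-rank }n=1,2\text{ difference})\in L^2$ with $o(1)\|\phi\|_{\HmhalfG}$ norm (using Lemma~\ref{lem:DTC} for the first term and the phase-cancelling $O(\eps)$ convergence of the rank-one projectors for the second), hence $H^1(\Delta,\cO)$-convergence and then the bounded normal-derivative trace --- is precisely the missing ingredient, and it mirrors the paper's more elaborate treatment of the double-layer case in Section~\ref{sec:Dn3}, where the source is similarly reduced to an $L^2$ function. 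One minor caveat: your closing remark that Lemma~\ref{lem:Dn3} for $\cK$ and $\cN$ follows ``by the same argument'' is too quick; there the potential itself jumps across $\Gamma$ and the source is $(\partial_nE\psi)\delta(\Gamma)$, which is why the paper introduces the extension $\chi_{\cR^\eps}E\psi$ and the $\tilde u+w$ splitting rather than a direct subtraction. That does not affect the correctness of your proof of the corollary under discussion.
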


\subsection{Double layer potential and the convergence of higher bands}\label{sec:Dn3}

Let $\psi\in\HhalfG$. 
Denote the part of $\cpeps$ to the left of $\Gamma$ by $\cL^\eps$ and that to the right of $\Gamma$ by $\cR^\eps$. Using \eqref{eq:lapinv}, we know that when $\eps\neq0$ or $\ell\neq0$, $ \lambda_*+ \eps h \neq \lambda_{n,\eps}(\ell)$ for all $n$,
\begin{equation}
v(\bx):= \sum_{n\geq3}
\frac{\langle \partial_n u_{n,\eps}(\cdot;\bp(\ell)), \psi\rangle u_{n,\eps}(\bx;\bp(\ell))}{ \lambda_{n,\eps}(\bp(\ell)) - \lambda_*- \eps h}
\end{equation} 
is the unique function that satisfies
\begin{equation}
\begin{aligned}
&[v]=\psi, \quad [\partial_n v]=0,\\
&(-\Delta -\lambda_*-\eps h) v = - \sum_{n=1,2}
\langle \partial_n u_{n,\eps}(\cdot;\bp(\ell)), \psi\rangle u_{n,\eps}(\bx;\bp(\ell))\quad\bx\in\cL^\eps \cup \cR^\eps. 
\end{aligned}
\end{equation}
Here $[\cdot]$ represents the jump of the quantity across $\Gamma$. The uniqueness follows from that the difference between two such functions is a Floquet mode of quasimomentum $\bp(\ell)$ and energy $\lambda_*+\eps h$ on $\cpeps$.

We next decompose $v$ as the sum of two functions $\tilde{u}$ and $w$ which are introduced below. Let $E:\HhalfG \to H^{1,\eps}(\ell)$ be an extension operator which is a right inverse of the trace operator, for which functions in its image are supported in a neighborhood $\cO_1$ of $\Gamma$ that does not intersect $\cO$. 
Thus this extension operator stays the same for all $\ell$. Define 
\begin{equation}
u:= \chi_{\cR^\eps} E\psi,
\end{equation}
where $\chi_{\cR^\eps}$ is the characteristic function of $\cR^\eps$. 
Then $u$ satisfies
\begin{equation}
\begin{aligned}
&[u]=\psi, \quad [\partial_n u]= - \partial_n E\psi,\\
&(-\Delta - \lambda_*-\eps h) u = \chi_{\cR^\eps}( -\Delta - \lambda_*-\eps h) E\psi \quad\bx\in\cL^\eps \cup \cR^\eps. 
\end{aligned}
\end{equation}
We have that there exists a constant $C$, such that for all $\eps$, $|h|<\hrad$, and $\ell$,
\begin{equation}
\|u\|_{H^1(\cR^\eps)}, \|u\|_{H^1(\cL^\eps)}\leq C\|\psi\|_{\HhalfG}, \quad
\|(-\Delta - \lambda_*-\eps h) u\|_{H^{-1}(\cL^\eps \cup \cR^\eps)}\leq C\|\psi\|_{\HhalfG}.
\end{equation}

Next we shift $u$ along $u_{n,\eps}(\ell)$, $n=1,2$, to make sure the source for the rest of $v$ does not have components in these two directions. Define 
\begin{equation}
\tilde u:= \chi_{\cR^\eps} E\psi - \sum_{n=1,2} \langle u_{n,\eps}(\ell), \chi_{\cR^\eps} E\psi \rangle_\eps u_{n,\eps}(\ell)
\end{equation}
Then $\tilde u$ satisfies
\begin{equation}
\begin{aligned}
&[\tilde u]=\psi, \quad [\partial_n \tilde u]= - \partial_n E\psi,\\
&(-\Delta - \lambda_*-\eps h) \tilde u = \chi_{\cR^\eps}( -\Delta - \lambda_*-\eps h) E\psi 
- \sum_{n=1,2}(\lambda_{n,\eps} - \lambda_*-\eps h) \langle u_{n,\eps}(\ell), \chi_{\cR^\eps} E\psi \rangle_\eps u_{n,\eps}(\ell), \quad\bx\in\cL^\eps \cup \cR^\eps.  
\end{aligned}
\end{equation}

Let $w\in H^{1,\eps}(\ell)$ be the solution to
\begin{equation}
\begin{aligned}
 [w]=0, \quad& [\partial_n w]=  \partial_n E\psi,\\
(-\Delta - \lambda_*-\eps h) w =& - \sum_{n=1,2}
\overline{\langle \partial_n E\psi,u_{n,\eps}(\cdot;\bp(\ell))\rangle} u_{n,\eps}(\bx;\bp(\ell)) - \chi_{\cR^\eps}( -\Delta - \lambda_*-\eps h) E\psi \\
&+ \sum_{n=1,2}(\lambda_{n,\eps} - \lambda_*-\eps h) \langle u_{n,\eps}(\ell), \chi_{\cR^\eps} E\psi \rangle_\eps u_{n,\eps}(\ell), 
\quad\bx\in\cL^\eps \cup \cR^\eps. 
\end{aligned}
\end{equation}
Then $w\in H^{1,\eps}(\ell)$ is the solution to the PDE below on the entire $\cpeps$
\begin{equation}\label{eq:Dn3p}
\begin{aligned}
(-\Delta - \lambda_*-\eps h) w = (\partial_n E\psi)\delta(\Gamma)
- \sum_{n=1,2}
\overline{\langle \partial_n E\psi,u_{n,\eps}(\cdot;\bp(\ell))\rangle} u_{n,\eps}(\bx;\bp(\ell)) - \chi_{\cR^\eps}( -\Delta - \lambda_*-\eps h) E\psi\\
+ \sum_{n=1,2}(\lambda_{n,\eps} - \lambda_*-\eps h) \langle u_{n,\eps}(\ell), \chi_{\cR^\eps} E\psi \rangle_\eps u_{n,\eps}(\ell), \quad\bx\in\cpeps. 
\end{aligned}
\end{equation}
It can be checked that the right hand side of \eqref{eq:Dn3p} is orthogonal to $u_{n,\eps}(\ell)$, $n=1,2$, 
thus $w$ can be treated using the procedure shown in Section~\ref{sec:Sn3}.

Integration by parts gives
\begin{equation}
\overline{\langle \partial_n E\psi,u_{n,\eps}(\cdot;\bp(\ell))\rangle}
- \langle \partial_n u_{n,\eps}(\cdot;\bp(\ell)), \psi\rangle =
\langle u_{n,\eps}, (-\Delta-\lambda_*-\eps h)\chi_{\cR^\eps}E\psi \rangle
-(\lambda_{n,\eps}-\lambda_*-\eps h)\langle u_{n,\eps}, \chi_{\cR^\eps}E\psi \rangle
\end{equation}
Thus we see
\begin{equation*}
v= \tilde u+w.
\end{equation*}
Treating $v - \chi_{\cR^\eps}E\psi$ using the procedures in Section~\ref{sec:Sn3}, we conclude the convergence of 
\begin{equation}
\frac{1}{2\pi} \sum_{n\geq3}\int_{-\pi,\pi}
\frac{\langle \partial_n u_{n,\eps}(\cdot;\bp(\ell)), \psi\rangle u_{n,\eps}(\bx;\bp(\ell))}{ \lambda_{n,\eps}(\bp(\ell)) - \lambda_*- \eps h} \, \dpt \to 
\frac{1}{2\pi} \sum_{n\geq3}\int_{-\pi,\pi}
\frac{\langle \partial_n u_{n}(\cdot;\bp(\ell)), \psi\rangle u_{n}(\bx;\bp(\ell))}{ \lambda_{n}(\bp(\ell)) - \lambda_*} \, \dpt 
\end{equation} 
in $H^1$ on the right of $\Gamma$ in the proper sense through diffeomorphism. Taking the traces and the normal derivatives, we obtain the following convergences.
\begin{lemma}\label{lem:Dn3}
The following convergences hold in the operator norm from $\HhalfG$ to $\HhalfG$ and from $\HhalfG$ to $\HmhalfG$ respectively
uniformly for $h\in\mathbb C$ that satisfy $|h|<\hrad $, 
as $\eps\to 0$: 
\begin{equation}\label{eq:Kn3}
\begin{aligned}
\sum_{n\geq3}\frac{1}{2\pi}\int_{[-\pi,\pi ]} 
\frac{\langle \partial_n v_{n,\pm\eps}(\cdot;\bp(\ell),\psi)\rangle v_{n,\pm\eps}(\bx;\bp(\ell))}{ \mu_{n,\pm\eps}(\bp(\ell)) - \lambda_*- \eps h} \, \dpt
\to
\sum_{n\geq3}\frac{1}{2\pi}\int_{[-\pi,\pi ]}  \frac{ \langle \partial_n  v_{n}(\cdot;\bp(\ell)) ,\psi\rangle v_{n}(\bx;\bp(\ell))}{ \mu_{n}(\bp(\ell)) - \lambda_*} \, \dpt, 
\end{aligned}
\end{equation}
\begin{equation}\label{eq:Nn3}
\begin{aligned}
\partial_n\sum_{n\geq3}\frac{1}{2\pi}\int_{[-\pi,\pi ]} 
\frac{\langle \partial_n v_{n,\pm\eps}(\cdot;\bp(\ell),\psi)\rangle v_{n,\pm\eps}(\bx;\bp(\ell))}{ \mu_{n,\pm\eps}(\bp(\ell)) - \lambda_*- \eps h} \, \dpt
\to
\partial_n\sum_{n\geq3}\frac{1}{2\pi}\int_{[-\pi,\pi ]}  \frac{ \langle \partial_n  v_{n}(\cdot;\bp(\ell)) ,\psi\rangle v_{n}(\bx;\bp(\ell))}{ \mu_{n}(\bp(\ell)) - \lambda_*} \, \dpt. 
\end{aligned}
\end{equation}
\end{lemma}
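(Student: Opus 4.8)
\textbf{Proof proposal for Lemma~\ref{lem:Dn3}.}
The plan is to reduce the higher-band double-layer sum to the trace of a PDE solution and then recycle the $H^1$-convergence machinery of Section~\ref{sec:Sn3}. Fix $\psi\in\HhalfG$ and $\ell$. Using the spectral expansion \eqref{eq:lapinv}, the function
$v(\bx):=\sum_{n\geq3}\frac{\langle\partial_n u_{n,\eps}(\cdot;\bp(\ell)),\psi\rangle u_{n,\eps}(\bx;\bp(\ell))}{\lambda_{n,\eps}(\bp(\ell))-\lambda_*-\eps h}$
is characterized as the unique $\bp(\ell)$-quasiperiodic solution on $\cpeps$ with jumps $[v]=\psi$, $[\partial_n v]=0$ and source $-\sum_{n=1,2}\langle\partial_n u_{n,\eps},\psi\rangle u_{n,\eps}$ across $\Gamma$, exactly as set up in Section~\ref{sec:Dn3}. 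First I would introduce the $\eps$- and $\ell$-independent extension operator $E:\HhalfG\to H^{1,\eps}(\ell)$ supported in a neighborhood $\cO_1$ of $\Gamma$ disjoint from the inclusion neighborhood $\cO$, and decompose $v=\tilde u+w$ with $\tilde u:=\chi_{\cR^\eps}E\psi-\sum_{n=1,2}\langle u_{n,\eps}(\ell),\chi_{\cR^\eps}E\psi\rangle_\eps u_{n,\eps}(\ell)$, so that $w\in H^{1,\eps}(\ell)$ solves the PDE \eqref{eq:Dn3p} whose right-hand side is, after the integration-by-parts identity for $\overline{\langle\partial_n E\psi,u_{n,\eps}\rangle}-\langle\partial_n u_{n,\eps},\psi\rangle$, orthogonal to $u_{1,\eps}(\ell)$ and $u_{2,\eps}(\ell)$.

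With $w$ thus isolated, the next step is to apply verbatim the argument proving Lemma~\ref{lem:Sn3}: map $\cpeps$ to the fixed domain $\cunp$ by the diffeomorphism $\by^\eps(\bx)$ (which is the identity on $\cO\cup(\text{nbhd of }\Gamma)$, so $\chi_{\cR^\eps}E\psi$ and the $\Gamma$-trace are unaffected), rewrite the problem on $\cunp$ with the analytic-in-$\eps$ forms $(-\tilde\Delta_\eps(\ell))^{-1}$, and invoke the dominated convergence theorem in $\ell$ using the bounds of Lemma~\ref{lem:DTC} (uniform boundedness in $\ell$ and over small $|\eps|$, pointwise convergence for $\ell\ne0$, continuity in $\ell$), with the source $\phi\delta(\Gamma)$ there replaced by the right-hand side of \eqref{eq:Dn3p}. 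This yields $\|w^\eps(\by^\eps(\cdot))-w^0(\cdot)\|_{H^1(\cunp)}\to0$, uniformly for $|h|<\hrad$; adding back the harmless fixed term $\chi_{\cR^\eps}E\psi$ gives the $H^1$-convergence of $v$ on the right of $\Gamma$ (and similarly on the left), and hence of the $n\ge 3$ sum. Finally, taking the Dirichlet trace on $\Gamma$ and using the continuity of the single/double layer traces gives \eqref{eq:Kn3}; taking the normal derivative and using the jump relations \eqref{eq:jumpeps} gives \eqref{eq:Nn3}. The replacement of $u_{n,\pm\eps},\lambda_{n,\pm\eps}$ by the analytic branches $v_{n,\pm\eps},\mu_{n,\pm\eps}$ is the same reordering used in Lemma~\ref{lem:Sn3}, since the sum over $n\ge3$ is independent of the relabeling.

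The main obstacle I anticipate is the uniform-in-$\eps$ control of the resolvent on the correct subspace: the $\delta$-type source $(\partial_n E\psi)\delta(\Gamma)$ lives in $H^{-1,\eps}(\ell)$ and the operator $(I-(\lambda_*+\eps h)(-\Delta_\eps(\ell))^{-1})^{-1}$ must be bounded \emph{uniformly} in $\eps$ and $\ell$ when restricted to the span of $\{u_{n,\eps}(\ell):n\ge3\}$. This requires the quantitative spectral gap $\lambda_{n,\eps}(\ell)>\lambda_*+\tfrac{1+\mathfrak d}{2}|t_*/\gamma_*|$ for $n\ge3$ from Lemma~\ref{lem:pmstrip} together with Assumption~\ref{lem:assNoFold} along $\bbeta_1$, so that $|1-(\lambda_*+\eps h)/\lambda_{n,\eps}(\ell)|$ is bounded below; verifying that the orthogonality to $u_{1,\eps},u_{2,\eps}$ in \eqref{eq:Dn3p} is preserved under the diffeomorphism (up to $1+O(\eps)$ factors) and that $E\psi$ transforms correctly is the routine but delicate bookkeeping. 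Everything else—the extension operator estimates, the integration-by-parts identity, and the dominated convergence—follows the pattern already established for the single-layer potential.
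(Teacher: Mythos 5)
Your proposal follows the paper's own argument essentially verbatim: characterize the higher-band double-layer sum as the unique quasiperiodic solution with prescribed jumps, decompose it as $\tilde u + w$ using the fixed extension operator $E$ supported near $\Gamma$, observe that $w$ solves \eqref{eq:Dn3p} with source orthogonal to $u_{1,\eps},u_{2,\eps}$, and then feed $w$ into the diffeomorphism/dominated-convergence machinery of Section~\ref{sec:Sn3} (Lemmas~\ref{lem:Sn3diffeo} and~\ref{lem:DTC}) before taking traces and normal derivatives. The obstacle you flag—uniform resolvent bounds on the $n\geq3$ subspace via the spectral gap—is exactly the content of Lemma~\ref{lem:DTC}, so there is no gap.
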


\section{Proof of Propositions~\ref{lem:chadisp}, and Theorems~\ref{lem:ratzig}, \ref{lem:ratarm} and \ref{lem:dispersionarm}.}
\label{sec:disprat}
\begin{proof}[Proof of Proposition~\ref{lem:chadisp}]
The proof is similar to that of Proposition~\ref{lem:oplim}. The leading order term in the limit comes from the integrals close to the Dirac points. We only display this calculation below.

Using \eqref{eq:energymu}, \eqref{eq:pmodesLmu} and \eqref{eq:mmodesLmu},
\begin{equation}\label{eq:nearDiracdisp}
\begin{aligned}
e^{\im \eps\zeta\bbeta_2\cdot\bx}\sum_{n=1,2}\frac{1}{2\pi}&\int_{[-\eps^{1/3},\eps^{1/3} ]} 
\frac{ u_{n,\pm\eps}(\bx;\bp(\ell,\eps\zeta))\overline{u_{n,\pm\eps}(\by;\bp(\ell,\eps\zeta))}}{ \lambda_{n,\pm\eps}(\bp(\ell,\eps\zeta)) - (\lambda_*+\eps h)} \, \dpt \,e^{-\im \eps\zeta\bbeta_2\cdot\by}
\\
\to&
F^{\pm}_1(h,\zeta)w_1(\bx)\overline{w_1(\by)}
+ F^{\pm}_2(h,\zeta) w_2(\bx)\overline{w_2(\by)}
+ F^{\pm}_3(h,\zeta) w_2(\bx)\overline{w_1(\by)}\\
&+F^{\pm}_4(h,\zeta) w_1(\bx)\overline{w_2(\by)}
+o(1)\|\phi\|_{\HhalfG}.
\end{aligned}
\end{equation}
Here 
\begin{equation}\label{eq:dispcoeffp}
\begin{aligned}
F^{+}_1(h,\zeta)&= \frac{1}{2\pi}\int_{\mathbb R}\frac{ 1- |L(\eps,\ell,\eps\zeta)|^2 }{ 1+ |L(\eps,\ell,\eps\zeta)|^2 } \cdot
\frac{J(\eps,\ell,\eps\zeta)}{-(J(\eps,\ell,\eps\zeta))^2 +\eps^2 h^2} 
+ \frac{-\eps h}{-(J(\eps,\ell,\eps\zeta))^2 + \eps^2h^2} \, d\ell,\\
F^{+}_2(h,\zeta)&= \frac{1}{2\pi}\int_{\mathbb R} - \frac{ 1- |L(\eps,\ell,\eps\zeta)|^2 }{ 1+ |L(\eps,\ell,\eps\zeta)|^2 }\cdot
\frac{J(\eps,\ell,\eps\zeta)}{-(J(\eps,\ell,\eps\zeta))^2 + \eps^2h^2} 
+ \frac{-\eps h}{-(J(\eps,\ell,\eps\zeta))^2 + \eps^2h^2} \, d\ell,\\
F^{+}_3(h,\zeta)&=  \frac{1}{2\pi}\int_{\mathbb R}\frac{ L(\eps,\ell,\eps\zeta)}{ 1+ |L(\eps,\ell,\eps\zeta)|^2 }\cdot 
\frac{2J(\eps,\ell,\eps\zeta)}{-(J(\eps,\ell,\eps\zeta))^2 + \eps^2h^2}  \, d\ell,\\
F^{+}_4(h,\zeta)&=\overline{F^{+}_3(h,\zeta)},
\end{aligned}
\end{equation}
\begin{equation}\label{eq:dispcoeffm}
\begin{aligned}
F^{-}_1(h,\zeta)= F^{+}_2(h,\zeta), \quad
F^{-}_2(h,\zeta)= F^{+}_1(h,\zeta), \quad
F^{-}_3(h,\zeta)= F^{+}_3(h,\zeta),\quad
F^{-}_4(h,\zeta)= F^{+}_4(h,\zeta), 
\end{aligned}
\end{equation}
$L(\eps,\ell,\eps\zeta)$ is defined in \eqref{eq:Lmu} and
\begin{equation}\label{eq:dispJ}
J(\eps,\ell,\mu)=\frac{1}{|\gamma_*|}\sqrt{\eps^2t_*^2+ |\ell+\mu\bar\tau|^2|\theta_*|^2}.
\end{equation}
Note it turns out that the functions $F^{\pm}_i$ are independent of $\eps>0$. This is because
\begin{equation}
\begin{aligned}
\frac{L(\eps,\ell,\eps\zeta)}{ 1+ |L(\eps,\ell,\eps\zeta)|^2 } &= \frac{\theta_*(\ell+\eps\zeta\overline{\tau})}{2\gamma_*(\eps,\ell,\eps\zeta)}, \\
\frac{ 1- |L(\eps,\ell,\eps\zeta)|^2 }{ 1+ |L(\eps,\ell,\eps\zeta)|^2 } &= \frac{\eps t_*}{\gamma_*J(\eps,\ell,\eps\zeta)}.
\end{aligned}
\end{equation}

Observing
\begin{equation}
|\ell+\eps\zeta\tau|^2 = (\ell-\frac{1}{2}\eps\zeta)^2 + \frac{3}{4}(\eps\zeta)^2,
\end{equation}
We set $\tilde\ell = \ell-\frac{1}{2}\eps\zeta$.
Using
\begin{equation}
\frac{1}{\pi}\int_{\mathbb R}\frac{1}{ \frac{1}{(\gamma_*)^2}(t_*^2+ \frac{3}{4}|\theta_*|^2\zeta^2 + |\theta_*|^2 \tilde\ell^2) - h^2} d\tilde\ell = \left|\frac{\gamma_*}{\theta_*}\right|\frac{1}{\sqrt{(\frac{t_*}{\gamma_*})^2+ \frac{3}{4}|\frac{\theta_*}{\gamma_*}|^2\zeta^2-h^2}},
\end{equation}
we obtain
\begin{equation}
F^{+}_1(h,\zeta) = - \xi(h,\zeta) + \beta(h,\zeta), \quad F^{+}_2(h,\zeta) = \xi(h,\zeta) + \beta(h,\zeta),  \quad F^{+}_3(h,\zeta) = \frac{\theta_*}{|\theta_*|}\sigma(h,\zeta),  \quad F^{+}_4(h,\zeta) = \frac{\overline{\theta_*}}{|\theta_*|}\overline{\sigma(h,\zeta)}.
\end{equation}
Here $\xi(h,\zeta)$, $\beta(h,\zeta)$ and $\sigma(h,\zeta)$ are defined in \eqref{eq:betaxisig}. A coincidence is that $\sigma(h,\zeta)\in\mathbb C$, which implies $\overline{\sigma(h,\zeta)} = -\sigma(h,\zeta)$.

Finally using
\begin{equation}
\begin{aligned}
w_1\bar w_1 &= \frac{1}{2}( v_1\bar v_1 + v_2\bar v_2 + v_2\bar v_1 + v_1\bar v_2 ),\\
w_2\bar w_2 &= \frac{1}{2}( v_1\bar v_1 + v_2\bar v_2 - v_2\bar v_1 - v_1\bar v_2 ),\\
w_2\bar w_1 &= \frac{1}{2}\frac{\overline{\theta_*}}{|\theta_*|}( - v_1\bar v_1 + v_2\bar v_2 + v_2\bar v_1 - v_1\bar v_2 ),\\
w_1\bar w_2 &= \frac{1}{2}\frac{\theta_*}{|\theta_*|}( - v_1\bar v_1 + v_2\bar v_2 - v_2\bar v_1 + v_1\bar v_2 ),
\end{aligned}
\end{equation}
we finish the proof.

\end{proof}

\begin{proof}[Proof of Theorem~\ref{lem:ratzig}]
Along the rational edge, we need to consider the quasimomenta 
\begin{equation}
\ell \bbeta_1^r + \mu \bbeta_2^r = \ell(b \bbeta_1 - a\bbeta_2) + \mu(-d\bbeta_1 + c \bbeta_2)
\end{equation}
The leading order term in the counter part of matrix \eqref{eq:dispM} is
\begin{equation}
\left(\begin{matrix}
t_*\eps + \gamma_* \lambda^{(1)}& (\ell+\mu\tau)\overline{\theta_*}\\\
(\ell+\mu\bar\tau)\theta_*&-t_*\eps + \gamma_* \lambda^{(1)}
\end{matrix}\right).
\end{equation}
Thus in the counter parts of $L$ defined in \eqref{eq:Lmu} and $J$ defined in \eqref{eq:dispJ},  we have the replacement 
\begin{equation}
\ell+\mu\bar\tau \to \ell(b-a\bar\tau)+\mu( -d +c\bar\tau) =: A\ell+B\mu,
\end{equation}
where $A:=b-a\bar\tau$ and $B:=  -d +c\bar\tau$.
Since 
\begin{equation}
|A\ell+B\mu|^2 =|A|^2(\ell + \frac{\text{Re}(A\bar B)}{|A|^2}\mu)^2 +\left(|B|^2 - |A|^2\left(\frac{\text{Re}(A\bar B)}{|A|^2}\right)^2\right)\mu^2,
\end{equation} 
we rewrite
\begin{equation}
A\ell+B\mu = A(\ell + \frac{\text{Re}(A\bar B)}{|A|^2}\mu) +\mathfrak f^r\mu,
\end{equation} 
where $\mathfrak f^r$ is defined in \eqref{eq:fr}. 
So we set $\tilde\ell: = \ell + \frac{\text{Re}(A\bar B)}{|A|^2}\eps\zeta$ when $\mu=\eps\zeta$. 
It can be shown that $|\mathfrak f^r|^2 = |B|^2 - |A|^2\left(\frac{\text{Re}(A\bar B)}{|A|^2}\right)^2$.
Define
\begin{equation}
C^r(h,\zeta):=\frac{1}{\pi}
\int_{\mathbb R}\frac{1}{ \frac{1}{(\gamma_*)^2}\left(t_*^2+ |\theta_*|^2|A|^2\tilde\ell^2 +|\theta_*|^2|\mathfrak f^r|^2\zeta^2\right) - h^2} d\tilde\ell 
= \frac{1}{2}\left|\frac{\gamma_*}{\theta_*}\right|\frac{1}{\sqrt{(\frac{t_*}{\gamma_*})^2+ \frac{3}{4}|\frac{\theta_*A}{\gamma_*}|^2\zeta^2-h^2}}.
\end{equation} 
We use the superscript $r$ to denote operators corresponding to $(\mathbb M(\eps\zeta))^{-1}\mathbb T^{\pm\eps}(\lambda_*+ \eps h,\eps\zeta) (\mathbb M)(\eps\zeta)$ defined in \eqref{eq:oplimeta}. The limits are
\begin{equation}\label{eq:oplimetarat}
(\mathbb M^r(\eps\zeta))^{-1}\mathbb T^{\pm\eps,r}(\lambda_*+ \eps h,\eps\zeta) (\mathbb M^r)(\eps\zeta)
\to 
\tilde{\mathbb T}^{0,r} (\lambda_*) 
+ \beta^r(h,\zeta)\mathbb P \mp \xi^r(h,\zeta)\mathbb Q + \sigma_1^r(h,\zeta)\mathbb O_1 + \sigma_2^r(h,\zeta)\mathbb O_2 ,
\end{equation}
where 
\begin{equation}
\begin{aligned}
&\mathbb O_1\vec\phi:= c_1(\vec\phi)\vec v_1 - c_2(\vec\phi) \vec v_2,\quad
\mathbb O_2\vec\phi:= - c_2(\vec\phi) \vec v_1 + c_1(\vec\phi)\vec v_2 ,\\
&\beta^r(h,\zeta) =  hC^r(h,\zeta), \quad 
\beta^r(h,\zeta) = \frac{t_*}{|\gamma_*|} C^r(h,\zeta),\\
&\sigma_1^r(h,\zeta) = \text{Re}(\mathfrak f^r)\zeta|\frac{\theta_*}{\gamma_*}| C^r(h,\zeta), \quad 
\sigma_2^r(h,\zeta) = \text{Im}(\mathfrak f^r)\zeta|\frac{\theta_*}{\gamma_*}|C^r(h,\zeta).
\end{aligned}
\end{equation}
A calculation similar to the proof of Proposition~\ref{lem:dispersion} produces the result.
\end{proof}

\begin{proof}[Proof of Theorem~\ref{lem:ratarm}]
Comparing to \eqref{eq:Tderiv} and \eqref{eq:TderivKtwo}, we observe that at $\Kone$ and $\Ktwo$, the signs of $\theta_*$ and $t_*$ are opposite. Thus the edge states bifurcating from $\Kone$ and $\Ktwo$ have opposite dispersion slopes.
\end{proof}

\begin{proof}[Proof of Theorem~\ref{lem:dispersionarm}]
The dispersion relations on the zigzag interface and the armchair interface in Theorems~\ref{lem:dispersion} and \ref{lem:dispersionarm} are special cases of Propositions~\ref{lem:ratzig} and \ref{lem:ratarm}.

For the zigzag interface $a=0, b=1, c=1, d=0$ and $A=1, B=\bar\tau$, $\mathfrak f^r =\frac{1}{2}+\bar\tau $ and $|\mathfrak f^r| = \frac{\sqrt{3}}{2}$.

For the armchair interface, $a=1, b=1, c=1, d=0$ and $A=1-\bar\tau, B=\bar\tau$, $\mathfrak f^r = \frac{1+\bar\tau}{2}$ and $|\mathfrak f^r| = \frac{1}{2}$.

\end{proof}

\end{appendices}

\bibliographystyle{siam}

\begin{thebibliography}{99}
\bibitem{zhu-12}
{\sc M. Ablowitz and Y. Zhu}, {\em Nonlinear waves in shallow honeycomb lattices},
SIAM J. Appl. Math., \textbf{72} (2012), 240--260.

\bibitem{Ammari-book}
{\sc H. Ammari,  B. Fitzpatrick, H. Kang, M. Ruiz, S. Yu, and H. Zhang},
{\em Mathematical and Computational Methods in Photonics and Phononics}, Mathematical Surveys and Monographs, Vol. 235, Amer. Math. Soc., Rhode Island, 2018. 

\bibitem{Ammari-20-4}
{\sc H. Ammari, B Fitzpatrick, E. Hiltunen, H. Lee, and S. Yu},
{\em Honeycomb-lattice Minnaert bubbles}, SIAM J. on Math. Anal., \textbf{52} (2020), 5441--5466.

\bibitem{Ammari-18}
{\sc H. Ammari, B. Fitzpatrick,  E. O. Hiltunen,, S. Yu}.{\em  Subwavelength localized modes for acoustic waves in bubbly crystals with a defect}, SIAM J. Appl. Math. \textbf{78} (2018), 3316--3335. 

\bibitem{Ammari-22}
{\sc H. Ammari, E.O. Hiltunen, S. Yu}, {\em 
Subwavelength guided modes for acoustic waves in bubbly crystals with a line defect}, J. Eur. Math. Soc. \textbf{24} (2022), 2279--2313. 

\bibitem{Ammari-20-1}
{\sc H. Ammari, B. Davies, E. Hiltunen, and S. Yu}, {\em Topologically protected interface modes in one-dimensional chains of subwavelength resonators},
J. Math. Pure. Appl., \textbf{144} (2020), 17--49.


\bibitem{akl} 
{\sc H. Ammari, H. Kang, and H. Lee}, {\em Layer Potential Techniques in Spectral Analysis}, Mathematical Surveys and Monographs, Vol. 153, Amer. Math. Soc., Rhode Island, 2009. 

\bibitem{Asboth-16}
{\sc J. Asbóth, L. Oroszlány, and A. Pályi}, {\em A short course on topological insulators}, Lecture notes in physics 919 (2016), 166.


\bibitem{Bal-17}
G. Bal, {\em Topological protection of perturbed edge states}, arXiv preprint arXiv:1709.00605 (2017).

\bibitem{Bal-19}
G. Bal, {\em Continuous bulk and interface description of topological insulators}, J. Math. Phys., {\bf 60} (2019), 081506.

\bibitem{Berkolaiko-18} 
{\sc G. Berkolaiko, A. Comech}, 
{\em Symmetry and Dirac points in graphene spectrum}, J. Spectr. Theor., \textbf{8} (2018), 1099--1147.

\bibitem{bernevig-13}
{\sc B. A. Bernevig and T. L. Hughes}, 
{\em Topological Insulators and Topological Superconductors}, Princeton University Press, 2013.


\bibitem{Brown-17}
{\sc B. M. Brown, V. Hoang, M. Plum, M. Radosz and I. Wood}, {\em  Gap localization of TE-modes by
arbitrarily weak defects}, J. London Math. Soc. \textbf{95} (2017) 942--962.


\bibitem{Brown-20}
{\sc B. M. Brown, V. Hoang, M. Plum, M. Radosz and I. Wood}, {\em 
Gap localization of multiple TE-Modes by arbitrarily weak defects},
J. London Math. Soc. \textbf{102} (2020) 1-23.



\bibitem{Cassier-Weinstein-21}
{\sc M. Cassier and M. Weinstein}, {\em High contrast elliptic operators in honeycomb structures},
Multiscale Model. Sim., \textbf{19} (2021), 1784--1856.

\bibitem{colton2013integral}
{\sc D.~Colton and R.~Kress}, {\em Integral Equation Methods in Scattering
Theory}, SIAM, 2013.
  
\bibitem{Druout-20}
A. Drouot, {\em Microlocal analysis of the bulk-edge correspondence}, Commun. Math. Phys., {\bf 383} (2020), 1--44.

\bibitem{Druout-20-2}
A. Drouot, {\em The bulk-edge correspondence for continuous dislocated systems}, arXiv, 1810.10603, 2018.


\bibitem{Druout-20-3}
A. Drouot, {\em The bulk-edge correspondence for continuous honeycomb lattices}, Commun. Part. Diff. Eq., {\bf 44} (2019), 1406--1430.


\bibitem{Drouot-Wenstein-20}
{\sc A. Drouot and M. Weinstein}, {\em Edge states and the valley Hall effect}, Adv. Math., {\bf 368} (2020), 107142.

\bibitem{Graf-02}
P. Elbau and G. Graf, {\em Equality of bulk and edge Hall conductance revisited}, Commun. Math. Phy., {\bf 229} (2002), 415--432.

\bibitem{EGS-05}
A. Elgart, G. M. Graf and J. H. Schenker, {\em Equality of the bulk and edge Hall conductances in a mobility gap}, Commun. Math. Phys., {\bf 259} (2005), 185--221.

\bibitem{raghu-08}
S. Raghu and F. D. M. Haldane, {\em Analogs of quantum-Hall-effect edge states in photonic crystals},  Phys. Rev. A, {\bf 78} (2008), 033834.


\bibitem{Fefferman-Weinsein-12}
{\sc C. Fefferman and M. Weinstein}, {\em Honeycomb lattice potentials and Dirac points}, J. Amer. Math. Soc., \textbf{25}, (2012), 1169--1220.

\bibitem{Fefferman-Weinsein-23}
{\sc C. Fefferman and M. Weinstein, S. Fliss}, {\em  Discrete honeycombs, rational edges and edge states}, to appear in Commun. Pur. Appl. Math.. 

\bibitem{Fefferman-Thorp-Weinsein-14}
C. Fefferman, J. Lee-Thorp, and M. Weinstein, {\em Topologically protected states in one-dimensional continuous systems and Dirac points}, P. Natl. Acad.  Sci., {\bf 111} (2014), 8759--8763.

\bibitem{Fefferman-Thorp-Weinsein-16}
{\sc C. Fefferman, J. Lee-Thorp, and M. Weinstein}, {\em Interface states in honeycomb structures}, Annals of PDE, \textbf{2} (2016), 12.
 
\bibitem{Fefferman-Thorp-Weinsein-17}
{\sc C. Fefferman, J. Lee-Thorp, and M. Weinstein}, {\em Topologically protected states in one-dimensional systems}, \textbf{247}, Memoirs of American Mathematical Society, 2017.

\bibitem{Fefferman-Thorp-Weinsein-18}
C. Fefferman, J. Lee-Thorp, and M. Weinstein, {\em Honeycomb Schr\"{o}dinger operators in the strong binding regime}, Commun. Pur. Appl. Math., {\bf 6} (2018), 1178--1270.

\bibitem{Fliss-16}
{\sc S. Fliss and P. Joly}, {\em Solutions of the time harmonic wave equation in periodic waveguides : asymptotic behaviour and radiation condition}, Archive for rational mechanics and analysis, \textbf{219}, 2016, 349--386. 

\bibitem{Figotin-94}
{\sc A. Figotin and A. Klein}, {\em Localization of electromagnetic and acoustic waves in random media. lattice models}, Journal of statistical physics, \textbf{76} :985–1003, 1994

\bibitem{Figotin-96}
{\sc A. Figotin, P. Kuchment}, {\em
Band-gap structure of spectra of periodic dielectric and acoustic media. I. Scalar model}, SIAM Journal on Applied Mathematics \textbf{56} (1996), 68-88. 

\bibitem{Figotin-96-2}
{\sc A. Figotin, P. Kuchment}, {\em Band-gap structure of spectra of periodic dielectric and acoustic media. II. Two-dimensional photonic crystals},  SIAM Journal on Applied Mathematics \textbf{56} (1996), 1561--1620. 

\bibitem{Figotin-96-3}
{\sc A. Figotin and A. Klein}, {\em  Localization of classical waves I: Acoustic waves}, Communications in Mathematical Physics, \textbf{180} (1996), 439--482.

\bibitem{Figotin-97}
{\sc A. Figotin and A. Klein}. {\em Localization of classical waves II: Electromagnetic
waves}. Communications in Mathematical Physics, \textbf{184} (1997), 411--441.

\bibitem{Figotin-97-2}
{\sc A. Figotin and A. Klein}. {\em  Localized classical waves created by defects}. J Stat. Phys., \textbf{86} (1997), 165--177.

\bibitem{Figotin-98}
{\sc A. Figotin and A. Klein}. {\em Midgap defect modes in dielectric and acoustic media}, SIAM J. Appl. Math, \textbf{58} (1998), 1748--1773.

\bibitem{FlissJoly2016}
{\sc S.~Fliss and P.~Joly}, {\em Solutions of the time-harmonic wave equation in periodic waveguides: asymptotic behaviour and radiation condition}, Arch. Ration. Mech. An., \textbf{219} (2016), 349--386.

\bibitem{Folland1995}
{\sc G.~B. Folland}, {\em Introduction to partial differential equations},
  vol.~102, Princeton University Press, 1995.

\bibitem{ts1971operator}
{\sc T.~GOHBERG~I}, {\em Operator extension of the logarithmic residue theorem
  and rouche's theorem}, Math. USSR Sbornik, \textbf{84} (1971), 607--642.
  
\bibitem{grushin-09}
{\sc V. Grushin}. {\em Multiparameter perturbation theory of Fredholm operators applied to Bloch functions}.
Math. Notes, \textbf{86} (2009):767-774.

\bibitem{Hasan-Kane-10}
{\sc M. Hasan and C. Kane}, {\em Colloquium: topological insulators}, Rev. Mod. Phys., \textbf{82} (2010): 3045.

\bibitem{Hatsugai-93}
Y. Hatsugai, {\em Chern number and edge states in the integer quantum Hall effect}, Phys. Rev. Lett., {\bf 71} (1993), 3697.

\bibitem{hempel-00}
{\sc R. Hempel and K. Lienau}, {\em Spectral properties of periodic media in the large coupling limit},
Comm. Part. Diff. Eq., \textbf{25} (2000), 1445--1470.

\bibitem{Joanno-11}
{\sc J. D. Joannopoulos, S. G. Johnson, J. N. Winn and R. D. Meade}, 
{ \em Photonic Crystals: Molding the Flow of Light}, Second Edition, 
Princeton University Press, 2011. 

\bibitem{kamotski-18}
{\sc I.V. Kamotski and V.P. Smyshlyaev}, {\em Localised modes due to defects in high contrast periodic media via two-scale homogenization}, Journal of Mathematical Sciences, \textbf{232} (2018), 349--377.


\bibitem{kato2013perturbation}
{\sc T.~Kato}, {\em Perturbation theory for linear operators}, vol.~132,
  Springer Science \& Business Media, 2013.

  
\bibitem{Khanikaev-12}
{\sc A. Khanikaev, S. Mousavi, W. Tse, M. Kargarian, A. MacDonald, and G. Shvets}, {\em Photonic topological insulators}, Nature materials, \textbf{12} (2013), 233-239.

\bibitem{Khanikaev-15}
{\sc A. Khanikaev, R. Fleury, S. Mousavi, and A. Alu}, {\em Topologically robust sound propagation in an angular-momentum-biased graphene-like resonator lattice}, Nat. Commun., \textbf{6} (2015), 8260.


\bibitem{Khelif-15}
{\sc A. Khelif and A. Adibi}, {\em Phononic Crystals: Fundamentals and Applications}, Springer, Berlin, 2015.

\bibitem{Kuchment-12}
{\sc P. Kuchment}, {\em Floquet theory for partial differential equations},  \textbf{60}, Birkh\"{a}user, 2012.

\bibitem{Kuchmet2022}
{\sc P.~Kuchment and J.~Taskinen}, {\em Solutions of the time-harmonic wave
  equation in periodic waveguides: asymptotic behaviour and radiation
  condition}, Pure and Applied Functional Analysis, \textbf{7} (2022), 721-731.

  
\bibitem{Lee-16}
{\sc M. Lee}, {\em Dirac cones for point scatterers on a honeycomb lattice}, SIAM J. Math. Anal., \textbf{48} (2016), 1459--1488.

\bibitem{Lee-Thorp-Weinstein-Zhu-19}
{\sc J. Lee-Thorp, M. Weinstein, and Y. Zhu}, {\em Elliptic operators with honeycomb symmetry: Dirac points, interface states and applications to photonic graphene}, Arch. Rat. Mech. Anal., \textbf{232} (2019), 1--63.

\bibitem{Li2023}
{\sc W.~Li, J.~Lin, and H.~Zhang}, {\em Dirac points for the honeycomb lattice with impenetrable obstacles}, SIAM Journal on Applied Mathematics, \textbf{83}, (2023), pp.~1546--1571.

\bibitem{LZ21}
{\sc J. Lin and H. Zhang}, {\em
Mathematical theory for topological photonic materials in one dimension}, J. Phy. A, \textbf{55} (2022), 495203.
 
\bibitem{Lu-14}
L. Lu,  J. Joannopoulos, and M. Solja\v{c}i\'{c}, {\em Topological photonics}, Nat. Photonics, {\bf 8} (2014), 821--829.


\bibitem{Ma_Shvets-15} 
{\sc T. Ma and G. Shvets},  {\em All-si valley-hall photonic topological insulator}, New Journal of Physics, \textbf{18} (2016), 025012.


\bibitem{Chan-19}
G. Ma, M. Xiao, and C-T Chan, {\em Topological phases in acoustic and mechanical systems}, Nat. Rev. Phys. {\bf 1} (2019), 281--294.

\bibitem{mclean2000strongly}
{\sc W.~C.~H. McLean}, {\em Strongly elliptic systems and boundary integral
  equations}, Cambridge University Press, 2000.

  
\bibitem{Ozawa-19}
{\sc T. Ozawa, et al.} {\em Topological photonics}, Rev. Mod. Phys., \textbf{91} (2019): 015006.

\bibitem{perfekt2014spectral}
{\sc K.-M. Perfekt and M.~Putinar}, {\em Spectral bounds for the
  neumann-poincar{\'e} operator on planar domains with corners}, Journal
  d'Analyse Math{\'e}matique, \textbf{124} (2014), 39--57.
  
\bibitem{Qi-Zhang-11}
{\sc X-L. Qi and S.-C. Zhang}, {\em Topological insulators and superconductors}, Rev. Modern Phys., {\bf 83} (2011), 1057.

\bibitem{Qiu-23}
{\sc J. Qiu, J. Lin, P. Xie, and H. Zhang},
Mathematical theory for the interface mode in a waveguide bifurcated from a Dirac point, arXiv:2304.10843v1, 2023. 


\bibitem{rayleigh-1888}
{\sc L. Rayleigh}, {\em XXVI. 
Phenomenon of crystalline reflection described by Prof. Stokes}. Proc. Roy.
So.,(1888).

\bibitem{saranen2001periodic}
{\sc J.~Saranen and G.~Vainikko}, {\em Periodic integral and pseudodifferential
  equations with numerical approximation}, Springer Science \& Business Media,
  2001.

\bibitem{Sheng-90}
{\sc P. Sheng}, Scattering and localization of classical waves in random media. Vol. 8. World Scientific, 1990.

  
\bibitem{weiss-58}
{\sc J. C. Slonczewski and P. R. Weiss}. {\em Band structure of graphite}. Phys. Rev., \textbf{109} (1958), 272--279.

\bibitem{Santosa-Ammari-04}
{\sc F. Santosa and H. Ammari}, {\em Guided waves in a photonic bandgap structure with a line defect}, SIAM J. on Appl. Math., \textbf{64} (2004), 2018--2033.


\bibitem{Thiang-Zhang-22}
{\sc G. C. Thiang and H. Zhang}, {\em 
Bulk-interface correspondences for one dimensional topological materials with inversion symmetry}, Proceedings of the Royal Society A, \textbf{479} (2023), 20220675.

\bibitem{verchota1984layer}
{\sc G.~Verchota}, {\em Layer potentials and regularity for the dirichlet
  problem for laplace's equation in lipschitz domains}, Journal of functional
  analysis, 59 (1984), pp.~572--611.
  
\bibitem{wallace-47}
{\sc P. R. Wallace}. {\em The band theory of graphite}. Phys. Rev., \textbf{71} (1947) 622-634.

\bibitem{Wang-09}
{\sc C. Wang, J. Joannopoulos,  and M. Soljačić }, {\em Observation of unidirectional backscattering-immune topological electromagnetic states}. Nature, \textbf{461} (2009), 772--775.

\bibitem{Wu_Hu-15}
{\sc L. Wu and X. Hu}, {\em Scheme for achieving a topological photonic crystal by using dielectric material}, Physical review letters,\textbf{114} (2015), 223901.

\bibitem{Yang-15}
Z. Yang, \textit{et al.}, {\em Topological acoustics}, Phys. Rev. Lett., {\bf 114} (2015), 114301.


\bibitem{zhikov-05}
{\sc V.V. Zhikov}, {\em Gaps in the spectrum of some elliptic operators in divergent form with periodic coefficients}, (Russian) Algebra i Analiz, 16 (2004), 34-58; translation in St. Petersburg Math.
J., 16 (2005), 773--790.


\end{thebibliography}

\end{document}